\newcommand\DoToC{%
  \startcontents
  \printcontents{}{1}{\textbf{Contents}\vskip3pt\hrule\vskip5pt}
  \vskip3pt\hrule\vskip5pt
}
\newcolumntype{L}{>{$}l<{$}} 
\newcolumntype{C}{>{$}c<{$}} 
\newcolumntype{R}{>{$}r<{$}} 
\tikzstyle{startstop} = [rectangle, minimum width=0.6cm, minimum height=0.8cm, text centered, draw=black]
\tikzstyle{process}   = [rectangle, minimum width=0.6cm, minimum height=0.8cm, text centered, text width=2.2cm, draw=black]
\tikzstyle{arrow}     = [thick,->,>=stealth]
\definecolor{codegreen}{rgb}{0,0.6,0}
\definecolor{eminence}{RGB}{108,48,130}
\lstdefinelanguage{tll}{
  emph={program, logical, theorem, def, inductive, where},
  morekeywords={let, let*, in, match, as, with, if, then, else, fork, fun, ln, fn},
  extendedchars=true, 
  alsoletter=*,
  morecomment=[l]{--},
  emphstyle=\color{blue},
  keywordstyle=\color{eminence},
  commentstyle=\color{codegreen},
  basicstyle=\footnotesize\ttfamily,
}
\newcommand{\bsf}[1]{\textbf{\textsf{#1}}}
\newcommand{\mcP}{\mathcal{P}}
\newcommand{\mcQ}{\mathcal{Q}}
\newcommand{\mcE}{\mathcal{E}}
\newcommand{\mcI}{\mathcal{I}}
\newcommand{\mcJ}{\mathcal{J}}
\newcommand{\mcM}{\mathcal{M}}
\newcommand{\mcN}{\mathcal{N}}
\newcommand{\mcK}{\mathcal{K}}
\newcommand{\mcC}{\mathcal{C}}
\newcommand{\TLLC}{TLL$_{\mathcal{C}}$}
\newcommand{\flq}{\texttt{\guilsinglleft}}
\newcommand{\frq}{\texttt{\guilsinglright}}
\newcommand{\Un}{\textsf{U}}
\newcommand{\Ln}{\textsf{L}}
\newcommand{\ty}[1]{:_{#1}}
\newcommand{\tL}{:_{\Ln}}
\newcommand{\tU}{:_{\Un}}
\newcommand{\PiR}[3]{\Pi_{#1}({#2}).{#3}}
\newcommand{\PiI}[3]{\Pi_{#1}\{{#2}\}.{#3}}
\newcommand{\lamR}[3]{\lambda_{#1}({#2}).{#3}}
\newcommand{\lamI}[3]{\lambda_{#1}\{{#2}\}.{#3}}
\newcommand{\appR}[2]{{#1}\;{#2}}
\newcommand{\appI}[2]{{#1}\;\{{#2}\}}
\newcommand{\SigR}[3]{\Sigma_{#1}({#2}).{#3}}
\newcommand{\SigI}[3]{\Sigma_{#1}\{{#2}\}.{#3}}
\newcommand{\pairR}[3]{\langle{{#1},{#2}}\rangle_{#3}}
\newcommand{\pairI}[3]{\langle{\{{#1}\},{#2}}\rangle_{#3}}
\newcommand{\SigElim}[3]{\text{R}_{#1}^{\Sigma}({#2},{#3})}
\newcommand{\dotcup}{\ensuremath{\mathbin{\mathaccent\cdot\cup}}}
\newcommand{\FV}[1]{\text{FV}({#1})}
\newcommand{\FC}[1]{\text{FC}({#1})}
\newcommand{\fix}[2]{\mu({#1}).{#2}}
\newcommand{\ActR}[3]{{{#1}({#2}).\ {#3}}}
\newcommand{\ActI}[3]{{{#1}\{{#2}\}.\ {#3}}}
\newcommand{\End}{\bsf{1}}
\newcommand{\CH}[1]{{\bsf{ch}\langle{#1}\rangle}}
\newcommand{\HC}[1]{{\bsf{hc}\langle{#1}\rangle}}
\newcommand{\Proto}{\bsf{proto}}
\newcommand{\fork}[2]{\Fork\,({#1})\,\With\ {#2}}
\newcommand{\recvI}[1]{\underline{\bsf{recv}}\,{#1}}
\newcommand{\recvR}[1]{\bsf{recv}\,{#1}}
\newcommand{\sendI}[1]{\underline{\bsf{send}}\,{#1}}
\newcommand{\sendR}[1]{\bsf{send}\,{#1}}
\newcommand{\close}[1]{\bsf{close}\,{#1}}
\newcommand{\wait}[1]{\bsf{wait}\,{#1}}
\newcommand{\C}{\mathcal{C}}
\newcommand{\CM}[1]{\mathcal{C}({#1})}
\newcommand{\return}[1]{\bsf{return}\,{#1}}
\newcommand{\letin}[3]{\bsf{let}\;{#1}\Leftarrow{#2}\bsf{ in }{#3}}
\newcommand{\unit}{\textsf{unit}}
\newcommand{\ii}{{()}}
\newcommand{\Bool}{\textsf{bool}}
\newcommand{\bTrue}{\textsf{true}}
\newcommand{\bFalse}{\textsf{false}}
\newcommand{\boolElim}[4]{\text{R}_{#1}^{\Bool}({#2},{#3},{#4})}
\newcommand{\proc}[1]{{\langle{#1}\rangle}}
\newcommand{\scope}[2]{{\nu{#1}.{#2}}}
\newcommand{\Match}{\bsf{match}}
\newcommand{\With}{\bsf{with}}
\newcommand{\Inductive}{\textsf{\color{blue}inductive}}
\newcommand{\Def}{\textsf{\color{blue}def}}
\newcommand{\Theorem}{\textsf{\color{blue}theorem}}
\newcommand{\Type}{\textsf{\color{blue}type}}
\newcommand{\Let}{\textsf{\color{eminence}let}}
\newcommand{\In}{\textsf{\color{eminence}in}}
\newcommand{\Send}{\bsf{send}}
\newcommand{\SendI}{\underline{\bsf{send}}}
\newcommand{\Recv}{\bsf{recv}}
\newcommand{\RecvI}{\underline{\bsf{recv}}}
\newcommand{\Close}{\bsf{close}}
\newcommand{\Wait}{\bsf{wait}}
\newcommand{\Fork}{\bsf{fork}}
\newcommand{\Return}{\bsf{return}}
\newcommand{\Rewrite}{\bsf{rewrite}}
\newcommand{\Refl}{\textsf{refl}}
\newcommand{\ch}[2]{\textsf{ch}^{#1}\langle{#2}\rangle}
\newcommand{\arity}[1]{\text{arity}(#1)}
\newcommand{\guard}[1]{\text{guard}(#1)}
\newcommand{\HeadSim}[2]{\text{headSim}({#1},{#2})}
\newcommand{\Sim}[2]{\text{sim}({#1},{#2})}
\newcommand{\Root}[2]{\textsf{Root}({#1}, {#2})}
\newcommand{\Node}[3]{\textsf{Node}({#1}, {#2}, {#3})}
\newcommand{\Poised}[1]{\text{Poised}({#1})}
\newcommand{\Terminal}{\text{Terminal}}
\newcommand*{\Leadsto}{\leadsto\joinrel\mathrel{\mathpalette\@Leadsto\relax}}
\newcommand*{\@Leadsto}[2]{%
   \clipbox{{.68\width} 0pt 0pt {-.2\height}}{$\m@th#1\leadsto$}%
}
\begin{document}

\title{Dependent Session Types for Verified Concurrent Programming}

\author{Qiancheng Fu}
\affiliation{
  \institution{Boston University}
  \city{Boston}
  \state{MA}
  \country{USA}
}
\email{qcfu@bu.edu}

\author{Hongwei Xi}
\affiliation{
  \institution{Boston University}
  \city{Boston}
  \state{MA}
  \country{USA}
}
\email{hwxi@bu.edu}

\author{Ankush Das}
\affiliation{
  \institution{Boston University}
  \city{Boston}
  \state{MA}
  \country{USA}
}
\email{ankushd@bu.edu}

\begin{abstract}
We present \TLLC{} which extends the Two-Level Linear dependent type theory
(TLL) with session-based concurrency. Equipped with Martin-L\"{o}f style
dependency, the session types of \TLLC{} allow protocols to specify
properties of communicated messages. When used in conjunction with the dependent
type machinery already present in TLL, dependent session types facilitate a
form of relational verification by relating concurrent programs with their
idealized sequential counterparts. Correctness properties proven for sequential
programs can be easily lifted to their corresponding concurrent implementations.
\TLLC{} makes session types a powerful tool for intrinsically verifying the
correctness of data structures such as queues and concurrent algorithms such as
map-reduce. To extend TLL with session types, we develop a novel formulation of
intuitionistic session type which we believe to be widely applicable for
integrating session types into other type systems beyond the context of \TLLC{}.
We study the meta-theory of our language, proving its soundness as both a term
calculus and a process calculus.
To demonstrate the practicality of \TLLC{}, we have implemented a prototype compiler
that translates \TLLC{} programs into concurrent C code, which has been extensively
evaluated.
\end{abstract}
\keywords{dependent types, linear types, session types, concurrency}

\maketitle

\section{Introduction}\label{sec:intro}
Session types~\cite{honda93,caires10,pfenning11} are an effective typing discipline for coordinating
concurrent computation. Through type checking, processes are forced to adhere to
communication protocols prescribed by interaction devices like channels.
This allows session type systems to statically rule out communication errors similar to
how standard type systems rule out bugs for sequential programs. While (simple)
session types guarantee concurrent programs do not crash catastrophically
and remain deadlock-free,
they do not provide any support for writing programs that are semantically correct.

Consider the session-typed concurrent queue~\cite{SilvaFP17} which is a commonly encountered
data structure in the session type literature. A queue is described by the following type:
\begin{align*}
  \textsf{queue}_A := \&\{
  \textsf{ins}: A \multimap \textsf{queue}_A,
  \textsf{del}: \oplus\{\textsf{none}: \End, \textsf{some}: A \otimes \textsf{queue}_A\}
  \}
\end{align*}
The following diagram illustrates the channel topology of a client interacting with
a queue server.
\begin{center}
\vspace{0.4em}
\begin{tikzpicture}[
squarednode/.style={rectangle, draw=red!60, fill=red!5, thick, minimum size=6mm},
roundnode/.style={circle, draw, thick, fill=black!2, minimum size=6mm},
ghostnode/.style={minimum size=5mm},
]
\node[squarednode]      (c0)       {client};
\node[roundnode]        (p1)       [right=of c0] {$p_1$};
\node[roundnode]        (p2)       [right=of p1] {$p_2$};
\node[roundnode]        (p3)       [right=of p2] {$p_3$};
\node[ghostnode]        (px)       [right=of p3] {$...$};
\node[roundnode]        (pn)       [right=of px] {$p_n$};

\draw[<->] (c0.east) -- (p1.west);
\draw[<->] (p1.east) -- (p2.west);
\draw[<->] (p2.east) -- (p3.west);
\draw[<->] (p3.east) -- (px.west);
\draw[<->] (px.east) -- (pn.west);
\end{tikzpicture}
\end{center}
Each of the $p_i$ nodes here represents a queue cell which holds a value and
these nodes are linked together by channels of type $\text{queue}_A$. As
indicated by the external choice type constructor $\&$, the first queue node $p_1$ first
receives either an $\textsf{ins}$ or a $\textsf{del}$ label from the client. In
the case of an $\textsf{ins}$ label, $p_1$ receives an element $x$ of type $A$
(indicated by $\multimap$) from the client.  The $p_1$ node then sends an
$\textsf{ins}$ label to $p_2$ and forwards $x$ to it.
This forwarding procedure repeats until the element reaches the end of the queue where a new queue cell
$p_{n+1}$ is allocated to store $x$. On the other hand, if $p_1$ receives a
$\textsf{del}$ label, the type constructor $\oplus$ requires that $p_1$ send
either $\textsf{none}$ or $\textsf{some}$.  The $\textsf{none}$ label is sent to
signify that the queue is empty and ready to terminate (indicated by $\End$).
The $\textsf{some}$ label is sent along with a value of type $A$ (indicated by $\otimes$)
which is the dequeued element. Finally, $p_1$ terminates by \emph{identifying} the two channels
connected to it, one to the client and the other to $p_2$, thus making the client communicate
with $p_2$ for future interactions.

It is clear from the example above that the session type $\textsf{queue}_A$ only lists
what operations a queue should support, but does not specify the expected behavior of
these operations. For instance, it does not specify that inserts must be performed at the tail
of the queue while deletes at the head.
In fact, any data structure that supports inserts and deletes (e.g., stack, queue, priority queue)
will satisfy the type above.
Thus, a correct implementation needs to maintain additional invariants and satisfy a more
sophisticated specification that goes beyong the session type. In fact, due to the under
specification of the $\textsf{queue}_A$ type, it is possible to implement a ``queue''
which always returns $\textsf{none}$ on $\textsf{del}$
without performing any deletion.


To address this issue, we develop \TLLC{}, a dependent session type system which
extends the Two-Level Linear dependent type theory (TLL)~\cite{fu25} with
session-based concurrency. In \TLLC{}, one could define queues through the following
dependent session type:
\begin{alignat*}{2}
  &\textsf{queue} (\textit{xs} : \textsf{list}\ A) :=\ ?(\ell : \textsf{opr}). \Match\ \ell\ \With \\
  &\qquad\mid \textsf{ins}(v) \Rightarrow \textsf{queue}(\textsf{snoc}(xs, v)) \\
  &\qquad\mid \textsf{del} \Rightarrow
    \Match\ \textit{xs}\ \With
    \ (x :: xs') \Rightarrow\ !(\textsf{sing}\ x). !(\HC{\textsf{queue}(xs')}). \End
    \mid [] \Rightarrow \End
\end{alignat*}
Here, the type $\textsf{queue}(\textit{xs})$ is parameterized by a list $\textit{xs}$
which represents the current contents of the queue. Notice that the type no longer needs
the $\oplus$ and $\&$ type constructors to describe branching behavior. Instead, it uses
type-level pattern matching to inspect the label $\ell$ received from the client.
The \textsf{opr} type which $\ell$ inhabits is defined as a simple inductive type with
two constructors:
\begin{align*}
  \Inductive\ \textsf{opr} := \textsf{ins}: A \rightarrow \textsf{opr} \mid \textsf{del}: \textsf{opr}
\end{align*}
When a queue server receives an $\textsf{ins}(v)$ value, the type of the server becomes
$\textsf{queue}(\textsf{snoc}(xs, v))$ where $\textsf{snoc}$ appends $v$ to the end of $xs$.
Conversely, when a $\textsf{del}$ label is received, the type-level pattern matching on $xs$
enforces that if the queue is non-empty (i.e. $x :: xs'$ case), then the server must send
the front element $x$ of the queue to the client (indicated by the \emph{singleton type}
$\textsf{sing}\ x$) along with the channel $\HC{\textsf{queue}(xs')}$ connecting to the remainder
of the queue. If the queue is empty (i.e. $[]$ case), then the server simply terminates.

This type eventually leads to the implementation of a convenient queue interface:
\begin{alignat*}{2}
  &\textsf{insert} &&: \forall \{xs : \textsf{list}\ A\}\;(x: A) \rightarrow \textsf{Queue}(xs) \rightarrow \textsf{Queue}(\textsf{snoc}(xs, x)) \\
  &\textsf{delete} &&: \forall \{x: A\}\;\{xs : \textsf{list}\ A\} \rightarrow
    \textsf{Queue}(x :: xs) \rightarrow \mcC (\textsf{sing}\ x \otimes \textsf{Queue}(xs)) \\
  &\textsf{free}   &&: \textsf{Queue}([]) \rightarrow \mcC(\textsf{unit})
\end{alignat*}
The \textsf{Queue} type here is an alias for the \emph{channel type} of queues
(explained later in detail) and the $\mcC$ type constructor here is the \emph{concurrency monad}
which encapsulates concurrent computations. Notice in the signature of \textsf{insert} and
\textsf{delete} that there are dependent quantifiers surrounded by braces.
These are the \emph{implicit} quantifiers of TLL which indicate that the corresponding arguments
are ``ghost'' values used for type checking and erased prior to runtime. For our purposes here,
ghost values are especially useful for \emph{relationally} specifying the expected
behaviors of queue interactions in terms of sequential list operations. For instance, the
signature of \textsf{insert} states that the queue obtained after inserting $x$ is related to
the original queue by the list operation $\textsf{snoc}$. Similarly, the signature of
\textsf{delete} states that deleting from a non-empty queue returns the front element $x$.
Even though neither of these $xs$ ghost values exists at runtime, they \emph{statically} ensure
that concurrent processes implementing these interfaces behave like actual sequential queues, i.e.,
are first-in-first-out data structures.
Therein lies \TLLC{}'s main novelty: dependent session types enable the
usage of \emph{sequential programs} as \emph{specifications} for concurrent programs,
thus allowing well-typed concurrent programs to naturally inherit the correctness properties of
their sequential counterparts.

Integrating session-based concurrency into a dependently typed functional language
poses several technical challenges that the metatheory of \TLLC{} overcomes. While prior
works~\cite{gay10,wadler12} have successfully combined \emph{classical} session
types with functional languages, it is well known that classical session types
do not easily support recursive session types~\cite{gay20}.
This is because classical session types are
defined in terms of a \emph{dual} operator which does not easily commute with
recursive type definitions. The addition of arbitrary type-level computations
through dependent types further complicates this matter.  On the other hand,
\emph{intuitionistic} session types~\cite{caires10} eschew the dual operator and
define dual \emph{interpretations} of session types based on their \emph{left} or
\emph{right} sequent rules.  Because intuitionistic session types do not rely on
a dual operator, they are able to support recursive session types without
commutativity issues. However, intuitionistic session types are often formulated
in the context of process calculi without a functional layer. To enjoy the
benefits of intuitionistic session types in a functional setting, we develop a
novel form of intuitionistic session types where we separate the notion of
\emph{protocols} from \emph{channel types}. The $\textsf{queue}(\textit{xs})$
type from before is, in actuality, a protocol whereas
$\HC{\textsf{queue}(\textit{xs})}$ is a channel type. In general, a channel type
is formed by applying the $\CH{\cdot}$ and $\HC{\cdot}$ type constructors to
protocols. These constructors provide dual interpretations to protocols,
allowing dual channels of the same protocol to be connected together. For
example, $!A. P$ would be interpreted dually as follows:
\begin{alignat*}{2}
  &\CH{!A. P}\quad &&(\textsf{send message of type } A) \\
  &\HC{!A. P}\quad &&(\textsf{receive message of type } A)
\end{alignat*}
Such channel types can be naturally included into the contexts of functional
type systems without needing to instrument the underlying language into a
sequent calculus formulation.  We believe our treatment of intuitionistic
session types is not specific to \TLLC{} and is widely applicable for
integrating intuitionistic session types with other functional languages.

In order to show that \TLLC{} ensures communication safety, we develop a process
calculus based concurrency semantics. Process configurations in the calculus are
collections of \TLLC{} programs interconnected by channels. At runtime,
individual processes are evaluated using the program semantics of base TLL. When
two processes at opposing ends (i.e. dually typed) of a channel are synchronized
and ready to communicate, the process level semantics transmits their messages
across the channel. We study the meta-theory of \TLLC{} and prove that it is
indeed sound at both the level of terms and at the level of process
configurations.

We implement a prototype compiler for compiling \TLLC{} programs into safe C
code. The compiler implements advanced language features such as dependent
pattern matching and type inference. The unique ownership property of linear types
also facilitates optimizations such as in-place programming~\cite{lorenzen23}.
All examples presented in this paper can be compiled using our
prototype compiler. The compiler source code and example programs are available
in our git repository\footnote{\url{https://anonymous.4open.science/r/ESOP26-anonymous-B66B}}.

In summary, we make the following contributions:
\begin{itemize}
  \item We extend the Two-Level Linear dependent type theory (TLL) with session
        type based concurrency, forming the language of \TLLC{}. \TLLC{} inherits the
        strengths of TLL such as Martin-L\"{o}f style linear dependent types and the
        ability to control program erasure.
  \item We develop a novel formulation of intuitionistic session types
        through a clear separation of protocols and channel types. We believe
        this formulation to be widely applicable for integrating session types into
        other functional languages.
  \item We study the meta-theoretical properties of \TLLC{}. We show that
        \TLLC{}, as a term calculus, possesses desirable properties such as confluence and
        subject reduction and, as a process calculus, guarantees communication safety.
  \item We implement a prototype compiler which compiles \TLLC{} into safe and
        efficient C code. The compiler implements additional features such as
        dependent pattern matching, type inference and in-place programming for linear types.
\end{itemize}


\section{Overview of Dependent Session Types}\label{sec:overview}
Session types in \TLLC{} are \emph{minimalistic} by design and yet surprisingly expressive
due to the presence of dependent types. Through examples, we provide an overview of
how dependent session types facilitate verified concurrent programming in \TLLC{}.

\subsection{Message Specification}\label{sec:message-specification}
An obvious, but important, use of dependent session types is the precise specification
of message properties communicated between parties. This is useful in practical network
systems where the content of messages may depend on the value of a prior request.
Consider the following protocol:
\begin{align*}
  !(\textit{sz}: \textsf{nat}).\
  ?(\textit{msg}: \textsf{bytes}).\ ?\{\textsf{sizeOf}(\textit{msg}) = \textit{sz}\}.\ \End
\end{align*}
Informally speaking, this protocol first expects a natural number \textit{sz} to be sent (the
$!$ operator) followed by receiving a byte string \textit{msg} (the $?$ operator).
With simple session types, there would be no way of specifying the relationship between \textit{sz} and
\textit{msg}. However, dependent session types allow us to express relations between messages.
Notice in the third interaction expected by the protocol, the party sending \textit{msg} must
provide a \emph{proof} that the size of \textit{msg} is indeed \textit{sz} according to
an agreed upon \textsf{sizeOf} function. Finally, the protocol terminates with $\End$ and
communication ends. Notice that the proof here, as indicated by the curly braces, is a
\emph{ghost message}: it is used for type checking and erased prior to runtime. Even though
the proof does not participate in actual communication, the necessity for the sender of
\textit{msg} to provide such a proof ensures that the protocol is followed.

This example showcases the main primitives for constructing dependent protocols in
\TLLC{}: the $!(x : A).B$ and $?(x : A).B$ \emph{protocol actions}. The syntax of these
constructs takes inspiration from binary session types~\cite{gay10,wadler12} and label-dependent
session types~\cite{ldst}, however the meaning of these constructs in \TLLC{} is
subtly different. In prior works, the $!$ marker indicates that the channel is to send
and the $?$ marker indicates that the channel is to receive. In \TLLC{}, neither marker
expresses sending or receiving per se, but rather an abstract action that needs to be
interpreted through a \emph{channel type}. Hence, the description of the messaging protocol
above is stated to be informal. To assign a precise meaning to the protocol, we need to
view it through the lens of channel types:
\begin{align*}
  &\CH{!(\textit{sz}: \textsf{nat}).\ ?(\textit{msg}: \textsf{bytes}).\ ?\{\textsf{sizeOf}(\textit{msg}) = \textit{sz}\}.\ \End} \\
  &\HC{!(\textit{sz}: \textsf{nat}).\ ?(\textit{msg}: \textsf{bytes}).\ ?\{\textsf{sizeOf}(\textit{msg}) = \textit{sz}\}.\ \End}
\end{align*}
Here, these two channel types are constructed using \emph{dual} channel type
constructors: $\CH{\cdot}$ and $\HC{\cdot}$.  The $\CH{\cdot}$ constructor
interprets $!$ as sending and $?$ as receiving while the $\HC{\cdot}$
constructor interprets $!$ as receiving and $?$ as sending. In general, dual
channel types interpret protocols in opposite ways. These constructors act just like
the duality of left and right rules for intuitionistic session types~\cite{caires10}.
Unlike intuitionistic session types which require the base type system to be
based on sequent calculus, our channel types can be integrated into the type
systems of functional languages so long as linear types are supported.

\subsection{Dependent Ghost Secrets}
Dependent ghost messages have interesting applications when it comes to message specification.
Consider the following encoding of an idealized Shannon cipher protocol:
\begin{align*}
  H(E, D) &:= \forall \{k : \mcK\}\;\{m : \mcM\} \rightarrow D(k, E(k, m)) =_{\mcM} m
\qquad\text{(correctness property)}
  \\
  \mcE(E, D) &:=\ !\{k : \mcK\}.\ !\{m : \mcM\}.\ !(c : \mcC).\ !\{H(E,D) \times (c =_\mcC E(k, m))\}.\ \End
\end{align*}
Given public encryption and decryption functions
$E: \mcK \times \mcM \rightarrow \mcC$ and
$D: \mcK \times \mcC \rightarrow \mcM$ respectively, the protocol $\mcE(E,D)$
begins by sending ghost messages: key $k$ of type $\mcK$ and message $m$ of type
$\mcM$.  Next, the ciphertext $c$ of type $\mcC$, indicated by round
parenthesis, is actually sent to the client. Finally, the last ghost message
sent is a proof object witnessing the correctness property of the
protocol: $c$ is obtained by encrypting $m$ with key $k$.  Observe that for the
overall protocol, \emph{only} ciphertext $c$ will be sent at runtime while the
other messages (secrets) are erased. The Shannon cipher protocol basically
forces communicated messages to always be encrypted and prevents the accidental
leakage of plaintext.

It is important to note that ghost messages and proof specifications, by
themselves, are \emph{not} sufficient to guaranteeing semantic security.
An adversary can simply use a different programming language and circumvent the
proof obligations imposed by \TLLC{}. However, these obligations are useful in
ensuring that honest parties correctly follow \emph{trusted} protocols to defend
against attackers. For example, in the Shannon cipher protocol above, an honest
party is required by the type system to send a ciphertext that is indeed encrypted
using the (trusted) algorithm $E$.

Another, more concrete, example of using ghost messages to specify secrets is the
Diffie-Hellman key exchange~\cite{DH76} protocol defined as follows:
\begin{align*}
  \textsf{DH}(p\ g: \textsf{int})
  :=\ & !\{a: \textsf{int}\}.\ !(A: \textsf{int}).\ !\{A = \textsf{powm}(g, a, p)\}.\\
      & ?\{b: \textsf{int}\}.\ ?(B: \textsf{int}).\ ?\{B = \textsf{powm}(g, b, p)\}.\ \End
\end{align*}
The \textsf{DH} protocol is parameterized by publicly known integers $p$ and $g$.
Without loss of generality, we refer to the message sender for the first row of the
protocol as Alice and the message sender for the second row as Bob. From Alice's
perspective, she first sends her secret value $a$ as a dependent ghost message to
initialize her half of the protocol. Next, her public value $A$ is sent as a real
message to Bob along with a proof that $A$ is correctly computed from values $p, g$ and $a$
(using modular exponentiation \textsf{powm}). At this point, Alice has finished sending
messages and waits for message from Bob to complete the key exchange. She first
``receives'' Bob's secret $b$ as a ghost message which initializes Bob's half of the
protocol. Later, Bob' public value $B$ is received as a real message along with a proof
that $B$ is correctly computed from $p, g$ and $b$. Notice that between Alice and Bob,
only the real messages $A$ and $B$ will be exchanged at runtime. The secret values
$a$ and $b$ and the correctness proofs are all ghost messages that are erased prior to
runtime. Basically, the \textsf{DH} protocol forces communication between Alice and Bob
to be encrypted and maintain secrecy.

\vspace{-0.4em}
\begin{center}
\begin{minipage}{0.45\textwidth}
\begingroup
\small
\addtolength{\jot}{-0.25em}
\begin{alignat*}{4}
  &\Def\ \textsf{Alice}\ (a\ p\ g: \textsf{int})\ (c : \CH{\textsf{DH}(p,g)}) \\
  &: \CM{\textsf{unit}} := \\
  &\quad\Let\ c \Leftarrow \Send\ c\ \{ a \}\ \In \\
  &\quad\Let\ c \Leftarrow \Send\ c\ (\textsf{powm}(g, a, p))\ \In \\
  &\quad\Let\ c \Leftarrow \Send\ c\ \{\textsf{refl}\}\ \In \\
  &\quad\Let\ \langle{\{b\}, c}\rangle \Leftarrow \Recv\ c\ \In \\
  &\quad\Let\ \langle{B, c}\rangle \Leftarrow \Recv\ c\ \In \\
  &\quad\Let\ \langle{\{\textit{pf}\}, c}\rangle \Leftarrow \Recv\ c\ \In \\
  &\quad\Close(c)
\end{alignat*}
\endgroup
\end{minipage}
\begin{minipage}{0.5\textwidth}
\begingroup
\small
\addtolength{\jot}{-0.25em}
\begin{alignat*}{4}
  &\Def\ \textsf{Bob}\ (b\ p\ g: \textsf{int})\ (c : \HC{\textsf{DH}(p,g)}) \\
  &: \CM{\textsf{unit}} := \\
  &\quad\Let\ \langle{\{a\}, c}\rangle \Leftarrow \Recv\ c\ \In \\
  &\quad\Let\ \langle{A, c}\rangle \Leftarrow \Recv\ c\ \In \\
  &\quad\Let\ \langle{\{\textit{pf}\}, c}\rangle \Leftarrow \Recv\ c\ \In \\
  &\quad\Let\ c \Leftarrow \Send\ c\ \{ b \}\ \In \\
  &\quad\Let\ c \Leftarrow \Send\ c\ (\textsf{powm}(g, b, p))\ \In \\
  &\quad\Let\ c \Leftarrow \Send\ c\ \{\textsf{refl}\}\ \In \\
  &\quad\Wait(c)
\end{alignat*}
\endgroup
\end{minipage}
\end{center}
\vspace{0.5em}

The \textsf{DH} key exchange protocol can be implemented through two simple monadic
programs \textsf{Alice} and \textsf{Bob} as shown above. The $\mcC$ type constructor
here is the concurrency monad for integrating the \emph{effect} of concurrent
communication with the \emph{pure} functional core of \TLLC{}. There are two kinds of
\textsf{send} (and respectively \textsf{recv}) operations at play here.
The first kind, indicated by $\Send\ c\ \{v\}$ is for sending a ghost message
$v$ on channel $c$. After type checking, these ghost sends are compiled to no-ops
so that they do not participate in runtime communication. The second kind, indicated by
$\Send\ c\ (v)$, is for sending a real message $v$ on channel $c$. These
real sends are compiled to actual messages in the generated code. Finally, the
\textsf{close} and \textsf{wait} operations synchronize the termination of the protocol.
Notice that the duality of channel types $\CH{\textsf{DH}(p,g)}$ and $\HC{\textsf{DH}(p,g)}$
ensures that every send in \textsf{Alice} is matched by a corresponding receive in
\textsf{Bob} and vice versa. Moreover, \textsf{Alice} and \textsf{Bob} are enforced by the
type checker to correctly carry out the key exchange.


\section{Relational Verification via Dependent Session Types}\label{sec:relational}
Earlier in the introduction section, we showed a sketch of how dependent session types
can be used for verified concurrent programming through the example of a concurrent queue.
In this section, we provide a detailed account of how we can use dependent session types
to construct a generic map-reduce system. Similarly to the queue example, we will verify
the correctness of the map-reduce system by relating it to sequential operations on trees.

\subsection{Construction of Map-Reduce}
Map-reduce is a commonly used programming model for processing large data sets in parallel.
Initially, map-reduce creates a tree of concurrently executing workers as illustrated in
\Cref{fig:map-reduce}. The client partitions the data into smaller chunks and sends them to
the leaf workers of the tree. Next, each leaf worker applies a user-specified
function $f$ to each of its received data chunks and sends the results to its parent worker.
When an internal worker receives results from its children, it combines the results using another
user-specified binary function $g$. This procedure continues until the root worker computes
the final result and sends it back to the client. Due to the fact that workers without
data dependencies can operate concurrently, the overall system can achieve significantly better
performance than sequential implementations of the same operations.
\vspace{-0.3em}
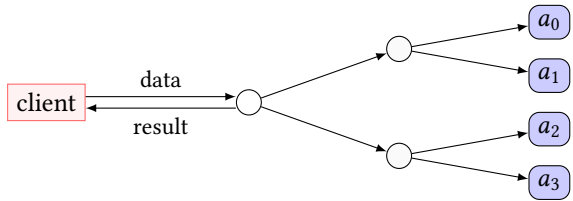
\begin{figure}[H]
\begin{tikzpicture}[
  treenode/.style = {shape=rectangle, rounded corners,
                     draw, align=center,
                     fill=blue!20},
  client/.style   = {shape=rectangle, draw=red!60, fill=red!5},
  root/.style     = {circle,draw},
  env/.style      = {treenode},
  dummy/.style    = {circle,draw,fill=black!2},
  grow = right,
  edge from parent/.style = {draw, -latex},
  sloped,
  level distance=2cm,
  level 1/.style={sibling distance=4em},
  level 2/.style={sibling distance=2em}
]
  \node [client]            (c0) {client};
  \node [root, right= 2cm of c0] (t0) {}
    child { node [dummy] {}
      child { node [env] {$a_3$} }
      child { node [env] {$a_2$} }
    }
    child { node [dummy] {}
      child { node [env] {$a_1$} }
      child { node [env] {$a_0$} }
    };
\draw[-latex,transform canvas={yshift=+0.5ex}] (c0.east) -- (t0.west) node [midway, above] {\small{data}};
\draw[latex-,transform canvas={yshift=-0.5ex}] (c0.east) -- (t0.west) node [midway, below] {\small{result}};
\end{tikzpicture}
\vspace{-0.8em}
\caption{Tree Diagram of Map-Reduce}
\label{fig:map-reduce}
\end{figure}
\vspace{-0.8em}

The first step in constructing the map-reduce system is to build a model of our desired
computation in a sequential setting. For this purpose, we define a simple binary tree
inductive type:

\vspace{-1em}
\begingroup
\small
\addtolength{\jot}{-0.25em}
\begin{alignat*}{4}
  &\Inductive\ \textsf{tree}\ (A : \Un) :=\
    \textsf{Leaf} : A \rightarrow \textsf{tree}(A) \mid\
    \textsf{Node} : \textsf{tree}(A) \rightarrow \textsf{tree}(A) \rightarrow \textsf{tree}(A)
  \\[0.4em]
  &\begin{alignedat}{4}
    &\Def\ \textsf{map} : \forall \{A\ B : \Un\}\ (f : A \rightarrow B) \rightarrow \textsf{tree}(A) \rightarrow \textsf{tree}(B) \\
    &\quad\mid\ \textsf{Leaf}\ x \Rightarrow \textsf{Leaf}\ (f\ x) \\
    &\quad\mid\ \textsf{Node}\ l\ r \Rightarrow \textsf{Node}\ (\textsf{map}\ f\ l) (\textsf{map}\ f\ r)
  \end{alignedat}
  \\[0.4em]
  &\begin{alignedat}{4}
    &\Def\ \textsf{reduce} : \forall \{A\ B : \Un\}\ (f : A \rightarrow B)\ (g : B \rightarrow B \rightarrow B) \rightarrow \textsf{tree}(A) \rightarrow B \\
    &\quad\mid\ \textsf{Leaf}\ x \Rightarrow f\ x \\
    &\quad\mid\ \textsf{Node}\ l\ r \Rightarrow g\ (\textsf{reduce}\ f\ g\ l)\ (\textsf{reduce}\ f\ g\ r)
  \end{alignedat}
\end{alignat*}
\endgroup
In this definition, the type \Un{} of $A$ is the universe of \emph{unbound}
(i.e. non-linear) types in \TLLC{}. So \textsf{tree} is parameterized by $A$
which represents the type of data stored at the leaf nodes. The \emph{sequential}
\textsf{map} and \textsf{reduce} functions for \textsf{tree} are all defined in a standard way.

To construct the concurrent map-reduce system, the protocol of map-reduce must
be able to branch depending on what operation the client requests to perform.
Unlike many prior session type systems~\cite{caires10,das20} which provide
built-in constructs (e.g. $\oplus$ and $\&$) for internal and external choice,
we implement branching protocols using just dependent protocols and type-level
pattern matching on sent or received messages. For our map-reduce system, we
define the kinds of operations that can be performed through the inductive type
\textsf{opr}:

\vspace{-1em}
\begingroup
\small
\addtolength{\jot}{-0.25em}
\begin{align*}
  \Inductive\ \textsf{opr}(A : \Un) :=\ &\textsf{Map}   : \forall \{B : \Un\}\ (f : A \rightarrow B) \rightarrow \textsf{opr}(A) \\
  \mid\ &\textsf{Reduce}: \forall \{B : \Un\}\ (f : A \rightarrow B)\ (g : B \rightarrow B \rightarrow B) \rightarrow \textsf{opr}(A) \\
  \mid\ &\textsf{Free}  : \textsf{opr}(A)
\end{align*}
\endgroup

The \textsf{opr} type has three constructors:
\begin{itemize}
  \item $\textsf{Map}\ f$ represents a map operation that applies the function
        $f : A \rightarrow B$ to each element of type $A$ and produces results of type $B$.
  \item $\textsf{Reduce}\ f\ g$ represents a reduce operation that first
        applies the function $f : A \rightarrow B$ to each element of type $A$ and then
        combines the results using the binary function $g : B \rightarrow B \rightarrow B$.
  \item $\textsf{Free}$ is the command that terminates the concurrent tree.
\end{itemize}

We are now ready to define the following \textsf{treeP} protocol 
to describe the interactions between nodes in the map-reduce tree.

\vspace{-1em}
\begingroup
\small
\addtolength{\jot}{-0.25em}
\begin{alignat*}{4}
  &\Def\ \textsf{treeP}\ (A : \Un)\ (t : \textsf{tree}\ A) :=\ ?(\ell : \textsf{opr}\ A).\\
  &\quad
    \begin{alignedat}{4}
      \Match\ \ell\ \With&\ \textsf{Map}\ \_\ f \Rightarrow\ \textsf{treeP}\ B\ (\textsf{map}\ f\ t) \\
                  \mid&\ \textsf{Reduce}\ \_\ f\ g \Rightarrow\ !(\textsf{sing}\ (\textsf{reduce}\ f\ g\ t)).\ \textsf{treeP}\ t \\
                  \mid&\ \textsf{Free} \Rightarrow \End
    \end{alignedat}
\end{alignat*}
\endgroup
For each node $n$ in the concurrent tree, it will be providing a channel of type
$\CH{\textsf{treeP}\ A\ t}$ to its parent. The parameter $t$ of type
$\textsf{tree}\ A$ represents the shape of the sub-tree rooted at $n$. The
\textsf{treeP} protocol states node $n$ will receive a message $\ell$ of type
$\textsf{opr}\ A$ from its parent.  The protocol then branches, via type-level
pattern matching on $\ell$, into three cases. If $\ell$ is of the form
$\textsf{Map}\ f$, then $n$ will continue the protocol as
$\textsf{treeP}\ B\ (\textsf{map}\ f\ t)$. Notice that the type parameter of
\textsf{treeP} is changed from $A$ to $B$ to reflect the fact that the data
stored at the leaves of the sub-tree is transformed from type $A$ to type
$B$. Furthermore, the shape of the sub-tree has also changed from $t$ to
$\textsf{map}\ f\ t$. In the second case where $\ell$ is of the form
$\textsf{Reduce}\ f\ g$, $n$ will first send the result of type
$\textsf{sing}\ (\textsf{reduce}\ f\ g\ t)$ to its parent. The type
$\textsf{sing}\ x$ is the \emph{singleton type} whose sole inhabitant is the
element $x$. After sending the result, $n$ will continue the protocol as
$\textsf{treeP}\ t$, i.e. remains unchanged. Finally, $n$ will terminate
the protocol when $\ell$ is $\textsf{Free}$.

Using the \textsf{treeP} protocol, we implement the processes that run at
each leaf of the concurrent tree. We have elided uninteresting
technical details regarding dependent pattern matching.

\vspace{-1em}
\begingroup
\small
\addtolength{\jot}{-0.25em}
\begin{align*}
  &\Def\ \textsf{leafWorker}\ \{A : \Un\}\ (x : A)\ (c : \CH{\textsf{treeP}\ A\ (\textsf{Leaf}\ x)}) : \CM{\textsf{unit}} := \\
  &\quad\Let\ \langle{\ell, c}\rangle := \Recv\ c\ \In \\
  &\quad
    \begin{alignedat}{4}
      &\Match\ \ell\ \With\\
      &\mid \textsf{Map} \Rightarrow \textsf{leafWorker}\ \{B\}\ (f\ x)\ c \\
      &\mid \textsf{Reduce} \Rightarrow \Let\ c \Leftarrow \Send\ c\ (\textsf{Just}\ (f\ x))\ \In\ \textsf{leafWorker}\ \{A\}\ x\ c  \\
      &\mid \textsf{Free} \Rightarrow \Close(c)
    \end{alignedat}
\end{align*}
\endgroup
The \textsf{leafWorker} function takes two non-ghost arguments: a data element
$x$ of type $A$ and a channel $c$ of type
$\CH{\textsf{treeP}\ A\ (\textsf{Leaf}\ x)}$. Through this channel $c$, the leaf
worker will receive requests from its parent and provide responses accordingly.
For instance, when the leaf worker receives a $\textsf{Map}\ f$ request, it will
apply $f: A \rightarrow B$ to its data element $x$ and continue as a leaf worker
with the new data element $f x$. In this case, the type parameter of
\textsf{leafWorker} has changed from $A$ to $B$ to reflect the transformation of
the data element.

To represent internal node workers we implement the following \textsf{nodeWorker}
function. This function takes (non-ghost) channels $c_l$ and $c_r$ of
types $\HC{\textsf{treeP}\ A\ l}$ and $\HC{\textsf{treeP}\ A\ r}$ for
communicating with its left and right children. Notice that the types of these
channels are indexed by ghost values $l$ and $r$ of type $\textsf{tree}\ A$
which represent the shapes of the concurrent sub-trees providing $c_l$ and
$c_r$. The \textsf{nodeWorker} communicates with its parent through the channel
$c$ whose type is indexed by the ghost value $\textsf{Node}\ l\ r$.

\vspace{-1em}
\begingroup
\small
\addtolength{\jot}{-0.25em}
\begin{align*}
  &\Def\ \textsf{nodeWorker}\ \{A : \Un\}\ \{l\ r : \textsf{tree}\ A\}\\
  &\qquad(c_l : \HC{\textsf{treeP}\ A\ l})\ (c_r : \HC{\textsf{treeP}\ A\ r})\ (c : \CH{\textsf{treeP}\ A\ (\textsf{Node}\ l\ r)}) : \CM{\textsf{unit}} := \\
  &\quad\Let\ \langle{\ell, c}\rangle := \Recv\ c\ \In \\
  &\quad
    \begin{alignedat}{4}
      &\Match\ \ell\ \With\\
      &\mid \textsf{Map}\ \_\ f \Rightarrow \\
      &\quad\Let\ c_l \Leftarrow \Send\ c_l\ (\textsf{Map}\ f)\ \In \\
      &\quad\Let\ c_r \Leftarrow \Send\ c_r\ (\textsf{Map}\ f)\ \In \\
      &\quad\Let\ c \Leftarrow \Send\ c\ (\textsf{Just}\ \textsf{unit})\ \In \\
      &\quad\textsf{nodeWorker}\ \{B\}\ \{(\textsf{map}\ f\ l)\ (\textsf{map}\ f\ r)\}\ c_l\ c_r\ c \\
      &\mid \textsf{Reduce}\ \_\ f\ g \Rightarrow \\
      &\quad\Let\ c_l \Leftarrow \Send\ c_l\ (\textsf{Reduce}\ f\ g)\ \In \\
      &\quad\Let\ c_r \Leftarrow \Send\ c_r\ (\textsf{Reduce}\ f\ g)\ \In \\
      &\quad\Let\ \langle{\textsf{Just}\ v_l, c_l}\rangle \Leftarrow \Recv\ c_l\ \In \\
      &\quad\Let\ \langle{\textsf{Just}\ v_r, c_r}\rangle \Leftarrow \Recv\ c_r\ \In \\
      &\quad\Let\ c \Leftarrow \Send\ c\ (\textsf{Just}\ (g\ v_l\ v_r))\ \In \\
      &\quad\textsf{nodeWorker}\ \{A\}\ \{l\ r\}\ c_l\ c_r\ c \\
      &\mid \textsf{Free} \Rightarrow \\
      &\quad\Let\ c_l \Leftarrow \Send\ c_l\ \textsf{Free}\ \In \\
      &\quad\Let\ c_r \Leftarrow \Send\ c_r\ \textsf{Free}\ \In \\
      &\quad\Wait(c_l); \Wait(c_r); \Close(c)
    \end{alignedat}
\end{align*}
\endgroup
Given the signature of \textsf{nodeWorker} and the definition of
the \textsf{treeP} protocol, the implementation of
\textsf{nodeWorker} is constrained to function exactly as intended. For instance,
in the case where \textsf{nodeWorker} receives a $\textsf{Map}\ f$ request from
its parent, the type of $c$ becomes $\CH{\textsf{treeP}\ B\ (\textsf{map}\ f\ (\textsf{Node}\ l\ r))}$
which simplifies to $\CH{\textsf{treeP}\ B\ (\textsf{Node}\ (\textsf{map}\ f\ l)\ (\textsf{map}\ f\ r))}$.
In other words, the type of $c$ forces the \textsf{nodeWorker} process to
recursively send the $\textsf{Map}\ f$ request to both of its children to
transform them into sub-trees of type $\HC{\textsf{treeP}\ B\ (\textsf{map}\ f\ l)}$ 
and $\HC{\textsf{treeP}\ B\ (\textsf{map}\ f\ r)}$.

\subsection{A Verified Interface for Map-Reduce}
Now that we have defined both leaf and internal node workers, we can wrap them
up into a more convenient interface as presented below.

\vspace{-1em}
\begingroup
\small
\addtolength{\jot}{-0.2em}
\begin{align*}
  &\Type\ \textsf{cTree}\ (A : \Un)\ (t : \textsf{tree}\ A) := \CM{\HC{\textsf{treeP}\ t}}
  \\[0.5em]
  &\Def\ \textsf{cLeaf}\ \{A : \Un\}\ (x : A) : \textsf{cTree}\ A\ (\textsf{Leaf}\ x) := \\
  &\quad\Fork(c : \CH{\textsf{treeP}\ A\ (\textsf{Leaf}\ x)})\ \With\ \textsf{leafWorker}\ x\ c
  \\[0.5em]
  &\Def\ \textsf{cNode}\ \{A : \Un\}\ \{l\ r : \textsf{tree}\ A\}\ (t_l : \textsf{cTree}\ A\ l)\ (t_r : \textsf{cTree}\ A\ r) : \textsf{cTree}\ (\textsf{Node}\ l\ r) := \\
  &\quad\Let\ c_l \Leftarrow t_l\ \In \\
  &\quad\Let\ c_r \Leftarrow t_r\ \In \\
  &\quad\Fork(c : \CH{\textsf{treeP}\ A\ (\textsf{Node}\ l\ r)})\ \With\ \textsf{nodeWorker}\ c_l\ c_r\ c
\end{align*}
\endgroup
The type alias \textsf{cTree} is defined to aid in the readability of the interface.
The wrapper functions \textsf{cLeaf} and \textsf{cNode} respectively create leaf
and internal node workers. This is accomplished by \emph{forking} a new process using
the \Fork{} construct of the concurrency monad. In particular, when given a channel type
$\CH{P}$, the \Fork{} construct will create a new channel and give one end of it to the caller
at type $\HC{P}$ and spawn a new process that runs the worker with the other end of the channel
at type $\CH{P}$. The duality of the channel types allows the caller and the worker to communicate.
Using these wrapper functions, one can construct a concurrent tree in virtually the same way
as one would construct a sequential tree. For example, the following code constructs a
concurrent tree with four leaf nodes containing integers $0, 1, 2$ and
$3$ respectively.
\begin{align*}
  \textsf{cNode}\ (\textsf{cNode}\ (\textsf{cLeaf}\ 0)\ (\textsf{cLeaf}\ 1))\ (\textsf{cNode}\ (\textsf{cLeaf}\ 2)\ (\textsf{cLeaf}\ 3))
\end{align*}
The type of this expression is rather verbose to write manually as it contains the full shape
of the concurrent tree. This is not a problem in practice as constant type arguments
(such as the tree shapes here) can almost always be inferred automatically by the type checker.

Finally, we implement the \textsf{cMap} and \textsf{cReduce} functions that provide the
map and reduce operations on concurrent trees. These functions are implemented by
simply sending the appropriate requests to the root worker of the concurrent tree.

\vspace{-1em}
\begingroup
\small
\addtolength{\jot}{-0.2em}
\begin{align*}
  &\Def\ \textsf{cMap}\ \{A\ B : \Un\}\ \{t : \textsf{tree}\ A\}\ (f : A \rightarrow B)\ (c : \textsf{cTree}\ A\ t) : \textsf{cTree}\ B\ (\textsf{map}\ f\ t) := \\
  &\quad\Let\ c \Leftarrow c\ \In \\
  &\quad\Let\ c \Leftarrow \Send\ c\ (\textsf{Map}\ f)\ \In \\
  &\quad\Return\ c
  \\[0.5em]
  &\Def\ \textsf{cReduce}\ \{A\ B : \Un\}\ \{t : \textsf{tree}\ A\}\ (f : A \rightarrow B)\ (g : B \rightarrow B \rightarrow B)\ (c : \textsf{cTree}\ A\ t) :\\
  &\qquad\CM{\textsf{sing}\ (\textsf{reduce}\ f\ g\ t) \otimes \textsf{cTree}\ A\ t} := \\
  &\quad\Let\ c \Leftarrow c\ \In \\
  &\quad\Let\ c \Leftarrow \Send\ c\ (\textsf{Reduce}\ f\ g)\ \In \\
  &\quad\Let\ \langle{v, c}\rangle \Leftarrow \Recv\ c \In \\
  &\quad\Return\ \langle{v, \Return\ c}\rangle
\end{align*}
\endgroup
From the type signature of \textsf{cMap}, we can see that it takes a function
$f$ and a concurrent tree of type $\textsf{cTree A\ t}$ and
returns a new concurrent tree of type $\textsf{cTree B\ (\textsf{map}\ f\ t)}$.
In other words, the type of \textsf{cMap} guarantees that the shape of the
concurrent tree is transformed in the same way as its sequential tree model under the
\textsf{map} function. Similarly, the \textsf{cReduce} takes a concurrent tree of type
$\textsf{cTree}\ A\ t$ and returns a
(linear) pair consisting of the result of type $\textsf{sing}\ (\textsf{reduce}\ f\ g\ t)$,
and the original concurrent tree. The correctness of \textsf{cReduce} is guaranteed
by the singleton type of its result: reducing a concurrent tree results in the same
value as reducing its sequential tree model.

\subsection{Concurrent Mergesort via Map-Reduce}
By properly instantiating the map-reduce interface defined previously, we can implement
more complex concurrent algorithms. Moreover, dependent session types allow us to easily 
verify the correctness of these derived concurrent algorithms relationally through their 
sequential models. As an extended example, we implement a concurrent version of the mergesort 
algorithm using the map-reduce interface and verify its correctness.

We define sequential \textsf{msort}, as a model of our concurrent implementation,
in the usual way using \textsf{split} and \textsf{merge} functions.
We will not go into further details regarding the well-founded recursion
of \textsf{msort} or the correctness of sorting as these are textbook results~\cite{cpdt,pierce10}.

\vspace{-1em}
\begingroup
\small
\addtolength{\jot}{-0.2em}
\begin{align*}
  &\Def\ \textsf{split}\ (xs : \textsf{list}\ \textsf{int}) : \textsf{list}\ \textsf{int} \times \textsf{list}\ \textsf{int} := ... \\
  &\Def\ \textsf{merge}\ (xs\ ys : \textsf{list}\ \textsf{int}) : \textsf{list}\ \textsf{int} := ...
  \\[0.5em]
  &\Def\ \textsf{msort}\ (xs : \textsf{list}\ \textsf{int}) : \textsf{list}\ \textsf{int} := \Match\ xs\ \With \\
  &\quad\mid\ \textsf{nil} \Rightarrow \textsf{nil} \\
  &\quad\mid\ x :: \textsf{nil} \Rightarrow x :: \textsf{nil} \\
  &\quad\mid\ zs \Rightarrow \Let\ \langle{xs, ys}\rangle := \textsf{split}\ zs\ \In\ \textsf{merge}\ (\textsf{msort}\ xs)\ (\textsf{msort}\ ys)
\end{align*}
\endgroup

Generally, to implement an algorithm using the map-reduce paradigm, one must first 
decompose the algorithm and data into a form that is amenable to parallelization. 
For mergesort, the input list can be recursively split into smaller sub-lists
which can be processed in parallel. To make this decomposition \emph{explicit},
we define the following \textsf{splittingTree} function that constructs a binary tree
representation of how the input list is split by the mergesort algorithm.

\vspace{-1em}
\begingroup
\small
\addtolength{\jot}{-0.2em}
\begin{align*}
  &\Def\ \textsf{splittingTree}\ (xs : \textsf{list}\ \textsf{int}) : \textsf{tree}\ (\textsf{list}\ \textsf{int}) := \Match\ xs\ \With \\
  &\quad\mid\ \textsf{nil} \Rightarrow \textsf{Leaf}\ \textsf{nil} \\
  &\quad\mid\ x :: \textsf{nil} \Rightarrow \textsf{Leaf}\ (x :: \textsf{nil}) \\
  &\quad\mid\ zs \Rightarrow \Let\ \langle{xs, ys}\rangle := \textsf{split}\ zs\ \In\ \textsf{Node}\ (\textsf{splittingTree}\ xs)\ (\textsf{splittingTree}\ ys)
\end{align*}
\endgroup

To apply map-reduce, we need to construct a concurrent representation of the
splitting tree with type $\textsf{cTree}\ (\textsf{list}\ \textsf{int})\ (\textsf{splittingTree}\ xs)$.
While it is tempting to directly convert the result of \textsf{splittingTree} into a concurrent tree
by recursively replacing \textsf{Leaf} with \textsf{cLeaf} and \textsf{Node} with \textsf{cNode},
such an approach would require traversing both the input list (to construct the splitting tree) 
and the resulting tree (to convert it into a concurrent tree). This would lead to a 
bottleneck in the performance of the overall algorithm as the traversals would be done 
sequentially without exploiting parallelism. Instead, we define the \textsf{splittingCTree} 
function that constructs the concurrent splitting tree in a concurrent manner.

\vspace{-1em}
\begingroup
\small
\addtolength{\jot}{-0.2em}
\begin{align*}
  &\Def\ \textsf{splittingCTree}\ (xs : \textsf{list}\ \textsf{int}) : 
      \CH{!(\textsf{cTree}\ (\textsf{list}\ \textsf{int})\ (\textsf{splittingTree}\ xs)).\ \End} \rightarrow \CM{\textsf{unit}} := \\
  &\quad\Match\ xs\ \With \\
  &\quad\mid\ \textsf{nil} \Rightarrow 
    \Let\ c \Leftarrow \Send\ c\ (\textsf{cLeaf}\ \textsf{nil})\ \In\ \Close(c);\ \Return\ () \\
  &\quad\mid\ x :: \textsf{nil} \Rightarrow 
    \Let\ c \Leftarrow \Send\ c\ (\textsf{cLeaf}\ (x :: \textsf{nil}))\ \In\ \Close(c);\ \Return\ () \\
  &\quad\mid\ zs \Rightarrow \\
  &\qquad\Let\ \langle{xs, ys}\rangle := \textsf{split}\ zs\ \In \\
  &\qquad\Let\ c_l \Leftarrow \Fork(c)\ \With\ \textsf{splittingCTree}\ xs\ c\ \In \\
  &\qquad\Let\ c_r \Leftarrow \Fork(c)\ \With\ \textsf{splittingCTree}\ ys\ c\ \In \\
  &\qquad ...
\end{align*}
\endgroup
The \textsf{splittingCTree} function takes an additional channel argument $c$ which is used to
send back the constructed concurrent tree to its caller. This small change allows the
recursive case to fork two new processes to construct the left and right sub-trees
in parallel. After both sub-trees have been constructed, the parent process can then
combine them into a single concurrent tree using \textsf{cNode} and send it back
to its caller. Notice that \textsf{splittingCTree} never calls the
sequential \textsf{splittingTree} function and only uses it at the type level to model
the concurrent tree being constructed. The complete implementation of
\textsf{splittingCTree} can be found in the supplementary materials but is shortened here for brevity.

Now that we have constructed a concurrent splitting tree of our input list, we can
apply the \textsf{cReduce} operation instantiated with $f := \lambda(x). x$
and $g := \textsf{merge}$ to perform merging in parallel.
This gives us an output of type
\begin{align*}
  \CM{\textsf{sing}\ (\textsf{reduce}\ (\lambda(x). x)\ \textsf{merge}\ (\textsf{splittingTree}\ xs)) \otimes \textsf{cTree}\ (\textsf{list}\ \textsf{int})\ (\textsf{splittingTree}\ xs)} 
\end{align*}
The singleton value 
$\textsf{sing}\ (\textsf{reduce}\ (\lambda(x). x)\ \textsf{merge}\ (\textsf{splittingTree}\ xs))$
returned by the monad relationally describes this series of concurrent computations using just 
sequential operations. This allows us to easily verify the correctness of our concurrent mergesort
implementation by proving the following theorem (in the internal logic of TLL) which states that 
reducing the splitting tree of a list is equivalent to performing mergesort on this list.

\vspace{-1em}
\begingroup
\small
\addtolength{\jot}{-0.2em}
\begin{align*}
  &\Theorem\ \textsf{reduceSplittingTree} : \\
  &\quad\forall (xs : \textsf{list}\ \textsf{int}) \rightarrow \textsf{reduce}\ (\lambda(x). x)\ \textsf{merge}\ (\textsf{splittingTree}\ xs) = \textsf{msort}\ xs
\end{align*}
\endgroup
Using this theorem, we can rewrite the singleton type returned by \textsf{cReduce} to
$\textsf{sing}\ (\textsf{msort}\ xs)$. In other words, the result of our concurrent
mergesort implementation is guaranteed to be exactly the same as that of the sequential
mergesort algorithm, thus completing our verification.

The full pipeline of concurrent mergesort is given in the following \textsf{cMSort} function.

\vspace{-1em}
\begingroup
\small
\addtolength{\jot}{-0.2em}
\begin{align*}
  &\Def\ \textsf{cMSort}\ (xs : \textsf{list}\ \textsf{int}) : \CM{\textsf{sing}\ (\textsf{msort}\ xs)} := \\
  &\quad\Let\ c \Leftarrow \Fork(c)\ \With\ \textsf{splittingCTree}\ xs\ c\ \In \\ 
  &\quad\Let\ \langle{\textit{ctree}, c}\rangle \Leftarrow \Recv\ c\ \In\ \Wait\ c; \\
  &\quad\Let\ \langle{v, \textit{ctree}}\rangle \Leftarrow \textsf{cReduce}\ (\lambda(x). x)\ \textsf{merge}\ \textit{ctree}\ \In \\
  &\quad\Let\ \textit{ctree} \Leftarrow \Send\ \textit{ctree}\ \textsf{Free}\ \In\ \Wait\ \textit{ctree}; \\
  &\quad\Return\ (\Rewrite[\textsf{reduceSplittingTree}\ xs]\ v)
\end{align*}
\endgroup


\section{Formal Theory of Dependent Session Types}\label{sec:theory}
\subsection{Core TLL}\label{sec:core-tll}
In this section, we give a brief summary of the Two-Level Linear dependent type theory (TLL)~\cite{fu25}. 
TLL is a dependent type theory that combines 
Martin-L\"{o}f-style dependent types~\cite{martinlof} 
with linear types~\cite{girard,wadler1990}. 
Notably, TLL supports \emph{essential linearity}~\cite{luo} through the use of
a stratified ``two-level'' typing system: the \emph{logical} level and the \emph{program} level. 
The typing judgments of the two levels are written and organized as follows:
\begin{center}
\vspace{0.5em}
\begin{tikzpicture}[
    node distance=2.4cm,
    >=stealth, auto,
    every state/.style={rectangle, draw, rounded corners}
]
\node[state, fill=blue!5] (l)                {\small$\Gamma \vdash m : A\ \text{(Logical Typing)}$};
\node[state, fill=red!5]  (p) [right=of l]   {\small$\Gamma ; \Delta \vdash m : A\ \text{(Program Typing)}$};
\path[-latex,transform canvas={yshift=+1.5ex}] (l.east) edge node {\footnotesize{provides types}} (p.west);
\path[-latex,transform canvas={yshift=-1.5ex}] (p.west) edge node {\footnotesize{subjects to verify}} (l.east);
\end{tikzpicture}
\vspace{0.5em}
\end{center}

First, the \emph{logical} level is a standard dependent type system that supports unrestricted 
usage of types and terms. The primary purpose of the logical level is to provide typing rules
for types which will be used at the logical level. For example, the rules for dependent 
function type ($\Pi$-types) formation are defined at the logical level as follows:
\begin{mathpar}
  \inferrule[Explicit-Fun]
  { \Gamma \vdash A : s \\
    \Gamma, x : A \vdash B : r }
  { \Gamma \vdash \PiR{t}{x : A}{B} : t }

  \inferrule[Implicit-Fun]
  { \Gamma \vdash A : s \\
    \Gamma, x : A \vdash B : r }
  { \Gamma \vdash \PiI{t}{x : A}{B} : t }
\end{mathpar}
The symbols $s, r, t$ range over the \emph{sorts} of type universes, i.e. 
$\Un$ or $\Ln$. These sorts are used to classify types into two categories: 
unrestricted types ($A : \Un$) and linear types ($A : \Ln$).
Program level terms which inhabit unrestricted types can be freely duplicated or discarded,
while those which inhabit linear types must be used exactly once.
Note that this usage restriction is \emph{not} enforced at the logical level
as the logical level typing judgment is completely structural.
This is safe because the logical level will never be executed at runtime and 
is only used for type checking and verification. Thus, multiple uses of
a linear resource at the logical level will not lead to any runtime errors.

At the program level, the typing judgment $\Gamma ; \Delta \vdash m : A$ is used to
exclusively type \emph{terms}. In other words, no rules for forming types are defined
at the program level. All the types used in $\Gamma$, $\Delta$, $m$ and $A$ must be well-formed
according to the logical level typing judgment. This typing judgment possesses two contexts:
$\Gamma$ of all variables in scope, and $\Delta$ of all variables that are computationally relevant
in program $m$. Context $\Delta$ is crucial for enforcing linearity at the program level.
For example, consider the $\lambda$-abstraction rules:
\begin{mathpar}
  \inferrule[Explicit-Lam]
  { \Gamma, x : A ; \Delta, x :_s A \vdash m : B \\ 
    \Delta \triangleright t }
  { \Gamma ; \Delta \vdash \lamR{t}{x : A}{m} : \PiR{t}{x : A}{B} }

  \inferrule[Implicit-Lam]
  { \Gamma, x : A ; \Delta \vdash m : B \\
    \Delta \triangleright t }
  { \Gamma ; \Delta \vdash \lamI{t}{x : A}{m} : \PiI{t}{x : A}{B} }
\end{mathpar}
In \textsc{Explicit-Lam}, we can see that the bound variable $x$ is added to
both contexts $\Gamma$ and $\Delta$. This indicates that $x$ is a variable which
can be used both logically (in types and ghost values) through $\Gamma$, and
computationally (in real values) through $\Delta$. On the other hand, in the
\textsc{Implicit-Lam} rule, $x$ is only added to $\Gamma$ but not $\Delta$.
This indicates that $x$ is a ghost variable which can only be used logically.
A ubiquitous example of ghost variables is type parameters in polymorphic functions.
For instance, the polymorphic identity function can be implemented as
\begin{align*}
  \lamI{\Un}{A : \Un}{\lamR{\Un}{x : A}{x}}
\end{align*}
which has the type $\PiI{\Un}{A : \Un}{\PiR{\Un}{x : A}{A}}$.
Arguments to implicit functions are typed at the logical level, thus
allowing polymorphic functions to be instantiated with a type as an argument.
Additionally, as demonstrated in the examples of prior sections,
ghost variables also facilitate program verification by statically describing 
abstractions and invariants of program states.

In the two $\lambda$-abstraction rules above, 
the premise $\Delta \triangleright t$ is a simple side condition that states: if
$t = \Un$, then all variables in $\Delta$ must be unrestricted. In other words,
the $\lambda$-abstractions that can be applied unrestrictedly (with $t = \Un$)
are not allowed to capture linearly typed variables from $\Delta$. This is
similar to the restriction imposed on closures implementing the $\textsf{Fn}$
trait (i.e. those that can be called multiple times) in Rust~\cite{rust} where
capturing of mutable references is prohibited. If such a restriction is not
imposed, then evaluating a $\lambda$-abstraction (that captures a linear
variable) twice may lead to unsafe memory accesses such as double frees or
use-after-frees.

The application rules for both explicit and implicit functions are as follows:
\begin{mathpar}
  \inferrule[Explicit-App]
  { \Gamma ; \Delta_1 \vdash m : \PiR{t}{x : A}{B} \\ 
    \Gamma ; \Delta_2 \vdash n : A }
  { \Gamma ; \Delta_1 \dotcup \Delta_2 \vdash \appR{m}{n} : B[n/x] }

  \inferrule[Implicit-App]
  { \Gamma ; \Delta \vdash m : \PiI{t}{x : A}{B} \\ 
    \Gamma \vdash n : A }
  { \Gamma ; \Delta \vdash \appI{m}{n} : B[n/x] }
\end{mathpar}
In \textsc{Explicit-App}, the argument $n$ is a real value which must be typed
at the program level. The $\dotcup$ operator merges the two program context
$\Delta_1$ and $\Delta_2$ by contracting unrestricted variables and requiring
that linear variables be disjoint, thus preventing the sharing of linear
resources. In \textsc{Implicit-App}, the argument $n$ is a ghost value that is
typed at the logical level. Due to the fact that ghost values are erased prior
to runtime, the program context $\Delta$ in the conclusion only tracks the
computationally relevant variables used in $m$. Notice how in \textsc{Explicit-App}, 
the argument $n$ is substituted into the return type $B$. This allows types to depend 
on program level terms regardless of whether they are of linear or unrestricted types.

\paragraph{\textbf{Usage vs Uniqueness}}
Compared to other linear dependent type
theories~\cite{qtt,nothing,llf,vakar14,luo} which only enforce the linear
\emph{usage} of resources, the TLL type system prevents the \emph{sharing} of
linear resources as well. This is similar to the subtle distinction between
linear logic~\cite{girard} and bunched implications~\cite{ohearn99,ohearn03}
described by O'Hearn. 
Consider a linear function $f$, in the aforementioned dependent type theories,
of some type $A \multimap B$. When function $f$ is applied to some argument $v$
of type $A$, the argument $v$ is guaranteed to be used exactly once in the 
\emph{body} of $f$. Notice that this notion of linearity does not guarantee that
$f$ has unique access to $v$. If $v$ was obtain from some $!$-exponential or
$\omega$-quantity (the sharable quantity in graded systems \cite{qtt,nothing}),
then there may be other aliases of $v$ which can be used outside of $f$.

Wadler, in his seminal work~\cite{wadler1991}, made a similar distinction
between linearity and uniqueness in the context of functional programming,
noting that implicit uses of \emph{promotion} and \emph{dereliction} in linear
logic can lead to violations of uniqueness.  He coins the term \emph{steadfast
types} to refer to type systems that enforce both linearity and uniqueness. In
this sense, TLL is steadfast as its \emph{sort-uniqueness} property (i.e. types
uniquely inhabit either $\Un$ or $\Ln$) prohibits the implicit promotion and
dereliction of linear types, thus preventing the sharing of linear resources.
The heap semantics~\cite{turner99} of TLL shows that its programs enjoy the 
\emph{single-pointer} property which is a consequence of uniqueness at runtime.  In
the context of concurrency, the steadfast type system of TLL makes it especially
suitable for integration with session types: linear usage prevents replaying of
communication protocols and uniqueness ensures that a communication channel has
a single owner.

\subsection{Dependent Session Types of \TLLC{}}\label{sec:dependent-session-types}
In this section, we formally present the dependent session types of \TLLC{}.

\paragraph{\textbf{Basic Protocols and Channel Types}}
The intuitionistic session types of \TLLC{} are decoupled into \emph{protocols} and \emph{channel types}. 
The rules for forming protocols are as follows:
\begin{mathpar}
  \inferrule[Proto] 
  { \Gamma \vdash }
  { \Gamma \vdash \Proto : \Un }

  \inferrule[Explicit-Action]
  { \Gamma, x : A \vdash B : \Proto }
  { \Gamma \vdash \ActR{\rho}{x : A}{B} : \Proto }

  \inferrule[Implicit-Action]
  { \Gamma, x : A \vdash B : \Proto }
  { \Gamma \vdash \ActI{\rho}{x : A}{B} : \Proto }

  \inferrule[End]
  { \Gamma \vdash }
  { \Gamma \vdash \End : \Proto }

  \text{where } \rho \in \{!, ?\}
\end{mathpar}
Here, the \textsc{Proto} rule introduces the \Proto{} type which is the type of all protocols. 
Note that \Proto{} is an unrestricted type, thus protocols can be freely duplicated or discarded.
The \textsc{Explicit-Action} and \textsc{Implicit-Action} rules form dependent protocols which 
inhabit the \Proto{} type. The \textsc{End} rule marks the termination of a protocol.

Once a protocol is defined, we can form channel types using the following rules:
\begin{mathpar}
  \inferrule[ChType]
  { \Gamma \vdash A : \Proto }
  { \Gamma \vdash \CH{A} : \Ln }

  \inferrule[HcType]
  { \Gamma \vdash A : \Proto }
  { \Gamma \vdash \HC{A} : \Ln }
\end{mathpar}
Notice that the channel type constructors $\CH{\cdot}$ and $\HC{\cdot}$
lift protocols, which are unrestricted values, into linear types. This means that
channels must be used exactly once. Furthermore, as explained in the previous
section, the unique ownership of linear types in TLL ensures that only a single
entity has access to a channel at any point in time, thus preventing race conditions.

\paragraph{\textbf{Recursive Protocols}}
Recursive protocols can be formed using the $\fix{x : A}{m}$ construct:
\begin{mathpar}
  \inferrule[RecProto]
  { \Gamma, x : A \vdash m : A \\ 
    A\ \text{is an \emph{arity} ending on } \Proto \\
    x\ \text{is \emph{guarded} by protocol action in}\ m }
  { \Gamma \vdash \fix{x : A}{m} : A }
\end{mathpar}
For a $\fix{x : A}{m}$ term, we require that $A$ be an \emph{arity} ending on \Proto{}.
This prevents $\mu$ from introducing logical inconsistencies as it can only be used to
construct protocols but not proofs for arbitrary propositions. To ensure that protocols
defined through $\fix{x : A}{m}$ can be productively unfolded, recursive usages of $x$ must be
syntactically \emph{guarded} behind a protocol action in $m$. This enforces the 
\emph{contractiveness} condition for recursive session types~\cite{gay10}. Both the
arity and guardedness conditions are stable under substitution. Due to space limitations,
we present the rules of arities and guardedness in the appendix.

The difficulty of integrating recursive protocols in classical session type
systems is well documented~\cite{gay20}. The key challenge is to define a
suitable \emph{duality} operator that commutes with recursion. The following
example is due to Bernardi and Hennessy~\cite{bernardi16}. Suppose we define a 
reasonable, but naive, duality operator $(\cdot)^\bot$ which simply flips $!$
and $?$ in protocols.  For the dual of recursive protocol $\mu{X}.{?X}.X$, if we
first apply duality and then unfold the recursion, we get:
\begin{align*}
  (\mu{X}.{?X}.X)^\bot  = \mu{X}.{!X}.X =\; !(\mu{X}.{!X}.X).(\mu{X}.{!X}.X)
\end{align*} 
On the other hand, if we first unfold the recursion and then apply duality, we get:
\begin{align*}
  (\mu{X}.{?X}.X)^\bot = (?(\mu{X}.{?X}.X).(\mu{X}.{?X}.X))^\bot =\; !(\mu{X}.{?X}.X).(\mu{X}.{!X}.X)
\end{align*} 
Notice that the resulting protocols do not agree on the type of the sent message.
While solutions have been proposed to address this issue~\cite{bernardi16,bernardi14},
they do not generalize to dependent session types due to the presence of arbitrary
type-level computation. In \TLLC{}, the separation of protocols and channels types
allows us to sidestep the duality problem entirely. Suppose we define our previously
problematic recursive protocol in \TLLC{} as follows:
\begin{align*}
  T \triangleq \fix{X : \Proto}{?(\_ : X). X} =\; ?\big(\_ : \fix{X : \Proto}{?(\_ : X). X}\big).\ \fix{X : \Proto}{?(\_ : X). X}
\end{align*}
When viewed through the lens of channel type constructors $\CH{\cdot}$ and $\HC{\cdot}$,
the actions specified by the unfolded protocol are correctly dual to each other.
More specifically, a channel of type $\CH{T}$ receives a protocol of type $T$ whereas
a channel of type $\HC{T}$ sends a protocol of type $T$.

\paragraph{\textbf{Concurrency Monad}}
Concurrency is integrated into the pure functional core of TLL through a concurrency monad $\mcC$.
The basic components of the monad are given in the following rules.
\begin{mathpar}
  \inferrule[$\mcC$Type]
  { \Gamma \vdash A : s }
  { \Gamma \vdash \CM{A} : \Ln }

  \inferrule[Return]
  { \Theta ; \Gamma ; \Delta \vdash m : A }
  { \Theta ; \Gamma ; \Delta \vdash \return{m} : \CM{A} }

  \inferrule[Bind]
  { \Gamma \vdash B : s \\
    \Theta_1 ; \Gamma ; \Delta_1 \vdash m : \CM{A} \\\\
    \Theta_2 ; \Gamma, x : A ; \Delta_2, x :_r A \vdash n : \CM{B} }
  { \Theta_1 \dotcup \Theta_2 ; \Gamma ; \Delta_1 \dotcup \Delta_2 \vdash \letin{x}{m}{n} : \CM{B} }
\end{mathpar}

To reason about the communication channels that will appear at \emph{runtime}, the program level
typing judgment is extended to include a \emph{channel context} $\Theta$ which tracks the
channels used by the program. It is crucial to understand that the channel context is
largely a technical device for analyzing the type safety of \TLLC{}. Prior to runtime, the 
channel context is empty as no channels have been created. Programming
is carried out using normal variables in $\Delta$. At runtime, channels will be created and
substituted for appropriate variables in $\Delta$. It is these runtime channels that occupy the 
channel context $\Theta$ and are typed as follows:
\begin{mathpar}
  \inferrule[Channel-CH]
  { \Gamma ; \Delta \vdash \\ 
    \epsilon \vdash A : \Proto \\
    \Delta \triangleright \Un }
  { c :_\Ln \CH{A} ; \Gamma ; \Delta \vdash c : \CH{A} }

  \inferrule[Channel-HC]
  { \Gamma ; \Delta \vdash \\ 
    \epsilon \vdash A : \Proto \\
    \Delta \triangleright \Un }
  { c :_\Ln \HC{A} ; \Gamma ; \Delta \vdash c : \HC{A} }
\end{mathpar}
The protocol $A$ used in the channel types here must be \emph{closed}. 
This is because channels at runtime must follow fully concretized protocols.
The $\Gamma$ and $\Delta$ contexts are allowed to be non-empty for the
purely technical reason of facilitating proofs for renaming and substitution lemmas.

As explained in \Cref{sec:message-specification}, the protocol actions $!(x : A).B$ and $?(x : A).B$
are abstract constructs that need to be interpreted through channel types. Since $\CH{\cdot}$ and $\HC{\cdot}$
interpret protocol actions in opposite ways, we only present the typing rules for $\CH{\cdot}$ below.
\begin{mathpar}
  \inferrule[Explicit-Send-CH]
  { \Theta ; \Gamma ; \Delta \vdash m : \CH{\ActR{!}{x : A}{B}}}
  { \Theta ; \Gamma ; \Delta \vdash \Send\ m : \PiR{\Ln}{x : A}{\CM{\CH{B}}} }

  \inferrule[Explicit-Recv-CH]
  { \Theta ; \Gamma ; \Delta \vdash m : \CH{\ActR{?}{x : A}{B}}}
  { \Theta ; \Gamma ; \Delta \vdash \Recv\ m : \CM{\SigR{\Ln}{x : A}{\CH{B}}} }

  \inferrule[Implicit-Send-CH]
  { \Theta ; \Gamma ; \Delta \vdash m : \CH{\ActI{!}{x : A}{B}}}
  { \Theta ; \Gamma ; \Delta \vdash \SendI\ m : \PiI{\Ln}{x : A}{\CM{\CH{B}}} }

  \inferrule[Implicit-Recv-CH]
  { \Theta ; \Gamma ; \Delta \vdash m : \CH{\ActI{?}{x : A}{B}}}
  { \Theta ; \Gamma ; \Delta \vdash \RecvI\ m : \CM{\SigI{\Ln}{x : A}{\CH{B}}} }
\end{mathpar}

For the \textsc{Explicit-Send-CH} rule, a channel of type $\CH{!(x : A).B}$ is
applied to the \Send{} operator. This produces a function which takes a real
value $v$ of type $A$ and returns a concurrent computation of type
$\CM{\CH{B[v/x]}}$ which represents the continuation of the protocol after
sending a real value of type $A$. When this monadic value is bound by rule
\textsc{Bind} and executed at runtime, the value $v$ will be sent on channel
$m$. The dual \textsc{Explicit-Recv-HC} rule, as shown here,
\begin{mathpar}
  \inferrule[Explicit-Recv-HC]
  { \Theta ; \Gamma ; \Delta \vdash m : \HC{\ActR{!}{x : A}{B}}}
  { \Theta ; \Gamma ; \Delta \vdash \Recv\ m : \CM{\SigR{\Ln}{x : A}{\HC{B}}} }
\end{mathpar}
receives on a channel of type $\HC{!(x : A).B}$, which produces a (monadic)
dependent pair (similarly to \textsc{Explicit-Recv-CH}). The first component of
the pair is the value of type $A$ that is received, and the second component is
a channel of type $\HC{B[v/x]}$ representing the continuation of the protocol.
Notice that, due to the linearity of the $\mcC$ monad, all of the intermediate
monadic values are guaranteed to be bound by the \textsc{Bind} rule and executed.

The implicit send and receive rules are similar to their explicit counterparts,
except that they send and receive ghost values instead of real values. This 
distinction manifests by having the \SendI{} and \RecvI{} operators produce
implicit functions and implicit pairs respectively. When the implicit function
of \textsc{Implicit-Send-CH} is applied to a ghost argument using
\textsc{Implicit-App} (\Cref{sec:core-tll}), the ghost argument will be erased
prior to runtime. Similarly, the first component of the implicit pair produced
by \textsc{Implicit-Recv-CH} is also an erased ghost value. The underlying type
system of TLL ensures that these ghost values will only be used logically, thus
are safe to erase.

The last communication rules govern the creation and termination of channels:

\vspace{-1em}
\begin{small}
\begin{mathpar}
  \inferrule[Fork]
  { \Theta ; \Gamma, x : \CH{A} ; \Delta, x :_\Ln \CH{A} \vdash m : \CM{\unit} }
  { \Theta ; \Gamma ; \Delta \vdash \fork{x : \CH{A}}{m} : \CM{\HC{A}} }
  \and\hspace{-0.5em}
  \inferrule[Close]
  { \Theta; \Gamma ; \Delta \vdash c : \CH{\End} }
  { \Theta ; \Gamma ; \Delta \vdash \Close\ c : \CM{\unit} }
  \and\hspace{-1.5em}
  \inferrule[Wait]
  { \Theta; \Gamma ; \Delta \vdash c : \HC{\End} }
  { \Theta ; \Gamma ; \Delta \vdash \Wait\ c : \CM{\unit} }
\end{mathpar}
\end{small}

\noindent
\textsc{Close} and \textsc{Wait} are simple rules used to free channels whose
protocols have terminated. The \textsc{Fork} rule is used for creating a child
process which concurrently executes the monadic computation $m$. The child
process is provided with a fresh channel of type $\CH{A}$ which is bound to the
variable $x$ in $m$. Dually, the parent process obtains the channel endpoint of
type $\HC{A}$, which can be used to communicate with the spawned process. Note
that the newly spawned process $m$ is allowed to capture pre-existing channels
from $\Theta$ and program variables from $\Delta$. Compared to intuitionistic
session type systems based on the sequent
calculus~\cite{caires10,pfenning11,das20}, the $\CH{A}$ channel handed to the
child process behaves like the right-hand side of a sequent (i.e. the
\emph{provided} channel), while the $\HC{A}$ channel handed to the parent
process behaves like the left-hand side of a sequent (i.e. the \emph{consumed}
channels). Essentially, we have embedded intuitionistic session types into a
functional language without needing to reorganize the underlying type system
into a sequent calculus formulation.

\section{Semantics and Meta-Theory}\label{sec:semantics}
\subsection{Process Configurations}
In the previous section, we have presented the typing rules for \TLLC{} terms
which form individual processes. To compose multiple processes together, we
introduce the process level typing judgment $\Theta \Vdash P$ below. This judgment
formally states that a configuration of processes $P$ is well-typed under the
context $\Theta$, which tracks the channels used by the processes in $P$ at runtime.
\begin{mathpar}
  \inferrule[Expr] 
  { \Theta ; \epsilon ; \epsilon \vdash m : \CM{\unit} }
  { \Theta \Vdash \proc{m} }

  \inferrule[Par]
  { \Theta_1 \Vdash P_1 \\ 
    \Theta_2 \Vdash P_2 }
  { \Theta_1 \dotcup \Theta_2 \Vdash P_1 \mid P_2 }

  \inferrule[Scope]
  { \Theta, c :_\Ln \CH{A}, d :_\Ln \HC{A} \Vdash P }
  { \Theta \Vdash \scope{cd}{P} }
\end{mathpar}

The process configuration rules are standard. The \textsc{Expr} rule lifts
well-typed closed terms of type $\CM{\unit}$ to processes. It is important for
the term $m$ to be closed as processes in a configuration cannot rely on
external substitutions to resolve free variables. They can only communicate
through channels. In the \textsc{Par} rule, well-typed configurations $P$ and
$Q$ can be composed in parallel as long as their contexts $\Theta_1$ and
$\Theta_2$ can be combined. The \textsc{Scope} rule allows two dual channels
to be connected together, allowing processes holding channels $c$ and $d$ to communicate.

The structural congruence of process configurations is defined as the least
congruence relation generated by the following standard rules:
\begin{mathpar}
  P \mid Q \equiv Q \mid P 

  O \mid (P \mid Q) \equiv (O \mid P) \mid Q

  P \mid \proc{\return{\ii}} \equiv P
  \\
  \scope{cd}{P} \mid Q \equiv \scope{cd}{(P \mid Q)}

  \scope{cd}{P} \equiv \scope{dc}{P}

  \scope{cd}{\scope{c'd'}{P}} \equiv \scope{c'd'}{\scope{cd}{P}}
\end{mathpar}
Structural congruence states that parallel composition is commutative and
associative and compatible with channel scoping. Processes which terminate
with the unit value $\ii$ can be removed from a configuration.
Intuitively, two structurally congruent configurations should be
considered equivalent regarding their communication behavior.

\subsection{Semantics}
\paragraph{\textbf{Term Reduction}}
The operational semantics of \TLLC{} programs is mostly the same as that of
call-by-value TLL~\cite{fu25}.  The relation $m \Leadsto m'$ is used to denote a
single step of \emph{program} level reduction. Due to the monadic formulation of
concurrency in \TLLC{}, the only additional (non-trivial) program reduction rule
is the following \textsc{BindElim} rule which reduces a monadic
\bsf{let}-expression when its bound term is a \bsf{return} expression:
\begin{align*}
 (\textsc{BindElim})\qquad \letin{x}{\return{v}}{m} \Leadsto m[v/x] \tag*{(\text{where $v$ is a value})}
\end{align*}
Values now additionally include channels, partially applied communication operators
and thunked monadic expressions. We will use the metavariable $v$ to denote values
for the rest of this paper. The full definition of values is presented in the appendix.

\paragraph{\textbf{Process Reduction}}
The semantics of processes is defined through the relation $P \Rrightarrow Q$ which
states that process configuration $P$ reduces to process configuration $Q$
in one step. The process reduction rules are presented below.

\vspace{0.4em}
\begin{tabular}{r L C L}
  Evaluation Contexts & \mcM, \mcN & ::= & [\cdot] \mid \letin{x}{\mcM}{m}
\end{tabular}

\vspace{0.2em}
\begin{small}
\begin{tabular}{l L}
  (\textsc{Proc-Fork}) &
    \proc{
      \mcN[\fork{x : A}{m}]
    }
    \Rrightarrow
    \scope{cd}{(\proc{\mcN[\return{c}]} \mid \proc{m[d/x]})} 
  \\
  (\textsc{Proc-End}) 
    &\scope{cd}{(
        \proc{\mcM[\close{c}]} 
        \mid 
        \proc{\mcN[\wait{d}]}
      )}
    \Rrightarrow 
    \proc{\mcM[\return{\ii}]} \mid \proc{\mcN[\return{\ii}]} 
  \\
  (\textsc{Proc-Com}) 
    &\scope{cd}{(
        \proc{\mcM[\appR{\sendR{c}}{v}]} 
        \mid 
        \proc{\mcN[\recvR{d}]}
      )}
    \Rrightarrow 
    \scope{cd}{(
      \proc{\mcM[\return{c}]} 
      \mid 
      \proc{\mcN[\return{\pairR{v}{d}{\Ln}}]}
    )}
  \\
  (\textsc{Proc-\underline{Com}}) 
    &\scope{cd}{(
        \proc{\mcM[\appI{\sendI{c}}{o}]} 
        \mid 
        \proc{\mcN[\recvI{d}]}
      )}
     \Rrightarrow 
     \scope{cd}{(
        \proc{\mcM[\return{c}]} 
        \mid 
        \proc{\mcN[\return{\pairI{o}{d}{\Ln}}]}
      )}
\end{tabular}
\vspace{0.2em}
\begin{mathpar}
  \inferrule[(Proc-Expr)]
  { m \Leadsto m' }
  { \proc{m} \Rrightarrow \proc{m'} }

  \inferrule[(Proc-Par)]
  { P \Rrightarrow Q }
  { O \mid P \Rrightarrow O \mid Q }

  \inferrule[(Proc-Scope)]
  { P \Rrightarrow Q }
  { \scope{cd}{P} \Rrightarrow \scope{cd}{Q} }

  \inferrule[(Proc-Congr)]
  { P \equiv P' \\ 
    P' \Rrightarrow Q' \\ 
    Q' \equiv Q }
  { P \Rrightarrow Q }
\end{mathpar}
\end{small}
\vspace{0.2em}

\noindent
The first four rules define the synchronous communication semantics of \TLLC{}. 

The \textsc{Proc-Fork} rule creates a pair of dual channels $c$ and $d$ to connect
the continuation of the parent process with the newly forked child process $m$.
We can see here that $c$ is placed into the evaluation context $\mcN$ of the parent process. 
The child process $m$ receives the dual channel $d$ by substituting $d$ for
the bound variable $x$. The resulting configuration contains two processes
which can now communicate on channels $c$ and $d$.

The \textsc{Proc-End} rule synchronizes the termination of communicating on dual 
channels $c$ and $d$. The resulting process configuration contains two processes
which are no longer connected by any channels. Additionally, the close and wait 
operations are replaced by unit return values once the termination is synchronized.

The \textsc{Proc-Com} rule governs the communication of a real message $v$ from
a sender to a receiver. The sending process continues as $m$ with the channel
$c$ while the receiving process continues as $n$ with the received
message $v$ and the channel $d$ paired together as $\pairR{v}{d}{\Ln}$.

The \textsc{Proc-\underline{Com}} rule is similar to \textsc{Proc-Com} except
that it handles the communication of a ghost message $o$. While this rule seems
to indicate that ghost messages are communicated at runtime, we will later show
through the erasure safety theorem that ghost messages are always safe to be erased.
The exchange of ghost messages here is only for the purpose of establishing a 
reference point for reasoning about the correctness of erasure safety.

The remaining four rules are standard. The \textsc{Proc-Expr} rule allows a
singleton process to reduce its underlying term. The \textsc{Proc-Par} and
\textsc{Proc-Scope} rules allow a process to reduce in parallel composition and
under channel scope respectively. Finally, the \textsc{Proc-Congr} rule allows
processes to reduce up to structural congruence.

\subsection{Meta-Theory}
\paragraph{\textbf{Compatibility}}
We first show that the concurrency extensions of \TLLC{} are compatible with the
underlying TLL type system. To this end, we prove that \TLLC{} enjoys the same
meta-theoretical properties as TLL. Due to the fact that these properties do not involve 
concurrency, their proofs indicate that \TLLC{} is sound as a term calculus.
Here we present a few representative theorems. The full list of theorems and their proofs
can be found in the appendix.

The first theorem we present is the validity theorem which states that
well-typed terms have well-sorted types. This theorem is important as it
ensures that the types appearing in typing judgments are indeed valid (i.e. they inhabit a sort).
\begin{theorem}[Validity]
  Given $\Theta ; \Gamma ; \Delta \vdash m : A$, there exists
  sort $s$ such that $\Gamma \vdash A : s$.
\end{theorem}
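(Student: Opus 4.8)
The plan is to proceed by induction on the program-level typing derivation $\Theta ; \Gamma ; \Delta \vdash m : A$, producing in each case a logical-level derivation $\Gamma \vdash A : s$ for some sort $s$. Two standing facts about the logical level are needed throughout and may be imported from the metatheory of core TLL: a \emph{logical substitution} lemma (if $\Gamma, x : A \vdash B : s$ and $\Gamma \vdash n : A$, then $\Gamma \vdash B[n/x] : s$) and a \emph{context validity} lemma (every type assigned to a variable in a well-formed $\Gamma$ is itself well-sorted). The latter dispatches the variable base case directly, and the former is the workhorse for the application cases below.

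For the purely structural and concurrency-introduction rules, the conclusion type is built by a formation rule whose premises are available either directly or through the inductive hypothesis. In \textsc{Return} the type $\CM{A}$ is well-sorted by \textsc{$\mcC$Type} once the inductive hypothesis supplies $\Gamma \vdash A : s$; in \textsc{Bind} the sorting premise $\Gamma \vdash B : s$ is present, so $\CM{B} : \Ln$. In \textsc{Fork} the premise places $x : \CH{A}$ into the contexts, so context validity together with inversion on \textsc{ChType} recovers $\Gamma \vdash A : \Proto$, whence \textsc{HcType} and \textsc{$\mcC$Type} yield $\CM{\HC{A}} : \Ln$. The rules \textsc{Close} and \textsc{Wait} return $\CM{\unit}$, well-sorted by \textsc{$\mcC$Type}, and the channel base cases \textsc{Channel-CH}/\textsc{Channel-HC} carry $\epsilon \vdash A : \Proto$ as a premise, so \textsc{ChType}/\textsc{HcType} give the channel types their sort $\Ln$ after weakening to $\Gamma$. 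Finally, the $\lambda$-abstraction cases reassemble the $\Pi$-type via \textsc{Explicit-Fun}/\textsc{Implicit-Fun} from the sortings of the domain and codomain obtained by the inductive hypothesis and inversion.

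The substantive cases are the application and communication rules. For \textsc{Explicit-App} the inductive hypothesis yields $\Gamma \vdash \PiR{t}{x : A}{B} : t$; inverting the $\Pi$-formation rule extracts $\Gamma, x : A \vdash B : r$, and the logical substitution lemma then gives $\Gamma \vdash B[n/x] : r$, with \textsc{Implicit-App} handled identically. The communication rules require the most bookkeeping: from, say, $m : \CH{\ActR{!}{x : A}{B}}$ the inductive hypothesis followed by inversion on \textsc{ChType} and then on \textsc{Explicit-Action} recovers $\Gamma, x : A \vdash B : \Proto$, from which the result type $\PiR{\Ln}{x : A}{\CM{\CH{B}}}$ is rebuilt by composing \textsc{ChType}, \textsc{$\mcC$Type}, and \textsc{Explicit-Fun}; the receive rules are analogous but assemble a $\Sigma$-type via \textsc{Explicit-Sig}. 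I expect the main obstacle to be precisely this chain of inversions on protocol and channel-type formation: I must confirm that inversion lemmas for \textsc{ChType}, \textsc{HcType}, and the action rules are available, accounting for conversion if types are taken up to $\simeq$, since each communication rule demands peeling the channel constructor and one protocol action off the type of $m$ before the target type can be reconstructed.
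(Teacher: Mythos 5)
Your proposal takes a genuinely different route from the paper, and the route has a gap. The paper does not do a direct induction on the program-level derivation at all: it first proves a \emph{Lifting} theorem (if $\Theta ; \Gamma ; \Delta \vdash m : A$ then $\Gamma \vdash m : A$, by induction on the program derivation), then proves validity purely at the logical level (if $\Gamma \vdash m : A$ then $\Gamma \vdash A : s$ for some $s$), and obtains the program-level statement as an immediate corollary of the two. Your direct induction is workable for most cases, but it breaks exactly where the conclusion type contains a substituted \emph{program-level} term.

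Concretely, in your \textsc{Explicit-App} case you invoke the logical substitution lemma to conclude $\Gamma \vdash B[n/x] : r$ from $\Gamma, x : A \vdash B : r$. That lemma also requires $\Gamma \vdash n : A$ \emph{at the logical level}, but the rule \textsc{Explicit-App} only supplies $\Theta_2 ; \Gamma ; \Delta_2 \vdash n : A$, and your induction hypothesis yields only well-sortedness of types, never logical typability of terms. The same obstruction appears in \textsc{Explicit-SumElim}, \textsc{Implicit-SumElim}, and \textsc{BoolElim}, whose result types $C[m/z]$ and $A[m/z]$ substitute a program-typed scrutinee. (\textsc{Implicit-App} is fine, since there the argument is logically typed by the rule itself; \textsc{Bind} and \textsc{Explicit-Pair} are fine because the result type's sorting is a premise.) To close the gap you must either prove the Lifting theorem first and cite it in these cases, or strengthen your induction to simultaneously establish $\Gamma \vdash m : A$ alongside $\Gamma \vdash A : s$ --- at which point you have essentially reconstructed the paper's factorization, which is the cleaner decomposition precisely because at the logical level the substitution lemma applies without any lifting step.
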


In TLL and \TLLC{}, the sort of a type determines whether the type is unrestricted or linear.
This means that it is crucial for a type to have a unique sort, otherwise the same type
could be interpreted as both unrestricted and linear, leading to unsoundness. To address this
concern, we prove the sort uniqueness theorem below which states that a type can have at most one sort.
This ensures no ambiguity on whether a type is to be considered unrestricted or linear.
\begin{theorem}[Sort Uniqueness]
  Given $\Gamma \vdash A : s$ and $\Gamma \vdash A : t$,
  we have $s = t$.
\end{theorem}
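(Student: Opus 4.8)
The plan is to prove the statement by induction on the derivation of $\Gamma \vdash A : s$, using inversion on the derivation of $\Gamma \vdash A : t$ at each step. The crucial preliminary observation is that the two sorts $\Un$ and $\Ln$ are distinct normal forms with no reducts, so by the confluence property of TLL they are not convertible: $\conv{\Un}{\Ln}$ is impossible. This single fact tames the only rule that is not syntax-directed, namely conversion. Indeed, when $\Gamma \vdash A : s$ is concluded by a conversion step, the sort it produces is convertible to the sort of the sub-derivation; since both are sorts, non-convertibility forces them to be \emph{syntactically} equal, so the conversion step is sort-preserving and the induction hypothesis applies directly to the shorter sub-derivation.

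With conversion handled, the remaining cases are the syntax-directed type-formation rules, each of which pins the sort down from the head of $A$. The formers $\Proto$, $\CH{\cdot}$, $\HC{\cdot}$, and $\CM{\cdot}$ carry a fixed sort ($\Un$ for the first, $\Ln$ for the rest), so any two derivations typing the same such $A$ must agree. The dependent formers $\PiR{t}{x : B}{C}$, $\PiI{t}{x : B}{C}$, and their $\Sigma$ counterparts read their sort off the explicit annotation $t$ that is part of the syntax of $A$; since $A$ is literally the same term in both derivations, the annotations, and hence the sorts, coincide. The variable case appeals to well-formedness of $\Gamma$: each variable is bound with a unique type, so if that type is a sort it is uniquely determined. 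To make inversion on $\Gamma \vdash A : t$ clean in the presence of conversion, it is convenient to first establish a short generation lemma stating that any derivation of $\Gamma \vdash A : r$ whose subject has a given head former assigns the canonical sort of that former; this lemma is proved by the same induction, peeling conversions via sort non-convertibility, after which sort uniqueness is immediate.

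The main obstacle is precisely the interaction between conversion and the two-sorted universe structure: a priori, $\Gamma \vdash A : s$ and $\Gamma \vdash A : t$ might be reached through entirely different chains of conversions, so one cannot simply match the outermost type-formation rules. The non-convertibility of $\Un$ and $\Ln$, a direct consequence of the confluence property of TLL noted earlier, is what collapses this difficulty, guaranteeing that conversion can never transport a type from one sort to the other. Everything else reduces to routine inversion on the syntax-directed rules.
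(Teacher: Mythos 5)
There is a genuine gap. Your case analysis covers only the situations where $A$ is syntactically a type former (a sort, $\Pi$, $\Sigma$, $\Proto$, $\CH{\cdot}$, $\HC{\cdot}$, $\CM{\cdot}$) or a variable, but the theorem quantifies over arbitrary terms $A$ whose type happens to be a sort. Elimination forms can be typed by sorts too: for instance $\appR{(\lamR{\Un}{x : \Bool}{\unit})}{\bTrue}$ has type $\Un$, and so does $\boolElim{[z]\Un}{b}{\unit}{\Bool}$. For these, the type in the conclusion is obtained by substituting into the codomain of a $\Pi$-type (or into a motive), and the two given derivations may type the function subterm with two \emph{different} $\Pi$-types related only by conversion; indeed the paper observes that uniqueness of types fails outright (implicit pairs $\pairI{m}{n}{s}$ admit many types), so you cannot hope to show the two sub-derivations agree on the nose. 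Your proposed generation lemma, which reads the sort off the head former of $A$, has nothing to say in these cases, and the induction hypothesis ``the types of $A$ agree when both are sorts'' is too weak to push through the application and eliminator cases, because the relevant sub-derivations type the head subterm at a $\Pi$- or $\Sigma$-type rather than at a sort.

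This is exactly the difficulty the paper's proof is engineered around: it strengthens the statement to a \emph{type similarity} invariant $\Sim{A}{B}$, meaning $A$ and $B$ are convertible to terms related by a head-similarity relation that, crucially, recurses into the codomains of $\Pi$-types. Proving that $\Gamma_1 \vdash m : A$ and $\Gamma_2 \vdash m : B$ (under similar contexts) imply $\Sim{A}{B}$ by induction does survive the application and eliminator cases, since similarity is stable under substitution; sort uniqueness then falls out because similar sorts are convertible, hence equal by confluence. Your observation that $\Un \not\simeq \Ln$ follows from confluence is correct and is indeed the final step of the paper's argument, and your treatment of the genuinely syntax-directed formers is fine; what is missing is the strengthened induction hypothesis needed to get past the elimination forms. (A smaller issue: in your conversion case the type of the sub-derivation need not syntactically be a sort, only convertible to one, so ``since both are sorts'' is not immediate; your generation-lemma reformulation repairs this, but not the main gap.)
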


The next theorem we present is the standard subject reduction theorem which states that
types are preserved under term reduction. This theorem is necessary for ensuring that
session fidelity holds during process reduction as singleton processes reduce by reducing their
underlying terms.
\begin{theorem}[Subject Reduction]
  Given $\Theta ; \epsilon ; \epsilon \vdash m : A$ and
  $m \Leadsto m'$, we have
  $\Theta ; \epsilon ; \epsilon \vdash m' : A$.
\end{theorem}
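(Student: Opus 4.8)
The plan is to first strengthen the statement to arbitrary typing contexts and then proceed by induction on the derivation of $m \Leadsto m'$. As stated, the theorem fixes the contexts to be empty ($\Theta ; \epsilon ; \epsilon$), so it cannot be proved directly by induction: since $\Leadsto$ is closed under evaluation contexts, reducing a subterm of $m$ forces us to reason about that subterm under a possibly non-empty $\Gamma$ and $\Delta$. I would therefore establish the generalized lemma that for all $\Theta, \Gamma, \Delta$, if $\Theta ; \Gamma ; \Delta \vdash m : A$ and $m \Leadsto m'$ then $\Theta ; \Gamma ; \Delta \vdash m' : A$, and recover the theorem as the instance $\Gamma = \Delta = \epsilon$.

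Most reduction rules of \TLLC{} coincide with those of call-by-value TLL and never touch the channel context $\Theta$; for these, preservation is inherited from the subject reduction of TLL, with $\Theta$ simply carried along unchanged. The only genuinely new reduction is \textsc{BindElim}, namely $\letin{x}{\return{v}}{n} \Leadsto n[v/x]$. To handle it, I would invert the typing derivation: the redex must be typed by \textsc{Bind}, yielding $\Gamma \vdash B : s$, a derivation $\Theta_1 ; \Gamma ; \Delta_1 \vdash \return{v} : \CM{A}$, and $\Theta_2 ; \Gamma, x : A ; \Delta_2, x \ty{r} A \vdash n : \CM{B}$, with conclusion context $\Theta_1 \dotcup \Theta_2 ; \Gamma ; \Delta_1 \dotcup \Delta_2$. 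Inverting \textsc{Return} gives $\Theta_1 ; \Gamma ; \Delta_1 \vdash v : A$. Applying the program-level substitution lemma then produces $\Theta_1 \dotcup \Theta_2 ; \Gamma ; \Delta_1 \dotcup \Delta_2 \vdash n[v/x] : \CM{B}[v/x]$, and since the formation premise of \textsc{Bind} types $B$ in a context that does not bind $x$, we have $\CM{B}[v/x] = \CM{B}$, so the result type is exactly $\CM{B}$ as required.

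The remaining inductive cases are the compatible closure of $\Leadsto$ under evaluation contexts. For each, I would extract the typing sub-derivation of the reducing subterm, apply the induction hypothesis, and reassemble the derivation using the same typing rule; the \emph{Validity} theorem (already established) guarantees that the intermediate types are well-sorted so that reassembly goes through, and in the dependent cases the conversion rule closes any gap introduced by the reduction step. The inversion (generation) steps used throughout rest on the confluence of $\Leadsto$, which ensures that conversion is consistent enough to read off the shape of each typing premise.

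The main obstacle is the program-level substitution lemma underlying the \textsc{BindElim} case. In the steadfast, linearly tracked setting of \TLLC{}, substitution must correctly merge both the channel context ($\Theta_1 \dotcup \Theta_2$) and the linear program context ($\Delta_1 \dotcup \Delta_2$) without duplicating or discarding any linear resource or channel --- precisely because the substituted value $v$ may itself carry channels drawn from $\Theta_1$, which then migrate into $n[v/x]$. Establishing this lemma requires a careful case analysis on the multiplicity $r$ of the bound variable and on how the contexts split across each typing rule, and it is here, rather than in the shallow structure of subject reduction itself, that the real work lies.
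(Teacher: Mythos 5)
Your overall strategy --- inversion plus the program-level substitution lemma for the redex cases, and induction-hypothesis-plus-reassembly with a conversion step for the congruence cases --- matches the paper's proof, which proceeds by induction on the typing derivation with case analysis on the step (the dual of your induction on the reduction; either orientation works). But there are two concrete problems. First, your opening claim that the empty-context statement ``cannot be proved directly by induction'' is wrong: the program reduction $\Leadsto$ is call-by-value and never descends under a binder (the bound term of a monadic \textsf{let} is reduced before the continuation is entered, and the body of a $\lambda$ is never reduced), so every subterm in evaluation position of a closed term is itself closed and the contexts $\Gamma$ and $\Delta$ remain empty throughout the induction. The paper relies on this: its key auxiliary fact --- that a step $m \Leadsto n$ on a well-typed term yields $m \simeq n$, which is what lets the conversion rule repair the type when a dependent argument position is reduced (e.g.\ $B[n/x] \simeq B[n'/x]$ in the \textsc{Step-Explicit-App$_2$} case) --- is only established for closed terms. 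Relatedly, the ingredients for the inversion and conversion steps are confluence of the \emph{logical} parallel reduction $\Rightarrow$ (for injectivity of type constructors) together with the containment of $\Leadsto$ in convertibility; confluence of $\Leadsto$ itself, which you cite, plays no role.

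Second, and more seriously, you cannot ``apply the program-level substitution lemma'' in the \textsc{BindElim} and $\beta$ cases as directly as you claim. The explicit substitution lemma carries side conditions $\Theta_2 \triangleright s$ and $\Delta_2 \triangleright s$ on the contexts typing the substituted term, where $s$ is the sort of the bound variable's type: a binder of unrestricted sort must not be instantiated with a term that captures linear channels or linear variables. Discharging these conditions is exactly where the value restriction of call-by-value reduction earns its keep: the paper proves a separate context-bound lemma stating that a \emph{value} $v$ with $\Theta ; \Gamma ; \Delta \vdash v : A$ and $\Gamma \vdash A : s$ satisfies $\Theta \triangleright s$ and $\Delta \triangleright s$, and only then invokes substitution. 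Your sketch locates all the difficulty inside the proof of the substitution lemma itself and omits this step entirely; without it the substitution lemma simply does not apply, and for a non-value bound term the required restriction can genuinely fail.
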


\paragraph{\textbf{Session Fidelity}}
The session fidelity theorem ensures that processes adhere to the communication
protocols specified by their types. This property guarantees that well-typed
processes will not encounter communication mismatches at runtime. Since we consider
processes up to structural congruence, we must first show that configuration
typing is preserved under structural congruence. This manifests as the following lemma.

\begin{lemma}[Congruence]
  Given $\Theta \Vdash P$ and $P \equiv Q$, we have $\Theta \Vdash Q$.
\end{lemma}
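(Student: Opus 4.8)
The plan is to proceed by induction on the derivation of $P \equiv Q$. Because structural congruence is the least \emph{congruence} relation generated by the displayed rules, it is reflexive, symmetric, and transitive, and is closed under the configuration constructors $\mid$ and $\scope{cd}{\cdot}$. To make the symmetry and transitivity cases go through cleanly, I would strengthen the statement to the biconditional ``for every $\Theta$, $\Theta \Vdash P$ if and only if $\Theta \Vdash Q$.'' The equivalence-closure cases then become routine: reflexivity is immediate, the symmetric case holds because a biconditional is symmetric, and the transitive case follows by chaining two biconditionals, each holding at every $\Theta$. For the compatibility cases --- for instance deriving $P \mid R \equiv P' \mid R$ from $P \equiv P'$ --- I would invert the \textsc{Par} (respectively \textsc{Scope}) rule on the typing derivation, apply the induction hypothesis to the congruent subconfiguration, and reassemble with the same rule, reusing the untouched subderivation and context split verbatim.

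It then remains to verify the six generating axioms, each as a biconditional obtained by inversion followed by reassembly. The commutativity axiom $P \mid Q \equiv Q \mid P$ and the associativity axiom reduce, after inverting \textsc{Par}, to commutativity and associativity of the context merge $\dotcup$. The unit axiom $P \mid \proc{\return{\ii}} \equiv P$ rests on the observation that $\proc{\return{\ii}}$ is typeable only at the empty channel context $\epsilon$: inverting \textsc{Expr} and \textsc{Return}, the value $\ii$ mentions no channels, and since a well-typed term's channel context records exactly its (linear, used-once) channels, that context must be $\epsilon$; hence $\Theta \dotcup \epsilon = \Theta$ and both sides type at the same $\Theta$. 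The two structural scope axioms $\scope{cd}{P} \equiv \scope{dc}{P}$ and $\scope{cd}{\scope{c'd'}{P}} \equiv \scope{c'd'}{\scope{cd}{P}}$ follow by inversion and reassembly, using that channel contexts are taken up to reordering, that $\scope{cd}{\cdot}$ connects a dual endpoint pair without privileging an orientation, and that distinct scopes bind disjoint channels by the freshness convention.

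The delicate case, which I expect to be the main obstacle, is scope extrusion $\scope{cd}{P} \mid Q \equiv \scope{cd}{(P \mid Q)}$, where the interaction between $\dotcup$ and the binder $\scope{cd}{\cdot}$ must be handled with care. In the forward direction I would invert \textsc{Par} to split $\Theta = \Theta_1 \dotcup \Theta_2$ with $\Theta_1 \Vdash \scope{cd}{P}$ and $\Theta_2 \Vdash Q$, then invert \textsc{Scope} to obtain $\Theta_1, c :_\Ln \CH{A}, d :_\Ln \HC{A} \Vdash P$. The crucial point is that the linear channel entries $c$ and $d$ are disjoint from $\Theta_2$, both because $\dotcup$ forces linear variables to be disjoint and because the freshness side condition gives $c, d \notin \FC{Q}$; thus $(\Theta_1, c, d) \dotcup \Theta_2$ is defined and equals $(\Theta_1 \dotcup \Theta_2), c, d = \Theta, c, d$, so \textsc{Par} followed by \textsc{Scope} yields $\Theta \Vdash \scope{cd}{(P \mid Q)}$. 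The reverse direction is the subtler half: inverting \textsc{Scope} and \textsc{Par} gives $\Theta, c, d = \Theta_1' \dotcup \Theta_2'$ with $\Theta_1' \Vdash P$ and $\Theta_2' \Vdash Q$, and I must argue that $c, d$ fall in the $P$-component $\Theta_1'$ rather than in $\Theta_2'$. Here I would invoke (or first establish) the invariant that the channel context of a well-typed configuration consists only of channels occurring free in it; since $c, d \notin \FC{Q}$, they cannot occur in $\Theta_2'$, so $\Theta_1' = \Theta_1'', c, d$ and the scope repartitions correctly, giving $\Theta_1'' \Vdash \scope{cd}{P}$ and hence $\Theta \Vdash \scope{cd}{P} \mid Q$. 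Pinning down this ``context consists of free channels'' invariant for the linear channel context is, to my mind, the real crux of the lemma.
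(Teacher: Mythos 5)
Your proposal is correct, and every per-axiom argument matches the paper's: the paper likewise reduces commutativity and associativity to the corresponding properties of $\dotcup$, discharges the unit axiom by showing (via inversion on \textsc{Expr}, \textsc{Return}, and the unit value) that $\proc{\return{\ii}}$ types only at the empty channel context, handles the two scope axioms by exchange on the channel context, and in the scope-extrusion case argues exactly as you do that the linear entries $c \tL \CH{A}, d \tL \HC{A}$ must land in the component of the split where the channels actually occur --- and, notably, the paper asserts this distribution fact in one line (``Channels $c$ and $d$ must be distributed [to] the typing judgment of $P$ as they are linear and do not appear in $Q$'') without isolating the free-channel invariant you correctly identify as the crux. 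The only genuine difference is the induction scaffold: the paper inducts on the typing derivation $\Theta \Vdash P$ and case-analyzes which congruence axiom applies, so each axiom appears twice (once per direction, distributed across the \textsc{Expr}, \textsc{Par}, and \textsc{Scope} cases), and the symmetry, transitivity, and compatibility closures of $\equiv$ are left implicit; your induction on the derivation of $P \equiv Q$ with the statement strengthened to a biconditional handles each axiom once, gets symmetry and transitivity for free, and makes the compatibility cases explicit. Your organization is arguably cleaner and more complete on the closure cases; the paper's buys slightly less bookkeeping per axiom at the cost of eliding the closure of the relation.
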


\noindent
The session fidelity theorem is then stated as follows.
\begin{theorem}[Session Fidelity]
  Given $\Theta \Vdash P$ and $P \Rrightarrow Q$, we have $\Theta \Vdash Q$.
\end{theorem}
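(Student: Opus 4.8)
The plan is to proceed by induction on the derivation of $P \Rrightarrow Q$, with a case analysis on the last reduction rule. The structural and congruence rules are routine. For \textsc{Proc-Par} and \textsc{Proc-Scope} I invert the configuration typing (via \textsc{Par} and \textsc{Scope}) to expose the reducing subconfiguration, apply the induction hypothesis to it, and reassemble with the same rule; the surrounding channel context is untouched. For \textsc{Proc-Congr} I transport typing across $\equiv$ using the Congruence lemma on both sides of the reduction, applying the induction hypothesis in between. The \textsc{Proc-Expr} case collapses to a single use of Subject Reduction: inverting \textsc{Expr} gives $\Theta ; \epsilon ; \epsilon \vdash m : \CM{\unit}$, reduction $m \Leadsto m'$ preserves this type, and \textsc{Expr} re-applies.

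The substance lies in the four communication rules, each of which fires a redex sitting at the head of one or two evaluation contexts $\mcM, \mcN$. The first ingredient is a \emph{context decomposition} lemma. Since evaluation contexts are built only from monadic \textsf{let}-bindings $\letin{x}{\mcM}{m}$ whose hole occupies the bound-expression position (hence outside the binder for $x$), repeatedly inverting \textsc{Bind} shows that $\Theta ; \epsilon ; \epsilon \vdash \mcM[e] : \CM{\unit}$ forces the hole to be typed under empty logical and program contexts, $\Theta_e ; \epsilon ; \epsilon \vdash e : \CM{A}$, with $\Theta$ splitting as $\Theta_N \dotcup \Theta_e$. The same argument runs in reverse: replacing $e$ by any $e'$ with $\Theta'_e ; \epsilon ; \epsilon \vdash e' : \CM{A}$ yields $\Theta' ; \epsilon ; \epsilon \vdash \mcM[e'] : \CM{\unit}$, where $\Theta'$ is $\Theta$ with the sub-context $\Theta_e$ replaced by $\Theta'_e$. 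This \emph{context replacement} property keeps the hole type $\CM{A}$ fixed while letting the channel context evolve.

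With decomposition in hand, each communication case follows one pattern. For \textsc{Proc-Com}, decomposing both processes isolates $\appR{\sendR{c}}{v}$ of type $\CM{\CH{B[v/x]}}$ and $\recvR{d}$ of type $\CM{\SigR{\Ln}{x : A}{\HC{B}}}$; inverting \textsc{Explicit-Send-CH} and \textsc{Explicit-Recv-HC} (with the application rule) recovers the shared protocol $!(x : A).B$ on the scoped channels and the witness $v : A$. After the step, $\return{c}$ has type $\CM{\CH{B[v/x]}}$ and $\return{\pairR{v}{d}{\Ln}}$ has type $\CM{\SigR{\Ln}{x : A}{\HC{B}}}$ — exactly the two hole types — while the scoped channels are retyped from $!(x : A).B$ to the continuation $B[v/x]$. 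Context replacement rebuilds both processes, and one application of \textsc{Scope}, with $c :_\Ln \CH{B[v/x]}$ and $d :_\Ln \HC{B[v/x]}$, reconnects them. The \textsc{Proc-\underline{Com}} case is identical except that $o$ is typed logically and the pair is implicit, matching \textsc{Implicit-Send-CH} and the dual implicit receive rule on $\HC{\cdot}$. The \textsc{Proc-End} case is the degenerate instance where the protocol is $\End$: the holes are retyped from $\close{c}, \wait{d}$ to $\return{\ii}$ of type $\CM{\unit}$, the channels $c, d$ are discarded, so \textsc{Scope} is removed and the residual processes are combined by \textsc{Par}. For \textsc{Proc-Fork}, inverting \textsc{Fork} gives $\Theta_e ; x : \CH{A} ; x :_\Ln \CH{A} \vdash m : \CM{\unit}$; a channel-substitution lemma (replacing the variable $x$ by a runtime channel typed by \textsc{Channel-CH}) yields $\Theta_e, d :_\Ln \CH{A} ; \epsilon ; \epsilon \vdash m[d/x] : \CM{\unit}$, while context replacement retypes the parent with $\return{c} : \CM{\HC{A}}$; \textsc{Par} followed by \textsc{Scope}, read up to $\scope{cd}{P} \equiv \scope{dc}{P}$ to align the dual endpoints, closes the case.

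The main obstacle is the interaction between the \emph{dependency} in protocol actions and the update of the channel context. I must verify that the substitution $[v/x]$ threaded through the sender's continuation $\CH{B[v/x]}$ agrees with the one threaded through the receiver's returned pair and its second component $\HC{B}$, so that the post-step endpoints $\CH{B[v/x]}$ and $\HC{B[v/x]}$ remain genuinely dual and the fresh \textsc{Scope} is derivable. This relies on the inversion lemmas for \textsc{Explicit-Send-CH} and \textsc{Explicit-Recv-HC} extracting a \emph{common} action $!(x : A).B$ (up to conversion) on the two endpoints, and on $v$ being the \emph{same} witness across both sides — a fact supplied by the shared redex in the premise of \textsc{Proc-Com}. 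The remaining bookkeeping is checking that context replacement respects $\dotcup$, so that linear channels stay disjoint and unrestricted ones contract, which is exactly what makes the reassembled configuration type-check.
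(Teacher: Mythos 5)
Your proposal is correct and follows essentially the same route as the paper's proof: induction on the reduction derivation, an evaluation-context decomposition/replacement lemma (the paper's Lemma~\ref{lemma:eval-context}) whose soundness rests on the \textsc{Bind} rule forbidding dependency on the bound value, inversion of the send/receive typing rules plus injectivity of protocol actions to align the two endpoints on a common action $!(x:A).B$, and reassembly via \textsc{Par}/\textsc{Scope} with the channels retyped to the continuation $B[v/x]$. The only detail you gloss over is that inverting \textsc{Scope} and the send/receive rules each yields two orientation sub-cases ($\CH{\cdot}$ vs.\ $\HC{\cdot}$), one of which must be discharged as contradictory against the channel context; this is routine bookkeeping that the paper carries out explicitly.
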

\noindent

One of the primary challenges in proving session fidelity is to show that typing
is preserved during communication steps, specifically in the \textsc{Proc-Com}, and
\textsc{Proc-\underline{Com}} cases. In these cases, the message being
communicated is transported from the sender to the receiver without the use of a
substitution. We need to show that the message, after communication, is
consistently typed with regards to the receiver's context. Unlike simple type
systems where one could simply place a value into any context so long as
the value has the expected type, dependent type systems require more care. For
instance, the evaluation context $\pairR{[\cdot]}{\Refl}{} : \Sigma(x : \textsf{nat}).(x = 1)$ 
is well-typed if and only if the hole is filled with $1$. To address this challenge, 
we design the monadic \textsc{Bind} rule (\Cref{sec:dependent-session-types})
to disallow dependency on the bound value. More specifically, for
$\letin{x}{m}{n}$ expressions, the type of $n$ \emph{cannot} depend on $x$. This
restriction means that $m$ can be replaced by any other expression of the same
type without affecting the type of $n$. Consider the \textsc{Proc-Com} step below:
\begin{align*}
  &\scope{cd}{(\proc{\letin{x}{\appR{\sendR{c}}{v}}{m}} \mid \proc{\letin{y}{\recvR{d}}{n}})} \\
  &\quad\Rrightarrow \scope{cd}{(\proc{\letin{x}{\return{c}}{m}} \mid \proc{\letin{y}{\return{\pairR{v}{d}{\Ln}}}{n}})}
\end{align*}
This operation is carried out between two singleton processes that are
evaluating monadic \bsf{let}-expressions. Due to the dependency restriction of
the \textsc{Bind} rule, we can replace $\appR{\sendR{c}}{v}$ with $\return{c}$
and $\recvR{d}$ with $\return{\pairR{v}{d}{\Ln}}$ without affecting the types of
$m$ and $n$. Due to the fact that all communication operations in \TLLC{} are
carried out on \bsf{let}-expressions, the dependency restriction ensures that
session fidelity holds during communication steps.

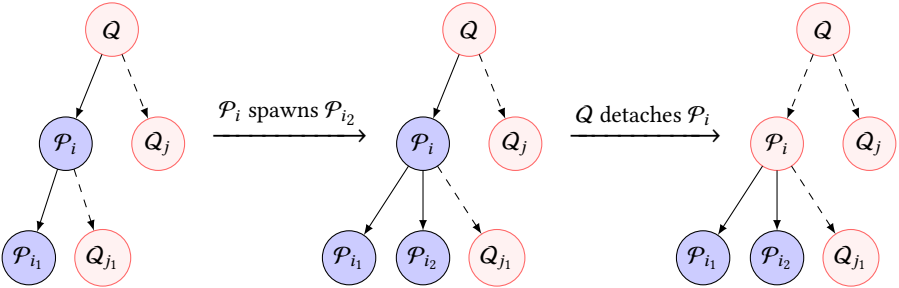
\begin{figure}[h]
\vspace{-0.5em}
\begin{tabular}[t]{c c c c c}
\begin{tikzpicture}
[
  node/.style = {shape=circle, solid,
                 minimum size=2em,
                 inner sep=2pt, outer sep=0pt,
                 draw, fill=blue!20},
  root/.style = {shape=circle, solid,
                 minimum size=2em,
                 inner sep=2pt, outer sep=0pt,
                 draw=red!60, fill=red!5},
  edge from parent/.style = {draw, -latex},
  level distance=1.5cm,
  level 1/.style={sibling distance=3.5em},
  level 2/.style={sibling distance=2.8em},
  baseline=(qj)
]
	\node[root] {\small$\mcQ$}
		child {
      node[node] (pi) {\small$\mcP_i$}
      child {node[node] {\small$\mcP_{i_1}$}}
      child[edge from parent/.style={draw,dashed,-latex}]{
        node[root] {\small$\mcQ_{j_1}$}
      }
    }
		child[edge from parent/.style={draw,dashed,-latex}] {
      node[root] (qj) {\small$\mcQ_j$}
    };
\end{tikzpicture}
&
{\Large$\xrightarrow{\mcP_i \text{ spawns } \mcP_{i_2}}$}
\hspace{-3em}
&
\begin{tikzpicture}
[
  node/.style = {shape=circle, solid,
                 minimum size=2em,
                 inner sep=2pt, outer sep=0pt,
                 draw, fill=blue!20},
  root/.style = {shape=circle, solid,
                 minimum size=2em,
                 inner sep=2pt, outer sep=0pt,
                 draw=red!60, fill=red!5},
  edge from parent/.style = {draw, -latex},
  level distance=1.5cm,
  level 1/.style={sibling distance=3.5em},
  level 2/.style={sibling distance=2.8em},
  baseline=(qj)
]
	\node[root] {\small$\mcQ$}
		child {
      node[node] (pi) {\small$\mcP_i$}
      child {node[node] {\small$\mcP_{i_1}$}}
      child {node[node] {\small$\mcP_{i_2}$}}
      child[edge from parent/.style={draw,dashed,-latex}]{
        node[root] {\small$\mcQ_{j_1}$}
      }
    }
		child[edge from parent/.style={draw,dashed,-latex}] {
      node[root] (qj) {\small$\mcQ_j$}
    };
\end{tikzpicture}
&
{\Large$\xrightarrow{\mcQ \text{ detaches } \mcP_i}$}
\hspace{-3em}
&
\begin{tikzpicture}
[
  node/.style = {shape=circle, solid,
                 minimum size=2em,
                 inner sep=2pt, outer sep=0pt,
                 draw, fill=blue!20},
  root/.style = {shape=circle, solid,
                 minimum size=2em,
                 inner sep=2pt, outer sep=0pt,
                 draw=red!60, fill=red!5},
  edge from parent/.style = {draw,-latex},
  level distance=1.5cm,
  level 1/.style={sibling distance=3.5em},
  level 2/.style={sibling distance=2.8em},
  baseline=(qj)
]
	\node[root] {\small$\mcQ$}
    child[edge from parent/.style={draw,dashed,-latex}]{
      node[root] (pi) {\small$\mcP_i$}
      child[edge from parent/.style = {draw,solid,-latex}] {
        node[node] {\small$\mcP_{i_1}$}
      }
      child[edge from parent/.style = {draw,solid,-latex}] {
        node[node] {\small$\mcP_{i_2}$}
      }
      child[edge from parent/.style={draw,dashed,-latex}]{
        node[root] {\small$\mcQ_{j_1}$}
      }
    }
		child[edge from parent/.style={draw,dashed,-latex}] {
      node[root] (qj) {\small$\mcQ_j$}
    };
\end{tikzpicture}
\end{tabular}
\caption{%
  Spawning Tree Transformations
  ({\color{blue!60}blue} for children, {\color{red!60}red} for detached sub-trees)
}
\label{fig:spawning-trees}
\vspace{-1em}
\end{figure}

\paragraph{\textbf{Global Progress}}
Global progress, i.e. deadlock-freedom, is a desirable property for concurrent programs.
Many session type systems~\cite{wadler12,caires10,das20} guarantee global progress by 
construction through a disciplined use of channels. However, there are also session 
type systems~\cite{honda93,honda16,ldst,balzer17} that eschew global progress in favor of
more expressive session types. \TLLC{} belongs to the latter category if we consider
arbitrary well-typed process configurations. This is because the process type system of
\TLLC{} does not prevent cyclic channel topologies that can lead to deadlocks.
However, we can prove a slightly weaker form of global progress for \TLLC{} by
considering only \emph{reachable} process configurations. 
Intuitively, reachable configurations are those that can be constructed
by \bsf{fork} operations starting from a singleton process. 
Formally, we define the structure of \emph{spawning trees} to capture the
spawning relationships between parent-to-children processes. 
This technique is inspired by the \emph{nested-multiverse semantics} for
reasoning about probabilistic session types~\cite{das23,fu_prob25} and
dominator trees from graph theory. The syntax of spawning trees is given below.

\begin{center}
  \vspace{0.5em}
  \begin{tabular}{r L C L}
    spawning tree & \mcP, \mcQ & ::= & \Root{m}{ \{ ( c_i, \mcP_i ) \}_{i \in \mcI}, \{ \mcQ_j \}_{j \in \mcJ} } \\
                  &            & \;| & \Node{d}{m}{ \{ ( c_i, \mcP_i ) \}_{i \in \mcI}, \{ \mcQ_j \}_{j \in \mcJ} }
  \end{tabular}
  \vspace{0.5em}
\end{center}

Each tree is associated with a term $m$ that performs computation and a set of
\emph{children} $\{ ( c_i, \mcP_i ) \}_{i \in \mcI}$ where $m$ communicates
with each child $\mcP_i$ through channel $c_i$. It also contains a set of
\emph{detached subtrees} $\{ \mcQ_j \}_{j \in \mcJ}$ that are dominated by $m$
but no longer in communication,
i.e. their connections to $m$ have been terminated by 
\Close{}/\Wait{}-operations.
In the case of internal nodes, $d$ is the channel
which $m$ uses to communicate with its parent. \Cref{fig:spawning-trees}
illustrates the spawning and detaching transformations of spawning trees
diagrammatically. The formal semantics of spawning trees is given in the
appendix. Basically, spawning trees are just process
configurations with disciplined channel topologies. 
This is evident in the flattening procedure $|\mcP| = P$, as defined below,
which transforms spawning tree $\mcP$ into a standard process configuration $P$. 

\vspace{-1em}
\begin{mathpar}\small
  \inferrule
  { \forall i \in \mcI,\ | \mcP_i | = (d_i, P_i) \\ 
    \forall j \in \mcJ,\ | \mcQ_j | = Q_j }
  { | \Root{m}{ \{ ( c_i, \mcP_i ) \}_{i \in \mcI}, \{ \mcQ_j \}_{j \in \mcJ} } | = 
    \scope{\overline{c_i d_i}}{(\proc{m} \mid \overline{P_i})}  
    \mid \overline{Q_j} }
  \textsc{(Flatten-Root)}

  \inferrule
  { \forall i \in \mcI,\ | \mcP_i | = (d_i, P_i) \\ 
    \forall j \in \mcJ,\ | \mcQ_j | = Q_j }
  { | \Node{d}{m}{ \{ ( c_i, \mcP_i ) \}_{i \in \mcI}, \{ \mcQ_j \}_{j \in \mcJ} } | = 
    (d, \scope{\overline{c_i d_i}}{(\proc{m} \mid \overline{P_i})} \mid \overline{Q_j}) }
  \textsc{(Flatten-Node)}
\end{mathpar}

To ensure that spawning trees are composed correctly,
we define the mutually inductive validity judgments
$\Vdash \mcQ$ and $\ch{\kappa}{A} \Vdash \mcP$ whose rules are shown below.

\vspace{-1em}
\begin{mathpar}\small
  \inferrule
  { \overline{c_i \tL \ch{\kappa_i}{A_i}} ; \epsilon ; \epsilon \vdash m :\CM{\unit} \\
    \forall i \in \mcI,\ \neg{\ch{\kappa_i}{A_i}} \Vdash \mcP_i  \\
    \forall j \in \mcJ,\ \Vdash \mcQ_j }
  { \Vdash \Root{m}{ \{ ( c_i, \mcP_i ) \}_{i \in \mcI}, \{ \mcQ_j \}_{j \in \mcJ} } }
  \textsc{(Valid-Root)}
  \\
  \inferrule
  {  \overline{c_i \tL \ch{\kappa_i}{A_i}}, d \tL \ch{\kappa}{A} ; \epsilon ; \epsilon \vdash m :\CM{\unit} \\
    \forall i \in \mcI,\ \neg{\ch{\kappa_i}{A_i}} \Vdash \mcP_i \\ 
    \forall j \in \mcJ,\ \Vdash \mcQ_j }
  { \ch{\kappa}{A} \Vdash \Node{d}{m}{ \{ ( c_i, \mcP_i ) \}_{i \in \mcI}, \{ \mcQ_j \}_{j \in \mcJ} } }
  \textsc{(Valid-Node)}
\end{mathpar}

\noindent
Here, we use $\ch{\kappa}{\cdot}$ to denote either $\CH{\cdot}$ or $\HC{\cdot}$
and $\neg{\ch{\kappa}{\cdot}}$ to denote its dual. 

For the root in rule \textsc{Valid-Root}, 
term $m$ must be well-typed under
context $\overline{c_i \tL \ch{\kappa_i}{A_i}}$ 
comprised of $\ch{\kappa_i}{A_i}$
channels connecting to its children $\mcP_i$.
The dual of each channel $\neg{\ch{\kappa_i}{A_i}}$ is propagated to type 
the corresponding child as $\neg{\ch{\kappa_i}{A_i}} \Vdash \mcP_i$.
All detached subtrees $\mcQ_j$ are also required to be valid.
When typing a node in rule \textsc{Valid-Node}, 
we require that $m$ be well-typed in a context
that includes $d \tL \ch{\kappa}{A}$, i.e. the channel that connects 
the node to its parent.

We now define the reachability of a configuration $P$ in terms of the
existence of a valid spawning tree that flattens to $P$.
The global progress theorem then uses reachability as an invariant.
\begin{definition}[Reachability]
  A configuration $P$ is \emph{reachable} if there exists a spawning tree $\mcP$
  such that $\Vdash \mcP$ and $|\mcP| = P$.
\end{definition}

\begin{theorem}[Global Progress]
  Given a reachable configuration $P$, either
  \begin{itemize}
    \item $P \equiv \proc{\return{\ii}}$, or
    \item there exists $Q$ such that $P \Rrightarrow Q$ and $Q$ is reachable.
  \end{itemize}
\end{theorem}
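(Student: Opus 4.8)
The plan is to reduce process-level progress to a term-level canonical-forms lemma and then exploit the acyclic topology that the spawning-tree invariant enforces. First I would establish a \emph{term progress} lemma: for any closed monadic term $\Theta ; \epsilon ; \epsilon \vdash m : \CM{\unit}$, exactly one of the following holds: $m = \return{\ii}$; $m \Leadsto m'$ for some $m'$; or $m = \mcN[o]$ where the redex $o$ is a $\Fork$, or a communication action ($\sendR{c}$ applied to a value, $\recvR{c}$, $\Close\ c$, $\Wait\ c$, or their underlined ghost variants) on a channel $c$ occurring in $\Theta$. This is proved by induction on the typing derivation, inspecting the head of $m$; the only nontrivial monadic case is \textsc{Bind}, where either the bound term is itself reducible or poised (and the evaluation context $\mcN$ extends) or it is a $\return{}$ (and \textsc{BindElim} fires). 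The crucial consequence of linearity is that $m = \return{\ii}$ forces $\Theta$ to contain no channels, since $\ii : \unit$ consumes nothing.

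Given a reachable $P$, I fix a valid spawning tree $\mcP$ with $\Vdash \mcP$ and $|\mcP| = P$ and proceed by induction on $\mcP$. Applying term progress to the root term $m$: if $m$ is reducible, \textsc{Proc-Expr} lifts to a step of $P$; if $m = \mcN[\fork{x : A}{m'}]$, then \textsc{Proc-Fork} steps $P$ and the spawning tree gains a fresh child (the spawning transformation of \Cref{fig:spawning-trees}); if $m = \return{\ii}$, then by the linearity observation the root has no children ($\mcI = \emptyset$), so $P \equiv \proc{\return{\ii}} \mid \overline{Q_j}$ over the flattened detached subtrees, and the induction hypothesis applied to each valid $\mcQ_j$ either makes one of them step or shows all are terminal, in which case repeated use of $P \mid \proc{\return{\ii}} \equiv P$ yields $P \equiv \proc{\return{\ii}}$. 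The remaining case is that $m$ is poised on a channel; since the root has no parent, that channel is some $c_i$ connecting to a child $\mcP_i$.

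To resolve the poised case I descend along $\mcP$. Examining the child $\mcP_i$, its root term $m_i$ is typed with the dual endpoint $d_i \tL \neg{\ch{\kappa_i}{A_i}}$ in scope, so linearity again excludes $m_i = \return{\ii}$; hence $m_i$ is reducible (step it), a $\Fork$ (step it), poised on one of its \emph{own} children (descend further), or poised on its parent channel $d_i$. In the last case the parent on the path is poised on $c_i$ and the child on $d_i$, and since $\ch{\kappa_i}{A_i}$ and $\neg{\ch{\kappa_i}{A_i}}$ give \emph{dual} interpretations of the \emph{same} protocol $A_i$, inversion on the channel-action typing rules of \Cref{sec:dependent-session-types} shows the two head actions are complementary, so exactly one of \textsc{Proc-Com}, \textsc{Proc-\underline{Com}}, or \textsc{Proc-End} fires. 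Because $\mcP$ is finite and every descent passes to a strictly smaller subtree, this search terminates and always delivers a step.

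Finally I would check that the reached $Q$ is again reachable by rebuilding a valid spawning tree in each case. Internal reductions reuse the same tree shape with one term advanced, and validity is preserved by Subject Reduction; a $\Fork$ attaches a new child whose two endpoints are dual by construction, matching \textsc{Valid-Node}; a communication step advances $A_i$ by one action simultaneously on both endpoints, so the dual bookkeeping of \textsc{Valid-Root}/\textsc{Valid-Node} is maintained; and a $\Close$/$\Wait$ synchronization severs the $c_i$–$d_i$ link and promotes $\mcP_i$ to a detached subtree re-rooted via \textsc{Valid-Root} (the detaching transformation of \Cref{fig:spawning-trees}). I expect the main obstacle to be precisely this last bookkeeping, namely re-establishing the spawning-tree invariant after communication and after detaching, since these are the steps that mutate the channel contexts and the tree topology; one must verify that the flattening $|\cdot|$ of the rebuilt tree coincides, up to structural congruence, with $Q$, which is where the \textsc{Congruence} lemma and Session Fidelity are invoked.
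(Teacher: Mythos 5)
Your overall architecture matches the paper's in spirit: both arguments hinge on a term-level progress/canonical-forms analysis, a traversal of the spawning tree that matches a poised parent with a poised child via duality, and re-establishment of the tree invariant after each step. The structural difference is that the paper factors the argument through a dedicated reduction semantics \emph{on spawning trees} ($\mcP \Rrightarrow \mcP'$), proving three separate lemmas (Spawning Tree Progress, Spawning Tree Fidelity, Spawning Tree Simulation) and then composing them, whereas you inline the simulation and rebuild the tree by hand after stepping the flattened configuration. Your descent along the chain of poised channels is a manual unrolling of the paper's mutual induction (root: ``terminal or steps''; node: ``poised on parent or steps''), and your observation that linearity forces $\Theta = \epsilon$ when the term is $\return{\ii}$ is exactly the paper's use of the unit inversion lemma. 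That part is sound.

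The genuine gap is in the tree-rebuilding step, precisely where you predicted trouble but did not supply the idea that resolves it: \emph{channel mobility}. Since channels are first-class linear values, a transmitted value $v$ (or the closure of a forked process) may contain channels $c_i$ connecting the sender to some of its children; after the step those children are no longer connected to the sender but to the receiver (or to the newly spawned process), so the subtrees indexed by $\mcI' = \{ i \mid c_i \in \FC{v} \}$ must be \emph{re-parented}, not merely have their protocol advanced. Your claim that ``a communication step advances $A_i$ by one action simultaneously on both endpoints, so the dual bookkeeping is maintained'' accounts only for the protocol index, not for this topological surgery; without it the rebuilt tree fails \textsc{Valid-Root}/\textsc{Valid-Node} (the moved channels would sit in the wrong node's context) and its flattening does not agree with $Q$ even up to congruence. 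The paper's \textsc{Root-Fork}, \textsc{Root-Send}, \textsc{Root-Recv} and their node variants all carry this $\mcI'$ re-parenting explicitly. Worse, there is a case your sketch cannot handle at all: a node may send its \emph{own parent channel} $d$ to a child ($d \in \FC{v}$), delegating its connection upward. Then no re-parenting of siblings suffices; the tree must be locally \emph{re-rooted}, with the receiving child becoming the new parent-facing node and the original sender demoted to one of its children. This is the paper's \textsc{Node-Forward} rule, and it is the reason the ordinary \textsc{Node-Send} rule carries the side condition $d \notin \FC{v}$. Until your rebuild handles $\mcI'$-re-parenting and the forwarding case, the ``$Q$ is reachable'' half of the theorem is not established.
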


Since a well-typed singleton process $\Vdash \proc{m}$ is reachable
by virtue of $| \Root{m}{\emptyset, \emptyset} | = \proc{m}$,
the global progress theorem tells us that the configurations it transitions
to must be reachable as well.
Thus we have deadlock freedom for process configurations that originated
from a singleton process.

\paragraph{\textbf{Erasure Safety}}
To show that ghost messages are safe to erase, we define an erasure relation
$\Theta ; \Gamma ; \Delta \vdash m \sim m' : A$. This relation states that all
ghost arguments and type annotations in $m$ are replaced by a special opaque 
value $\square$ in $m'$. This relation is similar to the one defined for
the erasure of \emph{propositions} in standard dependent type 
theories~\cite{barras08,letouzey03,sozeau20}.
The most important erasure rule is shown below. The full set of erasure
rules can be found in the appendix.
\begin{mathpar}
  \inferrule[Erase-Implicit-App]
  { \Theta ; \Gamma ; \Delta \vdash m \sim m' : \PiI{t}{x : A}{B} \\ 
    \Gamma \vdash n : A }
  { \Theta ; \Gamma ; \Delta \vdash \appI{m}{n} \sim \appI{m'}{\square} : B[n/x] }
\end{mathpar}

The \textsc{Erase-Implicit-App} rule states that when erasing an implicit
application $\appI{m}{n}$, the ghost argument $n$ is replaced by $\square$ in the
erased term. Consider the $\SendI{}\ c$ operator for sending ghost messages on channel 
$c$. As defined in \Cref{sec:dependent-session-types}, this partially applied operator 
has a type of the form $\PiI{\Ln}{x : A}{\CM{B}}$. When fully applied as $\appI{\SendI{}\ c}{n}$, 
the ghost argument $n$ is erased to $\square$ by \textsc{Erase-Implicit-App}.
Since $\square$ is an opaque value, it cannot be inspected or pattern matched on.
Thus, if programs can be evaluated soundly after erasing all ghost arguments
and type annotations, we can conclude that ghost messages are safe to erase.

The erasure relation is then naturally lifted to the process level as $\Theta \Vdash P \sim P'$
where $P'$ is the erased version of $P$. The rules for this relation are as follows:

\vspace{-1em}
\begin{small}
\begin{mathpar}
  \inferrule[Erase-Expr]
  { \Theta ; \epsilon ; \epsilon \vdash m \sim m' : \CM{\unit} }
  { \Theta \Vdash \proc{m} \sim \proc{m'} }

  \inferrule[Erase-Par]
  { \Theta_1 \Vdash P \sim P' \\ 
    \Theta_2 \Vdash Q \sim Q' }
  { \Theta_1 \dotcup \Theta_2 \Vdash (P \mid Q) \sim (P' \mid Q') }

  \inferrule[Erase-Scope]
  { \Theta, c :_\Ln \CH{A}, d :_\Ln \HC{A} \Vdash P \sim P' }
  { \Theta \Vdash \scope{cd}{P} \sim \scope{cd}{P'} }
\end{mathpar}
\end{small}

We show that erasure is safe through the following two theorems. These theorems
tell us that any possible reduction on an original object 
(either a term or process) can be simulated on its erased counterpart. 
Moreover, the erased object obtained after reduction also satisfies the erasure
relation with respect to the reduced original object.  Basically, these theorems
state that any possible evaluation path of the original object remains valid
after erasure.

\begin{theorem}[Term Simulation]
  Given $\Theta ; \epsilon ; \epsilon \vdash m \sim m' : A$ and $m \Leadsto n$, 
  there exists $n'$ such that 
  $m' \Leadsto n'$ and $\Theta ; \epsilon ; \epsilon \vdash n \sim n' : A$.
\end{theorem}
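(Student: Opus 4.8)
The plan is to proceed by induction on the derivation of the term reduction $m \Leadsto n$, performing inversion on the erasure derivation $\Theta ; \epsilon ; \epsilon \vdash m \sim m' : A$ at each step to expose the shape of $m'$. Because TLL's operational semantics is call-by-value, the reduction rules split into two families: congruence rules that fire inside an evaluation context, and the head-reduction rules (explicit and implicit $\beta$, the projection and eliminator rules, and the new \textsc{BindElim} rule). For the congruence cases, inverting the erasure relation shows that $m'$ shares the top-level constructor with $m$ and carries an erased subterm opposite the redex; I would apply the induction hypothesis to the sub-reduction and reassemble, relying on the fact that the erasure rules are syntax-directed on computational constructors.

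The head-reduction cases carry the real content, and each reduces to an instance of a substitution lemma for the erasure relation. For explicit $\beta$, $\appR{(\lamR{t}{x:A}{b})}{v} \Leadsto b[v/x]$, inversion yields $m' = \appR{(\lamR{t}{x:\square}{b'})}{v'}$ with $b \sim b'$ and $v \sim v'$; the erased term fires the same rule to $b'[v'/x]$, and closure under real substitution gives $b[v/x] \sim b'[v'/x]$, whose carried type $B[v/x]$ is exactly the preserved $A$. The \textsc{BindElim} case $\letin{x}{\return{v}}{b} \Leadsto b[v/x]$ is identical in spirit, again using real substitution on the computational value $v$. The implicit $\beta$ case $\appI{(\lamI{t}{x:A}{b})}{n} \Leadsto b[n/x]$ is the delicate one: here the argument $n$ is a ghost that \textsc{Erase-Implicit-App} has already replaced by $\square$, so the erased redex is $\appI{(\lamI{t}{x:\square}{b'})}{\square}$ and reduces to $b'[\square/x]$. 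I would invoke a ghost substitution lemma showing that substituting a logically-typed $n$ for the ghost variable $x$ on the original side matches substituting $\square$ on the erased side; since $x$ is ghost it never occurs in a computational position of $b$ and hence not at all in $b'$, so $b'[\square/x] = b'$ and the obligation collapses to $b[n/x] \sim b'$ at type $B[n/x]$.

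Two auxiliary facts feed this argument and should be proved first, each by induction on the erasure derivation. The first is that erasure preserves values: if $v$ is a value and $v \sim v'$ then $v'$ is a value (with the opaque $\square$ itself counted as a value); this guarantees that the erased side genuinely exhibits a call-by-value redex whenever the original side does, so that \emph{every} head-reduction can indeed be replayed. The second is the two-part substitution lemma just used — the real case $m[v/x] \sim m'[v'/x]$ and the ghost case $m[n/x] \sim m'[\square/x]$ — whose proof must thread the erasure of type annotations through dependent substitutions so that the carried types remain correct. The main obstacle is the ghost substitution lemma: one must argue, using the invariant that ghost variables are absent from $\Delta$ and therefore from every non-erased position, that replacing a ghost argument by the opaque $\square$ can never obstruct a reduction the original term could take. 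This is exactly the property that makes erasure safe, and it rests on the fact that program-level elimination forms only ever scrutinize real, computationally relevant values, never ghosts.
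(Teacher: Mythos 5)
Your proposal is correct and follows essentially the same route as the paper: the paper inducts on the erasure derivation and case-analyzes the reduction (rather than the reverse, as you do), but the substance is identical --- shape/value-preservation lemmas to expose the erased redex, then the explicit and implicit substitution lemmas for the erasure relation to close the $\beta$, eliminator, and \textsc{BindElim} cases. The only details you gloss over are the $\Theta_2 \triangleright s$ side condition needed to invoke the explicit substitution lemma (discharged in the paper via the program context bound lemma), and your observation that $b'[\square/x] = b'$ in the implicit case, which is harmless but unnecessary since the simulation target is simply $b'[\square/x]$.
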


\begin{theorem}[Process Simulation]
  Given $\Theta \Vdash P \sim P'$ and reduction $P \Rrightarrow Q$, there exists $Q'$ such that
  $P' \Rrightarrow Q'$ and $\Theta \Vdash Q \sim Q'$.
\end{theorem}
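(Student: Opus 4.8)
The plan is to proceed by induction on the derivation of the process reduction $P \Rrightarrow Q$, using the Term Simulation theorem to discharge the single-term case and reducing every communication step to how the term-level erasure relation decomposes over the evaluation contexts $\mcM, \mcN$. Before the induction I would establish three auxiliary facts. First, that erasure commutes with evaluation contexts: since the term-level erasure rules are syntax-directed congruences, if $\mcN[e]$ erases to a term $t$ then $t = \mcN'[e']$ where $\mcN'$ is the pointwise erasure of $\mcN$ and $e'$ is the erasure of the redex $e$; in particular a fork, send, recv, close, or wait redex erases to a redex of the same shape, with any type annotation or ghost argument replaced by $\square$. Second, that erasure is stable under substitution of runtime channels, i.e. $m \sim m'$ implies $m[c/x] \sim m'[c/x]$, since channels are never erased. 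Third, a congruence-compatibility lemma: if $\Theta \Vdash P \sim P'$ and $P \equiv P_1$, then there is $P_1'$ with $\Theta \Vdash P_1 \sim P_1'$ and $P' \equiv P_1'$, proved by induction on the structural-congruence derivation, matching each generating rule and closing under symmetry, transitivity, and the congruence rules.

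With these in hand the structural cases are routine. For \textsc{Proc-Expr}, where $P = \proc{m} \Rrightarrow \proc{m_1} = Q$ via $m \Leadsto m_1$, the hypothesis is inverted through \textsc{Erase-Expr} to $\Theta ; \epsilon ; \epsilon \vdash m \sim n : \CM{\unit}$, and the Term Simulation theorem supplies $n_1$ with $n \Leadsto n_1$ and $\Theta ; \epsilon ; \epsilon \vdash m_1 \sim n_1 : \CM{\unit}$; re-applying \textsc{Proc-Expr} and \textsc{Erase-Expr} closes the case. The \textsc{Proc-Par} and \textsc{Proc-Scope} cases invert \textsc{Erase-Par} and \textsc{Erase-Scope}, apply the induction hypothesis to the reducing subconfiguration, and reassemble with the same erasure rule; the channel context splits exactly as in the original derivation. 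The \textsc{Proc-Congr} case uses the congruence-compatibility lemma to transport the erasure across the leading $\equiv$, applies the induction hypothesis to the resulting reduction, and invokes the lemma once more on the trailing $\equiv$.

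The substantive cases are the four communication rules. For \textsc{Proc-Fork} I would decompose the erasure of $\proc{\mcN[\fork{x : \CH{A}}{m}]}$ via the context lemma into $\proc{\mcN'[\fork{x : \square}{m'}]}$ with $m \sim m'$; since the reduction ignores the erased annotation, the erased process forks on the same fresh channels $c, d$, and the residual $\scope{cd}{(\proc{\mcN'[\return{c}]} \mid \proc{m'[d/x]})}$ is related to $Q$ by \textsc{Erase-Scope} and \textsc{Erase-Par}, using channel-substitution stability for $m[d/x] \sim m'[d/x]$. \textsc{Proc-End} is analogous, with $\close{c}$ and $\wait{d}$ erasing to themselves and both sides continuing as $\return{\ii}$. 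For \textsc{Proc-Com} the real message $v$ erases to some $v'$, the erased endpoints take the matching step, and the transported $\pairR{v'}{d}{\Ln}$ is related to $\pairR{v}{d}{\Ln}$ by the erasure rule for explicit pairs. The \textsc{Proc-\underline{Com}} case is the one the theorem is really about: the ghost message $o$ erases to $\square$, so the erased configuration takes a \textsc{Proc-\underline{Com}} step communicating $\square$, and the received implicit pair $\pairI{\square}{d}{\Ln}$ is exactly the erasure of $\pairI{o}{d}{\Ln}$.

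The main obstacle is the same dependency subtlety that arises in Session Fidelity: in the two communication cases the message crosses the channel without a substitution, so re-establishing the erasure relation requires that the receiver's continuation remain typable once the transported value is placed into the hole of $\mcN'$. This is precisely where the dependency restriction of the \textsc{Bind} rule is essential — because the type of the continuation may not depend on the bound variable, the hole can be filled with the erased message (a real $v'$, or $\square$ for a ghost) of the same type without disturbing the downstream typing, so the component term-erasure judgments reassemble into $\Theta \Vdash Q \sim Q'$. Verifying this for $\square$ in particular, given that it is opaque and cannot be eliminated, is what ultimately certifies that ghost messages are safe to erase.
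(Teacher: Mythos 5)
Your proposal matches the paper's proof essentially step for step: induction on the derivation of $P \Rrightarrow Q$, with the same auxiliary machinery (an evaluation-context decomposition lemma for the erasure relation together with shape-preservation lemmas, stability of erasure under channel substitution, and a congruence-compatibility lemma for \textsc{Proc-Congr}), Term Simulation discharging \textsc{Proc-Expr}, and the communication cases reassembled by the same appeal to the \textsc{Bind} dependency restriction used in Session Fidelity. The treatment of the ghost case --- $\square$ being transported and received as $\pairI{\square}{d}{\Ln}$, the erasure of $\pairI{o}{d}{\Ln}$ --- is exactly what the paper does.
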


\section{Implementation and Evaluation}\label{sec:implementation}
\subsection{Implementation}
We implement a prototype compiler for \TLLC{}. The main components of the compiler
are written in OCaml while a minimalistic runtime library is implemented in C.
The compiler takes \TLLC{} source files as input and generates safe C code which
can be further compiled into executable binaries on POSIX compliant systems.
In this section, we give an overview of the inference, linearity checking
and optimization phases of the compiler.

\paragraph{\textbf{Inference}}
To reduce code duplication and type annotation burden, we implement two forms of inference: 
(1) automatic instantiation of \emph{sort-polymorphic schemes} similarly to the TLL compiler and 
(2) elaboration of inferred arguments.
Consider the identity function below:
\begin{align*}
  \Def\ \textsf{id}\flq{s}\frq\ \%\!\{ A : \textsf{Type}\flq{s}\frq\}\ (x : A) : A := x
\end{align*}
This function is a sort-polymorphic scheme as it is parameterized over sort variable $s$.
Depending on the universe of $A$, sort $s$ can be instantiated to either $\Ln$ for linear types
or $\Un$ for unrestricted types. This eliminates the need to define two separate identity functions
for linear and unrestricted types. The type $A$ here is marked by $\%$ to indicate that it is an
inferred argument. Suppose $\textsf{id}$ is applied to a natural number $42$. The compiler creates
two metavariables $\hat{s}$ and $\hat{\alpha}$ to represent the elided sort and type arguments respectively.
Type inference produces the following constraints:
\begin{align*}
  \textsf{id}\ 42 
  \ \ 
  \xRightarrow{\text{desugar}} 
  \ \ 
  \appI{\textsf{id}\flq{\hat{s}}\frq}{\hat{\alpha}}\ 42
  \ \ 
  \xRightarrow{\text{infer}} 
  \ \ 
  \begin{cases}
    \hat{s} = \Un \\
    \hat{\alpha} = \textsf{nat}
  \end{cases}
  \!\!
  \xRightarrow{\text{mono}} 
  \ \ 
  \appI{\textsf{id}\flq{\Un}\frq}{\textsf{nat}}\ 42
\end{align*}
Once the constraints are solved through unification~\cite{abel11}, the metavariables are
replaced by their solutions. The monomorphized code is then passed to the next phase for
linearity checking.

\paragraph{\textbf{Linearity Checking}}
During the inference phase, the usage of linear variables is not tracked. The
type checking algorithm essentially treats \TLLC{} as a fully structural type
system. It is only after all sort-polymorphic schemes and inferred arguments are
instantiated that the linearity checking begins. A substructural type checking
algorithm is applied to determine if the elaborated program compiles with the actual
typing rules of \TLLC{}. We adopt this two-phase approach to simplify the linearity
checking algorithm. Although sort-polymorphism greatly reduces code duplication from
the user's perspective, it also obfuscates the classification of types into linear
and unrestricted ones. Thus, it is much easier to check linearity after monomorphization.


\paragraph{\textbf{Optimization}}
Once linearity checking is complete, ghost terms are erased in a type directed
manner. The intermediate representation (IR) obtained from erasure carries
metadata that mark the linearity of certain critical expressions.

One of the optimizations performed is constructor unboxing. The layouts of inductive type
constructors are analyzed to determine if the inductive type is suitable for unboxing.
For example, consider the singleton type defined as follows:
\begin{align*}
  \Inductive\ \textsf{sing}\flq{s}\frq\ \%\!\{A : \textsf{Type}\flq{s}\frq\}\ (x : A) := 
  \textsf{Just} : \forall (x : A) \rightarrow \textsf{sing}\ x
\end{align*}
Here, \textsf{Just} is the only constructor of type \textsf{sing}. This means that
pattern matching on a value of type \textsf{sing} is redundant as there is only one possible case.
Expressions of the form $\textsf{Just}\ m$ are unboxed to $m$ to reduce the number of indirections
at runtime. In general, an inductive type can be unboxed if it has a single constructor and
the constructor has a single non-ghost field.

To reduce the time spent on allocating and deallocating heap objects, we utilize
in-place updates for linear values. This optimization is similar to recent works on
function in-place programming~\cite{lorenzen23,perceus} where allocated heap memory
is reused instead of being garbage collected. Unlike these works which utilize reference
counting to dynamically check the viability of an in-place update, the metadata in our 
IR is sufficient to statically determine if an in-place optimization is safe.

\subsection{Evaluation}
\paragraph{\textbf{Compilation Performance}}
\begin{table}[h]
  \caption{%
    Evaluation of \TLLC{} compiler.
    LOC = lines of code;
    Defs = number of definitions;
    SVars = number of sort meta-variables;
    IVars = number of inference meta-variables;
    Eqns = number of generated unification equations;
    Time = total compilation time;
    Mem = peak memory usage;
    LOC(C) = lines of generated C code.
  }
  \vspace{-0.8em}
  \begin{tabular}{l r r r r r r r r}
    \textbf{Program} & \textbf{LOC} & \textbf{Defs} & 
    \textbf{SVars} & \textbf{IVars} & \textbf{Eqns} & 
    \textbf{Time(s)} & \textbf{Mem(MB)} & \textbf{LOC(C)} \\
    \hline
    list            & 66  & 8  & 266 & 235 & 5368  & 0.03 & 17.71 & 1345 \\
    vector          & 38  & 9  & 148 & 162 & 2663  & 0.02 & 15.43 & 1313 \\
    additive pair   & 105 & 13 & 176 & 166 & 2994  & 0.02 & 15.59 & 1636 \\
    DH key-exchange & 62  & 7  & 111 & 126 & 2416  & 0.02 & 15.67 & 1305 \\
    RSA encryption  & 102 & 9  & 138 & 164 & 3906  & 0.04 & 20.56 & 1314 \\
    queue           & 107 & 17 & 232 & 237 & 10460 & 0.10 & 20.85 & 1772 \\
    nat induction   & 107 & 17 & 143 & 162 & 4279  & 0.03 & 17.05 & 1765 \\
    map-reduce      & 237 & 24 & 329 & 469 & 15456 & 0.10 & 22.72 & 2585 \\
    mergesort       & 132 & 13 & 223 & 177 & 4654  & 0.03 & 16.92 & 2252 \\
    key-value store & 149 & 20 & 185 & 405 & 27305 & 0.11 & 27.34 & 1803 \\
    \hline
  \end{tabular}
  \label{table:compiler-evaluation}
\end{table}

To evaluate the practicality of \TLLC{}, we implement a suite of example programs
that cover a wide range of applications of dependent session types. The programs
include implementations of common data structures such as lists and vectors,
cryptographic protocols such as Diffie-Hellman key exchange and RSA encryption, 
concurrent data structures such as queues and key-value stores, and
concurrent algorithms such as map-reduce and mergesort.

Our experiments are performed on a laptop with an Apple M4 Pro CPU and 24 GB
RAM. \Cref{table:compiler-evaluation} summarizes the results of our compiler
evaluation. We observe that even for complex programs such as the key-value
store, which involves heavy use of sort-polymorphic schemes and inferred
arguments, the total compilation time is only around 0.1 seconds with peak
memory usage under 30 MB. The generated C code is around 10 times the
size of the original \TLLC{} source code. Overall, these results demonstrate that
our compiler is efficient and scalable to non-trivial programs.

\paragraph{\textbf{Runtime Performance}}
We evaluate the runtime performance of \TLLC{} by implementing variants of the
mergesort algorithm. We implement three variants: (1) sequential on unrestricted lists, 
(2) sequential on linear lists, and (3) concurrent on linear lists.
We compare the performance of these three variants
with sequential implementations of mergesorts in OCaml and the
Rocq~\cite{coq} theorem prover (extracted to OCaml). 
\Cref{fig:runtime-evaluation} shows the results of our evaluation.

\vspace{-0.2em}
\begin{figure}[h]
\begin{subfigure}[t]{0.48\textwidth}
\centering
\begin{tikzpicture}[framed,scale=0.9]
  \begin{axis}[
    xbar,
    x = 2.2cm,
    y = 0.6cm,
    y axis line style = { opacity = 0 },
    axis x line       = none,
    tickwidth         = 0pt,
    ytick             = data,
    enlarge y limits  = 0.1,
    enlarge x limits  = 0.1,
    symbolic y coords = {Rocq, OCaml, ConL, SeqL, SeqU},
    nodes near coords,
  ]
  \addplot[color=black,fill=blue!20] coordinates {
    (0.84,SeqU) 
    (0.42,SeqL)
    (0.16,ConL) 
    (0.51,OCaml) 
    (1.87,Rocq)
  };
  \end{axis}
\end{tikzpicture}
\caption{Runtime performance (seconds)}
\end{subfigure}
\begin{subfigure}[t]{0.45\textwidth}
\centering
\begin{tikzpicture}[framed,scale=0.9]
  \begin{axis}[
    xbar,
    x = 0.012cm,
    y = 0.6cm,
    y axis line style = { opacity = 0 },
    axis x line       = none,
    tickwidth         = 0pt,
    ytick             = data,
    enlarge y limits  = 0.1,
    enlarge x limits  = 0.1,
    symbolic y coords = {Rocq, OCaml, ConL, SeqL, SeqU},
    nodes near coords,
  ]
  \addplot[color=black,fill=red!20] coordinates {
    (299.07,SeqU) 
    (82.59,SeqL)
    (205.32,ConL)
    (243.18,OCaml) 
    (382.27,Rocq)
  };
  \end{axis}
\end{tikzpicture}
\caption{Maximum memory usage (MB)}
\end{subfigure}
\vspace{-0.6em}
\caption{%
  Runtime performance comparison of variants of mergesort.
  SeqU  = sequential mergesort with unrestricted lists;
  SeqL  = sequential mergesort with linear lists;
  ConL  = concurrent mergesort with linear lists;
  OCaml = sequential mergesort in OCaml;
  Rocq  = sequential mergesort in Rocq.
}
\label{fig:runtime-evaluation}
\end{figure}
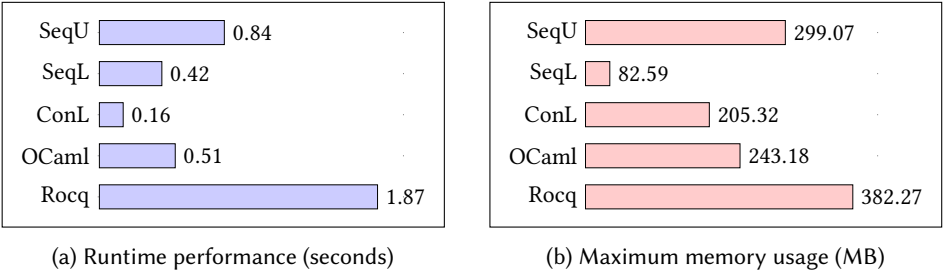
\vspace{-0.6em}

Compared to the OCaml implementation, our unrestricted mergesort is slower and
uses slightly more memory due to the lack of optimizations regarding garbage
collecting unrestricted types. However, our linear mergesorts outperform the
OCaml implementation in both runtime and memory usage thanks to the in-place update
optimizations for linear types. Rocq expectedly performs the worst as we
did not fine tune its extraction settings. Our concurrent mergesort on linear lists
achieves the best runtime performance among all variants. It exhibits a moderate
degree of memory overhead due to the use of concurrent data structures. Overall,
these results demonstrate that our compiler is capable of producing efficient
concurrent programs with competitive performance.

\section{Related Work}\label{sec:related}
Session types are a class of type systems pioneered by Honda~\cite{honda93} for
structuring dyadic communication in the $\pi$-calculus. Abramsky notices deep
connections between the Linear Logic~\cite{girard} of Girard and concurrency,
predicting that Linear Logic will play a foundation role in future theories of
concurrent computation~\cite{abramsky93,abramsky94}. Caires and Pfenning show an
elegant correspondence between session types and Linear Logic~\cite{caires10}.
Gay and Vasconcelos integrate session types with $\lambda$-calculus~\cite{gay10}
which allows one to express concurrent processes using standard functional
programming. Wadler further refines the calculus of Gay and Vasconcelos to be
deadlock free by construction~\cite{wadler12}.

Toninho together with Caires and Pfenning develops the first dependent session
type systems~\cite{toninho11,pfenning11}. These works extend the existing logic
of Caires and Pfenning~\cite{caires10} with universal and existential
quantifiers to precisely specify properties of communicated messages.

Toninho and Yoshida present an interesting language~\cite{toninho18} that
integrates both $\pi$-calculus style processes and $\lambda$-calculus style
terms using a contextual monad. Additionally, full $\lambda$-calculi are
embedded in both functional types and session types to enable large elimination.

Wu and Xi~\cite{wu17} implement session types in the ATS programming
language~\cite{ats} which supports DML style dependent types~\cite{dml}. This
allows them to specify the properties of concurrent programs and verify them
using proof automation. While DML style dependency is well suited for automatic
reasoning, certain properties can be difficult to encode due to restrictions on
the type level language.

Thiemann and Vasconcelos~\cite{ldst} introduce the LDST calculus which utilizes
label dependent session types to elegantly describe communication patterns.
Communication protocols written in non-dependent session type systems can
essentially be simulated through label dependency. On the other hand, LDST's
minimalist design limits its capabilities for general verification as label
dependency by itself is too weak to express many interesting program properties.

Das and Pfenning develop a refinement session type system~\cite{das20} where the
types of concurrent programs can be refined with logical predicates. Similarly
to DML style dependent types, the expressiveness of refinement session types is
intentionally limited to facilitate proof automation. The Martin-L\"{o}f style
dependent session types of \TLLC{} allow users to express and verify more
complex program properties at the cost of decidable proof automation.

Atkey proposes QTT~\cite{qtt} based on initial ideas of McBride~\cite{nothing}.
QTT is a dependent type theory which tracks resource usage through semi-ring
annotations on binders. By instantiating the semi-ring and its ordering relation
correctly, QTT can simulate linear types. The Idris 2 programming
language~\cite{idris2} (based on QTT) implements a session typed
DSL~\cite{brady21} around its raw communication primitives. The authors do not
formalize these session types or study its meta-theory. Unlike \TLLC{} where
a library provider could specify a type (such as channels) as linear and
automatically enforce its usage in client code through type checking, the
obligation of resource tracking is pushed to the client in QTT where binders
must be correctly annotated a priori. User mistakes in the annotations could
lead to resources being improperly tracked in a program despite passing type
checking. 

Hinrichsen et al. develop Actris~\cite{actris} which extends the
Iris~\cite{iris} separation logic framework with dependent separation protocols.
Compared to our work, Actris reasons about concurrent programs at a lower level
of abstraction. This gives it greater precision and flexibility when dealing
with imperative and unsafe programming features. However, the low level nature
of Actris reduces its effectiveness at providing guidance for writing programs.
In this regard, the interactivity of type systems is more beneficial to helping
users construct correct programs in the first place.

\section{Conclusion}\label{sec:conclusion}
\TLLC{} is a linear dependently typed programming language which extends the TLL
type theory with dependent session types. Through examples, we demonstrate how
dependent session types can be effectively applied to verify concurrent
programs. The expressive power of Martin-L\"{o}f style dependency allows \TLLC{}
session types to capture the expected semantics of concurrent programs. This
results in greater verification precision and flexibility when compared to other
type systems with more restricted forms of dependency. We study the meta-theory
of \TLLC{} and show that it is sound as both a term calculus and also as a
process calculus. A prototype compiler is implemented which compiles \TLLC{}
programs into safe concurrent C code.

A direction of research we intend to explore is the integration of dependency
with multi-party session types~\cite{honda16}. Protocols expressed through such
a session type system will be able to coordinate interactions between processes
from a global viewpoint. We predict dependency will again play a key role in
verifying the correctness of multi-party concurrent computation.

\bibliography{reference}
\clearpage

\begin{appendices}
\DoToC
\clearpage

\section{Syntax}
The full syntax of \TLLC{} is presented below.

\begin{figure}[H]
  \begin{tabular}{r l c l}
    variables & $x, y, z$   &     &                 \\
    channels  & $c, d$      &     &                 \\
    sorts     & $s, r, t$   & ::= & $\Un$ | $\Ln$   \\
    actions   & $\rho$      & ::= & $!$ | $?$       \\
    terms     & $m,n,A,B,C$ & ::= & $x$ | $c$ | $s$ \\
              &             & \;| & $\PiR{t}{x : A}{B}$ | $\PiI{t}{x : A}{B}$
                                    | $\SigR{t}{x : A}{B}$ | $\SigI{t}{x : A}{B}$ \\
              &             & \;| & $\lamR{t}{x : A}{m}$ | $\lamI{t}{x : A}{m}$
                                    | $\pairR{m}{n}{t}$ | $\pairI{m}{n}{t}$ \\
              &             & \;| & $\appR{m}{n}$ | $\appI{m}{n}$ | $\SigElim{[z]A}{m}{[x,y]n}$ | $\fix{x : A}{B}$ \\
              &             & \;| & $\unit$ | $\ii$ | $\Bool$ | $\bTrue$ | $\bFalse$
                                    | $\boolElim{[z]A}{m}{n_{1}}{n_{2}}$ \\
              &             & \;| & $\CM{A}$ | $\return{m}$ | $\letin{x}{m}{n}$ \\
              &             & \;| & $\Proto$ | $\End$
                                    | $\ActR{\rho}{x : A}{B}$ | $\ActI{\rho}{x : A}{B}$ | $\CH{A}$ | $\HC{A}$ \\
              &             & \;| & $\fork{x : A}{m}$ | $\recvR{m}$ | $\recvI{m}$
                                    | $\sendR{m}$ | $\sendI{m}$ \\
              &             & \;| & $\close{m}$ | $\wait{m}$ \\
    thunks    & $\tau$         & ::= & $\fork{x : A}{m}$ | $\recvR{c}$ | $\recvI{c}$  \\
              &             & \;| & $\appR{\sendR{c}}{v}$ | $\appI{\sendI{c}}{m}$ | $\close{c}$ | $\wait{c}$ \\
              &             & \;| & $\letin{x}{\tau}{m}$ \\ 
    values    & $u, v$      & ::= & $c$ | $\lamR{t}{x : A}{m}$ | $\lamI{t}{x : A}{m}$
                                    | $\pairR{u}{v}{t}$ | $\pairI{m}{v}{t}$ \\
              &             & \;| & $\ii$ | $\bTrue$ | $\bFalse$ | $\return{v}$ | $\tau$ | $\sendR{c}$ | $\sendI{c}$ \\
    process   & $O, P, Q$   & \;| & $\langle m \rangle$ | ($P \mid Q$) | $\scope{cd}{P}$
  \end{tabular}
\end{figure}

\section{Auxiliary Operators/Judgments}
In this section, we define several auxiliary operators/judgments used in the
formalization of \TLLC{}.

\paragraph{\textbf{Sort Ordering}}
The sort ordering relation $\sqsubseteq$ is defined as follows:
\begin{align*}
  (\textsc{Ord-$\Un$})\quad \Un \sqsubseteq s
  \qquad\qquad\qquad
  (\textsc{Ord-$\Ln$})\quad \Ln \sqsubseteq \Ln
\end{align*}
This relation is useful when defining the typing rules of dependent pairs by
ensuring that pairs only contain values of a lower or equal sort.

\paragraph{\textbf{Context Merge}}
The context merge operator $\dotcup$ is a partial function that combines two
contexts into one by selective applying the contraction rule on unrestricted
variables. The operator is undefined if both two contexts contain 
overlapping linear variables.
\begin{mathpar}
  \inferrule[Merge-Empty]
  { }
  { \epsilon \dotcup \epsilon = \epsilon }

  \inferrule[Merge-$\Un$]
  { \Delta_1 \dotcup \Delta_2 = \Delta \\
    x \notin \Delta }
  { (\Delta_1, x \tU A) \dotcup (\Delta_2, x \tU A) = (\Delta, x \tU A) }
  \\

  \inferrule[Merge-$\Ln_1$]
  { \Delta_1 \dotcup \Delta_2 = \Delta \\
    x \notin \Delta }
  { (\Delta_1, x \tL A) \dotcup \Delta_2 = (\Delta, x \tL A) }

  \inferrule[Merge-$\Ln_2$]
  { \Delta_1 \dotcup \Delta_2 = \Delta \\
    x \notin \Delta }
  { \Delta_1 \dotcup (\Delta_2, x \tL A) = (\Delta, x \tL A) }
\end{mathpar}

\paragraph{\textbf{Context Restriction}}
The context restriction operator $\triangleright$ is a predicate that is
useful for defining the typing rules of $\lambda$-expressions. In particular,
it prevents unrestricted functions from capturing linear variables in their closures.
\begin{mathpar}
  \inferrule
  { }
  { \epsilon \triangleright s }
  \textsc{(ReEmpty)}

  \inferrule
  { \Delta \triangleright \Un }
  { \Delta, x \tU A \triangleright \Un }
  \textsc{(Re-$\Un$)}

  \inferrule
  { \Delta \triangleright \Ln }
  { \Delta, x \ty{s} A \triangleright \Ln }
  \textsc{(Re-$\Ln$)}
\end{mathpar}

\paragraph{\textbf{Arity}}
For types $A$ and $X$, we say that $A$ is an \emph{arity} ending on $X$ if it
is either $X$ itself or a $\Pi$-type whose codomain is an arity ending on $X$.
Formally, we define the judgment $A~\arity{X}$ as follows:
\begin{mathpar}\small
  \inferrule[Arity-Base]
  { }
  { X~\arity{X} }

  \inferrule[Arity-Implicit]
  { B~\arity{X} }
  { \PiI{t}{x : A}{B}~\arity{X} }

  \inferrule[Arity-Explicit]
  { B~\arity{X} }
  { \PiR{t}{x : A}{B}~\arity{X} }
\end{mathpar}
This judgment is used for defining the typing rule of (parameterized) recursive protocols.

\paragraph{\textbf{Guarded}}
For variable $x$ and term $m$, we say that $x$ is \emph{guarded} in $m$ if
the judgment $m~\guard{x}$ is derivable. Intuitively, this means that every 
occurrence of $x$ in $m$ appears under a protocol action. This is important 
for ensuring that recursive protocols do not unfold indefinitely without
performing any actions. The judgment is defined as follows:

\begin{mathpar}\footnotesize
  \inferrule[Guard-Var]
  { x \neq y }
  { y~\guard{x} }

  \inferrule[Guard-Sort]
  { }
  { s~\guard{x} }

  \inferrule[Guard-Implicit-Fun]
  { A~\guard{x} \\ 
    B~\guard{x} }
  { \PiI{t}{x' : A}{B}~\guard{x} }

  \inferrule[Guard-Explicit-Fun]
  { A~\guard{x} \\ 
    B~\guard{x} }
  { \PiR{t}{x' : A}{B}~\guard{x} }

  \inferrule[Guard-Implicit-Lam]
  { A~\guard{x} \\ 
    m~\guard{x} }
  { \lamI{t}{x' : A}{m}~\guard{x} }

  \inferrule[Guard-Explicit-Lam]
  { A~\guard{x} \\ 
    m~\guard{x} }
  { \lamR{t}{x' : A}{m}~\guard{x} }

  \inferrule[Guard-Implicit-App]
  { m~\guard{x} \\ 
    n~\guard{x} }
  { \appI{m}{n}~\guard{x} }

  \inferrule[Guard-Explicit-App]
  { m~\guard{x} \\ 
    n~\guard{x} }
  { \appR{m}{n}~\guard{x} }

  \inferrule[Guard-Implicit-Sum]
  { A~\guard{x} \\ 
    B~\guard{x} }
  { \SigI{t}{x' : A}{B}~\guard{x} }

  \inferrule[Guard-Explicit-Sum]
  { A~\guard{x} \\ 
    B~\guard{x} }
  { \SigR{t}{x' : A}{B}~\guard{x} }

  \inferrule[Guard-Implicit-Pair]
  { m~\guard{x} \\ 
    n~\guard{x} }
  { \pairI{m}{n}{t}~\guard{x} }
  \and\hspace{-0.5em}
  \inferrule[Guard-Explicit-Pair]
  { m~\guard{x} \\ 
    n~\guard{x} }
  { \pairR{m}{n}~\guard{x} }
  \and\hspace{-0.5em}
  \inferrule[Guard-Explicit-SumElim]
  { A~\guard{x} \\ 
    m~\guard{x} \\
    n~\guard{x} }
  { \SigElim{[z]A}{m}{[x',y']n} }

  \inferrule[Guard-Unit]
  { }
  { \unit~\guard{x} }

  \inferrule[Guard-UnitVal]
  { }
  { \ii~\guard{x} }

  \inferrule[Guard-Bool]
  { }
  { \Bool~\guard{x} }

  \inferrule[Guard-True]
  { }
  { \bTrue~\guard{x} }

  \inferrule[Guard-False]
  { }
  { \bFalse~\guard{x} }

  \inferrule[Guard-BoolElim]
  { A~\guard{x} \\
    m~\guard{x} \\ 
    n_1~\guard{x} \\
    n_2~\guard{x} }
  { \boolElim{[z]A}{m}{n_{1}}{n_{2}}~\guard{x} }

  \inferrule[Guard-$\C$Type]
  { A~\guard{x} }
  { \CM{A}~\guard{x} }

  \inferrule[Guard-Return]
  { m~\guard{x} }
  { \return{m}~\guard{x} }

  \inferrule[Guard-Bind]
  { m~\guard{x} \\ 
    n~\guard{x} }
  { \letin{x'}{m}{n}~\guard{x} }

  \inferrule[Guard-Proto]
  { }
  { \Proto~\guard{x} }

  \inferrule[Guard-End]
  { }
  { \End~\guard{x} }

  \inferrule[Guard-RecProto]
  { A~\guard{x} \\ 
    m~\guard{x} }
  { \fix{x' : A}{m}~\guard{x} }

  \inferrule[Guard-Implicit-Action]
  { A~\guard{x} }
  { \ActI{\rho}{x' : A}{B}~\guard{x} }

  \inferrule[Guard-Explicit-Action]
  { A~\guard{x} }
  { \ActR{\rho}{x' : A}{B}~\guard{x} }

  \inferrule[Guard-CH]
  { A~\guard{x} }
  { \CH{A}~\guard{x} }

  \inferrule[Guard-HC]
  { A~\guard{x} }
  { \HC{A}~\guard{x} }

  \inferrule[Guard-Channel]
  { }
  { c~\guard{x} }

  \inferrule[Guard-Fork]
  { A~\guard{x} \\ 
    m~\guard{x} }
  { \fork{x': A}{m}~\guard{x} }

  \inferrule[Guard-Implicit-Recv]
  { m~\guard{x} }
  { \recvI{m}~\guard{x} }

  \inferrule[Guard-Explicit-Recv]
  { m~\guard{x} }
  { \recvR{m}~\guard{x} }

  \inferrule[Guard-Implicit-Send]
  { m~\guard{x} }
  { \sendI{m}~\guard{x} }

  \inferrule[Guard-Explicit-Send]
  { m~\guard{x} }
  { \sendR{m}~\guard{x} }

  \inferrule[Guard-Wait]
  { m~\guard{x} }
  { \wait{m}~\guard{x} }

  \inferrule[Guard-Close]
  { m~\guard{x} }
  { \close{m}~\guard{x} }
\end{mathpar}
\clearpage

\section{Formal Typing Rules}
In this section, we present the full typing rules of \TLLC{}. We organize the
typing rules into logical level, program level and process level.

\subsection{Logical Level}
The typing judgment for the logical level has the form $\Gamma \vdash m : A$.
This judgment states that under the \emph{logical context} $\Gamma$, term $m$
has type $A$. The logical level is completely \emph{structural}.

\paragraph{\textbf{Logical Context}}
The logical context $\Gamma$ is a sequence of variable bindings of the form 
$x_0 : A_0, x_1 : A_1, \dots, x_n : A_n$. Each variable $x_i$ is bound to a type $A_i$.
Variables in the logical context are unrestricted and can be used arbitrarily many
times. The empty context is denoted by $\epsilon$. To ensure the validity of types
in the logical context, we define the context validity judgment $\Gamma \vdash $.
\begin{mathpar}
  \inferrule[Ctx-Empty]
  { }
  { \epsilon \vdash }

  \inferrule[Ctx-Var]
  { \Gamma \vdash \quad \Gamma \vdash A : s \\
    x \notin \Gamma }
  { \Gamma, x : A \vdash }
\end{mathpar}
Note that the context validity judgment is \emph{mutually inductively} defined with the typing judgment. 

\paragraph{\textbf{Core Typing}}
The core typing rules is responsible for the functional fragment of \TLLC{}.
The convertibility relation $A \simeq B$ is used in the conversion rule to
allow type equivalence up to $\beta$-reduction. We will present the definition
of the convertibility relation in \Cref{appendix:logical-semantics}. 
\begin{mathpar}\footnotesize
  \inferrule[Sort]
  { \Gamma \vdash }
  { \Gamma \vdash s : \Un }

  \inferrule[Var]
  { \Gamma, x : A \vdash }
  { \Gamma, x : A \vdash x : A }

  \inferrule[Conversion]
  { \Gamma \vdash B : s \\
    \Gamma \vdash m : A \\
    A \simeq B }
  { \Gamma \vdash m : B }

  \inferrule[Explicit-Fun]
  { \Gamma \vdash A : s \\
    \Gamma, x : A \vdash B : r }
  { \Gamma \vdash \PiR{t}{x : A}{B} : t }

  \inferrule[Implicit-Fun]
  { \Gamma \vdash A : s \\
    \Gamma, x : A \vdash B : r }
  { \Gamma \vdash \PiI{t}{x : A}{B} : t }

  \inferrule[Explicit-Lam]
  { \Gamma, x : A \vdash m : B }
  { \Gamma \vdash \lamR{t}{x : A}{m} : \PiR{t}{x : A}{B} }

  \inferrule[Implicit-Lam]
  { \Gamma, x : A \vdash m : B }
  { \Gamma \vdash \lamI{t}{x : A}{m} : \PiI{t}{x : A}{B} }

  \inferrule[Explicit-App]
  { \Gamma \vdash m : \PiR{t}{x : A}{B} \\
    \Gamma \vdash n : A }
  { \Gamma \vdash \appR{m}{n} : B[n/x] }

  \inferrule[Implicit-App]
  { \Gamma \vdash m : \PiI{t}{x : A}{B} \\
    \Gamma \vdash n : A }
  { \Gamma \vdash \appI{m}{n} : B[n/x] }

  \inferrule[Explicit-Sum]
  { s \sqsubseteq t \\ r \sqsubseteq t \\
    \Gamma \vdash A : s \\
    \Gamma, x : A \vdash B : r }
  { \Gamma \vdash \SigR{t}{x : A}{B} : t }

  \inferrule[Implicit-Sum]
  { r \sqsubseteq t \\
    \Gamma \vdash A : s \\
    \Gamma, x : A \vdash B : r }
  { \Gamma \vdash \SigI{t}{x : A}{B} : t }

  \inferrule[Explicit-Pair]
  { \Gamma \vdash \SigR{t}{x : A}{B} : t \\
    \Gamma \vdash m : A \\
    \Gamma \vdash n : B[m/x] }
  { \Gamma \vdash \pairR{m}{n}{t} : \SigR{t}{x : A}{B} }

  \inferrule[Implicit-Pair]
  { \Gamma \vdash \SigI{t}{x : A}{B} : t \\
    \Gamma \vdash m : A \\
    \Gamma \vdash n : B[m/x] }
  { \Gamma \vdash \pairI{m}{n}{t} : \SigI{t}{x : A}{B} }

  \inferrule[Explicit-SumElim]
  { \Gamma, z : \SigR{t}{x : A}{B} \vdash C : s \\
    \Gamma \vdash m : \SigR{t}{x : A}{B} \\
    \Gamma, x : A, y : B \vdash n : C[\pairR{x}{y}{t}/z] }
  { \Gamma \vdash \SigElim{[z]C}{m}{[x,y]n} : C[m/z] }

  \inferrule[Implicit-SumElim]
  { \Gamma, z : \SigI{t}{x : A}{B} \vdash C : s \\
    \Gamma \vdash m : \SigI{t}{x : A}{B} \\
    \Gamma, x : A, y : B \vdash n : C[\pairI{x}{y}{t}/z] }
  { \Gamma \vdash \SigElim{[z]C}{m}{[x,y]n} : C[m/z] }
\end{mathpar}
\clearpage

\paragraph{\textbf{Data Typing}}
The data typing rules govern the typing of base types such as the unit type and
the boolean type. The rules are presented below.
\begin{mathpar}\small
  \inferrule[Unit]
  { \Gamma \vdash }
  { \Gamma \vdash \unit : \Un }

  \inferrule[UnitVal]
  { \Gamma \vdash }
  { \Gamma \vdash \ii : \unit }

  \inferrule[Bool]
  { \Gamma \vdash }
  { \Gamma \vdash \Bool : \Un }

  \inferrule[True]
  { \Gamma \vdash }
  { \Gamma \vdash \bTrue : \Bool }

  \inferrule[False]
  { \Gamma \vdash }
  { \Gamma \vdash \bFalse : \Bool }\\

  \inferrule[BoolElim]
  { \Gamma, z : \Bool \vdash A : s \\
    \Gamma \vdash m : \Bool \\
    \Gamma \vdash n_1 : A[\bTrue/z] \\
    \Gamma \vdash n_2 : A[\bFalse/z] }
  { \Gamma \vdash \boolElim{[z]A}{m}{n_1}{n_2} : A[m/z] }\\
\end{mathpar}

\paragraph{\textbf{Monadic Typing}}
The monadic typing rules govern the composition of monadic computations.
The standard rules for monadic return and bind are presented below.
\begin{mathpar}\small
  \inferrule[$\C$Type]
  { \Gamma \vdash A : s }
  { \Gamma \vdash \CM{A} : \Ln }

  \inferrule[Return]
  { \Gamma \vdash m : A }
  { \Gamma \vdash \return{m} : \CM{A} }

  \inferrule[Bind]
  { \Gamma \vdash B : s \\
    \Gamma \vdash m : \CM{A} \\
    \Gamma, x : A \vdash n : \CM{B} }
  { \Gamma \vdash \letin{x}{m}{n} : \CM{B} }\\
\end{mathpar}

\paragraph{\textbf{Session Typing}}
The session typing rules govern the typing of protocol, channels and concurrency
primitives. The rules are presented below.
\begin{mathpar}\footnotesize
  \inferrule[Proto]
  { \Gamma \vdash }
  { \Gamma \vdash \Proto : \Un }

  \inferrule[End]
  { \Gamma \vdash }
  { \Gamma \vdash \End : \Proto }

  \inferrule[Explicit-Action]
  { \Gamma, x : A \vdash B : \Proto }
  { \Gamma \vdash \ActR{\rho}{x : A}{B} : \Proto }

  \inferrule[Implicit-Action]
  { \Gamma, x : A \vdash B : \Proto }
  { \Gamma \vdash \ActI{\rho}{x : A}{B} : \Proto }

  \inferrule[RecProto]
  { \Gamma, x : A \vdash m : A \\
    A~\arity{\Proto} \\ 
    m~\guard{x} }
  { \Gamma \vdash \fix{x : A}{m} : A }

  \inferrule[ChType]
  { \Gamma \vdash A : \Proto }
  { \Gamma \vdash \CH{A} : \Ln }

  \inferrule[HcType]
  { \Gamma \vdash A : \Proto }
  { \Gamma \vdash \HC{A} : \Ln }

  \inferrule[Channel-CH]
  { \Gamma \vdash \\
    \epsilon \vdash A : \Proto }
  { \Gamma \vdash c : \CH{A} }

  \inferrule[Channel-HC]
  { \Gamma \vdash \\
    \epsilon \vdash A : \Proto }
  { \Gamma \vdash c : \HC{A} }

  \inferrule[Explicit-Send-CH]
  { \Gamma \vdash m : \CH{\ActR{!}{x : A}{B}} }
  { \Gamma \vdash \sendR{m} : \PiR{\Ln}{x : A}{\CM{\CH{B}}} }

  \inferrule[Explicit-Send-HC]
  { \Gamma \vdash m : \HC{\ActR{?}{x : A}{B}} }
  { \Gamma \vdash \sendR{m} : \PiR{\Ln}{x : A}{\CM{\HC{B}}} }

  \inferrule[Implicit-Send-CH]
  { \Gamma \vdash m : \CH{\ActI{!}{x : A}{B}} }
  { \Gamma \vdash \sendI{m} : \PiI{\Ln}{x : A}{\CM{\CH{B}}} }

  \inferrule[Implicit-Send-HC]
  { \Gamma \vdash m : \HC{\ActI{?}{x : A}{B}} }
  { \Gamma \vdash \sendI{m} : \PiI{\Ln}{x : A}{\CM{\HC{B}}} }

  \inferrule[Explicit-Recv-CH]
  { \Gamma \vdash m : \CH{\ActR{?}{x : A}{B}} }
  { \Gamma \vdash \recvR{m} : \CM{\SigR{\Ln}{x : A}{\CH{B}}} }

  \inferrule[Explicit-Recv-HC]
  { \Gamma \vdash m : \HC{\ActR{!}{x : A}{B}} }
  { \Gamma \vdash \recvR{m} : \CM{\SigR{\Ln}{x : A}{\HC{B}}} }

  \inferrule[Implicit-Recv-CH]
  { \Gamma \vdash m : \CH{\ActI{?}{x : A}{B}} }
  { \Gamma \vdash \recvI{m} : \CM{\SigI{\Ln}{x : A}{\CH{B}}} }

  \inferrule[Implicit-Recv-HC]
  { \Gamma \vdash m : \HC{\ActI{!}{x : A}{B}} }
  { \Gamma \vdash \recvI{m} : \CM{\SigI{\Ln}{x : A}{\HC{B}}} }

  \inferrule[Fork]
  { \Gamma, x : \CH{A} \vdash m : \CM{\unit} }
  { \Gamma \vdash \fork{x : \CH{A}}{m} : \CM{\HC{A}} }

  \inferrule[Close]
  { \Gamma \vdash m : \CH{\End} }
  { \Gamma \vdash \close{m} : \CM{\unit} }

  \inferrule[Wait]
  { \Gamma \vdash m : \HC{\End} }
  { \Gamma \vdash \wait{m} : \CM{\unit} }
\end{mathpar}
\clearpage

\subsection{Program Level}\label{appendix:program-typing}
The typing judgment for the program level has the form $\Theta ; \Gamma ; \Delta \vdash m : A$.
This judgment states that under the channel context $\Theta$, logical context $\Gamma$ and the
\emph{program context} $\Delta$, term $m$ has type $A$. The program level is \emph{substructural}
as the usage of variables in the program context is tracked.

\paragraph{\textbf{Program Context}}
The program context $\Delta$ is a sequence of variable bindings of the form
$x_0 :_{s_0} A_0, x_1 :_{s_1} A_1, \dots, x_n :_{s_n} A_n$. 
Each variable $x_i$ is bound to a type $A_i$ with a sort annotation $s_i$.
The variables in the program context are allowed to appear in computationally
relevant positions inside $m$. To ensure that all types appear in the program
context are well-formed, we define the program context validity judgment
$\Gamma ; \Delta \vdash $. The rules for this judgment are presented below.
\begin{mathpar}\small
  \inferrule[Ctx-Empty]
  { }
  { \epsilon ; \epsilon \vdash }

  \inferrule[Ctx-Implicit-Var]
  { \Gamma ; \Delta \vdash \\
    \Gamma \vdash A : s \\
    x \notin \Gamma }
  { \Gamma, x : A ; \Delta \vdash }

  \inferrule[Ctx-Explicit-Var]
  { \Gamma ; \Delta \vdash \\
    \Gamma \vdash A : s \\
    x \notin \Gamma }
  { \Gamma, x : A ; \Delta, x \ty{s} A \vdash }
\end{mathpar}
From these rules we can see that $\text{dom}(\Delta)$ is a subset of
$\text{dom}(\Gamma)$. Additionally, the sort annotation $s$ in each program
context binding $x \ty{s} A$ is the sort of the associated $A$ type.

\paragraph{\textbf{Core Typing}}
The core typing rules is responsible for the functional fragment of \TLLC{}.
\begin{mathpar}\footnotesize
  \inferrule[Var]
  { \epsilon ; \Gamma, x : A ; \Delta, x \ty{s} A \vdash \\
    \Delta \triangleright \Un }
  { \epsilon ; \Gamma, x : A ; \Delta, x \ty{s} A \vdash x : A }

  \inferrule[Conversion]
  { \Gamma \vdash B : s \\
    \Theta ; \Gamma ; \Delta \vdash m : A \\
    A \simeq B }
  { \Theta ; \Gamma ; \Delta \vdash m : B }

  \inferrule[Explicit-Lam]
  { \Theta ; \Gamma, x : A; \Delta, x \ty{s} A \vdash m : B \\
    \Theta \triangleright t \\
    \Delta \triangleright t }
  { \Theta ; \Gamma ; \Delta \vdash \lamR{t}{x : A}{m} : \PiR{t}{x : A}{B} }

  \inferrule[Implicit-Lam]
  { \Theta ; \Gamma, x : A; \Delta \vdash m : B \\
    \Theta \triangleright t \\
    \Delta \triangleright t }
  { \Theta ; \Gamma ; \Delta \vdash \lamI{t}{x : A}{m} : \PiI{t}{x : A}{B} }

  \inferrule[Explicit-App]
  { \Theta_1 ; \Gamma ; \Delta_1 \vdash m : \PiR{t}{x : A}{B} \\
    \Theta_2 ; \Gamma ; \Delta_2 \vdash n : A }
  { \Theta_1 \dotcup \Theta_2 ; \Gamma ; \Delta_1 \dotcup \Delta_2 \vdash \appR{m}{n} : B[n/x] }

  \inferrule[Implicit-App]
  { \Theta ; \Gamma ; \Delta \vdash m : \PiI{t}{x : A}{B} \\
    \Gamma \vdash n : A }
  { \Theta ; \Gamma ; \Delta \vdash \appI{m}{n} : B[n/x] }

  \inferrule[Explicit-Pair]
  { \Gamma \vdash \SigR{t}{x : A}{B} : t \\\\
    \Theta_1 ; \Gamma ; \Delta_1 \vdash m : A \\
    \Theta_2 ; \Gamma ; \Delta_2 \vdash n : B[m/x] }
  { \Theta_1 \dotcup \Theta_2 ; \Gamma ; \Delta_1 \dotcup \Delta_2 \vdash \pairR{m}{n}{t} : \SigR{t}{x : A}{B} }

  \inferrule[Implicit-Pair]
  { \Gamma \vdash \SigI{t}{x : A}{B} : t \\\\
    \Gamma \vdash m : A \\
    \Theta ; \Gamma ; \Delta \vdash n : B[m/x] }
  { \Theta ; \Gamma ; \Delta \vdash \pairI{m}{n}{t} : \SigI{t}{x : A}{B} }

  \inferrule[Explicit-SumElim]
  { \Gamma, z : \SigR{t}{x : A}{B} \vdash C : s \\
    \Theta_1 ; \Gamma ; \Delta_1 \vdash m : \SigR{t}{x : A}{B} \\\\
    \Theta_2 ; \Gamma, x : A, y : B; \Delta_2, x \ty{r1} A, y \ty{r2} B \vdash n : C[\pairR{x}{y}{t}/z] }
  { \Theta_1 \dotcup \Theta_2 ; \Gamma ; \Delta_1 \dotcup \Delta_2 \vdash \SigElim{[z]C}{m}{[x,y]n} : C[m/z] }

  \inferrule[Implicit-SumElim]
  { \Gamma, z : \SigI{t}{x : A}{B} \vdash C : s \\
    \Theta_1 ; \Gamma ; \Delta_1 \vdash m : \SigI{t}{x : A}{B} \\\\
    \Theta_2 ; \Gamma, x : A, y : B; \Delta_2, y \ty{r} B \vdash n : C[\pairI{x}{y}{t}/z] }
  { \Theta_1 \dotcup \Theta_2 ; \Gamma ; \Delta_1 \dotcup \Delta_2 \vdash \SigElim{[z]C}{m}{[x,y]n} : C[m/z] }\\
\end{mathpar}

\paragraph{\textbf{Data Typing}}
The data typing rules govern the typing of base types such as the unit type and
the boolean type. The rules are presented below.
\begin{mathpar}\footnotesize
  \inferrule[UnitVal]
  { \Gamma ; \Delta \vdash \\ \Delta \triangleright \Un }
  { \epsilon ; \Gamma ; \Delta \vdash \ii : \unit }

  \inferrule[True]
  { \Gamma ; \Delta \vdash \\ \Delta \triangleright \Un }
  { \epsilon ; \Gamma ; \Delta \vdash \bTrue : \Bool }

  \inferrule[False]
  { \Gamma ; \Delta \vdash \\ \Delta \triangleright \Un }
  { \epsilon ; \Gamma ; \Delta \vdash \bFalse : \Bool }

  \inferrule[BoolElim]
  { \Gamma, z : \Bool \vdash A : s \\
    \Theta_1 ; \Gamma ; \Delta_1 \vdash m : \Bool \\
    \Theta_2 ; \Gamma ; \Delta_2 \vdash n_1 : A[\bTrue/z] \\
    \Theta_2 ; \Gamma ; \Delta_2 \vdash n_2 : A[\bFalse/z] }
  { \Theta_1 \dotcup \Theta_2 ; \Gamma ; \Delta_1 \dotcup \Delta_2 \vdash 
    \boolElim{[z]A}{m}{n_1}{n_2} : A[m/z] }
\end{mathpar}

\paragraph{\textbf{Monadic Typing}}
The monadic typing rules govern the composition of monadic computations.
The standard rules for monadic return and bind are presented below.
\begin{mathpar}\small
  \inferrule[Return]
  { \Theta ; \Gamma ; \Delta \vdash m : A }
  { \Theta ; \Gamma ; \Delta \vdash \return{m} : \CM{A} }

  \inferrule[Bind]
  { \Gamma \vdash B : s \\
    \Theta_1 ; \Gamma ; \Delta_1 \vdash m : \CM{A} \\
    \Theta_2 ; \Gamma, x : A ; \Delta_2, x \ty{r} A \vdash n : \CM{B} }
  { \Theta_1 \dotcup \Theta_2 ; \Gamma ; \Delta_1 \dotcup \Delta_2 \vdash \letin{x}{m}{n} : \CM{B} }
\end{mathpar}

\paragraph{\textbf{Session Typing}}
The session typing rules govern the typing of channels and concurrency
primitives. The rules are presented below.
\begin{mathpar}\footnotesize
  \inferrule[Channel-CH]
  { \Gamma ; \Delta \vdash \\
    \epsilon \vdash A : \Proto \\
    \Delta \triangleright \Un }
  { c \tL \CH{A} ; \Gamma ; \Delta \vdash c : \CH{A} }

  \inferrule[Channel-HC]
  { \Gamma ; \Delta \vdash \\
    \epsilon \vdash A : \Proto \\
    \Delta \triangleright \Un }
  { c \tL \HC{A} ; \Gamma ; \Delta \vdash c : \HC{A} }

  \inferrule[Explicit-Send-CH]
  { \Theta ; \Gamma ; \Delta \vdash m : \CH{\ActR{!}{x : A}{B}} }
  { \Theta ; \Gamma ; \Delta \vdash \sendR{m} : \PiR{\Ln}{x : A}{\CM{\CH{B}}} }

  \inferrule[Explicit-Send-HC]
  { \Theta ; \Gamma ; \Delta \vdash m : \HC{\ActR{?}{x : A}{B}} }
  { \Theta ; \Gamma ; \Delta \vdash \sendR{m} : \PiR{\Ln}{x : A}{\CM{\HC{B}}} }

  \inferrule[Implicit-Send-CH]
  { \Theta ; \Gamma ; \Delta \vdash m : \CH{\ActI{!}{x : A}{B}} }
  { \Theta ; \Gamma ; \Delta \vdash \sendI{m} : \PiI{\Ln}{x : A}{\CM{\CH{B}}} }

  \inferrule[Implicit-Send-HC]
  { \Theta ; \Gamma ; \Delta \vdash m : \HC{\ActI{?}{x : A}{B}} }
  { \Theta ; \Gamma ; \Delta \vdash \sendI{m} : \PiI{\Ln}{x : A}{\CM{\HC{B}}} }

  \inferrule[Explicit-Recv-CH]
  { \Theta ; \Gamma ; \Delta \vdash m : \CH{\ActR{?}{x : A}{B}} }
  { \Theta ; \Gamma ; \Delta \vdash \recvR{m} : \CM{\SigR{\Ln}{x : A}{\CH{B}}} }

  \inferrule[Explicit-Recv-HC]
  { \Theta ; \Gamma ; \Delta \vdash m : \HC{\ActR{!}{x : A}{B}} }
  { \Theta ; \Gamma ; \Delta \vdash \recvR{m} : \CM{\SigR{\Ln}{x : A}{\HC{B}}} }

  \inferrule[Implicit-Recv-CH]
  { \Theta ; \Gamma ; \Delta \vdash m : \CH{\ActI{?}{x : A}{B}} }
  { \Theta ; \Gamma ; \Delta \vdash \recvI{m} : \CM{\SigI{\Ln}{x : A}{\CH{B}}} }

  \inferrule[Implicit-Recv-HC]
  { \Theta ; \Gamma ; \Delta \vdash m : \HC{\ActI{!}{x : A}{B}} }
  { \Theta ; \Gamma ; \Delta \vdash \recvI{m} : \CM{\SigI{\Ln}{x : A}{\HC{B}}} }

  \inferrule[Fork]
  { \Theta ; \Gamma, x : \CH{A} ; \Delta, x \tL \CH{A} \vdash m : \CM{\unit} }
  { \Theta ; \Gamma ; \Delta \vdash \fork{x : \CH{A}}{m} : \CM{\HC{A}} }

  \inferrule[Close]
  { \Theta ; \Gamma ; \Delta \vdash m : \CH{\End} }
  { \Theta ; \Gamma ; \Delta \vdash \close{m} : \CM{\unit} }

  \inferrule[Wait]
  { \Theta ; \Gamma ; \Delta \vdash m : \HC{\End} }
  { \Theta ; \Gamma ; \Delta \vdash \wait{m} : \CM{\unit} }
\end{mathpar}

\subsection{Process Level}
The typing judgment for the process level has the form $\Theta \Vdash P$. 
This judgment states that under the channel context $\Theta$, process $P$ is well-typed.
Unlike the logical and program levels which can type term that contain free variables, 
the process level only types processes whose terms are closed. Hence, there are no logical
or program contexts in the process typing judgment.
\begin{mathpar}\small
  \inferrule[Expr]
  { \Theta ; \epsilon ; \epsilon \vdash m : \CM{\unit} }
  { \Theta \Vdash \proc{m} }

  \inferrule[Par]
  { \Theta_1 \Vdash P \\ \Theta_2 \Vdash Q }
  { \Theta_1 \dotcup \Theta_2 \Vdash P \mid Q }

  \inferrule[Scope]
  { \Theta, c \tL \CH{A}, d \tL \HC{A} \Vdash P }
  { \Theta \Vdash \scope{cd}{P} }
\end{mathpar}
\clearpage

\section{Operational Semantics}\label{appendix:semantics}
In this section, we present the operational semantics of \TLLC{}. 
Similarly to the typing rules, we organize the presentation of the semantics
into the logical level, program level, and process level.

\subsection{Logical Level}\label{appendix:logical-semantics}
The semantics of the logical level is defined in terms of the 
\emph{parallel reduction} relation $m \Rightarrow m'$. This relation allows 
multiple redexes to be reduced simultaneously.

\paragraph{\textbf{Core Reduction}}
The parallel reduction for core functional terms is defined as follows:
\begin{mathpar}\footnotesize
  \inferrule[PStep-Var]
  { }
  { x \Rightarrow x }

  \inferrule[PStep-Sort]
  { }
  { s \Rightarrow s }

  \inferrule[PStep-Explicit-Fun]
  { A \Rightarrow A' \\ B \Rightarrow B' }
  { \PiR{s}{x : A}{B} \Rightarrow \PiR{s}{x : A'}{B'} }

  \inferrule[PStep-Implicit-Fun]
  { A \Rightarrow A' \\ B \Rightarrow B' }
  { \PiI{s}{x : A}{B} \Rightarrow \PiI{s}{x : A'}{B'} }

  \inferrule[PStep-Explicit-Lam]
  { A \Rightarrow A' \\ B \Rightarrow B' }
  { \lamR{s}{x : A}{B} \Rightarrow \lamR{s}{x : A'}{B'} }

  \inferrule[PStep-Implicit-Lam]
  { A \Rightarrow A' \\ B \Rightarrow B' }
  { \lamI{s}{x : A}{B} \Rightarrow \lamI{s}{x : A'}{B'} }

  \inferrule[PStep-Explicit-App]
  { m \Rightarrow m' \\ n \Rightarrow n' }
  { \appR{m}{n} \Rightarrow \appR{m'}{n'} }

  \inferrule[PStep-Implicit-App]
  { m \Rightarrow m' \\ n \Rightarrow n' }
  { \appI{m}{n} \Rightarrow \appI{m'}{n'} }

  \inferrule[PStep-Explicit-$\beta$]
  { m \Rightarrow m' \\ n \Rightarrow n' }
  { \appR{(\lamR{s}{x : A}{m})}{n} \Rightarrow m'[n'/x] }

  \inferrule[PStep-Implicit-$\beta$]
  { m \Rightarrow m' \\ n \Rightarrow n' }
  { \appI{(\lamI{s}{x : A}{m})}{n} \Rightarrow m'[n'/x] }

  \inferrule[PStep-Explicit-Sum]
  { A \Rightarrow A' \\ B \Rightarrow B' }
  { \SigR{s}{x : A}{B} \Rightarrow \SigR{s}{x : A'}{B'} }

  \inferrule[PStep-Implicit-Sum]
  { A \Rightarrow A' \\ B \Rightarrow B' }
  { \SigI{s}{x : A}{B} \Rightarrow \SigI{s}{x : A'}{B'} }

  \inferrule[PStep-Explicit-Pair]
  { m \Rightarrow m' \\ n \Rightarrow n' }
  { \pairR{m}{n}{s} \Rightarrow \pairR{m'}{n'}{s} }

  \inferrule[PStep-Implicit-Pair]
  { m \Rightarrow m' \\ n \Rightarrow n' }
  { \pairI{m}{n}{s} \Rightarrow \pairI{m'}{n'}{s} }

  \inferrule[PStep-SumElim]
  { A \Rightarrow A' \\ m \Rightarrow m' \\ n \Rightarrow n' }
  { \SigElim{[z]A}{m}{[x,y]n} \Rightarrow \SigElim{[z]A'}{m'}{[x,y]n'} }

  \inferrule[PStep-Explicit-PairElim]
  { m_1 \Rightarrow m_1' \\ m_2 \Rightarrow m_2' \\ n \Rightarrow n' }
  { \SigElim{[z]A}{\pairR{m_1}{m_2}{s}}{[x,y]n} \Rightarrow n'[m_1/x,m_2/y] }

  \inferrule[PStep-Implicit-PairElim]
  { m_1 \Rightarrow m_1' \\ m_2 \Rightarrow m_2' \\ n \Rightarrow n' }
  { \SigElim{[z]A}{\pairI{m_1}{m_2}{s}}{[x,y]n} \Rightarrow n'[m_1/x,m_2/y] }
\end{mathpar}

\paragraph{\textbf{Data Reduction}}
The parallel reduction for data terms is defined as follows:
\begin{mathpar}\footnotesize
  \inferrule[PStep-Unit]
  { }
  { \unit \Rightarrow \unit }

  \inferrule[PStep-UnitVal]
  { }
  { \ii \Rightarrow \ii }

  \inferrule[PStep-Bool]
  { }
  { \Bool \Rightarrow \Bool }

  \inferrule[PStep-True]
  { }
  { \bTrue \Rightarrow \bTrue }

  \inferrule[PStep-False]
  { }
  { \bFalse \Rightarrow \bFalse }

  \inferrule[PStep-BoolElim]
  { A \Rightarrow A' \\
    m \Rightarrow m' \\
    n_1 \Rightarrow n_1' \\
    n_2 \Rightarrow n_2' }
  { \boolElim{[z]A}{m}{n_1}{n_2} \Rightarrow \boolElim{[z]A'}{m'}{n_1'}{n_2'} }

  \inferrule[PStep-TrueElim]
  { n_1 \Rightarrow n_1' }
  { \boolElim{[z]A}{\bTrue}{n_1}{n_2} \Rightarrow n_1' }

  \inferrule[PStep-FalseElim]
  { n_2 \Rightarrow n_2' }
  { \boolElim{[z]A}{\bFalse}{n_1}{n_2} \Rightarrow n_2' }
\end{mathpar}

\paragraph{\textbf{Monadic Reduction}}
The parallel reduction for monadic terms is defined as follows:
\begin{mathpar}\footnotesize
  \inferrule[PStep-$\mcC$Type]
  { A \Rightarrow A' }
  { \CM{A} \Rightarrow \CM{A'} }

  \inferrule[PStep-Return]
  { m \Rightarrow m' }
  { \return{m} \Rightarrow \return{m'} }

  \inferrule[PStep-Bind]
  { m \Rightarrow m' \\ n \Rightarrow n' }
  { \letin{x}{m}{n} \Rightarrow \letin{x}{m'}{n'} }

  \inferrule[PStep-ReturnBind]
  { m \Rightarrow m' \\ n \Rightarrow n' }
  { \letin{x}{\return{m}}{n} \Rightarrow n'[m'/x] }
\end{mathpar}
\clearpage

\paragraph{\textbf{Session Reduction}}
The parallel reduction for protocols, channels and concurrency primitives are defined as follows:
\begin{mathpar}\footnotesize
  \inferrule[PStep-Proto]
  { }
  { \Proto \Rightarrow \Proto }

  \inferrule[PStep-End]
  { }
  { \End \Rightarrow \End }

  \inferrule[PStep-RecProto]
  { A \Rightarrow A' \\ m \Rightarrow m' }
  { \fix{x : A}{m} \Rightarrow \fix{x : A'}{m'} }

  \inferrule[PStep-RecUnfold]
  { A \Rightarrow A' \\ m \Rightarrow m' }
  { \fix{x : A}{m} \Rightarrow m'[(\fix{x : A'}{m'})/x] }

  \inferrule[PStep-Explicit-Action]
  { A \Rightarrow A' \\ B \Rightarrow B' }
  { \ActR{\rho}{x : A}{B} \Rightarrow \ActR{\rho}{x : A'}{B'} }

  \inferrule[PStep-Implicit-Action]
  { A \Rightarrow A' \\ B \Rightarrow B' }
  { \ActI{\rho}{x : A}{B} \Rightarrow \ActI{\rho}{x : A'}{B'} }

  \inferrule[PStep-CH]
  { A \Rightarrow A' }
  { \CH{A} \Rightarrow \CH{A'} }

  \inferrule[PStep-HC]
  { A \Rightarrow A' }
  { \HC{A} \Rightarrow \HC{A'} }

  \inferrule[PStep-Channel]
  { }
  { c \Rightarrow d }

  \inferrule[PStep-Fork]
  { A \Rightarrow A' \\ m \Rightarrow m' }
  { \fork{x : A}{m} \Rightarrow \fork{x : A'}{m'} }

  \inferrule[PStep-Explicit-Send]
  { m \Rightarrow m' }
  { \sendR{m} \Rightarrow \sendR{m'} }

  \inferrule[PStep-Implicit-Send]
  { m \Rightarrow m' }
  { \sendI{m} \Rightarrow \sendI{m'} }

  \inferrule[PStep-Explicit-Recv]
  { m \Rightarrow m' }
  { \recvR{m} \Rightarrow \recvR{m'} }

  \inferrule[PStep-Implicit-Recv]
  { m \Rightarrow m' }
  { \recvI{m} \Rightarrow \recvI{m'} }

  \inferrule[PStep-Close]
  { m \Rightarrow m' }
  { \close{m} \Rightarrow \close{m'} }

  \inferrule[PStep-Wait]
  { m \Rightarrow m' }
  { \wait{m} \Rightarrow \wait{m'} }
\end{mathpar}

\paragraph{\textbf{Convertibility Relation}}
The convertibility relation $A \simeq B$ is the reflexive, symmetric and
transitive closure of the parallel reduction relation. It can be inductively
defined as follows:
\begin{mathpar}
  \inferrule[Conv-Refl]
  { }
  { A \simeq A }

  \inferrule[Conv-PStep]
  { A \simeq B \\ 
    B \Rightarrow C }
  { A \simeq C }

  \inferrule[Conv-PStep-Rev]
  { A \simeq B \\ 
    C \Rightarrow B }
  { A \simeq C }
\end{mathpar}
Note that the program level \textsc{Conversion} rule (\Cref{appendix:program-typing}) 
also uses this convertibility relation.
\clearpage

\subsection{Program Level}\label{appendix:program-semantics}
The semantics of the program level is defined in terms of a small-step reduction
relation $m \Leadsto m'$. Unlike the logical level which has a non-deterministic
reduction strategy, the program level follows call-by-value evaluation.
Arguments are fully evaluated before substitution into functions.

\paragraph{\textbf{Core Reduction}}
The small-step reduction for core functional terms is defined as follows:
\begin{mathpar}\footnotesize
  \inferrule[Step-Explicit-App$_1$]
  { m \Leadsto m' }
  { \appR{m}{n} \Leadsto \appR{m'}{n} }

  \inferrule[Step-Explicit-App$_2$]
  { n \Leadsto n' }
  { \appR{m}{n} \Leadsto \appR{m}{n'} }

  \inferrule[Step-Implicit-App$_1$]
  { m \Leadsto m' }
  { \appI{m}{n} \Leadsto \appI{m'}{n} }
  
  \inferrule[Step-Explicit-$\beta$]
  { }
  { \appR{(\lamR{s}{x : A}{m})}{v} \Leadsto m[v/x] }

  \inferrule[Step-Implicit-$\beta$]
  { }
  { \appI{(\lamI{s}{x : A}{m})}{n} \Leadsto m[n/x] }

  \inferrule[Step-Explicit-Pair$_1$]
  { m \Leadsto m' }
  { \pairR{m}{n}{s} \Leadsto \pairR{m'}{n}{s} }

  \inferrule[Step-Explicit-Pair$_2$]
  { n \Leadsto n' }
  { \pairR{m}{n}{s} \Leadsto \pairR{m}{n'}{s} }

  \inferrule[Step-Implicit-Pair$_2$]
  { n \Leadsto n' }
  { \pairI{m}{n}{s} \Leadsto \pairI{m}{n'}{s} }

  \inferrule[Step-SumElim$_1$]
  { m \Leadsto m' }
  { \SigElim{[z]A}{m}{[x,y]n} \Leadsto \SigElim{[z]A}{m'}{[x,y]n} }

  \inferrule[Step-Explicit-PairElim]
  { }
  { \SigElim{[z]A}{\pairR{u}{v}{s}}{[x,y]n} \Leadsto n[u/x,v/y] }

  \inferrule[Step-Implicit-PairElim]
  { }
  { \SigElim{[z]A}{\pairI{m}{v}{s}}{[x,y]n} \Leadsto n[m/x,v/y] }
\end{mathpar}

\paragraph{\textbf{Data Reduction}}
The small-step reduction for data terms is defined as follows:
\begin{mathpar}\small
  \inferrule[Step-BoolElim]
  { m \Leadsto m' }
  { \boolElim{[z]A}{m}{n_1}{n_2} \Leadsto \boolElim{[z]A}{m'}{n_1}{n_2} }

  \inferrule[Step-TrueElim]
  { }
  { \boolElim{[z]A}{\bTrue}{n_1}{n_2} \Leadsto n_1 }

  \inferrule[Step-FalseElim]
  { }
  { \boolElim{[z]A}{\bFalse}{n_1}{n_2} \Leadsto n_2 }
\end{mathpar}

\paragraph{\textbf{Monadic Reduction}}
The small-step reduction for monadic terms is defined as follows:
\begin{mathpar}\small
  \inferrule[Step-Return]
  { m \Leadsto m' }
  { \return{m} \Leadsto \return{m'} }

  \inferrule[Step-Bind]
  { m \Leadsto m' }
  { \letin{x}{m}{n} \Leadsto \letin{x}{m'}{n} }

  \inferrule[Step-ReturnBind]
  { }
  { \letin{x}{\return{v}}{n} \Leadsto n[v/x] }
\end{mathpar}

\paragraph{\textbf{Session Reduction}}
The small-step reduction for session terms is defined as follows:
\begin{mathpar}\small
  \inferrule[Step-Explicit-Send]
  { m \Leadsto m' }
  { \sendR{m} \Leadsto \sendR{m'} }

  \inferrule[Step-Implicit-Send]
  { m \Leadsto m' }
  { \sendI{m} \Leadsto \sendI{m'} }

  \inferrule[Step-Explicit-Recv]
  { m \Leadsto m' }
  { \recvR{m} \Leadsto \recvR{m'} }

  \inferrule[Step-Implicit-Recv]
  { m \Leadsto m' }
  { \recvI{m} \Leadsto \recvI{m'} }

  \inferrule[Step-Close]
  { m \Leadsto m' }
  { \close{m} \Leadsto \close{m'} }

  \inferrule[Step-Wait]
  { m \Leadsto m' }
  { \wait{m} \Leadsto \wait{m'} }
\end{mathpar}
\clearpage

\subsection{Process Level}\label{appendix:process-semantics}
The semantics of the process level is defined in terms of a small-step reduction
relation $P \Rrightarrow Q$. This relation is what gives \TLLC{} its concurrent
behavior. Before we present the reduction rules, we first define the notion of
\emph{structural congruence} $\equiv$ which identifies processes that are
the same up to reordering of parallel components and renaming of bound channels.

\paragraph{\textbf{Structural Congruence}}
The structural congruence relation $\equiv$ is defined as follows:
\begin{mathpar}\small
  P \mid Q \equiv Q \mid P 

  O \mid (P \mid Q) \equiv (O \mid P) \mid Q

  P \mid \proc{\return{\ii}} \equiv P
  \\
  \scope{cd}{P} \mid Q \equiv \scope{cd}{(P \mid Q)}

  \scope{cd}{P} \equiv \scope{dc}{P}

  \scope{cd}{\scope{c'd'}{P}} \equiv \scope{c'd'}{\scope{cd}{P}}
\end{mathpar}

\paragraph{\textbf{Process Reduction}}
The small-step reduction for processes is defined as follows:

\begin{tabular}{r L C L}
  Evaluation Contexts & \mcM, \mcN & ::= & [\cdot] \mid \letin{x}{\mcM}{m}
\end{tabular}

\vspace{0.5em}
\begin{small}
\begin{tabular}{l L}
  (\textsc{Proc-Fork}) &
    \proc{
      \mcN[\fork{x : A}{m}]
    }
    \Rrightarrow
    \scope{cd}{(\proc{\mcN[\return{c}]} \mid \proc{m[d/x]})} 
  \\
  (\textsc{Proc-End}) 
    &\scope{cd}{(
        \proc{\mcM[\close{c}]} 
        \mid 
        \proc{\mcN[\wait{d}]}
      )}
    \Rrightarrow 
    \proc{\mcM[\return{\ii}]} \mid \proc{\mcN[\return{\ii}]} 
  \\
  (\textsc{Proc-Com}) 
    &\scope{cd}{(
        \proc{\mcM[\appR{\sendR{c}}{v}]} 
        \mid 
        \proc{\mcN[\recvR{d}]}
      )}
    \Rrightarrow 
    \scope{cd}{(
      \proc{\mcM[\return{c}]} 
      \mid 
      \proc{\mcN[\return{\pairR{v}{d}{\Ln}}]}
    )}
  \\
  (\textsc{Proc-\underline{Com}}) 
    &\scope{cd}{(
        \proc{\mcM[\appI{\sendI{c}}{o}]} 
        \mid 
        \proc{\mcN[\recvI{d}]}
      )}
     \Rrightarrow 
     \scope{cd}{(
        \proc{\mcM[\return{c}]} 
        \mid 
        \proc{\mcN[\return{\pairI{o}{d}{\Ln}}]}
      )}
\end{tabular}
\vspace{0.2em}
\begin{mathpar}
  \inferrule[(Proc-Expr)]
  { m \Leadsto m' }
  { \proc{m} \Rrightarrow \proc{m'} }

  \inferrule[(Proc-Par)]
  { P \Rrightarrow Q }
  { O \mid P \Rrightarrow O \mid Q }

  \inferrule[(Proc-Scope)]
  { P \Rrightarrow Q }
  { \scope{cd}{P} \Rrightarrow \scope{cd}{Q} }

  \inferrule[(Proc-Congr)]
  { P \equiv P' \\ 
    P' \Rrightarrow Q' \\ 
    Q' \equiv Q }
  { P \Rrightarrow Q }
\end{mathpar}
\end{small}
\clearpage

\section{Meta-Theory}\label{appendix:metatheory}
In this section, we study the meta-theory of \TLLC{}.
The results are classified into 4 categories.
\begin{enumerate}
  \item \textbf{Compatibility}: the extensions of \TLLC{} are compatible with the underlying TLL theory.
  \item \textbf{Session Fidelity}: processes follow the protocols specified by their session types.
  \item \textbf{Global Progress}: well-typed reachable processes do not deadlock.
  \item \textbf{Erasure Safety}: evaluation of erased terms and processes is safe.
\end{enumerate}

\subsection{Compatibility}\label{appendix:compatibility}
To show that the \TLLC{} extensions made are compatible with the theory of TLL,
we prove the \TLLC{} terms enjoy the same properties as TLL ones. 

\paragraph{\textbf{Confluence}}
We show that the logical reduction relation is confluent. 
This property is important as it ensures that type convertibility can be checked
regardless of the order in which reductions are applied. 
Confluence is easy to prove here as the parallel reduction satisfies the diamond property.

\begin{lemma}[Diamond Property]\label[lemma]{lemma:diamond}
  If $m \Rightarrow m_1$ and $m \Rightarrow m_2$, then there exists $m'$ such that
  $m_1 \Rightarrow m'$ and $m_2 \Rightarrow m'$.
\end{lemma}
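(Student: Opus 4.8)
The plan is to prove the diamond property by the Takahashi complete-development method, which is the cleanest route for a reflexive parallel reduction like $\Rightarrow$. First I would define, by recursion on the structure of a term $m$, its \emph{complete development} $m^{*}$: the term obtained by firing \emph{every} redex present in $m$ simultaneously. Each principal reduction rule dictates the corresponding clause — for instance $(\appR{(\lamR{s}{x : A}{m})}{n})^{*} = m^{*}[n^{*}/x]$ for explicit $\beta$, $(\letin{x}{\return{m}}{n})^{*} = n^{*}[m^{*}/x]$ for the monadic redex, $(\boolElim{[z]A}{\bTrue}{n_1}{n_2})^{*} = n_1^{*}$ for boolean elimination, and the dependent-pair eliminations analogously — while every non-redex constructor is handled by the congruence clause that simply recurses into subterms.

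The whole argument then reduces to two lemmas. The first is a substitution-compatibility lemma: if $m \Rightarrow m'$ and $n \Rightarrow n'$, then $m[n/x] \Rightarrow m'[n'/x]$, proved by a routine induction on the derivation of $m \Rightarrow m'$, with the $\beta$-like cases folding two substitutions into one. The second is the \emph{triangle lemma}: for every reduction $m \Rightarrow n$ we have $n \Rightarrow m^{*}$. Granting these, the diamond property is immediate: from $m \Rightarrow m_1$ and $m \Rightarrow m_2$ the triangle lemma yields $m_1 \Rightarrow m^{*}$ and $m_2 \Rightarrow m^{*}$, so $m' := m^{*}$ witnesses the claim. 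The triangle lemma itself is proved by induction on the derivation of $m \Rightarrow n$, one case per rule: the congruence cases follow directly from the induction hypotheses and the definition of $(\cdot)^{*}$, while the genuine redex cases (explicit/implicit $\beta$, pair and boolean elimination, return/bind) all close using the substitution-compatibility lemma applied to the reducts of the immediate subterms.

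The main obstacle I anticipate is the recursive-protocol rules. Because $\fix{x : A}{m}$ admits both a congruence step (\textsc{PStep-RecProto}) and an unfolding step (\textsc{PStep-RecUnfold}) that substitutes a copy of the fixpoint into its own body, the development must be defined to \emph{always} unfold, $(\fix{x : A}{m})^{*} = m^{*}[(\fix{x : A^{*}}{m^{*}})/x]$. Verifying the triangle lemma here then requires checking two subcases: a congruence step $\fix{x : A}{m} \Rightarrow \fix{x : A'}{m'}$ must still reach this unfolded development via a further application of \textsc{PStep-RecUnfold}, and an unfolding step must reach it through the substitution-compatibility lemma, the latter relying on the self-substituted fixpoint itself reducing to $\fix{x : A^{*}}{m^{*}}$. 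I also note the degenerate \textsc{PStep-Channel} rule, under which any channel reduces to any other; taking $c^{*} = c$ keeps the triangle lemma valid, since $c \Rightarrow d$ can always re-reduce to $c$. With these clauses settled, the remaining cases are mechanical, and the diamond property — hence confluence by the standard closure argument — follows.
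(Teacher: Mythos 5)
Your proof is correct, but it takes a genuinely different route from the paper's. The paper proves the diamond property directly, ``by induction on the structure of the parallel reduction'' --- i.e.\ an induction on one derivation of $m \Rightarrow m_1$ with case analysis on the derivation of $m \Rightarrow m_2$, joining each pair of overlapping rules by hand. You instead factor the argument through Takahashi's complete development $m^{*}$ and the triangle lemma ($m \Rightarrow n$ implies $n \Rightarrow m^{*}$), from which the diamond is immediate. The trade-off is the usual one: the direct method needs no auxiliary definition but incurs a quadratic case analysis (every pair of rules that can fire on the same head constructor, e.g.\ \textsc{PStep-Explicit-App} against \textsc{PStep-Explicit-$\beta$}, or \textsc{PStep-RecProto} against \textsc{PStep-RecUnfold}), whereas your method pays for the definition of $(\cdot)^{*}$ once and then handles each rule exactly once in the triangle lemma; both approaches hinge on the same substitution-compatibility lemma. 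Your treatment of the two delicate points is right: always unfolding in $(\fix{x : A}{m})^{*}$ lets both the congruence step and the unfolding step close the triangle, and $c^{*} = c$ works for the degenerate \textsc{PStep-Channel} rule precisely because any channel can re-reduce to any other. One caveat worth flagging: as literally printed, \textsc{PStep-Explicit-PairElim} and \textsc{PStep-Implicit-PairElim} produce $n'[m_1/x,m_2/y]$ with the \emph{unreduced} pair components; your development clause presupposes the (surely intended) $n'[m_1'/x,m_2'/y]$, and the triangle lemma would not go through against the rule as written, so you are implicitly correcting a typo in the reduction rules rather than proving the lemma for the system exactly as stated.
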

\begin{proof}
  By induction on the structure of the parallel reduction.
\end{proof}

\begin{lemma}\label[lemma]{lemma:strip}
  If $m \Rightarrow m_1$ and $m \Rightarrow^* m_2$, then there exists $m'$ such that
  $m_1 \Rightarrow^* m'$ and $m_2 \Rightarrow m'$.
\end{lemma}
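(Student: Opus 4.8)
The plan is to prove this strip lemma by induction on the length of the multi-step reduction $m \Rightarrow^* m_2$, using the diamond property (\Cref{lemma:diamond}) as the tool that closes each individual square. This is the standard Tait--Martin-L\"of argument for upgrading a diamond into confluence: a single parallel step on the left, combined with a sequence of parallel steps along the top, can be completed so that the top sequence is answered by a single step while the left step is answered by a sequence.

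For the base case, the reduction $m \Rightarrow^* m_2$ has length zero, so $m_2 = m$. I would then take $m' = m_1$: reflexivity of $\Rightarrow^*$ gives $m_1 \Rightarrow^* m_1$, and the hypothesis $m \Rightarrow m_1$ is exactly the required step $m_2 \Rightarrow m'$.

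For the inductive step, I decompose the multi-step reduction at its front as $m \Rightarrow m_a \Rightarrow^* m_2$, where the remaining sequence $m_a \Rightarrow^* m_2$ is one step shorter. First I apply \Cref{lemma:diamond} to the two single steps $m \Rightarrow m_1$ and $m \Rightarrow m_a$, obtaining a term $m_b$ with $m_1 \Rightarrow m_b$ and $m_a \Rightarrow m_b$. Now $m_a$ carries a single step $m_a \Rightarrow m_b$ together with a strictly shorter multi-step $m_a \Rightarrow^* m_2$, so the induction hypothesis applied at $m_a$ yields an $m'$ with $m_b \Rightarrow^* m'$ and $m_2 \Rightarrow m'$. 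Composing $m_1 \Rightarrow m_b$ with $m_b \Rightarrow^* m'$ gives $m_1 \Rightarrow^* m'$, while $m_2 \Rightarrow m'$ is delivered directly, closing the diagram.

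I do not expect a genuine obstacle here, since all the real content is carried by the diamond property, which is already established. The only point that requires care is to decompose $m \Rightarrow^* m_2$ at its \emph{front} rather than at its end: this is what allows the freshly obtained single step $m_a \Rightarrow m_b$ to feed into the induction hypothesis and preserves the single-step/multi-step asymmetry of the conclusion. With this decomposition the argument is a routine induction, and it is exactly the lemma needed to lift \Cref{lemma:diamond} to full confluence of $\Rightarrow^*$.
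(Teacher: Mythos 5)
Your proof is correct and takes essentially the same approach as the paper, which proves the lemma by induction on the derivation of $m \Rightarrow^* m_2$ together with \Cref{lemma:diamond}. The front-decomposition you describe is a standard and valid way to organize that induction.
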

\begin{proof}
  By induction on the derivation of $m \Rightarrow^* m_2$ and \Cref{lemma:diamond}.
\end{proof}

\begin{theorem}[Confluence]
  If $m \Rightarrow^* m_1$ and $m \Rightarrow^* m_2$, then there exists $m'$ such that
  $m_1 \Rightarrow^* m'$ and $m_2 \Rightarrow^* m'$.
\end{theorem}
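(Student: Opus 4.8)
The plan is to deduce confluence from the Strip Lemma (\Cref{lemma:strip}) by a standard tiling argument, inducting on the length of one of the two multi-step reductions. Since the diamond property (\Cref{lemma:diamond}) and the strip lemma already carry out all the combinatorial work at the level of single and mixed reductions, nothing further about the individual reduction rules needs to be examined here; the theorem is a purely structural consequence of the closure operation.

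First I would fix the reduction $m \Rightarrow^* m_2$ and perform induction on the number of single $\Rightarrow$-steps in the derivation of $m \Rightarrow^* m_1$. In the base case $m \Rightarrow^* m_1$ is the reflexive (zero-step) instance, so $m_1 = m$; I take $m' = m_2$, noting that $m_1 = m \Rightarrow^* m_2 = m'$ together with the trivial $m_2 \Rightarrow^* m_2$ closes the diagram.

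For the inductive step I would decompose $m \Rightarrow^* m_1$ into a first step $m \Rightarrow m_0$ followed by a strictly shorter reduction $m_0 \Rightarrow^* m_1$. Applying \Cref{lemma:strip} to the pair $m \Rightarrow m_0$ and $m \Rightarrow^* m_2$ yields a term $n$ with $m_0 \Rightarrow^* n$ and $m_2 \Rightarrow n$. Now the two reductions $m_0 \Rightarrow^* m_1$ and $m_0 \Rightarrow^* n$ both emanate from $m_0$, and since the former is strictly shorter the induction hypothesis produces $m'$ with $m_1 \Rightarrow^* m'$ and $n \Rightarrow^* m'$. Prepending the single step $m_2 \Rightarrow n$ gives $m_2 \Rightarrow^* m'$, which together with $m_1 \Rightarrow^* m'$ closes the required diagram.

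I do not expect a genuine obstacle. The only points demanding care are matching the asymmetric shape of the strip lemma—single step on one side, multi-step on the other—to the correct pair of reductions at each stage, and confirming that the induction measure (the length of $m \Rightarrow^* m_1$) strictly decreases, since the strip lemma replaces the leading single step with the reduction out of $m_0$. Everything else is routine bookkeeping on the reflexive-transitive closure $\Rightarrow^*$.
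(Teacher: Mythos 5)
Your proof is correct and takes essentially the same route as the paper, which likewise derives confluence from \Cref{lemma:strip} by the standard tiling induction (the paper inducts on the derivation of $m \Rightarrow^* m_2$, you on the length of $m \Rightarrow^* m_1$ --- symmetric choices). One small wording fix: you should not literally \emph{fix} $m \Rightarrow^* m_2$ before inducting, since your inductive step invokes the hypothesis on the pair $m_0 \Rightarrow^* m_1$ and $m_0 \Rightarrow^* n$, whose second component is not the originally fixed reduction; the second reduction must remain universally quantified in the induction statement (equivalently, induct on the length of $m \Rightarrow^* m_1$ with $m$, $m_2$ and the other reduction generalized).
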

\begin{proof}
  By induction on the derivation of $m \Rightarrow^* m_2$ and \Cref{lemma:strip}.
\end{proof}

The validity of the confluence property allows us to prove the injectivity of
the convertibility relation for types.

\begin{corollary}\label[corollary]{corollary:inj-sort}
  $s_1 \simeq s_2$ implies $s_1 = s_2$.
\end{corollary}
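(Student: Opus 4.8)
The plan is to derive this from the just-proved Confluence theorem via the standard Church--Rosser characterization of convertibility, together with the observation that sorts are normal forms under parallel reduction. First I would establish the Church--Rosser property: for any $A$ and $B$, if $A \simeq B$ then there is a common reduct $C$ with $A \Rightarrow^* C$ and $B \Rightarrow^* C$. This is proved by induction on the derivation of $A \simeq B$ built from \textsc{Conv-Refl}, \textsc{Conv-PStep}, and \textsc{Conv-PStep-Rev}. The reflexive case takes $C = A$. In the \textsc{Conv-PStep} case we have $A \simeq B$ with (by the induction hypothesis) a common reduct $C'$ of $A$ and $B$, together with a step $B \Rightarrow C$; since both $C'$ and $C$ are reducts of $B$, Confluence supplies a $C''$ that is a common reduct of $C'$ and $C$, and then $C''$ is the desired common reduct of $A$ and $C$. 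The \textsc{Conv-PStep-Rev} case is simpler: from $C \Rightarrow B$ and $B \Rightarrow^* C'$ we get $C \Rightarrow^* C'$, so the common reduct $C'$ already works without invoking Confluence.

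Next I would observe that sorts are stable under parallel reduction: the only rule whose left-hand side is a sort is \textsc{PStep-Sort}, which reduces $s$ to itself. Hence a trivial induction on the length of a reduction sequence shows that $s \Rightarrow^* C$ forces $C = s$. Combining the two facts gives the corollary immediately: given $s_1 \simeq s_2$, Church--Rosser yields a common reduct $C$ with $s_1 \Rightarrow^* C$ and $s_2 \Rightarrow^* C$, and stability of sorts forces $C = s_1$ as well as $C = s_2$, so $s_1 = s_2$.

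The main obstacle, insofar as there is one, lies entirely in the Church--Rosser step, specifically in threading Confluence correctly through the alternating forward/backward structure of the $\simeq$ relation; the rest is routine. Indeed, once that bridge is in place the result is essentially free, since $\Un$ and $\Ln$ admit no nontrivial reductions. I would expect this corollary to be stated precisely because the Church--Rosser lemma it relies on is reused for the analogous injectivity properties of the other type constructors (e.g.\ $\Pi$-types, $\CH{\cdot}$, $\HC{\cdot}$), where the normal-form analysis is more involved but the overall strategy is identical.
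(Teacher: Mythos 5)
Your proof is correct and matches the intended argument: the paper states this corollary (without spelling out the details) as a direct consequence of Confluence, which is exactly the Church--Rosser-plus-normal-form route you take, and your handling of the \textsc{Conv-PStep} versus \textsc{Conv-PStep-Rev} cases and the observation that a sort only reduces to itself are both right.
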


\begin{corollary}\label[corollary]{corollary:inj-implicit-fun}
  $\PiI{s}{x : A}{B} \simeq \PiI{s'}{x : A'}{B'}$ implies $s = s'$, $A \simeq A'$, and $B \simeq B'$.
\end{corollary}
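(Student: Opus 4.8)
The plan is to derive this injectivity result from the Confluence theorem in the standard way, by reducing convertibility to joinability. First I would establish an auxiliary lemma stating that convertibility coincides with having a common reduct: whenever $A \simeq B$, there exists $C$ with $A \Rightarrow^* C$ and $B \Rightarrow^* C$. This is proved by induction on the derivation of $A \simeq B$ using exactly the three rules \textsc{Conv-Refl}, \textsc{Conv-PStep}, and \textsc{Conv-PStep-Rev}. The reflexive base case is immediate. In the \textsc{Conv-PStep} case the inductive hypothesis supplies a common reduct $D$ of the accumulated zig-zag, and the new single step $B \Rightarrow C$ together with $B \Rightarrow^* D$ is merged by \Cref{lemma:strip} into a common reduct of $A$ and $C$. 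In the \textsc{Conv-PStep-Rev} case the backward step $C \Rightarrow B$ simply prepends to $B \Rightarrow^* D$, so $D$ is already a common reduct of $A$ and $C$. Thus the zig-zag collapses to joinability.

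Next I would prove a shape-preservation lemma for the implicit function constructor: if $\PiI{s}{x : A}{B} \Rightarrow^* C$, then $C$ has the form $\PiI{s}{x : A''}{B''}$ with $A \Rightarrow^* A''$ and $B \Rightarrow^* B''$. The key observation is that a $\Pi$-type is never a redex. Inspecting the parallel reduction rules, the only rule whose left-hand side matches $\PiI{s}{x : A}{B}$ is \textsc{PStep-Implicit-Fun}, which preserves both the head constructor and the sort annotation $s$ while reducing the domain and codomain congruently. A routine induction on the length of the reduction sequence lifts this from single steps to $\Rightarrow^*$.

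Finally I would combine the two lemmas. From $\PiI{s}{x : A}{B} \simeq \PiI{s'}{x : A'}{B'}$ the joinability lemma yields a common reduct $C$. Applying the shape-preservation lemma to each side forces $C = \PiI{s}{x : A_0}{B_0}$ and simultaneously $C = \PiI{s'}{x : A_0'}{B_0'}$; since these describe one and the same term, reading off the sort annotation gives $s = s'$ (this step is morally the same as \Cref{corollary:inj-sort}), and matching the components gives $A_0 = A_0'$ and $B_0 = B_0'$. Then $A \Rightarrow^* A_0$ together with $A' \Rightarrow^* A_0$ witnesses $A \simeq A'$, and likewise $B \Rightarrow^* B_0$ with $B' \Rightarrow^* B_0$ gives $B \simeq B'$, completing the argument.

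I expect no serious obstacle: both lemmas are routine once the Confluence theorem (equivalently \Cref{lemma:strip}) is in hand, and the crucial fact — that $\Pi$-types carry no reduction rule collapsing their head constructor — is read directly off the parallel reduction rules. The only point requiring a little care is that the convertibility relation is presented here as a one-sided zig-zag closure rather than as a symmetric-transitive closure outright, so the joinability lemma must be phrased as an induction over precisely the \textsc{Conv-}$\ast$ rules rather than simply invoked.
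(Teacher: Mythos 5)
Your proposal is correct and follows exactly the route the paper intends: these injectivity results are stated as corollaries of Confluence, and the standard argument — convertibility collapses to joinability via \Cref{lemma:strip}, then the head constructor and sort annotation of $\Pi_{s}\{x:A\}.B$ are preserved under parallel reduction since $\Pi$-types are never redexes — is precisely what is needed. The paper gives no further proof for these corollaries, so there is nothing to diverge from; your two auxiliary lemmas and their assembly are sound.
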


\begin{corollary}\label[corollary]{corollary:inj-explicit-fun}
  $\PiR{s}{x : A}{B} \simeq \PiR{s'}{x : A'}{B'}$ implies $s = s'$, $A \simeq A'$, and $B \simeq B'$.
\end{corollary}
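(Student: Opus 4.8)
The plan is to reduce the statement to the \emph{Confluence} theorem just established, via the standard Church--Rosser argument. First I would observe that the convertibility relation $\simeq$, being generated from parallel reduction by the \textsc{Conv-Refl}, \textsc{Conv-PStep}, and \textsc{Conv-PStep-Rev} rules, enjoys the Church--Rosser property: whenever $M \simeq N$ there exists a common reduct $L$ with $M \Rightarrow^* L$ and $N \Rightarrow^* L$. This follows by induction on the derivation of $M \simeq N$, using \emph{Confluence} to join the two branches at each \textsc{Conv-PStep}/\textsc{Conv-PStep-Rev} step. Applying this to the hypothesis $\PiR{s}{x : A}{B} \simeq \PiR{s'}{x : A'}{B'}$ yields a single term $L$ with $\PiR{s}{x : A}{B} \Rightarrow^* L$ and $\PiR{s'}{x : A'}{B'} \Rightarrow^* L$.

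The key auxiliary fact is that parallel reduction preserves the shape of an explicit function type. Concretely, I would prove by induction on the length of a $\Rightarrow^*$ sequence that if $\PiR{s}{x : A}{B} \Rightarrow^* L$ then $L$ is itself of the form $\PiR{s}{x : C}{D}$ with $A \Rightarrow^* C$ and $B \Rightarrow^* D$, the sort $s$ being left unchanged. The base case is immediate, and the inductive step rests on inspecting the parallel reduction rules: the only rule whose left-hand side has an explicit $\Pi$ at its head is \textsc{PStep-Explicit-Fun}, which keeps the head constructor and the sort fixed while reducing the domain and codomain independently. Crucially, there is no redex rule (such as \textsc{PStep-Explicit-$\beta$}) that consumes a top-level explicit function type, so the head constructor cannot disappear during reduction.

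Combining the two observations finishes the argument. By the shape lemma applied to the first reduction, the common reduct $L$ has the form $\PiR{s}{x : C}{D}$ with $A \Rightarrow^* C$ and $B \Rightarrow^* D$; applied to the second reduction, the same $L$ has the form $\PiR{s'}{x : C'}{D'}$ with $A' \Rightarrow^* C'$ and $B' \Rightarrow^* D'$. Since $L$ is a single syntactic term, matching the two descriptions gives $s = s'$, $C = C'$, and $D = D'$. Then $A \Rightarrow^* C$ and $A' \Rightarrow^* C$ exhibit a common reduct of $A$ and $A'$, whence $A \simeq A'$, and likewise $B \simeq B'$. I expect the main obstacle to be the shape-preservation lemma, though it is routine: its only real content is verifying that no reduction rule eliminates the explicit-$\Pi$ head, after which the confluence machinery does the rest. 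The proof is identical in structure to that of \Cref{corollary:inj-implicit-fun}, with the implicit function constructor replaced by the explicit one.
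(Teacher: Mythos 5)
Your proposal is correct and follows exactly the route the paper intends: these injectivity statements are listed as corollaries of \emph{Confluence}, to be obtained by extracting a common reduct via the Church--Rosser property and then using the fact that no parallel-reduction rule eliminates a top-level $\Pi$ head. The shape-preservation observation and the final joining of $A, A'$ (resp.\ $B, B'$) at the common reduct are precisely the standard argument the paper leaves implicit.
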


\begin{corollary}\label[corollary]{corollary:inj-implicit-sig}
  $\SigI{s}{x : A}{B} \simeq \SigI{s'}{x : A'}{B'}$ implies $s = s'$, $A \simeq A'$, and $B \simeq B'$.
\end{corollary}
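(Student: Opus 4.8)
The plan is to follow the same route that the excerpt indicates for the preceding injectivity corollaries: push the convertibility hypothesis through the Confluence theorem and then exploit the rigidity of the $\Sigma^I$ head under parallel reduction. First I would invoke the Church--Rosser property, which follows from Confluence together with the inductive definition of $\simeq$ as the reflexive-symmetric-transitive closure of $\Rightarrow$. Concretely, I would record the standard lemma that $\SigI{s}{x : A}{B} \simeq \SigI{s'}{x : A'}{B'}$ implies the two sides are \emph{joinable}, i.e. there is a common reduct $P$ with $\SigI{s}{x : A}{B} \Rightarrow^* P$ and $\SigI{s'}{x : A'}{B'} \Rightarrow^* P$. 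This is proved by induction on the derivation of $\simeq$ (rules \textsc{Conv-Refl}, \textsc{Conv-PStep}, \textsc{Conv-PStep-Rev}), maintaining joinability as an invariant and applying Confluence at each zig-zag step.

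The key technical ingredient is a shape-preservation lemma: if $\SigI{s}{x : A}{B} \Rightarrow^* Q$, then $Q$ has the form $\SigI{s}{x : A^*}{B^*}$ for some $A^*, B^*$ with $A \Rightarrow^* A^*$ and $B \Rightarrow^* B^*$. I would prove this by induction on the number of parallel-reduction steps. The base case is immediate, and for the inductive step the crucial observation is that inspection of the single-step parallel-reduction rules shows the only rule whose left-hand side is a $\Sigma^I$-term is \textsc{PStep-Implicit-Sum}; since $\Sigma^I$ is a type former and is never the principal term of any elimination redex (elimination acts on implicit \emph{pairs} via $\text{R}^{\Sigma}$, not on the type itself), no reduction can alter the outermost constructor or its sort annotation $s$. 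Hence each step rewrites $\SigI{s}{x : A_i}{B_i}$ to $\SigI{s}{x : A_{i+1}}{B_{i+1}}$ with $A_i \Rightarrow A_{i+1}$ and $B_i \Rightarrow B_{i+1}$, and composing the steps yields the claim.

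Finally I would combine the two pieces. Applying the shape-preservation lemma to $\SigI{s}{x : A}{B} \Rightarrow^* P$ shows $P = \SigI{s}{x : A^*}{B^*}$ with $A \Rightarrow^* A^*$ and $B \Rightarrow^* B^*$, while applying it to $\SigI{s'}{x : A'}{B'} \Rightarrow^* P$ simultaneously forces the \emph{same} term $P$ to carry the sort annotation $s'$ and to satisfy $A' \Rightarrow^* A^*$ and $B' \Rightarrow^* B^*$. Comparing the two descriptions of $P$ gives $s = s'$ directly, while $A \Rightarrow^* A^*$ together with $A' \Rightarrow^* A^*$ gives $A \simeq A'$ (and symmetrically $B \simeq B'$) since $\Rightarrow^*\ \subseteq\ \simeq$. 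I do not anticipate a genuine obstacle: the argument is structurally identical to the $\Pi$-type cases in \Cref{corollary:inj-implicit-fun} and \Cref{corollary:inj-explicit-fun}, and the only point requiring care is the exhaustive verification that no parallel-reduction rule collapses or mutates a $\Sigma^I$ head, which is precisely what makes the shape-preservation induction go through.
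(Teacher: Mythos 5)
Your proposal is correct and matches the paper's (implicit) argument: the paper states this corollary as a direct consequence of the Confluence theorem, and the standard route you spell out — joinability via Church--Rosser, then head-shape preservation of $\SigI{s}{x:A}{B}$ under parallel reduction (only \textsc{PStep-Implicit-Sum} applies to a $\Sigma^I$ head, and elimination acts on pairs, not the type former) — is precisely the reasoning the paper leaves unstated. No gaps.
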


\begin{corollary}\label[corollary]{corollary:inj-explicit-sig}
  $\SigR{s}{x : A}{B} \simeq \SigR{s'}{x : A'}{B'}$ implies $s = s'$, $A \simeq A'$, and $B \simeq B'$.
\end{corollary}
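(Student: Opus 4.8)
The plan is to derive this corollary from the Confluence theorem by the standard Church--Rosser argument, in exactly the same way one would prove the companion injectivity corollaries above it. First I would establish a \emph{joinability} characterization of convertibility: $A \simeq B$ holds if and only if there is a common reduct $C$ with $A \Rightarrow^* C$ and $B \Rightarrow^* C$. The ``if'' direction is immediate, since $\simeq$ contains $\Rightarrow$ and is symmetric and transitive by \textsc{Conv-PStep} and \textsc{Conv-PStep-Rev}. For ``only if'' I would induct on the derivation of $A \simeq B$: the base case \textsc{Conv-Refl} is trivial, and in each inductive step a single parallel-reduction step is appended to one side, after which the Confluence theorem merges it with the existing common reduct to produce a fresh one. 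This lemma reduces the entire problem to reasoning about $\Rightarrow^*$.

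Next I would prove a shape-preservation lemma for explicit sums: if $\SigR{s}{x : A}{B} \Rightarrow^* C$, then $C = \SigR{s}{x : A''}{B''}$ for some $A'', B''$ with $A \Rightarrow^* A''$ and $B \Rightarrow^* B''$. This proceeds by induction on the length of the reduction sequence. For a single step I would inspect the parallel-reduction rules and observe that the only rule whose left-hand side is a bare $\SigR{\cdot}{\cdot}{\cdot}$ is \textsc{PStep-Explicit-Sum}, which keeps the head constructor and the sort annotation $s$ fixed while merely reducing the two components. Crucially, there is no $\beta$-like rule that can rewrite a $\Sigma$-\emph{type} at its head; elimination acts only on the separate pair-elimination term, not on a $\Sigma$-type former. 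Hence both the $\Sigma$ shape and the sort $s$ are invariant along the whole chain.

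Combining the two lemmas finishes the argument. From $\SigR{s}{x : A}{B} \simeq \SigR{s'}{x : A'}{B'}$, joinability yields a common reduct $C$. Applying shape preservation to each side forces $C$ to be simultaneously $\SigR{s}{x : A_1}{B_1}$ and $\SigR{s'}{x : A_2}{B_2}$; since terms are syntactically unique this gives $s = s'$, $A_1 = A_2$, and $B_1 = B_2$. Then $A \Rightarrow^* A_1$ and $A' \Rightarrow^* A_1$ exhibit a common reduct of $A$ and $A'$, so $A \simeq A'$ by the joinability characterization, and symmetrically $B \simeq B'$.

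I expect the only delicate point to be the ``only if'' direction of the joinability lemma, where the Confluence theorem must be invoked to absorb each newly added step into the running common reduct; the shape-preservation induction is entirely routine, as the case analysis on parallel-reduction rules shows no rule can destroy a head $\Sigma$-type. The same template proves the preceding corollaries for sorts, explicit and implicit $\Pi$-types, and implicit $\Sigma$-types verbatim, differing only in which single structural rule (\textsc{PStep-Sort}, \textsc{PStep-Explicit-Fun}, etc.) is the unique one applicable at the head.
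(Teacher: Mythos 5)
Your proposal is correct and matches the paper's intent: the paper states these injectivity results as unproved corollaries of the Confluence theorem, and the standard Church--Rosser argument you give (joinability characterization of $\simeq$ via confluence, plus head-shape preservation under parallel reduction, noting that \textsc{PStep-Explicit-Sum} is the only rule applicable to a head $\Sigma$-type and fixes the sort annotation) is exactly the intended derivation.
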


\begin{corollary}\label[corollary]{corollary:inj-monad}
  $\CM{A} \simeq \CM{B}$ implies $A \simeq B$.
\end{corollary}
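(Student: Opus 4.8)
The plan is to reduce this to the Church--Rosser characterization of convertibility that the Confluence theorem just established. First I would record the standard fact that $X \simeq Y$ holds if and only if $X$ and $Y$ share a common reduct, i.e.\ there exists $Z$ with $X \Rightarrow^* Z$ and $Y \Rightarrow^* Z$. The ``if'' direction is immediate from the definition of $\simeq$ via \textsc{Conv-Refl}, \textsc{Conv-PStep} and \textsc{Conv-PStep-Rev} (a one-sided chain of steps is a special case of a zigzag). The ``only if'' direction is proved by induction on the derivation of $X \simeq Y$: the reflexive case is trivial, and each extension by a forward or backward $\Rightarrow$ step is closed up using the Confluence theorem to merge the new step with the inductively obtained common reduct. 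This characterization is exactly the lemma that all the injectivity corollaries (\cref{corollary:inj-sort,corollary:inj-implicit-fun,corollary:inj-monad}) rest on, so I would state it once.

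The second ingredient is a head-preservation lemma for the $\mcC$ constructor: if $\CM{A} \Rightarrow^* C$, then $C$ has the form $\CM{A'}$ with $A \Rightarrow^* A'$. This is proved by a routine induction on the length of the reduction sequence, where the single-step case is settled by inspecting the parallel-reduction rules. The only rule whose left-hand side matches a term of the form $\CM{\cdot}$ is \textsc{PStep-$\mcC$Type}, which rewrites $\CM{A}$ to $\CM{A'}$ precisely when $A \Rightarrow A'$; no rule erases or rewrites the $\mcC$ head into a different constructor. Hence the monadic head is stable under $\Rightarrow^*$ and the argument is tracked componentwise.

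Assembling the two pieces: given $\CM{A} \simeq \CM{B}$, the characterization yields a common reduct $C$ with $\CM{A} \Rightarrow^* C$ and $\CM{B} \Rightarrow^* C$. Applying the head-preservation lemma to both reductions forces $C = \CM{C'}$ for some $C'$ with $A \Rightarrow^* C'$ and $B \Rightarrow^* C'$. Thus $A$ and $B$ share the common reduct $C'$, and the ``if'' direction of the characterization gives $A \simeq B$, as required. I expect the only mildly delicate step to be the ``only if'' direction of the Church--Rosser characterization, where the backward steps introduced by \textsc{Conv-PStep-Rev} must be reconciled with forward reducts; this is where Confluence does the real work, and once it is in hand the remainder is a mechanical inspection of the reduction rules with no computation.
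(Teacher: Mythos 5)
Your proof is correct and follows exactly the route the paper intends: the corollary is stated as a consequence of Confluence, and the standard argument is precisely your two-step decomposition into (i) the Church--Rosser characterization of $\simeq$ as joinability and (ii) stability of the $\mcC$ head under parallel reduction (only \textsc{PStep-$\mcC$Type} applies). No gaps.
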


\begin{corollary}\label[corollary]{corollary:inj-implicit-action}
  $\ActI{\rho}{x : A}{B} \simeq \ActI{\rho'}{x : A'}{B'}$ implies $\rho = \rho'$, $A \simeq A'$, and $B \simeq B'$.
\end{corollary}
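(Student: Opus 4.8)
The plan is to reduce the statement to a common-reduct analysis, exactly as for the preceding injectivity corollaries (e.g.\ \Cref{corollary:inj-implicit-sig}). First I would establish the Church--Rosser characterization of convertibility: $m \simeq n$ holds if and only if there exists a term $p$ with $m \Rightarrow^* p$ and $n \Rightarrow^* p$. The forward direction is the nontrivial one and follows by induction on the derivation of $m \simeq n$ over the rules \textsc{Conv-Refl}, \textsc{Conv-PStep}, and \textsc{Conv-PStep-Rev}, using the Confluence theorem to merge the common reducts produced at each step. This is a generic consequence of confluence, independent of the particular head constructor, so it is shared with all the earlier injectivity corollaries.

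Next I would characterize the shape of parallel reducts of an implicit action. The claim is that whenever $\ActI{\rho}{x : A}{B} \Rightarrow^* N$, the term $N$ has the form $\ActI{\rho}{x : A''}{B''}$ with $A \Rightarrow^* A''$ and $B \Rightarrow^* B''$; in particular the action marker $\rho$ is unchanged. This follows by induction on the length of the reduction sequence: a single parallel step out of $\ActI{\rho}{x : A}{B}$ can only be an instance of \textsc{PStep-Implicit-Action}, since there is no reduction rule that rewrites the implicit-action head itself (it is neither a $\beta$-redex nor any other eliminable form). That rule preserves both the head constructor and the tag $\rho$ while reducing the components, so the shape is stable and $\rho$ is never touched.

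Finally I would combine the two. Applying the Church--Rosser characterization to the hypothesis $\ActI{\rho}{x : A}{B} \simeq \ActI{\rho'}{x : A'}{B'}$ yields a common reduct $p$ with $\ActI{\rho}{x : A}{B} \Rightarrow^* p$ and $\ActI{\rho'}{x : A'}{B'} \Rightarrow^* p$. Applying the shape lemma to the left side forces $p = \ActI{\rho}{x : A_0}{B_0}$ with $A \Rightarrow^* A_0$ and $B \Rightarrow^* B_0$; applying it to the right side forces the same $p$ to equal $\ActI{\rho'}{x : A_1}{B_1}$ with $A' \Rightarrow^* A_1$ and $B' \Rightarrow^* B_1$. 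Since both presentations denote the identical term $p$, we read off $\rho = \rho'$, $A_0 = A_1$, and $B_0 = B_1$. Hence $A$ and $A'$ share the reduct $A_0$, and $B$ and $B'$ share the reduct $B_0$, which by the definition of $\simeq$ yields $A \simeq A'$ and $B \simeq B'$, completing the argument.

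The main obstacle is the shape-preservation lemma: I must be confident that no parallel reduction rule can collapse or rewrite the implicit-action head, and in particular that the binder $x$ and the marker $\rho$ are inert under reduction. Inspecting the session reduction rules confirms this, since the only applicable rule is \textsc{PStep-Implicit-Action}, but one must be careful to state the induction over $\Rightarrow^*$ rather than a single step and to account for reductions occurring under the binder in $B$, which is precisely what the congruence form of \textsc{PStep-Implicit-Action} provides.
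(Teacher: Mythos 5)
Your proof is correct and follows exactly the route the paper intends: these injectivity facts are stated as corollaries of the Confluence theorem, to be derived via the Church--Rosser characterization of $\simeq$ together with the observation that \textsc{PStep-Implicit-Action} is the only parallel-reduction rule applicable to an implicit-action head and preserves both the constructor and the marker $\rho$. No gaps; the care you take to phrase the shape lemma over $\Rightarrow^*$ rather than a single step is exactly what is needed.
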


\begin{corollary}\label[corollary]{corollary:inj-explicit-action}
  $\ActR{\rho}{x : A}{B} \simeq \ActR{\rho'}{x : A'}{B'}$ implies $\rho = \rho'$, $A \simeq A'$, and $B \simeq B'$.
\end{corollary}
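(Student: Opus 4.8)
The plan is to derive injectivity from the Confluence theorem just established, exploiting the fact that parallel reduction can never eliminate the explicit-action head constructor. First I would apply Confluence to the hypothesis $\ActR{\rho}{x : A}{B} \simeq \ActR{\rho'}{x : A'}{B'}$. Since $\simeq$ is the reflexive, symmetric, and transitive closure of $\Rightarrow$, the convertibility unfolds into a finite zig-zag of parallel reductions connecting the two terms; iterated application of Confluence collapses this zig-zag to a single common reduct $C$ satisfying $\ActR{\rho}{x : A}{B} \Rightarrow^* C$ and $\ActR{\rho'}{x : A'}{B'} \Rightarrow^* C$.

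The crux is an auxiliary shape-preservation lemma: if $\ActR{\rho}{x : A}{B} \Rightarrow^* C$, then $C$ has the form $\ActR{\rho}{x : A''}{B''}$ for some $A'', B''$ with $A \Rightarrow^* A''$ and $B \Rightarrow^* B''$. This follows by induction on the length of the reduction sequence. The base case is immediate, and the inductive step relies on inspecting the single-step rules: the only rule whose conclusion can match an explicit-action term is \textsc{PStep-Explicit-Action}, which preserves $\rho$ and merely reduces the two components. Since there is no $\beta$-style or elimination rule that consumes the action head, the head constructor and the action marker $\rho$ persist through every step.

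Applying this lemma to both reduction sequences yields $C = \ActR{\rho}{x : A_1}{B_1}$ from the left and $C = \ActR{\rho'}{x : A_2}{B_2}$ from the right. Matching these two syntactic presentations of the same term $C$ forces $\rho = \rho'$, $A_1 = A_2$, and $B_1 = B_2$. Chaining the component reductions through $C$ then gives $A \Rightarrow^* A_1$ and $A' \Rightarrow^* A_1$, whence $A \simeq A'$ by definition of $\simeq$, and symmetrically $B \simeq B'$. This completes the argument, and the proof is structurally identical to the companion corollaries for $\Pi$-types and $\Sigma$-types.

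I expect the main (though modest) obstacle to be the precise statement and verification of the shape-preservation lemma, specifically the exhaustive check that no left-hand side of a $\Rightarrow$ rule other than \textsc{PStep-Explicit-Action} can rewrite $\ActR{\rho}{x : A}{B}$ into a term of a different head shape. This is immediate from the syntax-directed presentation of the parallel reduction relation, but it is the one place where the absence of a head-destroying redex must be made explicit.
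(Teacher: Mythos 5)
Your proposal is correct and follows exactly the route the paper intends: the injectivity corollaries are stated as consequences of the Confluence theorem, and the standard argument is precisely the one you give — Church--Rosser collapses the convertibility zig-zag to a common reduct, and a head-shape-preservation observation (the only parallel-reduction rule applicable to $\ActR{\rho}{x : A}{B}$ is \textsc{PStep-Explicit-Action}, which keeps $\rho$ and the head constructor) forces the components to share a common reduct, hence $A \simeq A'$ and $B \simeq B'$. No gaps.
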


\begin{corollary}\label[corollary]{corollary:inj-ch}
  $\CH{A} \simeq \CH{B}$ implies $A \simeq B$. 
\end{corollary}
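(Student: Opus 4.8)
The plan is to derive this injectivity result from the Confluence theorem via the standard Church--Rosser characterization of convertibility. First I would observe that, since $\simeq$ is the equivalence closure of the parallel reduction $\Rightarrow$ and $\Rightarrow$ is confluent, the relation $A \simeq B$ holds if and only if $A$ and $B$ share a common reduct, i.e. there exists $C$ with $A \Rightarrow^* C$ and $B \Rightarrow^* C$. The forward direction is proved by induction on the derivation of $A \simeq B$, handling the cases \textsc{Conv-Refl}, \textsc{Conv-PStep}, and \textsc{Conv-PStep-Rev}: in each inductive case we take the common reduct supplied by the induction hypothesis and use Confluence to join it with the extra reduction step, producing a single common reduct of the endpoints. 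The backward direction is immediate, since a common reduct directly witnesses $A \simeq B$.

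Applying this characterization to the hypothesis $\CH{A} \simeq \CH{B}$, I obtain a common reduct $C$ with $\CH{A} \Rightarrow^* C$ and $\CH{B} \Rightarrow^* C$. The key structural observation is that the constructor $\CH{\cdot}$ is \emph{stable} under parallel reduction. Inspecting the reduction rules, the only rule whose left-hand side has head $\CH{\cdot}$ is \textsc{PStep-CH}, and no rule eliminates a $\CH{\cdot}$ head: unlike application or pair elimination, there is no $\beta$-style redex for channel types. Hence by a routine induction on the length of the reduction sequence, any $D$ with $\CH{A} \Rightarrow^* D$ must have the form $D = \CH{A''}$ for some $A''$ with $A \Rightarrow^* A''$, and symmetrically for $B$.

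Combining these facts, the common reduct $C$ must itself be of the form $\CH{C'}$ with $A \Rightarrow^* C'$ and $B \Rightarrow^* C'$. These two reductions exhibit $C'$ as a common reduct of $A$ and $B$, whence $A \simeq B$, as required. The main (and only nontrivial) obstacle is establishing the stability lemma for the $\CH{\cdot}$ head, but this is entirely mechanical given the shape of the parallel reduction rules. The remaining injectivity corollaries (for $\HC{\cdot}$, the dependent function and pair constructors, the monad $\CM{\cdot}$, and the protocol actions) follow by exactly the same argument applied to their respective head constructors, each of which is similarly stable under $\Rightarrow$.
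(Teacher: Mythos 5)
Your proposal is correct and follows exactly the route the paper intends: it presents these injectivity facts as corollaries of Confluence, relying on the Church--Rosser characterization of $\simeq$ together with the observation that \textsc{PStep-CH} is the only parallel-reduction rule touching a $\CH{\cdot}$ head, so the common reduct must itself be a $\CH{\cdot}$ type. No gaps; the head-stability lemma you identify is the only real content, and your treatment of it is sound.
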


\begin{corollary}\label[corollary]{corollary:inj-hc}
  $\HC{A} \simeq \HC{B}$ implies $A \simeq B$.
\end{corollary}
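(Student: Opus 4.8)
The plan is to derive this corollary from the Confluence theorem via the standard Church--Rosser characterization of convertibility. Since $\simeq$ is defined (rules \textsc{Conv-Refl}, \textsc{Conv-PStep}, \textsc{Conv-PStep-Rev}) as the reflexive--symmetric--transitive closure of parallel reduction $\Rightarrow$, Confluence gives the equivalence: $X \simeq Y$ holds if and only if there is a common reduct $C$ with $X \Rightarrow^* C$ and $Y \Rightarrow^* C$. The forward direction is an induction on the derivation of $X \simeq Y$ that repeatedly invokes Confluence to merge the zig-zag of reductions into a single peak; the backward direction is immediate. I would either prove this characterization once as a shared lemma or inline it, since every injectivity corollary (\Cref{corollary:inj-sort} through \Cref{corollary:inj-hc}) rests on it.

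The second ingredient is a head-stability lemma for the $\HC{\cdot}$ constructor: if $\HC{A} \Rightarrow^* C$, then $C = \HC{A'}$ for some $A'$ with $A \Rightarrow^* A'$. I would prove this by induction on the length of the multi-step reduction. The base case is trivial. For the inductive step, it suffices to inspect which single-step parallel reduction rules can fire with $\HC{\cdot}$ at the head. Scanning the session reduction rules, the only applicable rule is \textsc{PStep-HC}, which reduces $\HC{A}$ to $\HC{A''}$ precisely when $A \Rightarrow A''$; crucially, there is \emph{no} rule (analogous to a $\beta$- or projection-redex) that rewrites an $\HC{\cdot}$ term to something with a different head. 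Hence the $\HC{\cdot}$ head is preserved under all of $\Rightarrow^*$, and the argument under it reduces correspondingly.

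Combining the two, suppose $\HC{A} \simeq \HC{B}$. By the characterization there is a common reduct $C$ with $\HC{A} \Rightarrow^* C$ and $\HC{B} \Rightarrow^* C$. Applying the head-stability lemma to each side, $C = \HC{A'}$ with $A \Rightarrow^* A'$, and simultaneously $C = \HC{B'}$ with $B \Rightarrow^* B'$. Since $\HC{A'} = \HC{B'}$ forces $A' = B'$, we obtain a common reduct of $A$ and $B$, and the backward direction of the characterization yields $A \simeq B$.

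I do not expect a genuine obstacle here: the work is entirely routine once Confluence is available, and the case analysis is short because $\HC{\cdot}$ participates in no redex-forming rule. The only point requiring a little care is the forward direction of the Church--Rosser characterization, where one must correctly discharge both \textsc{Conv-PStep} and \textsc{Conv-PStep-Rev} steps when folding the symmetric closure into a single confluent peak; this is exactly where Confluence is invoked, and it is the same reasoning already needed for the preceding corollaries (the proof is moreover entirely symmetric to that of \Cref{corollary:inj-ch}, differing only in the constructor inspected).
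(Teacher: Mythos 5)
Your proof is correct and follows exactly the route the paper intends: the corollary is stated as a direct consequence of Confluence, via the common-reduct characterization of $\simeq$ together with the observation that $\HC{\cdot}$ heads are stable under parallel reduction (only \textsc{PStep-HC} applies). No gaps.
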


\paragraph{\textbf{Weakening}}
Weakening allows for the addition of unused variables to a typing context. 
The logical level type system allows weakening as it is a fully structural type system.
On the other hand, the program level type system only allows weakening of unrestricted variables, 
i.e. variables whose types inhabit $\Un$.

\begin{lemma}[Renaming Arity]
  Given renaming $\xi$, if there is $A~\arity{\Proto}$, then there is $A[\xi]~\arity{\Proto}$.
\end{lemma}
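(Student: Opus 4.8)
The plan is to proceed by induction on the derivation of $A~\arity{\Proto}$, equivalently by induction on the structure of $A$ as constrained by the arity judgment. Because the lemma quantifies over an arbitrary renaming $\xi$, the induction hypothesis is available for \emph{every} renaming, which is precisely what we need in order to pass under binders where the renaming must be lifted.

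For the base case \textsc{Arity-Base}, we have $A = \Proto$. Since $\Proto$ is a closed constant, it is invariant under renaming, so $A[\xi] = \Proto[\xi] = \Proto$, and \textsc{Arity-Base} immediately gives $A[\xi]~\arity{\Proto}$.

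For the \textsc{Arity-Implicit} case, $A = \PiI{t}{x : C}{B}$ with a subderivation $B~\arity{\Proto}$. Renaming commutes with $\Pi$-formation once we lift the renaming under the binder: writing $\xi'$ for $\xi$ extended to fix the bound variable $x$, we have $A[\xi] = \PiI{t}{x : C[\xi]}{B[\xi']}$. Applying the induction hypothesis to $B~\arity{\Proto}$ with the renaming $\xi'$ yields $B[\xi']~\arity{\Proto}$, and then \textsc{Arity-Implicit} gives $A[\xi] = \PiI{t}{x : C[\xi]}{B[\xi']}~\arity{\Proto}$. The \textsc{Arity-Explicit} case is identical, replacing $\PiI{t}{x : C}{B}$ by $\PiR{t}{x : C}{B}$ throughout and appealing to \textsc{Arity-Explicit} at the end.

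There is no genuine obstacle here: the argument is a routine structural induction, and the only facts requiring (trivial) care are that $\Proto$ is stable under renaming and that renaming commutes with the two $\Pi$-formers after lifting under their binders. Both follow immediately from the definition of capture-avoiding renaming on the syntax of \TLLC{}. The analogous substitution-stability claim (used to justify that arities are preserved under substitution, as remarked in \Cref{sec:dependent-session-types}) is proved by the same induction, differing only in that the binder case substitutes a lifted substitution rather than a lifted renaming.
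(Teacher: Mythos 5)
Your proof is correct and matches the paper's approach: the paper proves this lemma by induction on the structure of $A$, which is exactly the structural induction you carry out, with the only detail being the lifting of the renaming under the $\Pi$-binders. Nothing further is needed.
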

\begin{proof}
  By induction on the structure of $A$.
\end{proof}

\begin{lemma}[Renaming Guarded]
  Given renaming $\xi$, 
  if there is $\forall x, y, \xi(x) = \xi(y) \implies x = y$, 
  then given variable $x$ and $A~\guard{x}$, there is $A[\xi]~\guard{\xi(x)}$.
\end{lemma}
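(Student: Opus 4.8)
The plan is to prove the statement by structural induction on the derivation of $A~\guard{x}$; since the guardedness judgment is syntax-directed, this coincides with induction on the shape of $A$. I would first generalize the induction hypothesis over \emph{all} injective renamings (and over the tracked variable), so that it can be re-instantiated with a \emph{lifted} renaming each time the analysis descends under a binder. For every rule the pattern is uniform: invert to expose the guardedness premises on the immediate subterms, apply the induction hypothesis to each, and reassemble with the guard rule matching the head constructor of $A[\xi]$ — which is the same constructor as that of $A$, since renaming commutes with every term former.

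The single case that genuinely consumes the injectivity hypothesis is \textsc{Guard-Var}: here $A = y$ with side condition $x \neq y$, and the goal $\xi(y)~\guard{\xi(x)}$ reduces, via \textsc{Guard-Var} again, to $\xi(x) \neq \xi(y)$. This is exactly the contrapositive of the assumption $\forall x, y.\ \xi(x) = \xi(y) \implies x = y$, so the rule reapplies. All remaining leaves — \textsc{Guard-Sort}, \textsc{Guard-Unit}, \textsc{Guard-UnitVal}, \textsc{Guard-Bool}, \textsc{Guard-True}, \textsc{Guard-False}, \textsc{Guard-Proto}, \textsc{Guard-End}, and \textsc{Guard-Channel} — discharge unconditionally, since renaming sends each to a term of the same form whose guard rule is an axiom. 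The non-binding congruence cases (the two application rules, the pair rules, \textsc{Guard-$\C$Type}, \textsc{Guard-CH}, \textsc{Guard-HC}, the send/recv/close/wait formers, and \textsc{Guard-Return}) follow by direct appeals to the induction hypothesis with the same $\xi$.

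The interesting work lives in the binder-introducing cases: \textsc{Guard-Explicit-Fun} and \textsc{Guard-Implicit-Fun}, the two lambda rules, the two $\Sigma$ rules, \textsc{Guard-Bind}, \textsc{Guard-RecProto}, \textsc{Guard-Fork}, and the eliminators \textsc{Guard-BoolElim} and \textsc{Guard-Explicit-SumElim} (whose motive, and in the latter case whose branch, scope fresh variables). In each, the subterm lying under a bound variable $x'$ is renamed by the lifted renaming $\xi'$, so I would isolate one short supporting fact — that $\xi'$ is injective whenever $\xi$ is — and apply the generalized induction hypothesis to that subterm; under the freshness convention $x \neq x'$ the lift satisfies $\xi'(x) = \xi(x)$, so the tracked variable in the reassembled conclusion is the intended $\xi(x)$. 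The two action rules \textsc{Guard-Implicit-Action} and \textsc{Guard-Explicit-Action} are in fact the easiest of this group: their premise constrains only the payload type $A$, which lies \emph{outside} the scope of the bound $x'$, so a single appeal to the induction hypothesis with $\xi$ suffices and the continuation $B$ is carried along untouched. I expect no conceptual obstacle here — the content of the lemma is precisely the administrative threading of the lifted renaming through binders and the one-line verification that lifting preserves injectivity, with the \textsc{Guard-Var} step being the only place the injectivity hypothesis is actually needed.
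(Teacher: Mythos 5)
Your proposal is correct and matches the paper's proof, which is simply ``by induction on the structure of $A$'' — your elaboration of the \textsc{Guard-Var} case as the sole consumer of injectivity, the preservation of injectivity under lifting for binder cases, and the observation that the action rules constrain only the payload are all the right details of that induction.
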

\begin{proof}
  By induction on the structure of $A$.
\end{proof}

\begin{lemma}[Logical Weakening]
  If $\Gamma \vdash m : A$ and $\Gamma \vdash B : s$, 
  then $\Gamma, x : B \vdash m : A$ where $x \not\in \Gamma$.
\end{lemma}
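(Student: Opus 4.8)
The plan is to prove this by mutual induction on the derivation of $\Gamma \vdash m : A$ together with the context-validity judgment $\Gamma \vdash$, with which it is mutually defined. The naive statement — inserting $x : B$ at the right end of $\Gamma$ — does not survive contact with the binder rules (\textsc{Explicit-Lam} and \textsc{Implicit-Lam}, the $\Pi$- and $\Sigma$-formers, the eliminators, and \textsc{RecProto}), because each recursive premise is typed under a context extended by the bound variable, so the induction hypothesis must supply weakening \emph{under} that binder. The clean way to obtain this, and the one foreshadowed by the immediately preceding \textsc{Renaming Arity} and \textsc{Renaming Guarded} lemmas, is to route the proof through a general renaming lemma: if $\Gamma \vdash m : A$ and a renaming $\xi$ agrees with $\Gamma$ and $\Gamma'$ (the $\LogicalAgreeRen{\xi}{\Gamma}{\Gamma'}$ relation), then $\Gamma' \vdash m[\xi] : A[\xi]$. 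Logical weakening is then the special case where $\xi$ is the evident inclusion of $\Gamma$ into $\Gamma, x : B$, whose agreement obligation is discharged by the hypothesis $\Gamma \vdash B : s$ and the freshness of $x$.

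First I would state this renaming lemma and prove it by the same mutual induction. For the structural leaf rules (\textsc{Sort}, \textsc{Unit}, \textsc{Bool}, \textsc{Proto}, \textsc{End}, and the value constructors) the only content is weakening the context-validity premise, which the mutual induction provides. For \textsc{Var}, the agreement relation is exactly what relocates the variable: it guarantees that $\xi(x)$ is bound in $\Gamma'$ at the renamed type $A[\xi]$. The \textsc{Conversion} rule passes through unchanged, since convertibility $A \simeq B$ is a relation on raw terms that commutes with renaming, so $A[\xi] \simeq B[\xi]$ follows from $A \simeq B$. For every binder rule, the induction hypothesis is applied to the premise under the renaming $\xi$ lifted past the bound variable; reapplying the rule and using the fact that substitution commutes with renaming (so that $B[n/x][\xi] = B[\xi][n[\xi]/x]$ in the application and pair rules) closes these cases. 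The two \textsc{Channel} rules are immediate: their side premise $\epsilon \vdash A : \Proto$ lives in the empty context and is untouched by $\xi$.

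The step I expect to be the crux is \textsc{RecProto}, whose premises carry the two syntactic side conditions $A~\arity{\Proto}$ and $m~\guard{x}$. Preserving these under $\xi$ is precisely the role of the preceding \textsc{Renaming Arity} and \textsc{Renaming Guarded} lemmas; the latter requires $\xi$ to be injective, which holds for the weakening renaming (and, more generally, for any renaming arising from agreement), so arity transports to $A[\xi]~\arity{\Proto}$ and guardedness to $m[\xi]~\guard{\xi(x)}$. The remaining delicate point is bookkeeping in the mutual induction: re-establishing $\Gamma, x : B \vdash$ at the leaves via \textsc{Ctx-Var} from the hypothesis $\Gamma \vdash B : s$, and checking that the lifted renaming continues to satisfy the agreement relation each time the context is extended by a binder. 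Once the renaming lemma is in hand, the weakening statement is a one-line instantiation at the inclusion renaming.
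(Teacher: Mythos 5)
Your proposal is correct and matches the paper's approach: the paper proves this by induction on the typing derivation in a De Bruijn--indexed Rocq development (file \textsf{sta\_weak.v}), which in practice means exactly the renaming-lemma route you describe, with the preceding \textsc{Renaming Arity} and \textsc{Renaming Guarded} lemmas discharging the \textsc{RecProto} side conditions as you anticipate. No gaps.
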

\begin{proof}
  By induction on the derivation of $\Gamma \vdash m : A$.
  For more details, see file \textsf{sta\_weak.v} of our Rocq development
  which uses a De Bruijn indices representation for variables.
\end{proof}

\begin{lemma}[Program Weakening (Explicit)]
  If $\Theta ; \Gamma ; \Delta \vdash m : A$ and $\Gamma \vdash B : \Un$,
  then $\Theta ; \Gamma, x : B ; \Delta, x :_\Un B \vdash m : A$ where $x \not\in \Gamma$.
\end{lemma}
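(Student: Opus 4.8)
The plan is to proceed by induction on the derivation of $\Theta ; \Gamma ; \Delta \vdash m : A$, threading the fresh unrestricted binding $x \tU B$ through every rule. Since $x$ is fresh and unrestricted, three structural facts drive essentially all cases: (i) adding $x \tU B$ to both operands of a context merge is absorbed by the contraction rule \textsc{Merge-$\Un$}, so that $(\Delta_1, x \tU B) \dotcup (\Delta_2, x \tU B) = (\Delta_1 \dotcup \Delta_2), x \tU B$; (ii) an unrestricted binding satisfies both restriction predicates, i.e.\ $\Delta \triangleright t$ implies $(\Delta, x \tU B) \triangleright t$ for $t \in \{\Un, \Ln\}$, by \textsc{Re-$\Un$} and \textsc{Re-$\Ln$}; and (iii) the logical side conditions are discharged by the already-established Logical Weakening lemma together with context validity via \textsc{Ctx-Explicit-Var}. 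The channel context $\Theta$ is never touched, so it is carried unchanged---in the merge-based rules we reuse the original split $\Theta_1 \dotcup \Theta_2$ verbatim.

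First I would dispatch the leaf rules. For \textsc{Var}, \textsc{UnitVal}, \textsc{True}, and \textsc{False}, the only work is to re-establish the $\Delta \triangleright \Un$ side condition for the enlarged program context, which is immediate from fact~(ii), and to reform the context-validity premise by \textsc{Ctx-Explicit-Var} using $\Gamma \vdash B : \Un$. For \textsc{Conversion}, I apply the induction hypothesis (IH) to the typing premise and re-derive the well-sortedness premise $\Gamma, x : B \vdash A : s$ (respectively, the target type) by Logical Weakening, while convertibility is unaffected.

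The core of the argument is the family of merge-based rules---\textsc{Explicit-App}, \textsc{Explicit-Pair}, \textsc{Explicit-SumElim}, \textsc{BoolElim}, and especially \textsc{Bind}. In each, I apply the IH to every program-level premise, obtaining derivations whose program contexts are $\Delta_i, x \tU B$, then recombine them with fact~(i) so that the conclusion's context is exactly $(\Delta_1 \dotcup \Delta_2), x \tU B$ as required. For the binder rules \textsc{Explicit-Lam}, \textsc{Implicit-Lam}, and the second premises of \textsc{Bind} and \textsc{Explicit-SumElim}, the IH is invoked under the additional bound variable; here I must commute the freshly weakened $x \tU B$ past the locally bound variable, which in the de Bruijn presentation of the Rocq development (\textsf{sta\_weak.v}) amounts to the standard index-shift bookkeeping. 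The restriction premises $\Delta \triangleright t$ are preserved by fact~(ii), while $\Theta \triangleright t$ is unchanged. For the implicit rules \textsc{Implicit-App} and \textsc{Implicit-Pair}, whose argument is typed purely logically by $\Gamma \vdash n : A$, I leave the main premise to the IH and discharge the logical premise by Logical Weakening.

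The step I expect to be the main obstacle is the merge distribution: one must verify that the weakened unrestricted variable is replicated into \emph{both} halves of each context split rather than being forced into one, which is precisely what \textsc{Merge-$\Un$} guarantees and what would fail outright for a linear binding---this is exactly why the lemma is restricted to unrestricted weakening. A secondary point of care is the \textsc{Var} case, where the weakened context no longer ends in the variable being projected; this is reconciled by observing that the appended binding is unrestricted and unused, so the required $\Delta \triangleright \Un$ condition and the underlying renaming still hold. Everything else reduces to routine reassembly of the premises under the induction hypothesis.
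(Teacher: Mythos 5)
Your proposal is correct and follows essentially the same route as the paper, which proves this lemma by induction on the typing derivation (with details delegated to the Rocq development \textsf{dyn\_weak.v}); your three structural facts --- absorption of the fresh unrestricted binding under $\dotcup$ via \textsc{Merge-$\Un$}, preservation of $\triangleright$ via \textsc{Re-$\Un$}/\textsc{Re-$\Ln$}, and Logical Weakening for the logical premises --- are exactly the ingredients that make each case go through. Your remarks on the \textsc{Var} case and the commutation past binders correctly identify the bookkeeping that the de Bruijn formalization handles.
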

\begin{proof}
  By induction on the derivation of $\Theta ; \Gamma ; \Delta \vdash m : A$.
  For more details, see file \textsf{dyn\_weak.v} of our Rocq development
  which uses a De Bruijn indices representation for variables.
\end{proof}

\begin{lemma}[Program Weakening (Implicit)]
  If $\Theta ; \Gamma ; \Delta \vdash m : A$ and $\Delta \vdash B : \Ln$,
  then $\Theta ; \Gamma, x : B ; \Delta \vdash m : A$ where $x \not\in \Delta$.
\end{lemma}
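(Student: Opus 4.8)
The plan is to prove the statement by induction on the derivation of $\Theta ; \Gamma ; \Delta \vdash m : A$, closely paralleling the proof of \textbf{Program Weakening (Explicit)} but with one structural difference: the fresh binding $x : B$ is added to the logical context $\Gamma$ alone, never to the program context $\Delta$ (nor to the channel context $\Theta$). Consequently $x$ is introduced as a \emph{ghost} variable, which is precisely what permits its type to be linear ($B : \Ln$) without disturbing any usage discipline---because $\Delta$ is untouched, every merge $\Delta_1 \dotcup \Delta_2$ and every restriction side condition $\Delta \triangleright t$ occurring in the derivation survives verbatim. Since the program-context validity judgment $\Gamma ; \Delta \vdash$ is mutually inductively defined with the typing judgment, the induction is really a simultaneous one over both judgments.

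First I would dispatch the purely logical premises: wherever a rule carries a side premise of the form $\Gamma \vdash C : s$ (as in \textsc{Conversion}, \textsc{Bind}, and the pair and elimination rules) or a logical typing premise for a ghost argument (the second premise $\Gamma \vdash n : A$ of \textsc{Implicit-App}, and \textsc{Implicit-Pair}), I would simply invoke \textbf{Logical Weakening} to re-derive that premise under $\Gamma, x : B$. For the communication and structural rules (the send/receive families, \textsc{Return}, \textsc{Close}, \textsc{Wait}, \textsc{Fork}, \textsc{Conversion}) the inductive hypotheses reassemble directly, as neither $\Theta$ nor $\Delta$ changes. For the context-splitting rules (\textsc{Explicit-App}, \textsc{Explicit-Pair}, \textsc{Bind}, \textsc{BoolElim}, and the two \textsc{SumElim} rules) the shared logical context $\Gamma$ is extended in each branch while the splits $\Delta_1 \dotcup \Delta_2$ and $\Theta_1 \dotcup \Theta_2$ are carried through unchanged, so applying the inductive hypotheses to the subderivations and re-merging closes each case.

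The base cases that consult context validity---\textsc{Var}, the data values \textsc{UnitVal}, \textsc{True}, \textsc{False}, and \textsc{Channel-CH}/\textsc{Channel-HC}---are handled by the mutual induction: from $\Gamma ; \Delta \vdash$ and the well-formedness of $B$ one rebuilds $\Gamma, x : B ; \Delta \vdash$ using \textsc{Ctx-Implicit-Var} (which adds $x$ to $\Gamma$ only, leaving $\text{dom}(\Delta) \subseteq \text{dom}(\Gamma)$ intact), and then re-applies the same typing rule. The freshness hypothesis guarantees $x \notin \Gamma$; since $\text{dom}(\Delta) \subseteq \text{dom}(\Gamma)$ this subsumes $x \notin \Delta$, so no capture occurs.

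The main obstacle will be the binder cases---\textsc{Explicit-Lam}, \textsc{Implicit-Lam}, \textsc{Bind}, and the eliminators---where the inductive hypothesis must insert $x : B$ not at the end of $\Gamma$ but \emph{before} the variables bound by the rule (for instance, in \textsc{Bind} the continuation is typed under $\Gamma, y : A$, and $x$ must land before $y$). This forces the lemma to be stated and proved in the more general form of insertion at an arbitrary position, which in the De Bruijn formulation of the Rocq development (file \textsf{dyn\_weak.v}) amounts to an index-shifting renaming $\xi$ acting on the program context while mapping each original logical lookup to its shifted target. Discharging this requires checking that $\xi$ commutes with the substitutions in dependent codomains such as $B[n/x]$ and with the arity and guardedness conditions of \textsc{RecProto}; the latter two are exactly the content of the previously established \textbf{Renaming Arity} and \textbf{Renaming Guarded} lemmas, so the remaining bookkeeping is routine once the generalized statement is in place.
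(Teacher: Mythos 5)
Your proposal is correct and follows essentially the same route as the paper, which proves this lemma by induction on the derivation of $\Theta ; \Gamma ; \Delta \vdash m : A$ and defers the bookkeeping (including the generalization to insertion at an arbitrary position via a De Bruijn renaming) to the Rocq file \textsf{dyn\_weak.v}. Your elaboration of the key points---that $x : B$ enters only $\Gamma$ so all $\dotcup$-splits and $\triangleright$-restrictions are undisturbed, that logical side premises are discharged by Logical Weakening, and that the base cases go through the mutually defined context-validity judgment---is exactly the content the paper leaves implicit.
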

\begin{proof}
  By induction on the derivation of $\Theta ; \Gamma ; \Delta \vdash m : A$.
  For more details, see file \textsf{dyn\_weak.v} of our Rocq development
  which uses a De Bruijn indices representation for variables.
\end{proof}

\paragraph{\textbf{Substitution}}
The substitution lemma at the logical level is standard as the logical type system is
completely structural. The substitution lemma at the program level is more involved
as it needs to track linear variables in the program context.

\begin{lemma}[Substitution Arity]\label[lemma]{lemma:subst-arity}
  Given substitution $\sigma$, if there is $A~\arity{\Proto}$, then there is $A[\sigma]~\arity{\Proto}$.
\end{lemma}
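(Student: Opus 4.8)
The plan is to proceed by induction on the structure of $A$, mirroring the proof of the preceding Renaming Arity lemma but replacing the renaming $\xi$ with an arbitrary substitution $\sigma$. Because the statement is universally quantified over all substitutions, the induction hypothesis will remain available for the substitution lifted underneath any $\Pi$-binder we encounter, so no strengthening of the statement is needed.

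In the base case the arity derivation is \textsc{Arity-Base}, so $A = \Proto$. Since $\Proto$ is a nullary term constructor with no free variables, it is invariant under substitution, i.e. $\Proto[\sigma] = \Proto$. Hence $A[\sigma]~\arity{\Proto}$ follows by another application of \textsc{Arity-Base}.

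In the inductive cases $A$ is a $\Pi$-type, either $\PiI{t}{x : A_1}{B}$ (rule \textsc{Arity-Implicit}) or $\PiR{t}{x : A_1}{B}$ (rule \textsc{Arity-Explicit}); in both, the premise of the rule supplies $B~\arity{\Proto}$. Applying the induction hypothesis to $B$ with the substitution lifted across the binder yields $B[\sigma]~\arity{\Proto}$. Since substitution commutes with the $\Pi$ constructors — that is, $(\PiI{t}{x : A_1}{B})[\sigma]$ is again an implicit $\Pi$-type whose codomain is the lifted $B[\sigma]$, and symmetrically for the explicit case — we reassemble the derivation with the matching \textsc{Arity} rule to conclude $A[\sigma]~\arity{\Proto}$.

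The only point requiring any care is the bookkeeping for the binder: pushing $\sigma$ underneath a $\Pi$ requires lifting it, and the induction hypothesis must therefore be applied to this lifted substitution. This is immediate here precisely because the lemma is stated for every substitution, so no genuine obstacle arises. The argument is entirely routine and, notably, is insensitive to the domain $A_1$ of each $\Pi$-type, which the arity judgment discards.
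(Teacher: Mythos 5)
Your proposal is correct and matches the paper's proof, which is stated simply as ``by induction on the structure of $A$''; your treatment of the base case, the two $\Pi$-cases, and the lifting of $\sigma$ under binders is exactly the detail that one-line proof elides.
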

\begin{proof}
  By induction on the structure of $A$.
\end{proof}

\begin{lemma}[Substitution Guarded]\label[lemma]{lemma:subst-guard}
  Given substitution $\sigma$ and variables $x, y$ and term $A$,
  if there is $\forall z, x \neq z \implies (\sigma\ z)~\guard{y}$
  and $A~\guard{x}$, then $A[\sigma]~\guard{y}$.
\end{lemma}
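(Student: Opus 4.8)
The plan is to proceed by induction on the structure of $A$, which coincides with induction on the (syntax-directed) derivation of $A~\guard{x}$. At each node I invert the guardedness hypothesis to recover its premises, apply the induction hypotheses to the relevant subterms under a suitably adjusted substitution and variable, and then reassemble the conclusion with the matching guardedness rule for $A[\sigma]$. Throughout, the two standing hypotheses are that $A~\guard{x}$ holds and that $\sigma$ sends every variable $z \neq x$ to a term guarded against $y$.

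The leaf and non-binding congruence cases are routine. If $A$ is a variable $z$, the only applicable rule is \textsc{Guard-Var}, so inversion forces $x \neq z$; then $A[\sigma] = \sigma\,z$ and the first hypothesis delivers $\sigma\,z~\guard{y}$ immediately. (Observe that $x~\guard{x}$ is underivable, so the case $A = x$ simply does not arise.) For the closed constants — $s$, $\unit$, $\ii$, $\Bool$, $\bTrue$, $\bFalse$, $\Proto$, $\End$, and channels — substitution acts as the identity and the corresponding leaf rule re-establishes $\guard{y}$. For the constructors that bind nothing in a given subterm — applications, explicit and implicit pairs, $\mcC$-types, returns, and the scrutinee position of a sum eliminator — every premise is of the form $m~\guard{x}$, so I invoke the induction hypotheses with the \emph{same} $\sigma$, $x$, $y$ and close with the relevant rule.

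The substantive work is in the binders: $\PiR{}{}{}$/$\PiI{}{}{}$, $\lamR{}{}{}$/$\lamI{}{}{}$, $\SigR{}{}{}$/$\SigI{}{}{}$, the motive and branch of a sum eliminator, the continuation of a \textsc{Bind}, and the body of a $\fix{x':A}{m}$. Under each such binder the substitution lifts to $\sigma'$ (sending the freshly bound index to itself and shifting every other image by the weakening renaming), and the guardedness premise for the body is stated against the shifted variable. I therefore apply the induction hypothesis to the body with $\sigma'$ and with $x, y$ replaced by their shifted successors, which obliges me to re-establish the first hypothesis for $\sigma'$: for the newly bound index it differs from the shifted $y$, so \textsc{Guard-Var} applies; for every other index it reduces, through the first hypothesis for $\sigma$, to lifting $\sigma\,z~\guard{y}$ along the injective shift to obtain $(\sigma\,z)[\mathord{\uparrow}]~\guard{y+1}$, which is exactly the \emph{Renaming Guarded} lemma instantiated at the weakening renaming. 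This re-establishment step is the crux of the argument and the one place I expect friction: keeping the De Bruijn lifting of $\sigma$ synchronized with the index shifts of both $x$ and $y$, and invoking \emph{Renaming Guarded} with the correct injective renaming, is the delicate bookkeeping.

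Finally, the protocol-action cases $\ActR{\rho}{x':A}{B}$ and $\ActI{\rho}{x':A}{B}$ are in fact the easiest. By \textsc{Guard-Explicit-Action} and \textsc{Guard-Implicit-Action} the conclusion requires \emph{only} that the message type be guarded, since the action itself shields every occurrence in the continuation $B$. Hence a single induction hypothesis on the (unlifted) message type $A$ yields $A[\sigma]~\guard{y}$, and the substituted action $\ActR{\rho}{x':A[\sigma]}{B[\sigma']}$ is guarded against $y$ without ever inspecting $B[\sigma']$. With the binder cases discharged as above, all remaining cases are mechanical reassembly.
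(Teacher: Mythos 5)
Your proposal is correct and follows the same route as the paper, whose entire proof of this lemma is ``by induction on the structure of $A$''; your case analysis (in particular, that $x~\guard{x}$ is underivable so the variable case only meets $z \neq x$, that protocol actions discharge their continuations for free, and that the binder cases re-establish the hypothesis for the lifted substitution via the Renaming Guarded lemma) is exactly the bookkeeping that induction requires in the De Bruijn setting the paper's mechanization uses. No discrepancy to report.
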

\begin{proof}
  By induction on the structure of $A$.
\end{proof}

\begin{lemma}[Logical Substitution]\label[lemma]{lemma:logical-subst}
  If $\Gamma, x : B \vdash m : A$ and $\Gamma \vdash n : B$, then $\Gamma \vdash m[n/x] : A[n/x]$.
\end{lemma}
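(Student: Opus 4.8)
The plan is to prove this by induction on the derivation of $\Gamma, x : B \vdash m : A$, but first to strengthen the statement to a \emph{simultaneous substitution} (context-morphism) form so that the induction survives passing under binders. Concretely, I would prove the following: for any substitution $\sigma$ and contexts $\Gamma_0, \Gamma_1$ such that $\Gamma_1$ \emph{realizes} $\Gamma_0$ under $\sigma$ — meaning that for every binding $y : C$ in $\Gamma_0$ we have $\Gamma_1 \vdash \sigma(y) : C[\sigma]$ — if $\Gamma_0 \vdash m : A$ then $\Gamma_1 \vdash m[\sigma] : A[\sigma]$. The desired lemma is then the instance $\sigma = [n/x]$ with $\Gamma_1 = \Gamma$ and $\Gamma_0 = \Gamma, x : B$: the hypothesis $\Gamma \vdash n : B$ supplies the realizer obligation for $x$, while every other $y : C$ of $\Gamma$ maps to itself, and since $C$ is well-formed before $x$ is introduced (rule \textsc{Ctx-Var}) it cannot mention $x$, so $C[\sigma] = C$ and its obligation degenerates to membership in $\Gamma$. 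This generalization matches the De~Bruijn presentation of our Rocq development, where substitution acts uniformly on whole terms and contexts.

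Before the main induction I would record two supporting facts. First, convertibility is stable under substitution: if $A \simeq B$ then $A[\sigma] \simeq B[\sigma]$, which follows from the parallel reduction relation of \Cref{appendix:logical-semantics} being closed under substitution (a routine induction on $A \Rightarrow A'$, using that the $\beta$- and monadic/recursion-unfolding steps commute with substitution). This is exactly what the \textsc{Conversion} case requires. Second, the standard substitution-composition identities, in particular $B[n/x][\sigma] = B[\sigma^{\uparrow}]\bigl[n[\sigma]/x\bigr]$, which are needed to reconcile the type $B[n/x]$ appearing in the application, pair, and eliminator rules with the substituted conclusion.

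The induction then proceeds on the last typing rule. In the \textsc{Var} case the realizer hypothesis on $\sigma$ delivers the result directly. In \textsc{Conversion} I apply the induction hypothesis and transport along $A[\sigma] \simeq B[\sigma]$ using the stability fact above. The binder-introducing rules — \textsc{Explicit-Fun}, \textsc{Implicit-Fun}, \textsc{Explicit-Lam}, \textsc{Implicit-Lam}, the two $\Sigma$-formers, and the eliminators — form the heart of the argument: here I must lift $\sigma$ to a substitution $\sigma^{\uparrow}$ over the context extended by the bound variable and show that $\sigma^{\uparrow}$ still realizes the extended domain context, after which the induction hypothesis applies to the body. Establishing this lifted-realizer property is where Logical Weakening (already proven) is invoked, to re-type each $\sigma(y)$ in the enlarged codomain context. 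The application/pair/eliminator rules additionally use the composition identity to match the substituted return type. The remaining cases (\textsc{Sort}, \textsc{Unit}, \textsc{Bool}, \textsc{Proto}, \textsc{End}, and the monadic and session formers) are immediate: their formers are closed or purely structural, and the subterm obligations follow directly from the induction hypotheses.

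The step I expect to be the main obstacle is precisely these binder cases, specifically verifying that the lifted substitution $\sigma^{\uparrow}$ is well-typed for the extended context. This is the usual friction point in De~Bruijn substitution proofs: it forces the bookkeeping of index shifting to line up simultaneously with weakening and with the composition laws, and it is the reason the single-variable statement cannot be proven on its own but must be recovered from the simultaneous-substitution generalization. Because the logical level is fully structural, no linearity or context-splitting concerns arise here — this is what makes the logical substitution lemma the routine one, in contrast to its program-level counterpart where the program context $\Delta$ must be threaded through.
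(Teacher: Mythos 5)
Your proposal is correct and matches the paper's approach: the paper likewise proves this lemma by generalizing to a simultaneous-substitution statement (formalized in the Rocq development file \textsf{sta\_subst.v}) and recovering the single-variable case as an instance. The supporting facts you identify — stability of convertibility under substitution, the composition identities, and lifting the substitution under binders via weakening — are exactly the ingredients that generalization requires.
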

\begin{proof}
  This lemma is proved through a more general lemma involving simultaneous substitutions. 
  For more details, see file \textsf{sta\_subst.v} of our Rocq development.
\end{proof}

\begin{corollary}\label[corollary]{corollary:logical-context-conv}
  If $\Gamma, x : A \vdash m : C$ and $A \simeq B$, then $\Gamma, x : B \vdash m : C$.
\end{corollary}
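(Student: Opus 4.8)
The plan is to derive this as a direct consequence of \Cref{lemma:logical-subst}, by substituting the variable $x$ with a fresh copy of itself whose declared type is $B$ but which has been re-typed to $A$ through the \textsc{Conversion} rule. Intuitively, a variable is the most general inhabitant of its type, so replacing $x$ (at type $A$) by a variable $x'$ that provably also has type $A$ changes nothing about $m$ or $C$, yet it lets us swap the declared type recorded in the context from $A$ to $B$.

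Concretely, I would proceed as follows. First, from the validity of the context $\Gamma, x : A$ implicit in the hypothesis, extract a sort $s$ with $\Gamma \vdash A : s$, and using well-sortedness of $B$ (i.e. $\Gamma \vdash B : s$) introduce a fresh variable $x' : B$ via \textsc{Logical Weakening}, obtaining $\Gamma, x' : B, x : A \vdash m : C$. Second, in the context $\Gamma, x' : B$ derive $\Gamma, x' : B \vdash x' : B$ by \textsc{Var}, and then, since $B \simeq A$ and $A$ remains well-sorted after weakening of $\Gamma \vdash A : s$, apply \textsc{Conversion} to obtain $\Gamma, x' : B \vdash x' : A$. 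Third, apply \Cref{lemma:logical-subst} with substituend $n := x'$ of type $A$ to eliminate $x$, yielding $\Gamma, x' : B \vdash m[x'/x] : C[x'/x]$. Finally, since $m[x'/x]$ and $C[x'/x]$ are just $m$ and $C$ with the bound name $x$ renamed to $x'$, an $\alpha$-renaming of the context binding $x'$ back to $x$ gives exactly $\Gamma, x : B \vdash m : C$.

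I expect the main obstacle to be bureaucratic rather than conceptual: in the De Bruijn presentation of the Rocq development, the ``introduce $x'$, substitute, rename back'' manoeuvre collapses into a single application of the simultaneous-substitution generalization of \Cref{lemma:logical-subst}, and the care lies in aligning the index shifts on $A$, $B$, and $C$ so that the variable at position $x$ is precisely the identity variable re-typed through \textsc{Conversion}. The one hypothesis needed beyond $A \simeq B$ is $\Gamma \vdash B : s$, which is in any case unavoidable since the conclusion $\Gamma, x : B \vdash m : C$ presupposes validity of $\Gamma, x : B$; in every application of this corollary $B$ arises as a reduct or an expand of the well-sorted type $A$, so its well-sortedness is always available. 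Note finally that $C$ may genuinely depend on $x$, but this causes no difficulty: because we substitute a variable for a variable, $C[x'/x]$ is literally $C$ up to renaming.
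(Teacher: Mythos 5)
Your proof is correct and matches the paper's intended derivation: the corollary is obtained from the simultaneous-substitution generalization of \Cref{lemma:logical-subst} by applying the (identity) substitution in which the variable $x$, read off at type $B$ by \textsc{Var}, is re-typed at $A$ via \textsc{Conversion} (using symmetry of $\simeq$ and weakening of $\Gamma \vdash A : s$). You are also right about the one caveat: the well-sortedness of $B$ is tacitly assumed by the corollary's statement, and it is indeed available at every use site in the paper.
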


\begin{lemma}[Program Substitution (Explicit)]\label[lemma]{lemma:program-subst-explicit}
  If ${\Theta_1 ; \Gamma, x : B ; \Delta_1, x :_s B \vdash m : A}$ and
  ${\Theta_2 ; \Gamma ; \Delta_2 \vdash n : B}$ and $\Theta_2 \triangleright s$ and $\Delta_2 \triangleright s$, then
  ${\Theta_1 \dotcup \Theta_2 ; \Gamma ; \Delta_1 \dotcup \Delta_2 \vdash m[n/x] : A[n/x]}$.
\end{lemma}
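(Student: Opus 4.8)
The plan is to prove this by induction on the derivation of $\Theta_1 ; \Gamma, x : B ; \Delta_1, x \ty{s} B \vdash m : A$, after generalizing it to a \emph{simultaneous} substitution statement phrased through a well-typed substitution (context-morphism) judgment $\ProgramAgreeSubst{\Theta}{\Gamma}{\sigma}{\Gamma'}{\Delta'}$, just as \Cref{lemma:logical-subst} is obtained from its simultaneous form. Such an agreement records, for each source variable, whether $\sigma$ replaces it by a logically-typed ghost term or a program-typed real term, and carries the linearity bookkeeping (the $\triangleright s$ side conditions) needed to license duplication or discarding. The stated lemma is then the instance in which $\sigma$ sends $x$ to $n$ and every other variable to itself; the hypotheses $\Theta_2 \triangleright s$ and $\Delta_2 \triangleright s$ are precisely what is required to construct this agreement (note $\Theta_2 \triangleright \Un$ forces $\Theta_2 = \epsilon$, since channels are linear).

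For the induction the instructive cases are the following. In the \textsc{Var} case, if $m = x$ then $m[n/x] = n$ and the rule forces $\Theta_1 = \epsilon$ with the remainder of $\Delta_1$ satisfying $\triangleright \Un$, so $\Delta_1 \dotcup \Delta_2$ extends $\Delta_2$ only by unrestricted bindings and the goal follows from the two Program Weakening lemmas; if $m = y \neq x$, then $x$ lies in the $\triangleright \Un$ remainder, forcing $s = \Un$, so $n$ is discardable by $\Theta_2 \triangleright \Un$ and $\Delta_2 \triangleright \Un$. In the context-splitting rules (\textsc{Explicit-App}, \textsc{Explicit-Pair}, \textsc{Bind}, \textsc{SumElim}, \textsc{BoolElim}), the binding $x \ty{s} B$ sits inside a split $\Delta_{11} \dotcup \Delta_{12}$: when $s = \Ln$ the disjointness requirement of $\dotcup$ places $x$ in exactly one branch, so I apply the induction hypothesis there and leave the other untouched; when $s = \Un$ the binding may be contracted into both branches, so I substitute into each, which needs $n$ duplicable --- exactly what $\Theta_2 \triangleright \Un$ and $\Delta_2 \triangleright \Un$ guarantee. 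The binder rules (\textsc{Explicit-Lam}, \textsc{Implicit-Lam}, the action formers and \textsc{RecProto}) push the substitution under the binder and appeal to the induction hypothesis on the body, with the $\triangleright t$ premises preserved because $n$ introduces no fresh linear variables, and with \Cref{lemma:subst-arity} and \Cref{lemma:subst-guard} keeping the arity and guardedness conditions stable. In \textsc{Conversion} I use that parallel reduction, hence $\simeq$, is stable under substitution so $A \simeq B$ yields $A[n/x] \simeq B[n/x]$, transporting the well-sortedness premise by \Cref{lemma:logical-subst}; the channel and communication rules are purely structural, since substitution commutes with $\CH{\cdot}$, $\HC{\cdot}$, the protocol actions and the $\mcC$ monad.

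The main obstacle is the linearity bookkeeping concentrated in the splitting cases: making the placement of $x$ behave uniformly across $s = \Un$ and $s = \Ln$, and re-associating $\dotcup$ once each branch has been substituted. This is exactly what makes the simultaneous formulation worthwhile --- it replaces the ad hoc ``which branch does $x$ fall into'' analysis by a single splitting lemma for agreements, namely that $\ProgramAgreeSubst{\Theta}{\Gamma}{\sigma}{\Gamma'}{\Delta'}$ splits whenever its target context does, and it lets the induction hypothesis carry the invariant that $\sigma$ distributes correctly over a merge. The supporting algebra it relies on --- that $\dotcup$ is commutative and associative where defined, that $\triangleright$ is closed under $\dotcup$ and under substitution, and that substitution commutes with $\dotcup$ --- is routine but must be set up first; following \Cref{lemma:logical-subst}, the remaining De Bruijn details would be discharged in the accompanying Rocq development.
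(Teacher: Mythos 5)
Your proposal matches the paper's approach: the paper likewise proves this lemma as an instance of a more general simultaneous-substitution lemma (deferring the details to the Rocq file \textsf{dyn\_subst.v}), and your case analysis of the variable, context-splitting, and binder rules fills in the expected details correctly. One small imprecision: in the binder cases the $\Delta \triangleright t$ premise is preserved not merely because $n$ adds no fresh linear variables, but because when $t = \Un$ the restriction on the original context forces $s = \Un$, which in turn forces $\Delta_2 \triangleright \Un$ by hypothesis.
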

\begin{proof}
  This lemma is proved through a more general lemma involving simultaneous substitutions. 
  For more details, see file \textsf{dyn\_subst.v} of our Rocq development.
\end{proof}

\begin{corollary}\label[corollary]{corollary:program-context-conv-explicit}
  If $\Theta ; \Gamma, x : A ; \Delta \vdash m : C$ and $\Gamma \vdash B : s$ and 
  $A \simeq B$, then $\Theta ; \Gamma, x : B ; \Delta \vdash m : C$.
\end{corollary}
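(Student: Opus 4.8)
The plan is to derive this corollary exactly as its logical counterpart \Cref{corollary:logical-context-conv} is obtained from substitution: namely, as the action of an \emph{identity context morphism} that reinterprets the binding $x : B$ as $x : A$. Since \Cref{lemma:program-subst-explicit} is itself an instance of the general simultaneous-substitution lemma of the Rocq development, I would work with that general form, which substitutes a whole well-typed morphism rather than a single term.

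First I would record the sort agreement needed to retype $x$. Applying the Validity theorem to the hypothesis $\Theta ; \Gamma, x : A ; \Delta \vdash m : C$ gives $\Gamma \vdash A : s'$ for some $s'$; combined with $A \simeq B$, $\Gamma \vdash B : s$, the injectivity consequences of Confluence and the Sort Uniqueness theorem force $s' = s$. Hence in the context $\Gamma, x : B$ the \textsc{Var} rule yields $x : B$, and \textsc{Conversion} (using $A \simeq B$ and $\Gamma, x : B \vdash A : s$) retypes the same variable as $x : A$.

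Next I would assemble the morphism $\sigma$ from $\Gamma, x : B ; \Delta$ into $\Gamma, x : A ; \Delta$ that is the identity on all of $\Gamma$ and on $\Delta$, and that maps the top variable to the derivation of $x : A$ just produced. Feeding $\sigma$ through the general substitution lemma and observing that $\sigma$ is the identity on syntax, so that $m[\sigma] = m$, $C[\sigma] = C$, and the channel and program contexts are transported back to $\Theta$ and $\Delta$ unchanged, yields precisely $\Theta ; \Gamma, x : B ; \Delta \vdash m : C$.

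The main obstacle is checking that $\sigma$ qualifies as a \emph{well-typed program-level morphism}, which is where the substructural discipline enters. Unlike the fully structural logical case, I must confirm that the retyped entry for $x$ respects the linear bookkeeping: the morphism has to split the channel context $\Theta$ and the program context $\Delta$ in accordance with the merge operator $\dotcup$ and the restriction predicate $\triangleright$, and in particular the single occurrence of $x$ must be neither duplicated (when $x$ is unrestricted) nor dropped (when $x$ is linear). Verifying that the identity morphism satisfies these constraints uniformly over De Bruijn indices — so that no induction on the shape of $m$ is required — is the one place demanding care.
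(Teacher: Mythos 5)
Your proposal matches the paper's intended derivation: the corollary is meant to fall out of the general simultaneous-substitution lemma behind the program substitution lemmas, instantiated at the identity substitution in which the variable $x$ is re-typed in $\Gamma, x : B$ via \textsc{Var} followed by \textsc{Conversion} using $A \simeq B$ (and weakening to get $\Gamma, x : B \vdash A : s'$). One small remark: in this statement $x$ does not occur in the program context $\Delta$, so it is a logical-only variable here; the linear bookkeeping you single out as the main obstacle (non-duplication/non-dropping of $x$, splitting of $\Theta$ and $\Delta$) is vacuous for this corollary and only becomes relevant for its implicit counterpart where $\Delta, x :_s A$ appears.
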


\begin{lemma}[Program Substitution (Implicit)]\label[lemma]{lemma:program-subst-implicit}
  If ${\Theta ; \Gamma, x : B ; \Delta \vdash m : A}$ and
  ${\Gamma \vdash n : B}$, then ${\Theta ; \Gamma ; \Delta \vdash m[n/x] : A[n/x]}$.
\end{lemma}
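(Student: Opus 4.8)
The plan is to proceed by induction on the derivation of $\Theta ; \Gamma, x : B ; \Delta \vdash m : A$, after first generalizing the statement to \emph{simultaneous} substitutions so that the induction survives passage under binders (this mirrors the treatment of \Cref{lemma:program-subst-explicit}). The conceptual point driving the whole argument is that $x$ is a \emph{ghost} variable: it occurs in the logical context $\Gamma$ but not in the program context $\Delta$, and since $\mathrm{dom}(\Delta) \subseteq \mathrm{dom}(\Gamma)$ tracks exactly the computationally relevant variables, $x$ can appear in $m$ only in logically relevant (erasable) positions. Moreover the substituted term $n$ is typed purely logically ($\Gamma \vdash n : B$), so it carries no linear resources and no channel usage that $\Theta$ must account for. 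Consequently the substitution acts only ``logically'': the channel context $\Theta$, the program context $\Delta$, all context merges $\dotcup$, and all side conditions $\Delta \triangleright t$ and $\Theta \triangleright t$ are left structurally unchanged, which is exactly what the conclusion demands.

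With this in mind, most cases split cleanly along the two-level discipline. Whenever a premise is a logical judgment --- the type-formation premise $\Gamma \vdash B : s$ of \textsc{Conversion} and \textsc{Bind}, the implicit argument $\Gamma \vdash n : A$ of \textsc{Implicit-App}, the first-component and type premises of \textsc{Implicit-Pair}, and the motive premises of the \textsc{SumElim} and \textsc{BoolElim} rules --- I would discharge it by invoking Logical Substitution (\Cref{lemma:logical-subst}). Whenever a premise is a program judgment I would apply the induction hypothesis directly and then reassemble the rule, using $x \notin \Delta$ to see that splitting $\Delta$ into $\Delta_1 \dotcup \Delta_2$ commutes with the (logical-only) substitution. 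The \textsc{Var} case is the base case and is immediate: since the rule requires the typed variable to lie in $\Delta$ and $x \notin \Delta$, the variable typed is some $y \neq x$, so $y[n/x] = y$; only its type is adjusted, again by \Cref{lemma:logical-subst}.

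Two standard technical obligations remain. First, the \textsc{Conversion} case needs that substitution preserves convertibility, i.e. $A \simeq A'$ implies $A[n/x] \simeq A'[n/x]$; this follows from the stability of parallel reduction under substitution together with the closure definition of $\simeq$. Second, the dependent return types produced by the application and elimination rules --- for instance the $B[n'/x']$ of \textsc{Explicit-App} --- require the usual substitution-commutation identity relating the outer substitution $[n/x]$ with the inner one, which is precisely the reason the simultaneous-substitution formulation is needed.

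The main obstacle I expect is not any single case but the bookkeeping of the simultaneous-substitution generalization: one must carry a substitution that is the identity on all program variables while sending $x$ to $n$, extend it correctly under each binder (keeping freshly bound variables pointwise fixed), and verify at every application and pair rule that it commutes with the dependent return-type substitutions. Once this commutation discipline is fixed --- exactly as in the Rocq development accompanying \Cref{lemma:program-subst-explicit} --- the remaining cases reduce to routine structural reassembly, since the ghost status of $x$ keeps the entire resource-tracking apparatus ($\Theta$, $\Delta$, $\dotcup$, $\triangleright$) inert throughout.
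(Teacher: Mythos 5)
Your proposal is correct and matches the paper's approach: the paper also proves this lemma by generalizing to simultaneous substitutions and inducting on the typing derivation (deferring the details to the \textsf{dyn\_subst.v} file of its Rocq development). Your key observation --- that the ghost status of $x$ and the purely logical typing of $n$ leave $\Theta$, $\Delta$, the merges $\dotcup$, and the restrictions $\triangleright$ untouched --- is exactly what makes the implicit case go through without the resource-accounting side conditions needed in \Cref{lemma:program-subst-explicit}.
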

\begin{proof}
  This lemma is proved through a more general lemma involving simultaneous substitutions. 
  For more details, see file \textsf{dyn\_subst.v} of our Rocq development.
\end{proof}

\begin{corollary}\label[corollary]{corollary:program-context-conv-implicit}
  If ${\Theta ; \Gamma, x : A ; \Delta, x :_s A \vdash m : C}$ and
  $A \simeq B$, then ${\Theta ; \Gamma, x : B ; \Delta, x :_s B \vdash m : C}$.
\end{corollary}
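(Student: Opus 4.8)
The plan is to obtain this as a context-conversion consequence of Program Substitution (Explicit) (\Cref{lemma:program-subst-explicit}), mirroring how \Cref{corollary:program-context-conv-explicit} is handled, but now for a binder that is computationally relevant (it appears in $\Delta$ as $x \ty{s} A$), so the explicit substitution lemma is the right tool. The guiding idea is that in the target context $\Gamma, x : B ; \Delta, x \ty{s} B$ the variable $x$ can be \emph{re-typed} at $A$: by \textsc{Var} it has type $B$, and since $A \simeq B$ yields $B \simeq A$ by symmetry of the convertibility relation, the \textsc{Conversion} rule gives $x : A$. Substituting this re-typed $x$ for the original $x : A$ throughout the given derivation leaves both the term and its type unchanged (we substitute $x$ by $x$), while swapping the declared type of the binder from $A$ to $B$.

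Concretely, I would work with the more general simultaneous-substitution lemma that \Cref{lemma:program-subst-explicit} is proved through, and instantiate it with the identity substitution $\sigma$ that maps every variable to itself. The obligation that makes $\sigma$ a typed substitution from $\Gamma, x : B ; \Delta, x \ty{s} B$ into $\Gamma, x : A ; \Delta, x \ty{s} A$ splits into two parts: for the variables below $x$ the declarations are untouched, so \textsc{Var} discharges them directly; for the top binder $x$ it is discharged by the \textsc{Var}-then-\textsc{Conversion} derivation above. Because $\sigma$ is the identity, the lemma's conclusion is $m[\sigma] = m$ at type $C[\sigma] = C$, i.e. exactly $\Theta ; \Gamma, x : B ; \Delta, x \ty{s} B \vdash m : C$. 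The linearity side conditions ($\Theta \triangleright s$, $\Delta \triangleright s$, and the $\dotcup$-splittings) are immediate, since every slot is filled by a single variable carried along unchanged, so no linear resource is duplicated and the context structure is preserved.

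The one genuine point to discharge is the well-sortedness demanded by the \textsc{Conversion} step, namely $\Gamma, x : B \vdash A : s$. I would obtain this by first extracting $\Gamma \vdash A : s$ from the validity of the original context (the annotation on $x \ty{s} A$ records exactly the sort of $A$, via \textsc{Ctx-Explicit-Var}) and then weakening by the binder $x : B$, which in turn needs $\Gamma \vdash B : s$ — the standard well-formedness side condition inherited from the ambient judgment, as made explicit in \Cref{corollary:program-context-conv-explicit}. I expect this sorting bookkeeping, rather than the substitution itself, to be the main obstacle, since it is where the dependent character of the system bites: the \textsc{Conversion} rule insists that the type we convert \emph{into} already inhabits a sort in the new context, so the proof must thread $A$'s sort through the weakening before the identity substitution can be applied. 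Everything else is a routine instantiation, and because $x$ is the last binder no later declaration depends on it, so no downstream conversions are triggered.
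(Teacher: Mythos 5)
Your derivation is correct and is exactly the argument the paper leaves implicit (the corollary is stated with no proof body): instantiate the generalized simultaneous-substitution lemma behind \Cref{lemma:program-subst-explicit} at the identity substitution, re-typing the converted binder $x$ at $A$ in the new context via \textsc{Var} followed by \textsc{Conversion} --- and you are right that the explicit substitution lemma is the one needed here, since only it can touch a binding living in $\Delta$. Your one flagged obligation, $\Gamma \vdash B : s$, is indeed the only non-routine point: it is missing from the stated hypotheses but is forced by validity of the conclusion's context (and the annotation $s$ is preserved because convertible well-sorted types share a sort, by confluence, logical subject reduction and sort uniqueness), so your treatment is sound.
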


\paragraph{\textbf{Sort Uniqueness}}
Due to the fact that \TLLC{} utilizes the sort of types to determine
(sub)structural properties of their inhabitants, it is important for types
to have unique sorts. If a type could have multiple sorts, then it would be
ambiguous whether its inhabitants are linear or unrestricted.

One of the main challenges in proving sort uniqueness is that there is no
uniqueness of types in general. In particular, dependent pairs like $\pairI{m}{n}{s}$
do not have unique typing. For this reason, we prove a weaker property of
\emph{type similarity} instead of type uniqueness. Then from type similarity
we can derive sort uniqueness. We begin by defining the \emph{head similarity}
relation as follows:
\begin{mathpar}\footnotesize
  \inferrule  
  { }
  { \HeadSim{x}{x} }

  \inferrule
  { }
  { \HeadSim{s}{s} }

  \inferrule
  { \HeadSim{B_1}{B_2} }
  { \HeadSim{\PiR{s}{x : A_1}{B_1}}{\PiR{s}{x : A_2}{B_2}} }

  \inferrule
  { \HeadSim{B_1}{B_2} }
  { \HeadSim{\PiI{s}{x : A_1}{B_1}}{\PiI{s}{x : A_2}{B_2}} }

  \inferrule
  { }
  { \HeadSim{\lamR{s}{x : A}{m}}{\lamR{s}{x : A}{m}} }

  \inferrule
  { }
  { \HeadSim{\lamI{s}{x : A}{m}}{\lamI{s}{x : A}{m}} }

  \inferrule
  { }
  { \HeadSim{\appR{m}{n}}{\appR{m}{n}} }

  \inferrule
  { }
  { \HeadSim{\appI{m}{n}}{\appI{m}{n}} }

  \inferrule
  { }
  { \HeadSim{\SigR{s}{x : A_1}{B_1}}{\SigR{s}{x : A_2}{B_2}} }

  \inferrule
  { }
  { \HeadSim{\SigI{s}{x : A_1}{B_1}}{\SigI{s}{x : A_2}{B_2}} }

  \inferrule
  { }
  { \HeadSim{\pairR{m}{n}{s}}{\pairR{m}{n}{s}} }

  \inferrule
  { }
  { \HeadSim{\pairI{m}{n}{s}}{\pairI{m}{n}{s}} }

  \inferrule
  { }
  { \HeadSim{\SigElim{[z]A}{m}{[x,y]n}}{\SigElim{[z]A}{m}{[x,y]n}} }

  \inferrule
  { }
  { \HeadSim{\fix{x : A}{m}}{\fix{x : A}{m}} }

  \inferrule
  { }
  { \HeadSim{\unit}{\unit} }

  \inferrule
  { }
  { \HeadSim{\ii}{\ii} }

  \inferrule
  { }
  { \HeadSim{\Bool}{\Bool} }

  \inferrule
  { }
  { \HeadSim{\bTrue}{\bTrue} }

  \inferrule
  { }
  { \HeadSim{\bFalse}{\bFalse} }

  \inferrule
  { }
  { \HeadSim{\boolElim{[z]A}{m}{n_1}{n_2}}{\boolElim{[z]A}{m}{n_1}{n_2}} }

  \inferrule
  { }
  { \HeadSim{\CM{A}}{\CM{B}} }

  \inferrule
  { }
  { \HeadSim{\return{m}}{\return{m}} }

  \inferrule
  { }
  { \HeadSim{\letin{m}{x}{n}}{\letin{m}{x}{n}} }

  \inferrule
  { }
  { \HeadSim{\Proto}{\Proto} }

  \inferrule
  { }
  { \HeadSim{\End}{\End} }

  \inferrule
  { }
  { \HeadSim{\ActR{\rho}{x : A}{B}}{\ActR{\rho}{x : A}{B}} }

  \inferrule
  { }
  { \HeadSim{\ActI{\rho}{x : A}{B}}{\ActI{\rho}{x : A}{B}} }

  \inferrule
  { }
  { \HeadSim{\CH{A}}{\CH{B}} }

  \inferrule
  { }
  { \HeadSim{\HC{A}}{\HC{B}} }

  \inferrule
  { }
  { \HeadSim{c}{c} }

  \inferrule
  { }
  { \HeadSim{\fork{x : A}{m}}{\fork{x : A}{m}} }

  \inferrule
  { }
  { \HeadSim{\recvR{m}}{\recvR{m}} }

  \inferrule
  { }
  { \HeadSim{\recvI{m}}{\recvI{m}} }

  \inferrule
  { }
  { \HeadSim{\sendR{m}}{\sendR{m}} }

  \inferrule
  { }
  { \HeadSim{\sendI{m}}{\sendI{m}} }

  \inferrule
  { }
  { \HeadSim{\close{m}}{\close{m}} }

  \inferrule
  { }
  { \HeadSim{\wait{m}}{\wait{m}} }
\end{mathpar}

We then define the \emph{type similarity} relation as follows:
\begin{align*}
  \Sim{A}{B} \triangleq \exists A', B', A \simeq A' \land B \simeq B' \land \HeadSim{A'}{B'}
\end{align*}

The similarity relation is naturally extended to typing contexts as follows:
\begin{mathpar}
  \inferrule 
  { }
  { \Sim{\epsilon}{\epsilon} }

  \inferrule 
  { \Sim{A}{B} \\ 
    \Sim{\Gamma_1}{\Gamma_2} }
  { \Sim{(\Gamma_1, x : A)}{(\Gamma_2, x : B)} }
\end{mathpar}

The (head) similarity relation enjoys the following properties.
\begin{lemma}[HeadSim Reflexive]\label[lemma]{lemma:headsim-reflexive}
  For any term $A$, there is $\HeadSim{A}{A}$.
\end{lemma}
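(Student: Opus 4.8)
The plan is to proceed by a straightforward structural induction on the term $A$. The defining rules of $\HeadSim{\cdot}{\cdot}$ are organized by top-level constructor, so for every possible head of $A$ there is exactly one rule whose conclusion places that constructor on both sides. In each case of the induction the goal is simply to instantiate the matching rule with both of its arguments set equal to $A$, taking all schematic subterms on the two sides to coincide with the corresponding subterms of $A$.

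First I would dispatch every case whose matching rule carries no premise, which is all but two of them. This covers the rigid atoms $x$, $c$, $s$, $\unit$, $\ii$, $\Bool$, $\bTrue$, $\bFalse$, $\Proto$, $\End$; the many constructors whose rule relates a term to a syntactically identical copy of itself, namely $\lamR{t}{x : A}{m}$, $\lamI{t}{x : A}{m}$, $\appR{m}{n}$, $\appI{m}{n}$, $\pairR{m}{n}{t}$, $\pairI{m}{n}{t}$, $\SigElim{[z]C}{m}{[x,y]n}$, $\fix{x : A}{m}$, $\boolElim{[z]A}{m}{n_1}{n_2}$, $\return{m}$, $\letin{x}{m}{n}$, $\ActR{\rho}{x : A}{B}$, $\ActI{\rho}{x : A}{B}$, and the session primitives $\fork{x : A}{m}$, $\recvR{m}$, $\recvI{m}$, $\sendR{m}$, $\sendI{m}$, $\close{m}$, $\wait{m}$; and finally the head-only formers $\SigR{t}{x : A}{B}$, $\SigI{t}{x : A}{B}$, $\CM{A}$, $\CH{A}$ and $\HC{A}$, whose rules already relate arbitrary subterms and so in particular relate $A$ to itself. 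In all of these cases the conclusion follows by a direct application of the corresponding rule, with no appeal to the induction hypothesis.

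The only cases that actually consume the induction hypothesis are the two dependent function types $\PiR{s}{x : A}{B}$ and $\PiI{s}{x : A}{B}$, whose head-similarity rules take the premise $\HeadSim{B_1}{B_2}$ on the codomain. There I would apply the induction hypothesis to the codomain $B$ to obtain $\HeadSim{B}{B}$, and then conclude $\HeadSim{\PiR{s}{x : A}{B}}{\PiR{s}{x : A}{B}}$ (respectively the implicit variant) by the relevant rule. I do not expect any genuine obstacle here: reflexivity holds almost entirely by matching each term former against its defining rule, and the induction hypothesis is needed only for the $\Pi$-type codomains. The single point worth verifying is that the rule list is exhaustive over the term grammar, i.e.\ that every former in the syntax of \TLLC{} has a corresponding $\HeadSim$ rule, which is immediate by inspection of the definition.
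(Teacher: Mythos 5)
Your proposal is correct and takes the same route as the paper, whose entire proof is a one-line structural induction on $A$; your case analysis (all rules premise-free except the two $\Pi$-type rules, which need the induction hypothesis only on the codomain) is an accurate elaboration of that induction.
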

\begin{proof}
  By induction on the structure of $A$.
\end{proof}

\begin{lemma}[HeadSim Symmetric]\label[lemma]{lemma:headsim-symmetric}
  For any $\HeadSim{A}{B}$, there is $\HeadSim{B}{A}$.
\end{lemma}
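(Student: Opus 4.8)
The plan is to prove the statement by straightforward induction on the derivation of $\HeadSim{A}{B}$. The rules defining head similarity fall into two groups, and each group is handled uniformly. First I would dispatch the rules without premises, which constitute all but two of the cases. Each such rule is parametric in its metavariables and already symmetric under exchanging them: for instance, from the axiom concluding $\HeadSim{\SigR{s}{x : A_1}{B_1}}{\SigR{s}{x : A_2}{B_2}}$ one immediately obtains $\HeadSim{\SigR{s}{x : A_2}{B_2}}{\SigR{s}{x : A_1}{B_1}}$ by re-applying the very same rule with $A_1, B_1$ and $A_2, B_2$ swapped. The same observation discharges the $\CM{\cdot}$, $\CH{\cdot}$, $\HC{\cdot}$, and $\SigI$ cases, where the two sides differ only in positions the relation leaves unconstrained, as well as every diagonal axiom (such as $\HeadSim{c}{c}$ or $\HeadSim{\recvR{m}}{\recvR{m}}$) where both sides are syntactically identical and symmetry is trivial.

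The only cases that invoke the induction hypothesis are the two $\Pi$-rules. In the explicit case we are given a derivation of $\HeadSim{\PiR{s}{x : A_1}{B_1}}{\PiR{s}{x : A_2}{B_2}}$ whose sole premise is $\HeadSim{B_1}{B_2}$. The induction hypothesis gives $\HeadSim{B_2}{B_1}$, and applying the explicit $\Pi$-rule to this premise (again swapping the domain annotations $A_1, A_2$, on which the conclusion does not depend) yields $\HeadSim{\PiR{s}{x : A_2}{B_2}}{\PiR{s}{x : A_1}{B_1}}$, which is exactly the required symmetric judgment. The implicit $\Pi$-case is handled identically using \textsc{Implicit-Fun}-style head similarity and its recursive premise.

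I expect no genuine obstacle: head similarity was designed to inspect only the top-level constructor (and, for $\Pi$-types, to recurse into the codomain), so symmetry is essentially built into the shape of the rules, and the argument closely parallels the reflexivity proof of \Cref{lemma:headsim-reflexive}. The only mild tedium is the length of the case list; since each case is mechanical, I would in the formal development fold the premise-free cases into a single uniform step and treat the two $\Pi$-cases separately via the induction hypothesis.
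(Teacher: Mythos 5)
Your proposal is correct and matches the paper's proof, which is simply a straightforward induction on the derivation of $\HeadSim{A}{B}$; your case analysis (premise-free rules are symmetric by re-application with metavariables swapped, and only the two $\Pi$-rules need the induction hypothesis on the codomain premise) is exactly how that induction goes through.
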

\begin{proof}
  By induction on the derivation of $\HeadSim{A}{B}$.
\end{proof}

\begin{lemma}[HeadSim Substitution]\label[lemma]{lemma:headsim-substitution}
  Given substitution $\sigma$, if there is $\HeadSim{A}{B}$, then there is $\HeadSim{A[\sigma]}{B[\sigma]}$.
\end{lemma}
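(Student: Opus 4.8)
The plan is to proceed by induction on the derivation of $\HeadSim{A}{B}$, stating the lemma with the substitution universally quantified so that the induction hypothesis reads ``for every substitution $\tau$, $\HeadSim{A}{B}$ implies $\HeadSim{A[\tau]}{B[\tau]}$.'' This quantification is essential because one of the cases forces us to instantiate the hypothesis at a \emph{lifted} substitution when passing under a binder. Since every \HeadSim{} rule is uniquely determined by the outermost constructor of $A$ (and $B$), this induction is interchangeable with an induction on the structure of $A$, and the rules sort into three families that each receive uniform treatment.

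The first family consists of the premise-free rules whose conclusion has the identical form $\HeadSim{t}{t}$ — the variable, sort, $\lambda$-abstraction, application, pair, eliminator, fix, the concurrency primitives (fork, send/recv in both modes, close, wait), the channel and data constants, $\Proto$, $\End$, and both protocol-action rules. For each of these, applying $\sigma$ produces the very same term $t[\sigma]$ on both sides, so the same identical-form rule immediately derives $\HeadSim{t[\sigma]}{t[\sigma]}$; nothing about the internal shape of $t[\sigma]$ is relevant. The second family consists of the premise-free \emph{congruence} rules whose head is fixed by the top constructor alone: $\SigR{s}{x : A_1}{B_1}$ versus $\SigR{s}{x : A_2}{B_2}$, the implicit variant $\SigI{}{}{}$, $\CM{A}$ versus $\CM{B}$, $\CH{A}$ versus $\CH{B}$, and $\HC{A}$ versus $\HC{B}$. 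Substitution commutes with each of these constructors and leaves the sort annotation $s$ untouched, so the two substituted terms still agree on their outermost constructor; the same rule therefore reapplies directly, with no appeal to the induction hypothesis, since these rules impose no constraint on their subterms.

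The only genuinely recursive family is the pair of $\Pi$-congruences, $\HeadSim{\PiR{s}{x : A_1}{B_1}}{\PiR{s}{x : A_2}{B_2}}$ and its implicit counterpart, each derived from the premise $\HeadSim{B_1}{B_2}$. Because the codomains live under the binder on $x$, I apply the induction hypothesis to $\HeadSim{B_1}{B_2}$ at the lifted substitution $\sigma'$ (the shift of $\sigma$ that accounts for the freshly bound $x$), obtaining $\HeadSim{B_1[\sigma']}{B_2[\sigma']}$, and then reapply the $\Pi$-congruence rule to reassemble $\HeadSim{\PiR{s}{x : A_1[\sigma]}{B_1[\sigma']}}{\PiR{s}{x : A_2[\sigma]}{B_2[\sigma']}}$. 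The main obstacle is confined to this binder case, and it is purely a matter of bookkeeping: the lemma must be phrased generically over substitutions precisely so that the hypothesis applies to $\sigma'$ rather than only to the original $\sigma$. In the De~Bruijn representation of the accompanying Rocq development this lifting is the standard ``up'' operation and discharges automatically; apart from it, the proof is entirely mechanical.
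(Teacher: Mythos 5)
Your proposal is correct and follows the same route as the paper, which proves this lemma by induction on the derivation of $\HeadSim{A}{B}$ (the paper gives only that one-line justification). Your case analysis — identical-form rules and constructor-only congruences going through directly, with the two $\Pi$-rules as the sole recursive cases requiring the lifted substitution — is exactly the content that induction discharges.
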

\begin{proof}
  By induction on the derivation of $\HeadSim{A}{B}$.
\end{proof}

\begin{lemma}[Sim Reflexive]\label[lemma]{lemma:sim-reflexive}
  For any term $A$, there is $\Sim{A}{A}$.
\end{lemma}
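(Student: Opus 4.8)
The plan is to prove this directly by unfolding the definition of the type similarity relation. Recall that $\Sim{A}{B}$ is defined as $\exists A', B',\ A \simeq A' \land B \simeq B' \land \HeadSim{A'}{B'}$. To establish $\Sim{A}{A}$, I would instantiate both existential witnesses $A'$ and $B'$ with the term $A$ itself. This reduces the goal to three obligations: $A \simeq A$, $A \simeq A$ again, and $\HeadSim{A}{A}$.

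The first two obligations are discharged immediately by the \textsc{Conv-Refl} rule, which states that convertibility is reflexive. The third obligation, $\HeadSim{A}{A}$, is exactly the statement of \Cref{lemma:headsim-reflexive} (HeadSim Reflexive), which has already been established by induction on the structure of $A$. Combining these three facts under the two existential quantifiers yields $\Sim{A}{A}$.

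There is no real obstacle here: the content of the reflexivity argument lives entirely in the head-similarity case, which the preceding lemma already handles, and the convertibility component is trivially reflexive. The only point requiring any care is the choice of witnesses, but taking both equal to $A$ makes every subgoal degenerate to a reflexivity fact. Consequently the proof is a short assembly step rather than an inductive argument, and I would not expect to perform any case analysis on the structure of $A$ at this level.
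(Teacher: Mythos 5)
Your proof is correct and matches the paper's argument exactly: the paper likewise discharges the goal by the reflexivity of $\simeq$ together with \Cref{lemma:headsim-reflexive}, with both witnesses implicitly taken to be $A$. Your version merely spells out the witness choice explicitly, which is fine.
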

\begin{proof}
  By the reflexivity of $\simeq$ and \Cref{lemma:headsim-reflexive}.
\end{proof}

\begin{lemma}[Sim Transitive Left]
  For any $\Sim{A}{B}$ and $B \simeq C$, there is $\Sim{A}{C}$.
\end{lemma}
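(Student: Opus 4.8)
The plan is to unfold the definition of the similarity relation and re-use the very same pair of head-similar witnesses, adjusting only the right-hand convertibility by composing it with the hypothesis $B \simeq C$. Concretely, from $\Sim{A}{B}$ I would extract witnesses $A'$ and $B'$ with $A \simeq A'$, $B \simeq B'$, and $\HeadSim{A'}{B'}$. The goal $\Sim{A}{C}$ demands witnesses $A''$ and $C''$ with $A \simeq A''$, $C \simeq C''$, and $\HeadSim{A''}{C''}$; I would set $A'' := A'$ and $C'' := B'$. Then the head-similarity obligation $\HeadSim{A'}{B'}$ and the left obligation $A \simeq A'$ are discharged immediately from the extracted witnesses.

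It then only remains to establish $C \simeq B'$. Here I would use that $\simeq$ is the reflexive, symmetric, and transitive closure of parallel reduction, so in particular it is symmetric and transitive. From the hypothesis $B \simeq C$ I obtain $C \simeq B$ by symmetry, and composing with $B \simeq B'$ by transitivity yields $C \simeq B'$, completing the witness triple and hence the goal.

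There is essentially no obstacle: the whole argument is a single rearrangement once symmetry and transitivity of $\simeq$ are available. The only point worth pinning down is that the inductive presentation of $\simeq$ via \textsc{Conv-Refl}, \textsc{Conv-PStep}, and \textsc{Conv-PStep-Rev} genuinely yields a symmetric and transitive relation; both are routine inductions on the convertibility derivation and are recorded as basic closure properties alongside confluence. Given those, the lemma follows directly. I would expect the symmetric companion (transitivity on the right) to be proved analogously, instead composing an incoming convertibility on the left with $A \simeq A'$.
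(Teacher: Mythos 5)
Your proof is correct and matches the paper's argument, which is simply ``by the transitivity of $\simeq$'': you reuse the witnesses $A'$ and $B'$ from $\Sim{A}{B}$ and compose $C \simeq B \simeq B'$ using symmetry and transitivity of convertibility. The extra detail you supply (explicitly invoking symmetry and noting that these closure properties follow from the inductive presentation of $\simeq$) is just an elaboration of the same one-line idea.
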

\begin{proof}
  By the transitivity of $\simeq$.
\end{proof}

\begin{lemma}[Sim Transitive Right]
  For any $\Sim{A}{B}$ and $A \simeq C$, there is $\Sim{C}{B}$.
\end{lemma}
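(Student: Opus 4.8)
The plan is to unfold the existential in the definition of $\Sim{A}{B}$ and reuse its witnesses, adjusting only the single convertibility clause that mentions the left-hand term. Concretely, from $\Sim{A}{B} \triangleq \exists A', B'.\ A \simeq A' \land B \simeq B' \land \HeadSim{A'}{B'}$ I would first extract witnesses $A'$ and $B'$ together with the three facts $A \simeq A'$, $B \simeq B'$, and $\HeadSim{A'}{B'}$. These same $A', B'$ will serve as the witnesses establishing $\Sim{C}{B}$.

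The only new proof obligation is $C \simeq A'$. This follows directly from the equivalence-relation structure of $\simeq$: the hypothesis $A \simeq C$ gives $C \simeq A$ by symmetry, and composing with $A \simeq A'$ by transitivity yields $C \simeq A'$. Since both symmetry and transitivity of $\simeq$ are immediate from its characterization as the reflexive, symmetric, transitive closure of parallel reduction (the \textsc{Conv-Refl}, \textsc{Conv-PStep}, and \textsc{Conv-PStep-Rev} rules), no confluence or head-similarity reasoning is needed here. The remaining clauses $B \simeq B'$ and $\HeadSim{A'}{B'}$ are carried over verbatim from the hypothesis $\Sim{A}{B}$.

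I do not anticipate any genuine obstacle: this statement is the exact mirror of the preceding \emph{Sim Transitive Left} lemma, obtained by transporting a convertibility on the left argument of $\text{sim}$ rather than the right, and its proof is correspondingly just an application of transitivity (with one use of symmetry to orient the hypothesis). The head-similarity component $\HeadSim{A'}{B'}$ is untouched throughout, so the lemma reduces entirely to elementary manipulation of the convertibility equivalence.
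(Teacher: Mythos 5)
Your proof is correct and takes essentially the same route as the paper, whose entire proof is ``by the transitivity of $\simeq$'': the witnesses $A'$, $B'$ and the clause $\HeadSim{A'}{B'}$ are reused unchanged, and the only work is deriving $C \simeq A'$ from $A \simeq C$ and $A \simeq A'$. Your explicit note that one symmetry step is needed to orient $A \simeq C$ is a fair observation the paper leaves implicit, but it does not constitute a different approach.
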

\begin{proof}
  By the transitivity of $\simeq$.
\end{proof}

\begin{lemma}[Sim Symmetric]
  For any $\Sim{A}{B}$, there is $\Sim{B}{A}$.
\end{lemma}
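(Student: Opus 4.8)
The plan is to simply unfold the definition of type similarity and swap the two existential witnesses. Recall that $\Sim{A}{B}$ holds precisely when there exist terms $A'$ and $B'$ with $A \simeq A'$, $B \simeq B'$, and $\HeadSim{A'}{B'}$. So the first step is to assume $\Sim{A}{B}$ and extract such witnesses $A'$ and $B'$ together with the three accompanying facts.

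To establish $\Sim{B}{A}$, I would offer the same two terms in the opposite order: take $B'$ as the witness for the first component and $A'$ for the second. The two convertibility obligations $B \simeq B'$ and $A \simeq A'$ are then exactly the hypotheses already in hand, reused verbatim. The only remaining obligation is $\HeadSim{B'}{A'}$, which follows immediately from \Cref{lemma:headsim-symmetric} applied to the hypothesis $\HeadSim{A'}{B'}$.

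There is no genuine obstacle here: the entire argument is a one-line rearrangement of the existential data. All of the real work has been front-loaded into the symmetry of the head-similarity relation, which was dispatched earlier by induction on the derivation of $\HeadSim{A'}{B'}$. In particular, no properties of $\simeq$ beyond its bare presence are required, since the convertibility premises are carried across unchanged rather than transformed; this is why I do not expect to invoke reflexivity, transitivity, or symmetry of $\simeq$ in this step.
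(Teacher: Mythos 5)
Your proof is correct and follows essentially the same route as the paper, whose one-line justification is ``by the symmetry of $\simeq$ and \Cref{lemma:headsim-symmetric}''; the key ingredient in both is the symmetry of head similarity. Your observation that symmetry of $\simeq$ is not actually needed is accurate given the definition $\Sim{A}{B} \triangleq \exists A', B'.\ A \simeq A' \land B \simeq B' \land \HeadSim{A'}{B'}$, since swapping the existential witnesses reuses the two convertibility facts verbatim.
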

\begin{proof}
  By the symmetry of $\simeq$ and \Cref{lemma:headsim-symmetric}.
\end{proof}

\begin{lemma}[Sim Substitution]
  Given substitution $\sigma$, if there is $\Sim{A}{B}$, then there is $\Sim{A[\sigma]}{B[\sigma]}$.
\end{lemma}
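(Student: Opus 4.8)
The plan is to unfold the definition of $\Sim{A}{B}$, transport the convertibility and head-similarity witnesses along $\sigma$, and repackage. By definition, $\Sim{A}{B}$ supplies terms $A'$ and $B'$ with $A \simeq A'$, $B \simeq B'$, and $\HeadSim{A'}{B'}$. The obvious candidates for the similarity witnesses of $A[\sigma]$ and $B[\sigma]$ are $A'[\sigma]$ and $B'[\sigma]$, so it suffices to establish three facts: $A[\sigma] \simeq A'[\sigma]$, $B[\sigma] \simeq B'[\sigma]$, and $\HeadSim{A'[\sigma]}{B'[\sigma]}$.

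The third fact is immediate from \Cref{lemma:headsim-substitution} applied to $\HeadSim{A'}{B'}$. The first two require that the convertibility relation $\simeq$ be stable under substitution, i.e. $m \simeq n$ implies $m[\sigma] \simeq n[\sigma]$. First I would establish this as an auxiliary fact. Since $\simeq$ is generated as the reflexive, symmetric, transitive closure of parallel reduction $\Rightarrow$ (\Cref{appendix:logical-semantics}), it suffices to show that $\Rightarrow$ commutes with substitution, namely $m \Rightarrow n$ implies $m[\sigma] \Rightarrow n[\sigma]$; the closure rules \textsc{Conv-Refl}, \textsc{Conv-PStep}, and \textsc{Conv-PStep-Rev} then propagate this to $\simeq$. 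Stability of $\Rightarrow$ under substitution is proved by a routine induction on the derivation of $m \Rightarrow n$, where the $\beta$ and recursion-unfolding cases rely on the standard substitution-composition identity $m'[n'/x][\sigma] = m'[\sigma][n'[\sigma]/x]$ for an appropriately shifted $\sigma'$. This is the same parallel-reduction substitution machinery already underpinning confluence and the logical substitution lemma (\Cref{lemma:logical-subst}), so it is available in the development.

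Given substitution-stability of $\simeq$, I would conclude by deriving $A[\sigma] \simeq A'[\sigma]$ from $A \simeq A'$, deriving $B[\sigma] \simeq B'[\sigma]$ from $B \simeq B'$, and combining these with $\HeadSim{A'[\sigma]}{B'[\sigma]}$ to exhibit $A'[\sigma]$ and $B'[\sigma]$ as the witnesses required by the definition of $\Sim{A[\sigma]}{B[\sigma]}$.

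I do not anticipate a genuine obstacle. The only nontrivial ingredient is the commutation of parallel reduction with substitution, which is standard and already implicitly relied upon throughout the metatheory; the head-similarity half is discharged entirely by \Cref{lemma:headsim-substitution}, and the remaining argument is purely bookkeeping over the three conjuncts of $\Sim{\cdot}{\cdot}$.
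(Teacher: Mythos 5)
Your proposal matches the paper's proof exactly: the paper discharges the lemma in one line ``by the substitutivity of $\simeq$ and \Cref{lemma:headsim-substitution},'' which is precisely your decomposition into transporting the two convertibility witnesses and the head-similarity witness along $\sigma$. Your additional justification that substitutivity of $\simeq$ reduces to commutation of parallel reduction with substitution is correct and is the standard machinery the paper already relies on elsewhere.
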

\begin{proof}
  By the substitutivity of $\simeq$ and \Cref{lemma:headsim-substitution}.
\end{proof}

\begin{lemma}[Sim Sort Injective]\label[lemma]{lemma:sim-sort-injective}
  If $\Sim{s_1}{s_2}$, then $s_1 = s_2$.
\end{lemma}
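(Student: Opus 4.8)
The plan is to unfold the definition of $\Sim{s_1}{s_2}$ and then argue by inversion on the head-similarity witness. By definition there exist terms $A'$ and $B'$ with $s_1 \simeq A'$, $s_2 \simeq B'$, and $\HeadSim{A'}{B'}$. The whole argument hinges on what the last rule of $\HeadSim{A'}{B'}$ can be, so I would case-split on its derivation.

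First I would dispatch the ``rigid'' cases. Every head-similarity rule other than the congruence rules for $\Pi$-types, $\Sigma$-types, the concurrency monad $\CM{\cdot}$, and the channel types $\CH{\cdot}$, $\HC{\cdot}$ relates a term only to a syntactically identical copy of itself; in particular the sort rule forces $A' = B' = s$ for a single sort $s$. In each such case $A' = B'$, so from $s_1 \simeq A'$ and $s_2 \simeq B' = A'$, together with the symmetry and transitivity of $\simeq$, we obtain $s_1 \simeq s_2$, whereupon \Cref{corollary:inj-sort} immediately yields $s_1 = s_2$. Observe that here we need not know the head shape of $A'$ at all: it suffices that both sorts are convertible to one and the same term.

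The remaining cases are the genuine congruence rules, where $A'$ and $B'$ share a common outermost constructor---one of $\Pi$, $\Sigma$, $\CM{\cdot}$, $\CH{\cdot}$, or $\HC{\cdot}$---but may differ in their components. I claim each of these is vacuous. The key observation is that none of these formers ever occupies a redex position: parallel reduction can only act \emph{inside} a term such as $\PiR{s}{x : A}{B}$, $\CM{A}$, or $\CH{A}$, never rewriting the outermost constructor, whereas a sort $s$ reduces only to itself. Hence, were $A' \simeq s_1$ to hold with $A'$ so headed, the confluence theorem would furnish a common reduct of $A'$ and $s_1$ that is simultaneously headed by that type former and equal to the sort $s_1$, which is impossible. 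This contradiction closes every congruence case, and the conversion $s_2 \simeq B'$ is not even needed.

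The main obstacle is precisely this last family of no-confusion facts, namely that a type-former-headed term is never convertible to a sort. These are entirely analogous to the injectivity corollaries already derived from confluence (\Cref{corollary:inj-sort} and its siblings) and are obtained in the same way: by the Church--Rosser property (a consequence of the confluence theorem, itself built on the diamond property of \Cref{lemma:diamond}) one reduces $A' \simeq s_1$ to a common reduct and then observes that the preserved outermost constructor clashes with the sort. Everything else is a routine inversion together with bookkeeping uses of the equivalence-relation structure of $\simeq$.
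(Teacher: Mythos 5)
Your proof is correct and follows essentially the same route as the paper's (very terse) argument: unfold the definition of $\Sim{s_1}{s_2}$, observe that the head-similarity witness forces $s_1$ and $s_2$ to be convertible to a common term, and conclude with \Cref{corollary:inj-sort}. The paper's proof leaves implicit the dismissal of the genuine congruence cases ($\Pi$, $\Sigma$, $\CM{\cdot}$, $\CH{\cdot}$, $\HC{\cdot}$), which you correctly identify as requiring the standard confluence-based no-confusion facts that a sort is never convertible to a type-former-headed term; this is a faithful and slightly more careful elaboration rather than a different approach.
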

\begin{proof}
  By the definition of similarity and \Cref{corollary:inj-sort}.
\end{proof}

\begin{lemma}[Type Similarity]\label[lemma]{lemma:type-similarity}
  Given $\Gamma_1 \vdash m : A$ and $\Gamma_2 \vdash m : B$ and $\Sim{\Gamma_1}{\Gamma_2}$,
  then $\Sim{A}{B}$.
\end{lemma}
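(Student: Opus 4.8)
The plan is to argue by induction on the derivation of $\Gamma_1 \vdash m : A$, carrying the second judgment $\Gamma_2 \vdash m : B$ as a side input that I invert case by case. Since the logical typing rules are syntax-directed except for \textsc{Conversion}, the first order of business is to separate the conversion bookkeeping from the structural matching. If $\Gamma_1 \vdash m : A$ ends in \textsc{Conversion}, so that $\Gamma_1 \vdash m : A_0$ with $A_0 \simeq A$, I apply the induction hypothesis to the subderivation to obtain $\Sim{A_0}{B}$ and then transport along $A_0 \simeq A$ using the Sim transitivity lemmas to get $\Sim{A}{B}$. Symmetrically, before matching any structural rule I peel off the \textsc{Conversion} steps at the bottom of the second derivation by an inner induction, again closing with Sim transitivity. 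This two-sided peeling reduces every remaining case to one in which both derivations end in the \emph{same} syntax-directed rule, fixed by the head constructor of $m$.

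For the structural cases I split according to how the head constructor determines the type. Variables are immediate: the two types are the $\Gamma_1$- and $\Gamma_2$-bindings of $x$, which are similar by the hypothesis $\Sim{\Gamma_1}{\Gamma_2}$. Formation and introduction forms whose head is syntactically determined are discharged by reflexivity of $\simeq$ together with the matching head-similarity rule: sorts and the $\Pi$-/$\Sigma$-formers produce a literal sort, while the monadic and channel constructors produce types headed by $\CM{\cdot}$ (or by $\PiR{\Ln}{x : A}{\CM{\cdot}}$ as for $\Send$), all of which are related by the non-recursive (loose) head-similarity rules for $\CM{\cdot}$, $\CH{\cdot}$, $\HC{\cdot}$ and $\Sigma$ — so no information about their components is required. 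For $\lambda$-abstraction the result type is $\PiR{t}{x : A_0}{B_i}$; extending the similar contexts by the shared binding $x : A_0$ keeps them similar (using \Cref{lemma:sim-reflexive} on $A_0$), the induction hypothesis on the body gives $\Sim{B_1}{B_2}$, and the codomain-premise rule for $\HeadSim$ on $\Pi$-types together with the congruence of $\simeq$ yields $\Sim{\PiR{t}{x : A_0}{B_1}}{\PiR{t}{x : A_0}{B_2}}$.

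The delicate cases are the eliminators, chiefly \textsc{Explicit-App} (and analogously \textsc{Implicit-App} and the sum and boolean eliminators), where the result type is a substitution instance $B_i[n/x]$ of a codomain. Here I first apply the induction hypothesis to the function subterm to obtain $\Sim{\PiR{t}{x : A_1}{B_1}}{\PiR{t'}{x : A_2}{B_2}}$, and I then need an \emph{inversion principle for $\Sim$ at $\Pi$}: from similarity of two $\Pi$-types, recover $\Sim{B_1}{B_2}$ of the codomains. This is precisely the step that forces the codomain premise into the $\HeadSim$-$\Pi$ rule. Proving it relies on confluence: unfolding $\Sim$ produces convertibility witnesses whose common reducts are again $\Pi$-types (parallel reduction preserves the $\Pi$ head), so \Cref{corollary:inj-explicit-fun} relates the codomains up to $\simeq$, after which the head-similarity witnesses transfer through. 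With $\Sim{B_1}{B_2}$ established, the Sim Substitution lemma gives $\Sim{B_1[n/x]}{B_2[n/x]}$, which is the desired conclusion; the other eliminators follow the same template via the $\Sigma$-injectivity corollary.

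I expect this inversion-through-$\Pi$ step, combined with the careful two-sided peeling of \textsc{Conversion}, to be the main obstacle. It is exactly the point where the necessary weakening from ``unique typing'' to ``type similarity'' (forced by the non-uniqueness of dependent pairs, which is why $\HeadSim$ is loose on $\Sigma$, $\CM{\cdot}$, $\CH{\cdot}$ and $\HC{\cdot}$) must be reconciled with the genuine dependency of codomains on arguments, and where confluence and the injectivity corollaries do the real work. Everything else is either immediate from the reflexive and loose head-similarity rules or a routine appeal to the induction hypothesis and Sim congruence.
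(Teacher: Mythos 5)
Your proposal is correct and follows the same route as the paper, which proves this lemma by exactly the induction on the derivation of $\Gamma_1 \vdash m : A$ that you describe (the paper gives no further detail). Your identification of the application cases as the crux — requiring a $\Sim$-inversion principle at $\Pi$-types obtained from confluence and the injectivity corollaries, followed by Sim Substitution — is precisely the work the paper's one-line proof leaves implicit, and the rest of your case analysis matches the head-similarity rules as designed.
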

\begin{proof}
  By induction on the derivation of $\Gamma_1 \vdash m : A$.
\end{proof}

\begin{theorem}[Sort Uniqueness]\label[theorem]{theorem:sort-uniqueness}
  Given $\Gamma \vdash m : s$ and $\Gamma \vdash m : t$, then $s = t$.
\end{theorem}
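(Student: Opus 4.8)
The plan is to derive sort uniqueness as a direct consequence of the type-similarity machinery that precedes it. The key observation is that although \TLLC{} lacks uniqueness of types in general (dependent pairs such as $\pairI{m}{n}{s}$ admit distinct typings), any two types assigned to the same term in the same context are forced to be \emph{similar} in the sense of $\Sim{\cdot}{\cdot}$, and sorts are rigid under similarity. So the theorem should fall out as a short corollary of \Cref{lemma:type-similarity}.

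First I would instantiate \Cref{lemma:type-similarity} with $\Gamma_1 = \Gamma_2 = \Gamma$, the common term $m$, and the two typing assumptions $\Gamma \vdash m : s$ and $\Gamma \vdash m : t$. To apply the lemma I first need its hypothesis $\Sim{\Gamma}{\Gamma}$, i.e.\ reflexivity of the context-level similarity relation. This follows by a routine induction on the structure of $\Gamma$: the empty case is the base rule, and each cons step is discharged by the term-level reflexivity \Cref{lemma:sim-reflexive} together with the inductive hypothesis. With $\Sim{\Gamma}{\Gamma}$ in hand, \Cref{lemma:type-similarity} yields $\Sim{s}{t}$.

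Finally I would appeal to \Cref{lemma:sim-sort-injective}, which states that $\Sim{s_1}{s_2}$ forces $s_1 = s_2$, to conclude $s = t$. This is the step where the rigidity of sorts under convertibility (ultimately \Cref{corollary:inj-sort}, a consequence of confluence) is cashed in: two sorts can only be similar if they are literally equal, since no $\beta$-reduction or head-similarity rule relates distinct sorts.

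I do not expect any genuine obstacle within this theorem itself: all of the difficulty has been front-loaded into \Cref{lemma:type-similarity}, whose proof proceeds by induction on the derivation of $\Gamma_1 \vdash m : A$ and must contend with the \textsc{Conversion} rule and with constructs like dependent pairs that break strict type uniqueness. The entire point of introducing the head-similarity relation $\HeadSim{\cdot}{\cdot}$ and its convertibility closure $\Sim{\cdot}{\cdot}$ is to supply an invariant weak enough to survive these problematic cases yet strong enough to pin down the outermost sort, so that sort uniqueness drops out as the compact three-step corollary sketched above.
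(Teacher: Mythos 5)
Your proposal matches the paper's proof exactly: establish $\Sim{\Gamma}{\Gamma}$ via reflexivity of similarity, apply \Cref{lemma:type-similarity} to obtain $\Sim{s}{t}$, and conclude with \Cref{lemma:sim-sort-injective}. The only cosmetic difference is that the paper cites \Cref{lemma:headsim-reflexive} directly for context reflexivity where you route through \Cref{lemma:sim-reflexive} and an induction on $\Gamma$, which amounts to the same thing.
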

\begin{proof}
  From \Cref{lemma:headsim-reflexive} we have $\Sim{\Gamma}{\Gamma}$.
  Then from \Cref{lemma:type-similarity} we have $\Sim{s}{t}$.
  Finally from \Cref{lemma:sim-sort-injective} we have $s = t$.
\end{proof}

\paragraph{\textbf{Inversion}}
Due to the presence of type conversion, inversion lemmas are necessary to reason about
typing derivations.

\begin{lemma}\label[lemma]{lemma:logical-inversion-explicit-fun}
  If $\Gamma \vdash \PiR{s}{x : A}{B} : C$, then there exists $t$
  such that $\Gamma, x : A \vdash B : t$ and $C \simeq s$.
\end{lemma}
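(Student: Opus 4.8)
The plan is to argue by induction on the derivation of $\Gamma \vdash \PiR{s}{x : A}{B} : C$. The key observation is that the subject is syntactically an explicit $\Pi$-type, so inspecting the logical typing rules shows that only two of them can conclude a judgment of this shape: \textsc{Explicit-Fun} and \textsc{Conversion}. Every other logical rule concludes a judgment whose subject has a different head constructor (a sort, a variable, an implicit $\Pi$-type, a $\Sigma$-type, a protocol former such as $\Proto$ or $\ActR{\rho}{x:A}{B}$, etc.), so all of those cases are vacuous. This restriction to two admissible last rules is what makes the inversion go through.

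In the \textsc{Explicit-Fun} case the derivation ends with the premises $\Gamma \vdash A : s'$ and $\Gamma, x : A \vdash B : r$ for some sorts $s', r$, and the rule fixes $C$ to be exactly the sort annotation $s$. I would then take $t := r$, which immediately supplies $\Gamma, x : A \vdash B : t$, and note that since $C = s$ we obtain $C \simeq s$ by \textsc{Conv-Refl}. In the \textsc{Conversion} case the derivation ends with a subderivation $\Gamma \vdash \PiR{s}{x : A}{B} : C'$ together with $\Gamma \vdash C : s''$ and $C' \simeq C$. Applying the induction hypothesis to the subderivation yields a sort $t$ with $\Gamma, x : A \vdash B : t$ and $C' \simeq s$; it then remains only to transport the convertibility across the conversion step, using symmetry of $\simeq$ to get $C \simeq C'$ and transitivity with $C' \simeq s$ to conclude $C \simeq s$, so the same witness $t$ discharges the goal.

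The argument is essentially routine, and I do not anticipate a serious obstacle: the only work is carefully enumerating the admissible last rules so that the non-$\Pi$ cases are discharged as impossible, and threading the convertibility witness through the \textsc{Conversion} case by symmetry and transitivity of $\simeq$. Notably, nothing stronger is required here — in particular the $\Pi$-injectivity of \Cref{corollary:inj-explicit-fun} is \emph{not} needed, since the lemma only asserts the existence of some sorting $\Gamma, x : A \vdash B : t$ and the convertibility $C \simeq s$, both of which are already available at the point where the $\Pi$-type is introduced by \textsc{Explicit-Fun}.
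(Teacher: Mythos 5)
Your proof is correct: the case analysis is exhaustive (only \textsc{Explicit-Fun} and \textsc{Conversion} can conclude a judgment whose subject is an explicit $\Pi$-type), the witness $t := r$ in the base case is right, and threading $C \simeq s$ through the conversion case by symmetry and transitivity of $\simeq$ is exactly what is needed. The paper states this inversion lemma without an explicit proof, and your argument is the standard induction-on-derivations one it implicitly relies on; you are also right that injectivity of $\Pi$-types is not needed here.
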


\begin{lemma}\label[lemma]{lemma:logical-inversion-implicit-fun}
  If $\Gamma \vdash \PiI{s}{x : A}{B} : C$, then there exists $t$
  such that $\Gamma, x : A \vdash B : t$ and $C \simeq s$.
\end{lemma}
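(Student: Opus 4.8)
The plan is to proceed by induction on the derivation of $\Gamma \vdash \PiI{s}{x : A}{B} : C$. The key observation is that only two logical typing rules can conclude a judgment whose subject is an implicit function type $\PiI{s}{x : A}{B}$: the introduction rule \textsc{Implicit-Fun} and the generic \textsc{Conversion} rule. Every other logical rule fixes a strictly different syntactic head for its subject (a variable, a sort, a $\lambda$-abstraction, an application, a $\Sigma$-type, a pair, an eliminator, a data or monadic or session construct), so none of them can produce this judgment and the induction has exactly two non-vacuous cases.

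In the \textsc{Implicit-Fun} case, the premises of the rule are $\Gamma \vdash A : s'$ and $\Gamma, x : A \vdash B : r$, and its conclusion forces the ascribed type $C$ to be exactly the sort annotation $s$. The second premise immediately supplies the required witness by taking $t := r$, giving $\Gamma, x : A \vdash B : t$, while $C \simeq s$ holds by reflexivity of $\simeq$ (\textsc{Conv-Refl}). This case carries the actual content of the lemma. In the \textsc{Conversion} case, the judgment $\Gamma \vdash \PiI{s}{x : A}{B} : C$ arises from a shorter derivation $\Gamma \vdash \PiI{s}{x : A}{B} : C'$ together with $C' \simeq C$. I would invoke the induction hypothesis on this subderivation to obtain a sort $t$ with $\Gamma, x : A \vdash B : t$ and $C' \simeq s$, and then discharge the goal $C \simeq s$ using symmetry and transitivity of convertibility, chaining $C \simeq C' \simeq s$.

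The only real subtlety is this conversion case, which is exactly what separates inversion in a theory with a conversion rule from a purely syntactic read-off; it is dispatched uniformly by the induction hypothesis together with the fact that $\simeq$ is an equivalence relation generated by \textsc{Conv-Refl}, \textsc{Conv-PStep}, and \textsc{Conv-PStep-Rev}. It is worth noting that no injectivity of $\simeq$ on type constructors (\Cref{corollary:inj-implicit-fun}) is needed here, since the subject is already literally an implicit $\Pi$-type and we never have to decompose a convertibility between two such types. The argument is structurally identical to that of \Cref{lemma:logical-inversion-explicit-fun}, with \textsc{Implicit-Fun} in place of \textsc{Explicit-Fun}.
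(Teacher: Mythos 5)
Your proof is correct and follows exactly the intended argument: the paper states this inversion lemma without an explicit proof, and the standard induction on the typing derivation with the two non-vacuous cases \textsc{Implicit-Fun} (where $C$ is literally $s$ and the second premise supplies $t$) and \textsc{Conversion} (discharged by the induction hypothesis plus transitivity and symmetry of $\simeq$) is precisely what is meant. Your observation that no injectivity corollary is needed here, since the subject is already syntactically an implicit $\Pi$-type, is also accurate.
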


\begin{lemma}\label[lemma]{lemma:logical-inversion-explicit-lam}
  If $\Gamma \vdash \lamR{s_1}{x : A_1}{m} : \PiR{s_2}{x : A_2}{B}$, then $\Gamma, x : A_1 \vdash m : B$.
\end{lemma}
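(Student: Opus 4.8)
The plan is to prove this by induction on the typing derivation of $\Gamma \vdash \lamR{s_1}{x : A_1}{m} : \PiR{s_2}{x : A_2}{B}$, the essential difficulty being that the derivation need not end with \textsc{Explicit-Lam}: the \textsc{Conversion} rule may have been applied arbitrarily often at the end, so a naive case split on the last rule does not recurse. To cope with this I would first establish the generalized statement that whenever $\Gamma \vdash \lamR{s_1}{x : A_1}{m} : C$, there exists a codomain $B_0$ with $\Gamma, x : A_1 \vdash m : B_0$ and $C \simeq \PiR{s_1}{x : A_1}{B_0}$. Only two rules can produce a term whose head is an explicit $\lambda$-abstraction, so this induction has just two cases. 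In the \textsc{Explicit-Lam} case we have $C = \PiR{s_1}{x : A_1}{B_0}$ together with the premise $\Gamma, x : A_1 \vdash m : B_0$, so the witness is immediate and $C \simeq \PiR{s_1}{x : A_1}{B_0}$ holds by reflexivity of $\simeq$. In the \textsc{Conversion} case the subderivation types the same $\lambda$ at some $C'$ with $C' \simeq C$; the induction hypothesis supplies $B_0$ with $C' \simeq \PiR{s_1}{x : A_1}{B_0}$, and transitivity of $\simeq$ yields $C \simeq \PiR{s_1}{x : A_1}{B_0}$.

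With the generalized lemma in hand I would specialize $C := \PiR{s_2}{x : A_2}{B}$ and recover $B_0$ with $\Gamma, x : A_1 \vdash m : B_0$ and $\PiR{s_2}{x : A_2}{B} \simeq \PiR{s_1}{x : A_1}{B_0}$. Injectivity of the explicit function constructor under convertibility (\Cref{corollary:inj-explicit-fun}) then gives $s_1 = s_2$, $A_1 \simeq A_2$, and $B_0 \simeq B$. It remains only to replace $B_0$ by the convertible $B$ in the type of $m$, i.e.\ to apply \textsc{Conversion} once more.

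This last step is where the real care is needed, since \textsc{Conversion} demands that the target type be well-sorted in the \emph{same} context $\Gamma, x : A_1$. I would obtain $\Gamma, x : A_1 \vdash B : t$ as follows: the hypothesis derivation makes $\PiR{s_2}{x : A_2}{B}$ well-sorted (directly as a premise when the derivation ends in \textsc{Conversion}, and trivially when it ends in \textsc{Explicit-Lam}, where $B = B_0$ already and no conversion is needed), so inversion on $\Pi$-formation (\Cref{lemma:logical-inversion-explicit-fun}) yields $\Gamma, x : A_2 \vdash B : t$, and context conversion (\Cref{corollary:logical-context-conv}) transports this across $A_2 \simeq A_1$ to $\Gamma, x : A_1 \vdash B : t$. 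Applying \textsc{Conversion} to $\Gamma, x : A_1 \vdash m : B_0$ with $B_0 \simeq B$ then delivers $\Gamma, x : A_1 \vdash m : B$. The main obstacle throughout is the disciplined bookkeeping of interleaved conversion steps: the generalization to an arbitrary convertible type $C$ is what makes the induction go through, and the interplay between $\Pi$-injectivity, $\Pi$-inversion, and context conversion is what licenses moving from the inductively produced $B_0$ to the syntactically prescribed $B$.
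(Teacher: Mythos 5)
Your proposal is correct and follows the standard strengthened-induction argument for inversion lemmas in the presence of a \textsc{Conversion} rule; the paper states this lemma without an explicit proof, and your route (generalize the target type to an arbitrary $C$, case-split on \textsc{Explicit-Lam} versus \textsc{Conversion}, then use $\Pi$-injectivity from confluence together with context conversion and one final \textsc{Conversion} step) is exactly the argument the development relies on. The care you take to justify the well-sortedness premise of the final \textsc{Conversion} application is the right detail to attend to, and your handling of it is sound.
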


\begin{lemma}\label[lemma]{lemma:logical-inversion-implicit-lam}
  If $\Gamma \vdash \lamI{s_1}{x : A_1}{m} : \PiI{s_2}{x : A_2}{B}$, then $\Gamma, x : A_1 \vdash m : B$.
\end{lemma}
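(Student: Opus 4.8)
The plan is to prove a slightly stronger generation statement that absorbs the \textsc{Conversion} rule, and then specialize it. The difficulty is that the derivation of $\Gamma \vdash \lamI{s_1}{x : A_1}{m} : \PiI{s_2}{x : A_2}{B}$ need not end with \textsc{Implicit-Lam}: it may be followed by an arbitrary number of \textsc{Conversion} steps, so the premise of \textsc{Implicit-Lam} cannot be read off directly. The only typing rules that can conclude a judgment whose subject is an implicit $\lambda$-abstraction are \textsc{Implicit-Lam} and \textsc{Conversion}, which is what makes an induction on the derivation tractable.

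First I would prove, by induction on the typing derivation, the generalized claim: if $\Gamma \vdash \lamI{s_1}{x : A_1}{m} : C$, then there exists $B_0$ with $\Gamma, x : A_1 \vdash m : B_0$ and $C \simeq \PiI{s_1}{x : A_1}{B_0}$. In the \textsc{Implicit-Lam} case the type is literally $\PiI{s_1}{x : A_1}{B_0}$ and the premise is exactly $\Gamma, x : A_1 \vdash m : B_0$, so the claim holds by reflexivity of $\simeq$. In the \textsc{Conversion} case we have $\Gamma \vdash \lamI{s_1}{x : A_1}{m} : C'$ with $C' \simeq C$; the induction hypothesis yields some $B_0$ with $\Gamma, x : A_1 \vdash m : B_0$ and $C' \simeq \PiI{s_1}{x : A_1}{B_0}$, and symmetry and transitivity of $\simeq$ then give $C \simeq \PiI{s_1}{x : A_1}{B_0}$. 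Since the domain annotation produced by \textsc{Implicit-Lam} is forced to coincide with the annotation $A_1$ of the abstraction, the recovered $\Pi$-type always carries $s_1$ and $A_1$, which is what makes the later injectivity step usable.

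To obtain the lemma, I would instantiate $C := \PiI{s_2}{x : A_2}{B}$. The generalized claim supplies $B_0$ with $\Gamma, x : A_1 \vdash m : B_0$ and $\PiI{s_2}{x : A_2}{B} \simeq \PiI{s_1}{x : A_1}{B_0}$. Injectivity of the implicit function constructor under convertibility (\Cref{corollary:inj-implicit-fun}) then yields $s_1 = s_2$, $A_1 \simeq A_2$, and $B_0 \simeq B$. It remains to convert the codomain, turning $\Gamma, x : A_1 \vdash m : B_0$ into $\Gamma, x : A_1 \vdash m : B$ by a final \textsc{Conversion} step, which requires $\Gamma, x : A_1 \vdash B : s'$ for some sort $s'$.

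The main obstacle is precisely this last well-sortedness obligation, compounded by the mismatch between the two domain annotations $A_1$ and $A_2$: the codomain $B$ is naturally typed in the context carrying $A_2$, not $A_1$. I would discharge it by applying logical validity to the hypothesis to obtain $\Gamma \vdash \PiI{s_2}{x : A_2}{B} : s''$, then the $\Pi$-formation inversion lemma (\Cref{lemma:logical-inversion-implicit-fun}) to extract $\Gamma, x : A_2 \vdash B : t$, and finally context conversion (\Cref{corollary:logical-context-conv}) along $A_2 \simeq A_1$ to move this to $\Gamma, x : A_1 \vdash B : t$. A \textsc{Conversion} step with $B_0 \simeq B$ then delivers $\Gamma, x : A_1 \vdash m : B$. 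The appeal to logical validity is harmless here, since validity itself is proved from $\Pi$-\emph{formation} inversion (\Cref{lemma:logical-inversion-implicit-fun}) and substitution without invoking $\lambda$-inversion, so no circularity arises.
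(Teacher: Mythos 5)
Your proposal is correct and follows the standard generation-lemma argument that the paper leaves implicit for its inversion lemmas: generalize the type to an arbitrary $C$ so the induction absorbs \textsc{Conversion}, recover $C \simeq \PiI{s_1}{x : A_1}{B_0}$, apply injectivity (\Cref{corollary:inj-implicit-fun}), and discharge the final well-sortedness side condition via validity, \Cref{lemma:logical-inversion-implicit-fun}, and context conversion along $A_2 \simeq A_1$. Your observation that validity does not depend on $\lambda$-inversion, so no circularity arises, is also accurate.
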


\begin{lemma}\label[lemma]{lemma:logical-inversion-explicit-sig}
  If $\Gamma \vdash \SigR{t}{x : A}{B} : C$, then there exists $s, r$ such that
  $s \sqsubseteq t$ and $r \sqsubseteq t$ and
  $\Gamma \vdash A : s$ and $\Gamma, x : A \vdash B : r$ and $C \simeq t$.
\end{lemma}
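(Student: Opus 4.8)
The plan is to prove this inversion lemma by induction on the derivation of $\Gamma \vdash \SigR{t}{x : A}{B} : C$. The crucial structural observation is that the subject $\SigR{t}{x : A}{B}$ has a fixed syntactic head, so among the logical typing rules only two can possibly conclude the judgment: the formation rule \textsc{Explicit-Sum}, which produces exactly this head constructor, and the \textsc{Conversion} rule, which is applicable to any well-typed term. Every remaining rule (\textsc{Sort}, \textsc{Var}, the various $\Pi$, $\lambda$, application, pair, eliminator, and session rules) concludes a judgment whose subject has a different head, so those cases are vacuous and discharged immediately.

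In the \textsc{Explicit-Sum} case the premises of the rule directly furnish the witnesses: we read off sorts $s$ and $r$ with $s \sqsubseteq t$, $r \sqsubseteq t$, together with $\Gamma \vdash A : s$ and $\Gamma, x : A \vdash B : r$, while the conclusion pins $C = t$, so that $C \simeq t$ holds by \textsc{Conv-Refl}. In the \textsc{Conversion} case the derivation ends with a strictly smaller sub-derivation $\Gamma \vdash \SigR{t}{x : A}{B} : C'$ and a convertibility premise $C' \simeq C$ (alongside $\Gamma \vdash C : s'$ for some sort $s'$). Applying the induction hypothesis to that sub-derivation yields sorts $s, r$ satisfying $s \sqsubseteq t$, $r \sqsubseteq t$, $\Gamma \vdash A : s$, $\Gamma, x : A \vdash B : r$, and $C' \simeq t$; I would then conclude $C \simeq t$ by combining $C' \simeq C$ with $C' \simeq t$ via the symmetry and transitivity of $\simeq$, carrying the other four conclusions through unchanged.

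Since the argument is purely structural and the recursive call in the \textsc{Conversion} case is on a proper sub-derivation, well-foundedness of the induction is immediate and there is no serious obstacle. The only point requiring any care is the \textsc{Conversion} case, where the conclusion $C \simeq t$ must be threaded through the conversion premise; this relies solely on the symmetry and transitivity of the convertibility relation $\simeq$, which are the same closure properties already invoked in the \emph{Sim}-transitivity lemmas above and which follow from confluence, so they may be used freely here. I expect the analogous inversion lemmas for the implicit sum, the function types, and the monadic and session type constructors to follow by exactly the same two-case schema.
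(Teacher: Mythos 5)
Your proof is correct: the paper states this inversion lemma without proof, and the standard argument you give---induction on the derivation, observing that only \textsc{Explicit-Sum} and \textsc{Conversion} can conclude a judgment whose subject is $\SigR{t}{x : A}{B}$, with the conversion case threaded through by symmetry and transitivity of $\simeq$---is exactly the routine argument the paper is implicitly relying on. No gaps.
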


\begin{lemma}\label[lemma]{lemma:logical-inversion-implicit-sig}
  If $\Gamma \vdash \SigI{t}{x : A}{B} : C$, then there exists $s, r$ such that
  $r \sqsubseteq t$ and
  $\Gamma \vdash A : s$ and $\Gamma, x : A \vdash B : r$ and $C \simeq t$.
\end{lemma}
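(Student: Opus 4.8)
The plan is to proceed by induction on the derivation of $\Gamma \vdash \SigI{t}{x : A}{B} : C$. Since the subject term is a fixed syntactic form (an implicit dependent sum), only two typing rules can have produced the final step of the derivation: the \textsc{Implicit-Sum} formation rule and the \textsc{Conversion} rule. I would dispatch these two cases in turn, exactly as in the companion inversion lemmas (\Cref{lemma:logical-inversion-explicit-sig} and the $\Pi$-cases) that precede this statement.

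In the \textsc{Implicit-Sum} case, the conclusion is $\Gamma \vdash \SigI{t}{x : A}{B} : t$, so $C$ is literally $t$ and the premises of the rule supply exactly the sorts $s$ and $r$ with $r \sqsubseteq t$, $\Gamma \vdash A : s$, and $\Gamma, x : A \vdash B : r$; the required $C \simeq t$ then holds by reflexivity of $\simeq$ (\textsc{Conv-Refl}). In the \textsc{Conversion} case, the derivation ends with a premise $\Gamma \vdash \SigI{t}{x : A}{B} : C'$ for some $C'$ with $C' \simeq C$. Applying the induction hypothesis to this premise yields sorts $s, r$ satisfying $r \sqsubseteq t$, $\Gamma \vdash A : s$, and $\Gamma, x : A \vdash B : r$, together with $C' \simeq t$. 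I then close the case by chaining convertibility: from $C' \simeq C$ and $C' \simeq t$ I obtain $C \simeq t$ using the symmetry and transitivity of the convertibility relation, which are immediate from its definition as the reflexive, symmetric, and transitive closure of parallel reduction.

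I expect no genuine obstacle here: the lemma is a routine inversion, and the only point requiring care is the \textsc{Conversion} case, where one must carry the originally assigned sort $t$ through the convertibility steps rather than attempting to invert the conversion on the $\Sigma$-type itself. In particular, no appeal to confluence or to the injectivity corollaries for $\Sigma$-types (\Cref{corollary:inj-implicit-sig}) is needed, since we only ever track how $C$ relates to $t$ and never need to decompose a conversion between two distinct $\Sigma$-types.
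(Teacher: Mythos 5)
Your proof is correct and follows exactly the routine inversion argument the paper leaves implicit for this batch of lemmas: induction on the derivation, with the \textsc{Implicit-Sum} case read off directly from the rule's premises and the \textsc{Conversion} case closed by transitivity and symmetry of $\simeq$. You are also right that no injectivity or confluence result is needed here, since the sort annotation $t$ is carried syntactically by the term rather than recovered from the type $C$.
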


\begin{lemma}\label[lemma]{lemma:logical-inversion-monad}
  If $\Gamma \vdash \CM{A} : B$, then there exists $s$ such that $\Gamma \vdash A : s$ and $B \simeq \Ln$.
\end{lemma}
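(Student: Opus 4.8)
The plan is to proceed by induction on the derivation of $\Gamma \vdash \CM{A} : B$. The subject $\CM{A}$ has a fixed head constructor, so only two of the logical typing rules can have a conclusion of this shape: the \textsc{$\C$Type} rule, which introduces the monad type directly, and the generic \textsc{Conversion} rule, which does not inspect the head of the subject. Every other logical rule produces a subject with a different head constructor and is therefore immediately ruled out, leaving exactly a base case and one inductive case to analyze.

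In the base case the derivation ends with \textsc{$\C$Type}, whose premise is precisely $\Gamma \vdash A : s$ for some sort $s$ and whose conclusion fixes $B = \Ln$. I would take this $s$ as the witness and observe that $\Ln \simeq \Ln$ holds by reflexivity of convertibility (rule \textsc{Conv-Refl}), discharging both obligations. In the inductive case the derivation ends with \textsc{Conversion}, so there is an intermediate type $A'$ with $\Gamma \vdash \CM{A} : A'$ and $A' \simeq B$. Applying the induction hypothesis to the subderivation $\Gamma \vdash \CM{A} : A'$ yields a sort $s$ with $\Gamma \vdash A : s$ and $A' \simeq \Ln$. The first conjunct is already what is required; for the second, I would combine $B \simeq A'$ (by symmetry of $\simeq$) with $A' \simeq \Ln$ via transitivity to conclude $B \simeq \Ln$.

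This lemma is routine, so there is no real obstacle beyond the bookkeeping of conversion chains, which is entirely absorbed by the induction hypothesis together with the reflexivity, symmetry, and transitivity of $\simeq$ (the latter two being standard properties of the convertibility closure presented by \textsc{Conv-Refl}, \textsc{Conv-PStep}, and \textsc{Conv-PStep-Rev}). The same induction template --- isolate the single introduction rule for the head constructor, peel off \textsc{Conversion} through the induction hypothesis, and close the convertibility side condition by transitivity --- underlies all the companion inversion lemmas (\Cref{lemma:logical-inversion-explicit-fun} through \Cref{lemma:logical-inversion-implicit-sig}); the only differences are which structural premises are recovered from the introduction rule and whether the confluence corollaries of \Cref{corollary:inj-sort} onward are needed to extract further information downstream.
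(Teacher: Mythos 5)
Your proof is correct and follows exactly the standard route the paper intends for its inversion lemmas (which it states without written proofs): induction on the derivation, with the \textsc{$\C$Type} case supplying the witness sort and $B = \Ln$, and the \textsc{Conversion} case discharged by the induction hypothesis plus symmetry and transitivity of $\simeq$. No gaps.
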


\begin{lemma}\label[lemma]{lemma:logical-inversion-ch}
  If $\Gamma \vdash \CH{A} : B$, then $\Gamma \vdash A : \Proto$ and $B \simeq \Ln$. 
\end{lemma}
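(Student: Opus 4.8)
The plan is to proceed by straightforward induction on the derivation of the logical typing judgment $\Gamma \vdash \CH{A} : B$. Since the subject term $\CH{A}$ has a fixed head constructor, only two of the logical typing rules can have concluded this judgment: the channel-formation rule \textsc{ChType} and the \textsc{Conversion} rule. Every other logical rule (\textsc{Sort}, \textsc{Var}, the $\Pi$- and $\Sigma$-formation and introduction rules, \textsc{Proto}, \textsc{End}, \textsc{HcType}, the monadic rules, and so on) produces a conclusion whose subject has a different syntactic head, so all those cases are vacuous. This is the same generation pattern that underlies the preceding inversion lemmas such as \Cref{lemma:logical-inversion-monad} and \Cref{lemma:logical-inversion-ch}.

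In the base case where the last rule is \textsc{ChType}, its single premise is exactly $\Gamma \vdash A : \Proto$, and its conclusion fixes $B = \Ln$. Hence the first conjunct holds immediately, and the second follows from reflexivity of convertibility (\textsc{Conv-Refl}), which gives $\Ln \simeq \Ln$. In the case where the last rule is \textsc{Conversion}, the derivation has the shape: from a subderivation $\Gamma \vdash \CH{A} : C$ together with $C \simeq B$ (and $\Gamma \vdash B : s$) we conclude $\Gamma \vdash \CH{A} : B$. Applying the induction hypothesis to $\Gamma \vdash \CH{A} : C$ yields $\Gamma \vdash A : \Proto$ and $C \simeq \Ln$. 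The first conjunct is discharged directly. For the second, I use that $\simeq$ is an equivalence relation, being defined as the reflexive, symmetric, and transitive closure of parallel reduction: from $C \simeq B$ I obtain $B \simeq C$ by symmetry, and chaining with $C \simeq \Ln$ by transitivity gives the desired $B \simeq \Ln$.

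I do not expect any genuine obstacle here, as this is a textbook inversion (generation) lemma. The only point requiring any care is threading the convertibility relation correctly through the \textsc{Conversion} case, and this is handled entirely by the symmetry and transitivity properties of $\simeq$ established from its inductive definition. Consequently the full argument reduces to a case analysis with a single nontrivial inductive step.
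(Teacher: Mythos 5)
Your proof is correct and is exactly the standard generation argument the paper intends for this (unproved-in-text) inversion lemma: induct on the derivation, observe that only \textsc{ChType} and \textsc{Conversion} can conclude a judgment whose subject is $\CH{A}$, and thread $B \simeq \Ln$ through the conversion case using symmetry and transitivity of $\simeq$. The only blemish is cosmetic: in your opening paragraph you cite \Cref{lemma:logical-inversion-ch} (the lemma being proved) as an example of a \emph{preceding} inversion lemma, where you presumably meant one of its neighbors such as \Cref{lemma:logical-inversion-explicit-fun}.
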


\begin{lemma}\label[lemma]{lemma:logical-inversion-hc}
  If $\Gamma \vdash \HC{A} : B$, then $\Gamma \vdash A : \Proto$ and $B \simeq \Ln$. 
\end{lemma}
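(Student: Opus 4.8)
The plan is to proceed by induction on the derivation of $\Gamma \vdash \HC{A} : B$, following the same pattern as the preceding inversion lemmas (such as \Cref{lemma:logical-inversion-ch}). The crucial observation I would exploit is that the subject term $\HC{A}$ has a rigid head constructor, so among the logical typing rules only two can conclude with a judgment of this shape: the \textsc{HcType} rule that introduces $\HC{\cdot}$, and the \textsc{Conversion} rule that applies to any term. All remaining logical rules conclude with syntactically distinct subjects and are therefore excluded from consideration.

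In the base case, where the derivation ends with \textsc{HcType}, the conclusion is forced to be $\Gamma \vdash \HC{A} : \Ln$, so I would read off $B = \Ln$ and hence $B \simeq \Ln$ by reflexivity of $\simeq$ (\textsc{Conv-Refl}); the single premise of \textsc{HcType} is exactly $\Gamma \vdash A : \Proto$, which discharges the other conjunct directly.

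For the inductive case, where the derivation ends with \textsc{Conversion}, there is an intermediate type $B'$ with a shorter derivation $\Gamma \vdash \HC{A} : B'$, a side derivation $\Gamma \vdash B : s$, and a conversion $B' \simeq B$. I would apply the induction hypothesis to $\Gamma \vdash \HC{A} : B'$ to obtain $\Gamma \vdash A : \Proto$ together with $B' \simeq \Ln$, then carry the first conjunct through unchanged and chain $B \simeq B'$ (by symmetry of $\simeq$) with $B' \simeq \Ln$ (by transitivity) to conclude $B \simeq \Ln$.

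I expect no step to present a genuine obstacle: the argument turns only on the rigidity of the subject's head constructor across the logical typing rules and on the equivalence properties of $\simeq$ already fixed by the definition of the convertibility relation. This same skeleton dispatches all of the neighboring logical inversion lemmas, varying only in which introduction rule supplies the base case.
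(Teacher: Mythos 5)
Your proof is correct and follows exactly the argument the paper intends for this family of inversion lemmas (which it states without explicit proof): induction on the derivation, where only \textsc{HcType} and \textsc{Conversion} can have a subject of the form $\HC{A}$, with the base case read off directly and the conversion case closed by the induction hypothesis together with symmetry and transitivity of $\simeq$. No gaps.
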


\begin{lemma}\label[lemma]{lemma:logical-inversion-explicit-action}
  If $\Gamma \vdash \ActR{\rho}{x : A}{B} : C$, then $\Gamma, x : A \vdash B : \Proto$.
\end{lemma}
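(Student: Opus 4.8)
The plan is to proceed by induction on the derivation of $\Gamma \vdash \ActR{\rho}{x : A}{B} : C$, mirroring the strategy of the preceding logical inversion lemmas. The essential observation is that the logical typing rules are syntax-directed in their subject term: the only rules whose conclusion can carry $\ActR{\rho}{x : A}{B}$ as its subject are the introduction rule \textsc{Explicit-Action} and the \textsc{Conversion} rule, the latter leaving the subject unchanged while merely replacing its type by a convertible one.

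The base case is immediate. If the derivation ends with \textsc{Explicit-Action}, its sole premise is exactly $\Gamma, x : A \vdash B : \Proto$, which is the conclusion we seek, so nothing further is required. For the inductive step, suppose instead the derivation ends with \textsc{Conversion}, deriving $\Gamma \vdash \ActR{\rho}{x : A}{B} : C$ from a subderivation $\Gamma \vdash \ActR{\rho}{x : A}{B} : C'$ together with $C' \simeq C$ and $\Gamma \vdash C : s$. Since the subject term is untouched by \textsc{Conversion}, the induction hypothesis applies verbatim to the subderivation and directly yields $\Gamma, x : A \vdash B : \Proto$.

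I expect no genuine obstacle here. Unlike the function- and pair-type inversion lemmas, whose conclusions assert a convertibility between the classifying type $C$ and the formed type, the present lemma makes no claim about $C$ whatsoever; consequently the conversion step is absorbed transparently by the induction hypothesis, and none of the injectivity corollaries (such as \Cref{corollary:inj-sort} and its analogues for the connectives) are needed. The only points demanding minor care are the syntactic verification that no typing rule other than \textsc{Explicit-Action} and \textsc{Conversion} can produce $\ActR{\rho}{x : A}{B}$ as its subject, and the accompanying observation that \textsc{Conversion} preserves the subject term so that the induction hypothesis remains applicable.
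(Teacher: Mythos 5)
Your proof is correct and is exactly the standard argument intended here: the paper states this inversion lemma without proof (as routine), and the induction on the typing derivation with the two cases \textsc{Explicit-Action} (premise is the goal) and \textsc{Conversion} (subject unchanged, apply the induction hypothesis) is precisely how it goes. Your observation that no injectivity corollaries are needed because the lemma makes no claim about $C$ is also accurate.
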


\begin{lemma}\label[lemma]{lemma:logical-inversion-implicit-action}
  If $\Gamma \vdash \ActI{\rho}{x : A}{B} : C$, then $\Gamma, x : A \vdash B : \Proto$.
\end{lemma}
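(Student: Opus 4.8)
The plan is to proceed by induction on the derivation of $\Gamma \vdash \ActI{\rho}{x : A}{B} : C$, classifying the last rule that could have produced a judgment whose subject is an implicit action $\ActI{\rho}{x : A}{B}$. Scanning the logical typing rules, exactly two are applicable: \textsc{Implicit-Action}, which forms the action directly, and \textsc{Conversion}, which can sit on top of any derivation without changing the subject term. No other rule has a conclusion of this syntactic shape, so these are the only two cases to treat.

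In the \textsc{Implicit-Action} case, the sole premise is precisely $\Gamma, x : A \vdash B : \Proto$, which is exactly the desired conclusion; we incidentally also learn $C = \Proto$, but this fact is never used. In the \textsc{Conversion} case, the derivation decomposes as a subderivation $\Gamma \vdash \ActI{\rho}{x : A}{B} : C'$ together with $\Gamma \vdash C : s$ and $C' \simeq C$. Since the subject of the subderivation is still the same implicit action, the induction hypothesis applies and yields $\Gamma, x : A \vdash B : \Proto$ directly. Because the ascribed type $C$ does not appear anywhere in the conclusion, the conversion step is simply discarded.

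I expect essentially no obstacle here. Unlike the $\Pi$- and $\Sigma$-type inversions (\Cref{lemma:logical-inversion-explicit-fun,lemma:logical-inversion-explicit-sig}), which must additionally relate the ascribed type $C$ to a sort and therefore invoke the injectivity corollaries of confluence (\Cref{corollary:inj-sort} and friends), this lemma asserts nothing about $C$, so no appeal to confluence is needed. The only point requiring care is the standard one of threading the \textsc{Conversion} rule through the induction, which is handled uniformly by the fact that conversion preserves the subject term. The companion \Cref{lemma:logical-inversion-explicit-action} for explicit actions is proved by the identical skeleton, differing only in substituting \textsc{Explicit-Action} for \textsc{Implicit-Action} in the introduction case.
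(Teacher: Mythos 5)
Your proof is correct and follows exactly the standard route the paper intends for its inversion lemmas: induction on the derivation with only the \textsc{Implicit-Action} and \textsc{Conversion} cases possible, the former giving the premise directly and the latter discharged by the induction hypothesis since conversion preserves the subject term. The paper states this lemma without an explicit proof, and your observation that no confluence/injectivity machinery is needed (because the conclusion says nothing about $C$) is accurate.
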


\begin{lemma}\label[lemma]{lemma:program-inversion-explicit-lam}
  If ${\Theta ; \Gamma ; \Delta \vdash \lamR{s}{x : A_2}{m} : \PiR{t}{x : A_1}{B}}$, then
  ${\Theta ; \Gamma, x : A_1 ; \Delta, x :_r A_1 \vdash m : B}$.
\end{lemma}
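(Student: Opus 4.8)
The plan is to argue by induction on the program-level typing derivation, after first \emph{generalizing} the statement so that the target type need only be convertible to an explicit $\Pi$-type. Concretely, I would prove: whenever $\Theta ; \Gamma ; \Delta \vdash \lamR{s}{x : A_2}{m} : C$ with $C \simeq \PiR{t}{x : A_1}{B}$ and $\Gamma \vdash \PiR{t}{x : A_1}{B} : s'$ for some sort $s'$, there is a sort $r$ with $\Theta ; \Gamma, x : A_1 ; \Delta, x \ty{r} A_1 \vdash m : B$. The lemma as stated is the special case $C = \PiR{t}{x : A_1}{B}$, where the extra well-sortedness premise is supplied by validity. This generalization is essential because the program-level \textsc{Conversion} rule may have been used after \textsc{Explicit-Lam}, so the subject's recorded type is only convertible to a $\Pi$-type; carrying the convertibility in the hypothesis lets the induction absorb such conversion steps. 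Since the subject $\lamR{s}{x : A_2}{m}$ is a $\lambda$-abstraction, only \textsc{Explicit-Lam} and \textsc{Conversion} can conclude the judgment.

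In the \textsc{Conversion} case the premise is $\Theta ; \Gamma ; \Delta \vdash \lamR{s}{x : A_2}{m} : C''$ with $C'' \simeq C$; transitivity of $\simeq$ gives $C'' \simeq \PiR{t}{x : A_1}{B}$, and the well-sortedness premise is unchanged, so the induction hypothesis applies directly. In the \textsc{Explicit-Lam} case the derived type is $C = \PiR{s}{x : A_2}{B_0}$ and the premise reads $\Theta ; \Gamma, x : A_2 ; \Delta, x \ty{\sigma} A_2 \vdash m : B_0$, where $\sigma$ is the sort of $A_2$. From $\PiR{s}{x : A_2}{B_0} \simeq \PiR{t}{x : A_1}{B}$, \Cref{corollary:inj-explicit-fun} yields $s = t$, $A_2 \simeq A_1$, and $B_0 \simeq B$. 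Applying \Cref{lemma:logical-inversion-explicit-fun} to the well-sortedness premise gives $\Gamma, x : A_1 \vdash B : t'$ for some $t'$, and well-formedness of that context forces $\Gamma \vdash A_1 : r$ for the sort $r$ of $A_1$. I would then convert the bound-variable binding with \Cref{corollary:program-context-conv-implicit} using $A_2 \simeq A_1$ to obtain $\Theta ; \Gamma, x : A_1 ; \Delta, x \ty{\sigma} A_1 \vdash m : B_0$, and finally apply the program-level \textsc{Conversion} rule with $B_0 \simeq B$ and $\Gamma, x : A_1 \vdash B : t'$ to replace the result type by $B$.

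The main obstacle is precisely the interaction with \textsc{Conversion}: the lemma cannot be proved by a naive case analysis on the last rule, and the generalized induction hypothesis only closes because convertibility is injective on the explicit $\Pi$-former (\Cref{corollary:inj-explicit-fun}), a corollary that in turn depends on confluence. A secondary subtlety is the sort annotation $r$ on the bound variable in the program context: it is pinned down by the sort of the domain type, so to justify rewriting $A_2$ to the convertible $A_1$ I must check $\sigma = r$. This follows from $A_1 \simeq A_2$ together with \Cref{theorem:sort-uniqueness} (both domains are well-sorted and convertible, hence share a sort), after which \Cref{corollary:program-context-conv-implicit} can be applied with the matching annotation.
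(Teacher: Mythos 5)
Your proposal is correct and follows exactly the route the paper's machinery is set up for: generalize the statement to a type merely convertible to the $\Pi$-type, induct on the derivation so that \textsc{Conversion} steps are absorbed by transitivity of $\simeq$, and in the \textsc{Explicit-Lam} case use \Cref{corollary:inj-explicit-fun}, \Cref{corollary:program-context-conv-implicit}, and a final \textsc{Conversion} (with well-sortedness of $B$ from \Cref{lemma:logical-inversion-explicit-fun} and context validity). Your side remark that the bound variable's sort annotation transfers from $A_2$ to $A_1$ via \Cref{theorem:sort-uniqueness} is exactly the right justification for the context-conversion step.
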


\begin{lemma}\label[lemma]{lemma:program-inversion-explicit-app}
  If ${\Theta ; \Gamma ; \Delta \vdash \appR{m}{n} : C}$, then
  there exists $A, B, s, \Theta_1, \Theta_2, \Delta_1, \Delta_2$ such that
  ${\Theta_1 ; \Gamma ; \Delta_1 \vdash m : \PiR{s}{x : A}{B}}$ and
  ${\Theta_2 ; \Gamma ; \Delta_2 \vdash n : A}$ and
  $\Theta = \Theta_1 \dotcup \Theta_2$ and $\Delta = \Delta_1 \dotcup \Delta_2$ and
  $C \simeq B[n/x]$.
\end{lemma}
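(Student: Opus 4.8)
The plan is to proceed by induction on the derivation of $\Theta ; \Gamma ; \Delta \vdash \appR{m}{n} : C$. Inspecting the program-level typing rules, the only two that can produce a conclusion whose subject is an explicit application $\appR{m}{n}$ are \textsc{Explicit-App} and \textsc{Conversion}; every other rule has a syntactically distinct subject. This immediately restricts the induction to two cases.

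In the base case the last rule is \textsc{Explicit-App}. Then its premises directly supply the witnesses: there exist $A, B, s, \Theta_1, \Theta_2, \Delta_1, \Delta_2$ with $\Theta_1 ; \Gamma ; \Delta_1 \vdash m : \PiR{s}{x : A}{B}$ and $\Theta_2 ; \Gamma ; \Delta_2 \vdash n : A$, together with $\Theta = \Theta_1 \dotcup \Theta_2$ and $\Delta = \Delta_1 \dotcup \Delta_2$, and the conclusion type is literally $C = B[n/x]$, so $C \simeq B[n/x]$ holds by reflexivity of $\simeq$ (\textsc{Conv-Refl}).

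In the inductive case the last rule is \textsc{Conversion}, so the derivation ends with a premise $\Theta ; \Gamma ; \Delta \vdash \appR{m}{n} : C'$ for some $C'$ with $C' \simeq C$, while the contexts $\Theta$ and $\Delta$ are left unchanged by the rule. Applying the induction hypothesis to this premise yields witnesses $A, B, s, \Theta_1, \Theta_2, \Delta_1, \Delta_2$ satisfying the required typings and the context splittings $\Theta = \Theta_1 \dotcup \Theta_2$ and $\Delta = \Delta_1 \dotcup \Delta_2$, together with $C' \simeq B[n/x]$. Since $\simeq$ is symmetric and transitive, being the equivalence closure of parallel reduction, from $C' \simeq C$ and $C' \simeq B[n/x]$ we obtain $C \simeq B[n/x]$, and the same witnesses discharge the goal.

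The only real subtlety is the \textsc{Conversion} case: without it one could read the result straight off the single applicable rule, but conversion permits an arbitrary stack of type rewrites above the application. The induction handles this uniformly, and the key fact it relies on is that convertibility is an equivalence relation closed under transitivity, which is itself a consequence of the confluence of parallel reduction established earlier in this section. I expect no difficulty with the context bookkeeping, since \textsc{Conversion} leaves $\Theta$ and $\Delta$ untouched and therefore preserves the merge decompositions produced by the induction hypothesis.
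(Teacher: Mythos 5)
Your proof is correct and is exactly the standard argument the paper relies on for its (unproved-in-text) inversion lemmas: induction on the derivation, with \textsc{Explicit-App} giving the witnesses directly and \textsc{Conversion} handled by the induction hypothesis plus symmetry and transitivity of $\simeq$. One small nitpick: transitivity and symmetry of $\simeq$ follow directly from its definition as the reflexive--symmetric--transitive closure of parallel reduction (or by easy induction on the zig-zag presentation), not from confluence --- confluence is only needed for the injectivity corollaries, not here.
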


\begin{lemma}\label[lemma]{lemma:program-inversion-implicit-app}
  If ${\Theta ; \Gamma ; \Delta \vdash \appI{m}{n} : C}$, then
  there exists $A, B, s$ such that
  ${\Theta ; \Gamma ; \Delta \vdash m : \PiI{s}{x : A}{B}}$ and
  ${\Gamma \vdash n : A}$ and $C \simeq B[n/x]$.
\end{lemma}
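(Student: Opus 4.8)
The plan is to proceed by induction on the derivation of $\Theta ; \Gamma ; \Delta \vdash \appI{m}{n} : C$. Because the subject term is pinned to be an implicit application, only two of the program-level typing rules can have been the last step applied: the syntax-directed rule \textsc{Implicit-App}, and the non-syntax-directed rule \textsc{Conversion}. Every other rule concludes a judgment whose subject has an incompatible head constructor (a variable, a $\lambda$-abstraction, a pair, a \textsf{return}, a channel, a communication primitive, and so on), so those cases are vacuous. Establishing this case split is the conceptual heart of the argument; the two surviving cases are then short.

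In the \textsc{Implicit-App} case, the premises of the rule hand us the witnesses directly: there are $A$, $B$, and a sort $t$ with $\Theta ; \Gamma ; \Delta \vdash m : \PiI{t}{x : A}{B}$ and $\Gamma \vdash n : A$, and the conclusion type is literally $C = B[n/x]$. Taking $s := t$, the required $C \simeq B[n/x]$ holds by reflexivity of $\simeq$.

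In the \textsc{Conversion} case, the derivation ends from a premise $\Theta ; \Gamma ; \Delta \vdash \appI{m}{n} : C'$ together with $C' \simeq C$ and $\Gamma \vdash C : s_0$. I would apply the induction hypothesis to the strictly smaller derivation of $\Theta ; \Gamma ; \Delta \vdash \appI{m}{n} : C'$, obtaining $A$, $B$, $s$ with $\Theta ; \Gamma ; \Delta \vdash m : \PiI{s}{x : A}{B}$, $\Gamma \vdash n : A$, and $C' \simeq B[n/x]$. The typings of $m$ and $n$ carry over verbatim, since conversion only rewrites the type assigned to the whole application. It then remains to repair the convertibility conclusion: from $C \simeq C'$ (symmetry) and $C' \simeq B[n/x]$, transitivity of $\simeq$ yields $C \simeq B[n/x]$.

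The only step needing care is the \textsc{Conversion} case, where one threads the convertibility through the induction hypothesis using the fact that $\simeq$ is an equivalence relation (the reflexive, symmetric, and transitive closure of parallel reduction). Notably, no injectivity of type constructors is required here: the lemma asserts only the \emph{existence} of $A$, $B$, $s$ and makes no attempt to identify them across derivations, so the conversion case reduces to a bookkeeping argument about $\simeq$. This is in contrast to inversion lemmas that must peer inside a function or channel type, which would additionally invoke injectivity results such as \Cref{corollary:inj-implicit-fun}.
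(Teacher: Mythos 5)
Your proof is correct and is exactly the routine argument the paper intends for its inversion lemmas: induction on the typing derivation, where only \textsc{Implicit-App} and \textsc{Conversion} can conclude a judgment whose subject is an implicit application, with the conversion case discharged by symmetry and transitivity of $\simeq$. The paper states this lemma without an explicit proof, and your treatment (including the observation that no injectivity of type constructors is needed here) matches the standard development.
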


\begin{lemma}\label[lemma]{lemma:program-inversion-explicit-pair}
  If ${\Theta ; \Gamma ; \Delta \vdash \pairR{m}{n}{s} : \SigR{t}{x : A}{B}}$, then
  there exists $\Theta_1, \Theta_2, \Delta_1, \Delta_2$ such that
  ${\Theta_1 ; \Gamma ; \Delta_1 \vdash m : A}$ and
  ${\Theta_2 ; \Gamma ; \Delta_2 \vdash n : B[m/x]}$ and
  $\Theta = \Theta_1 \dotcup \Theta_2$ and $\Delta = \Delta_1 \dotcup \Delta_2$ and
  $s = t$.
\end{lemma}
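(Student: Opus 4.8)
The plan is to proceed by induction on the program-level typing derivation, but since the \textsc{Conversion} rule can sit at the root of the derivation, I first generalize the target type. Concretely, I would prove the auxiliary claim: whenever $\Theta ; \Gamma ; \Delta \vdash \pairR{m}{n}{s} : C$, there exist $\Theta_1,\Theta_2,\Delta_1,\Delta_2$ and types $A_0, B_0$ such that $\Theta_1 ; \Gamma ; \Delta_1 \vdash m : A_0$, $\Theta_2 ; \Gamma ; \Delta_2 \vdash n : B_0[m/x]$, $\Theta = \Theta_1 \dotcup \Theta_2$, $\Delta = \Delta_1 \dotcup \Delta_2$, and $C \simeq \SigR{s}{x : A_0}{B_0}$. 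Because a pair term $\pairR{m}{n}{s}$ can only be the subject of the \textsc{Explicit-Pair} or \textsc{Conversion} rules, the induction has just two cases: the \textsc{Explicit-Pair} case reads the witnesses off directly (with $C$ literally $\SigR{s}{x : A_0}{B_0}$, so $C \simeq \SigR{s}{x : A_0}{B_0}$ by reflexivity), while the \textsc{Conversion} case applies the induction hypothesis to its premise $\Theta ; \Gamma ; \Delta \vdash \pairR{m}{n}{s} : C'$ and closes the convertibility $C \simeq \SigR{s}{x : A_0}{B_0}$ by transitivity of $\simeq$ through $C \simeq C'$.

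Next I would instantiate the auxiliary claim with $C = \SigR{t}{x : A}{B}$. This yields $A_0, B_0$ with $\SigR{t}{x : A}{B} \simeq \SigR{s}{x : A_0}{B_0}$, and \Cref{corollary:inj-explicit-sig} (injectivity of explicit $\Sigma$-types under convertibility) then delivers precisely $s = t$, $A \simeq A_0$, and $B \simeq B_0$. This settles the sort equation and reduces the lemma to re-aligning the component types from $A_0, B_0$ back to the $A, B$ demanded by the statement.

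For the re-alignment I would invoke the Validity theorem on the hypothesis to obtain $\Gamma \vdash \SigR{t}{x : A}{B} : s^{*}$ for some sort, and then \Cref{lemma:logical-inversion-explicit-sig} to extract the well-sortedness facts $\Gamma \vdash A : s_A$ and $\Gamma, x : A \vdash B : r_B$. With $\Gamma \vdash A : s_A$ in hand and $A \simeq A_0$, the program-level \textsc{Conversion} rule turns $\Theta_1 ; \Gamma ; \Delta_1 \vdash m : A_0$ into $\Theta_1 ; \Gamma ; \Delta_1 \vdash m : A$. For $n$, substitutivity of $\simeq$ gives $B_0[m/x] \simeq B[m/x]$; to certify that the target is well-sorted I would reflect the program typing of $m$ into the logical level (the standard soundness of program typing with respect to logical typing), convert it to $\Gamma \vdash m : A$, and apply \Cref{lemma:logical-subst} to $\Gamma, x : A \vdash B : r_B$, yielding $\Gamma \vdash B[m/x] : r_B$. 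A second use of program-level \textsc{Conversion} then promotes $\Theta_2 ; \Gamma ; \Delta_2 \vdash n : B_0[m/x]$ to $n : B[m/x]$, completing the proof.

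I expect the main obstacle to be the \textsc{Conversion} case: a naive induction on the literal type $\SigR{t}{x : A}{B}$ fails because the conversion rule exposes only a convertible type, so the statement must be generalized as above and the components recovered through $\Sigma$-injectivity. The secondary subtlety is that restoring the exact types $A$ and $B$ (rather than the convertible $A_0, B_0$) is not purely syntactic in a dependent setting: it requires the well-sortedness supplied by Validity and logical $\Sigma$-inversion, and, for the dependent second component, the logical typing of $m$ so that $m$ may be substituted into $B$.
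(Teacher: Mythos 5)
Your proof is correct and is exactly the argument the paper's machinery is set up to support: the paper states this inversion lemma without an explicit proof, and the standard route is precisely your generalization over the target type to absorb \textsc{Conversion}, followed by $\Sigma$-injectivity (\Cref{corollary:inj-explicit-sig}) from confluence, with the final re-alignment to the literal $A$ and $B[m/x]$ obtained via validity, \Cref{lemma:logical-inversion-explicit-sig}, lifting, \Cref{lemma:logical-subst}, and program-level \textsc{Conversion}. No gaps.
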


\begin{lemma}\label[lemma]{lemma:program-inversion-unit}
  If ${\Theta ; \Gamma ; \Delta \vdash \ii : \unit}$, then
  $\Theta = \epsilon$ and $\Delta \triangleright \Un$.
\end{lemma}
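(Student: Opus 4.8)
The plan is to proceed by induction on the derivation of the program-level typing judgment $\Theta ; \Gamma ; \Delta \vdash \ii : \unit$. The obstacle I anticipate is the \textsc{Conversion} rule, which is the only program-level rule that does not syntactically constrain its subject: every other rule fixes the shape of the term it types, so the only two rules that can conclude a judgment with subject $\ii$ are \textsc{UnitVal} and \textsc{Conversion}. A naive induction with the type pinned to $\unit$ stalls in the conversion case, since there the immediate subderivation types $\ii$ at some $A'$ with $A' \simeq \unit$ rather than at $\unit$ itself, so an induction hypothesis demanding the type be exactly $\unit$ would not apply.

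To overcome this, I would strengthen the statement to permit an arbitrary target type: I claim that whenever $\Theta ; \Gamma ; \Delta \vdash \ii : A$ holds for \emph{any} $A$, we have $\Theta = \epsilon$ and $\Delta \triangleright \Un$. The original lemma is then simply the instance $A = \unit$. The point that makes this generalization sound is that both conclusions, $\Theta = \epsilon$ and $\Delta \triangleright \Un$, concern only the channel context and the program context, and the \textsc{Conversion} rule leaves both of these untouched while changing only the type annotation.

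With this generalization in place, the induction is routine and exhausts exactly two cases. In the base case the derivation ends with \textsc{UnitVal}, whose form directly exhibits $\Theta = \epsilon$ and whose premise gives $\Delta \triangleright \Un$. In the inductive case the derivation ends with \textsc{Conversion}, so the immediate subderivation is $\Theta ; \Gamma ; \Delta \vdash \ii : A'$ carrying the same $\Theta$ and $\Delta$; applying the generalized induction hypothesis to this subderivation yields $\Theta = \epsilon$ and $\Delta \triangleright \Un$, which are precisely the required conclusions since the conversion step alters neither context. As no other rule can type the subject $\ii$, these two cases are complete, and specializing to $A = \unit$ recovers the lemma as stated.
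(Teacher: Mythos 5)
Your proof is correct and follows the standard route for inversion lemmas in the presence of a \textsc{Conversion} rule: only \textsc{UnitVal} and \textsc{Conversion} can type the subject $\ii$, and generalizing the target type to an arbitrary $A$ (sound because both conclusions concern only $\Theta$ and $\Delta$, which \textsc{Conversion} leaves unchanged) makes the induction go through. The paper states this lemma without an explicit proof, and your argument is exactly the one it implicitly relies on.
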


\begin{lemma}\label[lemma]{lemma:program-inversion-true}
  If ${\Theta ; \Gamma ; \Delta \vdash \bTrue : \Bool}$, then
  $\Theta = \epsilon$ and $\Delta \triangleright \Un$.
\end{lemma}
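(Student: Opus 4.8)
The plan is to proceed by induction on the derivation of $\Theta ; \Gamma ; \Delta \vdash \bTrue : \Bool$, following the same strategy used for \Cref{lemma:program-inversion-unit}. The essential observation is that among all the program-level typing rules, only two can have the constant $\bTrue$ as their subject: the \textsc{True} rule and the \textsc{Conversion} rule. Every other rule is syntax-directed on a term shape incompatible with $\bTrue$ (variables, applications, pairs, monadic constructs, session primitives, and so on), so those cases are vacuous and need not be considered.

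In the \textsc{True} case the conclusion is immediate: the rule forces the channel context to be empty, $\Theta = \epsilon$, and supplies exactly the side condition $\Delta \triangleright \Un$ among its premises. This constitutes the base case of the induction.

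The \textsc{Conversion} case is the one requiring care. Here the derivation ends with $\Theta ; \Gamma ; \Delta \vdash \bTrue : A$ together with $A \simeq \Bool$, for some type $A$. The induction hypothesis as literally stated applies only when the recorded type is syntactically $\Bool$, so the plan is first to generalize the statement to: \emph{if $\Theta ; \Gamma ; \Delta \vdash \bTrue : A$ and $A \simeq \Bool$, then $\Theta = \epsilon$ and $\Delta \triangleright \Un$}. With this generalization the \textsc{Conversion} case goes through by chaining the convertibility $A' \simeq A$ coming from the rule with the convertibility $A \simeq \Bool$ coming from the hypothesis, using transitivity of $\simeq$ to obtain $A' \simeq \Bool$, and then invoking the induction hypothesis on the subderivation. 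The original statement is recovered by instantiating $A := \Bool$ and appealing to reflexivity of $\simeq$ (\textsc{Conv-Refl}).

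The main obstacle is precisely this interaction with \textsc{Conversion}: without the generalization, the induction hypothesis is too weak to fire, since conversion may rewrite the recorded type arbitrarily along $\simeq$ before it reaches $\Bool$. Transitivity of convertibility — which holds by construction, as $\simeq$ is defined to be the reflexive, symmetric, and transitive closure of parallel reduction — is what makes the generalized hypothesis stable under conversion steps. Notably, no appeal to confluence or to injectivity of type constructors is needed here, because $\bTrue$ is a data constant whose typing rule carries no structural information about its type beyond the constraint $A \simeq \Bool$.
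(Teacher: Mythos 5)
Your proof is correct and is exactly the standard argument the paper leaves implicit for its inversion lemmas: induction on the typing derivation, with only \textsc{True} and \textsc{Conversion} applicable, and the statement generalized over types convertible to $\Bool$ so the induction hypothesis survives the conversion case. Your observation that neither confluence nor constructor injectivity is needed here (since the conclusion constrains only the contexts, not the type) is also accurate.
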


\begin{lemma}\label[lemma]{lemma:program-inversion-false}
  If ${\Theta ; \Gamma ; \Delta \vdash \bFalse : \Bool}$, then
  $\Theta = \epsilon$ and $\Delta \triangleright \Un$.
\end{lemma}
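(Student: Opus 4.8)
The plan is to prove this inversion lemma by induction on the derivation of the program-level typing judgment, exactly as one would for the companion lemmas \Cref{lemma:program-inversion-unit} and \Cref{lemma:program-inversion-true} stated just above. The key observation is that only two program-level rules can place $\bFalse$ in subject position: the base rule \textsc{False} and the \textsc{Conversion} rule. Every other program-level typing rule has a syntactically distinct subject term in its conclusion, so those cases are vacuous by inversion on the shape of the subject.

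A naive induction on $\Theta ; \Gamma ; \Delta \vdash \bFalse : \Bool$ stalls at the \textsc{Conversion} case: there the premise types $\bFalse$ at some $A$ with $A \simeq \Bool$, and $A$ need not be syntactically $\Bool$, so the induction hypothesis phrased for the type $\Bool$ does not apply to the premise. To sidestep this, I would first generalize the statement to: if $\Theta ; \Gamma ; \Delta \vdash \bFalse : C$ for an \emph{arbitrary} type $C$, then $\Theta = \epsilon$ and $\Delta \triangleright \Un$. The original lemma is then recovered by instantiating $C := \Bool$. This generalization is sound precisely because the conclusion $\Theta = \epsilon$ and $\Delta \triangleright \Un$ makes no reference to the type at all.

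With the generalized statement the induction goes through cleanly. In the \textsc{False} case the conclusion is $\epsilon ; \Gamma ; \Delta \vdash \bFalse : \Bool$ with side condition $\Delta \triangleright \Un$, yielding both $\Theta = \epsilon$ and $\Delta \triangleright \Un$ immediately. In the \textsc{Conversion} case the premise has the form $\Theta ; \Gamma ; \Delta \vdash \bFalse : A$ over the \emph{same} contexts $\Theta$ and $\Delta$; applying the (generalized) induction hypothesis gives $\Theta = \epsilon$ and $\Delta \triangleright \Un$ directly, since \textsc{Conversion} alters only the type while leaving the channel context $\Theta$ and program context $\Delta$ untouched.

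The only genuine subtlety is the \textsc{Conversion} rule and the consequent need to generalize over the result type before inducting; once that generalization is in place, the argument is entirely routine and mirrors the proofs of the preceding boolean and unit inversion lemmas. In the accompanying mechanized development this amounts to a short induction in which all cases except \textsc{False} and \textsc{Conversion} are discharged automatically by the mismatch between their subject terms and $\bFalse$.
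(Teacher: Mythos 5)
Your proposal is correct and matches the intended argument: the paper states this inversion lemma without an explicit proof, treating it as a routine induction on the typing derivation in which only \textsc{False} and \textsc{Conversion} can have $\bFalse$ in subject position. Your observation that one must first generalize the result type to an arbitrary $C$ (sound here because the conclusion never mentions the type) is exactly the right way to make the \textsc{Conversion} case go through.
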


\begin{lemma}\label[lemma]{lemma:program-inversion-return}
  If ${\Theta ; \Gamma ; \Delta \vdash \return{m} : \CM{A}}$, then
  ${\Theta ; \Gamma ; \Delta \vdash m : A}$.
\end{lemma}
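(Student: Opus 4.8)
The plan is to prove this by induction on the derivation of $\Theta ; \Gamma ; \Delta \vdash \return{m} : \CM{A}$, where the only two rules whose conclusion can have $\return{m}$ as its subject term are the syntax-directed \textsc{Return} rule and the type-directed \textsc{Conversion} rule. The immediate difficulty is that \textsc{Conversion} does not change the subject term, so a naive induction stalls: the inductive hypothesis would require the subderivation to already carry a type of the syntactic shape $\CM{\cdot}$, which convertibility need not preserve.

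To get around this, I would first prove the generalized claim that for every type $C$, if $\Theta ; \Gamma ; \Delta \vdash \return{m} : C$ then there exists $A'$ with $\Theta ; \Gamma ; \Delta \vdash m : A'$ and $C \simeq \CM{A'}$. This is proved by induction on the derivation. In the \textsc{Return} case the premise gives $m : A'$ directly and $C = \CM{A'}$, so $C \simeq \CM{A'}$ holds reflexively. In the \textsc{Conversion} case the subderivation types $\return{m}$ at some $D$ with $D \simeq C$; the inductive hypothesis yields $A'$ with $m : A'$ and $D \simeq \CM{A'}$, and transitivity of $\simeq$ then gives $C \simeq \CM{A'}$. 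No other rule has $\return{m}$ as its subject, so these two cases are exhaustive.

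Specializing the generalized claim to $C = \CM{A}$ produces $A'$ with $\Theta ; \Gamma ; \Delta \vdash m : A'$ and $\CM{A} \simeq \CM{A'}$. By the injectivity of the monad constructor under convertibility (\Cref{corollary:inj-monad}), this yields $A \simeq A'$. To promote $m : A'$ to the desired $m : A$ via the program-level \textsc{Conversion} rule, I still need $A$ to be well-sorted: the validity theorem applied to the hypothesis gives $\Gamma \vdash \CM{A} : s$ for some $s$, and \Cref{lemma:logical-inversion-monad} then supplies $\Gamma \vdash A : r$. Applying \textsc{Conversion} with $A' \simeq A$ and $\Gamma \vdash A : r$ concludes $\Theta ; \Gamma ; \Delta \vdash m : A$.

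The main obstacle is the \textsc{Conversion} rule, and the generalization of the statement to an arbitrary result type $C$ (tracked only up to $\simeq$) is precisely the device that tames it; everything else reduces to the injectivity corollary, which in turn rests on confluence of parallel reduction.
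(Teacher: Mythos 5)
Your proof is correct. The paper states \Cref{lemma:program-inversion-return} without proof (the inversion lemmas in \Cref{appendix:compatibility} are left as routine), and your argument is exactly the standard one that is implicitly intended: generalize the result type to an arbitrary $C$ tracked up to $\simeq$ so that the \textsc{Conversion} rule can be absorbed by transitivity, observe that \textsc{Return} and \textsc{Conversion} are the only rules whose subject is $\return{m}$, and then recover the stated form via \Cref{corollary:inj-monad}, validity, \Cref{lemma:logical-inversion-monad}, and one final application of program-level \textsc{Conversion}.
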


\begin{lemma}\label[lemma]{lemma:program-inversion-bind}
  If ${\Theta ; \Gamma ; \Delta \vdash \letin{m}{x}{n} : \CM{B}}$, then
  there exists $A, \Theta_1, \Theta_2, \Delta_1, \Delta_2$ such that
  $\Theta_1 ; \Gamma ; \Delta_1 \vdash m : \CM{A}$ and
  $\Theta_2 ; \Gamma, x : A ; \Delta_2, x :_s A \vdash n : \CM{B}$ and
  $\Theta = \Theta_1 \dotcup \Theta_2$ and $\Delta = \Delta_1 \dotcup \Delta_2$. 
  and $x \notin \FV{B}$.
\end{lemma}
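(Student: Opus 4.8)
The plan is to proceed by induction on the derivation of the program-level typing judgment. Since the subject term $\letin{x}{m}{n}$ is a $\bsf{let}$-expression, only two typing rules can occur at the root of the derivation: the \textsc{Bind} rule and the \textsc{Conversion} rule. Before running the induction, however, I would first \emph{strengthen} the statement so that the hypothesis type is an arbitrary $C$ rather than the syntactic form $\CM{B}$, with the conclusion carrying the extra fact $C \simeq \CM{B}$. This generalization is what lets the \textsc{Conversion} case go through, since the type appearing in a conversion subderivation need not itself be of the form $\CM{\cdot}$.

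In the \textsc{Bind} case the premises can be read off directly: they supply $A$, the splits $\Theta = \Theta_1 \dotcup \Theta_2$ and $\Delta = \Delta_1 \dotcup \Delta_2$, the typings $\Theta_1 ; \Gamma ; \Delta_1 \vdash m : \CM{A}$ and $\Theta_2 ; \Gamma, x : A ; \Delta_2, x \ty{s} A \vdash n : \CM{B}$, and, via the premise $\Gamma \vdash B : s$, the condition $x \notin \FV{B}$ (the codomain $B$ is typed in $\Gamma$ and so cannot mention the bound variable); the convertibility $C \simeq \CM{B}$ then holds by reflexivity. In the \textsc{Conversion} case we have a shorter derivation $\Theta ; \Gamma ; \Delta \vdash \letin{x}{m}{n} : C'$ together with $C' \simeq C$; applying the strengthened induction hypothesis to this subderivation yields all the required data with $C' \simeq \CM{B}$, and transitivity of $\simeq$ upgrades this to $C \simeq \CM{B}$. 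Every other component is untouched, so the conversion rule is absorbed without further work.

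Finally I would recover the stated lemma as the instance $C = \CM{B}$ of the strengthened version. The generalized conclusion gives some $B'$ with $\CM{B} \simeq \CM{B'}$ and $n : \CM{B'}$; by \Cref{corollary:inj-monad} this forces $B \simeq B'$, after which one application of the program-level \textsc{Conversion} rule (using logical weakening to transport the well-sortedness of $\CM{B}$ into the context $\Gamma, x : A$) restores the exact type $\CM{B}$ on $n$, and, together with $B \simeq B'$ and $x \notin \FV{B'}$, discharges the dependency condition $x \notin \FV{B}$.

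The main obstacle I expect is precisely this interaction between the \textsc{Conversion} rule and the requirement that the recovered type be \emph{syntactically} $\CM{B}$: the induction only preserves the type up to $\simeq$, so the delicate part is closing the gap via \Cref{corollary:inj-monad} and a final conversion step while discharging the well-sortedness side condition that step demands (obtained from the well-formedness of the type carried by the original judgment). Once the statement is correctly generalized, the two inductive cases themselves are routine.
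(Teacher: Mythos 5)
Your overall strategy is the standard one and matches what the paper's (omitted) proof must do: the appendix states this inversion lemma without a proof body, and the only workable route is exactly the induction on the typing derivation with the statement generalized from the syntactic type $\CM{B}$ to an arbitrary $C$ together with $C \simeq \CM{B}$, so that the \textsc{Conversion} case can be absorbed by transitivity of $\simeq$ and the \textsc{Bind} case read off directly. Your handling of the final recovery step — using \Cref{corollary:inj-monad} to extract $B \simeq B'$ and then one program-level \textsc{Conversion} (with logical weakening to move $\Gamma \vdash \CM{B} : s$ into $\Gamma, x : A$) to restore the exact type $\CM{B}$ on $n$ — is also right.

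There is one step that does not go through as justified: you derive $x \notin \FV{B}$ from $B \simeq B'$ and $x \notin \FV{B'}$. Convertibility does not preserve free variables — it is the symmetric--transitive closure of parallel reduction, and reduction can \emph{erase} occurrences (e.g.\ $\appR{(\lamR{s}{y : A}{z})}{x} \simeq z$), so a term with $x$ free can be convertible to one without it. Hence $x \notin \FV{B'}$ tells you nothing about $\FV{B}$. The fact you need is available by a different, direct argument from data you already have in hand: by \Cref{theorem:program-type-validity} applied to the original judgment, $\Gamma \vdash \CM{B} : s$ for some $s$, so $\FV{B} \subseteq \mathrm{dom}(\Gamma)$; and the context-validity premise of \textsc{Bind} (rule \textsc{Ctx-Var}) forces $x \notin \Gamma$, whence $x \notin \FV{B}$. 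This is the same well-sortedness fact you already invoke to discharge the side condition of the final \textsc{Conversion} step, so the fix is local; but as written the free-variable clause of the lemma is not actually established.
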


\begin{lemma}\label[lemma]{lemma:program-inversion-fork}
  If ${\Theta ; \Gamma ; \Delta \vdash \fork{x : A}{m} : B}$, then
  there exists $A'$ such that ${\Gamma \vdash A' : \Proto}$ and $A = \CH{A'}$ and
  $B \simeq \CM{\HC{A'}}$ and
  ${\Theta ; \Gamma, x : \CH{A'} ; \Delta, x :_\Ln \CH{A'} \vdash m : \CM{\unit}}$.
\end{lemma}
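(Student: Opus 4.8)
The plan is to prove this as a standard inversion lemma by induction on the derivation of $\Theta ; \Gamma ; \Delta \vdash \fork{x : A}{m} : B$. The key observation is that, among the program-level typing rules, only two can conclude with a subject term of the syntactic shape $\fork{x : A}{m}$: the \textsc{Fork} rule (which introduces it directly) and the term-preserving \textsc{Conversion} rule. I would therefore dispatch the proof by case analysis on the last rule used, appealing to the induction hypothesis in the conversion case to absorb any stack of conversions sitting above the \textsc{Fork} step.

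In the \textsc{Fork} case, the shape of the rule forces the annotation to be $A = \CH{A'}$ for some $A'$ and forces the conclusion type to be literally $B = \CM{\HC{A'}}$, whence $B \simeq \CM{\HC{A'}}$ by reflexivity of $\simeq$ (rule \textsc{Conv-Refl}); the rule's premise is exactly the required judgment $\Theta ; \Gamma, x : \CH{A'} ; \Delta, x \tL \CH{A'} \vdash m : \CM{\unit}$. The only non-immediate obligation here is $\Gamma \vdash A' : \Proto$. I would obtain this from the premise by a routine context-validity (regularity) argument: the premise entails that its program context $\Gamma, x : \CH{A'} ; \Delta, x \tL \CH{A'}$ is well-formed, so inverting \textsc{Ctx-Explicit-Var} yields $\Gamma \vdash \CH{A'} : \Ln$, and then \Cref{lemma:logical-inversion-ch} gives $\Gamma \vdash A' : \Proto$ (together with $\Ln \simeq \Ln$, which is discarded).

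In the \textsc{Conversion} case, the subject is unchanged and we have a shorter derivation $\Theta ; \Gamma ; \Delta \vdash \fork{x : A}{m} : C$ with $C \simeq B$. Applying the induction hypothesis to this subderivation supplies $A'$ with $A = \CH{A'}$, $\Gamma \vdash A' : \Proto$, the typing of $m$, and $C \simeq \CM{\HC{A'}}$; all of these carry over verbatim except the convertibility of the result type, which I recover by combining $B \simeq C$ (symmetry of $\simeq$) with $C \simeq \CM{\HC{A'}}$ using transitivity of the convertibility relation. I expect the main obstacle to be the discharge of $\Gamma \vdash A' : \Proto$ in the \textsc{Fork} case, since it is the one conclusion not read directly off the rule and it depends on having the auxiliary context-validity lemma in hand; everything else is mechanical, and in particular the conversion case needs only the reflexive–symmetric–transitive structure of $\simeq$ and no appeal to confluence.
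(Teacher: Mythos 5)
Your proposal is correct and is the standard argument the paper intends: the appendix states this inversion lemma without an explicit proof, and induction on the derivation with exactly the two cases you identify (\textsc{Fork} directly, \textsc{Conversion} absorbed by the induction hypothesis plus symmetry/transitivity of $\simeq$) is the right decomposition. Your route to $\Gamma \vdash A' : \Proto$ via context validity of the premise and \Cref{lemma:logical-inversion-ch} also matches how the paper's other proofs extract sorting facts from premises (e.g.\ the ``validity of context'' steps in the subject-reduction proof), so no gap remains.
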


\begin{lemma}\label[lemma]{lemma:program-inversion-explicit-send}
  If ${\Theta ; \Gamma ; \Delta \vdash \sendR{m} : C}$, then there exists $A, B$
  such that either 
  $C \simeq \PiR{\Ln}{x : A}{\CM{\CH{B}}}$ and 
  ${\Theta ; \Gamma ; \Delta \vdash m : \CH{\ActR{!}{x : A}{B}}}$
  or
  $C \simeq \PiR{\Ln}{x : A}{\CM{\HC{B}}}$ and 
  ${\Theta ; \Gamma ; \Delta \vdash m : \HC{\ActR{?}{x : A}{B}}}$.
\end{lemma}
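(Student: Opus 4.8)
The plan is to proceed by induction on the derivation of $\Theta ; \Gamma ; \Delta \vdash \sendR{m} : C$. The first observation is that, because $\sendR{m}$ is a distinct syntactic form, the only program-level rules whose subject can be an explicit send are \textsc{Explicit-Send-CH}, \textsc{Explicit-Send-HC}, and \textsc{Conversion}; every other rule (core, data, monadic, and the remaining session rules) concludes with a syntactically different subject. This immediately restricts the induction to these three cases, in the same style as the preceding inversion lemmas such as \Cref{lemma:program-inversion-explicit-app}.

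In the two base cases the derivation ends in a send rule. If it ends in \textsc{Explicit-Send-CH}, then $C$ is literally $\PiR{\Ln}{x : A}{\CM{\CH{B}}}$ and the premise supplies $\Theta ; \Gamma ; \Delta \vdash m : \CH{\ActR{!}{x : A}{B}}$; taking $C \simeq C$ by \textsc{Conv-Refl} discharges the left disjunct. The \textsc{Explicit-Send-HC} case is symmetric, discharging the right disjunct with $C = \PiR{\Ln}{x : A}{\CM{\HC{B}}}$ and $\Theta ; \Gamma ; \Delta \vdash m : \HC{\ActR{?}{x : A}{B}}$. The only inductive case is \textsc{Conversion}, which types $\sendR{m}$ at some $C'$ with $C' \simeq C$ and $\Gamma \vdash C : s$. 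Here I would apply the induction hypothesis to the subderivation $\Theta ; \Gamma ; \Delta \vdash \sendR{m} : C'$, obtaining witnesses $A, B$ together with one of the two disjuncts relating $C'$ to the appropriate function type, while the typing of $m$ is inherited unchanged. Composing $C \simeq C'$ with the convertibility of $C'$ to that function type via transitivity of $\simeq$ yields the same disjunct for $C$, keeping the \emph{same} witnesses and the \emph{same} typing judgment for $m$.

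There is no deep obstacle, but the one point requiring care is the \textsc{Conversion} case: since the two send rules assign genuinely different result types ($\CH{\cdot}$ versus $\HC{\cdot}$), the disjunction must be threaded through the induction so that a conversion step never forces a spurious collapse of the two branches. Because the conclusion records only the convertibility of $C$ to a function type and does not attempt to separate the branches by $\simeq$, transitivity of the convertibility relation suffices and the two alternatives stay cleanly separated across conversion steps. The typing of $m$ in each disjunct is exactly the premise produced by the corresponding base case and is never altered by intervening conversions, so no additional substitution or context manipulation is needed.
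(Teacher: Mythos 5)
Your proof is correct and is exactly the intended argument: the paper states this inversion lemma without an explicit proof precisely because it follows by this routine induction on the typing derivation, where only \textsc{Explicit-Send-CH}, \textsc{Explicit-Send-HC}, and \textsc{Conversion} can conclude with subject $\sendR{m}$, and the \textsc{Conversion} case is discharged by transitivity of $\simeq$ while preserving the disjunct. Your observation that the two branches must be kept separate through conversion steps is the right point of care, and your handling of it is sound.
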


\begin{lemma}\label[lemma]{lemma:program-inversion-implicit-send}
  If ${\Theta ; \Gamma ; \Delta \vdash \sendI{m} : C}$, then there exists $A, B$
  such that either 
  $C \simeq \PiI{\Ln}{x : A}{\CM{\CH{B}}}$ and 
  ${\Theta ; \Gamma ; \Delta \vdash m : \CH{\ActI{!}{x : A}{B}}}$
  or
  $C \simeq \PiI{\Ln}{x : A}{\CM{\HC{B}}}$ and 
  ${\Theta ; \Gamma ; \Delta \vdash m : \HC{\ActI{?}{x : A}{B}}}$.
\end{lemma}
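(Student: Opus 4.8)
The plan is to proceed by induction on the derivation of the program-level judgment $\Theta ; \Gamma ; \Delta \vdash \sendI{m} : C$. The key observation is that the subject term is syntactically fixed as $\sendI{m}$, so only three typing rules can conclude such a judgment: the two session rules \textsc{Implicit-Send-CH} and \textsc{Implicit-Send-HC}, and the structural \textsc{Conversion} rule. Every other program-level rule forms a syntactically distinct subject (a $\lambda$-abstraction, an application, a pair, a $\Return$, a $\Fork$, etc.), so those cases are vacuous. This is the same pattern already used throughout the inversion lemmas in this section, e.g. \Cref{lemma:program-inversion-explicit-send}.

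First I would dispatch the two base cases. If the last rule is \textsc{Implicit-Send-CH}, then its premise is exactly $\Theta ; \Gamma ; \Delta \vdash m : \CH{\ActI{!}{x : A}{B}}$ and the conclusion fixes $C = \PiI{\Ln}{x : A}{\CM{\CH{B}}}$; taking these same $A$ and $B$ as witnesses, the first disjunct holds, with $C \simeq \PiI{\Ln}{x : A}{\CM{\CH{B}}}$ following from reflexivity of $\simeq$ (rule \textsc{Conv-Refl}). The \textsc{Implicit-Send-HC} case is entirely symmetric and yields the second disjunct. No computation is needed here beyond reading off the premise.

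The only genuinely non-trivial case is \textsc{Conversion}, and even this is routine. Here the conclusion $\Theta ; \Gamma ; \Delta \vdash \sendI{m} : C$ is derived from a premise $\Theta ; \Gamma ; \Delta \vdash \sendI{m} : C'$ together with $C' \simeq C$. Applying the induction hypothesis to the premise produces witnesses $A, B$ satisfying one of the two disjuncts, where the recovered convertibility relates $C'$ (rather than $C$) to the appropriate $\Pi$-type, say $C' \simeq \PiI{\Ln}{x : A}{\CM{\CH{B}}}$; crucially, the typing judgment for $m$ in the induction hypothesis mentions only $\Theta, \Gamma, \Delta, A, B$ and is therefore inherited verbatim. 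Since $\simeq$ is the reflexive–symmetric–transitive closure of parallel reduction, I would combine $C \simeq C'$ (by symmetry from the conversion premise) with $C' \simeq \PiI{\Ln}{x : A}{\CM{\CH{B}}}$ by transitivity to obtain $C \simeq \PiI{\Ln}{x : A}{\CM{\CH{B}}}$, re-establishing the disjunct for $C$.

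I do not anticipate a real obstacle: the disjunctive conclusion is preserved intact through the conversion case precisely because $\simeq$ is an equivalence relation, and the typing of the subterm $m$ is never disturbed by \textsc{Conversion} (which only alters the type of the whole term, not the premise typing of $m$). The main point requiring care is simply to keep the two disjuncts separate when threading the induction hypothesis through \textsc{Conversion}, applying symmetry and transitivity of $\simeq$ within whichever branch the hypothesis supplies.
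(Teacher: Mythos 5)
Your proof is correct and follows exactly the intended argument: the paper states this inversion lemma without an explicit proof, but the standard case analysis on the last rule of the derivation --- only \textsc{Implicit-Send-CH}, \textsc{Implicit-Send-HC}, and \textsc{Conversion} can conclude a judgment with subject $\sendI{m}$ --- together with transitivity and symmetry of $\simeq$ in the conversion case is precisely the routine induction the paper relies on for all of its inversion lemmas.
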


\begin{lemma}\label[lemma]{lemma:program-inversion-explicit-recv}
  If ${\Theta ; \Gamma ; \Delta \vdash \recvR{m} : C}$, then there exists $A, B$
  such that either 
  $C \simeq \CM{\SigR{\Ln}{x : A}{\CH{B}}}$ and 
  ${\Theta ; \Gamma ; \Delta \vdash m : \CH{\ActR{?}{x : A}{B}}}$
  or
  $C \simeq \CM{\SigR{\Ln}{x : A}{\HC{B}}}$ and 
  ${\Theta ; \Gamma ; \Delta \vdash m : \HC{\ActR{!}{x : A}{B}}}$.
\end{lemma}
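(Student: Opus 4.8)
The plan is to proceed by induction on the derivation of $\Theta ; \Gamma ; \Delta \vdash \recvR{m} : C$, inspecting which program-level rules could have concluded a typing judgment whose subject is the term $\recvR{m}$. Only three rules apply: \textsc{Explicit-Recv-CH}, \textsc{Explicit-Recv-HC}, and the \textsc{Conversion} rule. The two recv rules supply the two disjuncts of the goal directly, while the \textsc{Conversion} rule is discharged by the induction hypothesis together with the symmetry and transitivity of the convertibility relation $\simeq$.

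First I would dispatch the two base cases. If the derivation ends in \textsc{Explicit-Recv-CH}, then its premise is exactly $\Theta ; \Gamma ; \Delta \vdash m : \CH{\ActR{?}{x : A}{B}}$ for some $A, B$, and the conclusion fixes $C = \CM{\SigR{\Ln}{x : A}{\CH{B}}}$. Taking these $A$ and $B$ as witnesses, the left disjunct holds with $C \simeq \CM{\SigR{\Ln}{x : A}{\CH{B}}}$ by reflexivity (\textsc{Conv-Refl}). The case of \textsc{Explicit-Recv-HC} is symmetric and yields the right disjunct with witness typing $\Theta ; \Gamma ; \Delta \vdash m : \HC{\ActR{!}{x : A}{B}}$ and $C \simeq \CM{\SigR{\Ln}{x : A}{\HC{B}}}$.

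Next I would handle the inductive case of \textsc{Conversion}. Here the derivation ends with a subderivation $\Theta ; \Gamma ; \Delta \vdash \recvR{m} : C'$ together with $C' \simeq C$ and $\Gamma \vdash C : s$. The subject of the subderivation is still $\recvR{m}$, so the induction hypothesis applies and produces witnesses $A, B$ satisfying one of the two disjuncts, say the left one, so that $C' \simeq \CM{\SigR{\Ln}{x : A}{\CH{B}}}$ and $\Theta ; \Gamma ; \Delta \vdash m : \CH{\ActR{?}{x : A}{B}}$. Since the typing of $m$ does not mention $C$, it carries over unchanged, and it only remains to re-establish the convertibility. Combining $C \simeq C'$ (by symmetry of $\simeq$) with $C' \simeq \CM{\SigR{\Ln}{x : A}{\CH{B}}}$ via transitivity yields $C \simeq \CM{\SigR{\Ln}{x : A}{\CH{B}}}$, which is the required conclusion; the right-disjunct subcase is identical.

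The only real subtlety, rather than a genuine obstacle, is bookkeeping the disjunction through the \textsc{Conversion} case and assembling the convertibility chain in the correct direction using symmetry and transitivity of $\simeq$, both of which are available since $\simeq$ is the reflexive, symmetric, transitive closure of parallel reduction. Notably, no injectivity or confluence facts (such as \Cref{corollary:inj-monad}) are needed here, because the lemma only asserts that $C$ is \emph{convertible} to a sigma type rather than demanding a structural decomposition of $C$ itself; this is precisely what keeps the argument a routine induction.
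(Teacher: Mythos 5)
Your proof is correct and is exactly the routine induction the paper relies on for this lemma (which it states without an explicit proof): the only rules that can conclude a judgment with subject $\recvR{m}$ are \textsc{Explicit-Recv-CH}, \textsc{Explicit-Recv-HC}, and \textsc{Conversion}, the first two giving the disjuncts by reflexivity of $\simeq$ and the last discharged by the induction hypothesis plus symmetry and transitivity of $\simeq$. Your closing observation is also right: because the conclusion is stated up to convertibility rather than structurally, no injectivity or confluence corollaries are needed here.
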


\begin{lemma}\label[lemma]{lemma:program-inversion-implicit-recv}
  If ${\Theta ; \Gamma ; \Delta \vdash \recvI{m} : C}$, then there exists $A, B$
  such that either 
  $C \simeq \CM{\SigI{\Ln}{x : A}{\CH{B}}}$ and 
  ${\Theta ; \Gamma ; \Delta \vdash m : \CH{\ActI{?}{x : A}{B}}}$
  or
  $C \simeq \CM{\SigI{\Ln}{x : A}{\HC{B}}}$ and 
  ${\Theta ; \Gamma ; \Delta \vdash m : \HC{\ActI{!}{x : A}{B}}}$.
\end{lemma}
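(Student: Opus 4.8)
The plan is to proceed by induction on the derivation of $\Theta ; \Gamma ; \Delta \vdash \recvI{m} : C$. The key observation is that the program-level typing rules are syntax-directed on the subject term, with the sole exception of \textsc{Conversion}. Inspecting the program-level session typing rules, the only rules whose conclusion has a subject of the form $\recvI{m}$ are \textsc{Implicit-Recv-CH} and \textsc{Implicit-Recv-HC}; every other rule concludes with a subject of a syntactically distinct head. Hence at each step of the induction I need only consider these two rules together with \textsc{Conversion}.

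For the two base cases the result is read off directly. If the final rule is \textsc{Implicit-Recv-CH}, then $C$ is literally $\CM{\SigI{\Ln}{x : A}{\CH{B}}}$ and its premise supplies $\Theta ; \Gamma ; \Delta \vdash m : \CH{\ActI{?}{x : A}{B}}$; I select the left disjunct, with $C \simeq \CM{\SigI{\Ln}{x : A}{\CH{B}}}$ holding by reflexivity of $\simeq$ (rule \textsc{Conv-Refl}). Symmetrically, if the final rule is \textsc{Implicit-Recv-HC} I select the right disjunct with the analogous $\HC{\cdot}$ witnesses.

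The remaining case is \textsc{Conversion}, which is where all the real (if routine) work lives. Here the derivation ends with some $\Theta ; \Gamma ; \Delta \vdash \recvI{m} : C'$ and a side condition $C' \simeq C$ (together with $\Gamma \vdash C : s$). The induction hypothesis applied to the subderivation yields witnesses $A, B$ and one of the two convertibilities $C' \simeq \CM{\SigI{\Ln}{x : A}{\CH{B}}}$ or $C' \simeq \CM{\SigI{\Ln}{x : A}{\HC{B}}}$, together with the corresponding typing of $m$, which is unchanged. Since $\simeq$ is the reflexive–symmetric–transitive closure of parallel reduction and is therefore an equivalence relation, I compose $C \simeq C'$ with the convertibility returned by the hypothesis to obtain the required $C \simeq \CM{\SigI{\Ln}{x : A}{\CH{B}}}$ (resp.\ $C \simeq \CM{\SigI{\Ln}{x : A}{\HC{B}}}$). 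The main obstacle is thus purely bookkeeping: the conversion rule may be applied arbitrarily many times, so the proof must lean on transitivity of $\simeq$ to collapse these chains into a single convertibility, while the typing of $m$ is carried through unchanged by each conversion step. Notably, no injectivity of type constructors (as in \Cref{corollary:inj-monad,corollary:inj-implicit-sig}) is needed here, since the statement only asserts convertibility of $C$ to a monadic type rather than a decomposition of its components.
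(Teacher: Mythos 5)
Your proof is correct and is exactly the routine induction-with-conversion argument the paper intends for its (unproved, stated-only) inversion lemmas: the only applicable rules are \textsc{Implicit-Recv-CH}, \textsc{Implicit-Recv-HC}, and \textsc{Conversion}, and transitivity of $\simeq$ handles the latter. Your closing observation that no constructor injectivity is needed here is also accurate.
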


\begin{lemma}\label[lemma]{lemma:program-inversion-close}
  If ${\Theta ; \Gamma ; \Delta \vdash \close{m} : A}$, then
  ${\Theta ; \Gamma ; \Delta \vdash m : \CH{\End}}$ and $A \simeq \CM{\unit}$.
\end{lemma}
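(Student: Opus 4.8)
The plan is to prove this by induction on the derivation of $\Theta ; \Gamma ; \Delta \vdash \close{m} : A$, following the same template as the other program-level inversion lemmas. Since the subject $\close{m}$ is a fixed syntactic form, only two typing rules can conclude such a judgment: the syntax-directed \textsc{Close} rule and the \textsc{Conversion} rule. Every other program-level rule has a subject with a different head constructor and is therefore inapplicable, so these are the only two cases to consider.

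In the \textsc{Close} case the premise of the rule gives $\Theta ; \Gamma ; \Delta \vdash m : \CH{\End}$ directly, while the conclusion forces $A$ to be syntactically $\CM{\unit}$; reflexivity of the convertibility relation (rule \textsc{Conv-Refl}) then yields $A \simeq \CM{\unit}$, discharging both obligations. In the \textsc{Conversion} case the derivation ends with a subderivation $\Theta ; \Gamma ; \Delta \vdash \close{m} : A'$ together with a conversion $A' \simeq A$ and the well-sortedness side condition $\Gamma \vdash A : s$. Applying the induction hypothesis to the subderivation of $\close{m} : A'$ supplies $\Theta ; \Gamma ; \Delta \vdash m : \CH{\End}$ and $A' \simeq \CM{\unit}$. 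The typing of $m$ is thus already established, and combining $A' \simeq A$ with $A' \simeq \CM{\unit}$ via the symmetry and transitivity of $\simeq$ gives the required $A \simeq \CM{\unit}$.

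I do not expect a genuine obstacle here: the lemma is routine, and the only point requiring any care is threading the accumulated conversion through the inductive step using symmetry and transitivity of the convertibility relation, which are immediate from its definition as the reflexive–symmetric–transitive closure of parallel reduction. The identical argument applies verbatim to the neighbouring inversion lemmas (for instance \Cref{lemma:program-inversion-return} and \Cref{lemma:program-inversion-close}), differing only in which syntax-directed rule supplies the base case and in the shape of the resulting type.
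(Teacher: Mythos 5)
Your proof is correct and follows exactly the routine induction on the typing derivation (with only the \textsc{Close} and \textsc{Conversion} cases applicable) that the paper leaves implicit for its stated inversion lemmas. The only blemish is cosmetic: in your closing remark you cite \Cref{lemma:program-inversion-close} itself as a ``neighbouring'' lemma where you presumably meant \Cref{lemma:program-inversion-wait}.
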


\begin{lemma}\label[lemma]{lemma:program-inversion-wait}
  If ${\Theta ; \Gamma ; \Delta \vdash \wait{m} : A}$, then
  ${\Theta ; \Gamma ; \Delta \vdash m : \HC{\End}}$ and $A \simeq \CM{\unit}$.
\end{lemma}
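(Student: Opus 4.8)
The plan is to prove this inversion lemma by induction on the derivation of the program-level typing judgment $\Theta ; \Gamma ; \Delta \vdash \wait{m} : A$. Scanning the program-level typing rules, the only two rules whose conclusion can carry the syntactic form $\wait{m}$ are the \textsc{Wait} rule and the \textsc{Conversion} rule; every other rule introduces a term with a different head constructor. The induction therefore splits into exactly these two cases.

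In the \textsc{Wait} case, the conclusion is $\Theta ; \Gamma ; \Delta \vdash \wait{m} : \CM{\unit}$, so $A$ is literally $\CM{\unit}$ and the single premise is precisely $\Theta ; \Gamma ; \Delta \vdash m : \HC{\End}$. Both conjuncts follow immediately, with $A \simeq \CM{\unit}$ discharged by reflexivity of $\simeq$ (rule \textsc{Conv-Refl}). In the \textsc{Conversion} case, the final step derives $\Theta ; \Gamma ; \Delta \vdash \wait{m} : A$ from a shorter subderivation $\Theta ; \Gamma ; \Delta \vdash \wait{m} : A'$ together with $A' \simeq A$. Applying the induction hypothesis to the subderivation yields $\Theta ; \Gamma ; \Delta \vdash m : \HC{\End}$ and $A' \simeq \CM{\unit}$. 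The first conjunct is exactly what is required, and the second combines with $A' \simeq A$ via the symmetry and transitivity of $\simeq$ (which hold because $\simeq$ is defined as the reflexive, symmetric, and transitive closure of parallel reduction) to yield $A \simeq \CM{\unit}$.

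There is no genuine obstacle here: the statement is a routine syntactic inversion, and the only subtlety, shared by every inversion lemma in this section, is the presence of the \textsc{Conversion} rule, which may be interposed an arbitrary number of times before the defining \textsc{Wait} step. Induction on the derivation handles this cleanly by folding each conversion step into the accumulated convertibility $A \simeq \CM{\unit}$, so the argument reduces to the same bookkeeping already used for the preceding inversion lemmas such as \Cref{lemma:program-inversion-close}.
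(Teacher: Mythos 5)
Your proof is correct and follows exactly the route the paper intends for its inversion lemmas: induction on the typing derivation, with the only non-trivial bookkeeping being the accumulation of convertibility through interposed \textsc{Conversion} steps, discharged by symmetry and transitivity of $\simeq$. The paper states this lemma without an explicit proof, and your argument is precisely the routine one it relies on.
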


\paragraph{\textbf{Type Validity}}
We show that all types appearing in typing judgments are valid, i.e. they are well-sorted
at the logical level.

\begin{theorem}[Logical Type Validity]\label[theorem]{theorem:logical-type-validity}
  If $\Gamma \vdash m : A$, then there exists $s$ such that $\Gamma \vdash A : s$.
\end{theorem}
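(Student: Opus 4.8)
The plan is to proceed by induction on the derivation of $\Gamma \vdash m : A$, treating each logical typing rule as a case. Throughout I would rely on a companion \emph{context regularity} fact — that $\Gamma \vdash m : A$ implies $\Gamma \vdash$ — which follows by a routine simultaneous induction with the typing judgment; this lets me, whenever a binder $x : A$ occurs in the context, invert rule \textsc{Ctx-Var} to recover $\Gamma \vdash A : s$ for that entry. The cases then fall into four groups. First, the rules whose conclusion type is a sort or a fixed base type — \textsc{Sort}, \textsc{Explicit-Fun}, \textsc{Implicit-Fun}, \textsc{Explicit-Sum}, \textsc{Implicit-Sum}, \textsc{Unit}, \textsc{Bool}, \textsc{Proto}, \textsc{End}, \textsc{$\C$Type}, \textsc{ChType}, \textsc{HcType}, and the action rules — are immediate, since $\Un$, $\Ln$, $\unit$, $\Bool$, and $\Proto$ are each well-sorted by the appropriate formation rule ($\Gamma \vdash s : \Un$ via \textsc{Sort}, and so on). Second, the rules that already carry well-sortedness of the conclusion type as a premise — \textsc{Conversion}, \textsc{Explicit-Pair}, \textsc{Implicit-Pair}, and \textsc{Bind} — are discharged by reusing that premise directly. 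Third, the \textsc{Var} case follows by inverting its context-validity premise and then applying Logical Weakening.

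The interesting work lies in the remaining two groups. For the dependent \emph{elimination} rules — \textsc{Explicit-App}, \textsc{Implicit-App}, \textsc{Explicit-SumElim}, \textsc{Implicit-SumElim}, and \textsc{BoolElim} — the conclusion type is a substitution instance such as $B[n/x]$, $C[m/z]$, or $A[m/z]$. Here I would apply the induction hypothesis to the principal premise, invert the resulting function or motive typing with the relevant inversion lemma (e.g.\ \Cref{lemma:logical-inversion-explicit-fun} for application, to extract $\Gamma, x : A \vdash B : t$; for the eliminators the motive premise is already in hand), and then apply Logical Substitution (\Cref{lemma:logical-subst}) to substitute the eliminated term into the codomain. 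Since substituting into a sort leaves it unchanged, this yields $\Gamma \vdash B[n/x] : t$ (resp.\ the analogous statement) as required.

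For the \emph{introduction} and communication rules the conclusion type must be reconstructed from its formation rule. \textsc{Return}, \textsc{Close}, and \textsc{Wait} reduce to re-deriving a $\CM{\cdot}$ type via \textsc{$\C$Type} after obtaining the carrier's sort (from the induction hypothesis, or from \textsc{Unit}); \textsc{Explicit-Lam} and \textsc{Implicit-Lam} rebuild a $\Pi$-type, taking the codomain sort from the induction hypothesis and the domain sort $\Gamma \vdash A : s$ from context regularity; \textsc{Channel-CH}/\textsc{Channel-HC} rebuild $\CH{A}$/$\HC{A}$ by weakening $\epsilon \vdash A : \Proto$ to $\Gamma \vdash A : \Proto$ and applying \textsc{ChType}/\textsc{HcType}; and \textsc{RecProto} and \textsc{Fork} take their type's well-sortedness from the binder entry via context regularity. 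I expect the genuine difficulty to be the session-communication rules \textsc{Explicit-Send-CH}, \textsc{Explicit-Recv-CH}, and their \textsc{HC} and implicit variants, whose conclusion types are compound, e.g.\ $\PiR{\Ln}{x : A}{\CM{\CH{B}}}$. For these I would \emph{chain} inversion lemmas: the induction hypothesis gives that the channel type in the premise is well-sorted, \Cref{lemma:logical-inversion-ch} (resp.\ \Cref{lemma:logical-inversion-hc}) then yields $\Gamma \vdash \ActR{\rho}{x : A}{B} : \Proto$, and \Cref{lemma:logical-inversion-explicit-action} yields $\Gamma, x : A \vdash B : \Proto$; from this I rebuild $\Gamma, x : A \vdash \CM{\CH{B}} : \Ln$ via \textsc{ChType} and \textsc{$\C$Type}, recover the domain sort $\Gamma \vdash A : s$ from context regularity applied to the context $\Gamma, x : A$, and finally close under \textsc{Explicit-Fun} — or under \textsc{Explicit-Sum} wrapped in the concurrency monad $\C$ for the receive rules. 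The careful bookkeeping of these inversion chains, together with ensuring that the domain $A$ of each binder is exhibited as well-sorted, is the main obstacle; the rest is routine propagation of the induction hypothesis.
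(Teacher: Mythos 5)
Your proposal is correct and follows essentially the same route as the paper's proof: induction on the typing derivation, using context validity to recover the sorts of binder entries, the inversion lemma for $\Pi$-types plus Logical Substitution for the application/elimination cases, and chained inversions through \textsc{ChType} and the action rules to rebuild the compound types of the send/receive operators. The only difference is presentational — you organize the cases into groups and make the implicit weakening in the \textsc{Var} and channel cases explicit, where the paper shows only representative cases.
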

\begin{proof}
  By induction on the derivation of $\Gamma \vdash m : A$. We will show some representative cases.

\textbf{Case} (\textsc{Var}): 
  From the premise we have $\Gamma, x : A \vdash$ which implies $\Gamma \vdash A : s$ for some $s$.

\textbf{Case} (\textsc{Explicit-Lam}):
  From the induction hypothesis we have $\Gamma, x : A \vdash B : r$ for some $r$.
  The validity of context $\Gamma, x : A \vdash$ implies $\Gamma \vdash A : s$ for some $s$.
  Then from \textsc{Explicit-Fun} we have $\Gamma \vdash \PiR{s}{x : A}{B} : s$ which concludes this case.

\textbf{Case} (\textsc{Explicit-App}):
  From the induction hypothesis we have $\Gamma \vdash \PiR{s}{x : A}{B} : r$ for some $r$.
  From \Cref{lemma:logical-inversion-explicit-fun} we have $\Gamma, x : A \vdash B : t$ for some $t$.
  By \Cref{lemma:logical-subst} we have $\Gamma \vdash B[n/x] : t$ which concludes this case.

\textbf{Case} (\textsc{Explicit-Recv-CH}):
  From the induction hypothesis we have $\Gamma \vdash \CH{\ActR{?}{x : A}{B}} : s$ for some $s$.
  By \Cref{lemma:logical-inversion-ch} we have $\Gamma \vdash \ActR{?}{x : A}{B} : \Proto$.
  By \Cref{lemma:logical-inversion-explicit-action} we have $\Gamma, x : A \vdash B : \Proto$.
  From the validity of context $\Gamma, x : A \vdash$ we have $\Gamma \vdash A : t$ for some $t$.
  Applying \textsc{ChType}, we have $\Gamma, x : A \vdash \CH{B} : \Ln$.
  Applying \textsc{Explicit-Sum}, we have $\Gamma \vdash \SigR{\Ln}{x : A}{\CH{B}} : \Ln$.
  Applying \textsc{$\mcC$Type}, we have $\Gamma \vdash \CM{\SigR{\Ln}{x : A}{\CH{B}}} : \Ln$ which concludes this case.
\end{proof}

To show that the types appearing in program level typing judgments are valid, 
we first prove the lifting theorem which allows us to lift programs to the logical level.

\begin{theorem}[Lifting]\label[theorem]{theorem:lifting}
  If $\Theta ; \Gamma ; \Delta \vdash m : A$, then $\Gamma \vdash m : A$.
\end{theorem}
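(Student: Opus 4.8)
The plan is to proceed by induction on the derivation of $\Theta ; \Gamma ; \Delta \vdash m : A$, showing in each case that the \emph{same} logical context $\Gamma$ and type $A$ are recovered at the fully structural logical level, with both the channel context $\Theta$ and the program context $\Delta$ simply discarded. The guiding observation is that the logical system is strictly more permissive than the program system: every program-level rule has a logical-level counterpart whose premises are a subset of---or follow directly from---the program-level premises, once the substructural bookkeeping on $\Theta$ and $\Delta$ is forgotten. The substructural side conditions such as $\Delta \triangleright \Un$ and the merge operators $\Theta_1 \dotcup \Theta_2$, $\Delta_1 \dotcup \Delta_2$ therefore play no role in the lifted derivation.

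Before the induction I would establish an auxiliary lemma that program-context validity implies logical-context validity: if $\Gamma ; \Delta \vdash{}$ then $\Gamma \vdash{}$. This is immediate by induction on the validity derivation, since each of \textsc{Ctx-Implicit-Var} and \textsc{Ctx-Explicit-Var} already carries the logical premise $\Gamma \vdash A : s$, which is exactly what the logical \textsc{Ctx-Var} rule requires, and \textsc{Ctx-Empty} maps to logical \textsc{Ctx-Empty}. This lemma feeds the leaf cases of the main induction.

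For the leaf cases---\textsc{Var}, \textsc{Channel-CH}, \textsc{Channel-HC}, \textsc{UnitVal}, \textsc{True}, \textsc{False}---the conclusion follows by applying the corresponding logical rule, whose only obligation is the context validity $\Gamma \vdash{}$ supplied by the auxiliary lemma, together with the already-logical side condition $\epsilon \vdash A : \Proto$ in the two channel cases. For the \textsc{Conversion} case I apply the induction hypothesis to obtain $\Gamma \vdash m : A$ and combine it with the already-logical premises $\Gamma \vdash B : s$ and $A \simeq B$ via the logical \textsc{Conversion} rule.

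For the remaining structural cases the pattern is uniform: apply the induction hypothesis to each program-level subderivation to lift it to a logical typing---crucially, the logical context $\Gamma$ is \emph{shared} (rather than split) across all premises of every program rule, so the lifted judgments line up exactly with the premises of the matching logical rule---pass through any premises that are already logical judgments (e.g.\ the type-formation premise $\Gamma \vdash B : s$ in \textsc{Bind}, or the argument premise $\Gamma \vdash n : A$ in \textsc{Implicit-App} and \textsc{Implicit-Pair}), and conclude by the corresponding logical rule. I do not expect a genuine obstacle; the closest thing to difficulty is the session fragment (\textsc{Fork}, the \textsc{Send}/\textsc{Recv} variants, \textsc{Close}, \textsc{Wait}), where one must verify that the logical session rules have premises jointly discharged by the induction hypothesis and the context-validity lemma. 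Since the program-level and logical-level rule sets are in near-exact correspondence by construction, the heaviest part is merely the bookkeeping of matching each program rule to its logical image, and each case reduces to a direct rule application.
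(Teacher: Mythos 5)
Your proposal is correct and matches the paper's proof, which is simply ``by induction on the derivation of $\Theta ; \Gamma ; \Delta \vdash m : A$''; your case analysis and the auxiliary context-validity observation are just the expected elaboration of that same induction.
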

\begin{proof}
  By induction on the derivation of $\Theta ; \Gamma ; \Delta \vdash m : A$.
\end{proof}

\begin{theorem}[Program Type Validity]\label[theorem]{theorem:program-type-validity}
  If $\Theta ; \Gamma ; \Delta \vdash m : A$, then there exists $s$ such that $\Gamma \vdash A : s$.
\end{theorem}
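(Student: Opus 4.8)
The plan is to reduce this program-level statement to its logical-level counterpart, exploiting the two results that immediately precede it. First I would apply the Lifting theorem (\Cref{theorem:lifting}) to the hypothesis $\Theta ; \Gamma ; \Delta \vdash m : A$, obtaining the purely logical judgment $\Gamma \vdash m : A$. Then I would apply Logical Type Validity (\Cref{theorem:logical-type-validity}) to this judgment, which directly yields a sort $s$ with $\Gamma \vdash A : s$, exactly as required. Thus the proof is a two-step corollary of the preceding theorems and requires no fresh induction on the program typing derivation.

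It is worth being explicit about where the actual work resides, since this theorem inherits it from elsewhere. The content-bearing step is Lifting, whose proof proceeds by induction on the program-level derivation and must exhibit, for each program typing rule, a matching logical derivation obtained by discarding the channel context $\Theta$ and the program context $\Delta$. This erasure of the substructural bookkeeping is sound precisely because the logical level is fully structural and because every program-level rule has a logical-level analogue imposing the same premises on types; for instance, the program \textsc{Channel-CH} and \textsc{Channel-HC} rules both carry the closed-protocol premise $\epsilon \vdash A : \Proto$, which is exactly what the logical channel rules demand. Once Lifting is established, Logical Type Validity carries the entire burden of constructing the well-sortedness witness $\Gamma \vdash A : s$, handling the genuinely delicate cases such as \textsc{Explicit-App} (via the relevant inversion and substitution lemmas) and the communication operators like \textsc{Explicit-Recv-CH} (by reassembling the $\Sigma$- and $\mcC$-type of the result).

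The only conceivable obstacle would be a mismatch in how runtime channels are typed across the two levels, but this is already neutralized by the design of the logical \textsc{Channel-CH}/\textsc{Channel-HC} rules, which permit a nonempty $\Gamma$ and impose only the closed-protocol condition rather than any linearity side-conditions. Consequently I anticipate no real difficulty at this point: the statement follows immediately by chaining \Cref{theorem:lifting} with \Cref{theorem:logical-type-validity}, and the proof is effectively a two-line invocation of these results.
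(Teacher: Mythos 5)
Your proposal matches the paper's proof exactly: the paper also derives this theorem immediately by chaining \Cref{theorem:lifting} with \Cref{theorem:logical-type-validity}, with no fresh induction on the program typing derivation. Your additional commentary on where the real work lives (in Lifting and in Logical Type Validity) is accurate and consistent with how the paper structures those supporting results.
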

\begin{proof}
  Immediate from \Cref{theorem:lifting} and \Cref{theorem:logical-type-validity}.
\end{proof}

\paragraph{\textbf{Subject Reduction}}
We show that both the logical and program level type systems enjoy subject reduction
under logical and program reductions respectively.

\begin{lemma}[Arity Preservation]\label[lemma]{lemma:arity-preservation}
  If $A \Rightarrow A'$ and $A~\arity{\Proto}$, then $A'~\arity{\Proto}$.
\end{lemma}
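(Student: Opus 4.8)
The plan is to proceed by induction on the derivation of the arity judgment $A~\arity{\Proto}$, using at each step an inversion on the parallel reduction $A \Rightarrow A'$. The observation that drives the whole argument is that an arity ending on $\Proto$ is, structurally, a (possibly empty) sequence of $\Pi$-type formers capped by $\Proto$, and none of these head constructors ($\PiR{t}{x : A}{B}$, $\PiI{t}{x : A}{B}$, or $\Proto$) exposes a top-level redex. Consequently, parallel reduction can only act congruently on such terms: it may reduce proper subterms but cannot alter the head constructor.

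Concretely, in the base case \textsc{Arity-Base} we have $A = \Proto$, and the only parallel reduction rule whose left-hand side is $\Proto$ is \textsc{PStep-Proto}, forcing $A' = \Proto$; hence $A'~\arity{\Proto}$ again by \textsc{Arity-Base}. In the \textsc{Arity-Explicit} case we have $A = \PiR{t}{x : A_0}{B}$ with the sub-derivation $B~\arity{\Proto}$. Inverting $A \Rightarrow A'$, the sole rule matching an explicit $\Pi$-type at the head is \textsc{PStep-Explicit-Fun}, so $A' = \PiR{t}{x : A_0'}{B'}$ with $B \Rightarrow B'$. Applying the induction hypothesis to $B~\arity{\Proto}$ and $B \Rightarrow B'$ yields $B'~\arity{\Proto}$, and \textsc{Arity-Explicit} then gives $A'~\arity{\Proto}$. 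The \textsc{Arity-Implicit} case is identical, reading \textsc{PStep-Implicit-Fun} in place of \textsc{PStep-Explicit-Fun}.

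The only real content is the inversion step, i.e. verifying that the head of an arity admits no parallel-reduction rule other than the congruence rule for its own constructor. This is immediate from inspecting the parallel reduction rules: the $\beta$-rules (\textsc{PStep-Explicit-$\beta$}, \textsc{PStep-Implicit-$\beta$}), the pair-elimination rules, the boolean-elimination rules, and \textsc{PStep-RecUnfold} all require a head that is an application, a pair eliminator, a boolean eliminator, or a $\mu$-expression, none of which can appear at the top of an arity. I therefore expect no genuine obstacle; the lemma is a routine structural induction entirely parallel to the already-stated \textsc{Renaming Arity} and \textsc{Substitution Arity} lemmas, differing only in that the operation preserved is reduction rather than renaming or substitution.
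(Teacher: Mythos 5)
Your proof is correct and matches the paper's, which simply performs the same routine structural induction (the paper inducts on the derivation of $A \Rightarrow A'$ rather than on $A~\arity{\Proto}$, but the case analysis and the key inversion observation are identical). No gaps.
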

\begin{proof}
  By induction on the derivation of $A \Rightarrow A'$.
\end{proof}

\begin{lemma}[Guard Preservation]\label[lemma]{lemma:guard-preservation}
  If $A \Rightarrow A'$ and $A~\guard{x}$, then $A'~\guard{x}$.
\end{lemma}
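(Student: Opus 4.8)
The plan is to proceed by induction on the derivation of $A \Rightarrow A'$, in the same style as the proof of \Cref{lemma:arity-preservation}. Since every syntactic form has exactly one guardedness rule, the guard judgment is syntax-directed, and I can freely invert $A~\guard{x}$ to recover guardedness in $x$ of the relevant subterms. The cases split into the structural (congruence) steps of parallel reduction and the handful of contraction steps that fire a redex. Throughout, I adopt the usual variable convention so that any bound variable introduced by the reduction (e.g. the $\lambda$- or $\mu$-bound variable, or the pattern variables of a $\Sigma$-eliminator) is kept distinct from the tracked variable $x$.

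For the congruence cases (e.g. \textsc{PStep-Explicit-Fun}, \textsc{PStep-Implicit-Action}, \textsc{PStep-Bind}, \textsc{PStep-CH}), the argument is uniform: I invert $A~\guard{x}$ through the matching guard rule, apply the induction hypothesis to each premise reduction to obtain guardedness of the reduced subterms, and reassemble with the same guard rule. The action cases are slightly special because \textsc{Guard-Explicit-Action} and \textsc{Guard-Implicit-Action} constrain only the message type and impose no requirement on the continuation; there I only need the induction hypothesis on the message-type reduction and may discard the information about the continuation, since $x$ is by definition guarded once it sits behind a protocol action.

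The interesting cases are the contraction steps that introduce a substitution: \textsc{PStep-Explicit-$\beta$} and \textsc{PStep-Implicit-$\beta$} (producing $m'[n'/x']$), the two \textsc{PairElim} rules (producing $n'[m_1/x', m_2/y']$), \textsc{PStep-ReturnBind} (producing $n'[m'/x']$), and \textsc{PStep-RecUnfold} (producing $m'[(\fix{x' : A'}{m'})/x']$). In each case I first invert the guardedness of the whole redex to obtain guardedness in $x$ of all its immediate components, then push guardedness through the subterm reductions with the induction hypothesis, and finally invoke the substitution lemma \Cref{lemma:subst-guard} with both its source and target variable set to $x$ to conclude that the substituted result is guarded in $x$. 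Concretely, to establish $m'[n'/x']~\guard{x}$ I take $\sigma = [n'/x']$; the side condition of \Cref{lemma:subst-guard} asks for $(\sigma z)~\guard{x}$ only when $z \neq x$, which I discharge by cases: when $z = x'$ it demands $n'~\guard{x}$ (available from the induction hypothesis), and when $z \neq x', x$ we have $\sigma z = z$ with $z~\guard{x}$ by \textsc{Guard-Var}. The \textsc{PStep-RecUnfold} case is identical except that the substituted term is $\fix{x' : A'}{m'}$, whose guardedness in $x$ I first rebuild from $A'~\guard{x}$ and $m'~\guard{x}$ via \textsc{Guard-RecProto} before feeding it into \Cref{lemma:subst-guard}.

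I expect the main obstacle to be precisely the bookkeeping around this side condition. A naive attempt would appear to require $x~\guard{x}$, which is false since \textsc{Guard-Var} demands distinct variables; the proof therefore hinges on the observation that the hypothesis of \Cref{lemma:subst-guard} is quantified only over $z \neq x$, together with the variable convention that keeps the freshly bound variables $x', y'$ apart from $x$. Once this point is settled, every contraction case collapses to a routine instantiation of the substitution lemma, and the congruence cases are immediate.
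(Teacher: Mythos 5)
Your proof is correct and takes essentially the same route as the paper's, which is a one-line induction on the derivation of $A \Rightarrow A'$ appealing to \Cref{lemma:subst-guard} for the contraction cases. Your elaboration of the substitution lemma's side condition (that it is only demanded for $z \neq x$, with identity images discharged by \textsc{Guard-Var} and the substituted variable's image supplied by the induction hypothesis) is exactly the bookkeeping the paper leaves implicit.
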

\begin{proof}
  By induction on the derivation of $A \Rightarrow A'$ and appealing to  
  \Cref{lemma:subst-guard}.
\end{proof}

\begin{theorem}[Logical Subject Reduction]\label[theorem]{theorem:logical-subject-reduction}
  If $\Gamma \vdash m : A$ and $m \Rightarrow m'$, then $\Gamma \vdash m' : A$.
\end{theorem}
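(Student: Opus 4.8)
The plan is to prove the statement by induction on the derivation of the parallel reduction $m \Rightarrow m'$, inverting the typing derivation $\Gamma \vdash m : A$ at each step with the logical inversion lemmas (\Cref{lemma:logical-inversion-explicit-fun} through \Cref{lemma:logical-inversion-implicit-action}). Because those inversion lemmas are all stated modulo convertibility, they silently absorb any uses of the \textsc{Conversion} rule, so I avoid a separate conversion case entirely. The one fact that makes the dependent cases go through is that $\Rightarrow$ is contained in $\simeq$: from $n \Rightarrow n'$ we obtain $n \simeq n'$ by one application of \textsc{Conv-PStep} after \textsc{Conv-Refl}. Hence whenever a subterm reduces inside a dependent type, the old and new instantiations stay convertible, and a final appeal to \textsc{Conversion}---justified by \Cref{theorem:logical-type-validity}, which supplies the well-sortedness premise that rule requires---recovers the stated type $A$.

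For the congruence rules I would invert, apply the induction hypotheses to the reduced subterms, and reassemble. Two wrinkles recur. First, for the binder formers whose domain annotation reduces---\textsc{PStep-Explicit-Fun}, \textsc{PStep-Explicit-Lam}, \textsc{PStep-Explicit-Sum} and their implicit variants---the body or codomain is typed under $\Gamma, x : A$ while the reduct is built over $\Gamma, x : A'$; since $A \simeq A'$, \Cref{corollary:logical-context-conv} transports the body's typing into the updated context before I re-form the binder. Second, for \textsc{PStep-Explicit-App} and the pair formers the index type changes from $B[n/x]$ to $B[n'/x]$; as $n \simeq n'$ these are convertible, so \textsc{Conversion} closes the gap.

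For the redex rules---\textsc{PStep-Explicit-$\beta$}, \textsc{PStep-Implicit-$\beta$}, \textsc{PStep-Explicit-PairElim}, \textsc{PStep-Implicit-PairElim}, \textsc{PStep-TrueElim}/\textsc{PStep-FalseElim}, and \textsc{PStep-ReturnBind}---I would use inversion to expose the typings of the redex components, apply the induction hypotheses, and then invoke the Logical Substitution lemma (\Cref{lemma:logical-subst}) to type the contractum; convertibility of the substituted arguments again lets \textsc{Conversion} deliver the original type.

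The genuinely new case, and the one I expect to be the main obstacle, is \textsc{PStep-RecUnfold}, where $\fix{x : A}{m} \Rightarrow m'[(\fix{x : A'}{m'})/x]$. Inverting \textsc{RecProto} gives $\Gamma, x : A \vdash m : A$ together with $A~\arity{\Proto}$ and $m~\guard{x}$, and context validity forces $x \notin \FV{A}$. I would first rebuild the unfolded fixpoint at type $A'$: the induction hypotheses yield $\Gamma, x : A \vdash m' : A$ and $\Gamma \vdash A' : s$, \Cref{corollary:logical-context-conv} moves the former to $\Gamma, x : A' \vdash m' : A'$, \Cref{lemma:arity-preservation} gives $A'~\arity{\Proto}$, and \Cref{lemma:guard-preservation} gives $m'~\guard{x}$, so \textsc{RecProto} yields $\Gamma \vdash \fix{x : A'}{m'} : A'$. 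Then \Cref{lemma:logical-subst} types the body after substitution, and because $x \notin \FV{A'}$ the result type $A'[(\fix{x : A'}{m'})/x]$ collapses to $A'$; a final \textsc{Conversion} along $A' \simeq A$ recovers $A$. The delicate points are establishing that the self-type is closed in the bound variable, so substituting the unfolded fixpoint does not disturb it, and ensuring that the arity and guardedness side conditions survive reduction, which is precisely what \Cref{lemma:arity-preservation} and \Cref{lemma:guard-preservation} secure; everything else is routine bookkeeping.
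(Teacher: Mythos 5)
Your proof is correct, but it is organized dually to the paper's: you induct on the derivation of $m \Rightarrow m'$ and invert the typing judgment, whereas the paper inducts on the derivation of $\Gamma \vdash m : A$ and performs case analysis on the reduction. The substantive content is the same in both directions --- each reduction premise is absorbed into $\simeq$, binders whose annotations reduce are repaired with \Cref{corollary:logical-context-conv}, redexes are discharged with \Cref{lemma:logical-subst} followed by \textsc{Conversion} (whose well-sortedness premise comes from \Cref{theorem:logical-type-validity} plus the relevant inversion lemma), and \textsc{PStep-RecUnfold} is handled exactly as you describe, via \Cref{lemma:arity-preservation}, \Cref{lemma:guard-preservation}, and the observation that $x \notin \FV{A'}$ so the substitution leaves the self-type untouched. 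The trade-off is where the \textsc{Conversion} rule is paid for: in the paper's orientation it is a trivial standalone case of the typing induction, and inversion lemmas are needed only to decompose redexes (e.g.\ \Cref{lemma:logical-inversion-explicit-lam} for the $\beta$ cases); in your orientation every case must pass through a conversion-absorbing inversion lemma, and the appendix only supplies logical-level inversions for a handful of formers (\Cref{lemma:logical-inversion-explicit-fun} through \Cref{lemma:logical-inversion-implicit-action}), so you would additionally have to state and prove such lemmas for applications, pairs, the $\Sigma$- and \textsf{bool}-eliminators, \textsf{return}/\textsf{bind}, and the session constructs. That is routine boilerplate rather than a gap, but it is real work the paper's decomposition avoids; otherwise the two proofs deliver the same result by the same key lemmas.
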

\begin{proof}
  By induction on the derivation of $\Gamma \vdash m : A$ and case analysis on the reduction $m \Rightarrow m'$.
  We present the following representative cases.

\textbf{Case} (\textsc{Explicit-Lam}):
  From case analysis on the reduction, we have $A \Rightarrow A'$ and $m \Rightarrow m'$.
  From the induction hypothesis we have $\Gamma, x : A \vdash m' : B$ and $\Gamma \vdash A' : s$ for some $s$.
  By definition of convertibility, we have $A \simeq A'$.
  By \Cref{corollary:logical-context-conv} we have $\Gamma, x : A' \vdash m' : B$.
  By \textsc{Explicit-Lam} we have $\Gamma \vdash \lamR{s}{x : A'}{m'} : \PiR{s}{x : A'}{B}$.
  By \textsc{Conversion} we have $\Gamma \vdash \lamR{s}{x : A'}{m'} : \PiR{s}{x : A}{B}$ which concludes this case.

\textbf{Case} (\textsc{Explicit-App}):
  From case analysis on the reduction we have two sub-cases:
  (1) \textsc{PStep-Explicit-App} and (2) \textsc{PStep-Explicit-$\beta$}.

  In sub-case (1) \textsc{PStep-Explicit-App}, we have $m \Rightarrow m'$ and $n \Rightarrow n'$.
  From the induction hypothesis we have $\Gamma \vdash m' : \PiR{s}{x : A}{B}$ and $\Gamma \vdash n' : A$.
  By \textsc{Explicit-App} we have $\Gamma \vdash \appR{m'}{n'} : B[n'/x]$.
  By definition of convertibility, we have $B[n/x] \simeq B[n'/x]$.
  By validity we have $\Gamma \vdash \PiR{s}{x : A}{B} : t$ for some $t$.
  By \Cref{lemma:logical-inversion-explicit-fun} we have $\Gamma, x : A \vdash B : r$ for some $r$.
  By \Cref{lemma:logical-subst} we have $\Gamma \vdash B[n/x] : r$.
  By \textsc{Conversion} we have $\Gamma \vdash \appR{m'}{n'} : B[n/x]$ which concludes this sub-case.

  In sub-case (2) \textsc{PStep-Explicit-$\beta$}, we have $m = \lamR{s}{x : A}{m_0}$ for some $m_0$
  and $m_0 \Rightarrow m_0'$ and $n \Rightarrow n'$.
  By \Cref{lemma:logical-inversion-explicit-lam} we have $\Gamma, x : A \vdash m_0 : B$.
  By the induction hypothesis we have $\Gamma, x : A \vdash m_0' : B$ and $\Gamma \vdash n' : A$.
  By \Cref{lemma:logical-subst} we have $\Gamma \vdash m_0'[n'/x] : B[n'/x]$.
  By definition of convertibility, we have $B[n/x] \simeq B[n'/x]$
  By validity we have $\Gamma \vdash \PiR{s}{x : A}{B} : t$ for some $t$.
  By \Cref{lemma:logical-inversion-explicit-fun} we have $\Gamma, x : A \vdash B : r$ for some $r$.
  By \Cref{lemma:logical-subst} we have $\Gamma \vdash B[n/x] : r$.
  By \textsc{Conversion} we have $\Gamma \vdash m_0'[n'/x] : B[n/x]$ which concludes this sub-case.

\textbf{Case} (\textsc{BoolElim})
  From case analysis on the reduction we have three sub-cases:
  (1) \textsc{PStep-BoolElim}, (2) \textsc{PStep-TrueElim}, and (3) \textsc{PStep-FalseElim}.

  In sub-case (1) \textsc{PStep-BoolElim}, we have 
  $A \Rightarrow A'$, $m \Rightarrow m'$, $n_1 \Rightarrow n_1'$, and $n_2 \Rightarrow n_2'$.
  By the induction hypothesis we have
  $\Gamma, z : \Bool \vdash A' : s$, 
  $\Gamma \vdash m' : \Bool$, 
  $\Gamma \vdash n_1' : A[\bTrue/z]$, and 
  $\Gamma \vdash n_2' : A[\bFalse/z]$.
  By definition of convertibility, we have $A \simeq A'$.
  By \Cref{lemma:logical-subst} we have $\Gamma \vdash A'[\bTrue/z] : s$ and $\Gamma \vdash A'[\bFalse/z] : s$.
  By \textsc{Conversion} we have $\Gamma \vdash n_1' : A'[\bTrue/z]$ and $\Gamma \vdash n_2' : A'[\bFalse/z]$.
  By \textsc{BoolElim} we have $\Gamma \vdash \boolElim{[z]A'}{m'}{n_1'}{n_2'} : A'[m'/z]$.
  By definition of convertibility, we have $A[m/z] \simeq A'[m'/z]$.
  By \Cref{lemma:logical-subst} we have $\Gamma \vdash A[m/z] : s$.
  By \textsc{Conversion} we have $\Gamma \vdash \boolElim{[z]A'}{m'}{n_1'}{n_2'} : A[m/z]$ which concludes this sub-case.

\textbf{Case} (\textsc{RecProto}):
  From case analysis on the reduction we have two sub-cases:
  (1) \textsc{PStep-RecProto} and (2) \textsc{PStep-RecUnfold}.

  In sub-case (1) \textsc{PStep-RecProto}, we have $A \Rightarrow A'$ and $m \Rightarrow m'$.
  By validity of context $\Gamma, x : A \vdash$ we have $\Gamma \vdash A : s$ for some $s$.
  By the induction hypothesis we have ${\Gamma, x : A \vdash m' : A}$ and ${\Gamma \vdash A' : s}$.
  By definition of convertibility, we have $A \simeq A'$.
  By \Cref{corollary:logical-context-conv} we have $\Gamma, x : A' \vdash m' : A$.
  By \textsc{Conversion} we have $\Gamma, x : A' \vdash m' : A'$.
  By \Cref{lemma:arity-preservation} we have $A'~\arity{\Proto}$.
  By \Cref{lemma:guard-preservation} we have $m'~\guard{x}$.
  By \textsc{RecProto} we have $\Gamma \vdash \fix{x : A'}{m'} : A'$.
  By \textsc{Conversion} we have $\Gamma \vdash \fix{x : A'}{m'} : A$ which concludes this sub-case.

  In sub-case (2) \textsc{PStep-RecUnfold}, we have $A \Rightarrow A'$ and $m \Rightarrow m'$.
  By validity of context $\Gamma, x : A \vdash$ we have $\Gamma \vdash A : s$ for some $s$.
  By the induction hypothesis we have ${\Gamma, x : A \vdash m' : A}$ and ${\Gamma \vdash A' : s}$.
  By definition of convertibility, we have $A \simeq A'$.
  By \Cref{corollary:logical-context-conv} we have $\Gamma, x : A' \vdash m' : A$.
  By \textsc{Conversion} we have $\Gamma, x : A' \vdash m' : A'$.
  By \Cref{lemma:arity-preservation} we have $A'~\arity{\Proto}$.
  By \Cref{lemma:guard-preservation} we have $m'~\guard{x}$.
  By \textsc{RecProto} we have $\Gamma \vdash \fix{x : A'}{m'} : A'$.
  By \Cref{lemma:logical-subst} we have $\Gamma \vdash m'[\fix{x : A'}{m'}/x] : A'$.
  By \textsc{Conversion} we have $\Gamma \vdash m'[\fix{x : A'}{m'}/x] : A$ which concludes this sub-case.
\end{proof}

In order to show subject reduction at the program level, we need to show that reduction of
redexes in dependent positions preserves typing. To do so, we prove the following lemma which
lifts program reductions into convertibility at the logical level.
\begin{lemma}[Program Step Convertible]\label[lemma]{lemma:program-step-convertible}
  If ${\Theta ; \epsilon ; \epsilon \vdash m : A}$ and $m \Leadsto n$, then $m \simeq n$.
\end{lemma}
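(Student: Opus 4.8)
The plan is to reduce the claim to the purely syntactic fact that program-level reduction is contained in logical parallel reduction, i.e. that $m \Leadsto n$ implies $m \Rightarrow n$. Once this containment is established, convertibility follows immediately: from $m \Rightarrow n$ we obtain $m \simeq n$ by \textsc{Conv-Refl} (which gives $m \simeq m$) followed by a single application of \textsc{Conv-PStep}. The typing hypothesis $\Theta ; \epsilon ; \epsilon \vdash m : A$ is not needed for the argument itself; it is carried along only because this lemma is applied in settings (such as program-level subject reduction) where a reduction occurs in a dependent position and the typing of the redex happens to be available.

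First I would establish that parallel reduction is reflexive, i.e. $t \Rightarrow t$ for every term $t$. This is a routine structural induction: each base term former has a corresponding reflexive rule (\textsc{PStep-Var}, \textsc{PStep-Sort}, \textsc{PStep-Channel}, and the analogous rules for the data and protocol constants), and each compound term former has a congruence rule to which the induction hypotheses on its immediate subterms apply.

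The main step is an induction on the derivation of $m \Leadsto n$, showing $m \Rightarrow n$, with the cases splitting into two kinds. For the congruence rules (the application, pair, and sum-elimination congruences, together with the monadic and session congruences such as \textsc{Step-Bind} and \textsc{Step-Explicit-Send}), I apply the induction hypothesis to the single reduced subterm and reflexivity of $\Rightarrow$ to the remaining subterms, then conclude via the matching \textsc{PStep} congruence rule. For the computation rules (\textsc{Step-Explicit-$\beta$}, \textsc{Step-Implicit-$\beta$}, the pair eliminations, \textsc{Step-TrueElim}, \textsc{Step-FalseElim}, and \textsc{Step-ReturnBind}), I use the corresponding beta-like parallel rule (for instance \textsc{PStep-Explicit-$\beta$} and \textsc{PStep-ReturnBind}), supplying reflexive parallel reductions for the subterms that the rule does not rewrite.

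The only point requiring care, and the closest thing to an obstacle, is bookkeeping rather than genuine difficulty: program reduction is call-by-value and so fires $\beta$-style redexes only when the argument is a value, whereas parallel reduction fires them unconditionally. Since the value restriction makes $\Leadsto$ strictly more selective than $\Rightarrow$, every program step remains a legal parallel step, so no mismatch can arise; one simply has to check this uniformly across the numerous core, data, monadic, and communication primitives. I expect the bulk of the write-up to consist of mechanically matching each small-step rule with its parallel-reduction counterpart.
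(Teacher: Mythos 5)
Your proposal is correct: the paper states this lemma without an explicit proof, and the argument you give — reflexivity of parallel reduction by structural induction, then an induction on $m \Leadsto n$ showing $\Leadsto\ \subseteq\ \Rightarrow$ by matching each small-step rule (congruence or $\beta$-like) with its \textsc{PStep} counterpart, and finally one application of \textsc{Conv-PStep} after \textsc{Conv-Refl} — is the standard and evidently intended one. Your observation that the typing hypothesis is not actually used is also accurate; it is carried only for uniformity with the call sites in the subject-reduction proof.
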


The program level substitution lemma (\Cref{lemma:program-subst-explicit})
requires context restrictions $\Theta_2 \triangleright s$ and $\Delta_2 \triangleright s$ for the
substituted term $n$. To ensure that these restrictions are satisfied for \emph{values}, we prove the
following context bound lemma.
\begin{lemma}[Program Context Bound]\label[lemma]{lemma:program-context-bound}
  Given ${\Theta ; \Gamma ; \Delta \vdash v : A}$ and $\Gamma \vdash A : s$, then 
  $\Theta \triangleright s$ and $\Delta \triangleright s$.
\end{lemma}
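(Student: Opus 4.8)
The plan is to prove the statement by induction on the derivation of $\Theta ; \Gamma ; \Delta \vdash v : A$, organized around a preliminary case split on the sort $s$. The key simplification is that context restriction is vacuous at the linear sort: both $\Theta \triangleright \Ln$ and $\Delta \triangleright \Ln$ hold for arbitrary contexts by the definition of $\triangleright$. Hence if $s = \Ln$ there is nothing to prove, and I may assume $s = \Un$. Under this assumption, every value form whose type inhabits $\Ln$ is ruled out by \Cref{theorem:sort-uniqueness}: a channel $c$ has a type $\CH{\cdot}$ or $\HC{\cdot}$; a return $\return{v'}$ and every thunk $\tau$ have a monadic type $\CM{\cdot}$; and a partially applied send $\sendR{c}$ or $\sendI{c}$ has a $\Pi$-type annotated with $\Ln$ — all of sort $\Ln$, which would contradict $s = \Un$. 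Thus, once $s = \Un$, only the data values, the $\lambda$-abstractions, the dependent pairs, and the \textsc{Conversion} rule can be the final rule of the derivation.

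The \textsc{Conversion} case is handled by first establishing that the sort of a type is invariant under $\simeq$: if $\Gamma \vdash A_0 : s_0$ and $\Gamma \vdash A : s$ with $A_0 \simeq A$, then the Confluence theorem supplies a common reduct, logical subject reduction (\Cref{theorem:logical-subject-reduction}) transports both typings onto it, and \Cref{theorem:sort-uniqueness} forces $s_0 = s$; the induction hypothesis on the premise then applies verbatim. The data values $\ii, \bTrue, \bFalse$ are immediate, since the premises of their introduction rules already record $\Theta = \epsilon$ and $\Delta \triangleright \Un$. For $\lamR{t}{x : A'}{m}$ and $\lamI{t}{x : A'}{m}$, inversion on the $\Pi$-type together with \Cref{corollary:inj-sort} identifies the annotation $t$ with $\Un$, while the side conditions $\Theta \triangleright t$ and $\Delta \triangleright t$ appear literally as premises of \textsc{Explicit-Lam}/\textsc{Implicit-Lam}, delivering exactly $\Theta \triangleright \Un$ and $\Delta \triangleright \Un$.

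The dependent pairs are where the real work lies, and this is the step I expect to be the main obstacle. Consider $\pairR{u}{w}{t}$ of type $\SigR{t}{x : A'}{B}$ (the implicit pair $\pairI{m}{w}{t}$ is analogous but simpler, its first component being a ghost typed only logically, so that the whole $\Theta,\Delta$ flows to the second component and no merge is needed). Inverting $\Gamma \vdash \SigR{t}{x : A'}{B} : \Un$ via \Cref{lemma:logical-inversion-explicit-sig} gives $t = \Un$ together with the sort-ordering constraints $s_{A'} \sqsubseteq \Un$ and $r_B \sqsubseteq \Un$ on the component types; since $\Ln \not\sqsubseteq \Un$, both $s_{A'}$ and $r_B$ collapse to $\Un$. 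Lifting $u$ to the logical level (\Cref{theorem:lifting}) and substituting (\Cref{lemma:logical-subst}) shows that the second component's type $B[u/x]$ is also of sort $\Un$, so the induction hypothesis applies to both (strictly smaller, value-subject) component derivations and yields $\Theta_1, \Delta_1 \triangleright \Un$ and $\Theta_2, \Delta_2 \triangleright \Un$. It then remains to transport these facts across the merge $\dotcup$, which requires the one auxiliary fact not already present in the excerpt: that $\Delta_1 \triangleright \Un$ and $\Delta_2 \triangleright \Un$ imply $\Delta_1 \dotcup \Delta_2 \triangleright \Un$ (and likewise for $\Theta$). This preservation property follows by a routine induction on the definition of $\dotcup$, and with it the pair cases — and hence the entire induction — close.
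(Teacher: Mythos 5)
Your proof is correct and follows essentially the same route as the paper's: induction on the typing derivation of the value, using the $\triangleright t$ premises of the $\lambda$-rules together with $\Pi$-inversion and sort injectivity, sort uniqueness to dispatch the linearly-typed value forms (channels, partial sends, returns, thunks), and the $\sqsubseteq$ constraints from $\Sigma$-inversion plus preservation of $\triangleright$ under $\dotcup$ for the pair cases. The only differences are organizational: you batch all the $\Ln$-sorted cases with an upfront split on $s$ (the paper instead derives $s = \Ln$ case by case and discharges each trivially), and you explicitly flag the $\dotcup$-preservation fact that the paper uses tacitly.
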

\begin{proof}
  By induction on the derivation of ${\Theta ; \Gamma ; \Delta \vdash v : A}$ where $v$ is a value.
  We present the following representative cases.

\textbf{Case} (\textsc{Explicit-Lam}):
  From the premise we have $\Theta \triangleright t$ and $\Delta \triangleright t$.
  By \Cref{lemma:logical-inversion-explicit-fun} we have $t \simeq s$.
  By injectivity of sorts (\Cref{corollary:inj-sort}) we have $t = s$ which concludes this case.

\textbf{Case} (\textsc{Explicit-Pair}):
  From the assumption that the pair is a value, we have $v = \pairR{v_1}{v_2}{t}$ for some $t$.
  Additionally, we have $\Theta_1 ; \Gamma ; \Delta_1 \vdash v_1 : A$ and 
  $\Theta_2 ; \Gamma ; \Delta_2 \vdash v_2 : B[v_1/x]$ and $\Gamma \vdash \SigR{t}{x : A}{B} : t$.
  By \Cref{theorem:sort-uniqueness} we have $s = t$.
  By \Cref{lemma:logical-inversion-explicit-sig} we have 
  $\Gamma \vdash A : s$ and $\Gamma, x : A \vdash B : r$ and $s \sqsubseteq t$ and $r \sqsubseteq t$.
  By the induction hypothesis we have $\Theta_1 \triangleright s$ and $\Delta_1 \triangleright s$
  and $\Theta_2 \triangleright r$ and $\Delta_2 \triangleright r$.
  These context restrictions can then be weakened to $\Theta_1 \triangleright t$ and $\Delta_1 \triangleright t$ 
  and $\Theta_2 \triangleright t$ and $\Delta_2 \triangleright t$.
  The merged contexts now satisfy $\Theta_1 \dotcup \Theta_2 \triangleright t$ and 
  $\Delta_1 \dotcup \Delta_2 \triangleright t$ which concludes this case.

\textbf{Case} (\textsc{Return}):
  From the premise we have $\Theta ; \Gamma ; \Delta \vdash m : A$ and $\Gamma \vdash \CM{A} : s$.
  By \Cref{theorem:sort-uniqueness} we have $s = \Ln$. 
  The context restrictions $\Theta \triangleright \Ln$ and $\Delta \triangleright \Ln$ hold trivially which concludes this case.

\textbf{Case} (\textsc{Channel-CH})
  From the premise we have $\Theta \triangleright \Un$. From \textsc{Ord-$\Un$} we have $\Un \sqsubseteq s$
  which allows us to weaken the restriction into $\Theta \triangleright s$ and concluding this case.

\textbf{Case} (\textsc{Explicit-Send-CH}):
  From the premise we have $\Gamma \vdash \PiR{\Ln}{x : A}{\CM{\CH{B}}} : s$. 
  By \Cref{theorem:sort-uniqueness} we have $s = \Ln$.
  The context restrictions $\Theta \triangleright \Ln$ and $\Delta \triangleright \Ln$ hold trivially.
\end{proof}

\begin{theorem}[Program Subject Reduction]\label[theorem]{theorem:program-subject-reduction}
  If ${\Theta ; \epsilon ; \epsilon \vdash m : A}$ and ${m \Leadsto m'}$, then ${\Theta ; \epsilon ; \epsilon \vdash m' : A}$.
\end{theorem}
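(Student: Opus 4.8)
The plan is to proceed by induction on the derivation of $\Theta ; \epsilon ; \epsilon \vdash m : A$, performing an inner case analysis on the reduction $m \Leadsto m'$ within each typing case, mirroring the structure of the logical subject reduction proof (\Cref{theorem:logical-subject-reduction}). The congruence rules of the program semantics (\textsc{Step-Explicit-App$_1$}, \textsc{Step-Explicit-App$_2$}, the pair and $\mcC$-congruences, \textsc{Step-SumElim$_1$}, \textsc{Step-BoolElim}, and the session congruences) are handled uniformly by the induction hypothesis: I would retype the reduced subterm and reassemble the original typing rule. The one subtlety is argument positions in dependent eliminators, where reducing $n \Leadsto n'$ changes the stated result type from $B[n/x]$ to $B[n'/x]$. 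To repair this I would invoke \Cref{lemma:program-step-convertible} to obtain $n \simeq n'$ (which applies precisely because the subterm is typed under empty logical and program contexts), derive $B[n/x] \simeq B[n'/x]$, and close the gap with the \textsc{Conversion} rule, using \Cref{theorem:program-type-validity} to supply the required sort.

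The substantive cases are the actual reductions. For explicit $\beta$-reduction, $\appR{(\lamR{s}{x : A}{m_0})}{v} \Leadsto m_0[v/x]$ under \textsc{Step-Explicit-$\beta$}, I would first apply \Cref{lemma:program-inversion-explicit-app} to split the contexts as $\Theta = \Theta_1 \dotcup \Theta_2$ and obtain typings for the function and for the argument $v$, then \Cref{lemma:program-inversion-explicit-lam} to extract the body typing $\Theta_1 ; x : A ; x :_r A \vdash m_0 : B$. The crucial step is discharging the side conditions $\Theta_2 \triangleright s$ and $\Delta_2 \triangleright s$ demanded by the explicit substitution lemma; since the call-by-value strategy guarantees the argument is a value $v$, these follow directly from \Cref{lemma:program-context-bound}. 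I would then apply \Cref{lemma:program-subst-explicit} to obtain $m_0[v/x] : B[v/x]$ and finish with \textsc{Conversion}. The implicit $\beta$ case, \textsc{Step-Implicit-$\beta$}, is analogous but simpler, using \Cref{lemma:program-inversion-implicit-app} and the implicit substitution lemma \Cref{lemma:program-subst-implicit}, which carries no context-restriction obligation. The pair-elimination reductions (\textsc{Step-Explicit-PairElim}, \textsc{Step-Implicit-PairElim}) and the boolean reductions (\textsc{Step-TrueElim}, \textsc{Step-FalseElim}) follow the same recipe: invert, apply the context bound lemma to the value components, substitute, and convert.

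The monadic \textsc{Step-ReturnBind} case (the \textsc{BindElim} rule), $\letin{x}{\return{v}}{n} \Leadsto n[v/x]$, is where the design of the \textsc{Bind} rule pays off. Applying \Cref{lemma:program-inversion-bind} yields $x \notin \FV{B}$ for the result type $\CM{B}$, together with $\Theta_2 ; x : A ; x :_s A \vdash n : \CM{B}$; inverting the \textsc{Return} premise via \Cref{lemma:program-inversion-return} gives the value typing for $v$. Because the bound variable cannot occur in $B$, substituting $v$ for $x$ leaves the result type $\CM{B}$ unchanged, so after discharging the restriction premises with \Cref{lemma:program-context-bound} (again $v$ is a value) and applying \Cref{lemma:program-subst-explicit}, no final conversion on the type is needed.

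I expect the main obstacle to be the bookkeeping around the context-restriction side conditions of the explicit substitution lemma together with the dependent-type conversions: every value substituted during a reduction must be shown to satisfy $\Theta_2 \triangleright s$ and $\Delta_2 \triangleright s$, and \Cref{lemma:program-context-bound} is exactly the lemma that makes this uniform for call-by-value values (appealing to \Cref{theorem:sort-uniqueness} where the sort annotations must be matched). Coordinating this with the convertibility repairs in the dependent argument positions is the part that requires the most care, but it introduces no genuinely new difficulty beyond correctly assembling the inversion, substitution, and context-bound lemmas already established.
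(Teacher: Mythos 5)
Your plan matches the paper's proof essentially step for step: induction on the typing derivation with an inner case analysis on the reduction, \Cref{lemma:program-step-convertible} plus \textsc{Conversion} to repair dependent argument positions in the congruence cases, and the inversion lemmas together with \Cref{lemma:program-context-bound} (to discharge the $\triangleright$ side conditions for call-by-value arguments) feeding \Cref{lemma:program-subst-explicit}/\Cref{lemma:program-subst-implicit} in the redex cases, with the $x \notin \FV{B}$ condition from \Cref{lemma:program-inversion-bind} handling \textsc{Step-ReturnBind} exactly as the paper does. The only quibble is cosmetic: in the $\beta$ and boolean-elimination cases the paper needs no final \textsc{Conversion} (and \textsc{Step-TrueElim}/\textsc{Step-FalseElim} involve no substitution at all, just collapsing $\Theta_1 = \epsilon$ via inversion on the scrutinee), but this does not affect correctness.
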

\begin{proof}
  By induction on the derivation of ${\Theta ; \epsilon ; \epsilon \vdash m : A}$ and case 
  analysis on the reduction $m \Leadsto m'$. We present the following representative cases.

\textbf{Case} (\textsc{Explicit-App}):
  From case analysis on the reduction we have three sub-cases:
  (1) \textsc{Step-Explicit-App$_1$}, (2) \textsc{Step-Explicit-App$_2$}, and (3) \textsc{Step-Explicit-$\beta$}.

  In sub-case (1) \textsc{Step-Explicit-App$_1$}, we have $m \Leadsto m'$.
  By the induction hypothesis we have ${\Theta_1 ; \Gamma ; \Delta_1 \vdash m' : \PiR{t}{x : A}{B}}$.
  By \textsc{Explicit-App} we have ${\Theta_1 \dotcup \Theta_2 ; \Gamma ; \Delta_1 \dotcup \Delta_2 \vdash \appR{m'}{n} : B[n/x]}$
  which concludes this sub-case.

  In sub-case (2) \textsc{Step-Explicit-App$_2$}, we have $n \Leadsto n'$.
  By the induction hypothesis we have ${\Theta_2 ; \Gamma ; \Delta_2 \vdash n' : A}$.
  By \textsc{Explicit-App} we have ${\Theta_1 \dotcup \Theta_2 ; \Gamma ; \Delta_1 \dotcup \Delta_2 \vdash \appR{m}{n'} : B[n'/x]}$.
  By \Cref{lemma:program-step-convertible} we have $n \simeq n'$ and $B[n/x] \simeq B[n'/x]$.
  By \Cref{theorem:lifting} we have $\Gamma \vdash n : A$.
  Applying \Cref{theorem:program-type-validity} on $\Theta_1; \Gamma ; \Delta_1 \vdash m : \PiR{t}{x : A}{B}$ we have
  $\Gamma \vdash \PiR{t}{x : A}{B} : s$ for some $s$.
  By \Cref{lemma:logical-inversion-explicit-fun} we have $\Gamma, x : A \vdash B : r$ for some $r$.
  By \Cref{lemma:logical-subst} we have $\Gamma \vdash B[n/x] : r$.
  By \textsc{Conversion} we have ${\Theta_1 \dotcup \Theta_2 ; \Gamma ; \Delta_1 \dotcup \Delta_2 \vdash \appR{m}{n'} : B[n/x]}$
  which concludes this sub-case.

  In sub-case (3) \textsc{Step-Explicit-$\beta$}, we have $m = \lamR{t}{x : A_0}{m_0}$ for some $A_0$ and $m_0$.
  By \Cref{lemma:program-inversion-explicit-lam} we have ${\Theta_1 ; \Gamma, x : A ; \Delta_1, x :_r A \vdash m_0 : B}$.
  From the validity of context ${\Theta_1 ; \Gamma, x : A ; \Delta_1, x :_r A \vdash}$ we have $\Gamma \vdash A : r$.
  By \Cref{lemma:program-context-bound} and ${\Theta_2 ; \Gamma ; \Delta_2 \vdash v : A}$ and $\Gamma \vdash A : r$
  we have $\Theta_2 \triangleright r$.
  By \Cref{lemma:program-subst-explicit} we have 
  ${\Theta_1 \dotcup \Theta_2 ; \Gamma ; \Delta_1 \dotcup \Delta_2 \vdash m_0[v/x] : B[v/x]}$ 
  which concludes this sub-case.

\textbf{Case} (\textsc{Explicit-SumElim}):
  From case analysis on the reduction we have two sub-cases:
  (1) \textsc{Step-SumElim$_1$} and (2) \textsc{Step-Explicit-PairElim}.

  In sub-case (1) \textsc{Step-SumElim$_1$}, we have $m \Leadsto m'$.
  By the induction hypothesis we have ${\Theta_1 ; \Gamma ; \Delta_1 \vdash m' : \SigR{t}{x : A}{B}}$
  By \textsc{Explicit-SumElim} we have
  ${\Theta_1 \dotcup \Theta_2 ; \Gamma ; \Delta_1 \dotcup \Delta_2 \vdash \SigElim{[z]C}{m'}{[x,y]n} : C[m'/z]}$.
  By \Cref{lemma:program-step-convertible} we have $m \simeq m'$ and $C[m/z] \simeq C[m'/z]$.
  By \Cref{theorem:lifting} we have $\Gamma \vdash m : \SigR{t}{x : A}{B}$.
  Applying \Cref{lemma:logical-subst} on assumption $\Gamma, z : \SigR{t}{x : A}{B} \vdash C : s$ and
  $\Gamma \vdash m : \SigR{t}{x : A}{B}$ we have $\Gamma \vdash C[m/z] : s$. By \textsc{Conversion} we have 
  ${\Theta_1 \dotcup \Theta_2 ; \Gamma ; \Delta_1 \dotcup \Delta_2 \vdash \SigElim{[z]C}{m'}{[x,y]n} : C[m/z]}$
  which concludes this sub-case.

  In sub-case (2) \textsc{Step-Explicit-PairElim}, we have $m = \pairR{u}{v}{t}$ for some $u, v$ and $t$.
  By \Cref{lemma:program-inversion-explicit-pair} we have
  ${\Theta_{11} ; \Gamma ; \Delta_{11} \vdash u : A}$ and
  ${\Theta_{12} ; \Gamma ; \Delta_{12} \vdash v : B[u/x]}$ and
  $\Theta_1 = \Theta_{11} \dotcup \Theta_{12}$ and $\Delta_1 = \Delta_{11} \dotcup \Delta_{12}$ and $s = t$.
  From the validity of context $\Theta_2 ; \Gamma, x : A, y : B ; \Delta_2, x :_{r_1} A, y :_{r_2} B \vdash$
  we have $\Gamma \vdash A : r_1$ and $\Gamma, x : A \vdash B : r_2$.
  Applying \Cref{theorem:lifting} to ${\Theta_{11} ; \Gamma ; \Delta_{11} \vdash u : A}$ we have
  $\Gamma \vdash u : A$. Applying \Cref{lemma:logical-subst} we have $\Gamma \vdash B[u/x] : r_2$.
  By \Cref{lemma:program-context-bound} and ${\Theta_{11} ; \Gamma ; \Delta_{11} \vdash u : A}$ and $\Gamma \vdash A : r_1$
  we have $\Theta_{11} \triangleright r_1$.
  By \Cref{lemma:program-context-bound} and ${\Theta_{12} ; \Gamma ; \Delta_{12} \vdash v : B[u/x]}$ and $\Gamma \vdash B[u/x] : r_2$
  we have $\Theta_{12} \triangleright r_2$.
  By \Cref{lemma:program-subst-explicit} we have
  ${\Theta_{11} \dotcup \Theta_{12} \dotcup \Theta_2 ; \Gamma ; \Delta_{11} \dotcup \Delta_{12} \dotcup \Delta_2 \vdash n[u/x, v/y] : C[\pairR{u}{v}{t}/z]}$
  which concludes this sub-case.

\textbf{Case} (\textsc{BoolElim}):
  By case analysis on the reduction we have three sub-cases:
  (1) \textsc{Step-BoolElim$_1$}, (2) \textsc{Step-TrueElim}, and (3) \textsc{Step-FalseElim}.

  In sub-case (1) \textsc{Step-BoolElim$_1$}, we have $m \Leadsto m'$.
  By the induction hypothesis we have ${\Theta_1 ; \Gamma ; \Delta_1 \vdash m' : \Bool}$.
  By \textsc{BoolElim} we have
  ${\Theta_1 \dotcup \Theta_2 ; \Gamma ; \Delta_1 \dotcup \Delta_2 \vdash \boolElim{[z]C}{m'}{n_1}{n_2} : C[m'/z]}$.
  By \Cref{lemma:program-step-convertible} we have $m \simeq m'$ and $C[m/z] \simeq C[m'/z]$.
  By \Cref{theorem:lifting} we have $\Gamma \vdash m : \Bool$.
  Applying \Cref{lemma:logical-subst} on ${\Gamma, z : \Bool \vdash C : s}$ and
  ${\Gamma \vdash m : \Bool}$ we have ${\Gamma \vdash C[m/z] : s}$.
  By \textsc{Conversion} we have
  ${\Theta_1 \dotcup \Theta_2 ; \Gamma ; \Delta_1 \dotcup \Delta_2 \vdash \boolElim{[z]C}{m'}{n_1}{n_2} : C[m/z]}$
  which concludes this sub-case.

  In sub-case (2) \textsc{Step-TrueElim}, we have $m = \bTrue$.
  By \Cref{lemma:program-inversion-true} we have $\Theta_1 = \epsilon$ and $\Delta_1 \triangleright \Un$.
  Thus we have $\Theta_1 \dotcup \Theta_2 = \Theta_2$ and $\Delta_1 \dotcup \Delta_2 = \Delta_2$.
  The assumption $\Theta_2 ; \Gamma ; \Delta_2 \vdash n_1 : C[\bTrue/z]$ gives us the desired result
  which concludes this sub-case.

  In sub-case (3) \textsc{Step-FalseElim}, we have $m = \bFalse$.
  By \Cref{lemma:program-inversion-false} we have $\Theta_1 = \epsilon$ and $\Delta_1 \triangleright \Un$.
  Thus we have $\Theta_1 \dotcup \Theta_2 = \Theta_2$ and $\Delta_1 \dotcup \Delta_2 = \Delta_2$.
  The assumption $\Theta_2 ; \Gamma ; \Delta_2 \vdash n_2 : C[\bFalse/z]$ gives us the desired result
  which concludes this sub-case.

\textbf{Case} (\textsc{Bind}):
  By case analysis on the reduction we have two sub-cases: 
  (1) \textsc{Step-Bind} and (2) \textsc{Step-ReturnBind}.

  In sub-case (1) \textsc{Step-Bind}, we have $m \Leadsto m'$.
  By the induction hypothesis we have ${\Theta_1 ; \Gamma ; \Delta_1 \vdash m' : \CM{A}}$.
  By \textsc{Bind} we have ${\Theta_1 \dotcup \Theta_2 ; \Gamma ; \Delta_1 \dotcup \Delta_2 \vdash \letin{x}{m'}{n} : C}$ 
  which concludes this sub-case.

  In sub-case (2) \textsc{Step-ReturnBind}, we have $m = \return{v}$ for some value $v$.
  By \Cref{lemma:program-inversion-return} we have ${\Theta_1 ; \Gamma ; \Delta_1 \vdash v : A}$.
  From the validity of context ${\Theta_2 ; \Gamma, x : A ; \Delta_2, x :_r A \vdash}$ we have $\Gamma \vdash A : r$.
  By \Cref{lemma:program-context-bound} and ${\Theta_1 ; \Gamma ; \Delta_1 \vdash v : A}$ and $\Gamma \vdash A : r$
  we have $\Theta_1 \triangleright r$.
  By \Cref{lemma:program-subst-explicit} we have
  ${\Theta_1 \dotcup \Theta_2 ; \Gamma ; \Delta_1 \dotcup \Delta_2 \vdash n[v/x] : \CM{B[v/x]}}$.
  From assumption $\Gamma \vdash B : s$ we know that $x \notin \FV{B}$, thus $B[v/x] = B$
  and ${\Theta_1 \dotcup \Theta_2 ; \Gamma ; \Delta_1 \dotcup \Delta_2 \vdash n[v/x] : \CM{B}}$.
  which concludes this sub-case.
\end{proof}

\paragraph{\textbf{Progress}}
Due to the presence of concurrency primitives, the values that the program level
terms can reduce to are not necessarily canonical forms. They can also be thunked monadic computations.
These thunked computations will eventually be reduced by the semantics of the process level.

The following canonical forms lemmas are used to prove program progress (\Cref{theorem:program-progress}). 
They are proved by induction on the typing derivation of the value.
\begin{lemma}\label[lemma]{lemma:program-explicit-fun-canonical}
  If ${\Theta ; \epsilon ; \epsilon \vdash v : \PiR{s}{x : A}{B}}$ then
  $v = \lamR{s}{x : A}{m}$ or $v = \sendR{u}$.
\end{lemma}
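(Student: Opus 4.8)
The plan is to proceed by induction on the derivation of $\Theta ; \epsilon ; \epsilon \vdash v : \PiR{s}{x : A}{B}$, combining a case analysis on the last typing rule with the shape of the value $v$. Because both the logical and program contexts are empty, the \textsc{Var} rule is ruled out at once (it demands a context containing $x$), so the value forms I must inspect are exactly those in the value grammar: channels $c$, the explicit and implicit $\lambda$-abstractions, the explicit and implicit pairs, $\ii$, $\bTrue$, $\bFalse$, $\return{v}$, the thunks $\tau$, and the partially applied senders $\sendR{c}$ and $\sendI{c}$. To absorb the \textsc{Conversion} rule cleanly I would first generalize the hypothesis to $\Theta ; \epsilon ; \epsilon \vdash v : C$ with $C \simeq \PiR{s}{x : A}{B}$; then the \textsc{Conversion} case is discharged by transitivity of $\simeq$ and a direct appeal to the induction hypothesis, leaving the genuine work in the value-introducing rules.

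For the enumeration, the two surviving cases are immediate: \textsc{Explicit-Lam} synthesizes $\PiR{t}{x : A}{B}$ and gives the first conclusion, while \textsc{Explicit-Send-CH} and \textsc{Explicit-Send-HC} both synthesize explicit function types ($\PiR{\Ln}{x : A}{\CM{\CH{B}}}$ and $\PiR{\Ln}{x : A}{\CM{\HC{B}}}$), giving the second conclusion $v = \sendR{u}$ with $u$ a channel. Every remaining value form synthesizes a type whose outermost former is not an explicit $\Pi$: channels give $\CH{\cdot}$ or $\HC{\cdot}$; implicit $\lambda$-abstractions and implicit senders give $\PiI{\cdot}{\cdot}{\cdot}$; the two pair forms give $\Sigma$-types; $\ii$, $\bTrue$, $\bFalse$ give $\unit$ or $\Bool$; and $\return{v}$ together with every thunk ($\fork{x : A}{m}$, $\recvR{c}$, $\recvI{c}$, $\appR{\sendR{c}}{v}$, $\appI{\sendI{c}}{m}$, $\close{c}$, $\wait{c}$, and the monadic $\letin{x}{\tau}{m}$) give a monad type $\CM{\cdot}$. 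In each case I would invoke the matching injectivity corollary — \Cref{corollary:inj-ch}, \Cref{corollary:inj-hc}, \Cref{corollary:inj-implicit-fun}, \Cref{corollary:inj-explicit-sig}, \Cref{corollary:inj-implicit-sig}, or \Cref{corollary:inj-monad} — against \Cref{corollary:inj-explicit-fun} to contradict convertibility with an explicit $\Pi$-type.

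The hard part will be precisely these convertibility-discrimination steps inside the conversion case: ruling out, say, a monad type or an implicit function type being convertible to an explicit $\Pi$-type. This is not a syntactic triviality, since $\simeq$ is the symmetric–transitive closure of parallel reduction; it rests on the Confluence theorem, from which the injectivity corollaries are derived and which guarantees that convertible types share the same outermost former. A mild additional subtlety is that the thunk $\appR{\sendR{c}}{v}$ is typed by \textsc{Explicit-App}, whose conclusion type $B[n/x]$ must first be recognized as a monad type via \Cref{lemma:program-inversion-explicit-send} (which forces the codomain of $\sendR{c}$ to be $\CM{\cdot}$) before the discrimination argument applies; the implicit analogue $\appI{\sendI{c}}{m}$ is handled identically. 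Once the injectivity corollaries are in hand, each offending case collapses immediately and the induction goes through.
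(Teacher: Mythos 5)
Your proposal is correct and follows the same route the paper takes: the paper dispatches all the canonical forms lemmas with the single remark that they are ``proved by induction on the typing derivation of the value,'' and your elaboration --- generalizing the goal over convertibility to absorb \textsc{Conversion}, discharging \textsc{Var} by emptiness of the contexts, keeping \textsc{Explicit-Lam} and the two explicit send rules, and killing every other value-typing rule by head-constructor discrimination via the confluence-derived injectivity corollaries (with the extra inversion step for the thunks $\appR{\sendR{c}}{v}$ and $\letin{x}{\tau}{m}$ whose rule conclusions are not syntactically head-formed) --- is exactly the standard realization of that induction. No gaps.
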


\begin{lemma}\label[lemma]{lemma:program-implicit-fun-canonical}
  If ${\Theta ; \epsilon ; \epsilon \vdash v : \PiI{s}{x : A}{B}}$ then
  $v = \lamI{s}{x : A}{m}$ or $v = \sendI{u}$.
\end{lemma}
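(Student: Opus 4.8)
The plan is to prove the lemma by induction on the derivation of $\Theta ; \epsilon ; \epsilon \vdash v : \PiI{s}{x : A}{B}$, following the same pattern as the proof of the companion \Cref{lemma:program-explicit-fun-canonical}. To handle the \textsc{Conversion} rule cleanly, I would first generalize the statement to: whenever $\Theta ; \epsilon ; \epsilon \vdash v : C$ for a value $v$ and $C \simeq \PiI{s}{x : A}{B}$ for some $s, A, B$, the value $v$ has the form $\lamI{s}{x : A}{m}$ or $\sendI{u}$. The original claim is the instance $C = \PiI{s}{x : A}{B}$ via reflexivity of $\simeq$, and with this generalization the \textsc{Conversion} case is immediate: its premise types $v$ at some $C'$ with $C' \simeq C \simeq \PiI{s}{x : A}{B}$, so the induction hypothesis applies by transitivity of $\simeq$.

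Because $v$ is a \emph{value}, the last rule of the derivation is constrained by the outermost shape of $v$ dictated by the value grammar, and in each case that rule fixes the head constructor of the type it produces. The two productive cases are \textsc{Implicit-Lam}, giving $v = \lamI{t}{x : A'}{m}$ at type $\PiI{t}{x : A'}{B'}$, and \textsc{Implicit-Send-CH}/\textsc{Implicit-Send-HC}, giving $v = \sendI{c}$ (the only value matching $\sendI{\cdot}$) at a type of the form $\PiI{\Ln}{x : A'}{\CM{A''}}$. In both cases \Cref{corollary:inj-implicit-fun}, applied to the convertibility $C \simeq \PiI{s}{x : A}{B}$, forces agreement of the sort annotation, and the desired form follows directly, with $u = c$ in the send cases.

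All remaining rules are discharged by showing their output type is not convertible to $\PiI{s}{x : A}{B}$, for which the essential tool is confluence (Church--Rosser). Since convertible terms share a common reduct and each relevant type former is stable at its head under parallel reduction, none of $\CH{A'}$, $\HC{A'}$, $\PiR{t}{x : A'}{B'}$, $\SigR{t}{x : A'}{B'}$, $\SigI{t}{x : A'}{B'}$, $\CM{A'}$, $\unit$, or $\Bool$ is convertible to an implicit function type. This eliminates \textsc{Channel-CH}/\textsc{Channel-HC}, \textsc{Explicit-Lam}, the two \textsc{Pair} rules, \textsc{UnitVal}, \textsc{True}, \textsc{False}, and every monadic value (the \textsc{Return}, \textsc{Fork}, \textsc{Close}, \textsc{Wait}, and \textsc{Recv} thunks, whose types are syntactically $\CM{A'}$). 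The application thunks $\appR{\sendR{c}}{v'}$ and $\appI{\sendI{c}}{m'}$ are handled identically once \Cref{lemma:program-inversion-explicit-send} and \Cref{lemma:program-inversion-implicit-send} are used to observe that the send operator has a monadic codomain, so these applications again carry a monadic type. The rules \textsc{Var}, the \textsc{SumElim} rules, \textsc{BoolElim}, and \textsc{App}/\textsc{Bind} with non-thunk subjects are vacuous, since their subjects are not values.

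The main obstacle is the \textsc{Conversion} rule, which is the only one that does not pin down the head of the type; the generalization to $C \simeq \PiI{s}{x : A}{B}$ together with confluence is exactly what tames it, after which every other case reduces to a mechanical consultation of the value grammar and the injectivity corollaries. The preceding \Cref{lemma:program-explicit-fun-canonical} is proved in the same way, reading $\PiR$, $\lamR$, and $\sendR$ for $\PiI$, $\lamI$, and $\sendI$ throughout.
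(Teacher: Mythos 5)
Your proposal is correct and follows essentially the same route as the paper, which proves all the canonical forms lemmas simply ``by induction on the typing derivation of the value''; your generalization to a type merely convertible to $\PiI{s}{x : A}{B}$ and the use of confluence plus the injectivity corollaries to discharge the \textsc{Conversion} and mismatched-head cases are exactly the standard details that the paper leaves implicit.
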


\begin{lemma}\label[lemma]{lemma:program-sig-canonical}
  If ${\Theta ; \epsilon ; \epsilon \vdash v : \SigR{s}{x : A}{B}}$ then
  $v = \pairR{v_1}{v_2}{s}$.
\end{lemma}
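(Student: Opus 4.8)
The plan is to prove this canonical forms lemma by induction on the typing derivation, after first generalizing the statement so that the induction hypothesis can absorb the \textsc{Conversion} rule. Concretely, I would strengthen the claim to: if $\Theta ; \epsilon ; \epsilon \vdash v : C$ with $v$ a value and $C \simeq \SigR{s}{x : A}{B}$, then $v = \pairR{v_1}{v_2}{s}$. The original lemma is the special case $C = \SigR{s}{x : A}{B}$, and this generalization is exactly what makes the conversion case go through via transitivity of $\simeq$, since a bare induction with a syntactically fixed $\Sigma$-conclusion would not supply a usable hypothesis underneath a conversion.

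With this generalization, I would proceed by induction on the derivation of $\Theta ; \epsilon ; \epsilon \vdash v : C$ and case split on the last rule, using the fact that $v$ is a value typed in empty contexts (so \textsc{Var} and all elimination rules are excluded). The \textsc{Conversion} case is then immediate: its subderivation assigns $v$ a type $C'$ with $C' \simeq C \simeq \SigR{s}{x : A}{B}$, so the generalized induction hypothesis applies directly. The \textsc{Explicit-Pair} case is also immediate, since $v$ already has the required shape $\pairR{v_1}{v_2}{t}$, and \Cref{lemma:program-inversion-explicit-pair} (together with \Cref{corollary:inj-explicit-sig}) forces the sort tag $t$ to equal $s$. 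Every remaining introduction rule assigns $v$ a type whose head constructor is not an explicit $\Sigma$: lambdas receive $\Pi$-types, $\return{m}$ and the monadic thunks receive $\CM{\cdot}$-types, channels and the send-operators receive $\CH{\cdot}$, $\HC{\cdot}$, or $\Pi$-types, $\ii$/$\bTrue$/$\bFalse$ receive $\unit$ or $\Bool$, and implicit pairs receive implicit $\Sigma$-types $\SigI{s'}{x : A'}{B'}$. In each such case the assigned type would have to be convertible to $\SigR{s}{x : A}{B}$, which I will rule out.

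The heart of the argument, and the step I expect to be most delicate, is this no-confusion reasoning: I must show that a type whose head is $\Pi$, $\mcC$, $\CH{\cdot}$, $\HC{\cdot}$, $\unit$, $\Bool$, or an \emph{implicit} $\Sigma$ can never be convertible to an explicit $\SigR{s}{x : A}{B}$. This follows from Confluence: since $\simeq$ is generated by parallel reduction and parallel reduction preserves the outermost former of each of these canonical type constructors, any two convertible types share a common reduct bearing the same head, so distinct heads cannot be identified. The subtlest sub-case is separating explicit from implicit $\Sigma$-types; I would settle it by this same head-preservation argument rather than by the same-head injectivity corollaries (\Cref{corollary:inj-explicit-sig}, \Cref{corollary:inj-implicit-sig}), since those presuppose that the two types already share a head constructor. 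Once these head-distinctness facts are established, all non-pair cases become contradictions and the proof closes.
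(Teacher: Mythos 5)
Your proposal is correct and matches the paper's approach: the paper proves all the canonical forms lemmas simply ``by induction on the typing derivation of the value,'' and your generalization over a convertible type $C \simeq \SigR{s}{x : A}{B}$ together with the confluence-based head-distinctness argument is exactly the standard way to make that induction go through past the \textsc{Conversion} rule. The only refinement worth noting is that you correctly flag the explicit-vs-implicit $\Sigma$ separation as needing the head-preservation argument rather than the same-head injectivity corollaries, which the paper leaves implicit.
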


\begin{lemma}\label[lemma]{lemma:program-implicit-sig-canonical}
  If ${\Theta ; \epsilon ; \epsilon \vdash v : \SigI{s}{x : A}{B}}$ then
  $v = \pairI{m}{v_2}{s}$.
\end{lemma}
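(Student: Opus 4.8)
The plan is to prove this canonical forms lemma by induction on the derivation of $\Theta ; \epsilon ; \epsilon \vdash v : \SigI{s}{x : A}{B}$, mirroring the arguments used for \Cref{lemma:program-explicit-fun-canonical} through \Cref{lemma:program-sig-canonical}. Because the program level \textsc{Conversion} rule may be the last rule applied and it rewrites the ascribed type up to $\simeq$, I would first generalize the statement so that the induction is stable: I would prove that whenever $\Theta ; \epsilon ; \epsilon \vdash v : C$ holds with $C \simeq \SigI{s}{x : A}{B}$ for some $A$ and $B$, the value $v$ must have the form $\pairI{m}{v_2}{s}$. The lemma as stated is the instance $C = \SigI{s}{x : A}{B}$.

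Under this generalization the \textsc{Conversion} case closes immediately: its subderivation types $v$ at some $C'$ with $C' \simeq C$, and transitivity of $\simeq$ gives $C' \simeq \SigI{s}{x : A}{B}$, so the induction hypothesis applies verbatim. The \textsc{Var} case is vacuous since the program context is empty and no variable is in scope. For every other case I would read off the type in the conclusion of the rule and argue that its head constructor differs from $\SigI$, hence by confluence it cannot be convertible to $\SigI{s}{x : A}{B}$. Concretely, \textsc{Explicit-Lam} and the send rules conclude with $\Pi$-types, \textsc{Explicit-Pair} concludes with an \emph{explicit} sigma type, while \textsc{Return}, \textsc{Fork}, \textsc{Close}, and \textsc{Wait} conclude with monadic types $\CM{\cdot}$, \textsc{Channel-CH} and \textsc{Channel-HC} with channel types $\CH{\cdot}$ or $\HC{\cdot}$, and \textsc{UnitVal}, \textsc{True}, \textsc{False} with $\unit$ or $\Bool$. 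Since the value grammar admits applications only in the thunk shapes $\appR{\sendR{c}}{v}$ and $\appI{\sendI{c}}{m}$, whose function components have monadic codomains, the \textsc{Explicit-App} and \textsc{Implicit-App} conclusions are again headed by $\CM{\cdot}$; all of these are therefore contradictory. The one surviving case is \textsc{Implicit-Pair}, where $v = \pairI{m}{n}{t}$ with type $\SigI{t}{x : A}{B} \simeq \SigI{s}{x : A}{B}$; \Cref{corollary:inj-implicit-sig} forces $t = s$, yielding exactly $v = \pairI{m}{v_2}{s}$.

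The main obstacle here is organizational rather than mathematical: one must set up the convertibility-indexed generalization so the \textsc{Conversion} step threads through the induction, and one must appeal to confluence to exclude each mismatched head constructor. The latter rests on the observation that the relevant type formers ($\Pi$, $\Sigma$, $\CM{\cdot}$, $\CH{\cdot}$, $\HC{\cdot}$, $\unit$, $\Bool$) are preserved at the head under reduction, so two types with distinct heads can share no common reduct. Once the injectivity corollaries of confluence are available, each case reduces to a one-line head-constructor discrimination, and the restriction to value forms keeps the list of rules to check small.
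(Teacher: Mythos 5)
Your proposal is correct and matches the paper's (unelaborated) proof, which simply states that these canonical forms lemmas are proved by induction on the typing derivation of the value; your convertibility-indexed generalization to handle the \textsc{Conversion} case and the appeal to confluence-derived injectivity to discriminate head constructors are exactly the standard way to carry that induction out, and \Cref{corollary:inj-implicit-sig} closes the \textsc{Implicit-Pair} case as you describe. The only nitpick is that your enumeration omits a few value forms (the \textsf{recv} thunks and $\bsf{let}$-thunks), but these are all typed at $\CM{\cdot}$ and are dispatched by the same monadic-head argument you already give.
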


\begin{lemma}\label[lemma]{lemma:program-unit-canonical}
  If ${\Theta ; \epsilon ; \epsilon \vdash v : \unit}$ then $v = \ii$.
\end{lemma}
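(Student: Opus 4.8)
The plan is to prove this canonical-forms statement by a case analysis on the shape of the value $v$, exploiting the fact that each syntactic form of value is introduced at a \emph{canonical} type whose head constructor is distinct from $\unit$. Concretely, I would first generalize the statement slightly to: if $\Theta ; \epsilon ; \epsilon \vdash v : A$ with $A \simeq \unit$ and $v$ a value, then $v = \ii$. This generalization lets the induction on the typing derivation absorb every use of \textsc{Conversion}: in that case the premise supplies $\Theta ; \epsilon ; \epsilon \vdash v : A'$ with $A' \simeq A \simeq \unit$, and transitivity of $\simeq$ feeds the induction hypothesis directly.

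For the remaining cases I would read off, from the last typing rule, the canonical type of $v$ and compare it with $\unit$. The $\ii$ case (rule \textsc{UnitVal}) is immediate. Every other value form is introduced at a type whose head is not $\unit$: the lambdas $\lamR{t}{x : A}{m}$ and $\lamI{t}{x : A}{m}$ and the partial sends $\sendR{c}$, $\sendI{c}$ have $\Pi$-types; the pairs $\pairR{u}{v'}{t}$, $\pairI{m}{v'}{t}$ have $\Sigma$-types; $\bTrue$ and $\bFalse$ have type $\Bool$; channels $c$ have type $\CH{A}$ or $\HC{A}$; and the monadic thunks---$\return{v'}$, $\letin{x}{m}{n}$, $\fork{x : A}{m}$, $\recvR{c}$, $\recvI{c}$, $\close{c}$, $\wait{c}$, and the applied sends $\appR{\sendR{c}}{v'}$, $\appI{\sendI{c}}{m}$---all have a type of the form $\CM{\cdots}$. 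Here the value grammar is essential: it is what forces an application-shaped value to be an applied send (hence of monadic type) rather than an arbitrary $\appR{m}{n}$ whose result type could be anything.

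In each such case the canonical type $B$ satisfies $B \simeq \unit$ by assumption, so it remains to derive a contradiction from the convertibility of $\unit$ with a $\Pi$-, $\Sigma$-, $\CM{\cdot}$-, $\CH{\cdot}$-, $\HC{\cdot}$-, or $\Bool$-headed type. This is a head-constructor-distinctness fact, and it follows from the confluence theorem in the same way as the injectivity corollaries (\Cref{corollary:inj-sort} through \Cref{corollary:inj-hc}): since $\unit$ is already a normal form under parallel reduction and each competing type former parallel-reduces only to a term with the same head, confluence yields a common reduct that would have to carry two different heads, which is impossible. I would package this as a small auxiliary observation that $\unit$ is not convertible to any type with one of these heads. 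Alternatively, for the linear formers $\CM{\cdot}$, $\CH{\cdot}$, $\HC{\cdot}$ one can instead invoke \Cref{theorem:sort-uniqueness}, since those inhabit $\Ln$ while $\unit$ inhabits $\Un$.

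The main obstacle is precisely this non-convertibility bookkeeping: there are many value forms, and while each individual contradiction is routine once confluence is in hand, the argument depends on already having the Church--Rosser property and on the value grammar pinning down the head symbol of every value. No genuinely new idea is needed beyond confluence and the inversion/injectivity infrastructure established earlier; the work is in the careful enumeration.
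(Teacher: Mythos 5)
Your proposal is correct and matches the paper's (one-line) proof, which likewise proceeds by induction on the typing derivation of the value; you have simply filled in the standard details: generalizing to $A \simeq \unit$ to absorb \textsc{Conversion}, using the value grammar to pin each value form to a canonical head type, and discharging the non-$\ii$ cases via confluence-based head-constructor distinctness (or sort uniqueness for the linear formers).
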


\begin{lemma}\label[lemma]{lemma:program-bool-canonical}
  If ${\Theta ; \epsilon ; \epsilon \vdash v : \Bool}$ then
  $v = \bTrue$ or $v = \bFalse$.
\end{lemma}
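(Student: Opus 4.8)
The plan is to proceed by induction on the derivation of $\Theta ; \epsilon ; \epsilon \vdash v : \Bool$, exactly as in the preceding canonical forms lemmas. Because $v$ is a value, the last rule of the derivation is either the \textsc{Conversion} rule or one of the introduction rules matching a value form. If that rule is \textsc{True} or \textsc{False} we are done immediately, since then $v = \bTrue$ or $v = \bFalse$. To make the \textsc{Conversion} case tractable, I would strengthen the statement being inducted on to read: if $\Theta ; \epsilon ; \epsilon \vdash v : A$ and $A \simeq \Bool$, then $v = \bTrue$ or $v = \bFalse$. Under this generalization the \textsc{Conversion} case becomes routine: from $v : A'$ with $A' \simeq A$ and $A \simeq \Bool$, transitivity of $\simeq$ gives $A' \simeq \Bool$, and the induction hypothesis applies directly.

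The remaining work is to refute every non-boolean value form. Consulting the value grammar, each such form is typed at a fixed type former: $\lambda$-abstractions (\textsc{Explicit-Lam}, \textsc{Implicit-Lam}) and the partial senders $\sendR{c}$, $\sendI{c}$ receive $\Pi$-types; pairs (\textsc{Explicit-Pair}, \textsc{Implicit-Pair}) receive $\Sigma$-types; a channel $c$ receives a channel type $\CH{\cdot}$ or $\HC{\cdot}$ (\textsc{Channel-CH}, \textsc{Channel-HC}); the unit value $\ii$ receives $\unit$ (\textsc{UnitVal}); and $\return{v}$ together with every thunk (\textsc{Fork}, the receives, \textsc{Close}, \textsc{Wait}, \textsc{Bind}) receives a monad type $\CM{\cdot}$. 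In each case the strengthened hypothesis forces this type former to be convertible to $\Bool$, which I would show is impossible.

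The main obstacle, and essentially the only nonroutine step, is establishing these head-incompatibility facts for $\Bool$. They follow from confluence by precisely the argument underlying the injectivity corollaries (cf.\ \Cref{corollary:inj-sort}): if, say, a $\Pi$-type were convertible to $\Bool$, confluence would furnish a common parallel reduct; but $\Bool$ is in normal form and reduces only to $\Bool$, while the parallel reduction rules preserve the top-level constructor of every type former, so a $\Pi$-, $\Sigma$-, channel-, $\unit$-, or $\CM{\cdot}$-headed type can never reduce to $\Bool$. Hence no common reduct exists, yielding the required contradiction in each case. Since these non-convertibility observations are not stated as named corollaries in the preceding development, I would discharge them as a brief auxiliary remark appealing to confluence, after which the induction closes with only the \textsc{True} and \textsc{False} cases surviving.
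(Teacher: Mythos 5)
Your proposal is correct and matches the paper's approach: the paper discharges all of the canonical forms lemmas, including this one, with a one-line ``by induction on the typing derivation of the value,'' and the details you supply (generalizing the statement to types convertible to $\Bool$ to handle \textsc{Conversion}, then ruling out every other value form because its type former cannot be convertible to $\Bool$ by confluence, exactly as in \Cref{corollary:inj-sort}) are the standard way to make that induction go through. The only nuance worth noting is that the applied senders $\appR{\sendR{c}}{v}$ and $\appI{\sendI{c}}{o}$ are thunks typed by the application rules, so showing they receive a $\CM{\cdot}$-type takes one extra inversion on the type of $\sendR{c}$, but this is routine and does not affect the argument.
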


\begin{lemma}\label[lemma]{lemma:program-monad-canonical}
  If ${\Theta ; \epsilon ; \epsilon \vdash v : \CM{A}}$ then $v = \return{u}$ or $v$ is a thunk.
\end{lemma}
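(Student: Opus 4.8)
The plan is to prove this canonical-forms lemma by induction on the derivation of $\Theta ; \epsilon ; \epsilon \vdash v : \CM{A}$, exactly as for the preceding canonical-forms lemmas (\Cref{lemma:program-explicit-fun-canonical} through \Cref{lemma:program-bool-canonical}). To make the \textsc{Conversion} case go through smoothly, I would first slightly generalize the statement: I prove that whenever $v$ is a value and $\Theta ; \epsilon ; \epsilon \vdash v : C$ with $C \simeq \CM{A}$ for some $A$, then $v = \return{u}$ or $v$ is a thunk. The generalized hypothesis lets the \textsc{Conversion} case close immediately by transitivity of $\simeq$, leaving only the rules that directly assign a type to a value as genuine cases.

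The positive cases are the rules that introduce a monadic value. If the last rule is \textsc{Return}, then $v = \return{u}$ and the first disjunct holds. The thunk-introducing rules---\textsc{Fork}, \textsc{Close}, \textsc{Wait}, the four receive rules, together with \textsc{Explicit-App} and \textsc{Bind}---each land in the second disjunct once I consult the value grammar: the only value headed by an application is $\appR{\sendR{c}}{v'}$ and the only value headed by a \bsf{let} is $\letin{x}{\tau}{n}$, both of which are thunks $\tau$ by definition. Thus every rule whose subject can be a value and whose conclusion is convertible to $\CM{A}$ yields either a return or a thunk.

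The remaining work is to rule out the value forms whose natural types are not monadic: channels $c$ (typed by \textsc{Channel-CH}/\textsc{Channel-HC} at $\CH{B}$ or $\HC{B}$), the $\lambda$-abstractions (at $\Pi$-types), the pairs (at $\Sigma$-types), $\ii$, $\bTrue$, $\bFalse$ (at $\unit$ and $\Bool$), and the bare send operators $\sendR{c}$, $\sendI{c}$ (at $\Pi$-types). In each case the conclusion type has a head constructor distinct from $\mcC$, so I must show it cannot be convertible to $\CM{A}$. This is where confluence is essential: since parallel reduction preserves the outermost constructor of each of these formers, two types built from different head constructors share no common reduct and hence are not related by $\simeq$; this is the same reasoning underlying the injectivity corollaries (e.g.\ \Cref{corollary:inj-monad}), now used for non-confusion \emph{across} constructors rather than injectivity \emph{within} one. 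I expect this cross-constructor non-confusion argument---and the care needed to read off the exact value shape in the \textsc{Explicit-App} and \textsc{Bind} cases---to be the only non-routine part; everything else is a direct appeal to the value grammar and the induction hypothesis.
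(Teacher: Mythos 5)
Your proposal is correct and follows essentially the same route as the paper, which proves all the canonical-forms lemmas by a one-line induction on the typing derivation of the value; your elaboration (generalizing over $\simeq$ for the \textsc{Conversion} case, reading off thunk shapes from the value grammar, and using confluence for cross-constructor non-confusion) is exactly what that induction unfolds to. One small enumeration slip: alongside \textsc{Explicit-App} you also need \textsc{Implicit-App}, since $\appI{\sendI{c}}{o}$ is a thunk typed at a monadic type, but the identical argument applies.
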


\begin{theorem}[Program Progress]\label[theorem]{theorem:program-progress}
  If ${\Theta ; \epsilon ; \epsilon \vdash m : A}$, then $m$ is a value or there exists $m'$ such that $m \Leadsto m'$.
\end{theorem}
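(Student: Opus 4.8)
The plan is to proceed by induction on the derivation of $\Theta ; \epsilon ; \epsilon \vdash m : A$, with a case analysis on the last typing rule applied. Since the logical and program contexts are both empty, the \textsc{Var} rule cannot have concluded the derivation (it requires a nonempty context), so that case is vacuous; all variables that could appear are therefore channels, handled by \textsc{Channel-CH} and \textsc{Channel-HC}, whose subjects $c$ are immediately values. The \textsc{Conversion} case is also immediate: the subject term is unchanged and the statement of progress does not mention the type, so the induction hypothesis applies verbatim.

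For the introduction forms I would argue directly that the subject is a value, stepping a subterm by the relevant congruence rule only when forced. Thus $\lambda$-abstractions, $\ii$, $\bTrue$, $\bFalse$, and $\fork{x : A}{m}$ are values outright (the fork being a thunk). For $\return{m}$ and the pair formers I apply the induction hypothesis to the component that must be a value: if it reduces, the whole term reduces by \textsc{Step-Return} (resp. the pair congruence rules), and otherwise the term already matches the definition of a value.

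For the elimination forms the argument follows the classical progress pattern, driven by the canonical-forms lemmas. For $\appR{m}{n}$ I apply the induction hypothesis to $m$; if $m$ steps I use \textsc{Step-Explicit-App$_1$}, and otherwise \Cref{lemma:program-explicit-fun-canonical} forces $m$ to be either a $\lambda$-abstraction or a partially applied $\sendR{c}$. In the first case I reduce $n$ if it is reducible and $\beta$-reduce once it is a value; in the second, $\appR{\sendR{c}}{v}$ is itself a thunk, hence a value. The cases $\appI{m}{n}$, $\SigElim{[z]C}{m}{[x,y]n}$, $\boolElim{[z]A}{m}{n_1}{n_2}$, and $\letin{x}{m}{n}$ are analogous, using respectively \Cref{lemma:program-implicit-fun-canonical}, \Cref{lemma:program-sig-canonical} together with \Cref{lemma:program-implicit-sig-canonical}, \Cref{lemma:program-bool-canonical}, and \Cref{lemma:program-monad-canonical}. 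For the communication eliminators $\recvR{m}$, $\recvI{m}$, $\close{m}$, $\wait{m}$ and the partially applied sends $\sendR{m}$, $\sendI{m}$, I step the channel subterm when it is reducible, and otherwise invoke the straightforward canonical form for channel types (the only value inhabiting $\CH{A}$ or $\HC{A}$ is a literal channel $c$), whereupon the enclosing term is a thunk.

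The conceptual subtlety that distinguishes this from a textbook progress proof — and the step I expect to require the most care — is the treatment of the monadic and communication forms, many of which are deliberately classified as thunks and therefore as \emph{values} even though no program-level rule fires on them: $\fork{x : A}{m}$, $\recvR{c}$, $\close{c}$, $\wait{c}$, the saturated $\appR{\sendR{c}}{v}$, and, most delicately, $\letin{x}{\tau}{n}$ when the bound term has reduced to a thunk. The crux is to verify, in each elimination case, that the canonical form handed back by the relevant lemma is precisely one for which either a reduction rule applies or the surrounding expression is a thunk. The $\letin{x}{m}{n}$ case is where this matters most: when $m$ reduces to a thunk, the whole \bsf{let}-expression must be recognized as a thunk value awaiting \emph{process}-level reduction, rather than mistakenly reported as a stuck term — it is exactly this classification that makes the program-level progress statement hold in the presence of concurrency.
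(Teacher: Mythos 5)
Your proposal is correct and follows essentially the same route as the paper's proof: induction on the typing derivation, the canonical-forms lemmas (\Cref{lemma:program-explicit-fun-canonical}--\Cref{lemma:program-monad-canonical}) for the elimination forms, and the observation that thunked concurrency primitives (including $\appR{\sendR{c}}{v}$ and $\letin{x}{\tau}{n}$ with a thunk-bound term) count as values rather than stuck terms. If anything, your treatment of the \textsc{Bind} and session-operator cases is slightly more explicit than the paper's, which compresses these to ``$m$ is a thunked computation and thus a value.''
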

\begin{proof}
  By induction on the derivation of ${\Theta ; \epsilon ; \epsilon \vdash m : A}$.
  We present the following cases.

\textbf{Case} (\textsc{Var}): Impossible since the context is empty.

\textbf{Case} (\textsc{Explicit-Lam}): Trivial since $\lamR{t}{x : A}{m_0}$ is a value.

\textbf{Case} (\textsc{Explicit-App}):
  By the induction hypothesis we have that either $m$ is a value or there exists $m'$ such that $m \Leadsto m'$.
  If $m \Leadsto m'$, then we are done by \textsc{Step-Explicit-App$_1$}.
  If $m$ is a value, by \Cref{lemma:program-explicit-fun-canonical} we have two sub-cases:
  (1) $m = \lamR{t}{x : A}{m_0}$ and (2) $m = \sendR{u}$.

  In sub-case (1) $m = \lamR{t}{x : A}{m_0}$, by the induction hypothesis on $n$ we have
  that either $n$ is a value or there exists $n'$ such that $n \Leadsto n'$.
  If $n \Leadsto n'$, then we are done by \textsc{Step-Explicit-App$_2$}.
  If $n$ is a value, then we are done by \textsc{Step-Explicit-$\beta$}.

  In sub-case (2) $m = \sendR{u}$, by the induction hypothesis on $n$ we have
  that either $n$ is a value or there exists $n'$ such that $n \Leadsto n'$.
  If $n \Leadsto n'$, then we are done by \textsc{Step-Explicit-App$_2$}.
  If $n$ is a value $v$, then we are done as $\sendR{u}\ v$ is a value.

\textbf{Case} (\textsc{Explicit-Pair}):
  By assumption we have 
  ${\Theta_1 ; \Gamma ; \Delta_1 \vdash m_1 : A}$ and 
  ${\Theta_2 ; \Gamma ; \Delta_2 \vdash m_2 : B[m_1/x]}$.
  From the induction hypothesis we have that either $m_1$ is a value or there exists $m_1'$ such that $m_1 \Leadsto m_1'$.
  If $m_1 \Leadsto m_1'$, then we are done by \textsc{Step-Explicit-Pair$_1$}.
  If $m_1$ is a value $u$, then by the induction hypothesis on $m_2$ we have
  that either $m_2$ is a value or there exists $m_2'$ such that $m_2 \Leadsto m_2'$.
  If $m_2 \Leadsto m_2'$, then we are done by \textsc{Step-Explicit-Pair$_2$}.
  If $m_2$ is a value $v$, then we are done since $\pairR{u}{v}{t}$ is a value.

\textbf{Case} (\textsc{Explicit-SumElim}):
  By the induction hypothesis we have that either $m$ is a value or there exists $m'$ such that $m \Leadsto m'$.
  If $m \Leadsto m'$, then we are done by \textsc{Step-SumElim$_1$}.
  If $m$ is a value, by \Cref{lemma:program-sig-canonical} we have $m = \pairR{u}{v}{t}$ for some $u, v$ and $t$.
  We are done by \textsc{Step-Explicit-PairElim}.

\textbf{Case} (\textsc{BoolElim}):
  By the induction hypothesis we have that either $m$ is a value or there exists $m'$ such that $m \Leadsto m'$.
  If $m \Leadsto m'$, then we are done by \textsc{Step-BoolElim$_1$}.
  If $m$ is a value, by \Cref{lemma:program-bool-canonical} we have two sub-cases:
  (1) $m = \bTrue$ and (2) $m = \bFalse$.
  In sub-case (1) $m = \bTrue$, we are done by \textsc{Step-TrueElim}.
  In sub-case (2) $m = \bFalse$, we are done by \textsc{Step-FalseElim}.

\textbf{Case} (\textsc{Return}):
  By the induction hypothesis we have that either $m$ is a value or there exists $m'$ such that $m \Leadsto m'$.
  If $m \Leadsto m'$, then we are done by \textsc{Step-Return}.
  If $m$ is a value, then $\return{m}$ is a value.

\textbf{Case} (\textsc{Bind}):
  By the induction hypothesis we have that either $m$ is a value or there exists $m'$ such that $m \Leadsto m'$.
  If $m \Leadsto m'$, then we are done by \textsc{Step-Bind}.
  If $m$ is a value, then $\letin{x}{m}{n}$ is a value.

For the session typing rules, the term $m$ is a thunked computation and thus a value.
\end{proof}

\subsection{Session Fidelity}\label{appendix:fidelity}
The session fidelity property ensures that well-typed processes 
will adhere to the protocols specified by their session types during execution.
To prove this property, we first must prove that structural congruence preserves typing.

\begin{lemma}[Congruence]\label[lemma]{lemma:congruence}
  Given $\Theta \Vdash P$ and $P \equiv Q$, then $\Theta \Vdash Q$.  
\end{lemma}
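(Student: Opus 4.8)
The plan is to proceed by induction on the derivation of $P \equiv Q$. Since $\equiv$ is the least congruence relation generated by the six listed axioms, the induction splits into three kinds of cases: the generating axioms themselves, the equivalence-closure rules (reflexivity, symmetry, transitivity), and the compatibility rules that push $\equiv$ under $\cdot \mid \cdot$ and $\scope{cd}{\cdot}$. Because $\equiv$ is symmetric, each generating axiom must be discharged in \emph{both} directions; concretely, for each axiom I will show the two sides are typed under the same channel context and that a derivation of one can be rearranged into a derivation of the other.

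First I would record the algebraic facts about the partial merge operator $\dotcup$ that the proof consumes: it is commutative, associative, and has $\epsilon$ as a unit, and it distributes over channel-context extension in the sense that $(\Theta_1, c \tL \CH{A}, d \tL \HC{A}) \dotcup \Theta_2 = (\Theta_1 \dotcup \Theta_2), c \tL \CH{A}, d \tL \HC{A}$ whenever $c, d$ do not occur in $\Theta_2$. All of these follow directly from the symmetric shape of the \textsc{Merge} rules. I will also use that the process rules \textsc{Par} and \textsc{Scope} are syntax-directed, hence admit immediate inversion: a derivation of $\Theta \Vdash P_1 \mid P_2$ splits $\Theta = \Theta_1 \dotcup \Theta_2$ with $\Theta_i \Vdash P_i$, and a derivation of $\Theta \Vdash \scope{cd}{P}$ yields some protocol $A$ with $\Theta, c \tL \CH{A}, d \tL \HC{A} \Vdash P$.

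With these in hand the bulk of the generating axioms are routine. Commutativity and associativity of parallel composition follow from the corresponding properties of $\dotcup$ after inverting \textsc{Par}. For the unit law $P \mid \proc{\return{\ii}} \equiv P$, inverting \textsc{Expr} on the terminated process and applying \Cref{lemma:program-inversion-return} and then \Cref{lemma:program-inversion-unit} shows its channel context is forced to be $\epsilon$, so $\Theta \dotcup \epsilon = \Theta$ settles both directions. The channel-symmetry axiom $\scope{cd}{P} \equiv \scope{dc}{P}$ and the scope-swap axiom $\scope{cd}{\scope{c'd'}{P}} \equiv \scope{c'd'}{\scope{cd}{P}}$ are handled by exchange on the channel context: invert \textsc{Scope}, reorder the two (respectively four) channel bindings, and reapply \textsc{Scope}, using that \textsc{Scope} restricts a dual pair and is therefore insensitive to the order in which its endpoints are bound. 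The equivalence-closure and compatibility cases are immediate from the induction hypotheses, the latter using the inversion principles above.

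The main obstacle is the scope-extrusion axiom $\scope{cd}{P} \mid Q \equiv \scope{cd}{(P \mid Q)}$. Reading left to right, inverting \textsc{Par} and then \textsc{Scope} gives $\Theta_1, c \tL \CH{A}, d \tL \HC{A} \Vdash P$ and $\Theta_2 \Vdash Q$ with $\Theta = \Theta_1 \dotcup \Theta_2$; to re-form $\scope{cd}{(P \mid Q)}$ via \textsc{Par} followed by \textsc{Scope}, I must invoke the distributivity fact above, which requires the bound channels $c, d$ to be absent from $\Theta_2$. Securing this freshness side condition --- justified by the Barendregt convention on bound channel names, so that $c, d$ may be assumed $Q$-fresh and hence $\Theta_2$-fresh --- is the crux: without it the two candidate merges disagree and \textsc{Scope} cannot be reapplied. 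The right-to-left direction is symmetric and uses the same freshness hypothesis to peel $c, d$ back out of the merged context before splitting off $Q$.
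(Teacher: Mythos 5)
Your proof is correct, and every substantive ingredient matches the paper's: inversion on the syntax-directed \textsc{Par} and \textsc{Scope} rules, commutativity/associativity/unit of $\dotcup$, forcing the context of $\proc{\return{\ii}}$ to $\epsilon$ via \Cref{lemma:program-inversion-return} and \Cref{lemma:program-inversion-unit}, exchange on the channel context for $\scope{cd}{P} \equiv \scope{dc}{P}$ and scope swapping, and the freshness of the bound channels $c, d$ with respect to the other component's context as the crux of scope extrusion. The one organizational difference is the choice of induction: the paper inducts on the typing derivation $\Theta \Vdash P$ and distributes the congruence axioms as sub-cases of the \textsc{Expr}, \textsc{Par}, and \textsc{Scope} typing cases (so that, e.g., the two directions of scope extrusion land in different typing cases), whereas you induct on the derivation of $P \equiv Q$ and handle each axiom in both directions at once. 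Your setup is arguably the cleaner one, since it makes explicit the closure under reflexivity, symmetry, transitivity, and compatibility with $\cdot \mid \cdot$ and $\scope{cd}{\cdot}$ --- obligations the paper's stated induction scheme leaves implicit (transitivity in particular is not naturally a sub-case of an induction on the typing derivation; it is discharged by two appeals to the induction hypothesis on the congruence derivation, exactly as your scheme provides). Neither approach requires any lemma the other does not, so the choice is a matter of bookkeeping rather than mathematical content.
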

\begin{proof}
  By induction on the derivation of $\Theta \Vdash P$ and case analysis on the congruence relation.

\textbf{Case} (\textsc{Expr}): 
  It suffices to consider the case
  ${\proc{m} \equiv (\proc{m} \mid \proc{\return{\ii}})}$, 
  i.e. $Q = \proc{m} \mid \proc{\return{\ii}}$.

  Since $\Theta \Vdash \proc{m}$ and $\epsilon \Vdash \proc{\return{\ii}}$, we have
  $\Theta \dotcup \epsilon \Vdash \proc{m} \mid \proc{\return{\ii}}$ by \textsc{Par}.

\textbf{Case} (\textsc{Par}):
  \begin{mathpar}
    \inferrule[Par]
    { \Theta_1 \Vdash P \\ \Theta_2 \Vdash Q }
    { \Theta_1 \dotcup \Theta_2 \Vdash P \mid Q }
  \end{mathpar}
  By case analysis on the congruence relation we have the following sub-cases:
  \begin{enumerate}
    \item $P \mid Q \equiv Q \mid P$
    \item $P \mid (Q_1 \mid Q_2) \equiv (P \mid Q_1) \mid Q_2$
    \item $P \mid \proc{\return{\ii}} \equiv P$
    \item $\scope{cd}{P} \mid Q \equiv \scope{cd}{(P \mid Q)}$
  \end{enumerate}

  In sub-case (1), by \textsc{Par} we have $\Theta_2 \dotcup \Theta_1 \Vdash Q \mid P$.
  By the commutativity of $\dotcup$, we have $\Theta_1 \dotcup \Theta_2 \Vdash Q \mid P$ which
  concludes this sub-case.

  In sub-case (2), we have $\Theta_2 \Vdash Q_1 \mid Q_2$. By inversion on its typing derivation, we have
  $\Theta_{21} \Vdash Q_1$ and $\Theta_{22} \Vdash Q_2$ such that $\Theta_2 = \Theta_{21} \dotcup \Theta_{22}$.
  By \textsc{Par} we have $\Theta_1 \dotcup \Theta_{21}  \Vdash P \mid Q_1$.
  By \textsc{Par} again we have $(\Theta_1 \dotcup \Theta_{21}) \dotcup \Theta_{22} \Vdash (P \mid Q_1) \mid Q_2$.
  By the associativity of $\dotcup$, we have $\Theta_1 \dotcup (\Theta_{21} \dotcup \Theta_{22}) \Vdash (P \mid Q_1) \mid Q_2$.
  By substituting $\Theta_2$, we have $\Theta_1 \dotcup \Theta_2 \Vdash (P \mid Q_1) \mid Q_2$ which concludes this sub-case.

  In sub-case (3), we have $Q = \proc{\return{\ii}}$.
  By assumption we have $\Theta_2 \Vdash \proc{\return{\ii}}$.
  By inversion on its typing derivation, we have $\Theta_2 ; \epsilon ; \epsilon \vdash \return{\ii} : \CM{\unit}$.
  By \Cref{lemma:program-inversion-return} we have $\Theta_2 ; \epsilon ; \epsilon \vdash \ii : \unit$.
  By \Cref{lemma:program-inversion-unit} we have $\Theta_2 = \epsilon$ and $\epsilon \triangleright \Un$.
  Thus we have $\Theta_1 \dotcup \Theta_2 \Vdash P$.

  In sub-case (4), we have $\Theta_1 \Vdash \scope{cd}{P}$ and $\Theta_2 \Vdash Q$ by assumption.
  By inversion on the typing derivation of $\Theta_1 \Vdash \scope{cd}{P}$, we have
  $\Theta_1, c \tL \CH{A}, d \tL \HC{A} \Vdash P$ for some protocol $A$.
  By \textsc{Par} we have $\Theta_1, c \tL \CH{A}, d \tL \HC{A} \dotcup \Theta_2 \Vdash P \mid Q$.
  Since $c$ and $d$ are not in $\Theta_2$, we have $(\Theta_1 \dotcup \Theta_2), c \tL \CH{A}, d \tL \HC{A} \Vdash P \mid Q$.
  By \textsc{Scope} we have $\Theta_1 \dotcup \Theta_2 \Vdash \scope{cd}{(P \mid Q)}$ which concludes this sub-case.

\textbf{Case} (\textsc{Scope}):
  \begin{mathpar}
    \inferrule[Scope]
    { \Theta, c \tL \CH{A}, d \tL \HC{A} \Vdash P }
    { \Theta \Vdash \scope{cd}{P} }
  \end{mathpar}

  By case analysis on the congruence relation we have the following sub-cases:
  \begin{enumerate}
    \item $\scope{cd}{(P \mid Q)} \equiv \scope{cd}{P} \mid Q$
    \item $\scope{cd}{P} \equiv \scope{dc}{P}$
    \item $\scope{cd}{\scope{c'd'}{P}} \equiv \scope{c'd'}{\scope{cd}{P}}$
  \end{enumerate}
  
  In sub-case (1), we have $\Theta, c \tL \CH{A}, d \tL \HC{A} \Vdash P \mid Q$ by assumption.
  By inversion on its typing derivation, there exists $\Theta_1$ and $\Theta_2$ such that
  $\Theta = \Theta_1 \dotcup \Theta_2$ and $\Theta_1, c \tL \CH{A}, d \tL \HC{A} \Vdash P$ and $\Theta_2 \Vdash Q$.
  Channels $c$ and $d$ must be distributed the typing judgment of $P$ as
  they are linear and do not appear in $Q$. Applying \textsc{Scope} to $\Theta_1, c \tL \CH{A}, d \tL \HC{A} \Vdash P$ we have
  $\Theta_1 \Vdash \scope{cd}{P}$. By \textsc{Par} we have $\Theta_1 \dotcup \Theta_2 \Vdash \scope{cd}{P} \mid Q$
  which concludes this sub-case.

  In sub-case (2), we have $\Theta, c \tL \CH{A}, d \tL \HC{A} \Vdash P$ by assumption.
  By exchange of the context, we have $\Theta, d \tL \HC{A}, c \tL \CH{A} \Vdash P$.
  By \textsc{Scope} we have $\Theta \Vdash \scope{dc}{P}$ which concludes this sub-case.

  In sub-case (3), we have $\Theta, c \tL \CH{A}, d \tL \HC{A}, c' \tL \CH{A'}, d' \tL \HC{A'} \Vdash P$ by assumption.
  By exchange, we have $\Theta, c' \tL \CH{A'}, d' \tL \HC{A'}, c \tL \CH{A}, d \tL \HC{A} \Vdash P$.
  Applying \textsc{Scope} twice we have $\Theta \Vdash \scope{c'd'}{\scope{cd}{P}}$ which concludes this sub-case.
\end{proof}

To pull terms out of evaluation contexts, we need the following lemma.
\begin{lemma}\label[lemma]{lemma:eval-context}
  If ${\Theta ; \epsilon ; \epsilon \vdash \mcM[m] : \CM{B}}$, then there exists $\Theta_1, \Theta_2, A$ 
  such that ${\Theta_1 ; \epsilon ; \epsilon \vdash m : \CM{A}}$ and $\Theta = \Theta_1 \dotcup \Theta_2$.
  For any $\Theta_3, n$ such that ${\Theta_3 ; \epsilon ; \epsilon \vdash n : \CM{A}}$ and
  $\Theta_2 \dotcup \Theta_3$ is well-defined, we have
  $\Theta_2 \dotcup \Theta_3 ; \epsilon ; \epsilon \vdash \mcM[n] : \CM{B}$.
\end{lemma}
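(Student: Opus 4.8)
The plan is to prove this by structural induction on the evaluation context $\mcM$, which by definition has exactly two shapes: the hole $[\cdot]$ and a monadic bind $\letin{x}{\mcM'}{m'}$. Before entering the induction I would record one auxiliary fact: applying \Cref{theorem:program-type-validity} to the hypothesis $\Theta; \epsilon; \epsilon \vdash \mcM[m] : \CM{B}$ gives $\epsilon \vdash \CM{B} : s''$ for some sort, and then \Cref{lemma:logical-inversion-monad} yields $\epsilon \vdash B : s$. I keep this in hand because it is exactly the well-sortedness premise of the \textsc{Bind} rule, which I will need whenever I reassemble a bind.

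For the base case $\mcM = [\cdot]$, the statement is immediate: take $\Theta_1 = \Theta$, $\Theta_2 = \epsilon$, and $A = B$. The decomposition $\Theta_1; \epsilon; \epsilon \vdash m : \CM{A}$ is the hypothesis itself, and the plug property holds because $\Theta_2 \dotcup \Theta_3 = \Theta_3$ and $\mcM[n] = n$. For the inductive case $\mcM = \letin{x}{\mcM'}{m'}$, so that $\mcM[m] = \letin{x}{\mcM'[m]}{m'}$, I first invert the bind via \Cref{lemma:program-inversion-bind}, obtaining a type $A'$, a split $\Theta = \Theta_a \dotcup \Theta_b$, derivations $\Theta_a; \epsilon; \epsilon \vdash \mcM'[m] : \CM{A'}$ and $\Theta_b; x : A' ; x \ty{s} A' \vdash m' : \CM{B}$, and crucially the dependency condition $x \notin \FV{B}$. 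Applying the induction hypothesis to the first of these yields $\Theta_1, \Theta_2', A$ with $\Theta_1; \epsilon; \epsilon \vdash m : \CM{A}$, with $\Theta_a = \Theta_1 \dotcup \Theta_2'$, and with the plug property for $\mcM'$. I then set $\Theta_2 = \Theta_2' \dotcup \Theta_b$, so that $\Theta = \Theta_1 \dotcup \Theta_2$ after reassociating $\dotcup$, which discharges the first conclusion.

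For the plug property of $\mcM$, suppose $\Theta_3; \epsilon; \epsilon \vdash n : \CM{A}$ with $\Theta_2 \dotcup \Theta_3$ well-defined. Well-definedness of this merge guarantees that of $\Theta_2' \dotcup \Theta_3$, so I may feed $\Theta_3$ into the inner plug property to obtain $\Theta_2' \dotcup \Theta_3; \epsilon; \epsilon \vdash \mcM'[n] : \CM{A'}$. Since the hole's monadic type $\CM{A'}$ is unchanged, the continuation derivation $\Theta_b; x : A' ; x \ty{s} A' \vdash m' : \CM{B}$ is reused verbatim, and together with the recorded $\epsilon \vdash B : s$ and $x \notin \FV{B}$ I apply \textsc{Bind} to conclude $(\Theta_2' \dotcup \Theta_3) \dotcup \Theta_b; \epsilon; \epsilon \vdash \mcM[n] : \CM{B}$. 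By commutativity and associativity of $\dotcup$ this context equals $\Theta_2 \dotcup \Theta_3$, as required.

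The main obstacle is not the recursion, which is routine, but the fact that reassembling the bind requires the continuation's result type $\CM{B}$ to be insensitive to whatever fills the hole. This is precisely what the dependency restriction $x \notin \FV{B}$ built into the \textsc{Bind} rule supplies (the very point emphasized in the session fidelity discussion): because $\mcM'$ preserves the monadic type $\CM{A'}$ when $m$ is swapped for $n$, and because $B$ cannot mention the bound variable $x$, the derivation for $m'$ transfers without modification. The only remaining care is bookkeeping of the channel-context merges --- verifying at each step that the relevant instances of $\dotcup$ remain well-defined and lining up the final context by the algebraic laws of $\dotcup$.
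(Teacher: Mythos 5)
Your proposal is correct and follows essentially the same route as the paper's proof: structural induction on $\mcM$, with the base case discharged by $\Theta_1 = \Theta$, $\Theta_2 = \epsilon$, $A = B$, and the bind case handled by \Cref{lemma:program-inversion-bind}, the induction hypothesis, and reassembly via \textsc{Bind} using $x \notin \FV{B}$. Your explicit tracking of the well-sortedness premise $\epsilon \vdash B : s$ for \textsc{Bind} is a detail the paper leaves implicit, but it does not change the argument.
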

\begin{proof}
  By induction on the structure of $\mcM$.

\noindent
\textbf{Case} ($\mcM = [\cdot]$):
  Trivial by choosing $\Theta_1 = \Theta$, $\Theta_2 = \epsilon$ and $A = B$.

\noindent
\textbf{Case} ($\mcM = \letin{x}{\mcN}{n}$):
  We have the typing judgment $${\Theta ; \epsilon ; \epsilon \vdash \letin{x}{\mcN[m]}{n} : \CM{B}}$$

  \noindent
  By \Cref{lemma:program-inversion-bind} there exists $\Theta_1', \Theta_2', A'$ such that
  \begin{align*}
    & \Theta_1' ; \epsilon ; \epsilon \vdash \mcN[m] : \CM{A'} \\
    & \Theta_2' ; x : A' ; x :_r A' \vdash n : \CM{B} \\
    & \Theta = \Theta_1' \dotcup \Theta_2' \qquad x \notin \FV{B}
  \end{align*}

  \noindent
  By the induction hypothesis on $\Theta_1' ; \epsilon ; \epsilon \vdash \mcN[m] : \CM{A'}$,
  there exists $\Theta_{11}', \Theta_{12}', A''$ such that
  \begin{align*}
    & \Theta_{11}' ; \epsilon ; \epsilon \vdash m : A'' \\
    & \Theta_1' = \Theta_{11}' \dotcup \Theta_{12}'
  \end{align*}
  and for any $\Theta_3, n'$ such that ${\Theta_3 ; \epsilon ; \epsilon \vdash n' : A''}$ and
  $\Theta_{12}' \dotcup \Theta_3$ is well-defined, we have
  $$\Theta_{12}' \dotcup \Theta_3 ; \epsilon ; \epsilon \vdash \mcN[n'] : \CM{A'}$$

  \noindent
  By choosing $\Theta_1 = \Theta_{11}'$, $\Theta_2 = \Theta_{12}' \dotcup \Theta_2'$ and $A = A''$,
  we have $\Theta_1 ; \epsilon ; \epsilon \vdash m : A$ and $\Theta = \Theta_1 \dotcup \Theta_2$.

  \noindent
  For any $\Theta_3, n'$ such that ${\Theta_3 ; \epsilon ; \epsilon \vdash n' : A}$ and
  $\Theta_2 \dotcup \Theta_3$ is well-defined, we know that $\Theta_{12}' \dotcup \Theta_3$ is well-defined,
  which means that $$\Theta_{12}' \dotcup \Theta_3 ; \epsilon ; \epsilon \vdash \mcN[n'] : \CM{A'}$$

  \noindent
  Now applying \textsc{Bind} we have
  $$\Theta_2 \dotcup \Theta_3 ; \epsilon ; \epsilon \vdash \letin{x}{\mcN[n']}{n} : \CM{B}$$
  which concludes this case.
\end{proof}

\begin{theorem}[Session Fidelity]\label{theorem:session-fidelity}
  If $\Theta \Vdash P$ and $P \Rrightarrow Q$, then $\Theta \Vdash Q$.
\end{theorem}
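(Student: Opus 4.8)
The plan is to induct on the derivation of $P \Rrightarrow Q$ and case split on the final process reduction rule. The structural rules are routine. For \textsc{Proc-Expr} I would invert \textsc{Expr} to get $\Theta ; \epsilon ; \epsilon \vdash m : \CM{\unit}$, apply program subject reduction (\Cref{theorem:program-subject-reduction}) to $m \Leadsto m'$, and re-apply \textsc{Expr}. For \textsc{Proc-Par} and \textsc{Proc-Scope} I would invert the matching typing rule, pass the sub-derivation to the induction hypothesis, and rebuild with \textsc{Par} or \textsc{Scope}. For \textsc{Proc-Congr} I would invoke the Congruence lemma (\Cref{lemma:congruence}) on $P \equiv P'$ and on $Q' \equiv Q$, wrapping the induction hypothesis applied to $P' \Rrightarrow Q'$.

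The real work is in the four communication rules. For \textsc{Proc-Fork} I would use the evaluation-context lemma (\Cref{lemma:eval-context}) to extract $\Theta_1 ; \epsilon ; \epsilon \vdash \fork{x : \CH{A'}}{m} : \CM{\HC{A'}}$ with $\Theta = \Theta_1 \dotcup \Theta_2$, then apply fork inversion (\Cref{lemma:program-inversion-fork}) to recover the body typing with $x : \CH{A'}$ in scope. Substituting the fresh child endpoint $d \tL \CH{A'}$ for $x$ types the child $\proc{m[d/x]}$, while plugging $\return{c}$ with $c \tL \HC{A'}$ back into $\mcN$ (the second half of \Cref{lemma:eval-context}) types the parent $\proc{\mcN[\return{c}]}$. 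Reassembling with \textsc{Par} yields the context $\Theta, c \tL \HC{A'}, d \tL \CH{A'}$; since the polarities are swapped relative to \textsc{Scope}, I would apply \textsc{Scope} to $\scope{dc}{P}$ and then conclude via the congruence $\scope{cd}{P} \equiv \scope{dc}{P}$ together with \Cref{lemma:congruence}. The \textsc{Proc-End} case is analogous but simpler: inverting \textsc{Scope} and \textsc{Par} isolates the endpoints, close/wait inversion (\Cref{lemma:program-inversion-close}, \Cref{lemma:program-inversion-wait}) confirms $c \tL \CH{\End}$ and $d \tL \HC{\End}$ occupy singleton channel contexts, and replacing $\close{c}$ and $\wait{d}$ with $\return{\ii} : \CM{\unit}$ eliminates both channels so the residual contexts recombine under \textsc{Par} to exactly $\Theta$.

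The crux is \textsc{Proc-Com}. After inverting \textsc{Scope} and \textsc{Par}, the sender and receiver are typed under $\Theta_S$ and $\Theta_R$ with $\Theta_S \dotcup \Theta_R = \Theta, c \tL \CH{\ActR{!}{x : A}{B}}, d \tL \HC{\ActR{!}{x : A}{B}}$. Applying \Cref{lemma:eval-context} on each side, then send inversion (\Cref{lemma:program-inversion-explicit-send}) and receive inversion (\Cref{lemma:program-inversion-explicit-recv}), exposes the hole types $\CM{\CH{B[v/x]}}$ and $\CM{\SigR{\Ln}{x : A}{\HC{B}}}$ along with the message typing $\Theta_v ; \epsilon ; \epsilon \vdash v : A$. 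The decisive observation is that the message's channel context $\Theta_v$ must \emph{migrate} from sender to receiver: after the step, $\return{c}$ fills the sender hole using only $c$ at the continuation protocol $\CH{B[v/x]}$, dropping $\Theta_v$, whereas $\return{\pairR{v}{d}{\Ln}}$ fills the receiver hole using $\Theta_v$ together with $d$ at $\HC{B[v/x]}$. Crucially, the receiver hole type $\CM{\SigR{\Ln}{x : A}{\HC{B}}}$ is \emph{identical} before and after the step even though $d$'s protocol changes, because the per-message dependency is absorbed into the $\Sigma$-type; combined with the non-dependency of \textsc{Bind}, this lets the second half of \Cref{lemma:eval-context} apply verbatim. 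Re-scoping with the continuation protocol $B[v/x]$ then recovers $\Theta$, and I would use \Cref{theorem:lifting} to obtain $\epsilon \vdash v : A$ and hence $\epsilon \vdash B[v/x] : \Proto$, justifying the continuation channel types.

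I expect the channel-context migration to be the main obstacle: one must verify $\Theta_v \dotcup \Theta_{S2} \dotcup \Theta_{R2} = \Theta$ so that every $\dotcup$-merge on the right-hand side is well-defined and the outer context is preserved \emph{literally}, not merely up to convertibility. The ghost case \textsc{Proc-\underline{Com}} is genuinely easier here: the implicit message $o$ is typed logically ($\Gamma \vdash o : A$) and therefore carries the empty program-level channel context, so no migration takes place and only the continuation protocol $B[o/x]$ on $c$ and $d$ is updated.
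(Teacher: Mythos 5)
Your proposal follows essentially the same route as the paper's proof: induction on the reduction derivation, the evaluation-context lemma to isolate the redex, the send/receive inversion lemmas, and the migration of the message's channel context $\Theta_v$ from sender to receiver, with the non-dependency of \textsc{Bind} guaranteeing the hole types are unchanged across the step. Your identification of the context-migration accounting as the crux, and your observation that the ghost case is easier because $o$ is typed logically and carries no channels, both match the paper exactly (your detour through $\scope{cd}{P}\equiv\scope{dc}{P}$ in the fork case is a harmless pedantic variant of the paper's direct application of \textsc{Scope}).
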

\begin{proof}
  By induction on the derivation of $P \Rrightarrow Q$ and case analysis on the typing judgment.

\noindent
\textbf{Case} (\textsc{Proc-Fork}):
  \begin{align*}
    \proc{\mcN[\fork{x : A}{m}]}
    \Rrightarrow
    \scope{cd}{(\proc{\mcN[\return{c}]} \mid \proc{m[d/x]})} 
  \end{align*}
  By inversion on $\Theta \Vdash \proc{\mcN[\fork{x : A}{m}]}$ we have
  $$\Theta ; \epsilon ; \epsilon \vdash \mcN[\fork{x : A}{m}] : \CM{\unit}$$

  \noindent
  By \Cref{lemma:eval-context} we have $\Theta_1, \Theta_2, B$ such that
  $$
    \Theta_1 ; \epsilon ; \epsilon \vdash \fork{x : A}{m} : \CM{B}
  $$
  and $\Theta = \Theta_1 \dotcup \Theta_2$ and for any $\Theta_3, n$ such that
  $\Theta_3 ; \epsilon ; \epsilon \vdash n : \CM{B}$
  and $\Theta_2 \dotcup \Theta_3$ is well-defined, we have
  $\Theta_2 \dotcup \Theta_3 ; \epsilon ; \epsilon \vdash \mcN[n] : \CM{\unit}$.

  \noindent
  Applying \Cref{lemma:program-inversion-fork} to
  $\Theta_1 ; \epsilon ; \epsilon \vdash \fork{x : A}{m} : \CM{B}$
  we have
  $${\Theta_1 ; x : \CH{A'} ; x :_t \CH{A'} \vdash m : \CM{\unit}}$$
  and $A = \CH{A'}$ and $\CM{B} \simeq \CM{\HC{A'}}$.
  By the validity of context ${\Theta_2 ; x : \CH{A'} ; x :_t \CH{A'} \vdash}$ we have
  $\epsilon \vdash \CH{A'} : t$. By \textsc{ChType} and \Cref{theorem:sort-uniqueness} we have $t = \Ln$.
  Applying \Cref{corollary:inj-monad} to $\CM{B} \simeq \CM{\HC{A'}}$ we have $B \simeq \HC{A'}$.

  \noindent
  By \textsc{Channel-HC}, for some fresh channel $c$ we have
  $c \tL \HC{A'} ; \epsilon ; \epsilon \vdash c : \HC{A'}$.

  \noindent
  By \textsc{Channel-HC}, for some fresh channel $d$ we have
  $d \tL \CH{A'} ; \epsilon ; \epsilon \vdash d : \CH{A'}$.

  \noindent
  By \textsc{Return} we have $c \tL \HC{A'} ; \epsilon ; \epsilon \vdash \return{c} : \CM{\HC{A'}}$.

  \noindent
  Since $c$ is fresh, we have
  ${\Theta_2, c \tL \HC{A'} ; \epsilon ; \epsilon \vdash \mcN[\return{c}] : \CM{\unit}}$.

  \noindent
  By \Cref{lemma:program-subst-explicit} we have ${\Theta_1, d \tL \CH{A'} ; \epsilon ; \epsilon  \vdash m[d/x] : \CM{\unit}}$.

  \noindent
  By \textsc{Expr} we have ${\Theta_1, d \tL \CH{A'} \Vdash \proc{m[d/x]}}$ and
  ${\Theta_2, c \tL \HC{A'} \Vdash \proc{\mcN[\return{c}]}}$.

  \noindent
  By \textsc{Par} we have
  ${\Theta_1, d \tL \CH{A'} \dotcup \Theta_2, c \tL \HC{A'} \Vdash \proc{\mcN[\return{c}]} \mid \proc{m[d/x]}}$.

  \noindent
  By \textsc{Scope} we have
  ${\Theta_1 \dotcup \Theta_2 \Vdash \scope{cd}{(\proc{\mcN[\return{c}]} \mid \proc{m[d/x]})}}$
  which concludes this case.

\noindent
\textbf{Case} (\textsc{Proc-End}):
  \begin{align*}
    \scope{cd}{(
      \proc{\mcM[\close{c}]} 
      \mid 
      \proc{\mcN[\wait{d}]}
    )}
    \Rrightarrow 
    \proc{\mcM[\return{\ii}]} \mid \proc{\mcN[\return{\ii}]} 
  \end{align*}
  By inversion on 
  $\Theta \Vdash \scope{cd}{(\proc{\mcM[\appR{\sendR{c}}{v}]} \mid \proc{\mcN[\recvR{d}]})} $ 
  we have either
  \begin{enumerate}
    \item $\Theta, c \tL \CH{A}, d \tL \HC{A} \Vdash \proc{\mcM[\close{c}]} \mid \proc{\mcN[\wait{d}]}$
    \item $\Theta, c \tL \HC{A}, d \tL \CH{A} \Vdash \proc{\mcM[\close{c}]} \mid \proc{\mcN[\wait{d}]}$
  \end{enumerate}

  In sub-case (2), by inversion on its typing derivation we have
  ${\Theta_1, c \tL \HC{A} \Vdash \proc{\mcM[\close{c}]}}$ and
  ${\Theta_2, d \tL \CH{A} \Vdash \proc{\mcN[\wait{d}]}}$ such that ${\Theta = \Theta_1 \dotcup \Theta_2}$.

  \noindent
  By inversion on $\Theta_1, c \tL \HC{A} \Vdash \proc{\mcM[\close{c}]}$ we have
  $$\Theta_1, c \tL \HC{A} ; \epsilon ; \epsilon \vdash \mcM[\close{c}] : \CM{\unit}$$

  \noindent
  By \Cref{lemma:eval-context} we have
  $\Theta_{11}, c \tL \HC{A} ; \epsilon ; \epsilon \vdash \close{c} : \CM{B}$.

  \noindent
  By \Cref{lemma:program-inversion-close} we have $\Theta_{11}, c \tL \HC{A} ; \epsilon ; \epsilon \vdash c : \CH{\End}$
  which is a contradiction since $c$ cannot be $\CH{\End}$ in this context. Thus this sub-case is impossible.

  In sub-case (1), by inversion on its typing derivation we have
  ${\Theta_1, c \tL \CH{A} \Vdash \proc{\mcM[\close{c}]}}$ and
  ${\Theta_2, d \tL \HC{A} \Vdash \proc{\mcN[\wait{d}]}}$ such that ${\Theta = \Theta_1 \dotcup \Theta_2}$.

  \noindent
  By inversion on $\Theta_1, c \tL \CH{A} \Vdash \proc{\mcM[\close{c}]}$ we have
  $$\Theta_1, c \tL \CH{A} ; \epsilon ; \epsilon \vdash \mcM[\close{c}] : \CM{\unit}$$

  \noindent
  By inversion on $\Theta_2, d \tL \HC{A} \Vdash \proc{\mcN[\wait{d}]}$ we have
  $$\Theta_2, d \tL \HC{A} ; \epsilon ; \epsilon \vdash \mcN[\wait{d}] : \CM{\unit}$$

  \noindent
  By \Cref{lemma:program-inversion-bind} we have
  $$
    \Theta_{11}, c \tL \CH{A} ; \epsilon ; \epsilon \vdash \close{c} : \CM{B_1}
    \text{ and } {\Theta_1 = \Theta_{11} \dotcup \Theta_{12}}
  $$
  By \Cref{lemma:program-inversion-bind} we have
  $$
    \Theta_{21}, d \tL \HC{A} ; \epsilon ; \epsilon \vdash \wait{d} : \CM{B_2}
    \text{ and } {\Theta_2 = \Theta_{21} \dotcup \Theta_{22}}
  $$

  \noindent
  Applying \Cref{lemma:program-inversion-close} on 
  $\Theta_{11}, c \tL \CH{A} ; \epsilon ; \epsilon \vdash \close{c} : \CM{B_1}$ gives us
  $$
    \Theta_{11}, c \tL \CH{A} ; \epsilon ; \epsilon \vdash c : \CH{\End}
    \quad\text{and}\quad
    \CM{B_1} \simeq \CM{\unit}
  $$
  which implies $A \simeq \End$, $B_1 \simeq \unit$, and $\Theta_{11} = \epsilon$.

  \noindent
  Applying \Cref{lemma:program-inversion-wait} on
  $\Theta_{21}, d \tL \HC{A} ; \epsilon ; \epsilon \vdash \wait{d} : \CM{B_2}$ gives us
  $$
    \Theta_{21}, d \tL \HC{A} ; \epsilon ; \epsilon \vdash d : \HC{\End}
    \quad\text{and}\quad
    \CM{B_2} \simeq \CM{\unit}
  $$
  which implies $A \simeq \End$, $B_2 \simeq \unit$ and $\Theta_{21} = \epsilon$.

  \noindent
  Since $\Theta_{11} = \epsilon$ and $\Theta_{21} = \epsilon$, we have
  $$
    \Theta_{11} ; \epsilon ; \epsilon \vdash \return{\ii} : \CM{\unit}
    \quad\text{and}\quad
    \Theta_{21} ; \epsilon ; \epsilon \vdash \return{\ii} : \CM{\unit}
  $$

  Applying the evaluation contexts $\mcM$ and $\mcN$, we have
  $$\Theta_{11} \dotcup \Theta_{12} ; \epsilon ; \epsilon \vdash \mcM[\return{\ii}] : \CM{\unit}$$
  and
  $$\Theta_{21} \dotcup \Theta_{22} ; \epsilon ; \epsilon \vdash \mcN[\return{\ii}] : \CM{\unit}$$

  \noindent
  Applying \textsc{Expr} and \textsc{Par}, we have
  $$
    (\Theta_{11} \dotcup \Theta_{12}) \dotcup (\Theta_{21} \dotcup \Theta_{22}) \Vdash 
    \proc{\mcM[\return{\ii}]} \mid \proc{\mcN[\return{\ii}]}
  $$
  which concludes this case since $(\Theta_{11} \dotcup \Theta_{12}) \dotcup (\Theta_{21} \dotcup \Theta_{22}) = \Theta$.

\noindent
\textbf{Case} (\textsc{Proc-Com}):
  \begin{align*}
    \scope{cd}{(
      \proc{\mcM[\appR{\sendR{c}}{v}]} 
      \mid 
      \proc{\mcN[\recvR{d}]}
    )}
    \Rrightarrow 
    \scope{cd}{(
      \proc{\mcM[\return{c}]} 
      \mid 
      \proc{\mcN[\return{\pairR{v}{d}{\Ln}}]}
    )}
  \end{align*}
  By inversion on $\Theta \Vdash \scope{cd}{(\proc{\mcM[\appR{\sendR{c}}{v}]} \mid \proc{\mcN[\recvR{d}]})}$ we have either
  \begin{enumerate}
    \item $\Theta, c \tL \CH{A}, d \tL \HC{A} \Vdash \proc{\mcM[\appR{\sendR{c}}{v}]} \mid \proc{\mcN[\recvR{d}]}$
    \item $\Theta, c \tL \HC{A}, d \tL \CH{A} \Vdash \proc{\mcM[\appR{\sendR{c}}{v}]} \mid \proc{\mcN[\recvR{d}]}$
  \end{enumerate}

  In sub-case (1), by inversion on its typing derivation we have
  ${\Theta_1, c \tL \CH{A} \Vdash \proc{\mcM[\appR{\sendR{c}}{v}]}}$ and
  ${\Theta_2, d \tL \HC{A} \Vdash \proc{\mcN[\recvR{d}]}}$ such that ${\Theta = \Theta_1 \dotcup \Theta_2}$.

  \noindent
  By inversion on $\Theta_1, c \tL \CH{A} \Vdash \proc{\mcM[\appR{\sendR{c}}{v}]}$ we have
  $$\Theta_1, c \tL \CH{A} ; \epsilon ; \epsilon \vdash \mcM[\appR{\sendR{c}}{v}] : \CM{\unit}$$

  \noindent
  By inversion on $\Theta_2, d \tL \HC{A} \Vdash \proc{\mcN[\recvR{d}]}$ we have
  $$\Theta_2, d \tL \HC{A} ; \epsilon ; \epsilon \vdash \mcN[\recvR{d}] : \CM{\unit}$$

  \noindent
  Applying \Cref{lemma:eval-context} to 
  $\Theta_1, c \tL \CH{A} ; \epsilon ; \epsilon \vdash \mcM[\appR{\sendR{c}}{v}] : \CM{\unit}$
  we have
  $$
    \Theta_{11}, c \tL \CH{A} ; \epsilon ; \epsilon \vdash \appR{\sendR{c}}{v} : \CM{B_1}
    \text{ and } {\Theta_1 = \Theta_{11} \dotcup \Theta_{12}}
  $$

  \noindent
  Applying \Cref{lemma:eval-context} to
  $\Theta_2, d \tL \HC{A} ; \epsilon ; \epsilon \vdash \mcN[\recvR{d}] : \CM{\unit}$
  we have
  $$
    \Theta_{21}, d \tL \HC{A} ; \epsilon ; \epsilon \vdash \recvR{d} : \CM{B_2}
    \text{ and } {\Theta_2 = \Theta_{21} \dotcup \Theta_{22}}
  $$

  \noindent
  Applying \Cref{lemma:program-inversion-explicit-app} on
  $\Theta_{11}, c \tL \CH{A} ; \epsilon ; \epsilon \vdash \appR{\sendR{c}}{v} : \CM{B_1}$
  then there exists $A', B_1'$ such that
  \begin{align*}
    & \Theta_{111}, c \tL \CH{A} ; \epsilon ; \epsilon \vdash \sendR{c} : \PiR{t}{x : A'}{B_1'} \\
    & \Theta_{112} ; \epsilon ; \epsilon \vdash v : A' \\
    & \CM{B_1} \simeq B_1'[v/x] \quad\text{and}\quad \Theta_{11} = \Theta_{111} \dotcup \Theta_{112}
  \end{align*}

  \noindent
  Applying \Cref{lemma:program-inversion-explicit-send} on
  $\Theta_{111}, c \tL \CH{A} ; \epsilon ; \epsilon \vdash \sendR{c} : \PiR{t}{x : A'}{B_1'}$
  gives us either
  \begin{enumerate}[(a)]
    \item $\PiR{t}{x : A'}{B_1'} \simeq \PiR{\Ln}{A''}{\CM{\CH{B_1''}}}$ and
          $\Theta_{111}, c \tL \CH{A} ; \epsilon ; \epsilon \vdash c : \CH{\ActR{!}{x : A''}{B_1''}}$.
    \item $\PiR{t}{x : A'}{B_1'} \simeq \PiR{\Ln}{A''}{\CM{\HC{B_1''}}}$ and
          $\Theta_{111}, c \tL \CH{A} ; \epsilon ; \epsilon \vdash c : \HC{\ActR{?}{x : A''}{B_1''}}$.
  \end{enumerate}
  In (b), we have a contradiction since $c$ cannot be $\HC{\ActR{?}{x : A''}{B_1''}}$ in this context.
  Thus (b) is impossible. In (a), we have $A' \simeq A''$ and $B_1' \simeq \CM{\CH{B_1''}}$ by \Cref{corollary:inj-explicit-fun}.
  Additionally, we have $A \simeq \ActR{!}{x : A''}{B_1''}$ and $B_1 \simeq \CH{B_1''[v/x]}$ and $\Theta_{111} = \epsilon$.

  \noindent
  Applying \Cref{lemma:program-inversion-explicit-recv} on
  $\Theta_{21}, d \tL \HC{A} ; \epsilon ; \epsilon \vdash \recvR{d} : \CM{B_2}$
  gives us either
  \begin{enumerate}[(a)]
    \item $\CM{B_2} \simeq \CM{\SigR{\Ln}{x : A'''}{\CH{B_2'}}}$ and
          $\Theta_{21}, d \tL \HC{A} ; \epsilon ; \epsilon \vdash d : \CH{\ActR{?}{x : A'''}{B_2'}}$.
    \item $\CM{B_2} \simeq \CM{\SigR{\Ln}{x : A'''}{\HC{B_2'}}}$ and
          $\Theta_{21}, d \tL \HC{A} ; \epsilon ; \epsilon \vdash d : \HC{\ActR{!}{x : A'''}{B_2'}}$.
  \end{enumerate}
  In (a), we have a contradiction since $d$ cannot be $\CH{\ActR{?}{x : A'''}{\CH{B_2'}}}$ in this context.
  Thus (a) is impossible. In (b), we have $B_2 \simeq \SigR{\Ln}{x : A'''}{\HC{B_2'}}$ by \Cref{corollary:inj-monad}.
  Additionally, we have $A \simeq \ActR{!}{x : A'''}{B_2'}$ and $\Theta_{21} = \epsilon$.

  \noindent
  Since $A \simeq \ActR{!}{x : A''}{B_1''}$ and $A \simeq \ActR{!}{x : A'''}{B_2'}$, by
  transitivity of $\simeq$ we have $$\ActR{!}{x : A''}{B_1''} \simeq \ActR{!}{x : A'''}{B_2'}$$
  By \Cref{corollary:inj-implicit-action} we have $A' \simeq A'' \simeq A'''$ and $B_1'' \simeq B_2'$. 

  \noindent
  By \textsc{Channel-CH}, we have $c \tL \CH{B_1''[v/x]} ; \epsilon ; \epsilon \vdash c : \CH{B_1''[v/x]}$.

  \noindent
  By \textsc{Channel-HC}, we have $d \tL \HC{B_1''[v/x]} ; \epsilon ; \epsilon \vdash d : \HC{B_1''[v/x]}$.

  \noindent
  Applying \textsc{Return} to ${c \tL \CH{B_1''[v/x]} ; \epsilon ; \epsilon \vdash c : \CH{B_1''[v/x]}}$
  we have $${c \tL \CH{B_1''[v/x]} ; \epsilon ; \epsilon \vdash \return{c} : \CM{\CH{B_1''[v/x]}}}$$

  \noindent
  Since there is $B_1 \simeq \CH{B_1''[v/x]}$, we can apply evaluation context $\mcM$ to create
  $$\Theta_{12}, c \tL \CH{B_1''[v/x]} ; \epsilon ; \epsilon \vdash \mcM[\return{c}] : \CM{\unit}$$

  \noindent
  Pairing with $v$ with $d$ using \textsc{Explicit-Pair} we have
  $$\Theta_{112}, d \tL \HC{B_1''[v/x]} ; \epsilon ; \epsilon \vdash \pairR{v}{d}{\Ln} : \SigR{\Ln}{x : A'}{\HC{B_1''}}$$

  \noindent
  Apply \textsc{Return} to 
  $\Theta_{112}, d \tL \HC{B_1''[v/x]} ; \epsilon ; \epsilon \vdash \pairR{v}{d}{\Ln} : \SigR{\Ln}{x : A'}{\HC{B_1''}}$
  we have
  $$\Theta_{112}, d \tL \HC{B_1''[v/x]} ; \epsilon ; \epsilon \vdash \return{\pairR{v}{d}{\Ln}} : \CM{\SigR{\Ln}{x : A'}{\HC{B_1''}}}$$

  \noindent
  Since $B_2 \simeq \SigR{\Ln}{x : A'''}{\HC{B_2'}} \simeq \SigR{\Ln}{x : A'}{\HC{B_1''}}$, we have
  $$\Theta_{112} \dotcup \Theta_{22}, d \tL \HC{B_1''[v/x]} ; \epsilon ; \epsilon \vdash \mcN[\return{\pairR{v}{d}{\Ln}}] : \CM{\unit}$$

  \noindent
  By \textsc{Expr} we have
  $$\Theta_{12}, c \tL \CH{B_1''[v/x]} \Vdash \proc{\mcM[\return{c}]}$$
  and
  $$\Theta_{112} \dotcup \Theta_{22}, d \tL \HC{B_1''[v/x]} \Vdash \proc{\mcN[\return{\pairR{v}{d}{\Ln}}]}$$

  \noindent
  By \textsc{Par} we have
  \begin{align*}
    (\Theta_{112} \dotcup \Theta_{12} \dotcup \Theta_{22}), c \tL \CH{B_1''[v/x]}, d \tL \HC{B_1''[v/x]} 
    \Vdash \proc{\mcM[\return{c}]} \mid \proc{\mcN[\return{\pairR{v}{d}{\Ln}}]}
  \end{align*}

  \noindent
  By \textsc{Scope} we have
  $${\Theta_{112} \dotcup \Theta_{12} \dotcup \Theta_{22} \Vdash 
    \scope{cd}{(\proc{\mcM[\return{c}]} \mid \proc{\mcN[\return{\pairR{v}{d}{\Ln}}]})}}$$

  \noindent
  Since $\Theta_{111} = \epsilon$, $\Theta_{21} = \epsilon$,
  we have 
  $\Theta = ((\Theta_{111} \dotcup \Theta_{112}) \dotcup \Theta_{12}) \dotcup (\Theta_{21} \dotcup \Theta_{22})$ which means
  $$\Theta \Vdash \scope{cd}{(\proc{\mcM[\return{c}]} \mid \proc{\mcN[\return{\pairR{v}{d}{\Ln}}]})}$$
  thus concluding this sub-case.

  For sub-case (2), the proof is similar to sub-case (1). The only difference is that the $\CH{\cdot}$ and
  $\HC{\cdot}$ types are swapped.

\noindent
\textbf{Case} (\textsc{Proc-\underline{Com}}):
  \begin{align*}
    \scope{cd}{(
      \proc{\mcM[\appI{\sendI{c}}{o}]} 
      \mid 
      \proc{\mcN[\recvI{d}]}
     )}
     \Rrightarrow 
     \scope{cd}{(
      \proc{\mcM[\return{c}]} 
      \mid 
      \proc{\mcN[\return{\pairI{o}{d}{\Ln}}]}
     )}
  \end{align*}
  The proof is similar to the previous case (\textsc{Proc-Com}) with the only difference being
  that explicit applications are replaced with implicit applications.

\noindent
\textbf{Case} (\textsc{Proc-Expr}):
\begin{mathpar}
  \inferrule
  { m \Leadsto m' }
  { \proc{m} \Rrightarrow \proc{m'} }
\end{mathpar}

  By inversion on $\Theta \Vdash \proc{m}$ we have $\Theta ; \epsilon ; \epsilon \vdash m : \CM{\unit}$.
  By \Cref{theorem:program-subject-reduction} we have $\Theta ; \epsilon ; \epsilon \vdash m' : \CM{\unit}$.
  By \textsc{Expr} we have $\Theta \Vdash \proc{m'}$ which concludes this case.

\noindent
\textbf{Case} (\textsc{Proc-Par}):
  \begin{mathpar}
    \inferrule
    { P \Rrightarrow Q }
    { O \mid P \Rrightarrow O \mid Q }
  \end{mathpar}
  By inversion on $\Theta \Vdash O \mid P$ we have
  $\Theta_1 \Vdash O$ and $\Theta_2 \Vdash P$ such that $\Theta = \Theta_1 \dotcup \Theta_2$.
  By the induction hypothesis we have $\Theta_2 \Vdash Q$.
  By \textsc{Par} we have $\Theta_1 \dotcup \Theta_2 \Vdash O \mid Q$ which concludes this case.

\noindent
\textbf{Case} (\textsc{Proc-Scope}):
  \begin{mathpar}
    \inferrule
    { P \Rrightarrow Q }
    { \scope{cd}{P} \Rrightarrow \scope{cd}{Q} }
  \end{mathpar}
  By inversion on $\Theta \Vdash \scope{cd}{P}$ we have either
  \begin{enumerate}
    \item $\Theta, c \tL \CH{A}, d \tL \HC{A} \Vdash P$
    \item $\Theta, c \tL \HC{A}, d \tL \CH{A} \Vdash P$
  \end{enumerate}

  In case (1), by the induction hypothesis we have
  $\Theta, c \tL \CH{A}, d \tL \HC{A} \Vdash Q$.

  \noindent
  By \textsc{Scope} we have $\Theta \Vdash \scope{cd}{Q}$ which concludes this sub-case.

  In case (2), by the induction hypothesis we have
  $\Theta, c \tL \HC{A}, d \tL \CH{A} \Vdash Q$.

  \noindent
  By \textsc{Scope} we have $\Theta \Vdash \scope{cd}{Q}$ which concludes this sub-case.

\noindent
\textbf{Case} (\textsc{Proc-Congr}):
  \begin{mathpar}
    \inferrule
    { P \equiv P' \\ P' \Rrightarrow Q' \\ Q' \equiv Q }
    { P \Rrightarrow Q }
  \end{mathpar}
  By \Cref{lemma:congruence} we have $\Theta \Vdash P'$.
  By the induction hypothesis we have $\Theta \Vdash Q'$.
  By \Cref{lemma:congruence} we have $\Theta \Vdash Q$ which concludes this case.
\end{proof}

\subsection{Global Progress}\label{appendix:progress}
The process level type system of \TLLC{} is insufficient to ensure that
arbitrary process configurations enjoy global progress. 
This is because cyclic channel topologies are also considered to be well-typed.
However, we can still prove a weaker form of progress for a class of configurations
we call \emph{reachable configurations}. Intuitively, a reachable configuration is one
that can be reached from a well-typed singleton process through \Fork{}-operations.

Formally, we define the structure of \emph{spawning trees} to capture the
spawning relationships between parent-to-children processes. 
This formalism is inspired by the \emph{nested-multiverse semantics} for
reasoning about probabilistic session types~\cite{das23,fu_prob25} and dominator trees from graph theory. 
\begin{center}
  \vspace{0.5em}
  \begin{tabular}{r L C L}
    spawning tree & \mcP, \mcQ & ::= & \Root{m}{ \{ ( c_i, \mcP_i ) \}_{i \in \mcI}, \{ \mcQ_j \}_{j \in \mcJ} } \\
                  &            & \;| & \Node{d}{m}{ \{ ( c_i, \mcP_i ) \}_{i \in \mcI}, \{ \mcQ_j \}_{j \in \mcJ} }
  \end{tabular}
  \vspace{0.5em}
\end{center}
Each tree is associated with a term $m$ that performs computation and a set of children processes
$\{ ( c_i, \mcP_i ) \}_{i \in \mcI}$ where $m$ communicates with each child process $\mcP_i$ through
channel $c_i$. It also contains a set of subtrees $\{ \mcQ_j \}_{j \in \mcJ}$ that are no longer in
communication with $m$ (i.e. they have been detached through \Close{}/\Wait{}-operations).
In the case of internal nodes, $d$ is the channel which $m$ uses to 
communicate with its parent process. 

The crucial ideal behind the spawning tree structure is that we are going to
define an alternative process semantics that operates on spawning trees. 
We will show that this alternative semantics can be simulated by the original process semantics.
Moreover, we will show that the spawning tree semantics enjoys global progress.
By defining reachable configurations as those that can be derived from
well-typed spawning trees, we can then prove that reachable configurations
enjoy global progress (induced by simulation). 

To make the typing rules of spawning trees easier to define, we first introduce 
the following notations for channel types and $\kappa \in \{ +, - \}$:
\begin{align*}
  \ch{+}{A} = \CH{A} && \neg{\ch{+}{A}} = \ch{-}{A} \\
  \ch{-}{A} = \HC{A} && \neg{\ch{-}{A}} = \ch{+}{A}
\end{align*}

We define the typing rules for 
spawning trees as follows:
\begin{mathpar}\small
  \inferrule
  { \overline{c_i \tL \ch{\kappa_i}{A_i}} ; \epsilon ; \epsilon \vdash m :\CM{\unit} \\
    \forall i \in \mcI,\ \neg{\ch{\kappa_i}{A_i}} \Vdash \mcP_i  \\
    \forall j \in \mcJ,\ \Vdash \mcQ_j }
  { \Vdash \Root{m}{ \{ ( c_i, \mcP_i ) \}_{i \in \mcI}, \{ \mcQ_j \}_{j \in \mcJ} } }
  \textsc{(Valid-Root)}
  \\
  \inferrule
  {  \overline{c_i \tL \ch{\kappa_i}{A_i}}, d \tL \ch{\kappa}{A} ; \epsilon ; \epsilon \vdash m :\CM{\unit} \\
    \forall i \in \mcI,\ \neg{\ch{\kappa_i}{A_i}} \Vdash \mcP_i \\ 
    \forall j \in \mcJ,\ \Vdash \mcQ_j }
  { \ch{\kappa}{A} \Vdash \Node{d}{m}{ \{ ( c_i, \mcP_i ) \}_{i \in \mcI}, \{ \mcQ_j \}_{j \in \mcJ} } }
  \textsc{(Valid-Node)}
\end{mathpar}
For the root node, we require that the term $m$ be well-typed in
channel context $\overline{c_i \tL \ch{\kappa_i}{A_i}}$ comprised of $\ch{\kappa_i}{A_i}$
channels connecting to its children $\mcP_i$.
The dual of each channel $\neg{\ch{\kappa_i}{A_i}}$ is propagated to type the corresponding
child $\neg{\ch{\kappa_i}{A_i}} \Vdash \mcP_i$.
When typing an internal node, we require that $m$ be well-typed in a channel context
that also includes $d \tL \ch{\kappa}{A}$, i.e. the channel connecting to its parent process.

We define the \emph{flattening} operation $| \mcP |$ that converts a spawning tree into
a standard process configuration. The operation is defined as follows:
\begin{mathpar}\small
  \inferrule[Flatten-Root]
  { \forall i \in \mcI,\ | \mcP_i | = (d_i, P_i) \\ 
    \forall j \in \mcJ,\ | \mcQ_j | = Q_j }
  { | \Root{m}{ \{ ( c_i, \mcP_i ) \}_{i \in \mcI}, \{ \mcQ_j \}_{j \in \mcJ} } | = 
    \scope{\overline{c_i d_i}}{(\proc{m} \mid \overline{P_i})}  
    \mid \overline{Q_j} }

  \inferrule[Flatten-Node]
  { \forall i \in \mcI,\ | \mcP_i | = (d_i, P_i) \\ 
    \forall j \in \mcJ,\ | \mcQ_j | = Q_j }
  { | \Node{d}{m}{ \{ ( c_i, \mcP_i ) \}_{i \in \mcI}, \{ \mcQ_j \}_{j \in \mcJ} } | = 
    (d, \scope{\overline{c_i d_i}}{(\proc{m} \mid \overline{P_i})} \mid \overline{Q_j}) }
\end{mathpar}
The flattening operation recursively flattens each child process $\mcP_i$ into
a channel-process pair $(d_i, P_i)$ and each subtree $\mcQ_j$ into a process configuration $Q_j$.
It then composes $m$ with all the sub-processes in parallel. 
The channel pairs ${c_i, d_i}$ are restricted to ensure proper channel scoping.

We now connect the validity of spawning trees to the well-typedness of flattened process configurations
through \Cref{lemma:flatten-valid}.
\begin{lemma}[Flatten Valid]\label[lemma]{lemma:flatten-valid}
  If $\Vdash \mcP$ and $| \mcP | = P$, then $\Vdash P$
  and
  if ${\ch{\kappa}{A} \Vdash \mcP}$ and $| \mcP | = (d, P)$, then $d \tL \ch{\kappa}{A} \Vdash P$.
\end{lemma}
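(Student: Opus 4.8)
The plan is to prove both implications simultaneously by mutual induction on the structure of the spawning tree $\mcP$, which coincides with induction on the mutually inductive validity derivation. The two cases correspond exactly to the two shapes of spawning trees, \textsc{Root} and \textsc{Node}, and each case invokes both induction hypotheses on the immediate subtrees: the first statement (root-validity yields a closed configuration) on the detached subtrees $\mcQ_j$, and the second statement (node-validity yields a single-channel configuration) on the children $\mcP_i$. The heart of the argument is reassembling the flattened configuration from the pieces supplied by the hypotheses using only \textsc{Expr}, \textsc{Par}, and \textsc{Scope}.

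For the root case $\mcP = \Root{m}{\{(c_i, \mcP_i)\}_{i \in \mcI}, \{\mcQ_j\}_{j \in \mcJ}}$ with $\Vdash \mcP$, inverting \textsc{Valid-Root} yields $\overline{c_i \tL \ch{\kappa_i}{A_i}} ; \epsilon ; \epsilon \vdash m : \CM{\unit}$ together with $\neg{\ch{\kappa_i}{A_i}} \Vdash \mcP_i$ for each child and $\Vdash \mcQ_j$ for each detached subtree. Writing $|\mcP_i| = (d_i, P_i)$ and $|\mcQ_j| = Q_j$ as in \textsc{Flatten-Root}, I apply the second induction hypothesis to obtain $d_i \tL \neg{\ch{\kappa_i}{A_i}} \Vdash P_i$ and the first to obtain $\Vdash Q_j$. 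The crucial observation is that node-validity delivers each $P_i$ typed under a context containing exactly one channel, its parent-channel $d_i$; this keeps every subsequent context merge disjoint and hence well-defined. I then rebuild: \textsc{Expr} gives $\overline{c_i \tL \ch{\kappa_i}{A_i}} \Vdash \proc{m}$, repeated \textsc{Par} folds in each $P_i$ (the linear channels $c_i, d_i$ are pairwise distinct, so each $\dotcup$ is defined), repeated \textsc{Scope} binds each dual pair $(c_i, d_i)$ to discharge $c_i \tL \ch{\kappa_i}{A_i}, d_i \tL \neg{\ch{\kappa_i}{A_i}}$, and a final round of \textsc{Par} attaches the closed subtrees $\overline{Q_j}$, yielding precisely $\Vdash |\mcP|$.

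The internal-node case $\mcP = \Node{d}{m}{\{(c_i, \mcP_i)\}_{i \in \mcI}, \{\mcQ_j\}_{j \in \mcJ}}$ with $\ch{\kappa}{A} \Vdash \mcP$ proceeds identically, the only difference being that inverting \textsc{Valid-Node} leaves the additional parent-channel binding $d \tL \ch{\kappa}{A}$ in the context of $m$. Because this binder is never scoped during reassembly, it survives all the \textsc{Par} and \textsc{Scope} steps and appears in the final context, giving exactly $d \tL \ch{\kappa}{A} \Vdash P$ as demanded by \textsc{Flatten-Node}.

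The step I expect to be the main obstacle is the orientation mismatch in \textsc{Scope}. That rule is stated for a pair $c \tL \CH{A}, d \tL \HC{A}$ in a fixed order, whereas when $\kappa_i = {-}$ the parent-view channel $c_i$ carries type $\HC{A_i}$ and the child-view channel $d_i$ carries $\CH{A_i}$, so the $\CH$/$\HC$ ends are swapped relative to the binder $\scope{\overline{c_i d_i}}{\cdot}$ produced by flattening. I would resolve this by appealing to the structural-congruence rule $\scope{c_i d_i}{P} \equiv \scope{d_i c_i}{P}$ together with \Cref{lemma:congruence} to normalize each binder into the orientation \textsc{Scope} expects. The remaining bookkeeping is routine given the disjoint single-channel contexts furnished by the induction hypothesis: every $\dotcup$ is defined because the $c_i$ and $d_i$ are fresh and distinct, and the shapes line up because a child typed by node-validity is necessarily \textsc{Node}-shaped while a detached subtree typed by root-validity is \textsc{Root}-shaped.
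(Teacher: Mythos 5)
Your proposal matches the paper's proof essentially step for step: mutual induction on the validity derivations, inversion of \textsc{Valid-Root}/\textsc{Valid-Node}, application of both induction hypotheses to the children and detached subtrees, and reassembly via \textsc{Expr}, repeated \textsc{Par}, and repeated \textsc{Scope}. Your explicit handling of the $\CH{\cdot}$/$\HC{\cdot}$ orientation in \textsc{Scope} via $\scope{cd}{P} \equiv \scope{dc}{P}$ and \Cref{lemma:congruence} addresses a detail the paper leaves implicit, but it does not change the structure of the argument.
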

\begin{proof}
  By mutual induction on the derivation of $\Vdash \mcP$ and $A \Vdash \mcP$.

\noindent
\textbf{Case} (\textsc{Valid-Root}):
  From \textsc{Flatten-Root}, we have
  $| \mcP_i | = (d_i, P_i)$ for each child process $\mcP_i$ and
  $| \mcQ_j | = Q_j$ for each subtree $\mcQ_j$.
  By the induction hypothesis, we have $d_i \tL \neg{\ch{\kappa_i}{A_i}} \Vdash P_i$ for each $i \in \mcI$
  and $\Vdash Q_j$ for each $j \in \mcJ$.

  \noindent
  From the premise of \textsc{Valid-Root}, we have
  $\overline{c_i \tL \ch{\kappa_i}{A_i}} ; \epsilon ; \epsilon \vdash m :\CM{\unit}$.

  \noindent
  By \textsc{Expr}, we have
  $\overline{c_i \tL \ch{\kappa_i}{A_i}} \Vdash \proc{m}$.

  \noindent
  By applying \textsc{Par} repeated, we have
  $\overline{c_i \tL \ch{\kappa_i}{A_i}}, \overline{d_i \tL \neg{\ch{\kappa_i}{A_i}}} \Vdash (\proc{m} \mid \overline{P_i})$.

  \noindent
  By applying \textsc{Scope} repeatedly, we have
  $\Vdash \scope{\overline{c_i d_i}}{(\proc{m} \mid \overline{P_i})}$.

  \noindent
  By applying \textsc{Par} repeatedly, we have
  $\Vdash \scope{\overline{c_i d_i}}{(\proc{m} \mid \overline{P_i})} \mid \overline{Q_j}$
  which concludes this case.

\textbf{Case} (\textsc{Valid-Node}):
  From \textsc{Flatten-Node}, we have for each child process $\mcP_i$, $| \mcP_i | = (d_i, P_i)$ and
  for each subtree $\mcQ_j$, $| \mcQ_j | = Q_j$.
  By the induction hypothesis, we have $d_i \tL \neg{\ch{\kappa_i}{A_i}} \Vdash P_i$ for each $i \in \mcI$
  and $\Vdash Q_j$ for each $j \in \mcJ$.

  \noindent
  From the premise of \textsc{Valid-Node}, we have
  $\overline{c_i \tL \ch{\kappa_i}{A_i}}, d \tL \ch{\kappa}{A} ; \epsilon ; \epsilon \vdash m :\CM{\unit}$.

  \noindent
  By \textsc{Expr}, we have
  $\overline{c_i \tL \ch{\kappa_i}{A_i}}, d \tL \ch{\kappa}{A} \Vdash \proc{m}$.

  \noindent
  By applying \textsc{Par} repeated, we have
  $\overline{c_i \tL \ch{\kappa_i}{A_i}}, \overline{d_i \tL \neg{\ch{\kappa_i}{A_i}}}, d \tL \ch{\kappa}{A} \Vdash 
    (\proc{m} \mid \overline{P_i})$.

  \noindent
  By applying \textsc{Scope} repeatedly, we have
  $d \tL \ch{\kappa}{A} \Vdash \scope{\overline{c_i d_i}}{(\proc{m} \mid \overline{P_i})}$.

  \noindent
  By applying \textsc{Par} repeatedly, we have
  ${d \tL \ch{\kappa}{A} \Vdash \scope{\overline{c_i d_i}}{(\proc{m} \mid \overline{P_i})} \mid \overline{Q_j}}$
  which concludes this case.
\end{proof}

We now define the spawning tree semantics through the following reduction rules:

\begin{mathpar}\small
  \inferrule[Root-Fork]
  { 
    \mcI' = \{ i \in \mcI | c_i \in \FC{m} \}
  }
  { 
    \Root{ 
      \mcN[\fork{x : A}{m}]
    }{ 
      \{ ( c_i, \mcP_i ) \}_{i \in \mcI}, 
      \{ \mcQ_j \}_{j \in \mcJ}
    } \\
    \quad\Rrightarrow
    \Root{
      \mcN[\return{c}]
    }{
      \{ ( c_i, \mcP_i ) \}_{i \in \mcI \setminus \mcI'} 
      \cup
      \{ (c, 
          \Node{d}{m[d/x]}{ 
            \{ ( c_{i'}, \mcP_{i'} ) \}_{i' \in \mcI'},
            \emptyset
          }) 
      \},
      \{ \mcQ_j \}_{j \in \mcJ}
    }
  }

  \inferrule[Node-Fork]
  { 
    \mcI' = \{ i \in \mcI | c_i \in \FC{m} \}
  }
  { 
    \Node{d}{ 
      \mcN[\fork{x : A}{m}]
    }{ 
      \{ ( c_i, \mcP_i ) \}_{i \in \mcI}, 
      \{ \mcQ_j \}_{j \in \mcJ}
    } \\
    \quad\Rrightarrow
    \Node{d}{
      \mcN[\return{c}]   
    }{
      \{ ( c_i, \mcP_i ) \}_{i \in \mcI \setminus \mcI'} 
      \cup
      \{ (c, 
          \Node{d}{m[d/x]}{ 
            \{ ( c_{i'}, \mcP_{i'} ) \}_{i' \in \mcI'},
            \emptyset
          }) 
      \},
      \{ \mcQ_j \}_{j \in \mcJ}
    }
  }
\end{mathpar}

\begin{mathpar}\small
  \inferrule[Root-Wait]
  { k \in \mcI \\ 
    \mcP_k = \Node{d_k}{
      \mcN[\close{d_k}]
    }{ 
      \{ ( c_{i_k}, \mcP_{i_k} ) \}_{i_k \in \mcI_k},
      \{ \mcQ_{j_k} \}_{j_k \in \mcJ_k}
    } \\
    \mcQ_{k} = 
    \Root{
      \mcN[\return{\ii}]
    }{ 
      \{ ( c_{i_k}, \mcP_{i_k} ) \}_{i_k \in \mcI_k},
      \{ \mcQ_{j_k} \}_{j_k \in \mcJ_k}
    }
  }
  { 
    \Root{
      \mcM[\wait{c_k}]
    }{ 
      \{ ( c_i, \mcP_i ) \}_{i \in \mcI},
      \{ \mcQ_j \}_{j \in \mcJ}
    } \\\\
    \quad\Rrightarrow
    \Root{
      \mcM[\return{\ii}]
    }{ 
      \{ ( c_i, \mcP_i ) \}_{i \in \mcI \setminus \{k\}},
      \{ \mcQ_j \}_{j \in \mcJ \cup \{k\}}
    }
  }

  \inferrule[Node-Wait]
  { k \in \mcI \\ 
    \mcP_k = \Node{d_k}{
      \mcN[\close{d_k}]
    }{ 
      \{ ( c_{i_k}, \mcP_{i_k} ) \}_{i_k \in \mcI_k},
      \{ \mcQ_{j_k} \}_{j_k \in \mcJ_k}
    } \\
    \mcQ_{k} = 
    \Root{
      \mcN[\return{\ii}]
    }{ 
      \{ ( c_{i_k}, \mcP_{i_k} ) \}_{i_k \in \mcI_k},
      \{ \mcQ_{j_k} \}_{j_k \in \mcJ_k}
    }
  }
  { 
    \Node{d}{
      \mcM[\wait{c_k}]
    }{ 
      \{ ( c_i, \mcP_i ) \}_{i \in \mcI},
      \{ \mcQ_j \}_{j \in \mcJ}
    } \\\\
    \quad\Rrightarrow
    \Node{d}{
      \mcM[\return{\ii}]
    }{ 
      \{ ( c_i, \mcP_i ) \}_{i \in \mcI \setminus \{k\}},
      \{ \mcQ_j \}_{j \in \mcJ \cup \{k\}}
    }
  }
  
  \inferrule[Root-Close]
  { k \in \mcI \\ 
    \mcP_k = \Node{d_k}{
      \mcN[\wait{d_k}]
    }{ 
      \{ ( c_{i_k}, \mcP_{i_k} ) \}_{i_k \in \mcI_k},
      \{ \mcQ_{j_k} \}_{j_k \in \mcJ_k}
    } \\
    \mcQ_{k} = 
    \Root{
      \mcN[\return{\ii}]
    }{ 
      \{ ( c_{i_k}, \mcP_{i_k} ) \}_{i_k \in \mcI_k},
      \{ \mcQ_{j_k} \}_{j_k \in \mcJ_k}
    }
  }
  { 
    \Root{
      \mcM[\close{c_k}]
    }{ 
      \{ ( c_i, \mcP_i ) \}_{i \in \mcI},
      \{ \mcQ_j \}_{j \in \mcJ}
    } \\\\
    \quad\Rrightarrow
    \Root{
      \mcM[\return{\ii}]
    }{ 
      \{ ( c_i, \mcP_i ) \}_{i \in \mcI \setminus \{k\}},
      \{ \mcQ_j \}_{j \in \mcJ \cup \{k\}}
    }
  }

  \inferrule[Node-Close]
  { k \in \mcI \\ 
    \mcP_k = \Node{d_k}{
      \mcN[\wait{d_k}]
    }{ 
      \{ ( c_{i_k}, \mcP_{i_k} ) \}_{i_k \in \mcI_k},
      \{ \mcQ_{j_k} \}_{j_k \in \mcJ_k}
    } \\
    \mcQ_{k} = 
    \Root{
      \mcN[\return{\ii}]
    }{ 
      \{ ( c_{i_k}, \mcP_{i_k} ) \}_{i_k \in \mcI_k},
      \{ \mcQ_{j_k} \}_{j_k \in \mcJ_k}
    }
  }
  { 
    \Node{d}{
      \mcM[\close{c_k}]
    }{ 
      \{ ( c_i, \mcP_i ) \}_{i \in \mcI},
      \{ \mcQ_j \}_{j \in \mcJ}
    } \\\\
    \quad\Rrightarrow
    \Node{d}{
      \mcM[\return{\ii}]
    }{ 
      \{ ( c_i, \mcP_i ) \}_{i \in \mcI \setminus \{k\}},
      \{ \mcQ_j \}_{j \in \mcJ \cup \{k\}}
    }
  }
\end{mathpar}

\begin{mathpar}\small
  \inferrule[Root-Send]
  { k \in \mcI \\ 
    \mcP_k = \Node{d_k}{
      \mcN[\recvR{d_k}]
    }{
      \{ ( c_{i_k}, \mcP_{i_k} ) \}_{i_k \in \mcI_k},
      \{ \mcQ_{j_k} \}_{j_k \in \mcJ_k}
    } \\
    \mcI' = \{ i \in \mcI | c_i \in \FC{v} \} \\
    \mcP_k' = \Node{d_k}{
      \mcN[\return{\pairR{v}{d_k}{\Ln}}]
    }{
      \{ ( c_{i_k}, \mcP_{i_k} ) \}_{i_k \in \mcI_k \cup \mcI'},
      \{ \mcQ_{j_k} \}_{j_k \in \mcJ_k}
    }
  }
  { \Root{
      \mcM[\appR{\sendR{c_k}}{v}]
    }{
      \{ ( c_i, \mcP_i ) \}_{i \in \mcI},
      \{ \mcQ_j \}_{j \in \mcJ}
    } \\\\
    \Rrightarrow
    \Root{
      \mcM[\return{c_k}]
    }{
      \{ (c_i, \mcP_i) \}_{i \in \mcI \setminus (\{ k \} \cup \mcI')} \cup \{ (c_k, \mcP_k') \},
      \{ \mcQ_j \}_{j \in \mcJ}
    }
  }

  \inferrule[Node-Send]
  { k \in \mcI \\ 
    \mcP_k = \Node{d_k}{
      \mcN[\recvR{d_k}]
    }{
      \{ ( c_{i_k}, \mcP_{i_k} ) \}_{i_k \in \mcI_k},
      \{ \mcQ_{j_k} \}_{j_k \in \mcJ_k}
    } \\
    \mcI' = \{ i \in \mcI | c_i \in \FC{v} \} \\
    d \notin \FC{v} \\
    \mcP_k' = \Node{d_k}{
      \mcN[\return{\pairR{v}{d_k}{\Ln}}]
    }{
      \{ ( c_{i_k}, \mcP_{i_k} ) \}_{i_k \in \mcI_k \cup \mcI'},
      \{ \mcQ_{j_k} \}_{j_k \in \mcJ_k}
    }
  }
  { \Node{d}{
      \mcM[\appR{\sendR{c_k}}{v}]
    }{
      \{ ( c_i, \mcP_i ) \}_{i \in \mcI},
      \{ \mcQ_j \}_{j \in \mcJ}
    } \\\\
    \Rrightarrow
    \Node{d}{
      \mcM[\return{c_k}]
    }{
      \{ (c_i, \mcP_i) \}_{i \in \mcI \setminus (\{ k \} \cup \mcI')} \cup \{ (c_k, \mcP_k') \},
      \{ \mcQ_j \}_{j \in \mcJ}
    }
  }

  \inferrule[Root-Recv]
  { k \in \mcI \\
    \mcP_k = \Node{d_k}{
      \mcN[\appR{\sendR{d_k}}{v}]
    }{
      \{ ( c_{i_k}, \mcP_{i_k} ) \}_{i_k \in \mcI_k},
      \{ \mcQ_{j_k} \}_{j_k \in \mcJ_k}
    } \\
    \mcI' = \{ i \in \mcI_k | c_i \in \FC{v} \} \\
    \mcP_k' = \Node{d_k}{
      \mcN[\return{d_k}]
    }{
      \{ ( c_{i_k}, \mcP_{i_k} ) \}_{i_k \in \mcI_k \setminus \mcI'},
      \{ \mcQ_{j_k} \}_{j_k \in \mcJ_k}
    }
  }
  {
    \Root{
      \mcM[\recvR{c_k}]
    } 
    {
      \{ ( c_i, \mcP_i ) \}_{i \in \mcI},
      \{ \mcQ_j \}_{j \in \mcJ}
    } \\\\
    \Rrightarrow
    \Root{
      \mcM[\return{\pairR{v}{c_k}{\Ln}}]
    }{
      \{ (c_i, \mcP_i) \}_{i \in (\mcI \setminus \{ k \}) \cup \mcI'} \cup \{ (c_k, \mcP_k') \},
      \{ \mcQ_j \}_{j \in \mcJ}
    }
  }

  \inferrule[Node-Recv]
  { k \in \mcI \\
    \mcP_k = \Node{d_k}{
      \mcN[\appR{\sendR{d_k}}{v}]
    }{
      \{ ( c_{i_k}, \mcP_{i_k} ) \}_{i_k \in \mcI_k},
      \{ \mcQ_{j_k} \}_{j_k \in \mcJ_k}
    } \\
    \mcI' = \{ i \in \mcI_k | c_i \in \FC{v} \} \\
    \mcP_k' = \Node{d_k}{
      \mcN[\return{d_k}]
    }{
      \{ ( c_{i_k}, \mcP_{i_k} ) \}_{i_k \in \mcI_k \setminus \mcI'},
      \{ \mcQ_{j_k} \}_{j_k \in \mcJ_k}
    }
  }
  {
    \Node{d}{
      \mcM[\recvR{c_k}]
    } 
    {
      \{ ( c_i, \mcP_i ) \}_{i \in \mcI},
      \{ \mcQ_j \}_{j \in \mcJ}
    } \\\\
    \Rrightarrow
    \Node{d}{
      \mcM[\return{\pairR{v}{c_k}{\Ln}}]
    }{
      \{ (c_i, \mcP_i) \}_{i \in (\mcI \setminus \{ k \}) \cup \mcI'} \cup \{ (c_k, \mcP_k') \},
      \{ \mcQ_j \}_{j \in \mcJ}
    }
  }
\end{mathpar}
\clearpage

\begin{mathpar}\small
  \inferrule[Root-\underline{Send}]
  { k \in \mcI \\ 
    \mcP_k = \Node{d_k}{
      \mcN[\recvI{d_k}]
    }{
      \{ ( c_{i_k}, \mcP_{i_k} ) \}_{i_k \in \mcI_k},
      \{ \mcQ_{j_k} \}_{j_k \in \mcJ_k}
    } \\
    \mcP_k' = \Node{d_k}{
      \mcN[\return{\pairI{o}{d_k}{\Ln}}]
    }{
      \{ ( c_{i_k}, \mcP_{i_k} ) \}_{i_k \in \mcI_k},
      \{ \mcQ_{j_k} \}_{j_k \in \mcJ_k}
    }
  }
  { \Root{
      \mcM[\appI{\sendI{c_k}}{o}]
    }{
      \{ ( c_i, \mcP_i ) \}_{i \in \mcI},
      \{ \mcQ_j \}_{j \in \mcJ}
    } \\\\
    \Rrightarrow
    \Root{
      \mcM[\return{c_k}]
    }{
      \{ (c_i, \mcP_i) \}_{i \in \mcI \setminus \{ k \}} \cup \{ (c_k, \mcP_k') \},
      \{ \mcQ_j \}_{j \in \mcJ}
    }
  }

  \inferrule[Node-\underline{Send}]
  { k \in \mcI \\ 
    \mcP_k = \Node{d_k}{
      \mcN[\recvI{d_k}]
    }{
      \{ ( c_{i_k}, \mcP_{i_k} ) \}_{i_k \in \mcI_k},
      \{ \mcQ_{j_k} \}_{j_k \in \mcJ_k}
    } \\
    \mcP_k' = \Node{d_k}{
      \mcN[\return{\pairI{o}{d_k}{\Ln}}]
    }{
      \{ ( c_{i_k}, \mcP_{i_k} ) \}_{i_k \in \mcI_k},
      \{ \mcQ_{j_k} \}_{j_k \in \mcJ_k}
    }
  }
  { \Node{d}{
      \mcM[\appI{\sendI{c_k}}{o}]
    }{
      \{ ( c_i, \mcP_i ) \}_{i \in \mcI},
      \{ \mcQ_j \}_{j \in \mcJ}
    } \\\\
    \Rrightarrow
    \Node{d}{
      \mcM[\return{c_k}]
    }{
      \{ (c_i, \mcP_i) \}_{i \in \mcI \setminus \{ k \}} \cup \{ (c_k, \mcP_k') \},
      \{ \mcQ_j \}_{j \in \mcJ}
    }
  }

  \inferrule[Root-\underline{Recv}]
  { k \in \mcI \\
    \mcP_k = \Node{d_k}{
      \mcN[\appI{\sendI{d_k}}{o}]
    }{
      \{ ( c_{i_k}, \mcP_{i_k} ) \}_{i_k \in \mcI_k},
      \{ \mcQ_{j_k} \}_{j_k \in \mcJ_k}
    } \\
    \mcP_k' = \Node{d_k}{
      \mcN[\return{d_k}]
    }{
      \{ ( c_{i_k}, \mcP_{i_k} ) \}_{i_k \in \mcI_k},
      \{ \mcQ_{j_k} \}_{j_k \in \mcJ_k}
    }
  }
  {
    \Root{
      \mcM[\recvI{c_k}]
    } 
    {
      \{ ( c_i, \mcP_i ) \}_{i \in \mcI},
      \{ \mcQ_j \}_{j \in \mcJ}
    } \\\\
    \Rrightarrow
    \Root{
      \mcM[\return{\pairI{o}{c_k}{\Ln}}]
    }{
      \{ (c_i, \mcP_i) \}_{i \in \mcI \setminus \{ k \}} \cup \{ (c_k, \mcP_k') \},
      \{ \mcQ_j \}_{j \in \mcJ}
    }
  }

  \inferrule[Node-\underline{Recv}]
  { k \in \mcI \\
    \mcP_k = \Node{d_k}{
      \mcN[\appI{\sendI{d_k}}{o}]
    }{
      \{ ( c_{i_k}, \mcP_{i_k} ) \}_{i_k \in \mcI_k},
      \{ \mcQ_{j_k} \}_{j_k \in \mcJ_k}
    } \\
    \mcP_k' = \Node{d_k}{
      \mcN[\return{d_k}]
    }{
      \{ ( c_{i_k}, \mcP_{i_k} ) \}_{i_k \in \mcI_k},
      \{ \mcQ_{j_k} \}_{j_k \in \mcJ_k}
    }
  }
  {
    \Node{d}{
      \mcM[\recvI{c_k}]
    } 
    {
      \{ ( c_i, \mcP_i ) \}_{i \in \mcI},
      \{ \mcQ_j \}_{j \in \mcJ}
    } \\\\
    \Rrightarrow
    \Node{d}{
      \mcM[\return{\pairI{o}{c_k}{\Ln}}]
    }{
      \{ (c_i, \mcP_i) \}_{i \in \mcI \setminus \{ k \}} \cup \{ (c_k, \mcP_k') \},
      \{ \mcQ_j \}_{j \in \mcJ}
    }
  }

  \inferrule[Node-Forward]
  { k \in \mcI \\ 
    \mcP_k = \Node{d_k}{
      \mcN[\recvR{d_k}]
    }{
      \{ ( c_{i_k}, \mcP_{i_k} ) \}_{i_k \in \mcI_k},
      \{ \mcQ_{j_k} \}_{j_k \in \mcJ_k}
    } \\
    \mcI' = \{ i \in \mcI | c_i \in \FC{v} \} \\
    d \in \FC{v} \\
    \mcP_k' = \Node{c_k}{
      \mcM[\return{c_k}]
    }{
      \{ (c_i, \mcP_i) \}_{i \in \mcI \setminus (\{ k \} \cup \mcI')},
      \{ \mcQ_j \}_{j \in \mcJ}
    }
  }
  { \Node{d}{
      \mcM[\appR{\sendR{c_k}}{v}]
    }{
      \{ ( c_i, \mcP_i ) \}_{i \in \mcI},
      \{ \mcQ_j \}_{j \in \mcJ}
    } \\\\
    \Rrightarrow
    \Node{d}{
      \mcN[\return{\pairR{v}{d_k}{\Ln}}]
    }{
      \{ ( c_{i_k}, \mcP_{i_k} ) \}_{i_k \in \mcI_k \cup \mcI'} \cup \{ (d_k, \mcP_k') \},
      \{ \mcQ_{j_k} \}_{j_k \in \mcJ_k}
    }
  }

  \inferrule[Root-Child]
  { k \in \mcI \\
    \mcP_k \Rrightarrow \mcP_k' }
  { 
    \Root{m}{ 
      \{ ( c_i, \mcP_i ) \}_{i \in \mcI},
      \{ \mcQ_j \}_{j \in \mcJ} 
    } 
    \Rrightarrow
    \Root{m}{ 
      \{ ( c_i, \mcP_i ) \}_{i \in \mcI \setminus \{k\}} \cup \{ ( c_k, \mcP_k' ) \}, 
      \{ \mcQ_j \}_{j \in \mcJ}
    }
  }

  \inferrule[Node-Child]
  { k \in \mcI \\
    \mcP_k \Rrightarrow \mcP_k' }
  { 
    \Node{d}{m}{ 
      \{ ( c_i, \mcP_i ) \}_{i \in \mcI},
      \{ \mcQ_j \}_{j \in \mcJ} 
    } 
    \Rrightarrow
    \Node{d}{m}{ 
      \{ ( c_i, \mcP_i ) \}_{i \in \mcI \setminus \{k\}} \cup \{ ( c_k, \mcP_k' ) \}, 
      \{ \mcQ_j \}_{j \in \mcJ}
    }
  }

  \inferrule[Root-SubTree]
  { k \in \mcJ \\
    \mcQ_k \Rrightarrow \mcQ_k' }
  { 
    \Root{m}{ 
      \{ ( c_i, \mcP_i ) \}_{i \in \mcI},
      \{ \mcQ_j \}_{j \in \mcJ} 
    } 
    \Rrightarrow
    \Root{m}{ 
      \{ ( c_i, \mcP_i ) \}_{i \in \mcI},
      \{ Q_j \}_{j \in \mcJ \setminus \{k\}} \cup \{ \mcQ_k' \}
    }
  }

  \inferrule[Node-SubTree]
  { k \in \mcJ \\
    \mcQ_k \Rrightarrow \mcQ_k' }
  { 
    \Node{d}{m}{ 
      \{ ( c_i, \mcP_i ) \}_{i \in \mcI},
      \{ \mcQ_j \}_{j \in \mcJ} 
    } 
    \Rrightarrow
    \Node{d}{m}{ 
      \{ ( c_i, \mcP_i ) \}_{i \in \mcI},
      \{ Q_j \}_{j \in \mcJ \setminus \{k\}} \cup \{ \mcQ_k' \}
    }
  }
\end{mathpar}
\clearpage

\begin{mathpar}\small
  \inferrule[Root-Expr]
  { 
    m \Leadsto m'
  }
  { 
    \Root{m}{
      \{ ( c_i, \mcP_i ) \}_{i \in \mcI},
      \{ \mcQ_j \}_{j \in \mcJ}
    }
    \Rrightarrow
    \Root{m'}{
      \{ ( c_i, \mcP_i ) \}_{i \in \mcI},
      \{ \mcQ_j \}_{j \in \mcJ} 
    }
  }

  \inferrule[Node-Expr]
  { 
    m \Leadsto m'
  }
  { 
    \Node{d}{m}{
      \{ ( c_i, \mcP_i ) \}_{i \in \mcI},
      \{ \mcQ_j \}_{j \in \mcJ}
    }
    \Rrightarrow
    \Node{d}{m'}{
      \{ ( c_i, \mcP_i ) \}_{i \in \mcI},
      \{ \mcQ_j \}_{j \in \mcJ} 
    }
  }
\end{mathpar}

In the rules above, the \textsc{Root-Fork} and \textsc{Node-Fork} rules
describe how a \Fork{}-operation spawns a new child process and adds it to the
set of children processes. The new child process is represented as an internal node.
The channel connecting the child to its parent is fresh. Child processes that are
connected to channels in $\FC{m}$ (the free channels in the spawning expression)
are moved to be children of the newly spawned process.

The \textsc{Root-Wait}, \textsc{Node-Wait}, \textsc{Root-Close} and 
\textsc{Node-Close} rules describe how \Close{}/\Wait{}-operations
detach a child from its parent. The detached process is moved to the set of subtrees.

The \textsc{Root-Send} and \textsc{Node-Send} rules describe how a \Send{}-operation
sends a value to a child process. The child process must be waiting to receive a value
through a \Recv{}-operation. The sent value may contain channels that are connected to
other child processes. It is important to note that \textsc{Node-Send} only applies when
the sent value \emph{does not} contain the channel $d$ connecting to the parent, 
i.e. the side condition $d \notin \FC{v}$. When a value is sent, any child processes 
connected to channels in $\FC{v}$ are moved to be children of the receiving process.

The \textsc{Root-Recv} and \textsc{Node-Recv} rules describe how a \Recv{}-operation
receives a value from a child process. The child process must be waiting to send a value
through a \Send{}-operation. The received value may contain channels that are connected to
other child processes. When a value is received, any child processes connected to channels
in $\FC{v}$ are moved to be children of the receiving process. Note that, due to linearity,
the channel $d_k$ connecting the sending child to its parent cannot be in $\FC{v}$.
Thus, there is no side condition in \textsc{Node-Recv}. Moreover, this means that
cyclic channel dependencies cannot arise here.

The \textsc{Root-\underline{Send}}, \textsc{Node-\underline{Send}}, \textsc{Root-\underline{Recv}},
and \textsc{Node-\underline{Recv}} rules describe the sending and receiving of 
ghost messages through \SendI{} and \RecvI{} operations. Due to the fact that ghost messages
do not contain channels, there are no side conditions or changes to the spawning tree structure.

The \textsc{Node-Forward} rule describes how a \Send{}-operation can forward a
parent channel $d$ to a child process. The child process must be waiting to
receive a value. The sent value must contain the parent channel $d$, i.e. the
side condition $d \in \FC{v}$. When this happens, the child process takes over
the parent channel $d$ and tree is restructured so that the child process
becomes the new parent and the sending process (the original parent) becomes one
of its child processes. Other child processes that are connected to channels in
$\FC{v}$ are also moved to be children of the receiving process.

The \textsc{Root-Child}, \textsc{Node-Child}, \textsc{Root-SubTree}, and \textsc{Node-SubTree}
rules describe how a child process or subtree can take a reduction step.

The \textsc{Root-Expr} and \textsc{Node-Expr} rules describe how the expression
in a node can reduce.

The \textsc{Root-Unit} and \textsc{Node-Unit} rules describe how a subtree
that has finished (i.e. its expression is $\return{\ii}$ and it has no children or subtrees)
can be removed from the spawning tree.

We now state the simulation theorem between the spawning tree semantics
and the standard semantics (\Cref{appendix:process-semantics}).
With slight abuse of notation, we write $| \mcP |$ to denote just the process $P$
obtained by flattening the spawning tree $\mcP$.

\begin{lemma}[Spawning Tree Simulation]\label[lemma]{lemma:spawning-tree-simulation}
  If $\Vdash \mcP$ or $A \Vdash \mcP$, then given $\mcP \Rrightarrow \mcP'$ 
  there is $| \mcP | \Rrightarrow | \mcP' |$.
\end{lemma}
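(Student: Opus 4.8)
The plan is to proceed by induction on the derivation of the spawning-tree reduction $\mcP \Rrightarrow \mcP'$, with a case analysis on the last rule applied. The validity hypotheses $\Vdash \mcP$ and $\ch{\kappa}{A} \Vdash \mcP$ serve mainly to guarantee that the flattening is well-defined and that the channel names involved are appropriately distinct; the substance of the argument lies in matching the two reduction relations structurally. For each spawning-tree rule I would unfold both $|\mcP|$ and $|\mcP'|$ using \textsc{Flatten-Root} and \textsc{Flatten-Node}, exhibit a single standard process reduction connecting the flattened forms, and appeal to \textsc{Proc-Congr} to absorb the structural rearrangements that flattening introduces.

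The congruence-style cases (\textsc{Root-Child}, \textsc{Node-Child}, \textsc{Root-SubTree}, \textsc{Node-SubTree}) follow immediately from the induction hypothesis applied to the reducing subtree, followed by \textsc{Proc-Par} and \textsc{Proc-Scope} to re-embed the reduced component into the surrounding configuration. The expression cases (\textsc{Root-Expr}, \textsc{Node-Expr}) reduce to \textsc{Proc-Expr} lifted through parallel composition and scoping. For the \textsc{Fork} rules, the flattened left-hand side contains $\proc{\mcN[\fork{x : A}{m}]}$ in parallel with the processes for the existing children; \textsc{Proc-Fork} produces exactly a new scoped pair $\scope{cd}{(\proc{\mcN[\return{c}]} \mid \proc{m[d/x]})}$, and the only discrepancy with the flattened right-hand side is that the children whose channels lie in $\FC{m}$ must be relocated under the newly created node. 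This relocation is precisely scope extrusion, discharged by the congruence law $\scope{cd}{P} \mid Q \equiv \scope{cd}{(P \mid Q)}$ together with commutativity and associativity of parallel composition.

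The communication and termination rules (\textsc{Root-Send}, \textsc{Node-Send}, \textsc{Root-Recv}, \textsc{Node-Recv}, their underlined ghost counterparts, and the \textsc{Wait}/\textsc{Close} rules) follow the same pattern: flattening places the two communicating processes in parallel under a common channel scope $\scope{c_k d_k}{(\cdots)}$, so that \textsc{Proc-Com}, \textsc{Proc-\underline{Com}}, or \textsc{Proc-End} fires directly, after which the movement of children connected to channels in $\FC{v}$ is reconciled by scope extrusion and reordering under \textsc{Proc-Congr}. The \textsc{Unit} rules, which prune a finished subtree from the tree, correspond simply to the structural law $P \mid \proc{\return{\ii}} \equiv P$.

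The main obstacle is the \textsc{Node-Forward} case, in which a child receives the parent channel $d$ and the tree is rotated so that the former child becomes the new parent of the former parent. Here the orientation of the connecting channel flips, so the two sides of $|\mcP| \Rrightarrow |\mcP'|$ do not share the same nesting of scopes. I would track the scopes of $d$ and $d_k$ carefully across the rotation and use the scope-commutation law $\scope{cd}{\scope{c'd'}{P}} \equiv \scope{c'd'}{\scope{cd}{P}}$ together with repeated scope extrusion to show that the underlying \textsc{Proc-Com} step, performed on $\scope{c_k d_k}{(\cdots)}$, yields a configuration structurally congruent to the flattening of the rotated tree. Establishing this congruence rigorously---in particular verifying that no channel escapes its scope during the rotation and that the side condition $d \in \FC{v}$ ensures the communicated value genuinely carries the parent channel---is the delicate heart of the argument.
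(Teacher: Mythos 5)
Your proposal matches the paper's proof: induction on the derivation of $\mcP \Rrightarrow \mcP'$, flattening both sides, firing the corresponding standard process rule (\textsc{Proc-Fork}, \textsc{Proc-Com}, \textsc{Proc-End}, \textsc{Proc-Expr}) on an isolated sub-configuration, and discharging the relocation of children via scope extrusion and \textsc{Proc-Congr}, with linearity justifying that the moved channels do not occur in the continuation. Your identification of \textsc{Node-Forward} as the delicate case, resolved by the same scope-commutation and extrusion laws, is exactly how the paper handles it.
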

\begin{proof}
  By induction on the derivation of $\mcP \Rrightarrow \mcP'$.

\noindent
\textbf{Case} (\textsc{Root-Fork}):
  \begin{mathpar}\small
  \inferrule[Root-Fork]
  { 
    \mcI' = \{ i \in \mcI | c_i \in \FC{m} \}
  }
  { 
    \Root{ 
      \mcN[\fork{x : A}{m}]
    }{ 
      \{ ( c_i, \mcP_i ) \}_{i \in \mcI}, 
      \{ \mcQ_j \}_{j \in \mcJ}
    } \\
    \quad\Rrightarrow
    \Root{
      \mcN[\return{c}]
    }{
      \{ ( c_i, \mcP_i ) \}_{i \in \mcI \setminus \mcI'} 
      \cup
      \{ (c, 
          \Node{d}{m[d/x]}{ 
            \{ ( c_{i'}, \mcP_{i'} ) \}_{i' \in \mcI'},
            \emptyset
          }) 
      \},
      \{ \mcQ_j \}_{j \in \mcJ}
    }
  }
  \end{mathpar}
  Flattening the LHS, we have:
  \begin{align*}
    \scope{\overline{c_i d_i}}{(\proc{\mcN[\fork{x : A}{m}]} \mid \overline{P_i})} \mid \overline{Q_j}
  \end{align*}
  Repeated application of \textsc{Proc-Par} and \textsc{Proc-Scope} 
  and then \textsc{Proc-Fork} on the LHS gives us the reduced configuration
  \begin{align*}
    \scope{\overline{c_i d_i}_{(i \in \mcI)}}{(
        \scope{cd}{(
          \proc{\mcN[\return{c}]} \mid 
          \proc{m[d/x]}
        )} \mid \overline{P_i}
      )} \mid \overline{Q_j}
  \end{align*}
  By linearity of channels, we know that $\{ c_{i'}, d_{i'} \}_{(i' \in \mcI')}$
  do not appear in $\mcN[\return{c}]$. Thus, we can apply structural congruence to rearrange the
  scoping to obtain 
  \begin{align*}
    \scope{\overline{c_i d_i}_{(i \in \mcI \setminus \mcI')}}{(
      \scope{cd}{(
        \proc{\mcN[\return{c}]} \mid 
        \scope{\overline{c_{i'} d_{i'}}_{(i' \in \mcI')}}{(
          \proc{m[d/x]} \mid \overline{P_{i'}}_{(i' \in \mcI')}
        )}
      )} \mid \overline{P_i}
    )} \mid \overline{Q_j}
  \end{align*}
  which is the flattened RHS.

\noindent
\textbf{Case} (\textsc{Node-Fork}): Similar to the \textsc{Root-Fork} case.

\noindent
\textbf{Case} (\textsc{Root-Wait}):
  \begin{mathpar}\small
  \inferrule[Root-Wait]
  { k \in \mcI \\ 
    \mcP_k = \Node{d_k}{
      \mcN[\close{d_k}]
    }{ 
      \{ ( c_{i_k}, \mcP_{i_k} ) \}_{i_k \in \mcI_k},
      \{ \mcQ_{j_k} \}_{j_k \in \mcJ_k}
    } \\
    \mcQ_{j_k} = 
    \Root{
      \mcN[\return{\ii}]
    }{ 
      \{ ( c_{i_k}, \mcP_{i_k} ) \}_{i_k \in \mcI_k},
      \{ \mcQ_{j_k} \}_{j_k \in \mcJ_k}
    }
  }
  { 
    \Root{
      \mcM[\wait{c_k}]
    }{ 
      \{ ( c_i, \mcP_i ) \}_{i \in \mcI},
      \{ \mcQ_j \}_{j \in \mcJ}
    } \\\\
    \quad\Rrightarrow
    \Root{
      \mcM[\return{\ii}]
    }{ 
      \{ ( c_i, \mcP_i ) \}_{i \in \mcI \setminus \{k\}},
      \{ \mcQ_j \}_{j \in \mcJ \cup \{k\}}
    }
  }
  \end{mathpar}
  Flattening both sides, we have:
  \begin{small}
  \begin{align*}
    &\scope{\overline{c_i d_i}}{ (
        \proc{\mcM[\wait{c_k}]}
        \mid \overline{P_i}_{(i \in \mcI \setminus \{k\})}
        \mid (
          \scope{\overline{c_{i_k}d_{i_k}}}{(\proc{\mcN[\close{d_k}]} \mid \overline{P_{i_k}})}
          \mid \overline{Q_{j_k}}
        )
      )
    } \mid \overline{Q_j}
    \\
    &\scope{\overline{c_i d_i}_{(i \in \mcI \setminus \{k\})}}{(
      \proc{\mcM[\return{\ii}]} 
      \mid \overline{P_i}_{(i \in \mcI \setminus \{k\})}
    )} 
    \mid \overline{Q_j} 
    \mid (
      \scope{\overline{c_{i_k} d_{i_k}}}{(
        \proc{\mcN[\return{\ii}]} \mid \overline{P_{i_k}}
      )} \mid \overline{Q_{j_k}}
    )
  \end{align*}
  \end{small}
  Apply \textsc{Proc-Congr} to the LHS to rearrange the processes,
  then apply \textsc{Proc-Scope} and \textsc{Proc-Par} repeatedly
  to isolate the sub-configuration 
  $\scope{c_k d_k}{(\proc{\mcM[\wait{c_k}]} \mid \proc{\mcN[\close{d_k}]})}$.
  Finally, apply \textsc{Proc-Wait} to this sub-configuration
  to obtain the reduced configuration, which is structurally congruent
  to the RHS.

\noindent
\textbf{Case} (\textsc{Node-Wait}): Similar to the \textsc{Root-Wait} case.

\noindent
\textbf{Case} (\textsc{Root-Close}): Similar to the \textsc{Root-Wait} case.

\noindent
\textbf{Case} (\textsc{Node-Close}): Similar to the \textsc{Root-Wait} case.

\noindent
\textbf{Case} (\textsc{Root-Send}):
  \begin{mathpar}\small
  \inferrule[Root-Send]
  { k \in \mcI \\ 
    \mcP_k = \Node{d_k}{
      \mcN[\recvR{d_k}]
    }{
      \{ ( c_{i_k}, \mcP_{i_k} ) \}_{i_k \in \mcI_k},
      \{ \mcQ_{j_k} \}_{j_k \in \mcJ_k}
    } \\
    \mcI' = \{ i \in \mcI | c_i \in \FC{v} \} \\
    \mcP_k' = \Node{d_k}{
      \mcN[\return{\pairR{v}{d_k}{\Ln}}]
    }{
      \{ ( c_{i_k}, \mcP_{i_k} ) \}_{i_k \in \mcI_k \cup \mcI'},
      \{ \mcQ_{j_k} \}_{j_k \in \mcJ_k}
    }
  }
  { \Root{
      \mcM[\appR{\sendR{c_k}}{v}]
    }{
      \{ ( c_i, \mcP_i ) \}_{i \in \mcI},
      \{ \mcQ_j \}_{j \in \mcJ}
    } \\\\
    \Rrightarrow
    \Root{
      \mcM[\return{c_k}]
    }{
      \{ (c_i, \mcP_i) \}_{i \in \mcI \setminus (\{ k \} \cup \mcI')} \cup \{ (c_k, \mcP_k') \},
      \{ \mcQ_j \}_{j \in \mcJ}
    }
  }
  \end{mathpar}
  Flattening both sides, we have:
  \begin{small}
  \begin{align*}
    &\scope{\overline{c_i d_i}}{ (
        \proc{\mcM[\appR{\sendR{c_k}}{v}]}
        \mid \overline{P_i}_{(i \in \mcI \setminus \{k\})}
        \mid (
          \scope{\overline{c_{i_k}d_{i_k}}}{(\proc{\mcN[\recvR{d_k}]} \mid \overline{P_{i_k}})}
          \mid \overline{Q_{j_k}}
        )
      )
    } \mid \overline{Q_j}
    \\
    &\nu{\overline{c_i d_i}_{(i \in (\mcI \setminus (\{k\} \cup \mcI')) \cup \{ k \})}}.(
      \proc{\mcM[\return{c_k}]}
      \mid \overline{P_i}_{(i \in \mcI \setminus (\{k\} \cup \mcI'))} \\
      &\qquad\mid
      \scope{\overline{c_{i_k} d_{i_k}}}{
      \scope{\overline{c_i d_i}_{(i \in \mcI')}}{(
        \proc{\mcN[\return{\pairR{v}{d_k}{\Ln}}]} \mid \overline{P_{i_k}} \mid \overline{P_i}_{(i \in \mcI')}
      )}
      } \mid \overline{Q_{j_k}}
    )
    \mid \overline{Q_j}
  \end{align*}
  \end{small}
  Apply \textsc{Proc-Congr} to the LHS to rearrange the processes,
  then apply \textsc{Proc-Scope} and \textsc{Proc-Par} repeatedly
  to isolate the sub-configuration 
  $\scope{c_k d_k}{(\proc{\mcM[\appR{\sendR{c_k}}{v}]} \mid \proc{\mcN[\recvR{d_k}]})}$.
  Now, apply \textsc{Proc-Send} to this sub-configuration to obtain the reduced configuration
  $$
    \scope{c_k d_k}{(
      \proc{\mcM[\return{c_k}]} 
      \mid \proc{\mcN[\return{\pairR{v}{d_k}{\Ln}}]} 
    )}
  $$
  Note that, by linearity, the channels $\{ c_i, d_i \}_{(i \in \mcI')}$ do not occur in 
  $\mcM[\return{c_k}]$.
  Thus, structural congruence can be applied to move the scope of these channels to
  $$\scope{\overline{c_i d_i}_{(i \in \mcI')}}{
    (\proc{\mcN[\return{\pairR{v}{d_k}{\Ln}}]} \mid \overline{P_i}_{(i \in \mcI')})}$$
  which gives us the desired result.

\noindent
\textbf{Case} (
  \textsc{Node-Send}, 
  \textsc{Root-Recv}
  \textsc{Node-Recv}
): Similar to the \textsc{Root-Send} case.

\noindent
\textbf{Case} (
  \textsc{Root-\underline{Send}}
  \textsc{Node-\underline{Send}}
  \textsc{Root-\underline{Recv}}
  \textsc{Node-\underline{Recv}}
): Similar to the \textsc{Root-Send} case.
The only difference is that scope restriction does not need to be applied
to move any channels since the ghost message $o$ does not contain channels.

\noindent
\textbf{Case} (\textsc{Node-Forward}):
  \begin{mathpar}\small
  \inferrule[Node-Forward]
  { k \in \mcI \\ 
    \mcP_k = \Node{d_k}{
      \mcN[\recvR{d_k}]
    }{
      \{ ( c_{i_k}, \mcP_{i_k} ) \}_{i_k \in \mcI_k},
      \{ \mcQ_{j_k} \}_{j_k \in \mcJ_k}
    } \\
    \mcI' = \{ i \in \mcI | c_i \in \FC{v} \} \\
    d \in \FC{v} \\
    \mcP_k' = \Node{c_k}{
      \mcM[\return{c_k}]
    }{
      \{ (c_i, \mcP_i) \}_{i \in \mcI \setminus (\{ k \} \cup \mcI')},
      \{ \mcQ_j \}_{j \in \mcJ}
    }
  }
  { \Node{d}{
      \mcM[\appR{\sendR{c_k}}{v}]
    }{
      \{ ( c_i, \mcP_i ) \}_{i \in \mcI},
      \{ \mcQ_j \}_{j \in \mcJ}
    } \\\\
    \Rrightarrow
    \Node{d}{
      \mcN[\return{\pairR{v}{d_k}{\Ln}}]
    }{
      \{ ( c_{i_k}, \mcP_{i_k} ) \}_{i_k \in \mcI_k \cup \mcI'} \cup \{ (d_k, \mcP_k') \},
      \{ \mcQ_{j_k} \}_{j_k \in \mcJ_k}
    }
  }
  \end{mathpar}
  Flattening both sides, we have:
  \begin{small}
  \begin{align*}
    &\scope{\overline{c_i d_i}}{ (
        \proc{\mcM[\appR{\sendR{c_k}}{v}]} 
        \mid \overline{P_i}_{(i \in \mcI \setminus \{k\})}
        \mid (
          \scope{\overline{c_{i_k}d_{i_k}}}{(\proc{\mcN[\recvR{d_k}]} \mid \overline{P_{i_k}})}
          \mid \overline{Q_{j_k}}
        )
      )
    } \mid \overline{Q_j}
    \\
    &\nu{\overline{c_{i_k} d_{i_k}}_{(i_k \in \mcI_k \cup \mcI' \cup \{ k \})}}.(
      \proc{\mcN[\return{\pairR{v}{d_k}{\Ln}}]} 
      \mid \overline{P_{i_k}}_{(i_k \in \mcI_k \cup \mcI')} \\
      &\qquad\mid
      \scope{\overline{c_i d_i}_{(i \in \mcI \setminus (\{k\} \cup \mcI'))}}{(
        \proc{\mcM[\return{c_k}]} 
        \mid \overline{P_i}_{(i \in \mcI \setminus (\{k\} \cup \mcI'))}
      )}
      \mid \overline{Q_j}
    ) \mid \overline{Q_{j_k}}
  \end{align*}
  \end{small}
  Apply \textsc{Proc-Congr} to the LHS to rearrange the processes,
  then apply \textsc{Proc-Scope} and \textsc{Proc-Par} repeatedly
  to isolate the sub-configuration 
  $\scope{c_k d_k}{(\proc{\mcM[\appR{\sendR{c_k}}{v}]} \mid \proc{\mcN[\recvR{d_k}]})}$.
  Now, apply \textsc{Proc-Send} to this sub-configuration to obtain the reduced configuration
  $$
    \scope{c_k d_k}{(
      \proc{\mcM[\return{c_k}]} 
      \mid \proc{\mcN[\return{\pairR{v}{d_k}{\Ln}}]}
    )}
  $$
  By symmetry of structural congruence, we have
  $$
    \scope{c_k d_k}{(
      \proc{\mcN[\return{\pairR{v}{d_k}{\Ln}}]}
      \mid \proc{\mcM[\return{c_k}]} 
    )}
  $$
  Since $\{ c_i, d_i \}_{(i \in \mcI \setminus (\{k\} \cup \mcI'))}$
  do not occur in $\mcN[\return{\pairR{v}{d_k}{\Ln}}]$, we
  can apply structural congruence to move the scope of these channels to
  $$\scope{\overline{c_i d_i}_{(i \in \mcI \setminus (\{k\} \cup \mcI'))}}{
    (\proc{\mcM[\return{c_k}]} \mid \overline{P_i}_{(i \in \mcI \setminus (\{k\} \cup \mcI'))})}$$
  which gives us the desired result.

\noindent
\textbf{Case} (
  \textsc{Root-Child},
  \textsc{Node-Child},
  \textsc{Root-SubTree},
  \textsc{Node-SubTree}
): By the induction hypothesis, we have $| \mcP_k | \Rrightarrow | \mcP_k' |$ or
  $| \mcQ_k | \Rrightarrow | \mcQ_k' |$.
  Repeated application of \textsc{Proc-Par} and \textsc{Proc-Scope} gives us the desired result.

\noindent
\textbf{Case} (\textsc{Root-Expr}, \textsc{Node-Expr}):
  By the assumption, we have $m \Leadsto m'$.
  Repeated application of \textsc{Proc-Par}, \textsc{Proc-Scope} 
  and then \textsc{Proc-Expr} gives us the desired result.
\end{proof}

In order to show that spawning trees are an adequate characterization of reachability,
we prove the following fidelity theorem. This theorem states that if a spawning tree is well-typed
and it takes a reduction step, then the resulting spawning tree is also well-typed.
Thus, starting from a well-typed singleton $\Root{m}{\emptyset, \emptyset}$,
the spawning trees reachable from it are all well-typed.

\begin{lemma}[Spawning Tree Fidelity]\label[lemma]{lemma:spawning-tree-fidelity}
  If $\Vdash \mcP$ or $\ch{\kappa}{A} \Vdash \mcP$, then given $\mcP \Rrightarrow \mcQ$ there
  is $\Vdash \mcQ$ or $\ch{\kappa}{A} \Vdash \mcQ$ respectively.
\end{lemma}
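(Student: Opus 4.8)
The plan is to proceed by mutual induction on the derivation of the reduction $\mcP \Rrightarrow \mcQ$, simultaneously handling the root case (concluding $\Vdash \mcQ$) and the node case (concluding $\ch{\kappa}{A} \Vdash \mcQ$). In each case I first invert the validity hypothesis using \textsc{Valid-Root} or \textsc{Valid-Node} to extract three pieces of data: a program-level typing of the node's term $m : \CM{\unit}$ in a channel context comprising the children's channels $\overline{c_i \tL \ch{\kappa_i}{A_i}}$ (together with the parent channel $d \tL \ch{\kappa}{A}$ in the node case), the validity of each communicating child $\neg{\ch{\kappa_i}{A_i}} \Vdash \mcP_i$, and the validity of each detached subtree $\Vdash \mcQ_j$. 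I then reconstruct validity of the transformed tree. The purely structural cases (\textsc{Root-Child}, \textsc{Node-Child}, \textsc{Root-SubTree}, \textsc{Node-SubTree}) follow immediately from the induction hypothesis applied to the subtree that steps, and the local-computation cases (\textsc{Root-Expr}, \textsc{Node-Expr}) follow from \Cref{theorem:program-subject-reduction} applied to $m \Leadsto m'$, since the channel context and all children are left unchanged.

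The interesting cases mirror the communication cases of the Session Fidelity theorem (\Cref{theorem:session-fidelity}), but must additionally account for how children migrate between nodes. For \textsc{Root-Fork}/\textsc{Node-Fork} I use \Cref{lemma:eval-context} to isolate the $\fork{x : A}{m}$ subterm at type $\CM{B}$, then \Cref{lemma:program-inversion-fork} to obtain the protocol $A'$ with $A = \CH{A'}$ and the body typing under $x \tL \CH{A'}$. The set $\mcI'$ of children captured by the spawned body is exactly those whose channels occur free in $m$; by linearity the channel context splits cleanly, so the newly created \textsc{Node} is valid under its captured children while the parent retypes against the remaining children plus the freshly scoped channel $c$. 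The send and receive cases (\textsc{Root-Send}, \textsc{Node-Send}, \textsc{Root-Recv}, \textsc{Node-Recv}) likewise combine \Cref{lemma:eval-context} with \Cref{lemma:program-inversion-explicit-send} and \Cref{lemma:program-inversion-explicit-recv}; the crux is that the channels appearing in the transmitted value $v$ (the set $\mcI'$) migrate to the receiving side, and I must check that the dual actions agree via the injectivity corollaries (\Cref{corollary:inj-explicit-action}, \Cref{corollary:inj-monad}) so that, after the step, the moved children remain correctly dual to the updated continuations. The implicit (ghost) variants carry no channels, so no migration occurs and the argument simplifies accordingly.

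The main obstacle will be the \textsc{Node-Forward} case, since it is the only rule that restructures the dominator relation: when the sent value contains the parent channel $d$ (the side condition $d \in \FC{v}$), the receiving child $\mcP_k$ is promoted to become the new node while the former parent's continuation is demoted to one of its children, now connected through $c_k$ rather than $d$. Reconstructing validity here requires re-deriving the node typing with $d_k \tL \ch{\kappa}{A}$ relocated onto the promoted term and $c_k$ reassigned as the demoted process's upward channel, while simultaneously re-partitioning the captured children $\mcI'$ beneath the promoted node. I expect the bookkeeping of which channel contexts flow where---and verifying that linearity forbids $d_k$ from occurring in $v$ so that no cycle is introduced---to be the delicate part. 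Once the inversion data is in hand, applying \Cref{lemma:program-subst-explicit} for the body substitutions and reassembling the tree with \textsc{Valid-Node} and \textsc{Valid-Root} closes each case, but the forwarding case demands the most care in tracking the inverted parent--child orientation.
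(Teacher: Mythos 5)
Your proposal matches the paper's proof in both structure and detail: induction on the reduction derivation with case analysis on the validity judgment, structural and expression cases discharged by the induction hypothesis and \Cref{theorem:program-subject-reduction}, communication cases handled by reusing the inversion reasoning from \Cref{theorem:session-fidelity} while re-partitioning the migrated children $\mcI'$, and \textsc{Node-Forward} treated as the delicate restructuring case exactly as the paper does. The only quibble is a notational slip in your forwarding discussion (it is $d \tL \ch{\kappa}{A}$, not $d_k \tL \ch{\kappa}{A}$, that ends up in the promoted term's context, with $d_k$ retyped at $\neg{\ch{\kappa_k}{B''[v/x]}}$ to connect to the demoted node), which does not affect the soundness of the approach.
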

\begin{proof}
  By induction on the derivation of $\mcP \Rrightarrow \mcQ$ and by case analysis
  on the derivation of the typing judgment $\Vdash \mcP$ or $\ch{\kappa}{A} \Vdash \mcP$.

\noindent
\textbf{Case} (\textsc{Root-Fork}):
  \begin{mathpar}\small
  \inferrule[Root-Fork]
  { 
    \mcI' = \{ i \in \mcI | c_i \in \FC{m} \}
  }
  { 
    \Root{ 
      \mcN[\fork{x : A}{m}]
    }{ 
      \{ ( c_i, \mcP_i ) \}_{i \in \mcI}, 
      \{ \mcQ_j \}_{j \in \mcJ}
    } \\
    \quad\Rrightarrow
    \Root{
      \mcN[\return{c}]   
    }{
      \{ ( c_i, \mcP_i ) \}_{i \in \mcI \setminus \mcI'} 
      \cup
      \{ (c, 
          \Node{d}{m[d/x]}{ 
            \{ ( c_{i'}, \mcP_{i'} ) \}_{i' \in \mcI'},
            \emptyset
          }) 
      \},
      \{ \mcQ_j \}_{j \in \mcJ}
    }
  }
  \end{mathpar}
  By inversion on the typing judgment 
  $$
    \Vdash \Root{ 
      \mcN[\fork{x : A}{m}]
    }{ 
      \{ ( c_i, \mcP_i ) \}_{i \in \mcI}, 
      \{ \mcQ_j \}_{j \in \mcJ}
    }
  $$
  we have
  \begin{align*}
    & \overline{c_i \tL \ch{\kappa_i}{A_i}} ; \epsilon ; \epsilon \vdash
      \mcN[\fork{x : A}{m}] : 
      \CM{\unit} \\
    & \forall i \in \mcI, \neg{\ch{\kappa_i}{A_i}} \Vdash \mcP_i \\
    & \forall j \in \mcJ, \Vdash \mcQ_j
  \end{align*}
  Similarly to the reasoning in \Cref{theorem:session-fidelity} for the \textsc{Proc-Fork} case, 
  by inversion on the term typing judgment we know that $A = \CH{A'}$ and the following hold:
  \begin{align*}
    &\overline{c_i \tL \ch{\kappa_i}{A_i}}_{(i \in \mcI')}, d \tL \HC{A'} ; 
      \epsilon ; \epsilon \vdash m[d/x] : \CM{\unit}  \\
    &\overline{c_i \tL \ch{\kappa_i}{A_i}}_{(i \in \mcI \setminus \mcI')} ; c \tL \CH{A'} ; 
      \epsilon ; \epsilon \vdash \mcN[\return{c}] : \CM{\unit}
  \end{align*}

  \noindent
  We can partition $\{ ( c_i, \mcP_i ) \}_{i \in \mcI}$ into
  $\{ ( c_i, \mcP_i ) \}_{i \in \mcI \setminus \mcI'}$ and
  $\{ ( c_{i'}, \mcP_{i'} ) \}_{i' \in \mcI'}$.

  \noindent
  We now have well-typed node 
  $$
    \ch{-}{A'} \Vdash \Node{d}{m[d/x]}{ 
      \{ ( c_{i'}, \mcP_{i'} ) \}_{i' \in \mcI'},
      \emptyset
    }
  $$
  which in turn gives us the well-typed root
  $$
    \Vdash \Root{
      \mcN[\return{c}]   
    }{
      \{ ( c_i, \mcP_i ) \}_{i \in \mcI \setminus \mcI'} 
      \cup
      \{ (c, 
          \Node{d}{m[d/x]}{ 
            \{ ( c_{i'}, \mcP_{i'} ) \}_{i' \in \mcI'},
            \emptyset
          }) 
      \},
      \{ \mcQ_j \}_{j \in \mcJ}
    }
  $$
  and concludes this case.

\noindent
\textbf{Case} (\textsc{Node-Fork}): Similar to the \textsc{Root-Fork} case.

\noindent
\textbf{Case} (\textsc{Root-Wait}):
  \begin{mathpar}\small
  \inferrule[Root-Wait]
  { k \in \mcI \\ 
    \mcP_k = \Node{d_k}{
      \mcN[\close{d_k}]
    }{ 
      \{ ( c_{i_k}, \mcP_{i_k} ) \}_{i_k \in \mcI_k},
      \{ \mcQ_{j_k} \}_{j_k \in \mcJ_k}
    } \\
    \mcQ_{k} = 
    \Root{
      \mcN[\return{\ii}]
    }{ 
      \{ ( c_{i_k}, \mcP_{i_k} ) \}_{i_k \in \mcI_k},
      \{ \mcQ_{j_k} \}_{j_k \in \mcJ_k}
    }
  }
  { 
    \Root{
      \mcM[\wait{c_k}]
    }{ 
      \{ ( c_i, \mcP_i ) \}_{i \in \mcI},
      \{ \mcQ_j \}_{j \in \mcJ}
    } \\\\
    \quad\Rrightarrow
    \Root{
      \mcM[\return{\ii}]
    }{ 
      \{ ( c_i, \mcP_i ) \}_{i \in \mcI \setminus \{k\}},
      \{ \mcQ_j \}_{j \in \mcJ \cup \{k\}}
    }
  }
  \end{mathpar}
  By inversion on the typing judgment 
  $$
    \Vdash \Root{
      \mcM[\wait{c_k}]
    }{ 
      \{ ( c_i, \mcP_i ) \}_{i \in \mcI},
      \{ \mcQ_j \}_{j \in \mcJ}
    }
  $$
  we have
  \begin{align*}
    & \overline{c_i \tL \ch{\kappa_i}{A_i}} ; \epsilon ; \epsilon \vdash
      \mcM[\wait{c_k}] : 
      \CM{\unit} \\
    & \forall i \in \mcI, \neg{\ch{\kappa_i}{A_i}} \Vdash \mcP_i \\
    & \forall j \in \mcJ, \Vdash \mcQ_j
  \end{align*}

  \noindent
  Since $k \in \mcI$, we know 
  $$
    \neg{\ch{\kappa_k}{A_k}} \Vdash 
    \Node{d_k}{
      \mcN[\close{d_k}]
    }{ 
      \{ ( c_{i_k}, \mcP_{i_k} ) \}_{i_k \in \mcI_k},
      \{ \mcQ_{j_k} \}_{j_k \in \mcJ_k}
    }
  $$
  and by inversion on this typing judgment, we have
  \begin{align*}
    & \overline{c_{i_k} \tL \ch{\kappa_{i_k}}{A_{i_k}}}, d_k \tL \neg{\ch{\kappa_k}{A_k}} ; 
      \epsilon ; \epsilon \vdash
      \mcN[\close{d_k}] : 
      \CM{\unit} \\
    & \forall i_k \in \mcI_k, \neg{\ch{\kappa_{i_k}}{A_{i_k}}} \Vdash \mcP_{i_k} \\
    & \forall j_k \in \mcJ_k, \Vdash \mcQ_{j_k}
  \end{align*}
  
  \noindent
  Similarly to the reasoning in \Cref{theorem:session-fidelity} for the \textsc{Proc-End} case,
  we know $A_k \simeq \End$ and
  \begin{align*}
    & \overline{c_i \tL \ch{\kappa_i}{A_i}}_{(i \in \mcI \setminus \{k\})} ; 
      \epsilon ; \epsilon \vdash \mcM[\return{\ii}] : \CM{\unit} \\
    & \overline{c_{i_k} \tL \ch{\kappa_{i_k}}{A_{i_k}}} ; 
      \epsilon ; \epsilon \vdash \mcN[\return{\ii}] : \CM{\unit} 
  \end{align*}
  This gives us the well-typed root
  $$
    \Vdash \Root{
      \mcN[\return{\ii}]
    }{ 
      \{ ( c_{i_k}, \mcP_{i_k} ) \}_{i_k \in \mcI_k},
      \{ \mcQ_{j_k} \}_{j_k \in \mcJ_k}
    }
  $$
  which in turn gives us the well-typed root
  $$
    \Vdash \Root{
      \mcM[\return{\ii}]
    }{ 
      \{ ( c_i, \mcP_i ) \}_{i \in \mcI \setminus \{k\}},
      \{ \mcQ_j \}_{j \in \mcJ \cup \{k\}}
    }
  $$
  and concludes this case.

\noindent
\textbf{Case} (
  \textsc{Node-Wait}, 
  \textsc{Root-Close}, 
  \textsc{Node-Close}
): Similar to the \textsc{Root-Wait} case.

\noindent
\textbf{Case} (\textsc{Root-Send}):
  \begin{mathpar}\small
  \inferrule[Root-Send]
  { k \in \mcI \\ 
    \mcP_k = \Node{d_k}{
      \mcN[\recvR{d_k}]
    }{
      \{ ( c_{i_k}, \mcP_{i_k} ) \}_{i_k \in \mcI_k},
      \{ \mcQ_{j_k} \}_{j_k \in \mcJ_k}
    } \\
    \mcI' = \{ i \in \mcI | c_i \in \FC{v} \} \\
    \mcP_k' = \Node{d_k}{
      \mcN[\return{\pairR{v}{d_k}{\Ln}}]
    }{
      \{ ( c_{i_k}, \mcP_{i_k} ) \}_{i_k \in \mcI_k \cup \mcI'},
      \{ \mcQ_{j_k} \}_{j_k \in \mcJ_k}
    }
  }
  { \Root{
      \mcM[\appR{\sendR{c_k}}{v}]
    }{
      \{ ( c_i, \mcP_i ) \}_{i \in \mcI},
      \{ \mcQ_j \}_{j \in \mcJ}
    } \\\\
    \Rrightarrow
    \Root{
      \mcM[\return{c_k}]
    }{
      \{ (c_i, \mcP_i) \}_{i \in \mcI \setminus (\{ k \} \cup \mcI')} \cup \{ (c_k, \mcP_k') \},
      \{ \mcQ_j \}_{j \in \mcJ}
    }
  }
  \end{mathpar}
  By inversion on the typing judgment
  $$
    \Vdash \Root{
      \mcM[\appR{\sendR{c_k}}{v}]
    }{
      \{ ( c_i, \mcP_i ) \}_{i \in \mcI},
      \{ \mcQ_j \}_{j \in \mcJ}
    }
  $$
  we have
  \begin{align*}
    & \overline{c_i \tL \ch{\kappa_i}{A_i}} ; \epsilon ; \epsilon \vdash
      \mcM[\appR{\sendR{c_k}}{v}] : 
      \CM{\unit} \\
    & \forall i \in \mcI, \neg{\ch{\kappa_i}{A_i}} \Vdash \mcP_i \\
    & \forall j \in \mcJ, \Vdash \mcQ_j
  \end{align*}

  \noindent
  From $k \in \mcI$, we know
  \begin{align*}
    & \neg{\ch{\kappa_k}{A_k}} \Vdash 
      \Node{d_k}{
        \mcN[\recvR{d_k}]
      }{
        \{ ( c_{i_k}, \mcP_{i_k} ) \}_{i_k \in \mcI_k},
        \{ \mcQ_{j_k} \}_{j_k \in \mcJ_k}
      }
  \end{align*}
  and by inversion on this typing judgment, we have
  \begin{align*}
    & \overline{c_{i_k} \tL \ch{\kappa_{i_k}}{A_{i_k}}}, d_k \tL \neg{\ch{\kappa_k}{A_k}} ; 
      \epsilon ; \epsilon \vdash
      \mcN[\recvR{d_k}] : 
      \CM{\unit} \\
    & \forall i_k \in \mcI_k, \neg{\ch{\kappa_{i_k}}{A_{i_k}}} \Vdash \mcP_{i_k} \\
    & \forall j_k \in \mcJ_k, \Vdash \mcQ_{j_k}
  \end{align*}

  \noindent
  Similarly to the reasoning in \Cref{theorem:session-fidelity} for the 
  \textsc{Proc-Comm} case, there exists $B''$ such that
  \begin{small}
  \begin{align*}
    &\overline{c_i \tL \ch{\kappa_i}{A_i}}_{(i \in \mcI \setminus (\{k\} \cup \mcI'))}, 
     c_k \tL \ch{\kappa_k}{B''[v/x]} ; \epsilon ; \epsilon \vdash 
     \mcM[\return{c_k}] : \CM{\unit}
    \\
    &\overline{c_{i_k} \tL \ch{\kappa_{i_k}}{A_{i_k}}},
     \overline{c_i \tL \ch{\kappa_i}{A_i}}_{(i \in \mcI')},
     d_k \tL \neg{\ch{\kappa_k}{B''[v/x]}} ; \epsilon ; \epsilon \vdash
     \mcN[\return{\pairR{v}{d_k}{\Ln}}] : \CM{\unit}
  \end{align*}
  \end{small}

  \noindent
  We can partition $\{ ( c_i, \mcP_i ) \}_{i \in \mcI \setminus \{k\}}$ into
  $\{ ( c_i, \mcP_i ) \}_{i \in \mcI \setminus (\{ k \} \cup \mcI')}$ and
  $\{ ( c_{i}, \mcP_{i} ) \}_{i \in \mcI'}$.

  \noindent
  This gives us the well-typed node
  \begin{align*}
    \neg{\ch{\kappa_k}{B''[v/x]}} \Vdash
    \Node{d_k}{
      \mcN[\return{\pairR{v}{d_k}{\Ln}}]
    }{
      \{ ( c_{i_k}, \mcP_{i_k} ) \}_{i_k \in \mcI_k \cup \mcI'},
      \{ \mcQ_{j_k} \}_{j_k \in \mcJ_k}
    }
  \end{align*}
  which in turn gives us the well-typed root
  \begin{align*}
    &\Vdash \Root{
      \mcM[\return{c_k}]
    }{
      \{ (c_i, \mcP_i) \}_{i \in \mcI \setminus (\{ k \} \cup \mcI')} \cup \{ (c_k, \mcP_k') \},
      \{ \mcQ_j \}_{j \in \mcJ}
    }
  \end{align*}
  and concludes this case.

\noindent
\textbf{Case} (
  \textsc{Node-Send}, 
  \textsc{Root-Recv}, 
  \textsc{Node-Recv}
): Similar to the \textsc{Root-Send} case.

\noindent
\textbf{Case} (\textsc{Root-\underline{Send}}):
  \begin{mathpar}\small
  \inferrule[Root-\underline{Send}]
  { k \in \mcI \\ 
    \mcP_k = \Node{d_k}{
      \mcN[\recvI{d_k}]
    }{
      \{ ( c_{i_k}, \mcP_{i_k} ) \}_{i_k \in \mcI_k},
      \{ \mcQ_{j_k} \}_{j_k \in \mcJ_k}
    } \\
    \mcP_k' = \Node{d_k}{
      \mcN[\return{\pairI{o}{d_k}{\Ln}}]
    }{
      \{ ( c_{i_k}, \mcP_{i_k} ) \}_{i_k \in \mcI_k},
      \{ \mcQ_{j_k} \}_{j_k \in \mcJ_k}
    }
  }
  { \Root{
      \mcM[\appI{\sendI{c_k}}{o}]
    }{
      \{ ( c_i, \mcP_i ) \}_{i \in \mcI},
      \{ \mcQ_j \}_{j \in \mcJ}
    } \\\\
    \Rrightarrow
    \Root{
      \mcM[\return{c_k}]
    }{
      \{ (c_i, \mcP_i) \}_{i \in \mcI \setminus \{ k \}} \cup \{ (c_k, \mcP_k') \},
      \{ \mcQ_j \}_{j \in \mcJ}
    }
  }
  \end{mathpar}
  By inversion on the typing judgment
  \begin{align*}
    &\Vdash \Root{
      \mcM[\appI{\sendI{c_k}}{o}]
    }{
      \{ ( c_i, \mcP_i ) \}_{i \in \mcI},
      \{ \mcQ_j \}_{j \in \mcJ}
    }
  \end{align*}
  we have
  \begin{align*}
    & \overline{c_i \tL \ch{\kappa_i}{A_i}} ; \epsilon ; \epsilon \vdash
      \mcM[\appI{\sendI{c_k}}{o}] : 
      \CM{\unit} \\
    & \forall i \in \mcI, \neg{\ch{\kappa_i}{A_i}} \Vdash \mcP_i \\
    & \forall j \in \mcJ, \Vdash \mcQ_j
  \end{align*}

  \noindent
  From $k \in \mcI$, we know
  \begin{align*}
    & \neg{\ch{\kappa_k}{A_k}} \Vdash 
      \Node{d_k}{
        \mcN[\recvI{d_k}]
      }{
        \{ ( c_{i_k}, \mcP_{i_k} ) \}_{i_k \in \mcI_k},
        \{ \mcQ_{j_k} \}_{j_k \in \mcJ_k}
      }
  \end{align*}
  and by inversion on this typing judgment, we have
  \begin{align*}
    & \overline{c_{i_k} \tL \ch{\kappa_{i_k}}{A_{i_k}}}, d_k \tL \neg{\ch{\kappa_k}{A_k}} ; 
      \epsilon ; \epsilon \vdash
      \mcN[\recvI{d_k}] : 
      \CM{\unit} \\
    & \forall i_k \in \mcI_k, \neg{\ch{\kappa_{i_k}}{A_{i_k}}} \Vdash \mcP_{i_k} \\
    & \forall j_k \in \mcJ_k, \Vdash \mcQ_{j_k}
  \end{align*}
  Similarly to the reasoning in \Cref{theorem:session-fidelity} for the 
  \textsc{Proc-Comm} case, there exists $B''$ such that
  \begin{align*}
    &\overline{c_i \tL \ch{\kappa_i}{A_i}}_{(i \in \mcI \setminus \{k\})}, 
     c_k \tL \ch{\kappa_k}{B''[o/x]} ; \epsilon ; \epsilon \vdash 
     \mcM[\return{c_k}] : \CM{\unit}
    \\
    &\overline{c_{i_k} \tL \ch{\kappa_{i_k}}{A_{i_k}}}, 
     d_k \tL \neg{\ch{\kappa_k}{B''[o/x]}} ; \epsilon ; \epsilon \vdash
     \mcN[\return{\pairI{o}{d_k}{\Ln}}] : 
     \CM{\unit}
  \end{align*}

  \noindent
  This gives us the well-typed node
  \begin{align*}
    \neg{\ch{\kappa_k}{B''[o/x]}} \Vdash
    \Node{d_k}{
      \mcN[\return{\pairI{o}{d_k}{\Ln}}]
    }{
      \{ ( c_{i_k}, \mcP_{i_k} ) \}_{i_k \in \mcI_k},
      \{ \mcQ_{j_k} \}_{j_k \in \mcJ_k}
    }
  \end{align*}
  which in turn gives us the well-typed root
  \begin{align*}
    \Vdash \Root{
      \mcM[\return{c_k}]
    }{
      \{ (c_i, \mcP_i) \}_{i \in \mcI \setminus \{ k \}} \cup \{ (c_k, \mcP_k') \},
      \{ \mcQ_j \}_{j \in \mcJ}
    }
  \end{align*}
  and concludes this case.

\noindent
\textbf{Case} (
  \textsc{Node-\underline{Send}}, 
  \textsc{Root-\underline{Recv}}, 
  \textsc{Node-\underline{Recv}}
): Similar to the \textsc{Root-\underline{Send}} case.

\noindent
\textbf{Case} (\textsc{Node-Forward}):
  \begin{mathpar}\small
  \inferrule[Node-Forward]
  { k \in \mcI \\ 
    \mcP_k = \Node{d_k}{
      \mcN[\recvR{d_k}]
    }{
      \{ ( c_{i_k}, \mcP_{i_k} ) \}_{i_k \in \mcI_k},
      \{ \mcQ_{j_k} \}_{j_k \in \mcJ_k}
    } \\
    \mcI' = \{ i \in \mcI | c_i \in \FC{v} \} \\
    d \in \FC{v} \\
    \mcP_k' = \Node{c_k}{
      \mcM[\return{c_k}]
    }{
      \{ (c_i, \mcP_i) \}_{i \in \mcI \setminus (\{ k \} \cup \mcI')},
      \{ \mcQ_j \}_{j \in \mcJ}
    }
  }
  { \Node{d}{
      \mcM[\appR{\sendR{c_k}}{v}]
    }{
      \{ ( c_i, \mcP_i ) \}_{i \in \mcI},
      \{ \mcQ_j \}_{j \in \mcJ}
    } \\\\
    \Rrightarrow
    \Node{d}{
      \mcN[\return{\pairR{v}{d_k}{\Ln}}]
    }{
      \{ ( c_{i_k}, \mcP_{i_k} ) \}_{i_k \in \mcI_k \cup \mcI'} \cup \{ (d_k, \mcP_k') \},
      \{ \mcQ_{j_k} \}_{j_k \in \mcJ_k}
    }
  }
  \end{mathpar}
  By inversion on the typing judgment
  \begin{align*}
    &\ch{\kappa}{A} \Vdash \Node{d}{
      \mcM[\appR{\sendR{c_k}}{v}]
    }{
      \{ ( c_i, \mcP_i ) \}_{i \in \mcI},
      \{ \mcQ_j \}_{j \in \mcJ}
    }
  \end{align*}
  we have
  \begin{align*}
    & \overline{c_i \tL \ch{\kappa_i}{A_i}}, d \tL \ch{\kappa}{A} ; \epsilon ; \epsilon \vdash
      \mcM[\appR{\sendR{c_k}}{v}] : 
      \CM{\unit} \\
    & \forall i \in \mcI, \neg{\ch{\kappa_i}{A_i}} \Vdash \mcP_i \\
    & \forall j \in \mcJ, \Vdash \mcQ_j
  \end{align*}

  \noindent
  From $k \in \mcI$, we know
  \begin{align*}
    & \neg{\ch{\kappa_k}{A_k}} \Vdash 
      \Node{d_k}{
        \mcN[\recvR{d_k}]
      }{
        \{ ( c_{i_k}, \mcP_{i_k} ) \}_{i_k \in \mcI_k},
        \{ \mcQ_{j_k} \}_{j_k \in \mcJ_k}
      }
  \end{align*}
  and by inversion on this typing judgment, we have
  \begin{align*}
    & \overline{c_{i_k} \tL \ch{\kappa_{i_k}}{A_{i_k}}}, d_k \tL \neg{\ch{\kappa_k}{A_k}} ; 
      \epsilon ; \epsilon \vdash
      \mcN[\recvR{d_k}] : 
      \CM{\unit} \\
    & \forall i_k \in \mcI_k, \neg{\ch{\kappa_{i_k}}{A_{i_k}}} \Vdash \mcP_{i_k} \\
    & \forall j_k \in \mcJ_k, \Vdash \mcQ_{j_k}
  \end{align*}
  Similarly to the reasoning in \Cref{theorem:session-fidelity} for the 
  \textsc{Proc-Comm} case, there exists $B''$ such that
  \begin{small}
  \begin{align*}
    &\overline{c_i \tL \ch{\kappa_i}{A_i}}_{(i \in \mcI \setminus (\{k\} \cup \mcI'))}, 
     c_k \tL \ch{\kappa_k}{B''[v/x]} ; \epsilon ; \epsilon \vdash 
     \mcM[\return{c_k}] : \CM{\unit}
    \\
    &\overline{c_{i_k} \tL \ch{\kappa_{i_k}}{A_{i_k}}},
     \overline{c_i \tL \ch{\kappa_i}{A_i}}_{(i \in \mcI')},
     d_k \tL \neg{\ch{\kappa_k}{B''[v/x]}}, d \tL \ch{\kappa}{A} ; \epsilon ; \epsilon \vdash
     \mcN[\return{\pairR{v}{d_k}{\Ln}}] : \CM{\unit}
  \end{align*}
  \end{small}

  \noindent
  We can partition $\{ ( c_i, \mcP_i ) \}_{i \in \mcI \setminus \{k\}}$ into
  $\{ ( c_i, \mcP_i ) \}_{i \in \mcI \setminus (\{ k \} \cup \mcI')}$ and
  $\{ ( c_{i}, \mcP_{i} ) \}_{i \in \mcI'}$.

  \noindent
  This gives us the well-typed node
  \begin{align*}
    \ch{\kappa_k}{B''[v/x]} \Vdash
    \Node{c_k}{
      \mcM[\return{c_k}]
    }{
      \{ (c_i, \mcP_i) \}_{i \in \mcI \setminus (\{ k \} \cup \mcI')},
      \{ \mcQ_j \}_{j \in \mcJ}
    }
  \end{align*}
  which in turn gives us the well-typed node
  \begin{align*}
    &\ch{\kappa}{A} \Vdash \Node{d}{
      \mcN[\return{\pairR{v}{d_k}{\Ln}}]
    }{
      \{ ( c_{i_k}, \mcP_{i_k} ) \}_{i_k \in \mcI_k \cup \mcI'} \cup \{ (d_k, \mcP_k') \},
      \{ \mcQ_{j_k} \}_{j_k \in \mcJ_k}
    }
  \end{align*}
  and concludes this case.

\noindent
\textbf{Case} (\textsc{Root-Child}):
  \begin{mathpar}\small
  \inferrule[Root-Child]
  { k \in \mcI \\
    \mcP_k \Rrightarrow \mcP_k' }
  { 
    \Root{m}{ 
      \{ ( c_i, \mcP_i ) \}_{i \in \mcI},
      \{ \mcQ_j \}_{j \in \mcJ} 
    } 
    \Rrightarrow
    \Root{m}{ 
      \{ ( c_i, \mcP_i ) \}_{i \in \mcI \setminus \{k\}} \cup \{ ( c_k, \mcP_k' ) \}, 
      \{ \mcQ_j \}_{j \in \mcJ}
    }
  }
  \end{mathpar}
  By inversion on the typing judgment
  \begin{align*}
    \Vdash \Root{m}{ 
      \{ ( c_i, \mcP_i ) \}_{i \in \mcI},
      \{ \mcQ_j \}_{j \in \mcJ} 
    }
  \end{align*}
  we have
  \begin{align*}
    & \overline{c_i \tL \ch{\kappa_i}{A_i}} ; \epsilon ; \epsilon \vdash m : \CM{\unit} \\
    & \forall i \in \mcI, \neg{\ch{\kappa_i}{A_i}} \Vdash \mcP_i \\
    & \forall j \in \mcJ, \Vdash \mcQ_j
  \end{align*}

  \noindent
  From $k \in \mcI$ and the typing judgment $\neg{\ch{\kappa_k}{A_k}} \Vdash \mcP_k$,

  \noindent
  By the induction hypothesis, we have the well-typed process 
  $\neg{\ch{\kappa_k}{A_k}} \Vdash \mcP_k'$.

  \noindent
  This gives us the well-typed root
  \begin{align*}
    \Vdash \Root{m}{ 
      \{ ( c_i, \mcP_i ) \}_{i \in \mcI \setminus \{k\}} \cup \{ ( c_k, \mcP_k' ) \}, 
      \{ \mcQ_j \}_{j \in \mcJ}
    }
  \end{align*}
  and concludes this case.

\noindent
\textbf{Case} (
  \textsc{Node-Child},
  \textsc{Root-SubTree},
  \textsc{Node-SubTree}
): Similar to the \textsc{Root-Child} case.

\noindent
\textbf{Case} (\textsc{Root-Expr}):
  \begin{mathpar}\small
  \inferrule[Root-Expr]
  { 
    m \Leadsto m'
  }
  { 
    \Root{m}{
      \{ ( c_i, \mcP_i ) \}_{i \in \mcI},
      \{ \mcQ_j \}_{j \in \mcJ}
    }
    \Rrightarrow
    \Root{m'}{
      \{ ( c_i, \mcP_i ) \}_{i \in \mcI},
      \{ \mcQ_j \}_{j \in \mcJ} 
    }
  }
  \end{mathpar}
  By inversion on the typing judgment
  \begin{align*}
    \Vdash \Root{m}{
      \{ ( c_i, \mcP_i ) \}_{i \in \mcI},
      \{ \mcQ_j \}_{j \in \mcJ}
    }
  \end{align*}
  we have
  \begin{align*}
    & \overline{c_i \tL \ch{\kappa_i}{A_i}} ; \epsilon ; \epsilon \vdash m : \CM{\unit} \\ 
    & \forall i \in \mcI, \neg{\ch{\kappa_i}{A_i}} \Vdash \mcP_i \\
    & \forall j \in \mcJ, \Vdash \mcQ_j
  \end{align*}

  \noindent
  By \Cref{theorem:program-subject-reduction} we have 
  $$
    \overline{c_i \tL \ch{\kappa_i}{A_i}} ; \epsilon ; \epsilon \vdash m' : \CM{\unit}
  $$
  which gives us the well-typed root
  \begin{align*}
    \Vdash \Root{m'}{
      \{ ( c_i, \mcP_i ) \}_{i \in \mcI},
      \{ \mcQ_j \}_{j \in \mcJ} 
    }
  \end{align*}
  and concludes this case.

\noindent
\textbf{Case} (\textsc{Node-Expr}): Similar to the \textsc{Root-Expr} case.
\end{proof}

Now we can define the reachability of a configuration through spawning trees.
\begin{definition}[Reachability]
  A configuration $P$ is \emph{reachable} if there exists a spawning tree $\mcP$
  such that $\Vdash \mcP$ and $|\mcP| = P$.
\end{definition}

Now to prove the main progress theorem, we first need to define a few auxiliary judgments
that will help us characterize the state of a spawning tree.

The first judgment characterizes when a spawning tree is \emph{terminal}, i.e., it has no
further reductions. The expression in the root of the spawning tree must be a $\return{\ii}$ 
expression. The set of child processes should be empty and all sub-trees must be terminal.
\begin{mathpar}
  \inferrule
  { \forall j \in \mcJ,\quad  \mcQ_j~\Terminal }
  { \Root{\return{\ii}}{\emptyset, \{ Q_j \}_{j \in \mcJ}}~\Terminal }
  \textsc{(Terminal-Root)}
\end{mathpar}

\begin{lemma}[Terminal]\label[lemma]{lemma:terminal}
  If $\mcP~\Terminal$, then $|\mcP| \equiv \proc{\return{\ii}}$.
\end{lemma}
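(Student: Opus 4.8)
The plan is to proceed by induction on the derivation of $\mcP~\Terminal$. Since \textsc{Terminal-Root} is the only rule that can conclude $\mcP~\Terminal$, the spawning tree must have the shape $\mcP = \Root{\return{\ii}}{\emptyset, \{ \mcQ_j \}_{j \in \mcJ}}$ with an empty set of children and with every subtree $\mcQ_j$ itself terminal. (Note that detached subtrees are always roots, as witnessed by the \Close{}/\Wait{}-rules and the \textsc{Valid-Root} typing of each $\mcQ_j$, so no node case arises.) This single-constructor analysis immediately pins down both the root term and the absence of active children, which is exactly the data that the flattening operation consumes.

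First I would apply \textsc{Flatten-Root} to this shape. Because the child index set $\mcI$ is empty, there are no channel pairs $\overline{c_i d_i}$ to bind and no child processes $\overline{P_i}$ to place in parallel; the empty channel scope acts as the identity, so the flattening collapses to $|\mcP| = \proc{\return{\ii}} \mid \overline{Q_j}$, where $Q_j = |\mcQ_j|$ is the flattening of each subtree. Next, the induction hypothesis applies to each terminal subtree $\mcQ_j$, giving $|\mcQ_j| \equiv \proc{\return{\ii}}$ for every $j \in \mcJ$. Since $\equiv$ is defined as a congruence relation, it is compatible with parallel composition, so I can rewrite each component $Q_j$ to $\proc{\return{\ii}}$ underneath the parallel composition, yielding $|\mcP| \equiv \proc{\return{\ii}} \mid \overline{\proc{\return{\ii}}}$. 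Repeated use of the structural congruence rule $P \mid \proc{\return{\ii}} \equiv P$, together with commutativity and associativity of $\mid$, then absorbs all of the unit processes and leaves $|\mcP| \equiv \proc{\return{\ii}}$.

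The only real bookkeeping obstacle is the treatment of the empty parallel composition: when $\mcJ = \emptyset$ the family $\overline{Q_j}$ is empty and $|\mcP|$ is literally $\proc{\return{\ii}}$, so the base case holds trivially, while for non-empty $\mcJ$ the absorption argument above handles the inductive step. Care is also needed to invoke the congruence closure of $\equiv$ correctly when rewriting the $Q_j$ in place, but this is immediate from the fact that $\equiv$ is the least congruence generated by the given rules. No session-typing or reduction reasoning is required; the result is purely a statement about flattening composed with structural congruence.
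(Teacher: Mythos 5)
Your proposal is correct and matches the paper's proof, which is stated only as ``by induction on the derivation of $\mcP~\Terminal$ and definition of structural congruence''; you have simply spelled out the same induction in full detail, including the empty-scope collapse in \textsc{Flatten-Root} and the absorption of unit processes via $P \mid \proc{\return{\ii}} \equiv P$.
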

\begin{proof}
  By induction on the derivation of $\mcP~\Terminal$ and definition of structural congruence.
\end{proof}

The second judgment characterizes when a process is \emph{poised} to communicate on a
channel $d$. These poised expressions correspond to thunked concurrency primitives that
are waiting to be executed. The \Fork{}-operation is not include here because it does not
rely on any channel.
\begin{mathpar}
  \inferrule[Poised-Explicit-Recv]
  { }
  { \recvR{d}~\Poised{d} }

  \inferrule[Poised-Implicit-Recv]
  { }
  { \recvI{d}~\Poised{d} }

  \inferrule[Poised-Explicit-Send]
  { }
  { \appR{\sendR{d}}{v}~\Poised{d} }

  \inferrule[Poised-Implicit-Send]
  { }
  { \appI{\sendI{d}}{o}~\Poised{d} }

  \inferrule[Poised-Close]
  { }
  { \close{d}~\Poised{d} }

  \inferrule[Poised-Wait]
  { }
  { \wait{d}~\Poised{d} }

  \inferrule[Poised-Bind]
  { m~\Poised{d} }
  { \letin{x}{m}{n}~\Poised{d} }
\end{mathpar}

\begin{lemma}[Thunk Evaluation Context]\label[lemma]{lemma:thunk-eval-context}
  Given thunk $\tau$, there exists an evaluation context $\mcM$ such that
  $\tau = \mcM[m]$ where $m$ is one of the following forms:
  \[
    \fork{x : A}{n} \mid
    \recvR{c} \mid
    \recvI{c} \mid
    \appR{\sendR{c}}{v} \mid
    \appI{\sendI{c}}{o} \mid
    \close{c} \mid
    \wait{c}
  \]
\end{lemma}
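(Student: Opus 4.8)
The plan is to proceed by straightforward structural induction on the grammar of thunks $\tau$ given in the syntax figure. That grammar has exactly one recursive production, namely $\letin{x}{\tau'}{n}$ in which the bound term $\tau'$ is itself a thunk; every other production is one of the atomic concurrency primitives listed in the statement. This shape mirrors the evaluation-context grammar $\mcM ::= [\cdot] \mid \letin{x}{\mcM}{n}$ almost exactly, and that correspondence is what makes the lemma go through cleanly.

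First I would dispatch the base cases. If $\tau$ is any of $\fork{x : A}{n}$, $\recvR{c}$, $\recvI{c}$, $\appR{\sendR{c}}{v}$, $\appI{\sendI{c}}{o}$, $\close{c}$, or $\wait{c}$, then I take $\mcM = [\cdot]$ and $m = \tau$. Since $[\cdot]$ plugged with $m$ is just $m$, we have $\tau = \mcM[m]$ with $m$ of the required form, concluding these cases.

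For the inductive step, suppose $\tau = \letin{x}{\tau'}{n}$ where $\tau'$ is a thunk. By the induction hypothesis there exist an evaluation context $\mcN$ and an atomic head $m$ of one of the required forms such that $\tau' = \mcN[m]$. I then set $\mcM = \letin{x}{\mcN}{n}$, which is a legal evaluation context by the second production of the context grammar, and compute $\mcM[m] = \letin{x}{\mcN[m]}{n} = \letin{x}{\tau'}{n} = \tau$, with $m$ still of the required form. This completes the induction.

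There is no genuine obstacle here: the only points to verify are that the recursive thunk production and the recursive evaluation-context production agree (so that the context constructed in the inductive step is well-formed) and that the plugged head is unchanged, both of which are immediate from the definitions. The lemma is essentially a bookkeeping fact that lets one pull the \emph{active} concurrency primitive out of a nest of monadic binds, so that the process-reduction rules of \Cref{appendix:process-semantics} — each phrased in terms of an evaluation context wrapping such a primitive — can fire. I expect to use it in the global progress argument to case-split on which primitive a poised thunk exposes.
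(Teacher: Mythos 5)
Your proof is correct and matches the paper's, which simply states ``by induction on the structure of $\tau$''; your base cases and the inductive $\letin{x}{\tau'}{n}$ step spell out exactly that induction. Nothing is missing.
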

\begin{proof}
  By induction on the structure of $\tau$.
\end{proof}

\begin{lemma}[Poised Evaluation Context]\label[lemma]{lemma:poised-eval-context}
  If $m~\Poised{d}$ then there exists an evaluation context $\mcM$ such that
  $m = \mcM[m']$ where $m'$ is one of the following forms:
  \[
    \recvR{d} \mid
    \recvI{d} \mid
    \appR{\sendR{d}}{v} \mid
    \appI{\sendI{d}}{o} \mid
    \close{d} \mid
    \wait{d}
  \]
\end{lemma}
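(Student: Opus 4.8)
The plan is to proceed by induction on the derivation of the judgment $m~\Poised{d}$, mirroring the structure of the immediately preceding \Cref{lemma:thunk-eval-context}. The key observation is that the evaluation-context grammar $\mcM ::= [\cdot] \mid \letin{x}{\mcM}{m}$ permits recursion only through the bound subterm of a monadic \bsf{let}, and this is precisely the shape of the one recursive rule (\textsc{Poised-Bind}) defining $\Poised{d}$. So the two inductive structures line up exactly, and each case is essentially forced.

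First I would handle the six axiom rules of the poised judgment: \textsc{Poised-Explicit-Recv}, \textsc{Poised-Implicit-Recv}, \textsc{Poised-Explicit-Send}, \textsc{Poised-Implicit-Send}, \textsc{Poised-Close}, and \textsc{Poised-Wait}. In each of these base cases, $m$ is literally one of the six target forms $\recvR{d}$, $\recvI{d}$, $\appR{\sendR{d}}{v}$, $\appI{\sendI{d}}{o}$, $\close{d}$, or $\wait{d}$. I would therefore take the evaluation context to be the empty hole $\mcM = [\cdot]$ and set $m' = m$, so that $m = \mcM[m']$ holds trivially with $m'$ in the required form.

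Next I would treat the single inductive rule \textsc{Poised-Bind}, where $m = \letin{x}{m_0}{n}$ and the premise is $m_0~\Poised{d}$. Applying the induction hypothesis to $m_0$ yields an evaluation context $\mcM_0$ and a term $m'$ of one of the six forms with $m_0 = \mcM_0[m']$. I would then extend the context by setting $\mcM = \letin{x}{\mcM_0}{n}$, which is again a well-formed evaluation context by the grammar; filling its hole gives $\mcM[m'] = \letin{x}{\mcM_0[m']}{n} = \letin{x}{m_0}{n} = m$, with the same witnessing $m'$, completing the case.

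There is no substantive obstacle here: the lemma is a purely syntactic decomposition, and its only content is that the poised judgment never forces a communication redex into a position the evaluation-context grammar cannot reach. The one point worth verifying carefully is that \textsc{Poised-Bind} recurses exclusively on the bound term $m_0$ and never on the continuation $n$; this matches the asymmetry of the evaluation contexts, which place the hole only to the left of a \bsf{let}, so the induction closes cleanly.
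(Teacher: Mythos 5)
Your proposal is correct and matches the paper's proof, which is exactly the one-line induction on the derivation of $m~\Poised{d}$ that you carry out in detail: empty hole for the six base rules and context extension for \textsc{Poised-Bind}. Nothing further is needed.
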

\begin{proof}
  By induction on the derivation of $m~\Poised{d}$.
\end{proof}

The final judgment characterizes a spawning tree that is poised to communicate on
channel $d$ connected to its parent. 
\begin{mathpar}
  \inferrule[Poised-Node]
  { m~\Poised{d} }
  { 
    \Node{d}{m}{ 
      \{ ( c_i, \mcP_i ) \}_{i \in \mcI}, 
      \{ \mcQ_j \}_{j \in \mcJ} }
    ~\Poised{d} 
  }
\end{mathpar}

With these definitions, we can now prove the spawning tree progress lemma.
\begin{lemma}[Spawning Tree Progress]\label[lemma]{lemma:spawning-tree-progress}
  If $\Vdash \mcP$, then either $\mcP~\Terminal$ or there exists $\mcP'$ such that $\mcP \Rrightarrow \mcP'$.
  If $A \Vdash \mcP$, then either $\mcP~\Poised{d}$ or there exists $\mcP'$ such that $\mcP \Rrightarrow \mcP'$.
\end{lemma}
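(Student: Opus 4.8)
The plan is to prove both statements simultaneously by mutual induction on the derivation of the validity judgment ($\Vdash \mcP$ for the root case and $\ch{\kappa}{A} \Vdash \mcP$ for the node case), which is the natural induction principle since the two judgments are mutually defined via \textsc{Valid-Root} and \textsc{Valid-Node}. In each case I would invert the validity judgment to expose the term $m$ at the node, its well-typing in the channel context $\overline{c_i \tL \ch{\kappa_i}{A_i}}$ (plus $d \tL \ch{\kappa}{A}$ in the node case), and the recursive validity of children $\mcP_i$ and subtrees $\mcQ_j$. The key first move is to apply program progress (\Cref{theorem:program-progress}) to $m$: either $m \Leadsto m'$, in which case \textsc{Root-Expr}/\textsc{Node-Expr} immediately furnishes a reduction; or $m$ is a value. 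Since $m : \CM{\unit}$, the monad canonical forms lemma (\Cref{lemma:program-monad-canonical}) tells us a value $m$ is either $\return{u}$ or a thunk $\tau$.

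When $m = \return{u}$, the canonical form of $\unit$ (\Cref{lemma:program-unit-canonical}) forces $u = \ii$, so $m = \return{\ii}$; in the node case this is exactly \Poised{} (vacuously, with an empty set of children to fire) --- actually here the node has finished its own computation, and I would show the relevant termination/detachment is handled by the parent, so the node qualifies as poised only if it still holds a live parent channel, otherwise the tree is terminal. When $m = \tau$ is a thunk, I would invoke \Cref{lemma:thunk-eval-context} to decompose $\tau = \mcM[m_0]$ where $m_0$ is one of the primitive concurrency operations (\Fork{}, \Recv{}, \Send{}, \Close{}, \Wait{}, and their underlined variants). A \Fork{} fires unconditionally via \textsc{Root-Fork}/\textsc{Node-Fork}. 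For the remaining operations, $m_0$ acts on some channel; by the typing of $m$ in the channel context, that channel must be one of the $c_i$ (a child channel) or the parent channel $d$. If it is a child channel $c_k$, I would appeal to the induction hypothesis on the corresponding child $\mcP_k$: since $\neg{\ch{\kappa_k}{A_k}} \Vdash \mcP_k$, the child is either poised on $d_k$ (its parent channel, which is the dual of $c_k$) or can itself step. If the child steps, \textsc{Root-Child}/\textsc{Node-Child} lifts that to a reduction of the whole tree; if the child is poised, then the parent's action on $c_k$ and the child's dual poised action on $d_k$ form a matching communication pair, and the appropriate synchronization rule (\textsc{Root-Send}, \textsc{Root-Recv}, \textsc{Root-Wait}, \textsc{Node-Forward}, etc.) fires.

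The delicate bookkeeping is matching the parent's operation against the child's poised operation and selecting the correct reduction rule. Here I would use \Cref{lemma:poised-eval-context} to extract the exact head form of the child's poised term, and then perform a case analysis confirming that duality of the channel types (guaranteed by the validity judgment, which propagates $\neg{\ch{\kappa_i}{A_i}}$ to each child) makes the operations complementary: a \Send{} at the parent meets a \Recv{} at the child, a \Close{} meets a \Wait{}, and so on. The \textsc{Node-Forward} case requires additional attention to the side condition $d \in \FC{v}$, distinguishing whether the value being sent mentions the parent channel. Finally, if the operation acts on the parent channel $d$ (only possible in the node case), then the node is poised on $d$ by \textsc{Poised-Node}, discharging that branch of the disjunction; the subtrees $\mcQ_j$ contribute nothing to progress since they are detached, and when all children are exhausted and $m = \return{\ii}$ with every subtree terminal, the tree is \Terminal{}.

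The main obstacle I expect is the exhaustiveness and correctness of the synchronization case analysis: I must verify that for every poised child operation there is exactly one applicable spawning-tree rule, that the channel-type duality enforced by validity rules out mismatched pairs (e.g. a \Send{} meeting a \Send{}), and that linearity precludes the cyclic channel dependencies that would otherwise block a communication --- in particular confirming, as noted in the \textsc{Node-Recv} discussion, that the child's parent channel $d_k$ cannot appear in a forwarded value, so no deadlock-inducing cycle arises. Establishing that this local matching always succeeds, using the inversion lemmas for the session primitives (\Cref{lemma:program-inversion-explicit-send}, \Cref{lemma:program-inversion-explicit-recv}, \Cref{lemma:program-inversion-close}, \Cref{lemma:program-inversion-wait}) together with the convertibility corollaries for channel and action types, is where the real work lies.
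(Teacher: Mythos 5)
Your overall strategy---mutual induction on the validity judgment, program progress plus the monad canonical-forms lemma to split $m$ into a step, a $\return{v}$, or a thunk, then thunk decomposition and a duality-driven match against the poised child, with the parent-channel case yielding $\Poised{d}$ and the $d \in \FC{v}$ side condition routing to \textsc{Node-Forward}---is exactly the paper's proof. However, two specific steps in your write-up fail as stated. First, the $m = \return{\ii}$ case for a node is neither ``vacuously poised'' nor can the tree be terminal there ($\Terminal$ is only defined for roots, and $\return{\ii}$ matches no $\Poised{d}$ rule). The correct resolution, which you do not identify, is that this case is \emph{impossible}: by \Cref{lemma:program-inversion-return} and \Cref{lemma:program-inversion-unit}, a well-typed $\return{\ii} : \CM{\unit}$ forces its channel context to be empty, but \textsc{Valid-Node} places $d \tL \ch{\kappa}{A}$ in that context---a contradiction. (In the root case the same inversion is what gives you $\mcI = \emptyset$, which you need in order to invoke \textsc{Terminal-Root}.)

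Second, you assert that the detached subtrees $\mcQ_j$ ``contribute nothing to progress,'' but they are sometimes the \emph{only} source of progress: \textsc{Terminal-Root} requires every $\mcQ_j$ to be terminal, so if $m = \return{\ii}$ with no children but some $\mcQ_k$ is not terminal, the tree is not terminal and your argument produces no reduction. The paper disposes of this at the start of each case: apply the induction hypothesis (first statement) to each $\mcQ_j$; if any is non-terminal it steps, and \textsc{Root-SubTree}/\textsc{Node-SubTree} lifts that step to the whole tree, so one may thereafter assume all subtrees terminal. With these two repairs your proof coincides with the paper's.
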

\begin{proof}
  By mutual induction on the derivation of $\Vdash \mcP$ and $A \Vdash \mcP$.

\noindent
\textbf{Case} (\textsc{Valid-Root})
  \begin{mathpar}\small
  \inferrule
  { \overline{c_i \tL \ch{\kappa_i}{A_i}} ; \epsilon ; \epsilon \vdash m :\CM{\unit} \\
    \forall i \in \mcI,\ \neg{\ch{\kappa_i}{A_i}} \Vdash \mcP_i  \\
    \forall j \in \mcJ,\ \Vdash \mcQ_j }
  { \Vdash \Root{m}{ \{ ( c_i, \mcP_i ) \}_{i \in \mcI}, \{ \mcQ_j \}_{j \in \mcJ} } }
  \textsc{(Valid-Root)}
  \end{mathpar}
  Apply the induction hypothesis on $\mcQ_j$. If any $\mcQ_j$ is not terminal, then
  there exists $\mcQ_k$ such that $\mcQ_k \Rrightarrow \mcQ_k'$. By the \textsc{Root-SubTree} rule,
  we are done. Thus we can assume that $\mcQ_j$ are all terminal.

  Next, we analyze the expression $m$. By \Cref{theorem:program-progress}, either
  $m \Leadsto m'$ or $m$ is a value. If $m \Leadsto m'$, then by the \textsc{Root-Expr} rule, we are done. Thus we can assume that $m$ is a value.

  By \Cref{lemma:program-monad-canonical} we know that either 
  \begin{enumerate}
    \item exists value $v$ such that $m = \return{v}$
    \item $m$ is a thunk
  \end{enumerate}

  In sub-case (1), if $m = \return{v}$, then by 
  \Cref{lemma:program-inversion-return} we have 
  $\overline{c_i \tL \ch{\kappa_i}{A_i}} ; \epsilon ; \epsilon \vdash v :\unit$.
  By \Cref{lemma:program-unit-canonical}, we have $v = \ii$.
  By \Cref{lemma:program-inversion-unit}, we have $\overline{c_i \tL \ch{\kappa_i}{A_i}} = \epsilon$.
  Thus $\mcI = \emptyset$ and we have the terminal spawning tree 
  $\Root{\return{\ii}}{\emptyset, \{ Q_j \}_{j \in \mcJ}}$.

  In sub-case (2), if $m$ is a thunk, then we can use \Cref{lemma:thunk-eval-context} to 
  decompose $m$ into an evaluation context $\mcM$ and a sub-expression $m'$ such that 
  $m = \mcM[m']$ where $m'$ is one of the following forms:
  \[
    \fork{x : A}{n} \mid
    \recvR{c} \mid
    \recvI{c} \mid
    \appR{\sendR{c}}{v} \mid
    \appI{\sendI{c}}{o} \mid
    \close{c} \mid
    \wait{c}
  \]
  We now perform case analysis on $m'$. Most cases are similar and straightforward, 
  so we only show the following interesting cases in detail:
  \begin{itemize}
    \item If $m' = \fork{x : A}{n}$, then applying \textsc{Root-Fork} yields a reduction.
    \item
      If $m' = \recvR{c}$, then by \Cref{lemma:eval-context} and 
      \Cref{lemma:program-inversion-explicit-recv} we have $c \in \{ c_i \}_{i \in \mcI}$.
      Let $k$ be such that $c_k = c$. Apply the induction hypothesis on all $\mcP_i$.
      If any $\mcP_i$ is not poised on $d_i$, then there exists $\mcP_i'$ such that
      $\mcP_i \Rrightarrow \mcP_i'$. By the \textsc{Root-Child} rule, we are done.
      Thus we can assume that all $\mcP_i$ are poised on $d_i$.
      In particular, $\mcP_k$ is poised on $d_k$.
      By inversion of \textsc{Poised-Node} and \Cref{lemma:poised-eval-context}, 
      we can decompose $\mcP_k$ into an evaluation context $\mcN$ and a sub-expression $m_k'$ such that
      $m_k'$ is one of the following forms:
      \[
        \recvR{d_k} \mid
        \recvI{d_k} \mid
        \appR{\sendR{d_k}}{v} \mid
        \appI{\sendI{d_k}}{o} \mid
        \close{d_k} \mid
        \wait{d_k}
      \]
      Since $\mcP_k$ is well-typed, the type of dual channel $d_k$ is $\neg{\ch{\kappa_k}{A_k}}$.
      So $m_k'$ must be $\appR{\sendR{d_k}}{v}$. Applying \textsc{Root-Recv} yields a reduction.
  \end{itemize}
  This concludes the \textsc{Valid-Root} case.

\noindent
\textbf{Case} (\textsc{Valid-Node}):
  \begin{mathpar}\small
  \inferrule
  {  \overline{c_i \tL \ch{\kappa_i}{A_i}}, d \tL \ch{\kappa}{A} ; \epsilon ; \epsilon \vdash m :\CM{\unit} \\
    \forall i \in \mcI,\ \neg{\ch{\kappa_i}{A_i}} \Vdash \mcP_i \\ 
    \forall j \in \mcJ,\ \Vdash \mcQ_j }
  { \ch{\kappa}{A} \Vdash \Node{d}{m}{ \{ ( c_i, \mcP_i ) \}_{i \in \mcI}, \{ \mcQ_j \}_{j \in \mcJ} } }
  \textsc{(Valid-Node)}
  \end{mathpar}
  Apply the induction hypothesis on $\mcQ_j$. If any $\mcQ_j$ is not terminal, then
  there exists $\mcQ_k$ such that $\mcQ_k \Rrightarrow \mcQ_k'$. By the \textsc{Node-SubTree} rule,
  we are done. Thus we can assume that $\mcQ_j$ are all terminal.

  Next, we analyze the expression $m$. By \Cref{theorem:program-progress}, either
  $m \Leadsto m'$ or $m$ is a value. If $m \Leadsto m'$, then by the \textsc{Node-Expr} rule, we are done. Thus we can assume that $m$ is a value.

  By \Cref{lemma:program-monad-canonical} we know that either 
  \begin{enumerate}
    \item exists value $v$ such that $m = \return{v}$
    \item $m$ is a thunk
  \end{enumerate}

  In sub-case (1), if $m = \return{v}$, then applying \Cref{lemma:program-inversion-return}
  yields 
  $\overline{c_i \tL \ch{\kappa_i}{A_i}}, d \tL \ch{\kappa}{A} ; \epsilon ; \epsilon \vdash v :\unit$.
  By \Cref{lemma:program-unit-canonical}, we have $v = \ii$.
  By \Cref{lemma:program-inversion-unit}, we have 
  $\overline{c_i \tL \ch{\kappa_i}{A_i}}, d \tL \ch{\kappa}{A} = \epsilon$.
  which is a contradiction since $d \tL \ch{\kappa}{A}$ is not empty.
  Thus this sub-case is impossible.

  In sub-case (2), if $m$ is a thunk, then we can use \Cref{lemma:thunk-eval-context} to 
  decompose $m$ into an evaluation context $\mcM$ and a sub-expression $m'$ such that 
  $m = \mcM[m']$ where $m'$ is one of the following forms:
  \[
    \fork{x : A}{n} \mid
    \recvR{c} \mid
    \recvI{c} \mid
    \appR{\sendR{c}}{v} \mid
    \appI{\sendI{c}}{o} \mid
    \close{c} \mid
    \wait{c}
  \]
  We now perform case analysis on $m'$. Most cases are similar and straightforward, 
  so we only show the following interesting cases in detail:
  \begin{itemize}
    \item If $m' = \fork{x : A}{n}$, then applying \textsc{Node-Fork} yields a reduction.
    \item 
      If $m' = \recvR{c}$, then by \Cref{lemma:eval-context} and
      \Cref{lemma:program-inversion-explicit-recv} we have either 
      $c = d$ or $c \in \{ c_i \}_{i \in \mcI}$. In the case that $c = d$, we are done
      by the \textsc{Poised-Node} rule. In the case that $c \in \{ c_i \}_{i \in \mcI}$,
      let $k$ be such that $c_k = c$. Apply the induction hypothesis on all $\mcP_i$.
      If any $\mcP_i$ is not poised on $d_i$, then there exists $\mcP_i'$ such that
      $\mcP_i \Rrightarrow \mcP_i'$. By the \textsc{Node-Child} rule, we are done.
      Thus we can assume that all $\mcP_i$ are poised on $d_i$.
      In particular, $\mcP_k$ is poised on $d_k$.
      By inversion of \textsc{Poised-Node} and \Cref{lemma:poised-eval-context}, 
      we can decompose $\mcP_k$ into an evaluation context $\mcN$ and a sub-expression $m_k'$ such that
      $m_k'$ is one of the following forms:
      \[
        \recvR{d_k} \mid
        \recvI{d_k} \mid
        \appR{\sendR{d_k}}{v} \mid
        \appI{\sendI{d_k}}{o} \mid
        \close{d_k} \mid
        \wait{d_k}
      \]
      Since $\mcP_k$ is well-typed, the type of dual channel $d_k$ is $\neg{\ch{\kappa_k}{A_k}}$.
      So $m_k'$ must be $\appR{\sendR{d_k}}{v}$. Applying \textsc{Node-Recv} yields a reduction.
    \item
      If $m' = \appR{\sendR{c}}{v}$, then by \Cref{lemma:eval-context} and
      \Cref{lemma:program-inversion-explicit-send} we have either 
      $c = d$ or $c \in \{ c_i \}_{i \in \mcI}$. In the case that $c = d$, we are done
      by the \textsc{Poised-Node} rule. In the case that $c \in \{ c_i \}_{i \in \mcI}$,
      let $k$ be such that $c_k = c$. Apply the induction hypothesis on all $\mcP_i$.
      If any $\mcP_i$ is not poised on $d_i$, then there exists $\mcP_i'$ such that
      $\mcP_i \Rrightarrow \mcP_i'$. By the \textsc{Node-Child} rule, we are done.
      Thus we can assume that all $\mcP_i$ are poised on $d_i$.
      In particular, $\mcP_k$ is poised on $d_k$.
      By inversion of \textsc{Poised-Node} and \Cref{lemma:poised-eval-context}, 
      we can decompose $\mcP_k$ into an evaluation context $\mcN$ and a sub-expression $m_k'$ such that
      $m_k'$ is one of the following forms:
      \[
        \recvR{d_k} \mid
        \recvI{d_k} \mid
        \appR{\sendR{d_k}}{v} \mid
        \appI{\sendI{d_k}}{o} \mid
        \close{d_k} \mid
        \wait{d_k}
      \]
      Since $\mcP_k$ is well-typed, the type of dual channel $d_k$ is $\neg{\ch{\kappa_k}{A_k}}$.
      So $m_k'$ must be $\recvR{d_k}$. 
      Now if $d \notin \FC{v}$, then applying \textsc{Node-Send} yields a reduction. 
      Otherwise if $d \in \FC{v}$, then applying \textsc{Node-Forward} yields a reduction.
  \end{itemize}
This concludes the \textsc{Valid-Node} case.
\end{proof}
\clearpage

Finally, we can prove the main global progress theorem.
\begin{theorem}[Global Progress]\label{theorem:global-progress}
  Given a reachable configuration $P$, either
  \begin{itemize}
    \item $P \equiv \proc{\return{\ii}}$, or
    \item there exists $Q$ such that $P \Rrightarrow Q$ and $Q$ is reachable.
  \end{itemize}
\end{theorem}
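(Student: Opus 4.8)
The plan is to unpack the definition of reachability and then chain together the three spawning-tree lemmas established above. Since $P$ is reachable, I obtain a spawning tree $\mcP$ with $\Vdash \mcP$ and $|\mcP| = P$; the entire argument then proceeds on $\mcP$ rather than directly on $P$.

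First I would apply the root case of the Spawning Tree Progress lemma (\Cref{lemma:spawning-tree-progress}) to $\Vdash \mcP$. This yields a dichotomy: either $\mcP~\Terminal$, or there exists $\mcP'$ with $\mcP \Rrightarrow \mcP'$. In the terminal branch, I invoke the Terminal lemma (\Cref{lemma:terminal}) to conclude $|\mcP| \equiv \proc{\return{\ii}}$, which is exactly $P \equiv \proc{\return{\ii}}$, the first disjunct of the theorem.

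In the reducing branch, I take $Q \triangleq |\mcP'|$. By the Spawning Tree Simulation lemma (\Cref{lemma:spawning-tree-simulation}), the step $\mcP \Rrightarrow \mcP'$ transports to a process step $|\mcP| \Rrightarrow |\mcP'|$, i.e. $P \Rrightarrow Q$. It remains to show $Q$ is reachable, which I get from the Spawning Tree Fidelity lemma (\Cref{lemma:spawning-tree-fidelity}): validity is preserved, so $\Vdash \mcP'$, and together with $|\mcP'| = Q$ this witnesses reachability of $Q$ by definition. This gives the second disjunct.

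The genuine content of this result lives in the three supporting lemmas rather than in the final assembly, so the main obstacle here is simply to make the invariants line up: the progress lemma must hand back a spawning-tree step (not a raw process step) so that both simulation and fidelity apply to the \emph{same} $\mcP'$, and fidelity is what keeps the reachability predicate inductive along every reduction. The deadlock-avoidance itself is already discharged inside the \textsc{Valid-Node} analysis of \Cref{lemma:spawning-tree-progress}, where the dual channel typing of a poised child forces a matching communication partner, ruling out the stuck configurations that an arbitrary (cyclic) well-typed process could otherwise exhibit.
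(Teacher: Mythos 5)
Your proposal matches the paper's proof exactly: both unpack reachability to obtain a valid spawning tree, apply Spawning Tree Progress to get the terminal/reducing dichotomy, dispatch the terminal case with the Terminal lemma, and in the reducing case combine Spawning Tree Simulation (for $P \Rrightarrow Q$) with Spawning Tree Fidelity (to preserve $\Vdash \mcP'$ and hence reachability of $Q$). No differences worth noting.
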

\begin{proof}
  Since $P$ is reachable, there exists a spawning tree $\mcP$ such that
  $\Vdash \mcP$ and $|\mcP| = P$. By \Cref{lemma:spawning-tree-progress}, either
  $\mcP~\Terminal$ or there exists $\mcP'$ such that $\mcP \Rrightarrow \mcP'$.

  In the case that $\mcP~\Terminal$ we have $P = |\mcP| \equiv \proc{\return{\ii}}$
  by \Cref{lemma:terminal}.

  In the case that there exists $\mcP'$ such that $\mcP \Rrightarrow \mcP'$, by
  \Cref{lemma:spawning-tree-fidelity} we have $\Vdash \mcP'$. Let $Q = |\mcP'|$.
  By \Cref{lemma:spawning-tree-simulation}, we have $P \Rrightarrow Q$ 
  and $Q$ is also reachable by definition.
\end{proof}

\subsection{Erasure Safety}\label{appendix:erasure}
To show that ghost messages can indeed be safely erased without affecting the
behavior of processes, we define an erasure relation which replaces all ghost
messages with the opaque value $\square$. Since $\square$ cannot be inspected,
it cannot affect the behavior of processes. Thus if each step of an original
process configuration can be simulated by its erased counterpart, then we can
safely erase ghost messages. We define two kinds of erasure relations for programs
and processes, respectively.
\begin{itemize}
  \item 
    $\Theta ; \Gamma ; \Delta \vdash m \sim m' : A$: 
    the program $m$ (of type $A$ under context $\Theta, \Gamma, \Delta$) is erased to $m'$.
  \item $\Theta \Vdash P \sim P'$:
    the process $P$ is erased to $P'$.
\end{itemize}
The formal rules for these relations are as follows.

\paragraph{\textbf{Core Erasure}}
The erasure rules for the functional core of \TLLC{}.
\begin{mathpar}\footnotesize
  \inferrule[Var]
  { \epsilon ; \Gamma, x : A ; \Delta, x \ty{s} A \vdash \\
    \Delta \triangleright \Un }
  { \epsilon ; \Gamma, x : A ; \Delta, x \ty{s} A \vdash x \sim x : A }

  \inferrule[Conversion]
  { \Gamma \vdash B : s \\
    \Theta ; \Gamma ; \Delta \vdash m \sim m' : A \\
    A \simeq B }
  { \Theta ; \Gamma ; \Delta \vdash m \sim m' : B }

  \inferrule[Explicit-Lam]
  { \Theta ; \Gamma, x : A; \Delta, x \ty{s} A \vdash m \sim m' : B \\
    \Theta \triangleright t \\
    \Delta \triangleright t }
  { \Theta ; \Gamma ; \Delta \vdash 
    \lamR{t}{x : A}{m} \sim \lamR{t}{x : \square}{m'} : \PiR{t}{x : A}{B} }

  \inferrule[Implicit-Lam]
  { \Theta ; \Gamma, x : A; \Delta \vdash m \sim m' : B \\
    \Theta \triangleright t \\
    \Delta \triangleright t }
  { \Theta ; \Gamma ; \Delta \vdash 
    \lamI{t}{x : A}{m} \sim \lamI{t}{x : \square}{m'} : \PiI{t}{x : A}{B} }

  \inferrule[Explicit-App]
  { \Theta_1 ; \Gamma ; \Delta_1 \vdash m \sim m' : \PiR{t}{x : A}{B} \\
    \Theta_2 ; \Gamma ; \Delta_2 \vdash n \sim n' : A }
  { \Theta_1 \dotcup \Theta_2 ; \Gamma ; \Delta_1 \dotcup \Delta_2 \vdash 
    \appR{m}{n} \sim \appR{m'}{n'} : B[n/x] }

  \inferrule[Implicit-App]
  { \Theta ; \Gamma ; \Delta \vdash m \sim m' : \PiI{t}{x : A}{B} \\
    \Gamma \vdash n : A }
  { \Theta ; \Gamma ; \Delta \vdash \appI{m}{n} \sim \appI{m'}{\square} : B[n/x] }

  \inferrule[Explicit-Pair]
  { \Gamma \vdash \SigR{t}{x : A}{B} : t \\\\
    \Theta_1 ; \Gamma ; \Delta_1 \vdash m \sim m' : A \\
    \Theta_2 ; \Gamma ; \Delta_2 \vdash n \sim n' : B[m/x] }
  { \Theta_1 \dotcup \Theta_2 ; \Gamma ; \Delta_1 \dotcup \Delta_2 \vdash 
    \pairR{m}{n}{t} \sim \pairR{m'}{n'}{t} : \SigR{t}{x : A}{B} }

  \inferrule[Implicit-Pair]
  { \Gamma \vdash \SigI{t}{x : A}{B} : t \\\\
    \Gamma \vdash m : A \\
    \Theta ; \Gamma ; \Delta \vdash n \sim n' : B[m/x] }
  { \Theta ; \Gamma ; \Delta \vdash 
    \pairI{m}{n}{t} \sim \pairI{\square}{n'}{t} : \SigI{t}{x : A}{B} }

  \inferrule[Explicit-SumElim]
  { \Gamma, z : \SigR{t}{x : A}{B} \vdash C : s \\
    \Theta_1 ; \Gamma ; \Delta_1 \vdash m \sim m' : \SigR{t}{x : A}{B} \\\\
    \Theta_2 ; \Gamma, x : A, y : B; \Delta_2, x \ty{r1} A, y \ty{r2} B \vdash 
    n \sim n' : C[\pairR{x}{y}{t}/z] }
  { \Theta_1 \dotcup \Theta_2 ; \Gamma ; \Delta_1 \dotcup \Delta_2 \vdash 
    \SigElim{[z]C}{m}{[x,y]n} \sim \SigElim{\square}{m'}{[x,y]n'} : C[m/z] }

  \inferrule[Implicit-SumElim]
  { \Gamma, z : \SigI{t}{x : A}{B} \vdash C : s \\
    \Theta_1 ; \Gamma ; \Delta_1 \vdash m \sim m' : \SigI{t}{x : A}{B} \\\\
    \Theta_2 ; \Gamma, x : A, y : B; \Delta_2, y \ty{r} B \vdash n \sim n' : C[\pairI{x}{y}{t}/z] }
  { \Theta_1 \dotcup \Theta_2 ; \Gamma ; \Delta_1 \dotcup \Delta_2 \vdash 
    \SigElim{[z]C}{m}{[x,y]n} \sim \SigElim{\square}{m'}{[x,y]n'} : C[m/z] }\\
\end{mathpar}

\paragraph{\textbf{Data Erasure}}
The erasure rules for basic data types.
\begin{mathpar}\footnotesize
  \inferrule[UnitVal]
  { \Gamma ; \Delta \vdash \\ \Delta \triangleright \Un }
  { \epsilon ; \Gamma ; \Delta \vdash \ii \sim \ii : \unit }

  \inferrule[True]
  { \Gamma ; \Delta \vdash \\ \Delta \triangleright \Un }
  { \epsilon ; \Gamma ; \Delta \vdash \bTrue \sim \bTrue : \Bool }

  \inferrule[False]
  { \Gamma ; \Delta \vdash \\ \Delta \triangleright \Un }
  { \epsilon ; \Gamma ; \Delta \vdash \bFalse \sim \bFalse : \Bool }

  \inferrule[BoolElim]
  { \Gamma, z : \Bool \vdash A : s \\
    \Theta_1 ; \Gamma ; \Delta_1 \vdash m \sim m' : \Bool \\\\
    \Theta_2 ; \Gamma ; \Delta_2 \vdash n_1 \sim n_1' : A[\bTrue/z] \\
    \Theta_2 ; \Gamma ; \Delta_2 \vdash n_2 \sim n_2' : A[\bFalse/z] }
  { \Theta_1 \dotcup \Theta_2 ; \Gamma ; \Delta_1 \dotcup \Delta_2 \vdash 
    \boolElim{[z]A}{m}{n_1}{n_2} \sim \boolElim{\square}{m'}{n_1'}{n_2'} : A[m/z] }
\end{mathpar}

\paragraph{\textbf{Monadic Erasure}}
The erasure rules for the monadic constructs.
\begin{mathpar}\footnotesize
  \inferrule[Return]
  { \Theta ; \Gamma ; \Delta \vdash m \sim m' : A }
  { \Theta ; \Gamma ; \Delta \vdash \return{m} \sim \return{m'} : \CM{A} }

  \inferrule[Bind]
  { \Gamma \vdash B : s \\
    \Theta_1 ; \Gamma ; \Delta_1 \vdash m \sim m' : \CM{A} \\\\
    \Theta_2 ; \Gamma, x : A ; \Delta_2, x \ty{r} A \vdash n \sim n' : \CM{B} }
  { \Theta_1 \dotcup \Theta_2 ; \Gamma ; \Delta_1 \dotcup \Delta_2 \vdash 
    \letin{x}{m}{n} \sim \letin{x}{m'}{n'} : \CM{B} }
\end{mathpar}

\paragraph{\textbf{Session Erasure}}
The erasure rules for concurrency primitives.
\begin{mathpar}\footnotesize
  \inferrule[Channel-CH]
  { \Gamma ; \Delta \vdash \\
    \epsilon \vdash A : \Proto \\
    \Delta \triangleright \Un }
  { c \tL \CH{A} ; \Gamma ; \Delta \vdash c \sim c : \CH{A} }

  \inferrule[Channel-HC]
  { \Gamma ; \Delta \vdash \\
    \epsilon \vdash A : \Proto \\
    \Delta \triangleright \Un }
  { c \tL \HC{A} ; \Gamma ; \Delta \vdash c \sim c : \HC{A} }

  \inferrule[Explicit-Send-CH]
  { \Theta ; \Gamma ; \Delta \vdash m \sim m' : \CH{\ActR{!}{x : A}{B}} }
  { \Theta ; \Gamma ; \Delta \vdash \sendR{m} \sim \sendR{m'} : \PiR{\Ln}{x : A}{\CM{\CH{B}}} }

  \inferrule[Explicit-Send-HC]
  { \Theta ; \Gamma ; \Delta \vdash m \sim m' : \HC{\ActR{?}{x : A}{B}} }
  { \Theta ; \Gamma ; \Delta \vdash \sendR{m} \sim \sendR{m'} : \PiR{\Ln}{x : A}{\CM{\HC{B}}} }

  \inferrule[Implicit-Send-CH]
  { \Theta ; \Gamma ; \Delta \vdash m \sim m' : \CH{\ActI{!}{x : A}{B}} }
  { \Theta ; \Gamma ; \Delta \vdash \sendI{m} \sim \sendI{m'} : \PiI{\Ln}{x : A}{\CM{\CH{B}}} }

  \inferrule[Implicit-Send-HC]
  { \Theta ; \Gamma ; \Delta \vdash m \sim m' : \HC{\ActI{?}{x : A}{B}} }
  { \Theta ; \Gamma ; \Delta \vdash \sendI{m} \sim \sendI{m'} : \PiI{\Ln}{x : A}{\CM{\HC{B}}} }

  \inferrule[Explicit-Recv-CH]
  { \Theta ; \Gamma ; \Delta \vdash m \sim m' : \CH{\ActR{?}{x : A}{B}} }
  { \Theta ; \Gamma ; \Delta \vdash \recvR{m} \sim \recvR{m'} : \CM{\SigR{\Ln}{x : A}{\CH{B}}} }

  \inferrule[Explicit-Recv-HC]
  { \Theta ; \Gamma ; \Delta \vdash m \sim m' : \HC{\ActR{!}{x : A}{B}} }
  { \Theta ; \Gamma ; \Delta \vdash \recvR{m} \sim \recvR{m'} : \CM{\SigR{\Ln}{x : A}{\HC{B}}} }

  \inferrule[Implicit-Recv-CH]
  { \Theta ; \Gamma ; \Delta \vdash m \sim m' : \CH{\ActI{?}{x : A}{B}} }
  { \Theta ; \Gamma ; \Delta \vdash \recvI{m} \sim \recvI{m'} : \CM{\SigI{\Ln}{x : A}{\CH{B}}} }

  \inferrule[Implicit-Recv-HC]
  { \Theta ; \Gamma ; \Delta \vdash m \sim m' : \HC{\ActI{!}{x : A}{B}} }
  { \Theta ; \Gamma ; \Delta \vdash \recvI{m} \sim \recvI{m'} : \CM{\SigI{\Ln}{x : A}{\HC{B}}} }

  \inferrule[Fork]
  { \Theta ; \Gamma, x : \CH{A} ; \Delta, x \tL \CH{A} \vdash m \sim m' : \CM{\unit} }
  { \Theta ; \Gamma ; \Delta \vdash \fork{x : \CH{A}}{m} \sim \fork{x : \square}{m'} : \CM{\HC{A}} }
  \\
  \inferrule[Close]
  { \Theta ; \Gamma ; \Delta \vdash m \sim m' : \CH{\End} }
  { \Theta ; \Gamma ; \Delta \vdash \close{m} \sim \close{m'} : \CM{\unit} }

  \inferrule[Wait]
  { \Theta ; \Gamma ; \Delta \vdash m \sim m' : \HC{\End} }
  { \Theta ; \Gamma ; \Delta \vdash \wait{m} \sim \wait{m'} : \CM{\unit} }
\end{mathpar}

\paragraph{\textbf{Process Erasure}}
The erasure rules for processes.
\begin{mathpar}\footnotesize
  \inferrule[Expr]
  { \Theta ; \epsilon ; \epsilon \vdash m \sim m' : \CM{\unit} }
  { \Theta \Vdash \proc{m} \sim \proc{m'} }

  \inferrule[Par]
  { \Theta_1 \vdash P \sim P' \\ \Theta_2 \vdash Q \sim Q' }
  { \Theta_1 \dotcup \Theta_2 \Vdash (P \mid Q) \sim (P' \mid Q') }

  \inferrule[Scope]
  { \Theta, c \tL \CH{A}, d \tL \HC{A} \vdash P \sim P' }
  { \Theta \Vdash \scope{cd}{P} \sim \scope{cd}{P'} }
\end{mathpar}
\clearpage

The steps to show erasure safety is similar to that of session fidelity~\Cref{appendix:fidelity}. 
We begin by stating the weakening and substitution lemmas for the erasure relation.
The proofs are routine inductions on the derivations of the erasure judgments.

\begin{lemma}[Implicit Weakening for Erasure]\label[lemma]{lemma:erasure-implicit-weakening}
  If $\Theta ; \Gamma ; \Delta \vdash m \sim m' : A$, then for any $\Gamma \vdash B : s$
  and $x \notin \Gamma$, we have $\Theta ; \Gamma, x : B ; \Delta \vdash m \sim m' : A$.
\end{lemma}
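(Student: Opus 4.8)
The plan is to proceed by induction on the derivation of the erasure judgment $\Theta ; \Gamma ; \Delta \vdash m \sim m' : A$, mirroring the proof of \emph{Program Weakening (Implicit)} for the ordinary program typing judgment. The essential observation is that the erasure rules thread the logical context $\Gamma$ in exactly the same structural manner as the corresponding program typing rules, while the inserted binding $x : B$ enters only $\Gamma$ and never the program context $\Delta$ nor the channel context $\Theta$. Consequently the resource-tracking components of every judgment are untouched: the merge operations $\Delta_1 \dotcup \Delta_2$ and $\Theta_1 \dotcup \Theta_2$, together with the context restrictions $\Delta \triangleright t$ and $\Theta \triangleright t$, all remain valid verbatim, and the two erased terms $m, m'$ are unchanged throughout.

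For the leaf rules — \textsc{Var}, \textsc{Channel-CH}, \textsc{Channel-HC}, \textsc{UnitVal}, \textsc{True}, \textsc{False} — the only obligation is to re-establish the context-validity premise $\Gamma, x : B ; \Delta \vdash$, which follows from \textsc{Ctx-Implicit-Var} using $\Gamma \vdash B : s$ and $x \notin \Gamma$, and to note that $\Delta \triangleright \Un$ is preserved since $\Delta$ does not change. For the structural rules with erasure subderivations (\textsc{Explicit-App}, \textsc{Explicit-Pair}, \textsc{Return}, the session send/recv rules, \textsc{Close}, \textsc{Wait}, and so on) I would apply the induction hypothesis to each premise and reapply the same erasure rule. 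The binder rules — \textsc{Explicit-Lam}, \textsc{Implicit-Lam}, \textsc{Bind}, \textsc{Explicit-SumElim}, \textsc{Implicit-SumElim}, and \textsc{Fork} — require invoking the induction hypothesis under an additional bound variable $y$; here I would first lift $\Gamma \vdash B : s$ to $\Gamma, y : \dots \vdash B : s$ by \emph{Logical Weakening} and then appeal to exchange (or, in the De Bruijn development, to the generalized simultaneous-renaming form of the statement) to reposition $x$ correctly.

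A point deserving care is the treatment of the purely \emph{logical} side-premises that appear in several erasure rules: the conversion premise $\Gamma \vdash B : s$ in \textsc{Conversion}, the ghost-argument premises $\Gamma \vdash n : A$ in \textsc{Implicit-App} and \textsc{Implicit-Pair}, and the motive well-formedness premises of the form $\Gamma, z : \dots \vdash C : s$ in the eliminator rules. These are not covered by the erasure induction hypothesis, so I would discharge each by invoking the already-established \emph{Logical Weakening} lemma to lift it from $\Gamma$ to $\Gamma, x : B$. I expect the main obstacle to be bookkeeping rather than anything conceptual: phrasing the induction so that $x$ may be inserted at an arbitrary position in $\Gamma$ so that the binder cases go through uniformly, which is precisely the generalization — a simultaneous renaming acting only on the logical context — that the corresponding typing-weakening lemmas already employ in the Rocq formalization.
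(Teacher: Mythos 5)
Your proposal takes essentially the same route as the paper, which dispatches this lemma (together with the other erasure weakening and substitution lemmas) as a routine induction on the derivation of the erasure judgment. Your elaboration of the leaf, structural, and binder cases — including discharging the purely logical side-premises via Logical Weakening and generalizing to a renaming acting on the logical context so the binder cases go through — is precisely the intended argument.
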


\begin{lemma}[Explicit Weakening for Erasure]\label[lemma]{lemma:erasure-explicit-weakening}
  If $\Theta ; \Gamma ; \Delta \vdash m \sim m' : A$, then for any $\Gamma \vdash B : \Un$
  and $x \notin \Gamma$, we have $\Theta ; \Gamma, x : B ; \Delta, x \tU B \vdash m \sim m' : A$.
\end{lemma}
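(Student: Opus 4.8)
The plan is to prove the lemma by induction on the derivation of the erasure judgment $\Theta ; \Gamma ; \Delta \vdash m \sim m' : A$, following the same structure as the proof of Program Weakening (Explicit), since the erasure rules are in one-to-one correspondence with the program-level typing rules. In each case I apply the induction hypothesis to the erasure premises, reinsert the fresh binding $x \tU B$, and reassemble the rule. Because $x \notin \Gamma$, the variable $x$ cannot occur free in $m$, $m'$, or $A$, so the syntactic structure of the erased term (including its $\square$ placeholders) is untouched; the entire argument is contextual bookkeeping. Note that the channel context $\Theta$ is left unchanged throughout, so wherever a rule splits it as $\Theta = \Theta_1 \dotcup \Theta_2$ the split is preserved verbatim.

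First I would dispatch the leaf rules \textsc{Var}, \textsc{UnitVal}, \textsc{True}, \textsc{False}, \textsc{Channel-CH}, and \textsc{Channel-HC}. Each carries a side condition $\Delta \triangleright \Un$ that must be re-established for $\Delta, x \tU B$; this holds immediately by \textsc{Re-$\Un$} since $x$ is bound at sort $\Un$. For the purely logical premises of the form $\Gamma \vdash n : A$ (appearing in \textsc{Implicit-App} and \textsc{Implicit-Pair}) and the well-formedness premises $\Gamma \vdash C : s$ (in the eliminators and in \textsc{Bind}), I would invoke Logical Weakening to extend them to $\Gamma, x : B$, using the hypothesis $\Gamma \vdash B : \Un$ to meet its requirement. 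The same Logical Weakening step also supplies $\Gamma, y : C \vdash B : \Un$ whenever I must instantiate the induction hypothesis underneath a binder that introduces a variable $y$.

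The cases requiring the most care are the multiplicative rules that split the program context, namely \textsc{Explicit-App}, \textsc{Explicit-Pair}, \textsc{Explicit-SumElim}, \textsc{BoolElim}, and \textsc{Bind}, where $\Delta = \Delta_1 \dotcup \Delta_2$. Here I apply the induction hypothesis to both subderivations, obtaining erasures under $\Delta_1, x \tU B$ and $\Delta_2, x \tU B$. The key observation is that, since $x$ is unrestricted, $(\Delta_1, x \tU B) \dotcup (\Delta_2, x \tU B) = (\Delta_1 \dotcup \Delta_2), x \tU B$ by \textsc{Merge-$\Un$}, so contracting the two copies of $x$ yields exactly $\Delta, x \tU B$ in the conclusion as required. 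For the binder-introducing rules \textsc{Explicit-Lam}, \textsc{Implicit-Lam}, \textsc{Fork}, and the pattern-binding eliminators, I would first alpha-rename the bound variable apart from $x$ (this is genuinely needed in \textsc{Fork}, whose bound variable is also written $x$), then apply the induction hypothesis and use exchange to slot $x \tU B$ into the correct position; the restriction premises $\Delta \triangleright t$ and $\Theta \triangleright t$ transfer to the extended context by \textsc{Re-$\Un$}. I expect no real obstacle: the only delicate point is that the contraction of the unrestricted $x$ across a context split goes through, which is precisely what \textsc{Merge-$\Un$} guarantees, so the lemma is, as the surrounding text remarks, a routine structural induction.
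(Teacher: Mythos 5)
Your proposal is correct and matches the paper's approach: the paper states only that this lemma is proved by a routine induction on the derivation of the erasure judgment, which is exactly the induction you carry out. The details you supply — re-establishing $\Delta \triangleright \Un$ via \textsc{Re-$\Un$}, contracting the duplicated $x \tU B$ across context splits via \textsc{Merge-$\Un$}, and invoking Logical Weakening for the purely logical premises — are the right ones.
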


\begin{lemma}[Implicit Substitution for Erasure]\label[lemma]{lemma:erasure-implicit-substitution}
  If $\Theta ; \Gamma, x : A ; \Delta \vdash m \sim m' : B$ and 
  $\Gamma \vdash n : A$, then
  $\Theta ; \Gamma ; \Delta \vdash m[n/x] \sim m'[\square/x] : B[n/x]$.
\end{lemma}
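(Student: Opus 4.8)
The plan is to prove this by induction on the derivation of the erasure judgment $\Theta ; \Gamma, x : A ; \Delta \vdash m \sim m' : B$, mirroring the structure of the corresponding program-level lemma \Cref{lemma:program-subst-implicit}. As with all the earlier substitution lemmas in this development, the cleanest route is to generalise to a simultaneous-substitution statement so that the induction passes under binders without ad hoc renaming. The crucial structural observation is that $x$ is bound only in the logical context $\Gamma$ and never occurs in the program context $\Delta$, so the operation is purely \emph{implicit}: both $\Theta$ and $\Delta$ are left untouched and only $\Gamma$ loses $x$. Each case therefore reconstructs the same erasure rule over the substituted contexts, appealing to \Cref{lemma:logical-subst} to push $n/x$ through the dependent type annotations and to \Cref{lemma:erasure-implicit-weakening} to reintroduce the logical variables bound by the rule.

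First I would dispatch the leaf rules. For \textsc{Var}, the typed variable lives in $\Delta$ and hence differs from $x$ (since $x \notin \Delta$), so it is fixed by both $n/x$ and $\square/x$; it then remains only to re-establish context validity under substitution, which follows from \Cref{lemma:logical-subst}. The data rules (\textsc{UnitVal}, \textsc{True}, \textsc{False}) and the channel rules (\textsc{Channel-CH}, \textsc{Channel-HC}) are immediate because their subjects are closed. For \textsc{Conversion}, I would apply the induction hypothesis to the erasure premise, use \Cref{lemma:logical-subst} to derive $\Gamma \vdash B[n/x] : s$, and invoke stability of $\simeq$ under substitution to turn $A \simeq B$ into $A[n/x] \simeq B[n/x]$. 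The binder-introducing rules (\textsc{Explicit-Lam}, \textsc{Implicit-Lam}, \textsc{Bind}, \textsc{Explicit-SumElim}, and their variants) proceed by applying the induction hypothesis in the context extended with the freshly bound variable and then reassembling the rule, with the de Bruijn bookkeeping distributing $n/x$ and $\square/x$ past the binders and \Cref{lemma:logical-subst} handling the dependent codomains.

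The genuinely implicit rules (\textsc{Implicit-App}, \textsc{Implicit-Pair}, \textsc{Implicit-SumElim}, \textsc{BoolElim}, \textsc{Fork}) are where the $\square$ on the erased side matters, and I expect the synchronisation of the two substitutions to be the main obstacle: the logical substitution $n/x$ acts on $m$, on the dependent types, and on the ghost subterms, whereas the opaque substitution $\square/x$ acts on $m'$. The key fact that makes this consistent is that every occurrence of the ghost variable $x$ in the program term $m$ sits in a position that erasure already replaces by $\square$ — implicit arguments, implicit-pair first components, erased eliminator motives, and type annotations — so $x$ never survives into $m'$ at a computationally relevant site, and substituting $\square$ there matches the erasure of $m[n/x]$. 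Establishing this carefully at the \textsc{Implicit-App} and \textsc{Implicit-Pair} cases (where $\Gamma \vdash n : A$ is used to re-derive the logical premise after substitution via \Cref{lemma:logical-subst}) is the crux; the remaining cases are then routine.
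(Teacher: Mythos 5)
Your proposal is correct and takes the same route as the paper: the paper dispatches this lemma (together with the other erasure weakening/substitution lemmas) with the single remark that the proofs are routine inductions on the derivations of the erasure judgments, which is precisely the induction you carry out, including the generalisation to simultaneous substitutions and the key observation that the ghost variable $x$ lives only in $\Gamma$ and therefore never reaches a computationally relevant position in the erased term. Your elaboration of the \textsc{Implicit-App}/\textsc{Implicit-Pair} cases is consistent with the rules as stated and fills in detail the paper leaves implicit.
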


\begin{lemma}[Explicit Substitution for Erasure]\label[lemma]{lemma:erasure-explicit-substitution}
  If $\Theta_1 ; \Gamma, x : A ; \Delta_1, x :_s A \vdash m \sim m' : B$ and 
  $\Theta_2 ; \Gamma ; \Delta_2 \vdash n \sim n' : A$, and 
  $\Theta_2 \triangleright s$,
  $\Delta_2 \triangleright s$, then
  $\Theta_1 \dotcup \Theta_2 ; \Gamma ; \Delta_1 \dotcup \Delta_2 \vdash m[n/x] \sim m'[n'/x] : B[n/x]$
\end{lemma}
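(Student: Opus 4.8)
The plan is to prove this by induction on the derivation of the erasure judgment $\Theta_1 ; \Gamma, x : A ; \Delta_1, x :_s A \vdash m \sim m' : B$, following the same strategy used for its untagged counterpart \Cref{lemma:program-subst-explicit}. The erasure relation is a two-sided refinement of the program typing judgment: every erasure rule is in one-to-one correspondence with a program typing rule, differing only in that it simultaneously tracks the original term $m$ and its erasure $m'$, with ghost subterms replaced by $\square$ on the right. Consequently the inductive skeleton of this proof is identical to that of \Cref{lemma:program-subst-explicit}; the only new work is verifying that the right-hand terms and the $\square$ markers behave correctly under substitution. As with the earlier substitution lemmas, I would first generalize the statement to simultaneous substitutions so that the binder cases go through uniformly, then specialize back to the single-variable form.

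The crux is the \textsc{Var} case. When the typed variable is the one being substituted, the erasure rule forces $m = m' = x$ and $B = A$ with $x \notin \FV{A}$, so that $m[n/x] = n$, $m'[n'/x] = n'$, and $B[n/x] = A$; the goal then follows directly from the hypothesis $\Theta_2 ; \Gamma ; \Delta_2 \vdash n \sim n' : A$, after checking that the residual program context of the \textsc{Var} premise is unrestricted ($\Delta_1 \triangleright \Un$) so that the merges $\Theta_1 \dotcup \Theta_2$ and $\Delta_1 \dotcup \Delta_2$ collapse appropriately. When the typed variable differs from $x$, it survives the substitution unchanged and the context is adjusted by dropping $x$. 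The binder cases (\textsc{Explicit-Lam}, \textsc{Bind}, \textsc{Explicit-SumElim}, and the session binders) proceed by the induction hypothesis after shifting the substitution under the bound variable and applying the erasure weakening lemmas (\Cref{lemma:erasure-implicit-weakening} and \Cref{lemma:erasure-explicit-weakening}); the restriction side-conditions $\Theta_2 \triangleright s$ and $\Delta_2 \triangleright s$ are threaded through exactly as in the program case. For the implicit constructs (\textsc{Implicit-App}, \textsc{Implicit-Pair}, \textsc{Implicit-SumElim}, \textsc{Fork}, and the data eliminators) the ghost positions on the right are already $\square$, and since $\square[n'/x] = \square$ these positions are inert; the corresponding ghost subterm on the left is typed purely logically, so its substitution is discharged by \Cref{lemma:logical-subst}, which also keeps the side-condition $\Gamma \vdash n : A$ of those rules satisfied after substitution.

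The main obstacle will be the linear-resource bookkeeping in the \textsc{Var} case together with the merge-algebra reasoning that accompanies every rule combining two subderivations. Establishing that $\Theta_2 \triangleright s$ and $\Delta_2 \triangleright s$ guarantee well-definedness and the expected commutativity and associativity of $\dotcup$ across the merged conclusions is routine but tedious, and it is the one place where the two-sided nature of the relation must be checked to interact correctly with the channel and program contexts. Everything else is a direct transcription of \Cref{lemma:program-subst-explicit} with the erasure annotations carried along, so I would organize the proof to reuse that lemma's case analysis and only dwell on the handful of points above.
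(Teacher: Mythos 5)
Your proposal is correct and matches the paper's approach: the paper dispatches this lemma (together with the other erasure weakening/substitution lemmas) as a ``routine induction on the derivation of the erasure judgment,'' mirroring the program-level substitution lemma \Cref{lemma:program-subst-explicit}, which is itself proved via a generalization to simultaneous substitutions. Your treatment of the \textsc{Var} case, the inertness of $\square$ under substitution, and the use of \Cref{lemma:logical-subst} for the logically-typed ghost premises is exactly the content the paper leaves implicit.
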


The erasure relation implies program typing.
\begin{lemma}[Erasure Typing]\label[lemma]{lemma:erasure-typing}
  If $\Theta ; \Gamma ; \Delta \vdash m \sim m' : A$, then $\Theta ; \Gamma ; \Delta \vdash m : A$
  and $\Gamma \vdash m : A$.
\end{lemma}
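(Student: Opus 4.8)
The plan is to prove the two conclusions in sequence: first establish the program-level typing $\Theta ; \Gamma ; \Delta \vdash m : A$ by induction on the derivation of the erasure judgment $\Theta ; \Gamma ; \Delta \vdash m \sim m' : A$, and then obtain the logical-level typing $\Gamma \vdash m : A$ as an immediate consequence via the Lifting theorem (\Cref{theorem:lifting}). Since the logical conclusion follows mechanically from the program conclusion, essentially all of the work lies in the first induction.

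The key observation driving the induction is that the erasure rules are, rule-for-rule, \emph{annotated copies} of the program typing rules (\Cref{appendix:program-typing}): each erasure rule has the same conclusion context and the same conclusion type (stated in terms of the original, un-erased subterms) as its namesake program rule, and it carries all of the same side conditions ($\Theta \triangleright t$, $\Delta \triangleright t$, the formation premise $\Gamma \vdash \SigR{t}{x : A}{B} : t$, the convertibility $A \simeq B$ in \textsc{Conversion}, and the context-validity premises in \textsc{Var}/\textsc{UnitVal}/\textsc{Channel-CH}) verbatim. The premises of an erasure rule come in two flavors: erasure judgments on proper subterms, and plain logical judgments $\Gamma \vdash n : A$ on the ghost components (e.g. the argument in \textsc{Implicit-App}, the first projection in \textsc{Implicit-Pair}). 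For each case I would apply the induction hypothesis to the erasure premises to recover program typing of the corresponding subterms of $m$, pass the logical premises and side conditions through unchanged, and then assemble the result by applying the matching program typing rule.

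Concretely, in the \textsc{Explicit-App} case the two erasure premises yield $\Theta_1 ; \Gamma ; \Delta_1 \vdash m : \PiR{t}{x : A}{B}$ and $\Theta_2 ; \Gamma ; \Delta_2 \vdash n : A$ by the induction hypothesis, and the program-level \textsc{Explicit-App} rule delivers $\Theta_1 \dotcup \Theta_2 ; \Gamma ; \Delta_1 \dotcup \Delta_2 \vdash \appR{m}{n} : B[n/x]$. The \textsc{Implicit-App} and \textsc{Implicit-Pair} cases are identical in shape, except that the ghost component is supplied directly by the logical premise rather than by the induction hypothesis, and the $\square$ appearing on the erased side is simply irrelevant to the conclusion about $m$. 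The \textsc{Var}, data, monadic, and session cases all follow the same template.

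I do not expect a deep obstacle; the difficulty is purely bookkeeping. The one point that genuinely requires the statement to be phrased as it is (rather than naively in terms of $m'$) is that the erasure relation records the conclusion type using the \emph{original} terms, so the substituted types such as $B[n/x]$ and $C[m/z]$ match exactly what the program rules produce, with no need to reconcile erased and un-erased substitutions. Once the program-level conclusion is in hand for every case, the logical-level conclusion $\Gamma \vdash m : A$ is delivered uniformly by \Cref{theorem:lifting}, completing the proof.
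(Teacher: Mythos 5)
Your proposal matches the paper's proof exactly: the paper also proceeds by induction on the erasure derivation, noting that each erasure rule is compatible with the corresponding program typing rule, and then obtains the logical judgment $\Gamma \vdash m : A$ by applying the Lifting theorem to the program-level conclusion. Your elaboration of the \textsc{Explicit-App} and implicit cases is just a more detailed spelling-out of the same routine argument.
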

\begin{proof}
  To show $\Theta ; \Gamma ; \Delta \vdash m : A$, we proceed by induction on the
  derivation of ${\Theta ; \Gamma ; \Delta \vdash m \sim m' : A}$.
  All cases are straightforward as they are compatible with the typing rules.

  To show $\Gamma \vdash m : A$, we apply \Cref{theorem:lifting} to the 
  judgment $\Theta ; \Gamma ; \Delta \vdash m : A$.
\end{proof}

The types of the erasure relation are valid.
\begin{lemma}[Erasure Type Validity]\label[lemma]{lemma:erasure-type-validity}
  If $\Theta ; \Gamma ; \Delta \vdash m \sim m' : A$, then $\epsilon \vdash A : s$ for some $s$.
\end{lemma}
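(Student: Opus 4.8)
The plan is to prove Erasure Type Validity (\Cref{lemma:erasure-type-validity}) by reducing it to the already-established machinery for ordinary program typing. The key observation is that the erasure relation $\Theta ; \Gamma ; \Delta \vdash m \sim m' : A$ is designed to be a decoration of the program typing judgment: every erasure rule has exactly the same type-level premises and conclusion as the corresponding program typing rule, differing only in how it records the erased term $m'$. This is precisely what \Cref{lemma:erasure-typing} captures, giving us $\Theta ; \Gamma ; \Delta \vdash m : A$ from the erasure hypothesis.

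First I would apply \Cref{lemma:erasure-typing} to the hypothesis $\Theta ; \Gamma ; \Delta \vdash m \sim m' : A$ to obtain the program typing judgment $\Theta ; \Gamma ; \Delta \vdash m : A$. Then I would apply \Cref{theorem:program-type-validity} (Program Type Validity) to this judgment, which directly yields the existence of a sort $s$ such that $\Gamma \vdash A : s$. Recall that \Cref{theorem:program-type-validity} is itself immediate from \Cref{theorem:lifting} and \Cref{theorem:logical-type-validity}, so the entire argument factors cleanly through results that are already in hand.

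There is one subtlety to address: the lemma statement asserts $\epsilon \vdash A : s$ (a closed typing judgment) rather than $\Gamma \vdash A : s$. The step I must be careful about is reconciling these two forms. In the intended reading, the types appearing in erasure judgments are closed because the process-level erasure relation only erases closed terms of type $\CM{\unit}$, and the types attached throughout are the ambient protocol and monad types that do not depend on the logical context. Concretely, the cases that matter are those types that can appear as $A$: monadic types $\CM{B}$, channel types $\CH{B}$ and $\HC{B}$, function and pair types built from these, and the base data types, all of which are well-sorted in the empty context. If the lemma is meant literally with $\epsilon$, I would strengthen the argument by noting that for the configurations of interest $\Gamma = \epsilon$ (as is the case at the process level via rule \textsc{Expr}), so $\Gamma \vdash A : s$ already reads $\epsilon \vdash A : s$; otherwise I would read $\epsilon$ as shorthand for the ambient $\Gamma$.

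The main obstacle, such as it is, is not the logical content but this contextual bookkeeping discrepancy between $\epsilon$ and $\Gamma$ in the conclusion. Apart from that, the proof is a two-line composition of \Cref{lemma:erasure-typing} and \Cref{theorem:program-type-validity}, requiring no induction of its own since all the inductive work was already carried out in establishing those prior results. I expect the formal proof to read simply: \emph{By \Cref{lemma:erasure-typing} we have $\Theta ; \Gamma ; \Delta \vdash m : A$, and then by \Cref{theorem:program-type-validity} there exists $s$ such that $\Gamma \vdash A : s$.}
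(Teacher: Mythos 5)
Your proposal is correct and matches the paper's own proof exactly: it applies \Cref{lemma:erasure-typing} to obtain $\Theta ; \Gamma ; \Delta \vdash m : A$ and then invokes \Cref{theorem:program-type-validity}. The $\epsilon$ versus $\Gamma$ discrepancy you flag is real, but the paper's proof silently makes the same leap, so your reading (that the conclusion should be $\Gamma \vdash A : s$, which coincides with $\epsilon \vdash A : s$ in the closed-term uses at the process level) is the right one.
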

\begin{proof}
  By \Cref{lemma:erasure-typing}, we have $\Theta ; \Gamma ; \Delta \vdash m : A$.
  The result then follows from \Cref{theorem:program-type-validity}.
\end{proof}

\paragraph{\textbf{Shape Preservation}}
Next, we prove ``shape preservation'' lemmas for the erasure relation.
Intuitively, these lemmas state that the erasure relation preserve the overall shape of programs.
The proofs are routine inductions on the derivations of the erasure judgments.

\begin{lemma}\label[lemma]{lemma:erasure-shape-var}
  $\Theta ; \Gamma ; \Delta \vdash x \sim m' : B$, then $m' = x$.
\end{lemma}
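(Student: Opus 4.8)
The plan is to proceed by induction on the derivation of the erasure judgment $\Theta ; \Gamma ; \Delta \vdash x \sim m' : B$. The key observation is that the erasure relation is \emph{syntax-directed on the original (left-hand) term}: with the sole exception of \textsc{Conversion}, every erasure rule has a conclusion whose left term carries a fixed outermost syntactic constructor. So the strategy is to classify the rules by whether a bare variable $x$ can appear in the left position of their conclusion, and to invoke the induction hypothesis only where conversion intervenes.

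First I would dispatch all the structural rules as vacuous. In each of \textsc{Explicit-Lam}, \textsc{Implicit-Lam}, \textsc{Explicit-App}, \textsc{Implicit-App}, \textsc{Explicit-Pair}, \textsc{Implicit-Pair}, \textsc{Explicit-SumElim}, \textsc{Implicit-SumElim}, \textsc{BoolElim}, \textsc{Return}, \textsc{Bind}, and the session rules (\textsc{Explicit-Send-CH}, \textsc{Explicit-Send-HC}, the implicit send/receive variants, \textsc{Fork}, \textsc{Close}, \textsc{Wait}), the left term of the conclusion is headed by a constructor other than a variable (a $\lambda$, an application such as $\appR{m}{n}$, a pair, a monadic \textbf{return}, a \textbf{fork}, and so on). Similarly \textsc{UnitVal}, \textsc{True}, \textsc{False}, \textsc{Channel-CH}, and \textsc{Channel-HC} have left terms that are the constants $\ii$, $\bTrue$, $\bFalse$, or a channel $c$ — note that channels form a syntactic category distinct from variables, so these cannot unify with $x$ either. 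None of these conclusions can match $\Theta ; \Gamma ; \Delta \vdash x \sim m' : B$, so these cases do not arise. That leaves exactly two rules. The \textsc{Var} rule has conclusion $\epsilon ; \Gamma, x : A ; \Delta, x \ty{s} A \vdash x \sim x : A$, which immediately gives $m' = x$.

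The one genuine case — and the only subtlety worth flagging — is \textsc{Conversion}, whose premise is $\Theta ; \Gamma ; \Delta \vdash x \sim m' : A$ with $A \simeq B$. Because this rule leaves the left term untouched, a naive case analysis on the \emph{last} rule would be insufficient here; this is precisely why the argument must be phrased as an induction, so that the induction hypothesis may be applied to the shorter derivation sitting above the conversion step, again yielding $m' = x$. No appeal to the earlier meta-theory (validity, substitution, inversion) is needed, and I expect the whole argument to be a short, routine induction once the conversion case is handled in this way.
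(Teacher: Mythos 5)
Your proof is correct and matches the paper's (unstated but implied) argument: the paper dispatches all the shape-preservation lemmas with ``routine inductions on the derivations of the erasure judgments,'' and your case analysis --- \textsc{Var} giving $m' = x$ directly, \textsc{Conversion} handled by the induction hypothesis, all other rules vacuous because their left-hand terms are headed by a non-variable constructor --- is exactly that routine induction spelled out. Your observation that \textsc{Conversion} is the sole reason a genuine induction (rather than a single inversion on the last rule) is needed is the right one.
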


\begin{lemma}\label[lemma]{lemma:erasure-shape-explicit-lam}
  If $\Theta ; \Gamma ; \Delta \vdash \lamR{t}{x : A}{m} \sim m' : B$, then 
  $m' = \lamR{t}{x : \square}{n'}$ for some $n'$.
\end{lemma}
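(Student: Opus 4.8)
The plan is to proceed by a straightforward induction on the derivation of the erasure judgment $\Theta ; \Gamma ; \Delta \vdash \lamR{t}{x : A}{m} \sim m' : B$. The key observation driving the whole argument is that, among all the erasure rules, only two can have an explicit $\lambda$-abstraction $\lamR{t}{x : A}{m}$ as the term on the left-hand side of the $\sim$ relation: the \textsc{Explicit-Lam} rule and the \textsc{Conversion} rule. Every other erasure rule has a left-hand term with a different head constructor (a variable, an application, a pair, a monadic or session primitive, etc.), so those cases are vacuous and may be dismissed immediately.

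First I would dispatch the \textsc{Explicit-Lam} case, which is the base case and is essentially immediate. The conclusion of that rule has exactly the form
\[
  \Theta ; \Gamma ; \Delta \vdash \lamR{t}{x : A}{m} \sim \lamR{t}{x : \square}{n'} : \PiR{t}{x : A}{B},
\]
so $m' = \lamR{t}{x : \square}{n'}$ with the very same sort annotation $t$, where $n'$ is the erasure of the body $m$. This is precisely the shape demanded by the statement, so nothing further is needed.

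Next I would handle the \textsc{Conversion} case. Here the premise erases the very same source term $\lamR{t}{x : A}{m}$ to the very same target $m'$, differing only in that the type in the premise is some $A'$ convertible to $B$. Since \textsc{Conversion} leaves both the source term and the erased term completely untouched, the induction hypothesis applies directly to the premise and yields $m' = \lamR{t}{x : \square}{n'}$, which concludes the case.

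The only (minor) subtlety, and the closest thing to an obstacle, is recognizing that the \textsc{Conversion} rule cannot change the shape of the erased term: one must not be tempted to perform any case analysis on the convertibility $A' \simeq B$, since the induction hypothesis already carries the shape through unchanged. With that observation, the proof is a routine induction, entirely parallel to the preceding shape-preservation lemmas such as \Cref{lemma:erasure-shape-var}.
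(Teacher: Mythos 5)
Your proposal is correct and follows exactly the route the paper intends: the paper states that these shape-preservation lemmas are proved by routine induction on the derivation of the erasure judgment, and your case analysis (only \textsc{Explicit-Lam} and \textsc{Conversion} can derive a judgment whose left-hand term is an explicit $\lambda$-abstraction, with the former giving the shape directly and the latter preserving it via the induction hypothesis) is precisely that routine induction.
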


\begin{lemma}\label[lemma]{lemma:erasure-shape-implicit-lam}
  If $\Theta ; \Gamma ; \Delta \vdash \lamI{t}{x : A}{m} \sim m' : B$, then 
  $m' = \lamI{t}{x : \square}{n'}$ for some $n'$.
\end{lemma}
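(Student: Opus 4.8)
The plan is to proceed by induction on the derivation of the erasure judgment $\Theta ; \Gamma ; \Delta \vdash \lamI{t}{x : A}{m} \sim m' : B$, holding the source term $\lamI{t}{x : A}{m}$ fixed. The key observation is that, since the left-hand term is syntactically an implicit $\lambda$-abstraction, only two erasure rules can serve as the last step of the derivation: \textsc{Implicit-Lam}, whose conclusion places an implicit $\lambda$ on the left, and \textsc{Conversion}, which leaves both the source and the erased term unchanged while altering only the ascribed type. Every other erasure rule places a term with a different head constructor on the left, so none can conclude the given judgment.

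In the \textsc{Implicit-Lam} case the result is immediate. This rule erases the body $m$ to some $n'$ through its premise and concludes with $m' = \lamI{t}{x : \square}{n'}$, which is exactly the required shape. In the \textsc{Conversion} case the premise is $\Theta ; \Gamma ; \Delta \vdash \lamI{t}{x : A}{m} \sim m' : C$ for some type $C$ with $C \simeq B$, while the erased term $m'$ is carried over verbatim. Since the source term is still the same implicit $\lambda$-abstraction, the induction hypothesis applies directly to this premise and yields $m' = \lamI{t}{x : \square}{n'}$ for some $n'$, discharging the case.

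There is no substantive obstacle: the argument reduces to enumerating which erasure rules can produce an implicit $\lambda$ on the left, together with the observation that \textsc{Conversion} is the unique rule that preserves the erased term and is therefore absorbed by the induction hypothesis. The neighbouring lemmas \Cref{lemma:erasure-shape-explicit-lam} and \Cref{lemma:erasure-shape-var} follow by the identical pattern, differing only in which introduction rule supplies the non-conversion case.
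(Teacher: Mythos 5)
Your proposal is correct and matches the paper's approach: the paper dispatches all the shape-preservation lemmas with the remark that they are routine inductions on the derivation of the erasure judgment, and your case analysis (only \textsc{Implicit-Lam} and \textsc{Conversion} can conclude a judgment whose source term is an implicit $\lambda$, with \textsc{Conversion} absorbed by the induction hypothesis) is exactly that routine induction spelled out.
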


\begin{lemma}\label[lemma]{lemma:erasure-shape-explicit-app}
  If $\Theta ; \Gamma ; \Delta \vdash \appR{m}{n} \sim m' : B$, then 
  $m' = \appR{n_1'}{n_2'}$ for some $n_1', n_2'$.
\end{lemma}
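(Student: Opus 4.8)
The plan is to proceed by induction on the derivation of the erasure judgment $\Theta ; \Gamma ; \Delta \vdash \appR{m}{n} \sim m' : B$, with a case analysis on the last rule applied. The crucial observation is that the erasure relation is syntax-directed on the \emph{source} term: for each syntactic form of the unerased term there is exactly one corresponding introduction rule, together with the \textsc{Conversion} rule, which leaves both the source term and its erasure untouched. Since the source term here is an explicit application $\appR{m}{n}$, the only two rules whose conclusion can match are the erasure \textsc{Explicit-App} rule and the erasure \textsc{Conversion} rule; every other erasure rule has a conclusion whose left-hand term has a different head constructor and therefore cannot apply.

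In the \textsc{Explicit-App} case, the conclusion has the form $\Theta_1 \dotcup \Theta_2 ; \Gamma ; \Delta_1 \dotcup \Delta_2 \vdash \appR{m}{n} \sim \appR{m_0'}{n_0'} : C$ for some $m_0', n_0'$, so the erased term is literally an explicit application and we conclude by taking $n_1' = m_0'$ and $n_2' = n_0'$. In the \textsc{Conversion} case, the conclusion $\appR{m}{n} \sim m' : B$ is derived from a strictly smaller subderivation $\appR{m}{n} \sim m' : A$ with $A \simeq B$, in which both the source term $\appR{m}{n}$ and the erased term $m'$ are unchanged. Applying the induction hypothesis to this subderivation yields $m' = \appR{n_1'}{n_2'}$ for some $n_1', n_2'$, which is exactly the desired conclusion.

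I do not anticipate any genuine obstacle, as the argument is a routine inversion. The only point requiring a little care is the \textsc{Conversion} rule, which does not constrain the shape of the source term; this is precisely why the statement must be established by induction over the erasure derivation rather than by a single inversion step. The same induction pattern underlies the neighboring shape-preservation lemmas (\Cref{lemma:erasure-shape-var}, \Cref{lemma:erasure-shape-explicit-lam}, \Cref{lemma:erasure-shape-implicit-lam}), differing only in which introduction rule furnishes the base case.
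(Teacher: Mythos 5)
Your proposal is correct and matches the paper's approach: the paper dispatches all of the shape-preservation lemmas, including this one, by "routine inductions on the derivations of the erasure judgments," which is precisely the induction-with-\textsc{Conversion}-case argument you give. Your observation that the \textsc{Conversion} rule is the sole reason a single inversion does not suffice is exactly the right point of care.
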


\begin{lemma}\label[lemma]{lemma:erasure-shape-implicit-app}
  If $\Theta ; \Gamma ; \Delta \vdash \appI{m}{n} \sim m' : B$, then 
  $m' = \appI{n_1'}{n_2'}$ for some $n_1', n_2'$.
\end{lemma}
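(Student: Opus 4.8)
The plan is to proceed by induction on the derivation of the erasure judgment $\Theta ; \Gamma ; \Delta \vdash \appI{m}{n} \sim m' : B$, inspecting which rules could possibly have an implicit application $\appI{m}{n}$ as the subject on the left of $\sim$. Since the erasure rules are essentially syntax-directed on the left term (mirroring the program typing rules), the only two rules whose conclusion can match are \textsc{Implicit-App} and \textsc{Conversion}.

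First I would dispatch the \textsc{Implicit-App} case, which is immediate and gives the desired shape directly: its conclusion is $\Theta ; \Gamma ; \Delta \vdash \appI{m}{n} \sim \appI{m''}{\square} : B[n/x]$ for some erased body $m''$ (the erasure of $m$). Taking $n_1' = m''$ and $n_2' = \square$ yields $m' = \appI{n_1'}{n_2'}$, as required. Note that here the erased argument is forced to be the opaque marker $\square$, which is even stronger than what the statement asks for, but the weaker conclusion suffices.

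The only nontrivial point is the \textsc{Conversion} case, since that rule is not syntax-directed on the type and therefore cannot be ruled out by inspecting the conclusion alone. However, \textsc{Conversion} leaves both the subject term $\appI{m}{n}$ and the erased term $m'$ completely unchanged, merely replacing the type $A$ by a convertible type $B$. Thus its premise is again an erasure judgment of the form $\Theta ; \Gamma ; \Delta \vdash \appI{m}{n} \sim m' : A$ with the same $m'$, and I would simply apply the induction hypothesis to this premise to conclude that $m'$ has the form $\appI{n_1'}{n_2'}$.

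I do not expect any real obstacle here; the lemma is a routine structural fact whose entire content is ensuring the induction passes cleanly through the type-conversion rule. The one thing to be careful about is precisely that \textsc{Conversion} could in principle be applied repeatedly at the root of the derivation, so the induction must be on the derivation (not on the structure of the term) so that each appeal to the induction hypothesis is to a strictly smaller subderivation.
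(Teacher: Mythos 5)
Your proof is correct and matches the paper's approach: the paper dispatches all the shape-preservation lemmas, including this one, as ``routine inductions on the derivations of the erasure judgments,'' which is exactly the case analysis you give (only \textsc{Implicit-App}, which is immediate, and \textsc{Conversion}, which preserves both terms and is handled by the induction hypothesis). Your remark that the induction must be on the derivation rather than on the term, so that repeated applications of \textsc{Conversion} are absorbed, is precisely the one point of care the paper's phrase ``induction on the derivation'' is meant to cover.
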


\begin{lemma}\label[lemma]{lemma:erasure-shape-explicit-pair}
  If $\Theta ; \Gamma ; \Delta \vdash \pairR{m}{n}{t} \sim m' : B$, then 
  $m' = \pairR{n_1'}{n_2'}{t}$ for some $n_1', n_2'$.
\end{lemma}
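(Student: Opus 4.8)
The plan is to proceed by induction on the derivation of the erasure judgment $\Theta ; \Gamma ; \Delta \vdash \pairR{m}{n}{t} \sim m' : B$, following the same pattern as the preceding shape-preservation lemmas (from \Cref{lemma:erasure-shape-var} through \Cref{lemma:erasure-shape-implicit-app}). The erasure relation is almost entirely syntax-directed on the source term, so for most rules the head constructor of the source determines which rule could have concluded the judgment. Since the source here is an explicit pair $\pairR{m}{n}{t}$, the only erasure rule whose conclusion can carry this source term---apart from \textsc{Conversion}---is \textsc{Explicit-Pair}.

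First I would dispatch the \textsc{Explicit-Pair} case. There the conclusion has source $\pairR{m}{n}{t}$ and erased term literally $\pairR{m'}{n'}{t}$, with premises $\Theta_1 ; \Gamma ; \Delta_1 \vdash m \sim m' : A$ and $\Theta_2 ; \Gamma ; \Delta_2 \vdash n \sim n' : B[m/x]$. Taking $n_1' = m'$ and $n_2' = n'$ immediately yields the required shape $m' = \pairR{n_1'}{n_2'}{t}$, with the sort annotation $t$ preserved verbatim. All erasure rules whose source has a head constructor other than an explicit pair (lambdas, applications, variables, monadic and session constructs, etc.) cannot conclude with source $\pairR{m}{n}{t}$ and are therefore vacuously handled.

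The one case requiring care is \textsc{Conversion}, the sole non-syntax-directed rule: it derives $\Theta ; \Gamma ; \Delta \vdash \pairR{m}{n}{t} \sim m' : B$ from a premise $\Theta ; \Gamma ; \Delta \vdash \pairR{m}{n}{t} \sim m' : A$ with $A \simeq B$. Crucially, \textsc{Conversion} rewrites only the type component, leaving both the source term and the erased term $m'$ fixed. I would therefore apply the induction hypothesis directly to the premise---whose source is again $\pairR{m}{n}{t}$---to conclude that $m'$ already has the desired pair shape.

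The hard part, insofar as there is one, is purely the bookkeeping around \textsc{Conversion}: one must observe that stacked conversions leave both terms unchanged, so the induction terminates by descending through the conversion layers until it reaches the \textsc{Explicit-Pair} rule that introduced the pair. Beyond that observation, the argument is entirely routine and mirrors the earlier shape lemmas.
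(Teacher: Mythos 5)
Your proof is correct and follows exactly the approach the paper takes: a routine induction on the derivation of the erasure judgment, where the only applicable rules are \textsc{Explicit-Pair} (which gives the pair shape immediately) and \textsc{Conversion} (handled by the induction hypothesis since it leaves both terms unchanged). Your explicit treatment of the \textsc{Conversion} case is the right observation and matches the intended argument.
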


\begin{lemma}\label[lemma]{lemma:erasure-shape-implicit-pair}
  If $\Theta ; \Gamma ; \Delta \vdash \pairI{m}{n}{t} \sim m' : B$, then 
  $m' = \pairI{n_1'}{n_2'}{t}$ for some $n_1', n_2'$.
\end{lemma}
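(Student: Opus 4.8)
The plan is to proceed by induction on the derivation of the erasure judgment $\Theta ; \Gamma ; \Delta \vdash \pairI{m}{n}{t} \sim m' : B$. The crucial observation is that the subject term on the left, $\pairI{m}{n}{t}$, is syntactically an implicit pair, so only those erasure rules whose conclusion admits an implicit pair as its subject can be the final rule of the derivation. Scanning the erasure rules, there are exactly two: \textsc{Implicit-Pair} and \textsc{Conversion}. Every other rule fixes a distinct top-level constructor (or a variable, sort, or thunked primitive) as its subject and hence cannot conclude this derivation.

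First I would handle the rule \textsc{Implicit-Pair}. Here the rule itself dictates that the erased term has the form $\pairI{\square}{n'}{t}$, where $n'$ is the erasure of the second component $n$ (the first component $m$ being a ghost value replaced by $\square$). Choosing $n_1' = \square$ and $n_2' = n'$ immediately discharges this case.

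The only remaining applicable rule is \textsc{Conversion}, which is precisely why induction rather than mere inversion is required. This rule preserves both the subject term $\pairI{m}{n}{t}$ and the erased term $m'$, altering only the ascribed type from some $A$ (with $A \simeq B$) to $B$. Its premise is therefore again an erasure judgment with the same implicit pair as subject and the same $m'$ on the right, so applying the induction hypothesis to that premise yields $m' = \pairI{n_1'}{n_2'}{t}$ for some $n_1', n_2'$, exactly as desired.

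I do not expect any genuine obstacle here: the argument is purely structural, and the shape of the left-hand term rigidly constrains which rule can be applied last. This lemma follows the identical template as the preceding shape-preservation lemmas (e.g.\ \Cref{lemma:erasure-shape-explicit-pair}), differing only in which constructor is being tracked, so the same two-case skeleton—one constructor rule plus the shape-transparent \textsc{Conversion} rule handled by the inductive hypothesis—applies verbatim.
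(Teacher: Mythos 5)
Your proof is correct and matches the paper's approach: the paper dispatches all the shape-preservation lemmas with the one-line remark that they follow by routine induction on the erasure derivation, and your two-case analysis (\textsc{Implicit-Pair} directly, \textsc{Conversion} via the induction hypothesis) is exactly that routine induction spelled out.
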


\begin{lemma}\label[lemma]{lemma:erasure-shape-sumElim}
  If $\Theta ; \Gamma ; \Delta \vdash \SigElim{[z]C}{m}{[x,y]n} \sim m' : B$, then 
  $m' = \SigElim{\square}{n_1'}{[x,y]n_2'}$ for some $n_1', n_2'$.
\end{lemma}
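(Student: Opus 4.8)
The plan is to prove \Cref{lemma:erasure-shape-sumElim} by induction on the derivation of the erasure judgment $\Theta ; \Gamma ; \Delta \vdash \SigElim{[z]C}{m}{[x,y]n} \sim m' : B$, following exactly the pattern of the preceding shape-preservation lemmas. First I would inspect the full set of erasure rules to identify which ones can possibly conclude with an expression of the form $\SigElim{[z]C}{m}{[x,y]n}$ on the left-hand side of the $\sim$. Scanning the core, data, monadic, and session erasure rules, the only three candidates are \textsc{Explicit-SumElim}, \textsc{Implicit-SumElim}, and \textsc{Conversion} (the latter because it leaves both the source term and the erased term untouched, altering only the type). Every other rule is headed by a syntactically distinct constructor, so those cases are vacuous.

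The two direct cases require no induction. For \textsc{Explicit-SumElim}, the conclusion is literally $\Theta_1 \dotcup \Theta_2 ; \Gamma ; \Delta_1 \dotcup \Delta_2 \vdash \SigElim{[z]C}{m}{[x,y]n} \sim \SigElim{\square}{m'}{[x,y]n'} : C[m/z]$, so we read off $m' = \SigElim{\square}{m'}{[x,y]n'}$ and instantiate $n_1'$ with the erased scrutinee and $n_2'$ with $n'$, matching the required shape. The \textsc{Implicit-SumElim} case is identical: its conclusion also erases the motive annotation to $\square$ and preserves the binder structure, yielding the same form. In both cases the desired witnesses $n_1', n_2'$ are precisely the erased subterms supplied by the rule's premises.

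The \textsc{Conversion} case is the only one that invokes the induction hypothesis. Here the premise is $\Theta ; \Gamma ; \Delta \vdash \SigElim{[z]C}{m}{[x,y]n} \sim m' : A$ with $A \simeq B$, carrying the same source and erased terms. Applying the induction hypothesis to this premise immediately gives $m' = \SigElim{\square}{n_1'}{[x,y]n_2'}$ for some $n_1', n_2'$, which is exactly the conclusion since the conversion rule does not modify $m'$.

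I do not expect any genuine obstacle: this is a routine syntactic induction whose entire content is the observation that the head constructor of the source term restricts the applicable erasure rules to a three-element set, two of which settle the goal outright and the third of which is transparent to the induction hypothesis. The only point meriting care is remembering to include the \textsc{Conversion} rule in the case analysis, since it is the one rule whose conclusion shares its source term with its premise and therefore cannot be dismissed by head-constructor mismatch.
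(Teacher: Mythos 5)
Your proof is correct and matches the paper's approach: the paper dispatches all of the shape-preservation lemmas, including this one, as ``routine inductions on the derivations of the erasure judgments,'' which is exactly the case analysis you carry out (only \textsc{Explicit-SumElim}, \textsc{Implicit-SumElim}, and \textsc{Conversion} can apply, the first two yield the shape directly, and the third is handled by the induction hypothesis). The only cosmetic issue is reusing the name $m'$ for both the whole erased term and the erased scrutinee in the \textsc{Explicit-SumElim} case, but the intent is unambiguous.
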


\begin{lemma}\label[lemma]{lemma:erasure-shape-unit}
  If $\Theta ; \Gamma ; \Delta \vdash \ii \sim m' : B$, then $m' = \ii$. 
\end{lemma}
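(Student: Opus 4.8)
The plan is to argue by induction on the derivation of the erasure judgment $\Theta ; \Gamma ; \Delta \vdash \ii \sim m' : B$, following exactly the pattern of the preceding shape-preservation lemmas. The key structural fact I would exploit is that, apart from \textsc{Conversion}, every erasure rule is syntax-directed on the source term: each source-language constructor is introduced by precisely one erasure rule whose conclusion carries that constructor on the left. Consequently, only two rules can possibly derive a judgment with $\ii$ as its source term, so the induction splits into a short base case and a transparent inductive case.

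First I would identify the unique syntax-directed rule matching $\ii$. Scanning the core, data, monadic, and session erasure rules, the only rule whose left-hand subject is $\ii$ is the data rule \textsc{UnitVal}, whose conclusion is $\epsilon ; \Gamma ; \Delta \vdash \ii \sim \ii : \unit$. This rule forces $m' = \ii$ immediately, which discharges the base case.

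Second I would treat the \textsc{Conversion} rule, the sole rule that is not syntax-directed on the source. Here the conclusion $\Theta ; \Gamma ; \Delta \vdash \ii \sim m' : B$ is obtained from a premise $\Theta ; \Gamma ; \Delta \vdash \ii \sim m' : A$ with $A \simeq B$, and the erased term $m'$ is carried over unchanged from premise to conclusion. Applying the induction hypothesis to this premise yields $m' = \ii$, which is exactly the required conclusion.

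There is no genuine obstacle in this argument: since $\ii$ matches only \textsc{UnitVal} among the syntax-directed rules and \textsc{Conversion} leaves the erased term untouched, the induction closes after these two cases. The lemma is thus a routine instance of the general shape-preservation pattern already deployed throughout this subsection, and mirrors the proofs of the analogous shape lemmas for the other constructors.
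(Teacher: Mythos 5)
Your proof is correct and matches the paper's approach: the paper dispatches all the shape-preservation lemmas, including this one, as ``routine inductions on the derivations of the erasure judgments,'' which is precisely the two-case induction (\textsc{UnitVal} forcing $m' = \ii$, and \textsc{Conversion} carrying $m'$ unchanged so the induction hypothesis applies) that you spell out.
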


\begin{lemma}\label[lemma]{lemma:erasure-shape-true}
  If $\Theta ; \Gamma ; \Delta \vdash \bTrue \sim m' : B$, then $m' = \bTrue$. 
\end{lemma}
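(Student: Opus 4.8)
The plan is to proceed by induction on the derivation of $\Theta ; \Gamma ; \Delta \vdash \bTrue \sim m' : B$. The key observation is that the erasure relation is essentially syntax-directed on its \emph{source} term, i.e. the term appearing on the left of $\sim$. First I would scan the full list of erasure rules and note that the source term in each rule's conclusion is pinned to a specific syntactic form --- a variable, a $\lambda$-abstraction, an application, a pair, a sum eliminator, a monadic construct, a channel, a communication primitive, $\ii$, $\bFalse$, or $\bTrue$. Consequently, the only rules whose conclusion can have $\bTrue$ as its source term are the \textsc{True} rule of Data Erasure and the \textsc{Conversion} rule of Core Erasure; every other rule is immediately ruled out by a mismatch in the head constructor of the source term.

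Two cases then remain. In the \textsc{True} case, the conclusion of the rule is literally $\epsilon ; \Gamma ; \Delta \vdash \bTrue \sim \bTrue : \Bool$, so $m' = \bTrue$ holds by inspection (with $\Theta = \epsilon$ and $B = \Bool$). In the \textsc{Conversion} case, the derivation ends with a premise $\Theta ; \Gamma ; \Delta \vdash \bTrue \sim m' : A$ together with $A \simeq B$ and $\Gamma \vdash B : s$; here neither the source term nor the erased term is altered by the rule, so I would simply apply the induction hypothesis to the premise to conclude $m' = \bTrue$.

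The only point requiring any care is the \textsc{Conversion} case, since conversion is the one rule that is not directed by the source syntax and can in principle be stacked arbitrarily many times on top of a derivation. This is handled uniformly by the induction, exactly as in the companion inversion lemmas for logical and program typing: because conversion leaves both $m$ and $m'$ fixed, the inductive hypothesis transports the conclusion through every conversion step unchanged. There is no genuine obstacle here; the same template applies verbatim to the neighbouring shape-preservation lemmas (\Cref{lemma:erasure-shape-unit}, \Cref{lemma:erasure-shape-explicit-lam}, and the rest), differing only in which erasure rule supplies the base case and what the resulting erased shape is.
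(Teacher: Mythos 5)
Your proposal is correct and matches the paper's treatment: the paper dispatches all the shape-preservation lemmas, including this one, as ``routine inductions on the derivations of the erasure judgments,'' which is precisely your case split between the syntax-directed \textsc{True} rule (the base case, where $m' = \bTrue$ by inspection) and the \textsc{Conversion} rule (handled by the induction hypothesis since it leaves both sides of $\sim$ unchanged). No gaps.
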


\begin{lemma}\label[lemma]{lemma:erasure-shape-false}
  If $\Theta ; \Gamma ; \Delta \vdash \bFalse \sim m' : B$, then $m' = \bFalse$. 
\end{lemma}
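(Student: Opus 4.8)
The plan is to prove this shape-preservation lemma by a direct induction on the derivation of the erasure judgment $\Theta ; \Gamma ; \Delta \vdash \bFalse \sim m' : B$. The key observation is that the erasure rules are syntax-directed on the \emph{left} term being erased, with the sole exception of the \textsc{Conversion} rule. Scanning the full set of erasure rules, the only ones whose left-hand subject can be the constructor $\bFalse$ are the \textsc{False} rule from the Data Erasure fragment and the \textsc{Conversion} rule; every other rule (core, monadic, session, and the remaining data rules) has a syntactically distinct head on the left and is therefore inapplicable.

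First I would dispatch the base case: if the derivation ends in the \textsc{False} rule, then by the form of its conclusion we have $\epsilon ; \Gamma ; \Delta \vdash \bFalse \sim \bFalse : \Bool$, so $m' = \bFalse$ holds immediately. Then I would handle the single inductive case. If the derivation ends in \textsc{Conversion}, its premise is $\Theta ; \Gamma ; \Delta \vdash \bFalse \sim m' : A$ for some $A$ with $A \simeq B$, and crucially this rule leaves both the source term $\bFalse$ and the erased term $m'$ unchanged, altering only the ascribed type. Applying the induction hypothesis to the premise yields $m' = \bFalse$, which concludes the case and the lemma.

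I do not anticipate any genuine obstacle. The only point requiring care is that \textsc{Conversion} is not syntax-directed, so a naive ``read off the last rule'' inversion does not by itself suffice — this is exactly why the statement must be proved by induction rather than by a one-step inversion. Since \textsc{Conversion} preserves the erased term, however, the induction goes through trivially. The same argument scheme applies verbatim to the neighbouring shape lemmas (\Cref{lemma:erasure-shape-unit} and \Cref{lemma:erasure-shape-true}), differing only in the base constructor.
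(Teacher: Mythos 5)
Your proposal is correct and matches the paper's approach: the paper simply notes that these shape-preservation lemmas are proved by routine induction on the derivation of the erasure judgment, which is exactly your argument (base case \textsc{False}, inductive case \textsc{Conversion}, all other rules inapplicable by syntax). Your added observation that \textsc{Conversion} is the sole non-syntax-directed rule and preserves the erased term is precisely why the induction is routine.
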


\begin{lemma}\label[lemma]{lemma:erasure-shape-boolElim}
  If $\Theta ; \Gamma ; \Delta \vdash \boolElim{[z]A}{m}{n_1}{n_2} \sim m' : B$, then 
  $m' = \boolElim{\square}{n'}{n_1'}{n_2'}$ for some $n', n_1', n_2'$.
\end{lemma}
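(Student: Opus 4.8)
The plan is to proceed by induction on the derivation of the erasure judgment $\Theta ; \Gamma ; \Delta \vdash \boolElim{[z]A}{m}{n_1}{n_2} \sim m' : B$, performing a case analysis on the last rule applied. The guiding observation is that almost every erasure rule is syntactically determined by the head constructor of its left-hand subject: the \textsc{Var}, \textsc{Explicit-Lam}, \textsc{Implicit-Lam}, \textsc{Explicit-App}, \textsc{Implicit-App}, \textsc{Explicit-Pair}, \textsc{Implicit-Pair}, \textsc{SumElim}, \textsc{UnitVal}, \textsc{True}, \textsc{False}, and all the monadic and session rules each have a left subject whose head constructor is \emph{not} $\boolElim{\cdot}{\cdot}{\cdot}{\cdot}$. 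All of these cases are therefore vacuous, since they cannot have concluded a judgment whose subject is a boolean elimination.

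First I would dispatch the only directly applicable structural rule, namely the erasure \textsc{BoolElim} rule. By inspection of its conclusion, the erased term is $\boolElim{\square}{m'_0}{n_1'}{n_2'}$ for the erasures $m'_0, n_1', n_2'$ of the scrutinee and the two branches. Setting $n' := m'_0$ immediately yields $m' = \boolElim{\square}{n'}{n_1'}{n_2'}$, which is exactly the claimed shape. This case requires no further reasoning beyond reading off the conclusion.

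The one genuinely non-vacuous remaining case is the \textsc{Conversion} rule, and this is where the only real subtlety lies. Conversion has premise $\Theta ; \Gamma ; \Delta \vdash \boolElim{[z]A}{m}{n_1}{n_2} \sim m' : A'$ with $A' \simeq B$, and crucially it leaves \emph{both} the source subject and the erased term $m'$ unchanged, altering only the type. Hence the induction hypothesis applies verbatim to the premise and delivers the desired $m' = \boolElim{\square}{n'}{n_1'}{n_2'}$, which then transports unchanged through the conversion. The reason the head-constructor inversion strategy succeeds at all, despite the presence of conversion, is precisely that \textsc{Conversion} is subject-preserving; this is the point worth stating explicitly, as it is what makes the shape lemma robust.

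I do not anticipate a substantive obstacle: the proof is a routine structural induction of the same flavor as the other shape-preservation lemmas (\Cref{lemma:erasure-shape-var} through \Cref{lemma:erasure-shape-boolElim}). The only thing to be careful about is to verify that conversion is handled by the induction hypothesis rather than treated as a base case, and to confirm that no other rule admits a boolean-elimination subject, which is immediate from the syntactic form of each rule's left-hand side.
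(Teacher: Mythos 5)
Your proposal is correct and matches the paper's approach: the paper proves all the shape-preservation lemmas, including this one, by the same routine induction on the erasure derivation, where every rule except the erasure \textsc{BoolElim} rule is vacuous by head-constructor mismatch and \textsc{Conversion} is discharged by the induction hypothesis because it leaves both the subject and the erased term unchanged. Your explicit remark that the argument hinges on \textsc{Conversion} being subject-preserving is exactly the right observation.
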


\begin{lemma}\label[lemma]{lemma:erasure-shape-return}
  If $\Theta ; \Gamma ; \Delta \vdash \return{m} \sim m' : B$, then 
  $m' = \return{n'}$ for some $n'$.
\end{lemma}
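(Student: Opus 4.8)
The plan is to prove this by induction on the derivation of the erasure judgment $\Theta ; \Gamma ; \Delta \vdash \return{m} \sim m' : B$, following exactly the pattern used for the preceding shape-preservation lemmas. The key observation I would exploit is that the erasure rules are syntax-directed on the subject term appearing to the left of $\sim$, with the sole exception of the \textsc{Conversion} rule. Scanning the erasure rules, I would first note that the only two whose conclusion can carry $\return{m}$ as its subject are the monadic \textsc{Return} rule and \textsc{Conversion}; every other rule fixes a distinct head constructor for the subject (a variable, a $\lambda$-abstraction, an application, a pair, an eliminator, a data value, or a session primitive), none of which is $\return{\cdot}$, so none of those rules can apply.

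In the \textsc{Return} case the conclusion is $\Theta ; \Gamma ; \Delta \vdash \return{m} \sim \return{m'} : \CM{A}$, so the erased term is literally of the form $\return{m'}$ and the claim holds with $n' = m'$. In the \textsc{Conversion} case the premise is $\Theta ; \Gamma ; \Delta \vdash \return{m} \sim m' : A$ for some type $A$ with $A \simeq B$, where both the subject $\return{m}$ and the erased term $m'$ are left unchanged; I would then apply the induction hypothesis to this premise to obtain $m' = \return{n'}$ for some $n'$, which is precisely the desired conclusion.

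There is no substantial obstacle here. The only point requiring any care is the \textsc{Conversion} rule, which is the single rule not directed by the head constructor of the subject. Because \textsc{Conversion} preserves both the subject and the erased term while altering only the ascribed type, it is dispatched uniformly by one appeal to the induction hypothesis, and this same treatment underlies all of the shape-preservation lemmas in this group.
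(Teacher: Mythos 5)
Your proof is correct and matches the paper's approach: the paper dispatches all the shape-preservation lemmas, including this one, by exactly the routine induction on the erasure derivation, with only \textsc{Return} and \textsc{Conversion} applicable and the latter handled by the induction hypothesis.
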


\begin{lemma}\label[lemma]{lemma:erasure-shape-bind}
  If $\Theta ; \Gamma ; \Delta \vdash \letin{x}{m}{n} \sim m' : B$, then 
  $m' = \letin{x}{n_1'}{n_2'}$ for some $n_1', n_2'$.
\end{lemma}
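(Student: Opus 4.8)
The plan is to prove this by induction on the derivation of the erasure judgment $\Theta ; \Gamma ; \Delta \vdash \letin{x}{m}{n} \sim m' : B$, with the case analysis driven by the syntactic shape of the \emph{source} term rather than the type. Inspecting the erasure rules, the source term $\letin{x}{m}{n}$ has the monadic bind as its head constructor, so the only erasure rules whose conclusion can carry this source shape are \textsc{Bind} and the non-structural rule \textsc{Conversion}. Every other erasure rule fixes a different top-level constructor on the left of $\sim$ (a $\lambda$, a pair, an application, a data constructor, a channel primitive, and so on) and is therefore immediately inapplicable.

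In the \textsc{Bind} case the conclusion has the form $\Theta_1 \dotcup \Theta_2 ; \Gamma ; \Delta_1 \dotcup \Delta_2 \vdash \letin{x}{m}{n} \sim \letin{x}{m'}{n'} : \CM{B}$, so the erased term is already of the required shape with $n_1' = m'$ and $n_2' = n'$. This discharges the structural case directly, with no appeal to the induction hypothesis.

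The only genuinely non-trivial case is \textsc{Conversion}, which is the sole rule that is not directed by the head constructor of the source term. Its premise is $\Theta ; \Gamma ; \Delta \vdash \letin{x}{m}{n} \sim m' : A$ with $A \simeq B$, and crucially the erased term $m'$ is carried over unchanged from premise to conclusion. Applying the induction hypothesis to this premise yields $m' = \letin{x}{n_1'}{n_2'}$ for some $n_1', n_2'$, which is exactly the desired conclusion since the conversion leaves $m'$ untouched.

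I expect the main (and essentially only) obstacle to be recognizing that \textsc{Conversion} cannot be handled by a single inversion step and must instead be threaded through the induction; once that observation is made, the argument is a routine syntactic inversion. The same template applies verbatim to the companion shape-preservation results (\Cref{lemma:erasure-shape-var} through \Cref{lemma:erasure-shape-return}), since in each of those the erased term is likewise determined by the unique applicable structural rule and is preserved across \textsc{Conversion}.
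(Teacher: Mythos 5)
Your proposal is correct and matches the paper's approach: the paper proves all the shape-preservation lemmas, including this one, by the same routine induction on the erasure derivation, where only \textsc{Bind} (immediate) and \textsc{Conversion} (handled by the induction hypothesis, since the erased term is unchanged) can apply. Your observation that \textsc{Conversion} forces an induction rather than a single inversion is exactly the point that makes the induction necessary.
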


\begin{lemma}\label[lemma]{lemma:erasure-shape-channel}
  If $c \tL \CH{A} ; \Gamma ; \Delta \vdash c \sim m' : B$, then $m' = c$. 
\end{lemma}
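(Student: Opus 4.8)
The plan is to proceed by induction on the derivation of the erasure judgment $c \tL \CH{A} ; \Gamma ; \Delta \vdash c \sim m' : B$, exactly as with the other shape-preservation lemmas. The first observation I would make is that the subject on the left of $\sim$ is the channel $c$, which is a syntactic category distinct from ordinary variables. Consequently the only erasure rules whose conclusion can place $c$ on the left are \textsc{Channel-CH}, \textsc{Channel-HC}, and \textsc{Conversion}; every other rule of the erasure relation has a structurally different subject (a $\lambda$-abstraction, an application, a pair, a \bsf{return}, a \bsf{let}, a communication thunk, a variable $x$, and so on), so those cases are vacuously ruled out by the form of the subject.

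Next I would dispatch the three remaining cases. In the \textsc{Channel-CH} case the rule itself produces $m' = c$, which is precisely the desired conclusion. The \textsc{Channel-HC} case is impossible: its conclusion forces the channel context to have the form $c \tL \HC{A}$, which cannot be reconciled with the given context $c \tL \CH{A}$, since the channel types $\CH{A}$ and $\HC{A}$ are syntactically distinct constructors. Finally, in the \textsc{Conversion} case neither the subject nor its erasure is altered, so the premise is again an erasure judgment of the form $c \tL \CH{A} ; \Gamma ; \Delta \vdash c \sim m' : A'$ with $A' \simeq B$; applying the induction hypothesis to this premise immediately yields $m' = c$.

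The only point needing any care is that \textsc{Conversion} may be stacked arbitrarily many times at the root of the derivation, so the induction must thread the invariant that the subject stays $c$ and the channel context stays $c \tL \CH{A}$ through each conversion step. Both are immediate, as \textsc{Conversion} modifies only the type. I therefore expect no genuine obstacle: this is a routine inversion argument, and indeed the conclusion $m' = c$ holds for any channel context, with the specific form $c \tL \CH{A}$ mattering only to match the statement as phrased. The lemma should be discharged by a short case analysis mirroring the \textsc{Channel}-rule cases of the preceding shape lemmas.
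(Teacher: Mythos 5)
Your proof is correct and matches the paper's treatment: the paper dispatches all the shape-preservation lemmas, including this one, as ``routine inductions on the derivations of the erasure judgments,'' and your case analysis (only \textsc{Channel-CH}, \textsc{Channel-HC}, and \textsc{Conversion} can have a channel as subject, with the first yielding $m' = c$ directly, the second contradicting the context, and the third handled by the induction hypothesis) is exactly that routine induction spelled out.
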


\begin{lemma}\label[lemma]{lemma:erasure-shape-explicit-send}
  If $\Theta ; \Gamma ; \Delta \vdash \sendR{m} \sim m' : B$, then 
  $m' = \sendR{n'}$ for some $n'$.
\end{lemma}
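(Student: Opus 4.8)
The plan is to proceed by induction on the derivation of the erasure judgment $\Theta ; \Gamma ; \Delta \vdash \sendR{m} \sim m' : B$, performing case analysis on the final rule applied. The crucial observation is that the erasure relation is essentially syntax-directed on its first (unerased) term: for each top-level constructor of the source term, only a small fixed set of rules can conclude with that constructor on the left. Thus the argument reduces to inspecting which erasure rules can produce $\sendR{m}$ as their left-hand term.

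First I would identify the candidate rules. By inspection of the full set of erasure rules, the only rules whose conclusion places $\sendR{\cdot}$ on the left are \textsc{Explicit-Send-CH} and \textsc{Explicit-Send-HC}, together with the structural \textsc{Conversion} rule (which leaves both the source and erased terms unchanged while only altering the type up to $\simeq$). For both send rules, the conclusion has the form $\sendR{m} \sim \sendR{m'}$, so the erased term already has the desired shape $\sendR{n'}$, and I simply read off $n'$ from the premise.

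The remaining case is \textsc{Conversion}. Here the premise is $\Theta ; \Gamma ; \Delta \vdash \sendR{m} \sim m' : A$ with $A \simeq B$, so the source term is still $\sendR{m}$ and the erased term $m'$ is unchanged by the rule. I would then apply the induction hypothesis to this premise to conclude that $m' = \sendR{n'}$ for some $n'$, which discharges the case.

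I expect no substantive obstacle: this is a pure shape-preservation fact, and the work is entirely an inversion over the erasure rules. The only point requiring minor care is confirming, by exhaustive inspection, that no other rule can have $\sendR{m}$ as its left-hand term; this is immediate because every remaining rule fixes a different top-level constructor for the source term (in particular the implicit-send rules commit to $\sendI{\cdot}$, and the receive, fork, close, and wait rules each commit to their own distinct head constructor). Once the candidate rules are pinned down, the conversion case is handled uniformly by the induction hypothesis, exactly as in the sibling shape-preservation lemmas of this group.
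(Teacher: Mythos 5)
Your proposal is correct and matches the paper's intended argument: the paper proves all the shape-preservation lemmas, including this one, by exactly the routine induction on the erasure derivation you describe, with the only applicable rules being \textsc{Explicit-Send-CH}, \textsc{Explicit-Send-HC} (both of which already exhibit the erased term as $\sendR{n'}$), and \textsc{Conversion} (discharged by the induction hypothesis).
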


\begin{lemma}\label[lemma]{lemma:erasure-shape-explicit-recv}
  If $\Theta ; \Gamma ; \Delta \vdash \recvR{m} \sim m' : B$, then 
  $m' = \recvR{n'}$ for some $n'$.
\end{lemma}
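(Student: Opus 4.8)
The plan is to prove this by induction on the derivation of the erasure judgment $\Theta ; \Gamma ; \Delta \vdash \recvR{m} \sim m' : B$, exactly as with the preceding shape-preservation lemmas. The essential observation is that only three erasure rules can have a conclusion whose left-hand term is of the form $\recvR{m}$: the two session erasure rules \textsc{Explicit-Recv-CH} and \textsc{Explicit-Recv-HC}, together with the \textsc{Conversion} rule (which is applicable to a term of any shape). Since the erasure rules are otherwise fully syntax-directed on the head constructor of the left-hand term, these three cases are exhaustive.

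First I would dispatch the two \textsc{Explicit-Recv} cases. In each, the rule is syntax-directed on the receive operator, so inspecting the conclusion immediately yields $m' = \recvR{n'}$, where $n'$ is the erasure of the subterm $m$ supplied by the premise (note that $\recvR{\cdot}$ is a real receive and hence is \emph{not} itself replaced by $\square$; only its ghost argument positions, of which there are none here, would be erased). These cases require no further work.

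The remaining case is \textsc{Conversion}. Here the erasure relation leaves the term untouched and only adjusts its type: we have a shorter derivation $\Theta ; \Gamma ; \Delta \vdash \recvR{m} \sim m' : A$ with $A \simeq B$, while the erased term $m'$ is carried over unchanged into the conclusion. Applying the induction hypothesis to this sub-derivation gives $m' = \recvR{n'}$ for some $n'$, which is exactly what is required.

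There is no genuine obstacle in this lemma; it is a purely syntactic induction of the same form as \Cref{lemma:erasure-shape-explicit-send} and the other shape lemmas. The only point demanding a moment of care is to remember that \textsc{Conversion} may be interposed arbitrarily and will not alter the syntactic shape of a receive term, so that it is the induction hypothesis, rather than direct inspection of a syntax-directed rule, that closes that case.
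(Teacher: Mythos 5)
Your proof is correct and matches the paper's intended argument: the paper dispatches all of the shape-preservation lemmas, including this one, by exactly the routine induction on the erasure derivation that you describe, with the two syntax-directed \textsc{Explicit-Recv} rules handled by inspection and \textsc{Conversion} closed by the induction hypothesis. Nothing further is needed.
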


\begin{lemma}\label[lemma]{lemma:erasure-shape-implicit-send}
  If $\Theta ; \Gamma ; \Delta \vdash \sendI{m} \sim m' : B$, then 
  $m' = \sendI{n'}$ for some $n'$.
\end{lemma}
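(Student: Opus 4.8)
The plan is to proceed by induction on the derivation of the erasure judgment $\Theta ; \Gamma ; \Delta \vdash \sendI{m} \sim m' : B$. The crucial structural observation is that the erasure relation is syntax-directed on its left-hand term: apart from the single non-structural rule \textsc{Conversion}, every erasure rule fixes the head constructor of the term being erased. Consequently, the only rules whose conclusion can have $\sendI{m}$ as its left-hand term are \textsc{Implicit-Send-CH}, \textsc{Implicit-Send-HC}, and \textsc{Conversion}.

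First I would dispatch the two base cases. For \textsc{Implicit-Send-CH}, the premise is $\Theta ; \Gamma ; \Delta \vdash m \sim n' : \CH{\ActI{!}{x : A}{B'}}$ and the conclusion is $\Theta ; \Gamma ; \Delta \vdash \sendI{m} \sim \sendI{n'} : \PiI{\Ln}{x : A}{\CM{\CH{B'}}}$, so reading off the rule immediately gives $m' = \sendI{n'}$. The \textsc{Implicit-Send-HC} case is identical except the channel constructor $\CH{\cdot}$ is replaced by $\HC{\cdot}$, again yielding $m' = \sendI{n'}$.

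Next I would handle the \textsc{Conversion} case, which is the only place the induction hypothesis is used. Here the derivation ends with premises $\Gamma \vdash B : s$, $\Theta ; \Gamma ; \Delta \vdash \sendI{m} \sim m' : A$, and $A \simeq B$. Since the left-hand term in the premise is still $\sendI{m}$ and its derivation is strictly smaller, the induction hypothesis applies directly and yields $m' = \sendI{n'}$ for some $n'$; the conclusion then carries the same $m'$, so the result is preserved.

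I do not expect a genuine obstacle here: the entire content of the lemma is the case analysis just described, and the only subtlety is confirming that no other erasure rule (e.g.\ \textsc{Return}, \textsc{Implicit-App}, or any data rule) can produce a conclusion with head constructor $\SendI{}$, which holds because each such rule pins a different head symbol on the left-hand term. The same template, with the appropriate pair of session rules plus \textsc{Conversion}, covers all the neighbouring shape-preservation lemmas, so this proof is essentially a routine instance of that uniform argument.
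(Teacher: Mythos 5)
Your proof is correct and matches the paper's intended argument: the paper dispatches all of these shape-preservation lemmas as ``routine inductions on the derivations of the erasure judgments,'' which is precisely the case analysis you give (the two \textsc{Implicit-Send} rules read off the conclusion directly, and \textsc{Conversion} passes the result through the induction hypothesis unchanged).
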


\begin{lemma}\label[lemma]{lemma:erasure-shape-implicit-recv}
  If $\Theta ; \Gamma ; \Delta \vdash \recvI{m} \sim m' : B$, then 
  $m' = \recvI{n'}$ for some $n'$.
\end{lemma}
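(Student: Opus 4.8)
The plan is to prove this by induction on the derivation of the erasure judgment $\Theta ; \Gamma ; \Delta \vdash \recvI{m} \sim m' : B$, performing a case analysis on the final rule applied. The key structural observation is that the erasure rules are syntax-directed on the \emph{left-hand} subject term, with the sole exception of \textsc{Conversion}. Inspecting the session erasure rules, the only rules whose conclusion can carry $\recvI{m}$ as its left subject are \textsc{Implicit-Recv-CH} and \textsc{Implicit-Recv-HC}, together with the structural \textsc{Conversion} rule; every other erasure rule fixes a left subject of a different syntactic form (a variable, a $\lambda$-abstraction, a pair, a \bsf{return}, and so on), so none of them can match $\recvI{m}$.

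First I would dispatch the two \textsc{Implicit-Recv} cases. In each of these the conclusion of the rule is literally $\recvI{m} \sim \recvI{m''}$, where $m''$ is the erasure of the immediate subterm obtained from the premise. Taking $n' := m''$ immediately yields $m' = \recvI{n'}$, concluding these cases without any appeal to the induction hypothesis.

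The remaining case is \textsc{Conversion}, which is the only place where the induction hypothesis is needed. Here the premise is $\Theta ; \Gamma ; \Delta \vdash \recvI{m} \sim m' : A$ for some $A$ with $A \simeq B$, and the rule leaves both related terms $\recvI{m}$ and $m'$ unchanged while only weakening the type. Since this premise is a strictly smaller derivation carrying the same left subject $\recvI{m}$, the induction hypothesis applies and gives $m' = \recvI{n'}$ for some $n'$, which is exactly the required conclusion.

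I expect no genuine obstacle here: the lemma, like its companion shape-preservation lemmas, is a routine induction, and the only subtlety is recognizing that \textsc{Conversion} does not disturb the subject terms, so it transparently passes the shape through the induction hypothesis. The same argument template applies verbatim to each of the preceding shape lemmas, differing only in which one or two introduction rules match the left subject.
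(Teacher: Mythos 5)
Your proof is correct and matches the paper's approach: the paper states that these shape-preservation lemmas are proved by routine induction on the derivation of the erasure judgment, which is exactly your argument. Your identification of \textsc{Implicit-Recv-CH}, \textsc{Implicit-Recv-HC}, and \textsc{Conversion} as the only applicable cases, with the induction hypothesis needed only to pass the shape through \textsc{Conversion}, is precisely the intended reasoning.
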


\begin{lemma}\label[lemma]{lemma:erasure-shape-fork}
  If $\Theta ; \Gamma ; \Delta \vdash \fork{x : \CH{A}}{m} \sim m' : B$, then 
  $m' = \fork{x : \square}{n'}$ for some $n'$.
\end{lemma}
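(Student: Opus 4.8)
The plan is to proceed by induction on the derivation of the erasure judgment $\Theta ; \Gamma ; \Delta \vdash \fork{x : \CH{A}}{m} \sim m' : B$. The crucial observation is that the erasure relation is syntax-directed on the \emph{source} term: for each top-level shape of the first argument there is exactly one applicable introduction rule, with the sole exception of the \textsc{Conversion} rule, which alters only the type and leaves both the source and the target term untouched. Consequently, an inversion on the shape of the source term suffices to pin down the shape of the target.

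First I would handle the base case. Inspecting the session erasure rules, the only rule whose source term is a \Fork{}-expression is \textsc{Fork}, which from a premise $\Theta ; \Gamma, x : \CH{A} ; \Delta, x \tL \CH{A} \vdash m \sim n' : \CM{\unit}$ concludes $\Theta ; \Gamma ; \Delta \vdash \fork{x : \CH{A}}{m} \sim \fork{x : \square}{n'} : \CM{\HC{A}}$. In this case the target term is immediately of the required form $\fork{x : \square}{n'}$, so there is nothing further to show.

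The remaining case is \textsc{Conversion}. Here the derivation ends with a premise $\Theta ; \Gamma ; \Delta \vdash \fork{x : \CH{A}}{m} \sim m' : B'$ for some $B'$ with $B' \simeq B$, and the \textsc{Conversion} rule carries the \emph{same} source/target pair $(\fork{x : \CH{A}}{m},\, m')$ into the conclusion. Since the source term is unchanged, the induction hypothesis applies directly to this premise and yields $m' = \fork{x : \square}{n'}$ for some $n'$, which is exactly the desired conclusion.

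No genuine obstacle arises: the lemma is a routine syntactic inversion, mirroring the other shape-preservation lemmas (\Cref{lemma:erasure-shape-explicit-lam}--\Cref{lemma:erasure-shape-implicit-recv}). The only point meriting care is the \textsc{Conversion} rule, which prevents the relation from being literally syntax-directed with respect to the conclusion's type; dispatching it through the induction hypothesis rather than by a direct one-step inversion is what makes the argument go through cleanly.
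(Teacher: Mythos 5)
Your proof is correct and matches the paper's approach: the paper dispatches all the shape-preservation lemmas, including this one, as ``routine inductions on the derivations of the erasure judgments,'' and your case analysis (the \textsc{Fork} rule yields the shape directly; \textsc{Conversion} preserves both terms and is closed under the induction hypothesis) is exactly the intended argument.
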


\begin{lemma}\label[lemma]{lemma:erasure-shape-close}
  If $\Theta ; \Gamma ; \Delta \vdash \close{m} \sim m' : B$, then 
  $m' = \close{n'}$ for some $n'$.
\end{lemma}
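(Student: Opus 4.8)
The plan is to proceed by induction on the derivation of the erasure judgment $\Theta ; \Gamma ; \Delta \vdash \close{m} \sim m' : B$, following exactly the skeleton used for the preceding shape-preservation lemmas. The key structural observation is that every erasure rule is syntax-directed on the left-hand (unerased) subject term, with the single exception of \textsc{Conversion}, which leaves both terms untouched and merely rewrites the type up to $\simeq$. Consequently, the only rules that can conclude a judgment whose subject is $\close{m}$ are the session-erasure rule \textsc{Close} and the rule \textsc{Conversion}; every other rule has a subject of a different syntactic form and is vacuously excluded.

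First I would handle the \textsc{Close} case. Here the conclusion of the rule is literally $\Theta ; \Gamma ; \Delta \vdash \close{m} \sim \close{m'} : \CM{\unit}$, so the erased term is $\close{m'}$ and the claim holds immediately by taking $n' = m'$. Next I would handle the \textsc{Conversion} case, whose premise is $\Theta ; \Gamma ; \Delta \vdash \close{m} \sim m' : A$ for some $A \simeq B$, with the erased term $m'$ carried over unchanged from premise to conclusion. Since this premise is a strictly smaller derivation that still has $\close{m}$ as its subject, the induction hypothesis applies and yields $m' = \close{n'}$ for some $n'$; the conclusion then inherits the same erased term, completing the case.

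I do not expect any genuine obstacle here: the argument is entirely routine, and the only point requiring care is recognizing that \textsc{Conversion} is the lone non-syntax-directed rule and that it never alters the shape of either the source or the target term. This same one-base-case-plus-\textsc{Conversion} pattern discharges each of the shape lemmas in this block uniformly, so nothing specific to $\close{\cdot}$ is needed beyond instantiating it at the \textsc{Close} rule.
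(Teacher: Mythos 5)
Your proposal is correct and matches the paper's approach: the paper dispatches all the shape-preservation lemmas by the same routine induction on the erasure derivation, where the only applicable rules are the syntax-directed one for the construct in question (\textsc{Close} here) and \textsc{Conversion}, which preserves both terms. Nothing further is needed.
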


\begin{lemma}\label[lemma]{lemma:erasure-shape-wait}
  If $\Theta ; \Gamma ; \Delta \vdash \wait{m} \sim m' : B$, then 
  $m' = \wait{n'}$ for some $n'$.
\end{lemma}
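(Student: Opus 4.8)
The plan is to proceed by straightforward induction on the derivation of the erasure judgment $\Theta ; \Gamma ; \Delta \vdash \wait{m} \sim m' : B$. The key structural observation is that the erasure rules are \emph{syntax-directed} on the left-hand term: with the sole exception of the \textsc{Conversion} rule, every erasure rule has a conclusion whose left-hand term has a distinct top-level constructor. Consequently, the only rules that can produce a conclusion of the form $\Theta ; \Gamma ; \Delta \vdash \wait{m} \sim m' : B$ are \textsc{Wait} and \textsc{Conversion}; all remaining rules are immediately ruled out because their left-hand terms ($x$, $\lamR{t}{x:A}{n}$, $\appR{m}{n}$, $\close{m}$, etc.) cannot be syntactically equal to $\wait{m}$.

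First I would dispatch the base case. If the last rule applied is \textsc{Wait}, then by inspection of that rule the erased term is $m' = \wait{n'}$ where $n'$ is the erasure of $m$, which is exactly the desired conclusion. This case requires no further reasoning.

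Next I would handle the inductive case. If the last rule applied is \textsc{Conversion}, then its premises give $\Theta ; \Gamma ; \Delta \vdash \wait{m} \sim m' : A$ together with $A \simeq B$ and $\Gamma \vdash B : s$. The crucial point is that \textsc{Conversion} leaves \emph{both} the source term and the erased term unchanged, only adjusting the type; so the premise is an erasure judgment on the very same pair $(\wait{m}, m')$, merely at type $A$. The induction hypothesis then applies directly to this premise and yields $m' = \wait{n'}$ for some $n'$, completing the case.

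I do not anticipate any genuine obstacle here: the entire argument is a routine inductive inversion, identical in structure to the companion shape-preservation lemmas (e.g.\ \Cref{lemma:erasure-shape-close} and \Cref{lemma:erasure-shape-explicit-send}). The only point requiring mild care is ensuring the \textsc{Conversion} case is treated as an \emph{inductive} step rather than a base case, since it is the one rule that does not fix the shape of the term by itself; once one notes that \textsc{Conversion} preserves the erased term verbatim, the induction hypothesis closes the argument with no additional work.
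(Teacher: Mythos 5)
Your proposal is correct and matches the paper's approach: the paper dispatches all the shape-preservation lemmas, including this one, by the same routine induction on the erasure derivation, where only \textsc{Wait} and \textsc{Conversion} can apply and the latter preserves both terms verbatim. No gaps.
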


We also show that the erasure relation preserves the structure of thunks and values.
\begin{lemma}\label[lemma]{lemma:erasure-shape-value}
  Given $\Theta ; \Gamma ; \Delta \vdash m \sim m' : A$,
  if $m$ is a value, then $m'$ is also a value, and if $m$ is a thunk, then $m'$ is also a thunk.
\end{lemma}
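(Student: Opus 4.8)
The plan is to establish both halves of the statement simultaneously by induction on the structure of $m$, with a case split governed by the value and thunk grammars. A joint induction is forced by the mutual nesting of the two classes: a thunk is itself a value, a value may contain values ($\pairR{u}{v}{t}$, $\pairI{m}{v}{t}$, $\return{v}$), a thunk may contain a value ($\appR{\sendR{c}}{v}$), and a thunk may contain a thunk ($\letin{x}{\tau}{m}$). In each case I invert the erasure derivation $\Theta ; \Gamma ; \Delta \vdash m \sim m' : A$; the already-established shape-preservation lemmas (\Cref{lemma:erasure-shape-explicit-lam} through \Cref{lemma:erasure-shape-wait}) are what let this inversion see through intervening \textsc{Conversion} steps and read off the outer constructor of $m'$ together with erasure subderivations for the immediate subterms, to which the induction hypothesis then applies.

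For the value cases the atoms are immediate: by \Cref{lemma:erasure-shape-unit}, \Cref{lemma:erasure-shape-true}, \Cref{lemma:erasure-shape-false} and \Cref{lemma:erasure-shape-channel} we get $m' = m$, and \Cref{lemma:erasure-shape-explicit-lam}/\Cref{lemma:erasure-shape-implicit-lam} give $\lamR{t}{x : \square}{n'}$ and $\lamI{t}{x : \square}{n'}$, which are values. For $\pairR{u}{v}{t}$, $\pairI{m}{v}{t}$ and $\return{v}$, \Cref{lemma:erasure-shape-explicit-pair}, \Cref{lemma:erasure-shape-implicit-pair} and \Cref{lemma:erasure-shape-return} expose the erased outer constructor, and the induction hypothesis applied to the value-typed components ($u$ and $v$; $v$; $v$ respectively) shows those components erase to values, so $m'$ is a value (with $\square$ occupying the erased first slot of the implicit pair). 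The partial sends $\sendR{c}$ and $\sendI{c}$ are handled by \Cref{lemma:erasure-shape-explicit-send}/\Cref{lemma:erasure-shape-implicit-send} to fix the head, after which inverting the erasure of the channel argument forces it back to $c$, so $m'$ is again a partial send and hence a value.

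The thunk cases mirror this. \Cref{lemma:erasure-shape-fork} yields $\fork{x : \square}{n'}$, and \Cref{lemma:erasure-shape-explicit-recv}, \Cref{lemma:erasure-shape-implicit-recv}, \Cref{lemma:erasure-shape-close} and \Cref{lemma:erasure-shape-wait}, together with channel-argument inversion, show $\recvR{c}$, $\recvI{c}$, $\close{c}$ and $\wait{c}$ are preserved verbatim, all of which are thunks. For the applied sends $\appR{\sendR{c}}{v}$ and $\appI{\sendI{c}}{m}$ I nest the inversions: \Cref{lemma:erasure-shape-explicit-app}/\Cref{lemma:erasure-shape-implicit-app} give an application, a second inversion of the head through \Cref{lemma:erasure-shape-explicit-send}/\Cref{lemma:erasure-shape-implicit-send} plus the channel fact forces the head to be the unchanged $\sendR{c}$ (resp.\ $\sendI{c}$), and the argument either erases to a value by the induction hypothesis (explicit case) or to $\square$ (implicit case); both fit the thunk grammar. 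Finally \Cref{lemma:erasure-shape-bind} turns $\letin{x}{\tau}{m}$ into $\letin{x}{n_1'}{n_2'}$, and the thunk half of the induction hypothesis applied to $\tau$ makes $n_1'$ a thunk, so the whole let-expression is a thunk.

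The part requiring the most care is not any single case but the orchestration of the joint induction: each cross-class nesting must be discharged by the correct half of the hypothesis (the value half for a send's argument, the thunk half for a let's bound term, and the value half to promote a thunk sitting in a value position), and the applied-send and receive thunks need the repeated inversion through the send/receive shape lemmas and then the channel lemma to certify that the communicating channel survives erasure unchanged. Once this mutual-induction scaffolding is in place, every case collapses to a mechanical appeal to one shape lemma and, where a component is present, the induction hypothesis.
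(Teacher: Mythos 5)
Your proof is correct and takes essentially the same route as the paper, which likewise discharges the lemma by induction together with the shape-preservation lemmas, merely phrasing the induction as being on the erasure derivation rather than on the value/thunk structure of $m$ --- an immaterial difference, since either measure yields the same case analysis. The only nuance is that the shape lemmas by themselves pin down just the outer constructor of $m'$; the erasure subderivations for the components to which you apply the induction hypothesis come from the corresponding inversion lemmas (e.g.\ \Cref{lemma:erasure-inversion-explicit-pair}, \Cref{lemma:erasure-inversion-bind}), which your argument does in fact invoke where needed.
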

\begin{proof}
  By straightforward induction on the derivation of $\Theta ; \Gamma ; \Delta \vdash m \sim m' : A$
  and applying the shape preservation lemmas.
\end{proof}

\paragraph{\textbf{Inversion Lemmas}}
We now define inversion lemmas for the erasure relation.
When used in conjunction with the shape preservation lemmas, these lemmas allow us to
invert an erasure judgment to obtain erasure judgments for its subterms.
The proofs are routine inductions on the derivations of the erasure judgments.

\begin{lemma}\label[lemma]{lemma:erasure-inversion-implicit-lam}
  If 
  $
    \Theta ; \Gamma ; \Delta \vdash 
    \lamI{t}{x : A_1}{m_1} \sim 
    \lamI{t}{x : A_2}{m_2} : \PiI{s}{x : A_3}{B}
  $, then 
  $\Theta ; \Gamma, x : A_3 ; \Delta \vdash m_1 \sim m_2 : B$.
\end{lemma}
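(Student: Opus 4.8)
The plan is to proceed by induction on the derivation of the erasure judgment $\Theta ; \Gamma ; \Delta \vdash \lamI{t}{x : A_1}{m_1} \sim \lamI{t}{x : A_2}{m_2} : \PiI{s}{x : A_3}{B}$. Since the left-hand term $\lamI{t}{x : A_1}{m_1}$ has a fixed head constructor, the only erasure rules whose conclusion can match are \textsc{Implicit-Lam} and \textsc{Conversion}; every other rule fixes a different head term. In the \textsc{Implicit-Lam} case the conclusion type is syntactically $\PiI{t}{x : A_1}{B}$, so matching it against $\PiI{s}{x : A_3}{B}$ forces $s = t$ and $A_3 = A_1$, and the rule's premise is already the desired ${\Theta ; \Gamma, x : A_3 ; \Delta \vdash m_1 \sim m_2 : B}$ (and by \Cref{lemma:erasure-shape-implicit-lam} the erased annotation $A_2$ is in fact $\square$).

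The difficulty is the \textsc{Conversion} case, where the subderivation types the same term at an intermediate type $C$ with $C \simeq \PiI{s}{x : A_3}{B}$, and $C$ need not itself be syntactically a $\Pi$-type, so the induction hypothesis (phrased for a $\Pi$-typed conclusion) cannot be applied directly. To sidestep this I would first prove an auxiliary generalized claim by induction: whenever ${\Theta ; \Gamma ; \Delta \vdash \lamI{t}{x : A_1}{m_1} \sim \lamI{t}{x : A_2}{m_2} : C}$, there is a body type $B_0$ with ${\Theta ; \Gamma, x : A_1 ; \Delta \vdash m_1 \sim m_2 : B_0}$ and $C \simeq \PiI{t}{x : A_1}{B_0}$. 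This claim peels off the conversion spine: the \textsc{Implicit-Lam} base case takes $B_0 = B$ with reflexive convertibility, and each \textsc{Conversion} step merely composes the new convertibility with the one supplied by the induction hypothesis via transitivity of $\simeq$.

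Applying the auxiliary claim to the hypothesis of the lemma yields $B_0$ together with $\PiI{s}{x : A_3}{B} \simeq \PiI{t}{x : A_1}{B_0}$; by \Cref{corollary:inj-implicit-fun} this gives $s = t$, $A_1 \simeq A_3$, and $B \simeq B_0$. It then remains to transport ${\Theta ; \Gamma, x : A_1 ; \Delta \vdash m_1 \sim m_2 : B_0}$ into ${\Theta ; \Gamma, x : A_3 ; \Delta \vdash m_1 \sim m_2 : B}$. I would obtain the well-sortedness side conditions from validity of the $\Pi$-type: \Cref{lemma:erasure-typing} and \Cref{theorem:program-type-validity} give $\Gamma \vdash \PiI{s}{x : A_3}{B} : r$, and \Cref{lemma:logical-inversion-implicit-fun} then yields $\Gamma, x : A_3 \vdash B : r'$ together with $\Gamma \vdash A_3 : s_0$ from context validity. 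Converting the logical binder from $A_1$ to $A_3$ uses the erasure analogue of \Cref{corollary:logical-context-conv} (derivable from \Cref{lemma:erasure-implicit-substitution} by the identity-substitution argument, since here $x$ lives only in $\Gamma$), and converting the result type from $B_0$ to $B$ uses the erasure \textsc{Conversion} rule with $\Gamma, x : A_3 \vdash B : r'$.

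The step I expect to be the main obstacle is precisely this \textsc{Conversion} handling: making the induction go through when the intermediate type is not syntactically a $\Pi$-type, which the auxiliary convertibility-tracking claim plus $\Pi$-injectivity resolves, together with establishing the erasure context-conversion principle, which is not stated explicitly in the excerpt but follows by the same routine substitution argument used for its typing-level counterpart.
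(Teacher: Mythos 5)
Your proposal is correct and matches the paper's approach: the paper dispatches this lemma (together with the other erasure inversion lemmas) as a ``routine induction on the derivation of the erasure judgment,'' and your strengthened induction hypothesis tracking convertibility through the \textsc{Conversion} spine, followed by $\Pi$-injectivity (\Cref{corollary:inj-implicit-fun}) and the erasure analogues of context conversion, is precisely the standard content of that routine induction. No gaps.
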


\begin{lemma}\label[lemma]{lemma:erasure-inversion-explicit-lam}
  If 
  $
    \Theta ; \Gamma ; \Delta \vdash 
    \lamR{t}{x : A_1}{m_1} \sim 
    \lamR{t}{x : A_2}{m_2} : \PiR{s}{x : A_3}{B}
  $, then 
  $\Theta ; \Gamma, x : A_3 ; \Delta, x \ty{r} A_3 \vdash m_1 \sim m_2 : B$
  for some $r$.
\end{lemma}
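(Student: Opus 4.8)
The plan is to prove a slightly generalized statement by induction on the derivation of the erasure judgment, so that the conversion rule is handled smoothly. Concretely, I would prove: whenever $\Theta ; \Gamma ; \Delta \vdash \lamR{t}{x : A_1}{m_1} \sim \lamR{t}{x : A_2}{m_2} : C$ and $C \simeq \PiR{s}{x : A_3}{B}$, there exists $r$ with $\Theta ; \Gamma, x : A_3 ; \Delta, x \ty{r} A_3 \vdash m_1 \sim m_2 : B$. The lemma as stated is then the instance $C = \PiR{s}{x : A_3}{B}$, using reflexivity of $\simeq$. Generalizing the target type to be merely convertible to a $\Pi$-type is the key move, since it lets the induction hypothesis fire in the conversion case, where the intermediate type need not be syntactically a $\Pi$-type.

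Since the left-hand term $\lamR{t}{x : A_1}{m_1}$ is syntactically an explicit $\lambda$, only two erasure rules can conclude the judgment: the erasure \textsc{Explicit-Lam} rule and the erasure \textsc{Conversion} rule. In the \textsc{Conversion} case the premise derives the same pair of terms at some type $A_0$ with $A_0 \simeq C$; combining with $C \simeq \PiR{s}{x : A_3}{B}$ by transitivity gives $A_0 \simeq \PiR{s}{x : A_3}{B}$, and the induction hypothesis applied to the strictly smaller premise derivation yields the result immediately. This is exactly why the generalization pays off: no structural information about $A_0$ is required.

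The real work is the base case \textsc{Explicit-Lam}. Here the rule forces $C = \PiR{t}{x : A_1}{B_0}$ and provides the body judgment $\Theta ; \Gamma, x : A_1 ; \Delta, x \ty{s_0} A_1 \vdash m_1 \sim m_2 : B_0$, where $s_0$ is the sort of $A_1$. From $\PiR{t}{x : A_1}{B_0} \simeq \PiR{s}{x : A_3}{B}$ and \Cref{corollary:inj-explicit-fun} I obtain $t = s$, $A_1 \simeq A_3$, and $B_0 \simeq B$. It then remains to transport the body judgment along these conversions: first rewrite the bound-variable type from $A_1$ to $A_3$ using an erasure analogue of \Cref{corollary:program-context-conv-explicit} (proved by the same routine induction as its typing counterpart), with the sort annotation preserved because $A_1 \simeq A_3$ forces $A_1$ and $A_3$ to share a sort by \Cref{theorem:sort-uniqueness}; then rewrite the result type from $B_0$ to $B$ using the erasure \textsc{Conversion} rule, whose well-sortedness premise $\Gamma, x : A_3 \vdash B : r'$ follows from \Cref{lemma:erasure-type-validity} together with \Cref{corollary:logical-context-conv}.

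I expect the main obstacle to be precisely this transport step in the base case, not the induction skeleton. The erasure relation carries both a typing obligation and the $\square$-replacement data, so the context-conversion and result-conversion auxiliaries must be re-established for $\sim$ rather than merely quoted from the typing development; verifying that these preserve the erased term $m_2$ unchanged (and fix the sort annotation $r$) is the fiddly part. Everything else---the two-case analysis, the injectivity appeal, and the transitivity bookkeeping in the conversion case---is routine.
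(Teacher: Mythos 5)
Your proposal is correct and matches the paper's approach: the paper omits an explicit proof, stating only that these inversion lemmas follow ``by routine inductions on the derivations of the erasure judgments,'' which is precisely the induction-with-conversion-generalization skeleton you describe, and your identification of the two applicable rules, the appeal to \Cref{corollary:inj-explicit-fun}, and the context/result-type transport in the base case is the standard way to discharge it. The only minor imprecision is in sourcing the well-sortedness side conditions (the cleanest route is erasure type validity on the original judgment followed by \Cref{lemma:logical-inversion-explicit-fun}, and matching the sorts of $A_1$ and $A_3$ also uses confluence and logical subject reduction before \Cref{theorem:sort-uniqueness}), but this does not affect the correctness of the argument.
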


\begin{lemma}\label[lemma]{lemma:erasure-inversion-implicit-pair}
  If 
  $
    \Theta ; \Gamma ; \Delta \vdash 
    \pairI{m_1}{n_1}{t} \sim 
    \pairI{m_2}{n_2}{t} : \SigI{s}{x : A}{B}
  $, then 
  $s = r$, $m_2 = \square$, $\Gamma \vdash m_1 : A$, and 
  $\Theta ; \Gamma ; \Delta \vdash n_1 \sim n_2 : B.[m_1/x]$.
\end{lemma}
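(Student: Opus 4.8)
The plan is to proceed by induction on the derivation of the erasure judgment, and the only genuine complication comes from the erasure \textsc{Conversion} rule. Because \textsc{Conversion} may decorate the derivation with an arbitrary type that is merely convertible to $\SigI{s}{x:A}{B}$, a naive induction on the stated judgment does not close: the induction hypothesis would only speak about the convertee type. I would therefore first \emph{generalize} the statement to carry a convertibility hypothesis, namely: whenever $\Theta ; \Gamma ; \Delta \vdash \pairI{m_1}{n_1}{t} \sim m' : C$ with $C \simeq \SigI{s}{x:A}{B}$, the erased term $m'$ has the form $\pairI{\square}{n_2}{t}$ and the three conclusions hold. The original lemma is then the instance $C = \SigI{s}{x:A}{B}$, discharged by reflexivity of $\simeq$.

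With the generalized statement the induction has exactly two reachable cases, since \textsc{Implicit-Pair} is the only erasure rule emitting a pair on the right and \textsc{Conversion} is the only rule that leaves the subject term unchanged while altering its type. In the \textsc{Conversion} case the premise types the same pair at some $C'$ with $C' \simeq C$; by transitivity of $\simeq$ we get $C' \simeq \SigI{s}{x:A}{B}$, so the induction hypothesis applies verbatim and yields every conclusion. In the \textsc{Implicit-Pair} case the rule fixes the type to be $\SigI{t}{x:A'}{B'}$ for the components $A', B'$ appearing in its premises, together with $m_2 = \square$, $\Gamma \vdash m_1 : A'$, and $\Theta ; \Gamma ; \Delta \vdash n_1 \sim n_2 : B'[m_1/x]$; here the rule's pair annotation and the Sigma sort coincide, so I will take $t$ to be that shared sort.

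It then remains to transport these facts across the conversion $\SigI{t}{x:A'}{B'} \simeq \SigI{s}{x:A}{B}$. Invoking \Cref{corollary:inj-implicit-sig} gives $s = t$ (the claimed sort equality), $A' \simeq A$, and $B' \simeq B$. For the typing of $m_1$, inverting $\Gamma \vdash \SigI{s}{x:A}{B}$ via \Cref{lemma:logical-inversion-implicit-sig} supplies $\Gamma \vdash A : s_A$ and $\Gamma, x:A \vdash B : r$, whence the logical \textsc{Conversion} rule upgrades $\Gamma \vdash m_1 : A'$ to $\Gamma \vdash m_1 : A$. For the tail erasure judgment, substitutivity of $\simeq$ yields $B'[m_1/x] \simeq B[m_1/x]$, \Cref{lemma:logical-subst} (with $\Gamma \vdash m_1 : A$) supplies $\Gamma \vdash B[m_1/x] : r$, and the erasure \textsc{Conversion} rule then rewrites $\Theta ; \Gamma ; \Delta \vdash n_1 \sim n_2 : B'[m_1/x]$ to $\Theta ; \Gamma ; \Delta \vdash n_1 \sim n_2 : B[m_1/x]$, completing the case.

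The main obstacle is purely the bookkeeping forced by \textsc{Conversion}: making the induction go through requires the convertibility-generalized statement, and closing the \textsc{Implicit-Pair} case requires $\Sigma_{I}$-injectivity in concert with the logical inversion and substitution lemmas to move the typing and erasure conclusions from the rule's internal $A', B'$ to the ambient $A, B$. Once that transport machinery is in place, the remaining work is immediate structural matching, which is why this lemma is classified among the routine erasure inversions.
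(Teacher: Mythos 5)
Your proof is correct and matches the paper's (unstated) argument: the paper dispatches all of these erasure inversion lemmas as ``routine inductions on the derivations of the erasure judgments,'' and your convertibility-generalized induction --- with only the \textsc{Conversion} and \textsc{Implicit-Pair} cases reachable, the latter closed by \Cref{corollary:inj-implicit-sig} together with the logical inversion and substitution lemmas to transport typing back to $A$ and $B$ --- is exactly that routine argument spelled out. (The lemma's stated conclusion ``$s = r$'' is evidently a typo for ``$s = t$,'' which is how you read it; the only bookkeeping you leave implicit is that well-sortedness of $\SigI{s}{x:A}{B}$, needed for the conversion steps, is supplied by \Cref{lemma:erasure-type-validity} at the top level and should be carried as a hypothesis of the generalized statement.)
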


\begin{lemma}\label[lemma]{lemma:erasure-inversion-explicit-pair}
  If 
  $
    \Theta ; \Gamma ; \Delta \vdash 
    \pairR{m_1}{n_1}{t} \sim 
    \pairR{m_2}{n_2}{t} : \SigR{s}{x : A}{B}
  $, then there exist $\Theta_1, \Theta_2, \Delta_1, \Delta_2$ such that
  $s = r$, $\Theta_1 \dotcup \Theta_2 = \Theta$, 
  $\Delta_1 \dotcup \Delta_2 = \Delta$, 
  $\Theta_1 ; \Gamma ; \Delta_1 \vdash m_1 \sim m_2 : A$, and 
  $\Theta_2 ; \Gamma ; \Delta_2 \vdash n_1 \sim n_2 : B[m_1/x]$. 
\end{lemma}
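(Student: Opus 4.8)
The plan is to proceed by induction on the derivation of the erasure judgment $\Theta ; \Gamma ; \Delta \vdash \pairR{m_1}{n_1}{t} \sim \pairR{m_2}{n_2}{t} : \SigR{s}{x : A}{B}$. As with the program-level inversion lemmas, the obstacle is the erasure \textsc{Conversion} rule, which permits the stated type to differ from the type produced by the \textsc{Explicit-Pair} erasure rule. A naive induction fails because in the \textsc{Conversion} case the immediate subderivation assigns the pair a type $C$ that need not be syntactically a $\Sigma$-type, so the induction hypothesis (which, as stated, demands a syntactic $\Sigma$) does not apply. I would therefore first prove a generalized statement whose conclusion only asserts $C \simeq \SigR{t}{x : A'}{B'}$ for some $A', B'$, together with the context splits and the two component erasure judgments at $A'$ and $B'[m_1/x]$, and recover the stated lemma afterward.

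For the generalized statement the induction has two relevant cases. In the \textsc{Explicit-Pair} case the premises are read off directly: the rule supplies $\Theta_1 ; \Gamma ; \Delta_1 \vdash m_1 \sim m_2 : A'$ and $\Theta_2 ; \Gamma ; \Delta_2 \vdash n_1 \sim n_2 : B'[m_1/x]$ with $\Theta = \Theta_1 \dotcup \Theta_2$, $\Delta = \Delta_1 \dotcup \Delta_2$, and the assigned type is exactly $\SigR{t}{x : A'}{B'}$, so $C \simeq \SigR{t}{x : A'}{B'}$ by reflexivity of $\simeq$. In the \textsc{Conversion} case the subderivation erases the same pair at a type $C'$ with $C' \simeq C$; the induction hypothesis yields $C' \simeq \SigR{t}{x : A'}{B'}$ and the required component judgments, and transitivity of $\simeq$ gives $C \simeq \SigR{t}{x : A'}{B'}$. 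No other erasure rule concludes with an explicit pair on the left, so these are the only cases.

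To recover the stated lemma I would instantiate the generalized version at $C = \SigR{s}{x : A}{B}$, obtaining $A', B'$ with $\SigR{s}{x : A}{B} \simeq \SigR{t}{x : A'}{B'}$. Applying injectivity of the explicit $\Sigma$-constructor (\Cref{corollary:inj-explicit-sig}) gives the sort equality $s = t$ together with $A \simeq A'$ and $B \simeq B'$; substitutivity of $\simeq$ then yields $B[m_1/x] \simeq B'[m_1/x]$. Finally I re-apply the erasure \textsc{Conversion} rule to retype the two components at the exact $A$ and $B[m_1/x]$ demanded by the statement. The well-sortedness side conditions of \textsc{Conversion} are discharged by inverting the validity of $\SigR{s}{x : A}{B}$ with \Cref{lemma:logical-inversion-explicit-sig} (for $\Gamma \vdash A : s$ and $\Gamma, x : A \vdash B : r$) and applying logical substitution (\Cref{lemma:logical-subst}) to obtain $\Gamma \vdash B[m_1/x] : r$, lifting $m_1$ to the logical level via \Cref{lemma:erasure-typing} where the substitution lemma requires it.

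The main obstacle is precisely this interaction with \textsc{Conversion}: generalizing the conclusion to a convertible type so that the induction goes through, and then carefully re-converting the two component erasure judgments back to the syntactic $A$ and $B[m_1/x]$ while supplying the well-sortedness premises. The remaining steps — the direct read-off in the \textsc{Explicit-Pair} case and the bookkeeping of the context merges $\Theta = \Theta_1 \dotcup \Theta_2$ and $\Delta = \Delta_1 \dotcup \Delta_2$ — are routine.
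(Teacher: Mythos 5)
Your proposal is correct and matches the paper's approach: the paper proves all erasure inversion lemmas, including this one, by "routine inductions on the derivations of the erasure judgments" without spelling out the details, and your generalize-over-conversion-then-specialize-via-$\Sigma$-injectivity treatment is exactly the standard way to carry out that induction (the conclusion ``$s = r$'' in the statement is evidently a typo for ``$s = t$'', which is what your argument delivers).
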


\begin{lemma}\label[lemma]{lemma:erasure-inversion-implicit-app}
  If 
  $
    \Theta ; \Gamma ; \Delta \vdash 
    \appI{m_1}{n_1} \sim 
    \appI{m_2}{n_2} : C
  $, then there exist $A, B, s$ such that
  $n_2 = \square$, $\Theta ; \Gamma ; \Delta \vdash m_1 \sim m_2 : \PiI{s}{x : A}{B}$,
  $\Gamma \vdash n_1 : A$, and $C \simeq B[n_1/x]$.
\end{lemma}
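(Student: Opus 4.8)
The plan is to proceed by induction on the derivation of the erasure judgment $\Theta ; \Gamma ; \Delta \vdash \appI{m_1}{n_1} \sim \appI{m_2}{n_2} : C$. The first step is to observe that, among all the erasure rules, only two can yield a judgment whose related terms are both implicit applications: the \textsc{Implicit-App} rule and the \textsc{Conversion} rule. Every other rule either imposes a different head constructor on the left-hand term (e.g.\ \textsc{Implicit-Lam}, \textsc{Return}, \textsc{Bind}, the session rules) or, as in \textsc{Var}, forces both sides to be a variable; meanwhile \textsc{Conversion} leaves the related pair untouched and only adjusts the type. This restricts the case analysis to exactly these two rules, which is the key structural fact underlying the whole argument.

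In the base case where the final rule is \textsc{Implicit-App}, the conclusion is read off directly from the premises. The rule's conclusion has the form $\appI{m}{n} \sim \appI{m'}{\square}$, so matching against $\appI{m_1}{n_1} \sim \appI{m_2}{n_2}$ identifies $m_2 = m'$, $n_1 = n$, and crucially $n_2 = \square$. The premises then furnish the witnesses $A, B$ together with $\Theta ; \Gamma ; \Delta \vdash m_1 \sim m_2 : \PiI{s}{x : A}{B}$ and $\Gamma \vdash n_1 : A$, where $s$ is taken to be the sort annotation $t$ of the rule. Since this rule sets the output type to be exactly $B[n_1/x]$, we have $C = B[n_1/x]$ and hence $C \simeq B[n_1/x]$ by reflexivity (\textsc{Conv-Refl}).

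The inductive case is \textsc{Conversion}, whose subderivation establishes the same related pair $\appI{m_1}{n_1} \sim \appI{m_2}{n_2}$ at some type $C'$ with $C' \simeq C$. Here I would invoke the induction hypothesis on that subderivation to obtain $A, B, s$ with $n_2 = \square$, $\Theta ; \Gamma ; \Delta \vdash m_1 \sim m_2 : \PiI{s}{x : A}{B}$, $\Gamma \vdash n_1 : A$, and $C' \simeq B[n_1/x]$. Chaining $C \simeq C'$ (obtained from $C' \simeq C$ by symmetry of $\simeq$) with $C' \simeq B[n_1/x]$ via transitivity yields $C \simeq B[n_1/x]$, while all remaining witnesses carry over verbatim. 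I expect no genuine obstacle: the sole subtlety is accumulating conversions across a chain of possibly several \textsc{Conversion} steps before reaching the governing \textsc{Implicit-App}, and this is handled uniformly by the fact that the convertibility relation is an equivalence (reflexive, symmetric, and transitive). The proof is therefore entirely routine once the two-rule case split is established, mirroring the other erasure inversion lemmas.
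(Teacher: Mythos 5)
Your proposal is correct and matches the paper's approach: the paper dispatches all the erasure inversion lemmas with the remark that they are routine inductions on the derivation of the erasure judgment, which is exactly your two-case split between \textsc{Implicit-App} (read off the premises, with $n_2 = \square$ and $C = B[n_1/x]$) and \textsc{Conversion} (apply the induction hypothesis and chain convertibilities by symmetry and transitivity of $\simeq$).
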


\begin{lemma}\label[lemma]{lemma:erasure-inversion-explicit-app}
  If 
  $
    \Theta ; \Gamma ; \Delta \vdash 
    \appR{m_1}{n_1} \sim 
    \appR{m_2}{n_2} : C
  $, then there exist $A, B, s, \Theta_1, \Theta_2, \Delta_1, \Delta_2$ such that
  $\Theta_1 \dotcup \Theta_2 = \Theta$, 
  $\Delta_1 \dotcup \Delta_2 = \Delta$, 
  $\Theta_1 ; \Gamma ; \Delta_1 \vdash m_1 \sim m_2 : \PiR{s}{x : A}{B}$,
  $\Theta_2 ; \Gamma ; \Delta_2 \vdash n_1 \sim n_2 : A$, and $C \simeq B[n_1/x]$.
\end{lemma}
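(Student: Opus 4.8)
The plan is to proceed by induction on the derivation of the erasure judgment $\Theta ; \Gamma ; \Delta \vdash \appR{m_1}{n_1} \sim \appR{m_2}{n_2} : C$. Since the left-hand term is syntactically an explicit application, the only two erasure rules that can conclude this judgment are \textsc{Explicit-App} and \textsc{Conversion}; every other rule keys off a different head shape for the original term and so cannot apply. This mirrors the structure of the program-level inversion lemma (\Cref{lemma:program-inversion-explicit-app}), the sole new wrinkle being that we must track the erased components $m_2, n_2$ in lockstep with the originals.

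In the base case \textsc{Explicit-App}, the rule supplies exactly the witnesses required: there are $A, B, t, \Theta_1, \Theta_2, \Delta_1, \Delta_2$ with $\Theta_1 ; \Gamma ; \Delta_1 \vdash m_1 \sim m_2 : \PiR{t}{x : A}{B}$ and $\Theta_2 ; \Gamma ; \Delta_2 \vdash n_1 \sim n_2 : A$, together with $\Theta = \Theta_1 \dotcup \Theta_2$, $\Delta = \Delta_1 \dotcup \Delta_2$, and $C = B[n_1/x]$. Setting $s = t$ and appealing to reflexivity of $\simeq$ (rule \textsc{Conv-Refl}) gives $C \simeq B[n_1/x]$, which discharges this case.

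For the inductive case \textsc{Conversion}, the final rule derives the judgment at type $C$ from a premise $\Theta ; \Gamma ; \Delta \vdash \appR{m_1}{n_1} \sim \appR{m_2}{n_2} : A'$ with $A' \simeq C$. First I would apply the induction hypothesis to this strictly smaller premise, obtaining $A, B, s, \Theta_1, \Theta_2, \Delta_1, \Delta_2$ satisfying all the context decompositions and erasure subjudgments, along with $A' \simeq B[n_1/x]$. The contexts $\Theta, \Delta$ and the subterm judgments are left untouched by \textsc{Conversion}, so they carry over verbatim; only the convertibility side condition needs adjustment. Combining $A' \simeq C$ with $A' \simeq B[n_1/x]$ via symmetry and transitivity of $\simeq$ yields $C \simeq B[n_1/x]$, completing the case.

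The argument is essentially routine, and I do not anticipate a genuine obstacle. The one point requiring care is that the induction must be taken on the \emph{erasure derivation} rather than on the syntax of the type, so that the convertible premise of \textsc{Conversion} counts as a smaller instance; with that framing, the transitivity bookkeeping for $\simeq$ is immediate. This same inductive pattern underlies every erasure inversion lemma stated above, so the proof transfers \emph{mutatis mutandis} to \Cref{lemma:erasure-inversion-implicit-app} and to the pair cases.
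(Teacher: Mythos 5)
Your proof is correct and follows exactly the route the paper indicates for these inversion lemmas, namely a routine induction on the erasure derivation with only the \textsc{Explicit-App} and \textsc{Conversion} rules applicable to a subject of this shape, the former supplying the witnesses directly and the latter handled by transitivity of $\simeq$. No gaps.
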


\begin{lemma}\label[lemma]{lemma:erasure-inversion-implicit-return}
  If 
  $
    \Theta ; \Gamma ; \Delta \vdash 
    \return{m_1} \sim 
    \return{m_2} : \CM{A}
  $, then 
  $\Theta ; \Gamma ; \Delta \vdash m_1 \sim m_2 : A$.
\end{lemma}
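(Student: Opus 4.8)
The plan is to proceed by induction on the derivation of the erasure judgment $\Theta ; \Gamma ; \Delta \vdash \return{m_1} \sim \return{m_2} : \CM{A}$, inspecting the final rule applied. Only two erasure rules can conclude with a term of the form $\return{\cdot}$ on the left: the erasure \textsc{Return} rule and the erasure \textsc{Conversion} rule. In the \textsc{Return} case the result is immediate, since its single premise is exactly $\Theta ; \Gamma ; \Delta \vdash m_1 \sim m_2 : A$. All other erasure rules target a different head constructor, so those cases are vacuous.

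The difficulty, as with every inversion lemma in a system carrying an explicit conversion rule, is the \textsc{Conversion} case: the derivation may have adjusted the type, so the immediate subderivation ends in $\Theta ; \Gamma ; \Delta \vdash \return{m_1} \sim \return{m_2} : C$ with $C \simeq \CM{A}$, where $C$ need not be syntactically a $\mcC$-type. To handle this cleanly I would first strengthen the statement proved by induction to: if $\Theta ; \Gamma ; \Delta \vdash \return{m_1} \sim \return{m_2} : C$ and $C \simeq \CM{A}$, then $\Theta ; \Gamma ; \Delta \vdash m_1 \sim m_2 : A$. Under this generalization the \textsc{Conversion} case becomes routine: its subderivation has some type $C'$ with $C' \simeq C$, whence $C' \simeq \CM{A}$ by transitivity of $\simeq$, and the strengthened induction hypothesis applies directly.

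For the \textsc{Return} case under the generalized statement, the subderivation yields $\Theta ; \Gamma ; \Delta \vdash m_1 \sim m_2 : A''$ where the conclusion type is $\CM{A''} = C$. From $\CM{A''} \simeq \CM{A}$ I would invoke injectivity of the concurrency monad constructor, namely \Cref{corollary:inj-monad}, to obtain $A'' \simeq A$. Then, using \Cref{lemma:erasure-type-validity} to secure $\epsilon \vdash A : s$ for some sort $s$, the erasure \textsc{Conversion} rule retypes the erasure judgment from $A''$ to $A$, giving $\Theta ; \Gamma ; \Delta \vdash m_1 \sim m_2 : A$ as required.

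The main obstacle is precisely this interaction with the \textsc{Conversion} rule; everything else is bookkeeping. Once the induction hypothesis is phrased up to convertibility of the target type and the injectivity corollary for $\mcC$ is in hand, no further subtlety remains, and the lemma follows as the special case $C = \CM{A}$ with $C \simeq \CM{A}$ by reflexivity.
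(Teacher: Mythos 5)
Your proof is correct and matches the paper's approach: the paper dispatches this lemma (along with the other erasure inversion lemmas) as a routine induction on the erasure derivation, and your generalization over a convertible target type together with \Cref{corollary:inj-monad} is exactly the standard way to push that induction through the \textsc{Conversion} rule. The only point to tighten is where the well-sortedness of $A$ needed to reapply \textsc{Conversion} comes from --- it is obtained by applying \Cref{lemma:erasure-type-validity} to the \emph{original} judgment and then \Cref{lemma:logical-inversion-monad}, so it should be threaded through the generalized statement as an extra hypothesis rather than rederived inside the induction (applying type validity to the subderivation only yields the sort of $A''$, and convertibility alone does not transport well-sortedness).
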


\begin{lemma}\label[lemma]{lemma:erasure-inversion-bind}
  If 
  ${
    \Theta ; \Gamma ; \Delta \vdash 
    \letin{x}{m_1}{n_1} \sim 
    \letin{x}{m_2}{n_2} : \CM{B}
  }$, then there exists $\Theta_1, \Theta_2, \Delta_1, \Delta_2, A, s$ such that
  $\Theta_1 \dotcup \Theta_2 = \Theta$, 
  $\Delta_1 \dotcup \Delta_2 = \Delta$, 
  $\Theta_1 ; \Gamma ; \Delta_1 \vdash m_1 \sim m_2 : \CM{A}$, and 
  ${\Theta_2 ; \Gamma, x : A ; \Delta_2, x \ty{r} A \vdash n_1 \sim n_2 : \CM{B}}$ and
  $x \notin \FV{B}$.
\end{lemma}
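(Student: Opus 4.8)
The plan is to proceed by induction on the derivation of the erasure judgment $\Theta ; \Gamma ; \Delta \vdash \letin{x}{m_1}{n_1} \sim \letin{x}{m_2}{n_2} : \CM{B}$. Since the left-hand term is a bind, only two erasure rules can conclude this judgment: \textsc{Bind} and \textsc{Conversion}. The \textsc{Bind} case is immediate, as its premises are exactly the two erasure subderivations $\Theta_1 ; \Gamma ; \Delta_1 \vdash m_1 \sim m_2 : \CM{A}$ and $\Theta_2 ; \Gamma, x : A ; \Delta_2, x \ty{r} A \vdash n_1 \sim n_2 : \CM{B}$ together with the context splits $\Theta_1 \dotcup \Theta_2 = \Theta$ and $\Delta_1 \dotcup \Delta_2 = \Delta$ demanded by the statement; its accompanying premise $\Gamma \vdash B : s$ already forces $x \notin \FV{B}$, since $B$ is typed in a context not mentioning $x$.

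The genuine obstacle is the \textsc{Conversion} case, because its premise types the very same bind term at some $C'$ with $C' \simeq \CM{B}$, and $C'$ need not be syntactically of monadic form $\CM{\cdot}$; hence the induction hypothesis as literally stated (conclusion type $\CM{B}$) does not directly apply. To push the induction through, I would first prove a generalized statement in which the hypothesis type is an arbitrary $C$ and the conclusion additionally asserts $C \simeq \CM{B}$ for the extracted body type $\CM{B}$ — mirroring the convertibility clause already employed in \Cref{lemma:erasure-inversion-explicit-app}. In this generalized form the \textsc{Bind} case yields $C = \CM{B}$, so $C \simeq \CM{B}$ holds trivially, while the \textsc{Conversion} case merely applies the induction hypothesis to its premise and closes by transitivity of $\simeq$, namely $C \simeq C' \simeq \CM{B}$.

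From the generalized statement I would then recover the exact form of the lemma. Instantiating $C := \CM{B}$ produces a body erasure judgment at some type $\CM{B'}$ together with $\CM{B} \simeq \CM{B'}$; by injectivity of the concurrency monad (\Cref{corollary:inj-monad}) this gives $B \simeq B'$, and applying the erasure \textsc{Conversion} rule re-types the body at $\CM{B}$ as required, where the needed side condition $\Gamma, x : A \vdash \CM{B} : s$ follows from \Cref{lemma:erasure-type-validity} together with weakening. Finally, the freshness condition $x \notin \FV{B}$ is obtained independently of the conversion bookkeeping: by \Cref{lemma:erasure-typing} the hypothesis entails the program typing $\Theta ; \Gamma ; \Delta \vdash \letin{x}{m_1}{n_1} : \CM{B}$, and \Cref{lemma:program-inversion-bind} delivers $x \notin \FV{B}$ for precisely the $B$ appearing in the conclusion type. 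The remaining obligations — the context merges $\Theta_1 \dotcup \Theta_2 = \Theta$ and $\Delta_1 \dotcup \Delta_2 = \Delta$ — are read off directly from the \textsc{Bind} premise and are preserved verbatim through the \textsc{Conversion} case, which leaves contexts untouched.
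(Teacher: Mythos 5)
Your proof is correct and follows the same route as the paper, which simply records these inversion lemmas as ``routine inductions on the derivations of the erasure judgments'': the \textsc{Bind} case is read off directly, and the \textsc{Conversion} case is the only real work, handled exactly as you propose by generalizing the conclusion type to an arbitrary $C$ with $C \simeq \CM{B}$ and then recovering the stated form via \Cref{corollary:inj-monad} and the erasure \textsc{Conversion} rule. Your derivations of $x \notin \FV{B}$ (from $\Gamma \vdash B : s$, or via \Cref{lemma:erasure-typing} and \Cref{lemma:program-inversion-bind}) match the reasoning the paper itself uses elsewhere, e.g.\ in the \textsc{Step-ReturnBind} case of program subject reduction.
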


\begin{lemma}\label[lemma]{lemma:erasure-inversion-explicit-send}
  If ${\Theta ; \Gamma ; \Delta \vdash \sendR{m} \sim \sendR{m'} : C}$, then there exists $A, B$
  such that either 
  $C \simeq \PiR{\Ln}{x : A}{\CM{\CH{B}}}$ and 
  $\Theta ; \Gamma ; \Delta \vdash m \sim m' : \CH{\ActR{!}{x : A}{B}}$
  or
  $C \simeq \PiR{\Ln}{x : A}{\CM{\HC{B}}}$ and 
  $\Theta ; \Gamma ; \Delta \vdash m \sim m' : \HC{\ActR{?}{x : A}{B}}$.
\end{lemma}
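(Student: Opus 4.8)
The plan is to prove this by induction on the derivation of the erasure judgment $\Theta ; \Gamma ; \Delta \vdash \sendR{m} \sim \sendR{m'} : C$, mirroring the structure of the program-level inversion lemma (\Cref{lemma:program-inversion-explicit-send}). Inspecting the erasure rules, the only ones whose conclusion can carry the subject $\sendR{m} \sim \sendR{m'}$ are \textsc{Explicit-Send-CH}, \textsc{Explicit-Send-HC}, and the structural \textsc{Conversion} rule; every other erasure rule produces a conclusion of a different syntactic shape, and this is exactly the fact recorded by the shape-preservation lemma \Cref{lemma:erasure-shape-explicit-send}. I would therefore proceed by case analysis on the last rule applied.

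In the \textsc{Explicit-Send-CH} case the premise is precisely $\Theta ; \Gamma ; \Delta \vdash m \sim m' : \CH{\ActR{!}{x : A}{B}}$ and the conclusion type is $C = \PiR{\Ln}{x : A}{\CM{\CH{B}}}$. Taking these $A$ and $B$ and using reflexivity of $\simeq$ to get $C \simeq \PiR{\Ln}{x : A}{\CM{\CH{B}}}$ establishes the first disjunct. The \textsc{Explicit-Send-HC} case is symmetric and yields the second disjunct, with $C = \PiR{\Ln}{x : A}{\CM{\HC{B}}}$ and premise $\Theta ; \Gamma ; \Delta \vdash m \sim m' : \HC{\ActR{?}{x : A}{B}}$.

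The one nontrivial case is \textsc{Conversion}, and it is precisely where stating the conclusion up to $\simeq$ rather than syntactic equality earns its keep. Here the derivation ends with a conversion from some type $C'$ to $C$ with $C' \simeq C$, sitting above a shorter subderivation of $\Theta ; \Gamma ; \Delta \vdash \sendR{m} \sim \sendR{m'} : C'$. Applying the induction hypothesis to that subderivation gives one of the two disjuncts with $C'$ in place of $C$ — say $C' \simeq \PiR{\Ln}{x : A}{\CM{\CH{B}}}$ together with the erasure judgment $\Theta ; \Gamma ; \Delta \vdash m \sim m' : \CH{\ActR{!}{x : A}{B}}$. Transitivity of $\simeq$ then yields $C \simeq \PiR{\Ln}{x : A}{\CM{\CH{B}}}$, while the witnessing judgment about $m \sim m'$ is untouched, so the same disjunct holds for $C$; the HC subcase is identical.

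I expect this \textsc{Conversion} case to be the only real subtlety: one must thread the convertibility through transitivity and observe that the premise about $m \sim m'$ never mentions $C$, so it survives the conversion verbatim. The remaining cases are a direct read-off of rule premises, so the argument is routine once the induction and the transitivity step are set up.
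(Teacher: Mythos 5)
Your proposal is correct and matches the paper's approach: the paper proves all of these erasure inversion lemmas by the same routine induction on the erasure derivation, with the only non-immediate case being \textsc{Conversion}, handled exactly as you describe via the induction hypothesis and transitivity of $\simeq$. Your case analysis is complete, since \textsc{Explicit-Send-CH}, \textsc{Explicit-Send-HC}, and \textsc{Conversion} are indeed the only erasure rules whose conclusion can have the subject $\sendR{m} \sim \sendR{m'}$.
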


\begin{lemma}\label[lemma]{lemma:erasure-inversion-explicit-recv}
  If $\Theta ; \Gamma ; \Delta \vdash \recvR{m} \sim \recvR{m'} : C$, then there exists $A, B$
  such that either 
  $C \simeq \CM{\SigR{\Ln}{x : A}{\CH{B}}}$ and 
  $\Theta ; \Gamma ; \Delta \vdash m \sim m' : \CH{\ActR{?}{x : A}{B}}$
  or
  $C \simeq \CM{\SigR{\Ln}{x : A}{\HC{B}}}$ and 
  $\Theta ; \Gamma ; \Delta \vdash m \sim m' : \HC{\ActR{!}{x : A}{B}}$.
\end{lemma}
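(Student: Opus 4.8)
The plan is to prove this by induction on the derivation of the erasure judgment $\Theta ; \Gamma ; \Delta \vdash \recvR{m} \sim \recvR{m'} : C$, mirroring the structure of the corresponding program-level inversion lemma (\Cref{lemma:program-inversion-explicit-recv}). The key observation is that the original term $\recvR{m}$ is syntactically a $\Recv$-expression, so only the erasure rules whose conclusion exhibits this head shape could have been the last rule applied. Scanning the session erasure rules, these are exactly \textsc{Explicit-Recv-CH} and \textsc{Explicit-Recv-HC}, together with the non-syntax-directed \textsc{Conversion} rule. I would first record this case analysis and then handle each rule in turn.

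First I would dispatch the two base cases by reading off the rules directly. If the derivation ends with \textsc{Explicit-Recv-CH}, then $C = \CM{\SigR{\Ln}{x : A}{\CH{B}}}$ and the premise supplies $\Theta ; \Gamma ; \Delta \vdash m \sim m' : \CH{\ActR{?}{x : A}{B}}$; taking $C \simeq C$ via \textsc{Conv-Refl} establishes the left disjunct. Symmetrically, \textsc{Explicit-Recv-HC} gives $C = \CM{\SigR{\Ln}{x : A}{\HC{B}}}$ and $\Theta ; \Gamma ; \Delta \vdash m \sim m' : \HC{\ActR{!}{x : A}{B}}$, which is the right disjunct.

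The only genuinely inductive case is \textsc{Conversion}, whose premise is $\Theta ; \Gamma ; \Delta \vdash \recvR{m} \sim \recvR{m'} : C'$ with $C' \simeq C$ and $\Gamma \vdash C : s$. Applying the induction hypothesis to this premise yields one of the two disjuncts for $C'$ — say $C' \simeq \CM{\SigR{\Ln}{x : A}{\CH{B}}}$ together with the erasure judgment for $m$ and $m'$ — after which transitivity of the convertibility relation $\simeq$ promotes $C' \simeq \CM{\SigR{\Ln}{x : A}{\CH{B}}}$ to $C \simeq \CM{\SigR{\Ln}{x : A}{\CH{B}}}$, leaving the erasure judgment for $m$ and $m'$ untouched. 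The $\HC{\cdot}$ disjunct is handled identically.

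There is no real obstacle: the lemma is a routine inversion, and the only point requiring attention is threading the convertibility witness through the \textsc{Conversion} case by transitivity rather than re-deriving it. This is precisely why the statement is phrased with $C \simeq \cdots$ instead of an equality, so that the conversion-closed form is stable under the induction.
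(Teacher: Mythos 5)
Your proposal is correct and matches the paper's intent: the paper groups this lemma with the other erasure inversion lemmas under the remark that ``the proofs are routine inductions on the derivations of the erasure judgments,'' which is exactly the induction-on-the-derivation argument you give, with the only non-base case being \textsc{Conversion} handled by transitivity of $\simeq$. Your case analysis is complete, since \textsc{Explicit-Recv-CH}, \textsc{Explicit-Recv-HC}, and \textsc{Conversion} are indeed the only erasure rules whose conclusion can have the shape $\recvR{m} \sim \recvR{m'} : C$.
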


\begin{lemma}\label[lemma]{lemma:erasure-inversion-implicit-send}
  If $\Theta ; \Gamma ; \Delta \vdash \sendI{m} \sim \sendI{m'} : C$, then there exists $A, B$
  such that either 
  $C \simeq \PiI{\Ln}{x : A}{\CM{\CH{B}}}$ and 
  $\Theta ; \Gamma ; \Delta \vdash m \sim m' : \CH{\ActI{!}{x : A}{B}}$
  or
  $C \simeq \PiI{\Ln}{x : A}{\CM{\HC{B}}}$ and 
  $\Theta ; \Gamma ; \Delta \vdash m \sim m' : \HC{\ActI{?}{x : A}{B}}$.
\end{lemma}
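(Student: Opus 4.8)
The plan is to proceed by induction on the derivation of the erasure judgment $\Theta ; \Gamma ; \Delta \vdash \sendI{m} \sim \sendI{m'} : C$, mirroring the structure of the program-level inversion lemma \Cref{lemma:program-inversion-implicit-send}. The key observation is that among all the erasure rules, only three can produce a judgment whose erased pair has $\sendI{\cdot}$ at the head on both sides: the two session-erasure rules \textsc{Implicit-Send-CH} and \textsc{Implicit-Send-HC}, together with the \textsc{Conversion} rule (which leaves the underlying terms untouched). This immediately restricts the induction to these three cases.

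For the two base cases I would simply read off the witnesses from the premise. In the \textsc{Implicit-Send-CH} case, the premise gives $\Theta ; \Gamma ; \Delta \vdash m \sim m' : \CH{\ActI{!}{x : A}{B}}$ and the conclusion fixes $C = \PiI{\Ln}{x : A}{\CM{\CH{B}}}$; taking these same $A$ and $B$, the left disjunct holds with $C \simeq \PiI{\Ln}{x : A}{\CM{\CH{B}}}$ following from reflexivity of $\simeq$. The \textsc{Implicit-Send-HC} case is symmetric and establishes the right disjunct with $C = \PiI{\Ln}{x : A}{\CM{\HC{B}}}$.

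For the \textsc{Conversion} case I have $\Theta ; \Gamma ; \Delta \vdash \sendI{m} \sim \sendI{m'} : C'$ with $C' \simeq C$ and $\Gamma \vdash C : s$. Since \textsc{Conversion} preserves both $\sendI{m}$ and $\sendI{m'}$, the induction hypothesis applies to the subderivation at type $C'$, yielding $A, B$ together with one of the two disjuncts stated against $C'$. I would then transport this to $C$ by transitivity of the convertibility relation: from $C \simeq C'$ (symmetry) and $C' \simeq \PiI{\Ln}{x : A}{\CM{\CH{B}}}$ (resp. the $\HC{\cdot}$ variant) I obtain the required convertibility for $C$, while the erasure judgment on the subterms $m \sim m'$ is carried over unchanged.

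There is no genuine obstacle here; the lemma is a routine inversion. The only point requiring care is the \textsc{Conversion} case, where one must chain the convertibility assertions correctly and confirm that the disjunction is stable under replacing $C'$ by a convertible $C$ --- which it is, precisely because the disjuncts are themselves phrased up to $\simeq$. This closure under conversion is exactly why the statement asserts $C \simeq \PiI{\Ln}{x : A}{\CM{\CH{B}}}$ (and its dual) rather than a syntactic equality.
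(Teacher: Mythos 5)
Your proof is correct and matches the paper's approach: the paper proves all of these erasure inversion lemmas by routine induction on the derivation of the erasure judgment, which is exactly what you do, with the three relevant cases (\textsc{Implicit-Send-CH}, \textsc{Implicit-Send-HC}, \textsc{Conversion}) handled as expected and the \textsc{Conversion} case correctly discharged via transitivity of $\simeq$.
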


\begin{lemma}\label[lemma]{lemma:erasure-inversion-implicit-recv}
  If $\Theta ; \Gamma ; \Delta \vdash \recvI{m} \sim \recvI{m'} : C$, then there exists $A, B$
  such that either 
  $C \simeq \CM{\SigI{\Ln}{x : A}{\CH{B}}}$ and 
  $\Theta ; \Gamma ; \Delta \vdash m \sim m' : \CH{\ActI{?}{x : A}{B}}$
  or
  $C \simeq \CM{\SigI{\Ln}{x : A}{\HC{B}}}$ and 
  $\Theta ; \Gamma ; \Delta \vdash m \sim m' : \HC{\ActI{!}{x : A}{B}}$.
\end{lemma}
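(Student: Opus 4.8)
The plan is to prove this by induction on the derivation of the erasure judgment $\Theta ; \Gamma ; \Delta \vdash \recvI{m} \sim \recvI{m'} : C$, exactly mirroring the proof of the corresponding typing-inversion lemma \Cref{lemma:program-inversion-implicit-recv} and the neighboring erasure-inversion lemmas such as \Cref{lemma:erasure-inversion-implicit-send}. The key observation is that an erasure judgment whose left term has head form $\recvI{\cdot}$ can only be concluded by one of three rules: the two session erasure rules \textsc{Implicit-Recv-CH} and \textsc{Implicit-Recv-HC}, or the \textsc{Conversion} rule (which preserves the shape of the erased term but changes its type up to $\simeq$). So the induction has two base cases and one inductive case.

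In the \textsc{Implicit-Recv-CH} base case, the rule's premise gives directly $\Theta ; \Gamma ; \Delta \vdash m \sim m' : \CH{\ActI{?}{x : A}{B}}$ and its conclusion fixes $C = \CM{\SigI{\Ln}{x : A}{\CH{B}}}$; I would then take these $A, B$ as the witnesses and discharge the first disjunct, using reflexivity of $\simeq$ to obtain $C \simeq \CM{\SigI{\Ln}{x : A}{\CH{B}}}$. The \textsc{Implicit-Recv-HC} case is entirely symmetric: its premise supplies $\Theta ; \Gamma ; \Delta \vdash m \sim m' : \HC{\ActI{!}{x : A}{B}}$ with $C = \CM{\SigI{\Ln}{x : A}{\HC{B}}}$, discharging the second disjunct.

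For the \textsc{Conversion} case, the derivation ends by converting a type $C'$ to $C$ with $C' \simeq C$, and the immediate subderivation is $\Theta ; \Gamma ; \Delta \vdash \recvI{m} \sim \recvI{m'} : C'$. I would apply the induction hypothesis to this subderivation, obtaining witnesses $A, B$ and one of the two disjuncts relating $C'$ to either $\CM{\SigI{\Ln}{x : A}{\CH{B}}}$ or $\CM{\SigI{\Ln}{x : A}{\HC{B}}}$ together with the corresponding erasure judgment on $m, m'$; the latter judgment is unchanged, so it transfers immediately. Then, using symmetry and transitivity of the convertibility relation (\textsc{Conv-Refl}, \textsc{Conv-PStep}, \textsc{Conv-PStep-Rev}), I chain $C \simeq C'$ with the convertibility produced by the induction hypothesis to re-establish the same disjunct for $C$.

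The proof is essentially routine, so there is no serious obstacle; the only point requiring care is performing the induction on the \emph{erasure derivation} rather than on the term, so that the \textsc{Conversion} rule—which blocks a one-shot inversion—is handled by the inductive hypothesis, and then stitching the convertibilities together transitively. This same structure underlies every inversion lemma in this subsection, so the argument is uniform with the ones already established.
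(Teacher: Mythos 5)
Your proof is correct and follows exactly the route the paper intends: the paper gives no explicit derivation for this lemma, stating only that the inversion lemmas in this subsection are proved by "routine inductions on the derivations of the erasure judgments," which is precisely your case split on \textsc{Implicit-Recv-CH}, \textsc{Implicit-Recv-HC}, and \textsc{Conversion} with transitivity of $\simeq$ in the inductive case. No gaps.
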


\begin{lemma}\label[lemma]{lemma:erasure-inversion-fork}
  If 
  ${
    \Theta ; \Gamma ; \Delta \vdash 
    \fork{x: A_1}{m_1} \sim 
    \fork{x: A_2}{m_2} : B
  }$, then exists $A$ such that
  $A_1 = \CH{A}$,
  $A_2 = \square$,
  $B \simeq \CM{\HC{A}}$,
  $\Gamma \vdash A : \Proto$, and
  ${\Theta ; \Gamma, x : \CH{A} ; \Delta, x \tL \CH{A} \vdash m_1 \sim m_2 : \CM{\unit}}$.
\end{lemma}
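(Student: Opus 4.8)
The plan is to proceed by induction on the derivation of the erasure judgment $\Theta ; \Gamma ; \Delta \vdash \fork{x : A_1}{m_1} \sim \fork{x : A_2}{m_2} : B$, performing case analysis on the last rule applied. Since both the source term and the erased term are pinned to be \Fork{}-expressions, only two erasure rules can conclude such a judgment: the \textsc{Fork} rule and the \textsc{Conversion} rule. Every other structural erasure rule forces a different head constructor on the left-hand term and is therefore immediately ruled out, so the induction collapses to exactly these two cases.

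In the \textsc{Fork} case, the rule's conclusion has terms $\fork{x : \CH{A}}{m_1}$ and $\fork{x : \square}{m_2}$ with type $\CM{\HC{A}}$, for some protocol $A$. Matching against the given term shapes yields $A_1 = \CH{A}$ and $A_2 = \square$ directly, and $B = \CM{\HC{A}}$ gives $B \simeq \CM{\HC{A}}$ by reflexivity of $\simeq$. The rule's premise is precisely $\Theta ; \Gamma, x : \CH{A} ; \Delta, x \tL \CH{A} \vdash m_1 \sim m_2 : \CM{\unit}$, which is the required erasure judgment for the bodies. To recover $\Gamma \vdash A : \Proto$, I would appeal to the validity of the logical context $\Gamma, x : \CH{A}$ carried by that premise, which forces $\Gamma \vdash \CH{A} : s$ for some sort $s$; applying \Cref{lemma:logical-inversion-ch} then extracts $\Gamma \vdash A : \Proto$.

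In the \textsc{Conversion} case, the premise is $\Theta ; \Gamma ; \Delta \vdash \fork{x : A_1}{m_1} \sim \fork{x : A_2}{m_2} : B'$ for some $B'$ with $B' \simeq B$, since conversion preserves the term pair and alters only the ascribed type. I would apply the induction hypothesis to this premise, obtaining a protocol $A$ such that $A_1 = \CH{A}$, $A_2 = \square$, $B' \simeq \CM{\HC{A}}$, $\Gamma \vdash A : \Proto$, and the desired erasure judgment for $m_1$ and $m_2$. Transitivity of $\simeq$ then upgrades $B' \simeq \CM{\HC{A}}$ to $B \simeq \CM{\HC{A}}$, while all remaining conclusions carry over unchanged.

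There is no substantive obstacle here: the conversion rule, which is normally the complicating factor for inversion lemmas, is benign in this setting precisely because it never changes the syntactic shape of either term, leaving a single informative base case plus a trivial propagation of convertibility. The only point demanding a small side argument is the recovery of $\Gamma \vdash A : \Proto$, which is dispatched by the context-validity-plus-channel-inversion step described above rather than being read off the rule directly.
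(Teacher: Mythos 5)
Your proposal is correct and matches the paper's approach: the paper dispatches all of the erasure inversion lemmas with the one-line remark that they are ``routine inductions on the derivations of the erasure judgments,'' which is exactly the two-case (\textsc{Fork} plus \textsc{Conversion}) induction you describe. Your side argument for recovering $\Gamma \vdash A : \Proto$ via context validity and \Cref{lemma:logical-inversion-ch} is the right way to fill the one detail the rule does not hand you directly.
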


\begin{lemma}\label[lemma]{lemma:erasure-inversion-close}
  If ${\Theta ; \Gamma ; \Delta \vdash \close{m} \sim \close{m'} : A}$, then
  ${\Theta ; \Gamma ; \Delta \vdash m \sim m' : \CH{\End}}$ and $A \simeq \CM{\unit}$.
\end{lemma}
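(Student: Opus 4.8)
The plan is to prove this by induction on the derivation of the erasure judgment ${\Theta ; \Gamma ; \Delta \vdash \close{m} \sim \close{m'} : A}$, exactly mirroring the program-level inversion \Cref{lemma:program-inversion-close}. The first step is to observe that, by inspecting the erasure rules, only two of them can produce a conclusion whose source term is $\close{m}$ and whose erased term is $\close{m'}$: the \textsc{Close} erasure rule (from the Session Erasure group) and the \textsc{Conversion} erasure rule (from the Core Erasure group). Every other erasure rule fixes a syntactically distinct head constructor on the source side, so none of them can match $\close{m}$.

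In the \textsc{Close} base case, the premise of the rule is precisely ${\Theta ; \Gamma ; \Delta \vdash m \sim m' : \CH{\End}}$ and the conclusion type is $\CM{\unit}$; hence $A = \CM{\unit}$ and $A \simeq \CM{\unit}$ holds by reflexivity of $\simeq$. In the \textsc{Conversion} case, the rule does not alter the source or erased term, so its premise is ${\Theta ; \Gamma ; \Delta \vdash \close{m} \sim \close{m'} : A'}$ for some $A'$ with $A' \simeq A$. Applying the induction hypothesis to this premise yields ${\Theta ; \Gamma ; \Delta \vdash m \sim m' : \CH{\End}}$ together with $A' \simeq \CM{\unit}$. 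Combining $A' \simeq A$ with $A' \simeq \CM{\unit}$ via the symmetry and transitivity of the convertibility relation (\Cref{appendix:logical-semantics}) gives $A \simeq \CM{\unit}$, discharging this case and completing the induction.

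There is no substantive obstacle here: the proof is a routine inversion, and the only point requiring care is the justification that \textsc{Close} and \textsc{Conversion} are the sole applicable rules. This relies on the syntactic determinism of the erasure rules, namely that each rule pins down the head constructor of both the source term and the erased term. Since \textsc{Conversion} is the only type-changing rule and it leaves both terms untouched, the interleaving of arbitrarily many \textsc{Conversion} steps is precisely what forces the use of induction (rather than a single case split) and what necessitates the final transitivity argument on $\simeq$.
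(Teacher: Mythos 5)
Your proof is correct and matches the paper's approach: the appendix states that these erasure inversion lemmas are proved by routine induction on the derivation of the erasure judgment, which is exactly your case split between the \textsc{Close} base case and the type-preserving \textsc{Conversion} case closed off by symmetry and transitivity of $\simeq$.
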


\begin{lemma}\label[lemma]{lemma:erasure-inversion-wait}
  If ${\Theta ; \Gamma ; \Delta \vdash \wait{m} \sim \wait{m'} : A}$, then
  ${\Theta ; \Gamma ; \Delta \vdash m \sim m' : \HC{\End}}$ and $A \simeq \CM{\unit}$.
\end{lemma}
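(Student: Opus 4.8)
The plan is to proceed by induction on the derivation of the erasure judgment $\Theta ; \Gamma ; \Delta \vdash \wait{m} \sim \wait{m'} : A$, exactly mirroring the strategy used for the companion inversion lemmas such as \Cref{lemma:erasure-inversion-close}. Because the left-hand term is syntactically $\wait{m}$, only two erasure rules can conclude this judgment: the \textsc{Wait} rule and the \textsc{Conversion} rule. First I would dispatch these two cases; no other rule applies, so the induction is short.

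In the \textsc{Wait} case the rule is applied directly, with premise $\Theta ; \Gamma ; \Delta \vdash m \sim m' : \HC{\End}$ and conclusion type $A = \CM{\unit}$. This immediately delivers the desired erasure judgment on the subterms, and $A \simeq \CM{\unit}$ holds by reflexivity of the convertibility relation (\textsc{Conv-Refl}).

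In the \textsc{Conversion} case the judgment is obtained from $\Theta ; \Gamma ; \Delta \vdash \wait{m} \sim \wait{m'} : A'$ together with $A' \simeq A$ for some $A'$. Applying the induction hypothesis to this strictly smaller sub-derivation yields $\Theta ; \Gamma ; \Delta \vdash m \sim m' : \HC{\End}$ and $A' \simeq \CM{\unit}$. The first conclusion is already what is required; for the second I would combine $A \simeq A'$ (by symmetry of $\simeq$) with $A' \simeq \CM{\unit}$ and invoke transitivity of $\simeq$ to obtain $A \simeq \CM{\unit}$. This is the same two-rule inversion pattern that underlies the program-level lemma \Cref{lemma:program-inversion-wait}.

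The argument is entirely routine and I do not expect a substantive obstacle. The only point needing care is the implicit appeal to shape preservation: writing the conclusion with the erased subterm $\wait{m'}$ is licensed by \Cref{lemma:erasure-shape-wait}, which guarantees that any erasure of $\wait{m}$ must itself have the form $\wait{n'}$. With that fact in hand, neither applicable rule introduces any nontrivial difficulty, so the induction goes through directly.
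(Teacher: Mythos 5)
Your proposal is correct and matches the paper's approach: the paper itself only notes that these inversion lemmas follow by ``routine inductions on the derivations of the erasure judgments,'' and your two-case analysis (\textsc{Wait} directly, \textsc{Conversion} via the induction hypothesis plus symmetry and transitivity of $\simeq$) is exactly that routine induction. The only superfluous remark is the appeal to \Cref{lemma:erasure-shape-wait}: the hypothesis of the lemma already fixes the erased term to be $\wait{m'}$, so shape preservation is needed by callers of this lemma, not within its proof.
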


\paragraph{\textbf{Erasure Congruence}}
Similarly to the proof of session fidelity, we prove that the erasure relation 
respects structural congruence.

\begin{lemma}\label[lemma]{lemma:erasure-congruence}
  If $\Theta \Vdash P \sim P'$ and $P \equiv Q$, then there exists $Q'$ such that
  $P' \equiv Q'$ and $\Theta \Vdash Q \sim Q'$.
\end{lemma}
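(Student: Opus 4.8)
The plan is to prove \Cref{lemma:erasure-congruence} by induction on the derivation of the structural congruence $P \equiv Q$, mirroring the structure of the proof of the \textsc{Congruence} lemma (\Cref{lemma:congruence}) for the process typing judgment. Since $\equiv$ is the least congruence generated by the six axioms, I would first handle the six base axioms directly, and then handle the congruence closure (reflexivity, symmetry, transitivity, and compatibility with $\mid$ and $\nu$) by routine appeals to the induction hypothesis. The key observation driving the whole proof is that the erasure judgments $\Theta \Vdash P \sim P'$ are defined by exactly the same three structural rules (\textsc{Erase-Expr}, \textsc{Erase-Par}, \textsc{Erase-Scope}) as process typing, just carrying an erased process $P'$ alongside. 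So each manipulation I perform on $P$ to reach $Q$ can be mirrored on $P'$ to produce the witness $Q'$, and the same context-algebra facts (commutativity and associativity of $\dotcup$, freshness of scoped channels) that make \Cref{lemma:congruence} go through will supply the needed equalities here.

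First I would dispatch the commutativity axiom $P \mid Q \equiv Q \mid P$: inverting \textsc{Erase-Par} gives $\Theta_1 \Vdash P \sim P'$ and $\Theta_2 \Vdash Q \sim Q'$ with $\Theta = \Theta_1 \dotcup \Theta_2$, and I reassemble with \textsc{Erase-Par} in the opposite order, choosing $Q' = Q' \mid P'$ and using commutativity of $\dotcup$. The associativity axiom is analogous, inverting \textsc{Erase-Par} twice and reassociating via associativity of $\dotcup$. For the unit axiom $P \mid \proc{\return{\ii}} \equiv P$, I would invert \textsc{Erase-Par} to get $\Theta_2 \Vdash \proc{\return{\ii}} \sim R'$; by \textsc{Erase-Expr} and the erasure inversion/shape lemmas (\Cref{lemma:erasure-inversion-implicit-return}, \Cref{lemma:erasure-shape-unit}) the erased term is forced to be $\return{\ii}$ again, and by \Cref{lemma:program-inversion-return} and \Cref{lemma:program-inversion-unit} we get $\Theta_2 = \epsilon$, so $\Theta = \Theta_1$ and $Q' = P'$ works. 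The three scope axioms (scope extrusion $\scope{cd}{P} \mid Q \equiv \scope{cd}{(P \mid Q)}$, channel swap $\scope{cd}{P} \equiv \scope{dc}{P}$, and scope reordering) are handled by inverting \textsc{Erase-Scope} and \textsc{Erase-Par}, using freshness of $c,d$ with respect to $\Theta_2$ exactly as in sub-case (4) of the \textsc{Par} case and sub-cases (2),(3) of the \textsc{Scope} case of \Cref{lemma:congruence}.

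For the congruence-closure rules, reflexivity is immediate with $Q' = P'$; symmetry follows because the statement is symmetric once we observe $\equiv$ is symmetric (the witness $Q'$ from the forward direction serves); transitivity chains two applications of the induction hypothesis, threading the intermediate erased process; and the compatibility rules for $\mid$ and $\nu$ reduce to inverting the relevant erasure rule, applying the induction hypothesis to the reducing subcomponent, and reapplying \textsc{Erase-Par} or \textsc{Erase-Scope}. I expect the main obstacle to be the unit axiom, since it is the only axiom that changes the channel context ($\Theta_2$ must be shown empty): this requires carefully composing the erasure inversion lemmas with the program-level inversion lemmas (\Cref{lemma:program-inversion-return}, \Cref{lemma:program-inversion-unit}) to conclude that the erased counterpart of $\proc{\return{\ii}}$ is again exactly $\proc{\return{\ii}}$ and carries the empty context, so that it can be discarded on both the original and erased sides simultaneously. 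Everything else is bookkeeping over $\dotcup$ that parallels the already-established \Cref{lemma:congruence}.
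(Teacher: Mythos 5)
Your proof has the same mathematical content as the paper's, but the induction is set up differently: the paper inducts on the derivation of the erasure judgment $\Theta \Vdash P \sim P'$ (exactly as in \Cref{lemma:congruence}) and performs case analysis on which congruence axiom applies inside each of the \textsc{Expr}/\textsc{Par}/\textsc{Scope} cases, whereas you induct on the derivation of $P \equiv Q$ and invert the erasure judgment in each axiom case. The individual cases are identical either way --- commutativity/associativity of $\dotcup$ for the monoid axioms, the unit axiom forcing $\Theta_2 = \epsilon$ via the erasure shape/inversion lemmas composed with \Cref{lemma:program-inversion-return} and \Cref{lemma:program-inversion-unit}, and freshness of $c,d$ for scope extrusion and reordering --- so neither decomposition buys much over the other; yours makes the closure rules explicit where the paper leaves them implicit.

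The one place your plan would stumble as written is the symmetry rule of the congruence closure. If $P \equiv Q$ is derived by symmetry from $Q \equiv P$, your induction hypothesis says: \emph{given an erasure of $Q$}, produce a congruent erasure of $P$. But your hypothesis is an erasure of $P$, so the IH points the wrong way and ``the witness from the forward direction'' is not available --- there is no forward direction to appeal to yet. The standard repair is either to strengthen the statement and prove both directions simultaneously, or to observe that every generating axiom is handled symmetrically in your base cases (each construction on the erased side is reversible), so the symmetric closure adds nothing new. The paper avoids the issue entirely by inducting on the erasure derivation and treating each axiom in whichever orientation arises (e.g.\ the \textsc{Expr} case handles $\proc{m} \equiv \proc{m} \mid \proc{\return{\ii}}$, the unit axiom read right-to-left). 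Everything else in your proposal is sound.
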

\begin{proof}
  By induction on the derivation of $\Theta \Vdash P \sim P'$. We have three cases:

\noindent
\textbf{Case} \textsc{Expr}:
  \begin{mathpar}\small
  \inferrule[Expr]
  { \Theta ; \epsilon ; \epsilon \vdash m \sim m' : \CM{\unit} }
  { \Theta \Vdash \proc{m} \sim \proc{m'} }
  \end{mathpar}
  If suffices to consider the case where ${\proc{m} \equiv (\proc{m} \mid \proc{\return{\ii}})}$,
  i.e. $Q = \proc{m} \mid \proc{\return{\ii}}$.
  From erasure rules \textsc{Return} and \textsc{UnitVal}, we have
  \begin{align*}
    \epsilon ; \epsilon ; \epsilon \vdash \return{\ii} \sim \return{\ii} : \CM{\unit}
  \end{align*}
  which gives us $\epsilon \Vdash \proc{\return{\ii}} \sim \proc{\return{\ii}}$ by rule \textsc{Expr}.
  By rule \textsc{Par}, we have
  \begin{align*}
    \Theta \Vdash (\proc{m} \mid \proc{\return{\ii}}) \sim (\proc{m'} \mid \proc{\return{\ii}})
  \end{align*}
  which concludes this case by taking $Q' = \proc{m'} \mid \proc{\return{\ii}}$.

\noindent
\textbf{Case} \textsc{Par}:
  \begin{mathpar}\small
  \inferrule[Par]
  { \Theta_1 \vdash P \sim P' \\ \Theta_2 \vdash Q_0 \sim Q_0' }
  { \Theta_1 \dotcup \Theta_2 \Vdash (P \mid Q_0) \sim (P' \mid Q_0') }
  \end{mathpar}
  By case analysis on the congruence relation we have the following sub-cases:
  \begin{enumerate}
    \item $P \mid Q_1 \equiv Q_1 \mid P$
    \item $P \mid (Q_1 \mid Q_2) \equiv (P \mid Q_1) \mid Q_2$
    \item $P \mid \proc{\return{\ii}} \equiv P$
    \item $\scope{cd}{P} \mid Q_1 \equiv \scope{cd}{(P \mid Q_1)}$
  \end{enumerate}

  In sub-case (1), by \textsc{Par} we have 
  $\Theta_2 \dotcup \Theta_1 \Vdash (Q_1 \mid P) \sim (Q_1' \mid P')$.
  By the commutativity of $\dotcup$, we conclude this case by taking $Q' = Q_1' \mid P'$.

  In sub-case (2), by \textsc{Par} we have
  ${(\Theta_1 \dotcup \Theta_2) \dotcup \Theta_3 \Vdash 
    ((P \mid Q_1) \mid Q_2) \sim ((P' \mid Q_1') \mid Q_2')}$.
  By the associativity of $\dotcup$, we conclude this case by taking 
  $Q' = (P' \mid Q_1') \mid Q_2'$.

  In sub-case (3), we have $Q_0 = \proc{\return{\ii}}$.
  By assumption we have $\Theta_2 \Vdash \proc{\return{\ii}} \sim Q_1'$ for some $Q_1'$.
  By inversion on the erasure judgment, we have $\Theta_2 = \epsilon$.
  Thus we have $\Theta \vdash P \sim P'$ and we can conclude this case by taking $Q' = P'$.

  In sub-case (4), we have $\Theta_1 \Vdash \scope{cd}{P} \sim P'$ and 
  $\Theta_2 \Vdash Q_1 \sim Q_1'$.
  By inversion on $\Theta_1 \Vdash \scope{cd}{P} \sim P'$ we know there exists $P''$ and $A$ 
  such that $P' = \scope{cd}{P''}$ and $\Theta_1, c \tL \CH{A}, d \tL \HC{A} \Vdash P \sim P''$.
  By \textsc{Par} we have
  $\Theta_1, c \tL \CH{A}, d \tL \HC{A} \dotcup \Theta_2 \Vdash P \mid Q_1 \sim P'' \mid Q_1'$.
  Since $c, d \notin \Theta_2$, we have
  $\Theta_1 \dotcup \Theta_2, c \tL \CH{A}, d \tL \HC{A} \Vdash P \mid Q_1 \sim P'' \mid Q_1'$.
  By \textsc{Scope} we have
  $\Theta_1 \dotcup \Theta_2 \Vdash \scope{cd}{(P \mid Q_1)} \sim \scope{cd}{(P'' \mid Q_1')}$.
  We conclude this case by taking $Q' = \scope{cd}{(P'' \mid Q_1')}$.

\noindent
\textbf{Case} \textsc{Scope}:
  \begin{mathpar}\small
  \inferrule[Scope]
  { \Theta, c \tL \CH{A}, d \tL \HC{A} \vdash P \sim P' }
  { \Theta \Vdash \scope{cd}{P} \sim \scope{cd}{P'} }
  \end{mathpar}
  By case analysis on the congruence relation we have the following sub-cases:
  \begin{enumerate}
    \item $\scope{cd}{(P \mid Q_1)} \equiv \scope{cd}{P} \mid Q_1$
    \item $\scope{cd}{P} \equiv \scope{dc}{P}$
    \item $\scope{cd}{\scope{c'd'}{P}} \equiv \scope{c'd'}{\scope{cd}{P}}$
  \end{enumerate}

  In sub-case (1), we have $\Theta, c \tL \CH{A}, d \tL \HC{A} \vdash P \mid Q_1 \sim P' \mid Q_1'$
  by assumption. By inversion on the erasure judgment, we know there exists $\Theta_1$ and $\Theta_2$
  such that $\Theta_1 \dotcup \Theta_2 = \Theta$ and
  $\Theta_1, c \tL \CH{A}, d \tL \HC{A} \vdash P \sim P'$ and
  $\Theta_2 \vdash Q_1 \sim Q_1'$.
  Channels $c$ and $d$ must be distributed to the erasure relation of $P$
  as they are linear and do not appear in $Q_1$.
  Applying \textsc{Scope} we have
  $\Theta_1 \Vdash \scope{cd}{P} \sim \scope{cd}{P'}$.
  By \textsc{Par} we have
  $\Theta_1 \dotcup \Theta_2 \Vdash \scope{cd}{P} \mid Q_1 \sim \scope{cd}{P'} \mid Q_1'$.
  We conclude this case by taking $Q' = \scope{cd}{P'} \mid Q_1'$.

  In sub-case (2), we have
  $\Theta, d \tL \CH{A}, c \tL \HC{A} \vdash P \sim P'$ by assumption.
  By exchange of the context, we have
  $\Theta, c \tL \CH{A}, d \tL \HC{A} \vdash P \sim P'$.
  By \textsc{Scope} we have
  $\Theta \Vdash \scope{cd}{P} \sim \scope{dc}{P'}$.
  We conclude this case by taking $Q' = \scope{dc}{P'}$.

  In sub-case (3), we have
  $\Theta, c \tL \CH{A}, d \tL \HC{A}, c' \tL \CH{A'}, d' \tL \HC{A'} \vdash P \sim P'$
  by assumption.
  By exchange, we have 
  $\Theta, c' \tL \CH{A'}, d' \tL \HC{A'}, c \tL \CH{A}, d \tL \HC{A} \Vdash P \sim P'$.
  Applying \textsc{Scope} twice we have
  $\Theta \Vdash \scope{c'd'}{\scope{cd}{P}} \sim \scope{c'd'}{\scope{cd}{P'}}$.
  We conclude this case by taking $Q' = \scope{c'd'}{\scope{cd}{P'}}$.
\end{proof}

\paragraph{\textbf{Erasure Simulation}}
We can now prove the simulation lemma for programs.
\begin{theorem}[Term Simulation]\label[theorem]{lemma:erasure-program-simulation}
  Given $\Theta ; \epsilon ; \epsilon \vdash m \sim m' : A$ and $m \Leadsto n$, 
  there exists $n_0'$ such that 
  $m' \Leadsto n_0'$ and $\Theta ; \epsilon ; \epsilon \vdash n \sim n_0' : A$.
\end{theorem}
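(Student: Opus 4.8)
The plan is to proceed by induction on the derivation of the reduction $m \Leadsto n$, using at each step the erasure shape and inversion lemmas to take $m'$ apart, performing the matching reduction on $m'$, and then reassembling the erasure judgment for the result. First I would split the reduction rules into the congruence (structural) rules and the redex-contraction rules. For a congruence rule such as \textsc{Step-Explicit-App$_1$}, where $\appR{m_1}{m_2} \Leadsto \appR{n_1}{m_2}$ because $m_1 \Leadsto n_1$, I would use \Cref{lemma:erasure-shape-explicit-app} to learn that $m' = \appR{m_1'}{m_2'}$ and then \Cref{lemma:erasure-inversion-explicit-app} to obtain $\Theta_1 ; \epsilon ; \epsilon \vdash m_1 \sim m_1' : \PiR{s}{x : A_0}{B}$; the induction hypothesis then yields $n_1'$ with $m_1' \Leadsto n_1'$ and $n_1 \sim n_1'$, and reapplying the \textsc{Explicit-App} erasure rule rebuilds the result. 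Note that inversion keeps the logical and program contexts empty (only the channel context splits), so the induction hypothesis always applies at $\epsilon ; \epsilon$ as the statement requires. All the remaining congruence rules (for pairs, sums, boolean and session eliminators, \textsc{Bind} and \textsc{Return}) follow this same pattern mechanically.

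The interesting cases are the $\beta$-like contraction rules, where I must reduce $m'$ and then relate the two substituted terms. For \textsc{Step-Explicit-$\beta$}, where $\appR{(\lamR{s}{x : A_0}{m_0})}{v} \Leadsto m_0[v/x]$, the shape lemmas force $m' = \appR{(\lamR{s}{x : \square}{m_0'})}{v'}$ with $m_0 \sim m_0'$ and $v \sim v'$; by \Cref{lemma:erasure-shape-value} the erased argument $v'$ is again a value, so $m'$ fires \textsc{Step-Explicit-$\beta$} to $m_0'[v'/x]$, and the explicit substitution lemma for erasure (\Cref{lemma:erasure-explicit-substitution}) gives $m_0[v/x] \sim m_0'[v'/x]$. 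To discharge the context-restriction side-conditions $\Theta_2 \triangleright s$ and $\Delta_2 \triangleright s$ demanded by that lemma, I would recover a genuine typing judgment for $v$ via \Cref{lemma:erasure-typing} and then apply the program context bound lemma (\Cref{lemma:program-context-bound}); since the program context is $\epsilon$, the $\Delta_2$ condition is automatic. The \textsc{Step-Explicit-PairElim}, \textsc{Step-TrueElim}/\textsc{Step-FalseElim}, and \textsc{Step-ReturnBind} cases are handled identically, each routed through the corresponding shape, inversion, and substitution lemmas.

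The main obstacle is the implicit contraction rules \textsc{Step-Implicit-$\beta$} and \textsc{Step-Implicit-PairElim}, where the original term substitutes a real argument $n$ but the erased term substitutes $\square$. Concretely, for \textsc{Step-Implicit-$\beta$} we have $\appI{(\lamI{s}{x : A_0}{m_0})}{n} \Leadsto m_0[n/x]$ while the erased side, by the shape lemmas and the replacement of the implicit argument by $\square$, is $\appI{(\lamI{s}{x : \square}{m_0'})}{\square} \Leadsto m_0'[\square/x]$. Establishing $m_0[n/x] \sim m_0'[\square/x] : B[n/x]$ is exactly what the implicit substitution lemma for erasure (\Cref{lemma:erasure-implicit-substitution}) delivers: it substitutes the real $n$ into both the left program and the type while simultaneously replacing it by $\square$ in the erased program, keeping the two related at $B[n/x]$. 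The point that makes this sound is that the program semantics has no rule reducing an implicit argument (there is no \textsc{Step-Implicit-App$_2$}), so the ghost position $\square$ is never itself forced and the erasure judgment's type annotation continues to track the unerased $n$. Once this lemma is in hand, the implicit cases close just like their explicit counterparts, completing the induction.
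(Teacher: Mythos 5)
Your proposal is correct and follows essentially the same route as the paper: the paper inducts on the erasure derivation and case-analyses the reduction rather than the other way around, but the lemma toolkit (shape preservation, erasure inversion, \Cref{lemma:erasure-explicit-substitution} and \Cref{lemma:erasure-implicit-substitution}, \Cref{lemma:erasure-shape-value}, and \Cref{lemma:erasure-typing} plus \Cref{lemma:program-context-bound} to discharge the $\triangleright$ side conditions) is identical, including your key observation about the implicit-$\beta$ case. The one detail you gloss over is that in argument-position congruence steps such as \textsc{Step-Explicit-App$_2$} the result type changes from $B[n/x]$ to $B[n_1/x]$, so you additionally need \Cref{lemma:program-step-convertible} together with type validity and \textsc{Conversion} to restore the original type, exactly as in ordinary subject reduction.
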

\begin{proof}
  The proof is similar to the proof of program subject reduction 
  (\Cref{theorem:program-subject-reduction}).
  We process by induction on the derivation of 
  $\Theta ; \epsilon ; \epsilon \vdash m \sim m' : A$.
  We show the following representative cases:

\noindent
\textbf{Case} (\textsc{Explicit-App}):
  \begin{mathpar}\small
  \inferrule[Explicit-App]
  { \Theta_1 ; \epsilon ; \epsilon \vdash m \sim m' : \PiR{t}{x : A}{B} \\
    \Theta_2 ; \epsilon ; \epsilon \vdash n \sim n' : A }
  { \Theta_1 \dotcup \Theta_2 ; \epsilon ; \epsilon \vdash 
    \appR{m}{n} \sim \appR{m'}{n'} : B[n/x] }
  \end{mathpar}
  From case analysis on the reduction we have three sub-cases: 
  (1) \textsc{Step-Explicit-App$_1$}, 
  (2) \textsc{Step-Explicit-App$_2$}, and 
  (3) \textsc{Step-Explicit-$\beta$}.

  In sub-case (1), we have $m \Leadsto m_1$.
  By the induction hypothesis, there exists $m_1'$ such that
  $\Theta_1 ; \epsilon ; \epsilon \vdash m_1 \sim m_1' : \PiR{t}{x : A}{B}$ and
  $m' \Leadsto m_1'$.
  By \textsc{Explicit-App} we have
  $\Theta_1 \dotcup \Theta_2 ; \epsilon ; \epsilon \vdash 
    \appR{m_1}{n} \sim \appR{m_1'}{n'} : B[n/x]$.
  We conclude this case by taking $n_0' = \appR{m_1'}{n'}$
  since by \textsc{Step-Explicit-App$_1$} we have
  $\appR{m'}{n'} \Leadsto \appR{m_1'}{n'}$.

  In sub-case (2), we have $n \Leadsto n_1$.
  By the induction hypothesis, there exists $n_1'$ such that
  $\Theta_2 ; \epsilon ; \epsilon \vdash n_1 \sim n_1' : A$ and
  $n' \Leadsto n_1'$.
  By \textsc{Explicit-App} we have
  $\Theta_1 \dotcup \Theta_2 ; \epsilon ; \epsilon \vdash 
    \appR{m}{n_1} \sim \appR{m'}{n_1'} : B[n_1/x]$.
  By \Cref{lemma:program-step-convertible}, we have $B[n/x] \simeq B[n_1/x]$
  which gives us
  $\Theta_1 \dotcup \Theta_2 ; \epsilon ; \epsilon \vdash 
    \appR{m}{n_1} \sim \appR{m'}{n_1'} : B[n/x]$ by \textsc{Conversion}.
  We conclude this case by taking $n_0' = \appR{m'}{n_1'}$
  since by \textsc{Step-Explicit-App$_2$} we have
  $\appR{m'}{n'} \Leadsto \appR{m'}{n_1'}$.

  In sub-case (3), we have $m = \lamR{t}{x : A'}{m_0}$ and $n = v$ for some value $v$.

  \noindent
  Applying \Cref{lemma:erasure-shape-value} on the typing judgment
  \begin{align*}
    \Theta_2 ; \epsilon ; \epsilon \vdash v \sim n' : A
  \end{align*}
  we know that $n' = v'$ for some value $v'$.
  By \Cref{lemma:erasure-typing} we have $\Theta_2 ; \epsilon ; \epsilon \vdash v : A$.
  By \Cref{lemma:erasure-type-validity} we have $\epsilon \vdash A : s$ for some $s$.
  By \Cref{lemma:program-context-bound} we have $\Theta_2 \triangleright s$.

  \noindent
  Applying \Cref{lemma:erasure-shape-explicit-lam} on the typing judgment
  \begin{align*}
    \Theta_1 ; \epsilon ; \epsilon \vdash 
    \lamR{t}{x : A'}{m_0} \sim m' : \PiR{t}{x : A}{B}
  \end{align*}
  we know that $m' = \lamR{t}{x : \square}{m_0'}$ for some $m_0'$.
  By \Cref{lemma:erasure-inversion-explicit-lam} there exists $r$ such that 
  \begin{align*}
    \Theta_1 ; x : A ; x :_r A \vdash m_0 \sim m_0' : B 
  \end{align*}

  \noindent
  By validity of context $x : A ; x :_r A \vdash$ we have $\epsilon \vdash A : r$.
  By \Cref{theorem:sort-uniqueness} we have $s = r$.

  \noindent
  By \Cref{lemma:erasure-explicit-substitution} we have
  \begin{align*}
    \Theta_1 \dotcup \Theta_2 ; \epsilon ; \epsilon \vdash m_0[v/x] \sim m_0'[v'/x] : B[v/x]
  \end{align*}

  \noindent
  We conclude this case by taking $n_0' = m_0'[v'/x]$ since we have
  $\appR{(\lamR{t}{x : \square}{m_0'})}{v'} \Leadsto m_0'[v'/x]$
  by \textsc{Step-Explicit-$\beta$}.

\noindent
\textbf{Case} (\textsc{Implicit-App}):
  \begin{mathpar}\small
  \inferrule[Implicit-App]
  { \Theta ; \epsilon ; \epsilon \vdash m \sim m' : \PiI{t}{x : A}{B} \\
    \epsilon \vdash n : A }
  { \Theta ; \epsilon ; \epsilon \vdash \appI{m}{n} \sim \appI{m'}{\square} : B[n/x] }
  \end{mathpar}
  From case analysis on the reduction we have two sub-cases: 
  (1) \textsc{Step-Implicit-App} and 
  (2) \textsc{Step-Implicit-$\beta$}.

  In sub-case (1), we have $m \Leadsto m_1$.
  By the induction hypothesis, there exists $m_1'$ such that
  $\Theta ; \epsilon ; \epsilon \vdash m_1 \sim m_1' : \PiI{t}{x : A}{B}$ and
  $m' \Leadsto m_1'$.
  By \textsc{Implicit-App} we have
  $\Theta ; \epsilon ; \epsilon \vdash 
    \appI{m_1}{n} \sim \appI{m_1'}{\square} : B[n/x]$.
  We conclude this case by taking $n_0' = \appI{m_1'}{\square}$
  since by \textsc{Step-Implicit-App} we have
  $\appI{m'}{\square} \Leadsto \appI{m_1'}{\square}$.

  In sub-case (2), we have $m = \lamI{t}{x : A'}{m_0}$ for some $m_0$.

  \noindent 
  Applying \Cref{lemma:erasure-shape-implicit-lam} on the typing judgment
  \begin{align*}
    \Theta ; \epsilon ; \epsilon \vdash 
    \lamI{t}{x : A'}{m_0} \sim m' : \PiI{t}{x : A}{B}
  \end{align*}
  we know that $m' = \lamI{t}{x : \square}{m_0'}$ for some $m_0'$.
  By \Cref{lemma:erasure-inversion-implicit-lam} there is
  \begin{align*}
    \Theta ; x : A ; \epsilon \vdash m_0 \sim m_0' : B 
  \end{align*}

  \noindent
  By \Cref{lemma:erasure-implicit-substitution} we have
  \begin{align*}
    \Theta ; \epsilon ; \epsilon \vdash m_0[n/x] \sim m_0'[\square/x] : B[n/x]
  \end{align*}

  \noindent
  We conclude this case by taking $n_0' = m_0'[\square/x]$ since we have
  $\appI{(\lamI{t}{x : \square}{m_0'})}{\square} \Leadsto m_0'[\square/x]$
  by \textsc{Step-Implicit-$\beta$}.
\end{proof}
\clearpage

\Cref{lemma:erasure-eval-context} allows us to reason about evaluation contexts
in the erasure relation.

\begin{lemma}\label[lemma]{lemma:erasure-eval-context}
  If $\Theta ; \epsilon ; \epsilon \vdash \mcM[m] \sim m_0' : \CM{B}$ then 
  there exists $\Theta_1, \Theta_2, \mcM', m', A$ such that
  $\Theta = \Theta_1 \dotcup \Theta_2$ and
  $\Theta_1 ; \epsilon ; \epsilon \vdash m \sim m' : \CM{A}$ and
  $m_0' = \mcM'[m']$.
  For any $\Theta_3, n, n'$ such that 
  $\Theta_3 ; \epsilon ; \epsilon \vdash n \sim n' : \CM{A}$ and 
  $\Theta_2 \dotcup \Theta_3$ is well-defined, we have
  $\Theta_2 \dotcup \Theta_3 ; \epsilon ; \epsilon \vdash \mcM[n] \sim \mcM'[n'] : \CM{B}$.
\end{lemma}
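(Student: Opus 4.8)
The plan is to prove this by structural induction on the evaluation context $\mcM$, mirroring the proof of \Cref{lemma:eval-context} but carrying the erased image $m_0'$ along at every step so that we simultaneously produce the erased context $\mcM'$ and the erased subterm $m'$. Since the grammar of evaluation contexts has only the two productions $\mcM = [\cdot]$ and $\mcM = \letin{x}{\mcN}{n_0}$, there are exactly two cases. In the base case $\mcM = [\cdot]$ we have $\mcM[m] = m$, so the hypothesis is just $\Theta ; \epsilon ; \epsilon \vdash m \sim m_0' : \CM{B}$; I would take $\Theta_1 = \Theta$, $\Theta_2 = \epsilon$, $A = B$, $\mcM' = [\cdot]$, and $m' = m_0'$. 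The decomposition, the equation $m_0' = \mcM'[m']$, and the plugging clause are all immediate, the last because $\mcM[n] = n$ and $\mcM'[n'] = n'$, so the supplied erasure judgment on $n \sim n'$ is already the goal.

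The inductive case $\mcM = \letin{x}{\mcN}{n_0}$ is where the real work lies, and the first move is to pin down the shape of the erased term. The hypothesis reads $\Theta ; \epsilon ; \epsilon \vdash \letin{x}{\mcN[m]}{n_0} \sim m_0' : \CM{B}$, and by \Cref{lemma:erasure-shape-bind} the erased term $m_0'$ must itself be a bind, say $m_0' = \letin{x}{n_1'}{n_2'}$. With this in hand, \Cref{lemma:erasure-inversion-bind} decomposes the judgment, yielding $\Theta_1', \Theta_2', A'$ and a sort annotation $r$ such that $\Theta = \Theta_1' \dotcup \Theta_2'$, together with $\Theta_1' ; \epsilon ; \epsilon \vdash \mcN[m] \sim n_1' : \CM{A'}$, the continuation judgment $\Theta_2' ; x : A' ; x \ty{r} A' \vdash n_0 \sim n_2' : \CM{B}$, and the independence condition $x \notin \FV{B}$.

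I then apply the induction hypothesis to the sub-judgment on $\mcN[m]$, obtaining $\Theta_{11}', \Theta_{12}', \mcN', m', A''$ with $\Theta_1' = \Theta_{11}' \dotcup \Theta_{12}'$, the erasure $\Theta_{11}' ; \epsilon ; \epsilon \vdash m \sim m' : \CM{A''}$, the equation $n_1' = \mcN'[m']$, and the plugging clause for $\mcN, \mcN'$. Choosing $\Theta_1 = \Theta_{11}'$, $\Theta_2 = \Theta_{12}' \dotcup \Theta_2'$, $A = A''$, and $\mcM' = \letin{x}{\mcN'}{n_2'}$ discharges the existential: the split reassociates to $(\Theta_{11}' \dotcup \Theta_{12}') \dotcup \Theta_2' = \Theta$, and $\mcM'[m'] = \letin{x}{\mcN'[m']}{n_2'} = \letin{x}{n_1'}{n_2'} = m_0'$. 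For the plugging clause, given $\Theta_3 ; \epsilon ; \epsilon \vdash n \sim n' : \CM{A}$ with $\Theta_2 \dotcup \Theta_3$ well-defined, I observe that $\Theta_{12}' \dotcup \Theta_3$ is then also well-defined, feed it into the $\mcN$ plugging clause to obtain $\Theta_{12}' \dotcup \Theta_3 ; \epsilon ; \epsilon \vdash \mcN[n] \sim \mcN'[n'] : \CM{A'}$, and finally re-apply the erasure \textsc{Bind} rule together with the retained continuation judgment on $n_0 \sim n_2'$ to conclude $\Theta_2 \dotcup \Theta_3 ; \epsilon ; \epsilon \vdash \mcM[n] \sim \mcM'[n'] : \CM{B}$.

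The argument is essentially routine once the shape and inversion lemmas for \textsc{Bind} are available, so I do not expect a deep obstacle; the content is entirely in the bookkeeping of the channel-context merges under $\dotcup$. The one point genuinely requiring care is verifying that the nested merges are well-defined and reassociate as used — in particular that well-definedness of $\Theta_2 \dotcup \Theta_3 = (\Theta_{12}' \dotcup \Theta_2') \dotcup \Theta_3$ entails well-definedness of $\Theta_{12}' \dotcup \Theta_3$, which rests on the disjointness of linear channels enforced by the merge operator. Re-applying \textsc{Bind} also needs its side premise $\epsilon \vdash B : s$ and the condition $x \notin \FV{B}$; both are carried over verbatim from the original derivation (the former recoverable via \Cref{lemma:erasure-type-validity} if needed), so no real difficulty arises there.
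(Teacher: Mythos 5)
Your proof is correct and follows essentially the same route as the paper's: structural induction on $\mcM$, with the base case discharged trivially and the bind case handled via \Cref{lemma:erasure-shape-bind}, \Cref{lemma:erasure-inversion-bind}, the induction hypothesis, and a final re-application of \textsc{Bind}, with the same choices of $\Theta_1$, $\Theta_2$, and $\mcM'$. Your added remarks on the well-definedness of the nested merges and the side conditions of \textsc{Bind} are correct and only make explicit what the paper leaves implicit.
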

\begin{proof}
  By induction on the structure of the evaluation context $\mcM$.

\noindent
\textbf{Case} ($\mcM = [\cdot]$):
  Trivial by taking 
  $\mcM' = [\cdot]$, $m' = m_0'$, $\Theta_1 = \Theta$, $\Theta_2 = \epsilon$, and $A = B$.

\noindent
\textbf{Case} ($\mcM = \letin{x}{\mcN}{n}$):
  We have the typing judgment
  \begin{align*}
    \Theta ; \epsilon ; \epsilon \vdash \letin{x}{\mcN[m]}{n_1} \sim m_0' : \CM{B}
  \end{align*}

  \noindent
  By \Cref{lemma:erasure-shape-bind} we have
  $m_0' = \letin{x}{m_1'}{n_1'}$ and
  \begin{align*}
    \Theta ; \epsilon ; \epsilon \vdash 
    \letin{x}{\mcN[m]}{n_1} \sim \letin{x}{m_1'}{n_1'} : \CM{B}
  \end{align*}

  \noindent
  By \Cref{lemma:erasure-inversion-bind} there exists 
  $\Theta_1', \Theta_2', A'$ such that
  \begin{align*}
    &\Theta_1' ; \epsilon ; \epsilon \vdash \mcN[m] \sim m_1' : \CM{A'} \\
    &\Theta_2' ; x : A' ; x \tL A' \vdash n_1 \sim n_1' : \CM{B} \\
    &\Theta = \Theta_1' \dotcup \Theta_2' \quad x \notin \FV{B}
  \end{align*}

  \noindent
  By the induction hypothesis, there exists $\Theta_{11}', \Theta_{12}', \mcN', m', A''$ such that
  \begin{align*}
    &\Theta_{11}' ; \epsilon ; \epsilon \vdash m \sim m' : \CM{A''} \\
    &\Theta_1' = \Theta_{11}' \dotcup \Theta_{12}' \\
    &m_1' = \mcN'[m']
  \end{align*}
  and for any $\Theta_3, n, n'$ such that 
  $\Theta_3 ; \epsilon ; \epsilon \vdash n \sim n' : \CM{A''}$ and 
  $\Theta_{12}' \dotcup \Theta_3$ is well-defined, we have
  \begin{align*}
    \Theta_{12}' \dotcup \Theta_3 ; \epsilon ; \epsilon \vdash \mcN[n] \sim \mcN'[n'] : \CM{A'}
  \end{align*}

  \noindent
  By choosing $\Theta_1 = \Theta_{11}'$, $\Theta_2 = \Theta_{12}' \dotcup \Theta_2'$,
  $\mcM' = \letin{x}{\mcN'}{n_1'}$, and $A = A''$, we have
  \begin{align*}
    &\Theta_1 ; \epsilon ; \epsilon \vdash m \sim m' : \CM{A''} \\
    &\Theta = \Theta_1 \dotcup \Theta_2 \\
    &m_0' = \letin{x}{\mcN'[m']}{n_1'} = \mcM'[m']
  \end{align*}

  \noindent
  For any $\Theta_3, n, n'$ such that 
  $\Theta_3 ; \epsilon ; \epsilon \vdash n \sim n' : \CM{A''}$ and 
  $\Theta_2 \dotcup \Theta_3$ is well-defined, we know that
  $\Theta_{12}' \dotcup \Theta_3$ is well-defined, which means that
  \begin{align*}
    \Theta_{12}' \dotcup \Theta_3 ; \epsilon ; \epsilon \vdash \mcN[n] \sim \mcN'[n'] : \CM{A'}
  \end{align*}
  Now applying \textsc{Bind} we have
  \begin{align*}
    \Theta_2 \dotcup \Theta_3 ; \epsilon ; \epsilon \vdash 
    \letin{x}{\mcN[n]}{n_1} \sim 
    \letin{x}{\mcN'[n']}{n_1'} : \CM{B}
  \end{align*}
  which concludes this case.
\end{proof}
\clearpage

We can now prove the main simulation theorem for processes.
\begin{theorem}[Process Simulation]\label[theorem]{theorem:erasure-simulation}
  Given $\Theta \Vdash P \sim P'$ and reduction $P \Rrightarrow Q_0$, there exists $Q_0'$ such that
  $P' \Rrightarrow Q_0'$ and $\Theta \Vdash Q_0 \sim Q_0'$.
\end{theorem}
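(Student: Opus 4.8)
The plan is to proceed by induction on the derivation of $P \Rrightarrow Q_0$, performing case analysis on the process reduction rule applied at the root. The overall structure will mirror the proof of session fidelity (\Cref{theorem:session-fidelity}), since the erasure relation is defined to be shape-for-shape compatible with the typing rules. The essential difference is that each case must now exhibit a matching reduction $P' \Rrightarrow Q_0'$ on the erased side and re-establish the erasure relation $\Theta \Vdash Q_0 \sim Q_0'$, rather than merely preserving a typing judgment. In every communication case the guiding principle is that the \emph{same} process rule will fire on both sides, so the simulation is lock-step.

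First I would dispatch the purely structural cases. For \textsc{Proc-Congr} I would invoke \Cref{lemma:erasure-congruence} to transport the erasure relation across the two structural congruences and then apply the induction hypothesis to the middle reduction. For \textsc{Proc-Par} and \textsc{Proc-Scope}, inversion on the corresponding erasure rules splits $P'$ into erased subcomponents, the induction hypothesis supplies the simulating reduction on the active subcomponent, and the \textsc{Par}/\textsc{Scope} erasure rules reassemble $Q_0'$. The \textsc{Proc-Expr} case is handled directly by the Term Simulation theorem (\Cref{lemma:erasure-program-simulation}): inversion on \textsc{Expr} yields $\Theta ; \epsilon ; \epsilon \vdash m \sim m_0' : \CM{\unit}$, Term Simulation produces $m_0' \Leadsto n_0'$ with the erasure relation preserved, and \textsc{Proc-Expr} lifts this back to processes.

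The communication cases (\textsc{Proc-Fork}, \textsc{Proc-End}, \textsc{Proc-Com}, \textsc{Proc-\underline{Com}}) are the substantive part, and I would follow the corresponding cases of session fidelity closely, replacing every typing-level tool with its erasure counterpart. For each, I would first use the erasure evaluation-context lemma (\Cref{lemma:erasure-eval-context}) to pull the active redex out of $\mcM[m]$, learning both that $m_0' = \mcM'[m']$ with $m \sim m'$ and that a same-typed erased substitute may be replaced back into $\mcM'$. The shape-preservation lemmas (e.g.\ \Cref{lemma:erasure-shape-fork}, \Cref{lemma:erasure-shape-explicit-send}, \Cref{lemma:erasure-shape-explicit-recv}) then fix the syntactic form of $m'$ so the same reduction rule applies on the erased side, while the erasure inversion lemmas (e.g.\ \Cref{lemma:erasure-inversion-fork}, \Cref{lemma:erasure-inversion-explicit-send}) expose the erasure judgments for the operands. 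In \textsc{Proc-Fork} I would additionally apply \Cref{lemma:erasure-explicit-substitution} to obtain $m[d/x] \sim m'[d/x]$, noting that channels erase to themselves and that the freshly created $c, d$ are identical on both sides, so \textsc{Channel-CH}/\textsc{Channel-HC} and \textsc{Return} rebuild the erased configuration.

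The hard part will be \textsc{Proc-\underline{Com}}, the exchange of a ghost message $o$. Here the erased sender is $\appI{\sendI{c}}{\square}$ rather than $\appI{\sendI{c}}{o}$, so the erased step transmits $\square$ and produces $\return{\pairI{\square}{d}{\Ln}}$, whereas the original produces $\return{\pairI{o}{d}{\Ln}}$. I must verify these two results lie in the erasure relation, which is precisely the content of the \textsc{Implicit-Pair} erasure rule (erasing the first component of an implicit pair to $\square$); the inversion lemmas for implicit applications and pairs (\Cref{lemma:erasure-inversion-implicit-app}, \Cref{lemma:erasure-inversion-implicit-pair}) supply the residual side conditions such as $\epsilon \vdash o : A$. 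This is the step that formally justifies erasing ghost messages: the reference semantics of the unerased configuration is faithfully tracked by transmitting the opaque $\square$, so no observable behavior distinguishes the two. The companion \textsc{Proc-Com} case is more routine, since the real message $v$ erases to a genuine value $v'$ (via \Cref{lemma:erasure-shape-value}) and the explicit pair erases componentwise, so no opacity argument is needed.
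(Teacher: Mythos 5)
Your proposal follows essentially the same route as the paper's proof: induction on $P \Rrightarrow Q_0$, lock-step simulation in each case, with \Cref{lemma:erasure-eval-context}, the shape-preservation and inversion lemmas, \Cref{lemma:erasure-explicit-substitution}, Term Simulation for \textsc{Proc-Expr}, and \Cref{lemma:erasure-congruence} for \textsc{Proc-Congr}, exactly as the paper does. Your elaboration of \textsc{Proc-\underline{Com}} (the erased side transmits $\square$ and the results are related by the \textsc{Implicit-Pair} erasure rule) is correct and in fact supplies detail the paper elides by saying only that the case is similar to \textsc{Proc-Com}.
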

\begin{proof}
  Similar to the proof of session fidelity (\Cref{theorem:session-fidelity}).
  We proceed by induction on the derivation of $P \Rrightarrow Q$.

\noindent
\textbf{Case} (\textsc{Proc-Fork}):
\begin{mathpar}\small
  \proc{
    \mcN[\fork{x : A}{m}]
  }
  \Rrightarrow
  \scope{cd}{(\proc{\mcN[\return{c}]} \mid \proc{m[d/x]})} 
\end{mathpar}

\noindent
By assumption, we have $\Theta \Vdash \proc{\mcN[\fork{x : A}{m}]} \sim P'$.
By inversion on the erasure judgment we know there exists $n'$ such that
$P' = \proc{n'}$ and
\begin{align*}
  &\Theta ; \epsilon ; \epsilon \vdash \mcN[\fork{x : A}{m}] \sim n' : \CM{\unit}
\end{align*}

\noindent
By \Cref{lemma:erasure-eval-context} there exists $\Theta_1, \Theta_2, \mcN', n'', B$ such that
\begin{align*}
  &\Theta_1 ; \epsilon ; \epsilon \vdash \fork{x : A}{m} \sim n'' : \CM{B} \\
  &\Theta = \Theta_1 \dotcup \Theta_2 \\
  &n' = \mcN'[n'']
\end{align*}

\noindent
By \Cref{lemma:erasure-shape-fork} we know that $n'' = \fork{x : A'}{m'}$ for some $A'$ and $m'$.

\noindent
By \Cref{lemma:erasure-inversion-fork} there exists $A''$ such that
\begin{align*}
  &\Theta_1 ; x : \CH{A''} ; x \tL \CH{A''} \vdash m \sim m' : \CM{\unit} \\
  &\epsilon \vdash A'' : \Proto \\
  &A' = \square \quad A = \CH{A''} \quad \CM{B} \simeq \CM{\HC{A''}}
\end{align*}

\noindent
By \textsc{Channel-HC} we have 
$d \tL \HC{A''} ; \epsilon ; \epsilon \vdash c \sim c : \HC{A''}$.

\noindent
By \textsc{Channel-CH} we have 
$c \tL \CH{A''} ; \epsilon ; \epsilon \vdash d \sim d : \CH{A''}$.

\noindent
By \textsc{Return} we have 
$d \tL \HC{A''} ; \epsilon ; \epsilon \vdash \return{c} \sim \return{c} : \CM{\HC{A''}}$.
Applying the evaluation contexts $\mcN$ and $\mcN'$ we have
\begin{align*}
  \Theta_2, c \tL \HC{A''} ; \epsilon ; \epsilon \vdash 
  \mcN[\return{c}] \sim \mcN'[\return{c}] : \CM{\unit}
\end{align*}

\noindent
By \Cref{lemma:erasure-explicit-substitution} we have
\begin{align*}
  \Theta_1, d \tL \CH{A''} ; \epsilon ; \epsilon \vdash m[d/x] \sim m'[d/x] : \CM{\unit}
\end{align*}

\noindent
By \textsc{Expr} and \textsc{Par} we have
\begin{align*}
  \Theta_1 \dotcup \Theta_2, d \tL \CH{A''}, c \tL \HC{A''} \Vdash
  (\proc{\mcN[\return{c}]} \mid \proc{m[d/x]}) \sim 
  (\proc{\mcN'[\return{c}]} \mid \proc{m'[d/x]})
\end{align*}

\noindent
By \textsc{Scope} we have
\begin{align*}
  \Theta \Vdash 
  \scope{cd}{(\proc{\mcN[\return{c}]} \mid \proc{m[d/x]})} \sim
  \scope{cd}{(\proc{\mcN'[\return{c}]} \mid \proc{m'[d/x]})}
\end{align*}
which concludes this case by taking
$Q_0' = \scope{cd}{(\proc{\mcN'[\return{c}]} \mid \proc{m'[d/x]})}$
and noting that
\begin{align*}
  \proc{\mcN'[\fork{x : \square}{m'}]} \Rrightarrow
  \scope{cd}{(\proc{\mcN'[\return{c}]} \mid \proc{m'[d/x]})}
\end{align*}

\noindent
\textbf{Case} (\textsc{Proc-End}):
  \begin{mathpar}\small
  \scope{cd}{(
    \proc{\mcM[\close{c}]} 
    \mid 
    \proc{\mcN[\wait{d}]}
  )}
  \Rrightarrow 
  \proc{\mcM[\return{\ii}]} \mid \proc{\mcN[\return{\ii}]} 
  \end{mathpar}
  By assumption, we have
  $\Theta \Vdash 
  \scope{cd}{(
    \proc{\mcM[\close{c}]} 
    \mid 
    \proc{\mcN[\wait{d}]}
  )} \sim P'$.
  By inversion on the erasure judgment we know there exists $A$ and $P''$ such that
  $P' = \scope{cd}{P''}$ and
  \begin{align*}
    &\Theta, c \tL \CH{A}, d \tL \HC{A} \Vdash 
    (\proc{\mcM[\close{c}]} \mid \proc{\mcN[\wait{d}]}) \sim P''
  \end{align*}

\noindent
By inversion on the erasure judgment we know there exists 
$\Theta_1, \Theta_2, m_1', n_1'$ such that
$\Theta = \Theta_1 \dotcup \Theta_2$ and
$P'' = \proc{m_1'} \mid \proc{n_1'}$ and
\begin{align*}
  &\Theta_1, c \tL \CH{A} ; \epsilon ; \epsilon \vdash \mcM[\close{c}] \sim m_1' : \CM{\unit} \\
  &\Theta_2, d \tL \HC{A} ; \epsilon ; \epsilon \vdash \mcN[\wait{d}] \sim n_1' : \CM{\unit}
\end{align*}

\noindent
By \Cref{lemma:erasure-eval-context} there exists
$\Theta_{11}, \Theta_{12}, \mcM', m_2', B$ such that
\begin{align*}
  &\Theta_{11}, c \tL \CH{A} ; \epsilon ; \epsilon \vdash \close{c} \sim m_2' : \CM{B} \\
  &\Theta_1 = \Theta_{11} \dotcup \Theta_{12} \\
  &m_1' = \mcM'[m_2']
\end{align*}

\noindent
By \Cref{lemma:erasure-eval-context} there exists
$\Theta_{21}, \Theta_{22}, \mcN', n_2', C$ such that
\begin{align*}
  &\Theta_{21}, d \tL \HC{A} ; \epsilon ; \epsilon \vdash \wait{d} \sim n_2' : \CM{C} \\
  &\Theta_2 = \Theta_{21} \dotcup \Theta_{22} \\
  &n_1' = \mcN'[n_2']
\end{align*}

\noindent
By \Cref{lemma:erasure-shape-close} we know that $m_2' = \close{c}$ and
by \Cref{lemma:erasure-shape-wait} we know that $n_2' = \wait{d}$.

\noindent
Now following the same reasoning as in the proof of session fidelity
(\Cref{theorem:session-fidelity}) for the \textsc{Proc-End} case, we have
\begin{align*}
  &\Theta_{12} \Vdash \proc{\mcM[\return{\ii}]} \sim \proc{\mcM'[\return{\ii}]} : \CM{\unit} \\
  &\Theta_{22} \Vdash \proc{\mcN[\return{\ii}]} \sim \proc{\mcN'[\return{\ii}]} : \CM{\unit} \\
  &\Theta_{11} = \Theta_{21} = \epsilon
\end{align*}

\noindent
By \textsc{Par} we have
\begin{align*}
  \Theta_{12} \dotcup \Theta_{22} \Vdash
  (\proc{\mcM[\return{\ii}]} \mid \proc{\mcN[\return{\ii}]}) \sim
  (\proc{\mcM'[\return{\ii}]} \mid \proc{\mcN'[\return{\ii}]})
\end{align*}
which concludes this case by taking
$Q_0' = \proc{\mcM'[\return{\ii}]} \mid \proc{\mcN'[\return{\ii}]}$
and noting that
\begin{align*}
  \scope{cd}{(
    \proc{\mcM'[\close{c}]} 
    \mid 
    \proc{\mcN'[\wait{d}]}
  )} \Rrightarrow 
  \proc{\mcM'[\return{\ii}]} \mid \proc{\mcN'[\return{\ii}]}
\end{align*}

\noindent
\textbf{Case} (\textsc{Proc-Com}):
  \begin{mathpar}\small
  \scope{cd}{(
    \proc{\mcM[\appR{\sendR{c}}{v}]} 
    \mid 
    \proc{\mcN[\recvR{d}]}
  )}
  \Rrightarrow 
  \scope{cd}{(
    \proc{\mcM[\return{c}]} 
    \mid 
    \proc{\mcN[\return{\pairR{v}{d}{\Ln}}]}
  )}
  \end{mathpar}
  By assumption, we have
  $\Theta \Vdash 
  \scope{cd}{(
    \proc{\mcM[\appR{\sendR{c}}{v}]} 
    \mid 
    \proc{\mcN[\recvR{d}]}
  )} \sim P'$.
  By inversion on the erasure judgment we know there exists $\kappa$, $A$ and $P''$ such that
  $P' = \scope{cd}{P''}$ and
  \begin{align*}
    &\Theta, c \tL \ch{\kappa}{A}, d \tL \neg{\ch{\kappa}{A}} \Vdash 
    (\proc{\mcM[\appR{\sendR{c}}{v}]} \mid \proc{\mcN[\recvR{d}]}) \sim P''
  \end{align*}

\noindent
By inversion on the erasure judgment we know there exists
$\Theta_1, \Theta_2, m_1', n_1'$ such that
$\Theta = \Theta_1 \dotcup \Theta_2$ and
$P'' = \proc{m_1'} \mid \proc{n_1'}$ and
\begin{align*}
  &\Theta_1, c \tL \ch{\kappa}{A} ; \epsilon ; \epsilon \vdash 
  \mcM[\appR{\sendR{c}}{v}] \sim m_1' : \CM{\unit} \\
  &\Theta_2, d \tL \neg{\ch{\kappa}{A}} ; \epsilon ; \epsilon \vdash 
  \mcN[\recvR{d}] \sim n_1' : \CM{A}
\end{align*}

\noindent
By \Cref{lemma:erasure-eval-context} there exists
$\Theta_{11}, \Theta_{12}, \mcM', m_2', B$ such that
\begin{align*}
  &\Theta_{11}, c \tL \ch{\kappa}{A} ; \epsilon ; \epsilon \vdash 
   \appR{\sendR{c}}{v} \sim m_2' : \CM{B} \\
  &\Theta_1 = \Theta_{11} \dotcup \Theta_{12} \\
  &m_1' = \mcM'[m_2']
\end{align*}

\noindent
By \Cref{lemma:erasure-eval-context} there exists
$\Theta_{21}, \Theta_{22}, \mcN', n_2', C$ such that
\begin{align*}
  &\Theta_{21}, d \tL \neg{\ch{\kappa}{A}} ; \epsilon ; \epsilon \vdash 
   \recvR{d} \sim n_2' : \CM{C} \\
  &\Theta_2 = \Theta_{21} \dotcup \Theta_{22} \\
  &n_1' = \mcN'[n_2']
\end{align*}

\noindent
By 
\Cref{lemma:erasure-shape-explicit-app},
\Cref{lemma:erasure-shape-explicit-send} and
\Cref{lemma:erasure-shape-channel} we know that 
$m_2' = \appR{\sendR{c}}{m_3'}$ for some $m_3'$.

\noindent
By \Cref{lemma:erasure-inversion-explicit-app} we know there exists 
$\Theta_{111}, \Theta_{112}, A'$ and $r$ such that
\begin{align*}
  &\Theta_{111}, c \tL \ch{\kappa}{A} ; \epsilon ; \epsilon \vdash 
  \sendR{c} \sim \sendR{c} : \PiR{t}{x : A'}{\CM{B}} \\
  &\Theta_{112} ; \epsilon ; \epsilon \vdash v \sim m_3' : A'
\end{align*}
By \Cref{lemma:erasure-shape-value} we know that $m_3' = v'$ for some value $v'$.

\noindent
By \Cref{lemma:erasure-shape-explicit-recv} we know that $n_2' = \recvR{d}$.

\noindent
Now following the same reasoning as in the proof of session fidelity
(\Cref{theorem:session-fidelity}) for the \textsc{Proc-Com} case, we know there exists $B'$ such that
\begin{align*}
  &\Theta_{12}, c \tL \ch{\kappa}{B'} \Vdash 
   \proc{\mcM[\return{c}]} \sim \proc{\mcM'[\return{c}]} \\
  &\Theta_{22} \dotcup \Theta_{112}, d \tL \neg{\ch{\kappa}{B'}} \Vdash 
    \proc{\mcN[\return{\pairR{v}{d}{\Ln}}]} \sim \proc{\mcN'[\return{\pairR{v'}{d}{\Ln}}]} \\
  &\Theta_{111} = \Theta_{21} = \epsilon
\end{align*}

\noindent
By \textsc{Par} and \textsc{Scope} we have
\begin{align*}
  &\Theta_{12} \dotcup \Theta_{22} \dotcup \Theta_{112} \Vdash \\
  &\qquad\scope{cd}{(
    \proc{\mcM[\return{c}]} 
    \mid 
    \proc{\mcN[\return{\pairR{v}{d}{\Ln}}]}
  )} \sim
  \scope{cd}{(
    \proc{\mcM'[\return{c}]} 
    \mid 
    \proc{\mcN'[\return{\pairR{v'}{d}{\Ln}}]}
  )}
\end{align*}
which concludes this case by taking
$Q_0' = \scope{cd}{(
    \proc{\mcM'[\return{c}]} 
    \mid 
    \proc{\mcN'[\return{\pairR{v'}{d}{\Ln}}]}
  )}$
and noting
\begin{align*}
  \scope{cd}{(
    \proc{\mcM'[\appR{\sendR{c}}{v'}]} 
    \mid 
    \proc{\mcN'[\recvR{d}]}
  )} \Rrightarrow 
  \scope{cd}{(
    \proc{\mcM'[\return{c}]} 
    \mid 
    \proc{\mcN'[\return{\pairR{v'}{d}{\Ln}}]}
  )}
\end{align*}

\noindent
\textbf{Case} (\textsc{Proc-\underline{Com}}):
  Similar to the \textsc{Proc-Com} case.

\noindent
\textbf{Case} (\textsc{Proc-Expr}):
  \begin{mathpar}
  \inferrule
  { m \Leadsto n }
  { \proc{m} \Rrightarrow \proc{n} }
  \textsc{(Proc-Expr)}
  \end{mathpar}
  By assumption, we have $\Theta \Vdash \proc{m} \sim P'$.
  By inversion on the erasure judgment we know there exists $m'$ such that
  $\Theta ; \epsilon ; \epsilon \vdash m \sim m' : \CM{A}$ and
  $P' = \proc{m'}$.

\noindent
By \Cref{lemma:erasure-program-simulation} there exists $n'$ such that
$\Theta ; \epsilon ; \epsilon \vdash n \sim n' : \CM{A}$ and
$m' \Leadsto n'$.

\noindent
By \textsc{Expr} we have $\Theta \Vdash \proc{n} \sim \proc{n'}$.
We conclude this case by taking $Q_0' = \proc{n'}$ and noting that
$\proc{m'} \Rrightarrow \proc{n'}$ by \textsc{Proc-Expr}.

\noindent
\textbf{Case} (\textsc{Proc-Par}):
  \begin{mathpar}
  \inferrule
  { P \Rrightarrow Q }
  { O \mid P \Rrightarrow O \mid Q }
  \textsc{(Proc-Par)}
  \end{mathpar}
  By assumption, we have $\Theta \Vdash (O \mid P) \sim P'$.
  By inversion on the erasure judgment we know there exists $\Theta_1, \Theta_2, O', P''$ such that
  $\Theta = \Theta_1 \dotcup \Theta_2$, $P' = O' \mid P''$ and
  \begin{align*}
    &\Theta_1 \Vdash O \sim O' \\
    &\Theta_2 \Vdash P \sim P''
  \end{align*}

  \noindent
  By the induction hypothesis, there exists $Q'$ such that
  $P'' \Rrightarrow Q'$ and $\Theta_2 \Vdash Q \sim Q'$.
  By \textsc{Par} we have
  $\Theta \Vdash (O \mid Q) \sim (O' \mid Q')$.
  We conclude this case by taking $Q_0' = O' \mid Q'$ and noting that
  $O' \mid P'' \Rrightarrow O' \mid Q'$ by \textsc{Proc-Par}.

\noindent
\textbf{Case} (\textsc{Proc-Scope}):
  \begin{mathpar}
  \inferrule
  { P \Rrightarrow Q }
  { \scope{cd}{P} \Rrightarrow \scope{cd}{Q} }
  \textsc{(Proc-Scope)}
  \end{mathpar}
  By assumption, we have $\Theta \Vdash \scope{cd}{P} \sim P'$.
  By inversion on the erasure judgment we know there exists $P''$ such that
  $P' = \scope{cd}{P''}$ and
  $\Theta, c \tL \CH{A}, d \tL \HC{A} \Vdash P \sim P''$ for some $A$.

  \noindent
  By the induction hypothesis, there exists $Q'$ such that
  $P'' \Rrightarrow Q'$ and $\Theta, c \tL \CH{A}, d \tL \HC{A} \Vdash Q \sim Q'$.
  By \textsc{Scope} we have
  $\Theta \Vdash \scope{cd}{Q} \sim \scope{cd}{Q'}$.
  We conclude this case by taking $Q_0' = \scope{cd}{Q'}$ and noting that
  $\scope{cd}{P''} \Rrightarrow \scope{cd}{Q'}$ by \textsc{Proc-Scope}.

\noindent
\textbf{Case} (\textsc{Proc-Congr}):
  \begin{mathpar}
  \inferrule
  { P \equiv P_1 \\ P_1 \Rrightarrow Q_1 \\ Q_1 \equiv Q }
  { P \Rrightarrow Q }
  \end{mathpar}
  By assumption, we have $\Theta \Vdash P \sim P'$.
  By \Cref{lemma:erasure-congruence} there exists $P_1'$ such that
  $\Theta \Vdash P_1 \sim P_1'$ and $P' \equiv P_1'$.
  By the induction hypothesis, there exists $Q_1'$ such that
  $P_1' \Rrightarrow Q_1'$ and $\Theta \Vdash Q_1 \sim Q_1'$.
  By \Cref{lemma:erasure-congruence} there exists $Q_0'$ such that
  $\Theta \Vdash Q \sim Q_0'$ and $Q_1' \equiv Q_0'$.
  We conclude this case by noting that $P' \Rrightarrow Q_0'$
  by \textsc{Proc-Congr}.

\end{proof}
\end{appendices}

\end{document}